\newtheoremstyle{mystyle}{}{}{\itshape}{}{\bfseries}{.}{6 pt}{\thmname{#1}\thmnumber{ #2}\thmnote{ {\bfseries(#3)}}}
\theoremstyle{mystyle}
\newtheorem{theorem}{Theorem}[chapter]
\newtheorem{remark}[theorem]{Remark}
\newtheorem{observation}[theorem]{Observation}
\newtheorem{corollary}[theorem]{Corollary}
\newtheorem{proposition}[theorem]{Proposition}
\newtheorem{lemma}[theorem]{Lemma}
\newtheorem{conjecture}[theorem]{Conjecture}
\newtheorem{definition}[theorem]{Definition}
\numberwithin{equation}{chapter}
\newcommand{\R}{\mathbb{R}}
\newcommand{\qq}[1]{\textquotedblleft #1\textquotedblright\xspace}
\newcommand{\complexityclass}[1]{\textbf{#1}}
\newcommand{\computproblem}[1]{\textsc{#1}}
\renewcommand{\P}{\complexityclass{P}\xspace}
\newcommand{\NP}{\complexityclass{NP}\xspace}
\newcommand{\APX}{\complexityclass{APX}\xspace}
\newcommand{\PTAS}{\complexityclass{PTAS}\xspace}
\newcommand{\LOGSPACE}{\complexityclass{L}\xspace}
\newcommand{\PSPACE}{\complexityclass{PSPACE}\xspace}
\newcommand{\NPSPACE}{\complexityclass{NPSPACE}\xspace}
\newcommand{\TSAT}{\computproblem{3SAT}\xspace}
\newcommand{\SSP}{\computproblem{SSP}\xspace}
\newcommand{\SSPext}{\computproblem{Searchlight Scheduling Problem}\xspace}
\newcommand{\PSSPext}{\computproblem{Partial Searchlight Scheduling Problem}\xspace}
\newcommand{\PSSP}{\computproblem{PSSP}\xspace}
\newcommand{\TPSSP}{\computproblem{3PSSP}\xspace}
\newcommand{\TSSP}{\computproblem{3SSP}\xspace}
\newcommand{\TSSPext}{\computproblem{3-dimensional Searchlight Scheduling Problem}\xspace}
\newcommand{\TRSSP}{\computproblem{3PSSP}\xspace}
\newcommand{\TTSSPext}{\computproblem{3-dimensional Timed Searchlight Scheduling Problem}\xspace}
\newcommand{\TTSSP}{\computproblem{3TSSP}\xspace}
\newcommand{\ART}{\computproblem{Art Gallery Problem}\xspace}
\newcommand{\EENCL}{\computproblem{EE-NCL}\xspace}
\newcommand{\EEANCL}{\computproblem{EE-ANCL}\xspace} 
\newcommand{\open}[1]{\widetilde{#1}}
\newcommand{\Rmnum}[1]{\expandafter\@slowromancap\romannumeral #1@}
\begin{document}

\title{Guarding and Searching Polyhedra}
\author{\textsc{Giovanni Viglietta}}
\supervisor{Prof.\ Linda Pagli}
\referee{Prof.\ Peter Widmayer\and Prof.\ Masafumi Yamashita}
\chair{Prof.\ Fabrizio Broglia}
\phdnumber{\Rmnum{24}}

\maketitle
\onehalfspacing
\pagestyle{empty}

\begin{abstract}
Guarding and searching problems have been of fundamental interest since the early years of Computational Geometry. Both are well-developed areas of research and have been thoroughly studied in planar polygonal settings.

In this thesis we tackle the \ART and the \SSPext in 3-dimensional polyhedral environments, putting special emphasis on edge guards and orthogonal polyhedra.

We solve the \ART with reflex edge guards in orthogonal polyhedra having reflex edges in just two directions: generalizing a classic theorem by O'Rourke, we prove that $\left\lfloor r/2 \right\rfloor+1$ reflex edge guards are sufficient and occasionally necessary, where $r$ is the number of reflex edges. We also show how to compute guard locations in $O(n\log n)$ time.

Then we investigate the \ART with mutually parallel edge guards in orthogonal polyhedra with $e$ edges, showing that $\left\lfloor 11e/72 \right\rfloor$ edge guards are always sufficient and can be found in linear time, improving upon the previous state of the art, which was $\left\lfloor e/6 \right\rfloor$. We also give tight inequalities relating $e$ with the number of reflex edges $r$, obtaining an upper bound on the guard number of $\left\lfloor 7r/12 \right\rfloor+1$.

We further study the \ART with edge guards in polyhedra having faces oriented in just four directions, obtaining a lower bound of $\left\lfloor e/6 \right\rfloor-1$ edge guards and an upper bound of $\left\lfloor (e+r)/6 \right\rfloor$ edge guards.

All the previously mentioned results hold for polyhedra of any genus. Additionally, several guard types and guarding modes are discussed, namely open and closed edge guards, and orthogonal and non-orthogonal guarding.

Next, we model the \TSSPext,  the problem of searching a given polyhedron by suitably turning some half-planes
around their axes, in order to catch an evasive intruder. After discussing several generalizations of classic theorems, we study the problem of efficiently placing guards in a given polyhedron, in order to make it searchable. For general polyhedra, we give an upper bound of $r^2$ on the number of guards, which reduces to $r$ for orthogonal polyhedra.

Then we prove that it is strongly \NP-hard to decide if a given polyhedron is entirely searchable by a given set of guards. We further prove that, even under the assumption that an orthogonal polyhedron is searchable, approximating the minimum search time within a small-enough constant factor to the optimum is still strongly \NP-hard.

Finally, we show that deciding if a specific region of an orthogonal polyhedron is searchable is strongly \PSPACE-hard. By further improving our construction, we show that the same problem is strongly \PSPACE-complete even for planar orthogonal polygons. Our last results are especially meaningful because no similar hardness theorems for 2-dimensional scenarios were previously known.
\end{abstract}

\begin{dedication}
To Gianni
\end{dedication}

\begin{acknowledgments}
I would like to express my gratitude to my supervisor Linda Pagli, who has been an immense source of encouragement and advice throughout the last three years, along with Roberto Grossi and Giuseppe Prencipe.

I am also thankful to Joseph O'Rourke for offering me constant support and for hosting me at Smith College for three months. During my stay, he has been a teacher, a colleague and a friend, providing me with everything I could possibly need.

I am grateful to the Computational Geometry research group at MIT, and particularly to Erik and Martin Demaine, for giving me a chance to take part in their lively and inspiring meetings.

I deeply appreciated the precious advice of Herbert Edelsbrunner, James King, Jorge Urrutia, and Masafumi Yamashita.

Finally, I would like to thank all my coauthors, who contributed to raising my research skills and lowering my Erd\H{o}s number. I wish them the best of luck in their future endeavors. In alphabetical order, I thank Zachary Abel, Greg Aloupis, Nadia M.\ Benbernou, Erik D.\ Demaine, Martin L.\ Demaine, Sarah Eisenstat, Alan Guo, Anastasia Kurdia, Anna Lubiw, Maurizio Monge, Joseph O'Rourke, Linda Pagli, Giuseppe Prencipe, Andr\'e Schulz, Diane L.\ Souvaine, Godfried T.\ Toussaint, Jorge Urrutia, and Andrew Winslow.
\\

\hfill
{\it Giovanni Viglietta}
\end{acknowledgments}

\frontmatter
\pagestyle{underheadings}
\tableofcontents
\markthissection{CONTENTS}
\markthischapter{CONTENTS}

\begin{introduction}
\markthischapter{INTRODUCTION}\markthissection{INTRODUCTION}
Recent advances in robotics and distributed computing have made it a realistic goal to employ robots in surveillance and tracking, and also motivated countless theoretical studies in several areas of Computational Geometry, particularly concerning vision and motion planning. In the last decades, in order to capture the key aspects of these applications in simple, yet meaningful ways, several theoretical models have been proposed and studied. 

Arguably the oldest and most fundamental problem of this kind is the \ART. In its original formulation, the problem consists in finding a (minimum) set of points in a given polygonal room from which the entire room is visible. The sides of the polygon model the room's walls, and the chosen points would be the locations of static guards, who collectively see the whole room and protect it from intruders.

The \ART has been generalized and extended in several directions, encompassing wider classes of geometric shapes, such as polygons with holes. At the same time, stronger results were obtained for special shapes, such as orthogonal polygons.

On the other hand, the problem may also be viewed as a pursuit-evasion game, in which some guards attempt to capture a moving intruder. This scenario can be modulated not only geometrically, but also based on the sensorial and motional capabilities of the agents involved. Here, the problem shifts from statically \emph{guarding} an environment to dynamically \emph{searching} it.

In the \SSPext, guards are static, but carry a laser beam that can be turned around, in search of a moving intruder. The goal of the guards is to move their lasers in concert, according to a fixed schedule, and detect the intruder with a laser in a bounded amount of time, no matter how fast he moves and which path he takes. The goal of the intruder is to escape the lasers and remain forever unseen.

Despite their seeming simplicity, both problems have proven quite challenging. While the \ART, as an optimization problem, is known to be \NP-hard, the exact complexity of the \SSPext is still unknown, and the capabilities of various types of searchers are not well-understood either.

Very little is known about guarding and searching higher-dimensional environments. None of the most fruitful properties of polygons, such as triangulability and the existence of an inherent ordering of edges and vertices (e.g., clockwise order), can be meaningfully generalized to higher dimensions. Even for 3-dimensional polyhedral environments, a small set of negative results is known and, to the best of our knowledge, major non-trivial positive theorems, algorithms or techniques have yet to be found.

In this thesis, we set out to investigate the \ART and the \SSPext in 3-dimensional polyhedra. The purpose of our research is twofold: on the one hand, we seek to generalize some well-established theorems, as well as to provide entirely new results; on the other hand, we aim at gaining better insights on some long-standing open problems in 2-dimensional scenarios, by abstracting them and expanding our perspective.

\subsection*{Thesis structure and contributions}

\paragraph{Part~\ref{part1}.}
This part is devoted to background information and some preliminary results.

Chapter~\ref{chapter1} surveys some well-known results on the \ART and the \SSPext in the plane. We do not attempt to give a thorough account of the state of the art of these problems, but we focus on fundamental results and techniques, and a few additional aspects that will be relevant in later chapters. 

In Chapter~\ref{chapter2}, we define our notion of polyhedron, pointing out the main features of our model. 
Historically, correct definitions of polyhedra have been quite elusive, and this is an especially delicate task in our thesis, because different polyhedral models yield somewhat different theorems.

In Chapter~\ref{chapter3}, we define several types of guards and guarding modes in polyhedra. In particular, the notions of \emph{open segment guard} and \emph{orthogonal guarding} are novel contributions. We also prove some basic results concerning face guards and point guards, concluding with some upper and lower bounds on edge guard numbers, and the computational hardness of the minimum edge guard problem.

\paragraph{Part~\ref{part2}.}
We dedicate this part to the \ART with edge guards in polyhedra, which we believe to be the most natural extension of the original 2-dimensional \ART with vertex guards in polygons, and more likely to yield similar results. We are mainly concerned with orthogonal polyhedra, which enjoy several deep structural properties that we thoroughly investigate and put to use.

In Chapter~\ref{chapter4}, we focus on orthogonal polyhedra with reflex edges in only two directions (as opposed to three). We generalize a classic theorem by O'Rourke, providing a tight bound on the number of required reflex edge guards, in terms of the number of reflex edges in the polyhedron. We also obtain upper bounds in terms of the total number of edges, and we formulate some conjectures. To conclude the chapter, we present an $O(n\log n)$ time algorithm to compute a guard set matching our upper bounds.

In Chapter~\ref{chapter5}, we consider the problem of guarding orthogonal polyhedra with mutually parallel edge guards. We give an upper bound on the number of guards that improves on the previous state of the art, due to Urrutia. En route to this result, we also give tight inequalities involving the number of reflex edges in an orthogonal polyhedron and the total number of edges.

Chapter~\ref{chapter6} extends the techniques used in Chapter~\ref{chapter5} to polyhedra with faces oriented in four directions (as opposed to three). We obtain new upper and lower bounds, and formulate some conjectures.

\paragraph{Part~\ref{part3}.}
Here we investigate an extension of the \SSPext to polyhedral environments.

In Chapter~\ref{chapter7}, we carefully detail our 3-dimensional guard model. There are several meaningful ways to extend the classic 2-dimensional model, and we choose to employ segment guards who rotate half-planes of light with one degree of freedom. We thoroughly motivate our choice with observations and examples.

In Chapter~\ref{chapter8}, we study one special type of guard, which we call \emph{filling} guard. We show how filling guards somewhat act as boundary guards in 2-dimensional \SSPext, in that they enjoy the same positive properties, and enable an extension of the classic ``one-way sweep strategy'' by Sugihara, Suzuki and Yamashita. As a further application of the concept of filling guard, we characterize the searchable problem instances containing only one guard. Finally, we give a polynomial time algorithm to decide if a guard is filling.

In Chapter~\ref{chapter9}, we consider the problem of placing guards in a given polyhedron, in order to make it searchable. We give a quadratic upper bound on the number of guards that holds for general polyhedra, and we further show that placing one guard on each reflex edge suffices for orthogonal polyhedra.

Chapter~\ref{chapter10} is devoted to computational complexity issues. We first show that deciding if a given problem instance is searchable is strongly \NP-hard. Then we consider the problem of minimizing search time, with the promise that the input instance is searchable. We prove that approximating the minimum search time is \NP-hard, even for orthogonal polyhedra.

Finally, in Chapter~\ref{chapter11}, we consider a generalized \SSPext, in which we do not necessarily have to search the entire environment, but just a subregion given as input. We prove that the 3-dimensional version of this \PSSPext is strongly \PSPACE-hard, and that its 2-dimensional version is \PSPACE-complete. Our last result is especially meaningful because it is the first hardness theorem related to the 2-dimensional \SSPext, and it comes as a more sophisticated version of its 3-dimensional counterpart.

\paragraph{Publications.}
Most of the material in this thesis has already been published. Publications include the upper bounds on the number of parallel edge guards in orthogonal polyhedra (joint work with Benbernou et al.,~\cite{viglietta4}), most of Part~\ref{part3} (sole author,~\cite{viglietta3}), the \NP-hardness of minimizing search time in orthogonal polyhedra (joint work with Monge,~\cite{viglietta}), and the \PSPACE-completeness of partially searching orthogonal polygons (sole author,~\cite{viglietta2}).
\end{introduction}

\mainmatter

\part{Preliminaries}\label{part1}
\chapter{Background}\label{chapter1}
\begin{chapterabstract}
We survey some classic results on the \ART and the \SSPext for planar polygons. Rather than providing a complete account of the state of the art for these two problems, we illustrate some fundamental theorems and techniques that we are going to generalize or employ in later chapters.

Specifically, we delineate the role of partitioning polygons as the first step in computing guard locations for the \ART, and we point out that minimizing the number of guards is computationally hard.

Then we survey the main results on the \SSPext, such as the one-way sweep strategy and its consequences, and the sequentializability of search schedules.
\end{chapterabstract}

\section{\ART}
Consider a polygonal region $P$, representing an empty room. We say that two points in the plane can \emph{see} each other if the line segment connecting them lies entirely inside $P$ or on its boundary. The classic \ART consists in finding the minimum number $g(n)$ such that, in any polygon of $n$ vertices, there exists a set of $g(n)$ points, called \emph{guards}, that collectively see the whole interior of $P$.

Several results related to the \ART are surveyed in \cite{art,shermer,urrutia2000,zylinski}. In this section, we describe those that are most significant to our thesis.

\subsection*{General polygons}
\paragraph{Simple polygons.}
The \ART was originally posed for simple polygons (i.e., polygons with no holes and whose boundary does not self-intersect) by Klee at a conference in Stanford, in 1973. It was later established by Chv\'atal that $$\left\lfloor \frac n 3 \right\rfloor$$
guards are necessary and sufficient to guard simple polygons with $n$ vertices (see~\cite{chvatal}), and Chv\'atal's proof was simplified by Fisk shortly after (see~\cite{fisk}).

We sketch Fisk's proof. Let $P$ be a polygon with $n$ vertices. It is well-known that $P$'s interior is partitionable into $n-2$ disjoint triangles by drawing $n-3$ diagonals. Such a decomposition is called \emph{triangulation}, and can be computed in linear time (see~\cite{chazelletriang,art}). Also, a triangulation of a polygon is a plane graph on its vertex set, whose dual is a tree. By an inductive argument on the tree structure, one can easily prove that the triangulation graph can be 3-colored, as shown in Figure~\ref{f1:art1a}. Since some color is used on at most $\lfloor n/3 \rfloor$ vertices, we place a guard at every such vertex to guard the entire polygon: indeed, each triangle in the triangulation has vertices of all three colors, and is therefore guarded. The whole process can be performed in linear time.

\begin{figure}[h]
\centering{\includegraphics[width=0.55\linewidth]{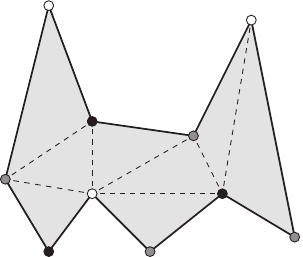}}
\caption{Illustration of Fisk's proof.}
\label{f1:art1a}
\end{figure}

This upper bound on the guard number is also tight, because occasionally $\lfloor n/3 \rfloor$ guards are necessary: consider for example the \qq{comb polygon} depicted in Figure~\ref{f1:art1b}. Each ``tooth'' of the comb requires one guard, as no point sees the tip of two distinct tooths.

\begin{figure}[h]
\centering{\includegraphics[width=0.75\linewidth]{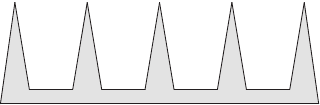}}
\caption{Comb polygon. It has $3k$ vertices and requires $k$ guards.}
\label{f1:art1b}
\end{figure}

\paragraph{Polygons with holes.}
If $P$ has holes, the previous argument does not hold, because the triangulation's dual is no longer a tree. However, a more sophisticated argument, given in~\cite{holes2}, reveals that
$$\left\lfloor \frac {n+h} 3 \right\rfloor$$
guards are always sufficient and occasionally necessary, where $h$ is the number of holes.

\paragraph{Reflex vertices.}
Fisk's method does not exploit $P$'s shape at all, and this yields particularly poor results when $P$ is convex: a single guard is sufficient in this case, but Fisk's  algorithm still places a linear amount of guards.

Let $r$ be the number of reflex vertices in $P$ (i.e., the vertices whose corresponding internal angle exceeds $180^\circ$). Choose a concave internal angle $\alpha$, and cut $P$ along $\alpha$'s bisector, obtaining two polygons with at most $r-1$ reflex vertices in total. By inductively repeating the process, $P$ gets partitioned into at most $r+1$ convex parts, and each part is such that one of its vertices is a reflex vertex of $P$. Therefore, if $r\geqslant 1$, then $r$ guards are always sufficient to guard $P$.
To see that $r$ guards are sometimes necessary, consider the ``shutter polygons'' in Figure~\ref{f1:art2}.

\begin{figure}[h]
\centering{\includegraphics[width=0.8\linewidth]{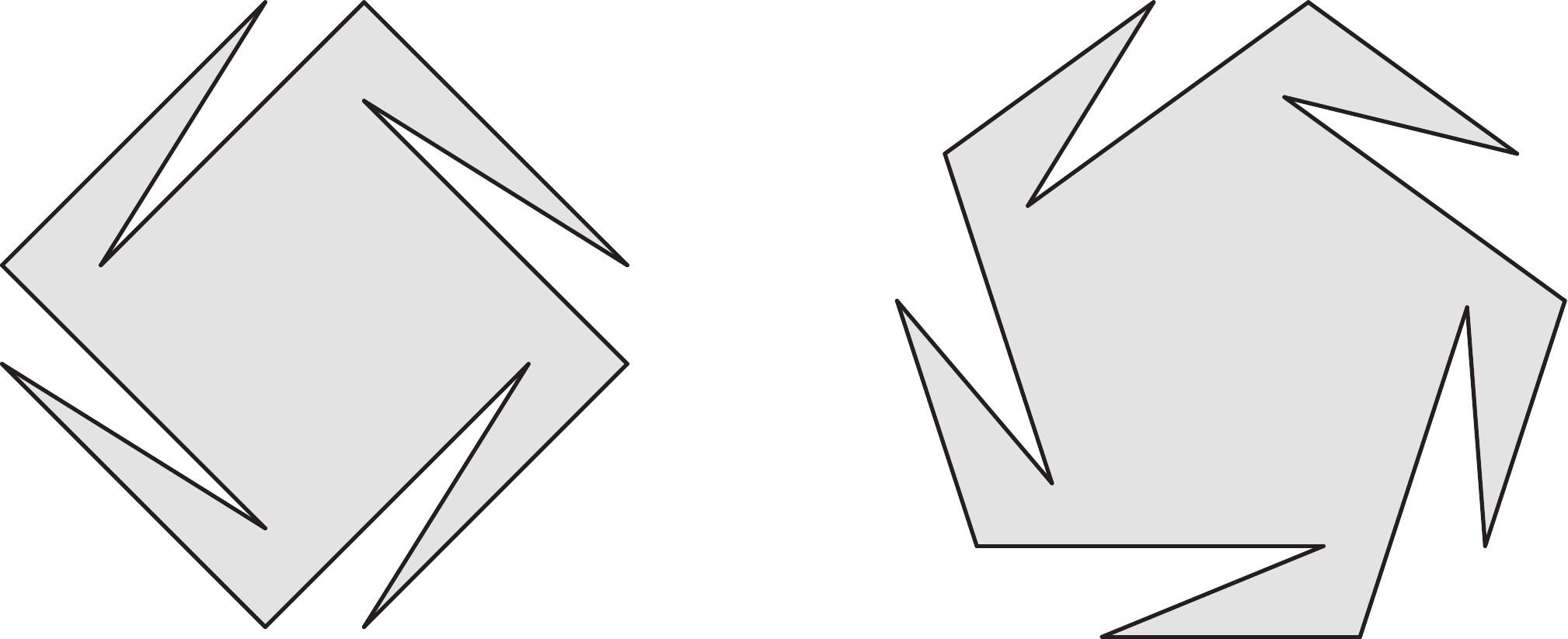}}
\caption{Shutter polygons. At least one guard per reflex vertex is needed.}
\label{f1:art2}
\end{figure}

Notice that $r$ can take any value between $0$ and $n-3$, so this algorithm is occationally better, occasionally worse than the previous one, depending on $P$'s shape.

\subsection*{Orthogonal polygons}
\paragraph{Simple orthogonal polygons.}
Better bounds can be obtained by restricting the \ART to \emph{orthogonal polygons}, i.e., polygons whose edges meet only at right angles. Their simple structure allows for more ingenious partitions, such as \emph{convex quadrilateralizations} (i.e., partitions into convex quadrilaterals whose vertices are also vertices of $P$) and partitions into \emph{L-shaped} pieces (i.e., orthogonal hexagons). Both partitions can be employed to design a linear time algorithm that places
$$\left\lfloor \frac n 4 \right\rfloor = \left\lfloor \frac r 2 \right\rfloor +1$$
guards that collectively see $P$ (refer to~\cite{art,urrutia2000}). A partition into L-shaped pieces is shown in Figure~\ref{f1:orthoa}. Such a partition, originally due to O'Rourke, will be generalized in Chapter~\ref{chapter4} to a wide class of orthogonal polyhedra, yielding a similar upper bound on guard numbers.

\begin{figure}[h]
\centering{\includegraphics[width=0.7\linewidth]{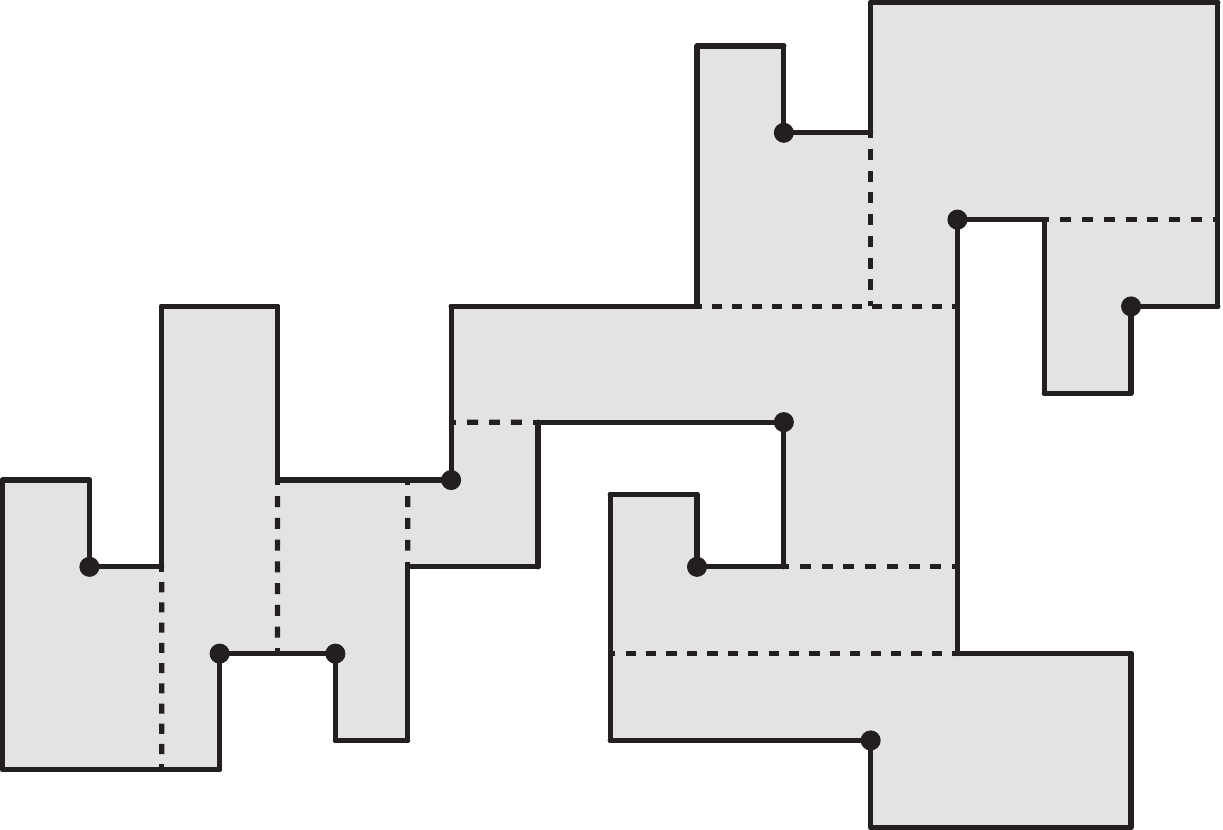}}
\caption{Orthogonal polygon partitioned into L-shaped pieces, and a guarding set.}
\label{f1:orthoa}
\end{figure}

This many guards are occasionally necessary, for example to guard the ``orthogonal combs'' shown in Figure~\ref{f1:orthob}.

\paragraph{Orthogonal polygons with holes.}
If holes are allowed, then the same upper bound of $\lfloor n/4\rfloor$ guards holds, as established by Hoffmann in~\cite{holes3}. However, if guards are required to lie on vertices, no tight bound is known. O'Rourke proved that
$$\left\lfloor \frac{n+2h}{4}\right\rfloor$$
vertex guards are sufficient, and Shermer conjectured that
$$\left\lfloor \frac{n+h}{4}\right\rfloor$$
are necessary and sufficient (see~\cite{art}).

\begin{figure}[h]
\centering{\includegraphics[width=0.65\linewidth]{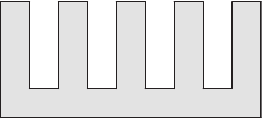}}
\caption{Orthogonal comb. It has $4k$ vertices and requires $k$ guards.}
\label{f1:orthob}
\end{figure}

\subsection*{Patroling guards}
A different guard model was proposed by Avis and Toussaint in~\cite{edge2}. Imagine that a static intruder has to be found by a squad of point guards, each of which ``patrols'' an area. The easiest way to model this problem is to let each guard move back and forth on a line segment completely contained in the polygon. The intruder is caught the moment a guard sees him during the patrol.

The example in Figure~\ref{f1:edgega} reveals that
$$\left\lfloor \frac n4\right\rfloor$$
patroling guards may be necessary.

\begin{figure}[h]
\centering
\subfigure[]{\label{f1:edgega}\includegraphics[scale=2]{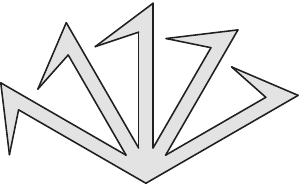}}\qquad
\subfigure[]{\label{f1:edgegb}\includegraphics[scale=2]{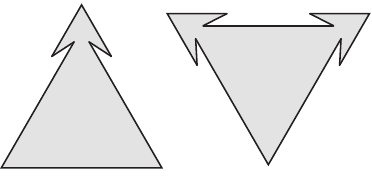}}
\caption{Lower bound constructions for patroling guard numbers.}
\label{f1:edgeg}
\end{figure}

As established by O'Rourke in~\cite{edge1}, $\lfloor n/4\rfloor$ patroling guards are also sufficient to guard any simple polygon, and their paths may be chosen among the polygon's diagonals.

However, Paige and Shermer noted that, if we insist on confining each guard to move on an edge of the polygon, then $\lfloor n/4\rfloor$ guards may not be sufficient, as the two examples in Figure~\ref{f1:edgegb} show (refer to~\cite{shermer}).

Since no substantially different counterexamples are known, Toussaint conjectured that
$$\left\lfloor \frac n4\right\rfloor+o(1)$$
guards patroling edges are sufficient to guard any simple polygon. The best known upper bound on edge guards is
$$\left\lfloor \frac{3n}{10}\right\rfloor+o(1),$$
as proved by Shermer in~\cite{edge4}.

For simple orthogonal polygons, Aggarwal proved in~\cite{edge3} that
$$\left\lfloor \frac{3n+4}{16}\right\rfloor$$
patroling guards are necessary and sufficient, and may even be placed on diagonals.

It follows that, not surprisingly, patroling guards are indeed more ``powerful'' than stationary guards, in all the aforementioned settings.

\subsection*{Minimizing guards}
It was shown by Lee and Lin in~\cite{lee} that minimizing the number of static point guards required to guard a given simple polygon is strongly \NP-hard. A similar result for orthogonal polygons is also known. In Chapter~\ref{chapter3}, we will propose a variation that can be also extended to the problem of edge-guarding orthogonal polyhedra.

Recently, in~\cite{king1}, King and Kirkpatrick gave an $O(\log \log \rm{OPT})$-approximation algorithm to guard a simple polygon with guards lying on the perimeter.

On the other hand, the problem of minimizing guards in simple polygons was shown to be \APX-hard by Eidenbenz in~\cite{artapprox3}, but whether it is in \APX remains an open problem. With no restrictions on the number of holes, the problem is as hard to approximate as \computproblem{SET COVER}, as shown by Eidenbenz, Stamm and Widmayer in~\cite{artapprox2}. As a consequence, it is \NP-hard to find a guarding set of size $o(\log n)$.

\section{\SSPext}
\subsection*{Basics}
The \SSPext (\SSP) was first studied in~\cite{search1} by Sugihara, Suzuki and Yamashita as a more ``dynamic'' version of the \ART. It is a search problem in simple polygons, where some stationary guards are tasked to locate an evasive, moving intruder by illuminating him with \emph{searchlights}. Each guard carries a searchlight, modeled as a 1-dimensional ray that can be continuously rotated about the guard itself, whereas the intruder follows an ``unpredictable'' continuous path, running at unbounded speed and trying to avoid the searchlights. Since the guards cannot know the position of the intruder until they catch him in their lights, the movements of the searchlights must follow a fixed \emph{schedule}, which should guarantee that the intruder is caught in finite time, regardless of the path he decides to take. The search takes place in a polygonal region, whose sides act as obstacles both for the intruder's movements and for the guards' searchlights. In a way, the polygonal boundary benefits the intruder, who can hide behind corners and avoid scanning searchlights. But it can also turn into a {\it cul-de-sac}, if the guards manage to force the intruder into an enclosed area from which he cannot escape.

Thus \SSP\ is the problem of deciding if  there exists a successful search schedule for a given finite set of guards in a given simple polygon. Figure~\ref{fig:1} shows an instance of \SSP\ with a search schedule.

\begin{figure}[h]
\centering
\subfigure[]{\includegraphics[width=0.35\linewidth]{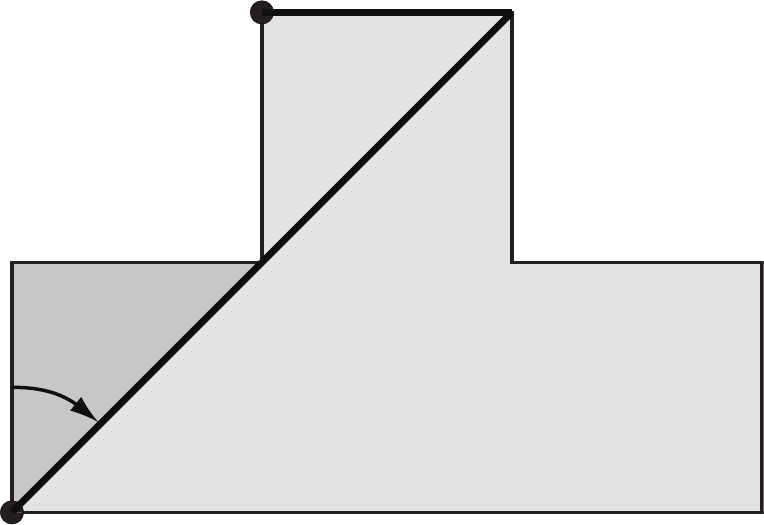}}\qquad \qquad
\subfigure[]{\includegraphics[width=0.35\linewidth]{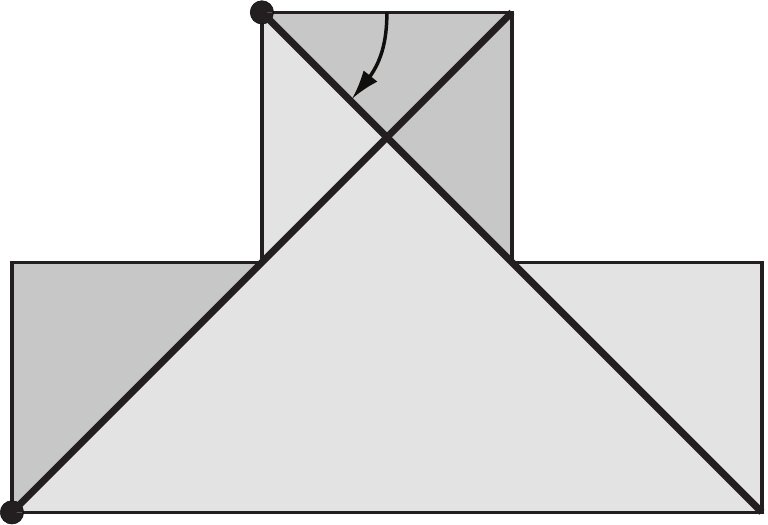}}\\ \vspace{0.5cm}
\subfigure[]{\includegraphics[width=0.35\linewidth]{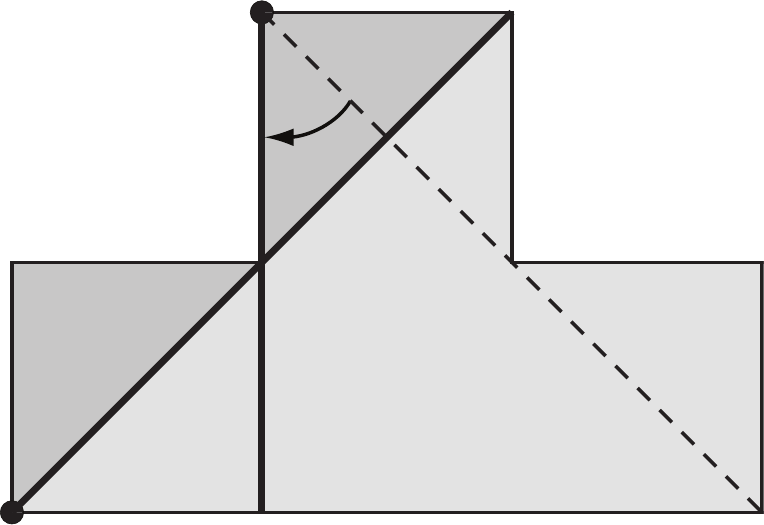}}\qquad \qquad
\subfigure[]{\includegraphics[width=0.35\linewidth]{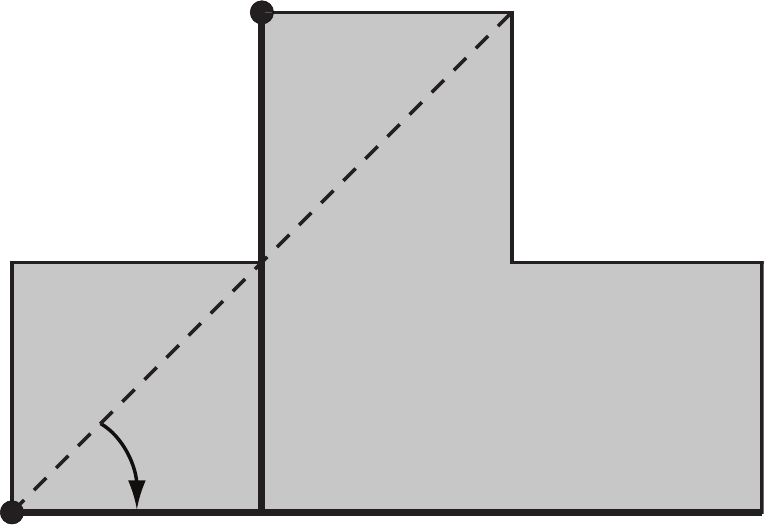}}
\caption{Search schedule for two guards in a polygon. At each stage, the light area is still \textquotedblleft contaminated\textquotedblright, whereas the darker areas have been cleared.}
\label{fig:1}
\end{figure}

\begin{sloppypar}A trivial necessary condition for searchability is that the guard positions should guarantee that no point in the polygon is invisible to all guards. In other words, the guard set should at least be a solution to the \ART\ in the given polygon, otherwise the intruder could sit at an uncovered point and never be discovered.\end{sloppypar}

Another simple necessary condition is that every guard lying in the interior of the polygon (thus not on the boundary) should be visible to at least one other guard. Without this, the intruder could remain in a neighborhood of a guard and just avoid its rotating searchlight.

Concerning the problem of minimizing the number of guards to search a given polygon, \cite{search1} also contains a characterization of the simple polygons that are searchable by one or two suitably placed guards. A similar characterization for three guards was also found by the same authors, but never published.\footnote{Source: private communication with M.\ Yamashita, 2010.}

On the other hand, Yamashita, Suzuki and Kameda, in~\cite{search2}, gave some upper bounds on the minimum number of guards required to search a polygon (possibly with holes) as a function of the number of guards needed to solve the \ART in that polygon. In the same paper, the original \SSP is further extended by introducing guards carrying $k\geqslant 1$ searchlights, called $k$-guards.

\subsection*{One-way sweep strategy}

Several sufficient conditions for searchability of simple polygons are detailed in~\cite{search1}, all employing a general search heuristic called the \emph{one-way sweep strategy} (OWSS), illustrated in Figure~\ref{f1:owss}.

\begin{figure}[h]
\centering
\subfigure[]{\includegraphics[width=0.29\linewidth]{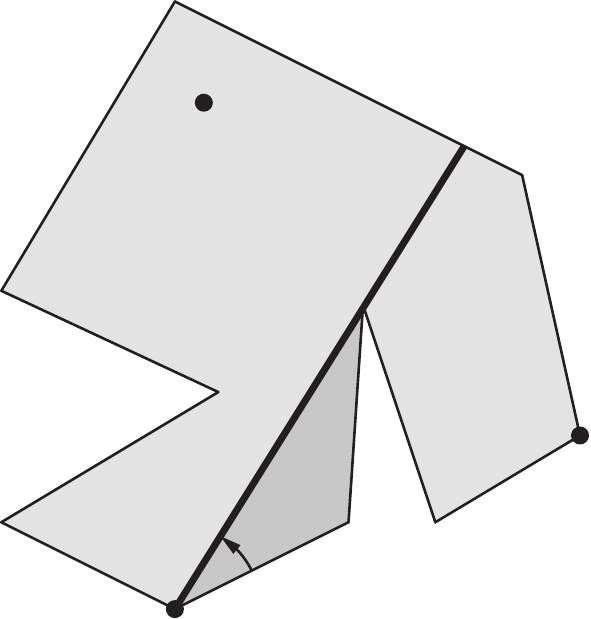}}\qquad
\subfigure[]{\includegraphics[width=0.29\linewidth]{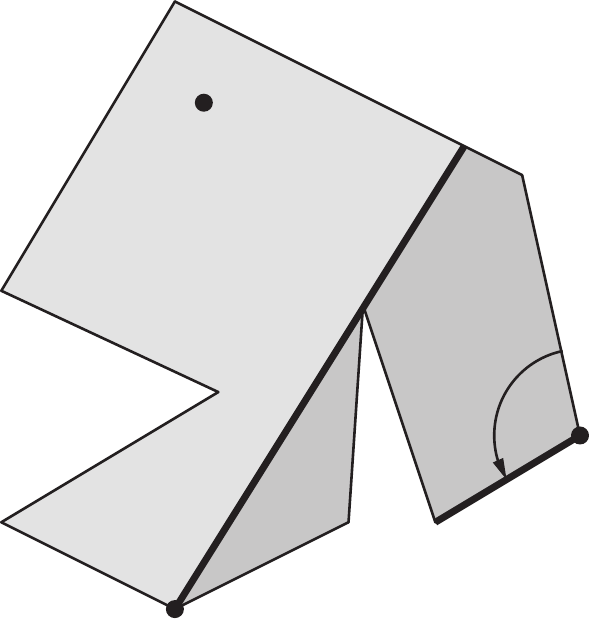}}\qquad
\subfigure[]{\includegraphics[width=0.29\linewidth]{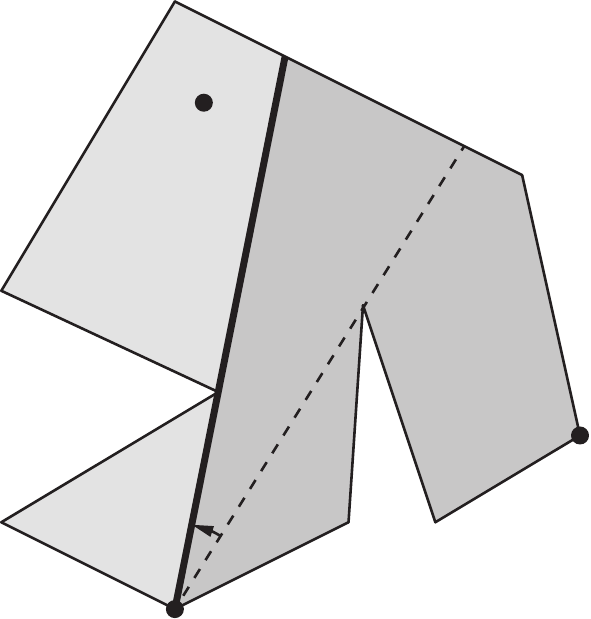}}\\ \vspace{0.25cm}
\subfigure[]{\includegraphics[width=0.29\linewidth]{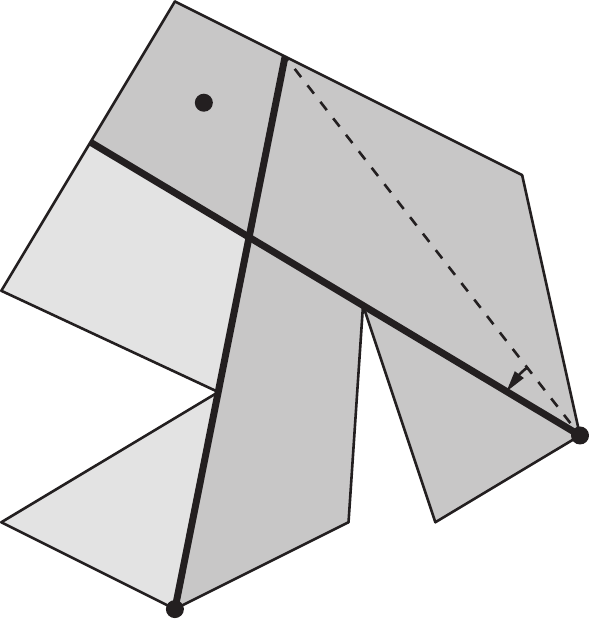}}\qquad
\subfigure[]{\includegraphics[width=0.29\linewidth]{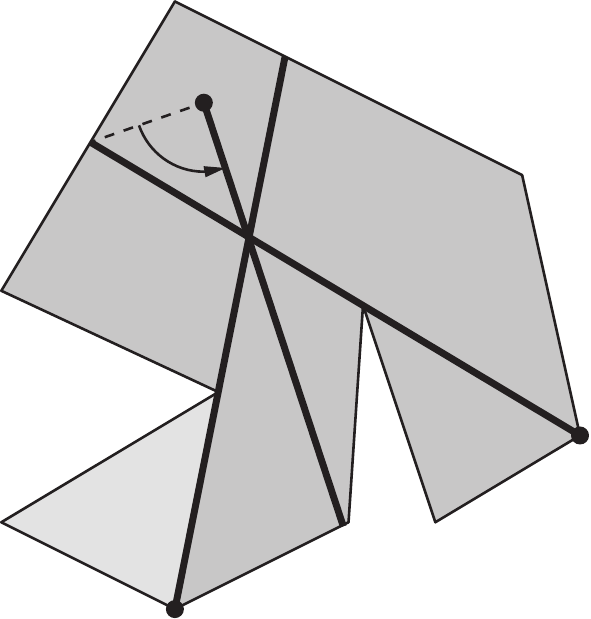}}\qquad
\subfigure[]{\includegraphics[width=0.29\linewidth]{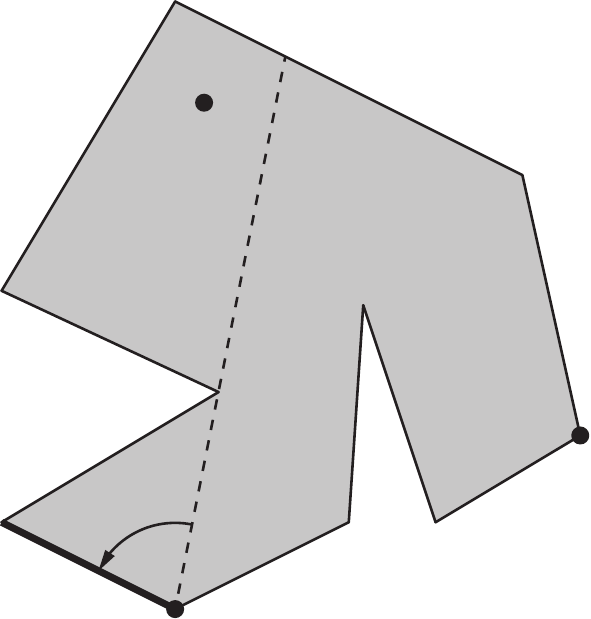}}
\caption{Illustration of the one-way sweep strategy.}
\label{f1:owss}
\end{figure}

First of all, two assumptions are made on the input data: every point in the input polygon $P$ is visible to at least one guard, and every guard is either on the boundary of $P$ or visible to another guard. As previously noted, if either condition does not hold, $P$ is trivially unsearchable. The OWSS algorithm takes a \emph{semiconvex subpolygon} $R$, ``supported'' by the searchlights of a set of guards $K$, and an additional guard $\ell$ not in $K$. $R$ is partly bounded by the searchlights of $K$, and partly bounded by $P$'s boundary. Indeed, the part bounded by the searchlights contains no points of non-convexity, hence the name \textquotedblleft semiconvex\textquotedblright. $\ell$ is assumed to be outside $R$ or on its boundary, in such a way that it can see at least one point inside $R$. Now, the heuristic attempts to clear $R$ as follows: the searchlights in $K$ are never moved, and $\ell$ starts sweeping $R$ counterclockwise, until it reaches the boundary of an invisible subregion $R'$. If possible, OWSS is applied recursively on $R'$, by adding $\ell$ to $K$ and picking a new searchlight $\ell '$ that meets the above requirements. Otherwise (i.e., if such $\ell'$ does not exist), either $R'$ can be cleared by the guards that lie inside it (any method can be applied here), or the heuristic fails.

The above algorithm shows that \SSP may be reduced to instances with no guards on the polygon's boundary. The OWSS can also be used as a tool to prove many remarkable sufficient conditions for the existence of a search schedule, based on the structure of the \emph{visibility graph} of the input guards (i.e., the graph whose edges connect the couples of guards that can see each other).

For example, there exists a search schedule if every connected component of the visibility graph contains at least one guard lying on the boundary of the polygon. Most notably, if all the guards lie on the boundary (and collectively see the whole polygon), then they also have a search schedule.

Another sufficient condition for the existence of a search schedule is that the visibility graph is connected, and there exists a guard whose removal does not disconnect it. Yet another sufficient condition is that the visibility graph is connected, and there exist two guards such that no point in the polygon is visible to both of them. More details can be found in~\cite{search1}.

In~\cite{bullo2}, Obermeyer, Ganguli and Bullo described an \emph{asynchronous distributed} version of the OWSS, in which the guards are unaware of what lies outside their field of view, and some degree of parallelism is also achieved.

We will propose an extension of the OWSS to polyhedra in Chapter~\ref{chapter8}.

\subsection*{Complexity and discretization}
The problem of determining the computational complexity of \SSP\ was not directly addressed in the seminal papers~\cite{search1,search2}, but has acquired more interest over time, and remains open. We will obtain some related complexity results in Chapters~\ref{chapter10} and~\ref{chapter11}.

Obermeyer, Ganguli and Bullo showed in~\cite{bullo} that \SSP is solvable in double exponential time, by means of an \emph{exact cell decomposition} technique that reduces the space of all possible schedules to a finite graph, which is then searched systematically. We will show in Lemma~\ref{lemma1} that such a technique actually implies that \SSP is in \PSPACE.

The discretization method consists in identifying a finite set of \emph{critical angles} for each guard. Guard $\ell$ is aiming its searchlight at a critical angle if and only if: 
\begin{enumerate}
\item[(i)] $\ell$ is located on some edge of the polygon and aims its searchlight along that edge;
\item[(ii)] $\ell$'s ray is tangent to a subpolygon that $\ell$ cannot see;
\item[(iii)] $\ell$ aims at another visible guard;
\item[(iv)] $\ell$ aims directly away from another visible guard.
\end{enumerate}
It can be proved that, if a search schedule exists, then it can be transformed into one in which only one guard is active at a time, and guards only stop or change direction at critical angles. Such a schedule is said to be \emph{sequential} and \emph{critical}. In Chapters~\ref{chapter7} and~\ref{chapter8}, we will briefly discuss some conditions allowing sequentiability to be extended to 3-dimensional scenarios.

In~\cite{bullo}, the problem of minimizing search time, in a scenario where searchlights have bounded angular speeds, is also mentioned as an interesting and yet unexplored variation of \SSP. In Chapters~\ref{chapter9} and~\ref{chapter10}, we will marginally consider this problem, as well.
\chapter{Polyhedra}\label{chapter2}
\begin{chapterabstract}
We define general polyhedra and several subclasses, such as orthogonal polyhedra and $c$-oriented polyhedra. We review some well-known theorems, such as Euler's formula and the polyhedral Gauss--Bonnet theorem, and we explore some structural properties of orthogonal polyhedra.

Furthermore, we consider the problem of partitioning polyhedra into convex parts, first showing why planar triangulations do not generalize, and then presenting some alternative partitions, mainly due to Chazelle.
\end{chapterabstract}

\section{Closed and open polyhedra}

\paragraph{Basic definitions.}
Throughout the scientific literature, there is little agreement around a precise definition of the notion of \emph{polyhedron}. Unfortunately, different definitions, arising in different contexts, are not always equivalent. Here follow the definitions that we will use for the rest of the thesis.

\begin{definition}[(closed) polyhedron]\label{polydef}
A \emph{(closed) polyhedron} is the union of a finite set of closed tetrahedra (with mutually disjoint interiors) embedded in $\R^3$, whose boundary is a connected 2-manifold.
\end{definition}

\begin{remark}
We stress that a polyhedron, by itself, is merely a subset of $\R^3$, with no additional structure.
\end{remark}

As a consequence of our definition, a (closed) polyhedron is a compact topological space and its boundary is homeomorphic to a sphere or a $g$-torus. $g$ is also called the \emph{genus} of the polyhedron, and it is zero if the polyhedron is homeomorphic to a ball. Moreover, the complement of a polyhedron with respect to $\R^3$ is connected. (For a review on basic topological terminology and facts, refer to~\cite{alexandroff,kosniowski}.)

Sometimes, when studying visibility problems, we may be interested in a slightly different notion of polyhedron.

\begin{definition}[open polyhedron]
An \emph{open polyhedron} is the interior of a closed polyhedron (i.e., a closed polyhedron minus its boundary).
\end{definition}

\paragraph{Faces, vertices, edges.}
Since a polyhedron's boundary is piecewise linear, the notion of \emph{face} of a polyhedron is well-defined as a maximal planar subset of its boundary with connected and non-empty relative interior. Thus a face is a plane polygon, possibly with holes, and possibly with some degeneracies, such as hole boundaries touching each other at a single vertex. Any vertex of a face is also considered a \emph{vertex} of the polyhedron.

\emph{Edges} are defined as minimal non-degenerate straight line segments shared by two distinct faces and connecting two vertices of the polyhedron. Since a polyhedron's boundary is an orientable 2-manifold, the relative interior of an edge lies on the boundary of exactly two faces, thus determining an internal dihedral angle (with respect to the polyhedron). An edge is \emph{reflex} if its internal dihedral angle is reflex, i.e., strictly greater than $180^\circ$. Hence, convex polyhedra have no reflex edges.

\begin{observation}\label{o2:fve}
The intersection of any two distinct faces of a polyhedron is either empty, or a vertex, or an edge. Each edge of a polyhedron contains exactly two vertices as its endpoints, and lies on the relative boundary of exactly two non-coplanar faces. Each vertex of a polyhedron belongs to at least three edges and to at least three faces.
\end{observation}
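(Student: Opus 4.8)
The plan is to derive every assertion from the local piecewise-linear structure of the boundary $\partial P$, exploiting the hypothesis that $\partial P$ is a $2$-manifold. The single tool I would use throughout is the \emph{link} of a point $p \in \partial P$: for $\varepsilon > 0$ small, $L_\varepsilon(p) = \partial P \cap S_\varepsilon(p)$, where $S_\varepsilon(p)$ is the sphere of radius $\varepsilon$ centred at $p$. Because $\partial P$ is a boundaryless $2$-manifold and is piecewise linear, for small $\varepsilon$ this link is a single topological circle, and it inherits a polygonal structure from the faces and edges incident to $p$: each incident face meets $S_\varepsilon(p)$ in a circular arc, and each incident edge meets it in one or two points. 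I would first classify boundary points as \emph{flat} points (interior to a single face), \emph{ridge} points (relative interior of an edge), and \emph{vertices}, and then read off each claim from the shape of the corresponding link.

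For the statement about edges, I would take a ridge point $p$ in the relative interior of an edge $e$. The faces incident to $e$ are, near $p$, half-planes all sharing the line through $e$; if $m$ of them are present, then $L_\varepsilon(p)$ consists of $m$ meridian arcs joining the two points where $e$ pierces $S_\varepsilon(p)$. Such a configuration is a circle exactly when $m = 2$: a single meridian is an arc, not a closed curve (and would make $p$ a manifold-boundary point, which is excluded), while $m \ge 3$ meridians meeting at two common poles fail to be locally Euclidean there. Hence exactly two faces are incident to $e$. These two faces cannot be coplanar: two coplanar half-planes sharing the line of $e$ either coincide or together form a flat neighbourhood of $p$, in which case $p$ is a flat point and maximality of faces forces the two faces to coincide, contradicting that $e$ is shared by two \emph{distinct} faces. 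The endpoint claim is then essentially definitional: $e$ connects two vertices by definition, and minimality of edges forbids a vertex in its relative interior.

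For the vertex statement, I would take a vertex $v$ and examine $L_\varepsilon(v)$, again a circle. Going around it, the arcs contributed by the faces incident to $v$ and the points contributed by the edges incident to $v$ alternate, so $v$ has equally many incident faces and incident edges, say $k$, and I must show $k \ge 3$. The case $k = 1$ is impossible because the unique edge at $v$ is, by the previous paragraph, shared by two \emph{distinct} faces, both containing $v$, forcing at least two incident faces. The case $k = 2$ is impossible because a spherical \qq{bigon} has its two vertices at the two intersection points of its supporting great circles, which are antipodal; the two edges at $v$ would then be opposite rays, i.e.\ collinear, so $\partial P$ would pass straight through $v$ and $v$ would not be a corner of any incident face, contradicting that $v$ is a vertex. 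Therefore $k \ge 3$.

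Finally, for the intersection of two distinct faces $F_1, F_2$ lying in planes $\pi_1, \pi_2$, I would split into two cases. If $\pi_1 \ne \pi_2$, then $F_1 \cap F_2$ is contained in the line $L = \pi_1 \cap \pi_2$, hence is a closed subset of a line; any relative-interior point of it is a ridge point at which, by the edge argument, the two incident faces are exactly $F_1$ and $F_2$, so such a point lies on a shared edge, while isolated points are vertices. If $\pi_1 = \pi_2$, then $F_1$ and $F_2$ are distinct maximal connected planar pieces, so they cannot share a one-dimensional piece (a point in its relative interior would be flat, merging the two faces), leaving only the possibility of meeting at a vertex. I expect the genuinely delicate step to be ruling out that $F_1 \cap F_2$ is a \emph{disconnected} union of several shared segments when the faces are non-convex or holed; here I would argue that, since the dihedral angle along $L$ is constant and $\partial P$ is a manifold, any two shared segments collapse, under the minimality conventions in the definitions of edge and face, to a single edge. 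This analysis for non-convex, possibly holed faces is the main obstacle, whereas the convex and generic cases are immediate from $F_1 \cap F_2 \subseteq L$.
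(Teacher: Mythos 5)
The paper states this Observation without proof (the only hint it gives is the remark, just before the statement, that the two-faces-per-edge property follows from the boundary being a $2$-manifold), so there is no argument of record to compare yours against. Your link-of-a-boundary-point analysis is the natural tool and is exactly the implicit reasoning the paper relies on; it does correctly establish the second and third sentences, with one caveat on the vertex count. A degenerate face whose boundary pinches at $v$ (the paper explicitly allows hole boundaries touching at a single vertex) can contribute several arcs to the link, so the number of arcs equals the number of incident edges but only bounds the number of \emph{distinct} incident faces from above; your \qq{equally many faces and edges} step silently assumes one arc per face. To conclude \qq{at least three faces} you must additionally exclude two faces alternating around an even link cycle, which does follow --- all the intervening edges would lie on the single line $\pi_1\cap\pi_2$, which carries at most two edge directions out of $v$ --- but that step is missing.

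The genuine gap is the step you yourself flag as delicate, and the fix you propose cannot work, because the first sentence is not in fact a consequence of Definition~\ref{polydef} and the face/edge definitions. Take the cube $[0,3]^3$ and carve out the box $[1,2]\times[0,1]\times[0,1]$, i.e., cut a square channel through the middle third of one of the cube's edges. The result is a union of four boxes with pairwise disjoint interiors whose boundary is a $2$-sphere, hence a legitimate polyhedron. The bottom face $F_1\subset\{z=0\}$ and the front face $F_2\subset\{y=0\}$ are each a single maximal planar piece with connected relative interior (a square with a rectangular bite out of one side), yet $F_1\cap F_2=\bigl([0,1]\cup[2,3]\bigr)\times\{(0,0)\}$ is a disjoint union of two edges. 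The dihedral angle is the same ($90^\circ$) along both components, but the segment $(1,2)\times\{(0,0)\}$ between them does not belong to the polyhedron at all, so there is nothing for minimality of edges or maximality of faces to merge; your \qq{collapse} argument has no purchase. (One can also arrange, by removing two small tetrahedral bites that meet an edge only at its midpoint $v$, that $F_1\cap F_2$ is a \emph{connected} union of two collinear edges sharing the vertex $v$, so even restating the claim for connected components does not rescue it.) The first sentence therefore needs an additional non-degeneracy hypothesis, or a weaker formulation, before any proof can succeed; the last two sentences, which are what the paper actually uses in its counting arguments, survive and are the ones your link argument genuinely proves.
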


A discussion on possible degeneracies arising from an analogous definition of polyhedron, along with a collection of notable examples, is contained in~\cite{chazellethesis}.

\paragraph{$c$-oriented polyhedra.}
A useful parameter to express the complexity of a polyhedron is the number of different orientations its faces have (see for instance~\cite{coriented}).

\begin{definition}[$c$-oriented polyhedron]
A polyhedron is \emph{$c$-oriented} if $c$ vectors exist, any three of which form a basis for $\mathbb R^3$, such that each face is orthogonal to one of the vectors.
\end{definition}

It is understood that the orientation of a face is merely the slope of the plane in which the face is embedded. This should not be confused with the orientation the face inherits as being part of an orientable 2-manifold, i.e., a polyhedron's boundary. For instance, a cube is 3-oriented even if there are six distinct normal vectors pointing out of its faces. Similarly, a tetrahedron and a regular octahedron are both 4-oriented, a regular dodecahedron is 6-oriented, and a regular icosahedron is 10-oriented.

\paragraph{Representing polyhedra.}
There is no obvious choice of data structure for representing polyhedra. Although discussing this topic goes beyond the scope of our thesis, occasionally we will have to give algorithms and express their complexity.

We will generally assume that a polyhedron is stored as the array of its vertices, together with the array of its faces. Each face is a sequence of indices into the vertex array. The outer boundary of each face will be
given in counterclockwise order with respect to a normal vector poiniting outside, while its holes will be given in clockwise order.

There are some issues arising from the finiteness of the representation of coordinates, since the vertices of the same face are supposed to be coplanar. Coherency of data may still be guaranteed (with obvious losses in the representability of some polyhedral shapes), for example by triangulating the faces, but we will not delve into this topic here. For a discussion on some more sophisticated ways to represent polyhedra, refer to~\cite{chazellethesis,cgc}

When expressing asymptotic bounds with respect to the size of a polyhedron, we will generically refer to $n$ as the size of its representation. The robustness of this choice is also testified by the results presented in the next section. For instance, due to Euler's formula and Observation~\ref{o2:fve}, in a simply connected polyhedron the number of faces is bounded by a linear function of the number of edges, which is bounded by a linear function of the number of vertices, etc.

\section{Notable identities}

\paragraph{Euler's formula.}
One of the reasons why we chose our polyhedral model is that the Euler's formula holds, along with its several consequences.

\begin{theorem}[Euler--Poincar\'e]
For any polyhedron of genus $g$ with $f$ faces, $e$ edges, $v$ vertices, and a total of $h$ holes on its faces, the identity
$$f + v + 2g = e + h + 2$$
holds.
\hfill\qed
\end{theorem}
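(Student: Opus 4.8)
The plan is to reduce the statement to the classical Euler relation for a cell decomposition of a closed orientable surface. Recall from Definition~\ref{polydef} and the discussion following it that the boundary $\partial P$ of the polyhedron is a closed orientable $2$-manifold homeomorphic to a sphere or a $g$-torus, and hence has Euler characteristic $\chi(\partial P) = 2 - 2g$; I will take this topological fact as known input. The faces, edges, and vertices give a natural subdivision of $\partial P$, and if every face were a topological disk this would be a genuine CW decomposition, for which $v - e + f = \chi(\partial P)$ holds and immediately rearranges to the claimed identity (with $h = 0$). The obstruction is precisely that a face may carry holes: such a face is an annular-type region rather than a disk, so it is not a $2$-cell, and the naive alternating count does not compute $\chi(\partial P)$.

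To remove this obstruction, I would eliminate the holes one at a time by refining the $1$-skeleton. For a face carrying at least one hole, pick a vertex on the boundary of that hole and a vertex on the outer boundary of the same face, and insert an auxiliary edge (a topological arc through the relative interior of the face) joining them. This arc merges the hole's boundary cycle with the outer boundary into a single cycle, so it lowers the number of holes of that face by one while keeping the region connected; crucially, it adds exactly one edge, adds no new vertex, and leaves the number of faces unchanged. Routing the successive arcs so that they are pairwise disjoint, I can repeat until no face has a hole. After all $h$ holes have been cut, the counts have become $v' = v$, $e' = e + h$, $f' = f$, with $h' = 0$, while the surface $\partial P$ itself is untouched, so its genus and Euler characteristic are unaffected.

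Now every face is a topological disk, so $(v', e', f')$ is a bona fide CW decomposition of $\partial P$, and the classical relation gives $v' - e' + f' = \chi(\partial P) = 2 - 2g$. Substituting $v' = v$, $e' = e + h$, $f' = f$ yields $v - (e + h) + f = 2 - 2g$, which rearranges to $f + v + 2g = e + h + 2$, as required. The main point demanding care is the hole-cutting step: I must verify that the auxiliary arc can always be drawn inside the (connected) relative interior of the face, that it reduces the hole count by exactly one, and that the arcs can be chosen pairwise disjoint so that the final subdivision is an honest CW complex. I should also remark that such a bridge edge is bordered by the same face on both sides and is therefore not an edge of a polyhedron in the sense of Observation~\ref{o2:fve}; this is harmless, since it serves only as a $1$-cell in an auxiliary CW structure and is not part of any polyhedron.
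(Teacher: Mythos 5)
Your proposal is correct and follows essentially the same route as the paper, which also treats the classical hole-free Euler--Poincar\'e relation (attributed to Poincar\'e) as known input and disposes of the $h$ term by refining the face subdivision and tracking the change in the alternating sum $v-e+f$. The only cosmetic difference is that the paper's remark fully triangulates each face (adding $n+3h-3$ edges and $n+2h-3$ degenerate faces per face, for a net change of $-h$ in $v-e+f$), whereas you add only the $h$ minimal bridging arcs; both yield the same bookkeeping and the same conclusion.
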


A proof of the formula for polyhedra of arbitrary genus was first given by Poincar\'e in~\cite{poincare}, whereas Euler only gave the formula for simply connected polyhedra (also refer to~\cite{manifold} for a comprehensive treatment of related subjects).

\begin{remark}
Euler's formula is often presented without the $h$ term, in that it is usually assumed that the polyhedron's faces are triangulated. It is straightforward to obtain our version of the formula, by induction on $h$. Indeed, a face with $n$ vertices and $h$ holes can be partitioned into $n+2h-2$ triangles by drawing $n+3h-3$ extra edges. Such triangles can then be treated as ``degenerate'' (i.e., coplanar) faces.
\end{remark}

\paragraph{Polyhedral Gauss--Bonnet theorem.}
For each vertex $v$ of a polyhedron, we denote by $k_v$ its \emph{curvature}, defined as $2\pi$ minus the sum of the face angles incident to $v$.

The following theorem, relating the total curvature of the vertices of a polyhedron with its genus, was originally discovered by Descartes, and later extended in various directions by several authors, most notably Gauss and Bonnet.

\begin{theorem}[polyhedral Gauss--Bonnet]\label{t2:gauss}
In any polyhedron of genus $g$,
$$\sum_v k_v = 2\pi (2-2g),$$
where $v$ ranges through all vertices.
\hfill\qed
\end{theorem}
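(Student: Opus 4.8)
The plan is to reduce the statement to \emph{Euler's formula} by evaluating the total face-angle sum of the polyhedron. Write $S$ for the sum of all face angles, taken over every vertex-face incidence. Directly from the definition of curvature, $\sum_v k_v = 2\pi v - S$, since each of the $v$ vertices contributes a term $2\pi$ and then subtracts exactly the face angles incident to it, and collectively these subtracted angles exhaust $S$. Thus the whole problem reduces to evaluating $S$ and comparing with the Euler–Poincar\'e identity.

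First I would compute the angle sum of a single face. A face is a plane polygon with, say, $n_F$ boundary vertices and $h_F$ holes; by the triangulation fact recalled in the Remark following Euler's formula, it decomposes into $n_F + 2h_F - 2$ triangles using only its existing vertices. Since these triangles tile the face without introducing new vertices, their angles sum precisely to the interior angles of the face measured inside the region, giving an angle sum of $(n_F + 2h_F - 2)\pi$.

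Next I would sum over all faces:
$$S = \pi \sum_F (n_F + 2h_F - 2) = \pi\left( \sum_F n_F + 2\sum_F h_F - 2\sum_F 1 \right).$$
Here $\sum_F h_F = h$ and $\sum_F 1 = f$ are immediate. The key identity is $\sum_F n_F = 2e$: on each boundary cycle of a face the number of incident vertices equals the number of incident edges, so $n_F$ counts the edge-face incidences of $F$, and by Observation~\ref{o2:fve} every edge lies on exactly two faces. Hence $S = \pi(2e + 2h - 2f)$, and combining with the first step gives $\sum_v k_v = 2\pi(v - e - h + f)$.

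Finally, applying the Euler--Poincar\'e identity $f + v + 2g = e + h + 2$ in the form $v - e - h + f = 2 - 2g$ yields $\sum_v k_v = 2\pi(2 - 2g)$, as claimed. The main obstacle is the bookkeeping around faces with holes: I must ensure that both the triangulation count $n_F + 2h_F - 2$ and the incidence count $\sum_F n_F = 2e$ remain correct in the degenerate situations permitted by our model (e.g.\ hole boundaries meeting at a single vertex). This is precisely why one should count vertex-face and edge-face incidences \emph{with multiplicity}, rather than counting distinct vertices, so that the per-face angle sum and the global edge-counting argument stay in agreement.
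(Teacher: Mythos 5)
Your proof is correct and is precisely the elementary derivation from the Euler--Poincar\'e formula that the paper alludes to (and delegates to~\cite{do-dcg-11}): curvature sum equals $2\pi v$ minus the total face-angle sum, the per-face angle sum $(n_F+2h_F-2)\pi$ accounts for holes consistently with the $h$ term in the paper's version of Euler's formula, and $\sum_F n_F = 2e$ closes the computation. No gaps; your closing caveat about counting incidences with multiplicity in degenerate faces is exactly the right precaution.
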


In the case of polyhedra, the proof follows elementarily from Euler's formula, and can be found in~\cite{do-dcg-11}.

The polyhedral Gauss--Bonnet theorem will be used in Chapter~\ref{chapter5} to obtain tight inequalities relating the number of edges in an orthogonal polyhedron with the number of reflex edges.

\paragraph{Genus and reflex edges.}
An elementary consequence of the polyhedral Gauss--Bonnet theorem is that higher genuses imply proportional amounts of reflex edges.

\begin{theorem}[Chazelle--Shouraboura]\label{t2:chaz}
In any polyhedron, the genus $g$ and the number of reflex edges $r$ are related by the inequality
$$g\leqslant r+1.$$\hfill\qed
\end{theorem}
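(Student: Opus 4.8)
The plan is to read the genus off the polyhedral Gauss--Bonnet theorem (Theorem~\ref{t2:gauss}) and then control the total \emph{negative} curvature by the reflex edges. Writing $k_v$ for the curvature of vertex $v$, Theorem~\ref{t2:gauss} gives $\sum_v k_v = 2\pi(2-2g)$, so proving $g\leqslant r+1$ is equivalent to proving the lower bound $\sum_v k_v \geqslant -4\pi r$. Since positive-curvature vertices only help, the whole difficulty is to bound how negative the curvature at a single vertex can be in terms of the reflex edges meeting there.

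The key local lemma I would establish is: for every vertex $v$, $k_v \geqslant -2\pi r_v$, where $r_v$ denotes the number of reflex edges incident to $v$. To prove it, intersect the boundary with a small sphere centred at $v$; by Observation~\ref{o2:fve} the vertex has at least three incident edges and faces, so this \emph{link} is a genuine closed spherical polygon whose arc lengths are exactly the face angles at $v$ and whose interior angles are exactly the dihedral angles of the corresponding edges. Its perimeter therefore equals $\sum_f \theta_f$, and a vertex of the link is reflex precisely when the associated edge of the polyhedron is reflex, so the link has exactly $r_v$ reflex vertices. Since $k_v = 2\pi - \sum_f \theta_f$, the lemma reduces to the purely spherical statement that a spherical polygon with $r_v$ reflex vertices has perimeter at most $2\pi(r_v+1)$.

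To prove this spherical perimeter bound I would partition the link, exactly as in the planar reflex-vertex argument recalled in Chapter~\ref{chapter1}, into at most $r_v+1$ \emph{convex} spherical polygons by repeatedly cutting along geodesic arcs issuing from reflex vertices. Each convex piece, being bounded by a convex curve on the unit sphere, has perimeter at most $2\pi$; and since the boundary of the link is covered by the boundaries of the pieces, its length is at most the sum of the pieces' perimeters, hence at most $2\pi(r_v+1)$. This yields the lemma.

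Finally I would sum over all vertices. Each reflex edge has two endpoints, so $\sum_v r_v = 2r$, and the local lemma gives $\sum_v k_v \geqslant -2\pi\sum_v r_v = -4\pi r$. Combining with $\sum_v k_v = 2\pi(2-2g)$ gives $2\pi(2-2g)\geqslant -4\pi r$, that is $g\leqslant r+1$, as desired. The main obstacle is the spherical step: I must check that the planar convex-partition induction transfers to the sphere, which requires verifying that the geodesic cuts stay inside the link and that each resulting piece is genuinely geodesically convex (so that it lies in a hemisphere and the classical length-at-most-$2\pi$ bound for convex spherical curves applies). The angle-defect identity and the double-counting of reflex edges are routine once this geometric fact is in place.
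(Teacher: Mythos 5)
The paper never actually proves this statement: it is quoted from Chazelle and Shouraboura with only a pointer to their article, so there is no internal proof to compare yours against, and your argument has to stand on its own. It does. The reduction of $g\leqslant r+1$ to the estimate $\sum_v k_v\geqslant -4\pi r$ via Theorem~\ref{t2:gauss}, the double counting $\sum_v r_v = 2r$, and the derivation of the local bound $k_v\geqslant -2\pi r_v$ from a perimeter bound on the vertex link are all correct. As a sanity check against the only place the thesis computes curvatures explicitly (the orthogonal vertex classes in Theorem~\ref{th:2}): an E-vertex has $k_v=-\pi$ with three incident reflex edges and a D-vertex has $k_v=-\pi/2$ with one, so your local lemma holds there with room to spare.

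The step you flag is indeed the only one with real mathematical content, and you should be explicit about why. The base case $r_v=0$ of your spherical lemma is already the nontrivial classical fact that a vertex whose incident dihedral angles are all convex has non-negative curvature; equivalently, a simple closed piecewise-geodesic curve on $S^2$ that is locally convex towards the region it bounds actually bounds a geodesically convex region, hence one contained in a closed hemisphere, whence the length bound $2\pi$ (by monotonicity of perimeter under inclusion of spherical convex bodies, or by Cauchy--Crofton). This local-to-global statement is true---it is the two-dimensional spherical instance of van Heijenoort's theorem on locally convex hypersurfaces---but it is not a formality, because a link edge can be a great-circle arc of length greater than $\pi$ (a face may have a reflex planar angle at $v$) and the region can have area up to $2\pi$; any honest write-up must cite or prove it. Two smaller points to settle in the cutting induction: the bisecting geodesic issued from a reflex vertex of the link does meet the boundary again at arc length less than $2\pi$ (following the full great circle returns it to a boundary point, so a first exit exists), and since the link of a manifold vertex is a disk, every cut disconnects, so $r_v$ cuts produce at most $r_v+1$ locally convex pieces whose boundaries cover the original one. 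With these details supplied the proof is complete, and it is an attractive self-contained alternative to the cited reference; it also meshes naturally with the Gauss--Bonnet machinery the thesis already deploys in Chapter~\ref{chapter5}.
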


A simple proof of this inequality was given by Chazelle and Shouraboura in~\cite{polypart3}.

\section{Orthogonal polyhedra}

\paragraph{Terminology.}
A polyhedron is said to be \emph{orthogonal} if each of its edges is parallel to some axis. It follows that orthogonal polyhedra are 3-oriented. A \emph{cuboid} is a convex orthogonal polyhedron (some authors refer to cuboids as \emph{boxes}).

Each face of an orthogonal polyhedron is an orthogonal polygon, perhaps with holes, perhaps with degeneracies such as hole boundaries touching each other at a single vertex.

Whenever considering orthogonal polyhedra, we implicitly define the \emph{vertical} direction (or \emph{up-down} direction) to be the direction parallel to the $z$ axis. Accordingly, any direction parallel to the $xy$ plane is called \emph{horizontal}, and the direction parallel to the $x$ axis (resp.~$y$ axis) is called \emph{left-right} (resp.~\emph{front-back}) direction. Thus, the edges and faces of any orthogonal polyhedron are either vertical or horizontal, and the $x$-orthogonal (resp.~$y$-orthogonal) faces are called \emph{lateral faces} (resp.~\emph{front faces}).

\paragraph{$k$-reflex orthogonal polyhedra.}
Furthermore, we say that an orthogonal polyhedron is $k$-reflex if it has reflex edges parallel to at most $k$ distinct axes. Thus, a cuboid is 0-reflex, an orthogonal prism is 1-reflex, and the polyhedra in Figures~\ref{f2:types} and~\ref{f2:octoplex} are 3-reflex. In Chapter~\ref{chapter4}, we will consider a variation of the \ART restricted to 2-reflex orthogonal polyhedra.

\paragraph{Vertex classification.}
In any orthogonal polyhedron, all convex dihedral angles are $90^\circ$ wide, and all reflex dihedral angles are $270^\circ$ wide.
Based on the number of
incident reflex and convex edges, the vertices of orthogonal polyhedra form six distinct classes, denoted here by A, B, C, D, E and F, defined as follows.

Consider the eight octants determined by the coordinate axes intersecting at
a given vertex, and place a sufficiently small regular octahedron around the vertex, such that each of its faces lies in
a distinct octant. By Definition~\ref{polydef}, the set of the octahedron's faces that fall inside (resp.\ outside) our orthogonal polyhedron is connected: recall that the boundary of a polyhedron is a 2-manifold.

Consider all possible ways of partitioning the faces of the octahedron into two non-empty connected sets, up to isometry (refer to Figure~\ref{f2:vtypes}):

\begin{figure}[h]
\centering{\includegraphics[width=0.66\linewidth]{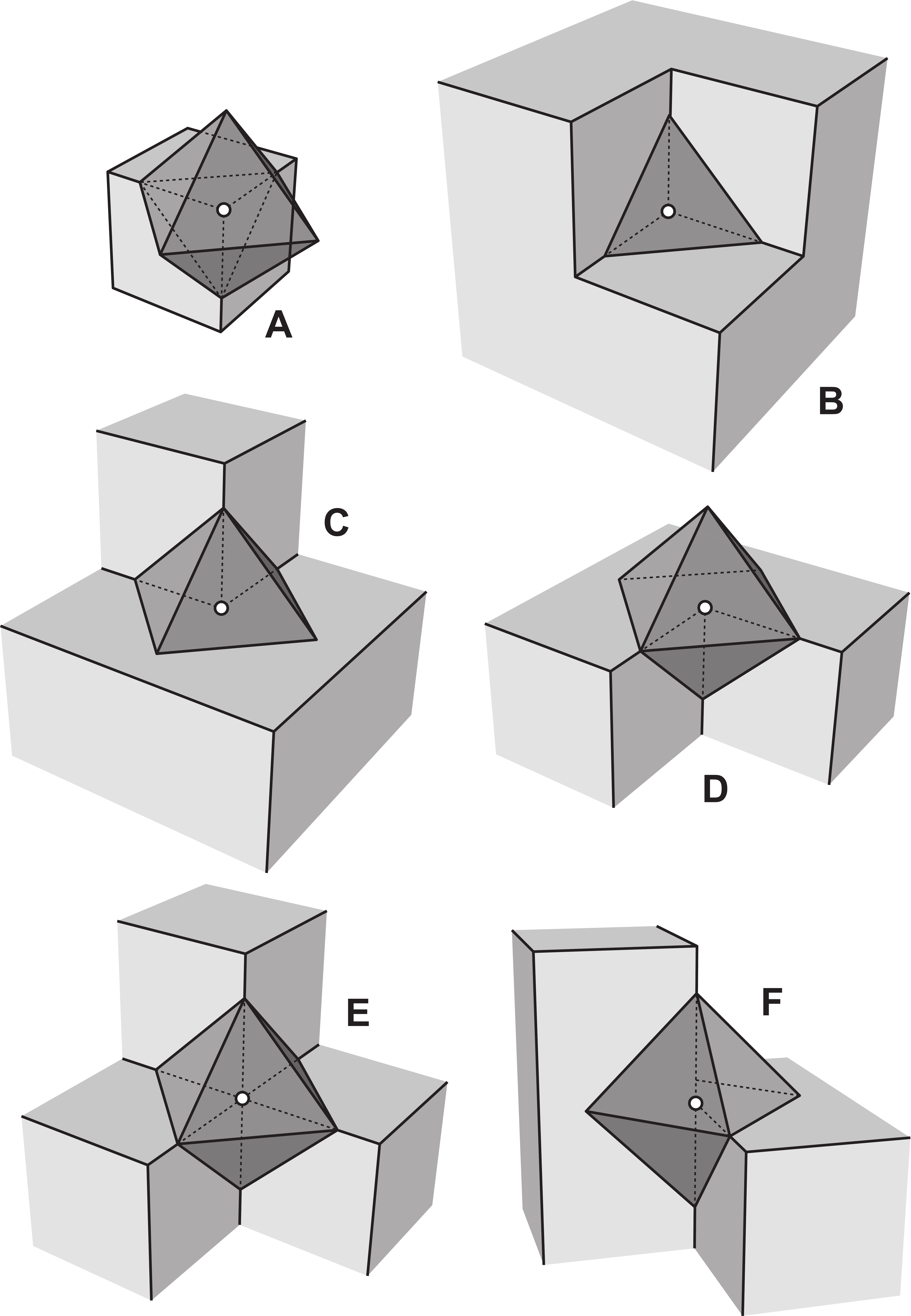}}
\caption{Vertex types for orthogonal polyhedra.}
\label{f2:vtypes}
\end{figure}

\begin{itemize}
\item There is essentially a single way to select one face (resp.\ seven faces). This corresponds to an A-vertex (resp.\ a B-vertex).

\item There is a single way to select two faces (resp.\ six faces). This case does not correspond to a vertex of the orthogonal polyhedron: it implies that the considered point is not a vertex of any face on which it lies.

\item There is a single way to select three faces (resp.\ five faces). This corresponds to a D-vertex (resp.\ a C-vertex).

\item There are three ways to select four faces. One of them implies that the point lies in the middle of a face,
hence it does not correspond to a vertex.
The other two choices correspond to an E-vertex and an F-vertex, respectively.
\end{itemize}

Figure~\ref{f2:types} shows a polyhedron exhibiting vertices of all types.

\begin{figure}[h]
\centering{\includegraphics[width=0.55\linewidth]{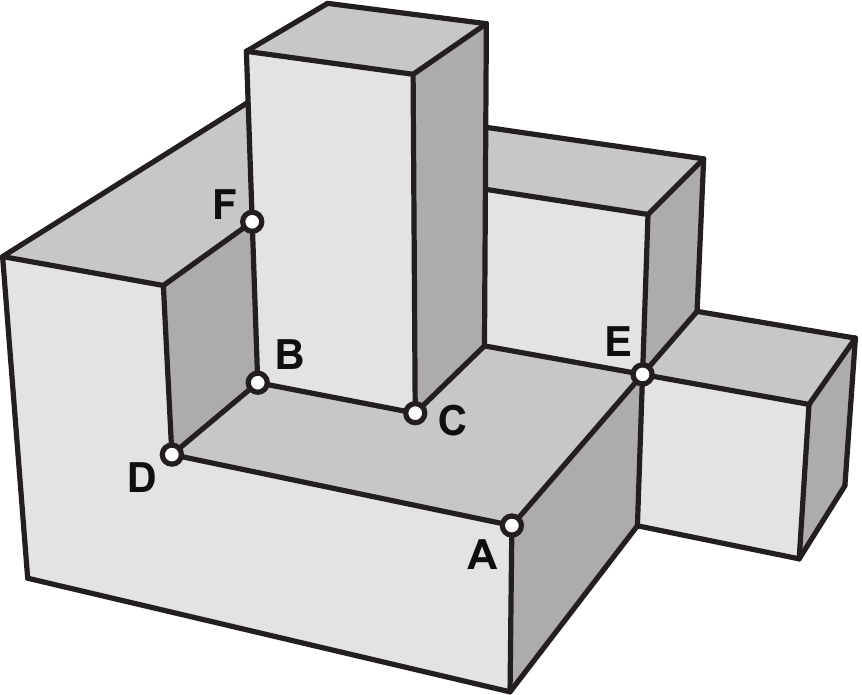}}
\caption{Orthogonal polyhedron displaying vertices of all types.}
\label{f2:types}
\end{figure}

\paragraph{Bounds on the number of reflex edges.}
For orthogonal polygons with $h$ holes, there is a simple formula relating the number of vertices $n$ with the number of reflex vertices $r$:
$$n=2r-4h+4.$$
This can be easily proved by induction on $h$.

For edges in orthogonal polyhedra there is no such identity, but the total number of edges still bounds the number of reflex edges, both from above and from below. In Chapter~\ref{chapter5}, we will prove that the following tight inequalities hold for every orthogonal polyhedron of genus $g$, with $e$ edges, of which $r>0$ are reflex:
$$\frac{e}{6} + 2g - 2\ \leqslant\ r\ \leqslant\ \frac{5e}{6} - 2g - 12.$$
The proof is yet another application of the polyhedral Gauss--Bonnet theorem.

\section{Convex partitions}

\subsection*{Tetrahedralizations}

The main difficulty that arises when attempting to extend the results surveyed in Chapter~\ref{chapter1} to polyhedra is that triangulation, the most basic and versatile tool used for polygons, does not generalize.
Indeed, there are polyhedra whose interior cannot be partitioned into tetrahedra without the addition of extra vertices (also called \emph{Steiner points}). 

\paragraph{The Sch\"onhardt polyhedron.}
A simple example of an untetrahedralizable polyhedron can be obtained by ``twisting'' a triangular prism by $30^\circ$, while bending inside the lateral faces, as
shown in Figures~\ref{f2:notri} and~\ref{f2:notri0}. This is also known as the \emph{Sch\"onhardt polyhedron}, see~\cite{herbert,art}.

\begin{figure}[h]
\centering{\includegraphics[width=0.85\linewidth]{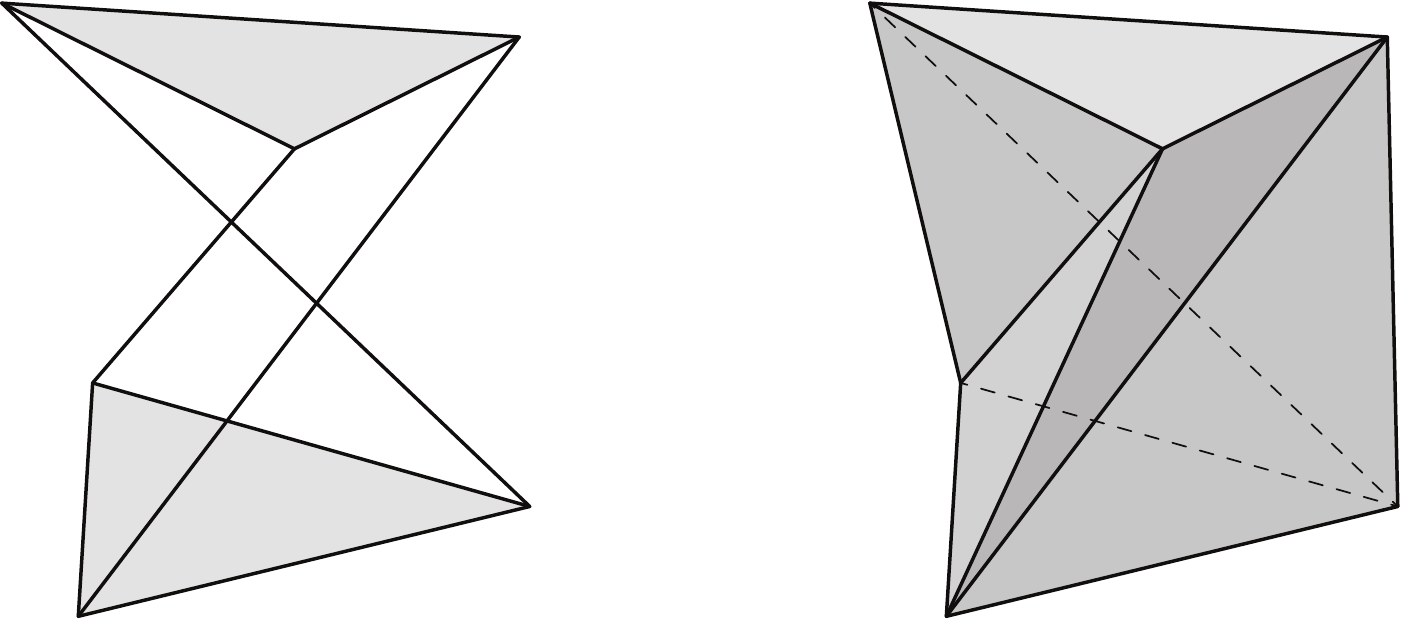}}
\caption{Construction of the Sch\"onhardt polyhedron.}
\label{f2:notri}
\end{figure}

\begin{figure}[h!]
\centering{\includegraphics[width=0.45\linewidth]{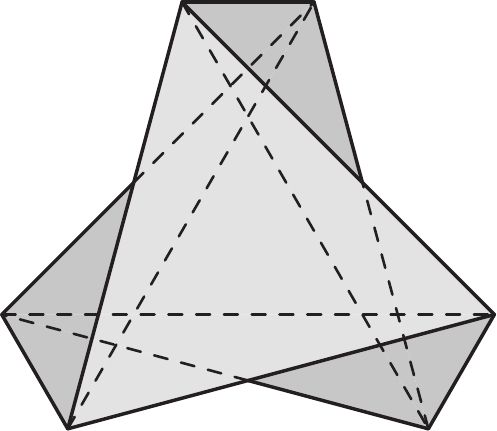}}
\caption{Sch\"onhardt polyhedron, top view.}
\label{f2:notri0}
\end{figure}

No matter how we pick four of the six vertices, the segment connecting two of them lies outside the polyhedron, which is then untetrahedralizable. Nonetheless, it can be decomposed into eight tetrahedra by adding a Steiner point in its center.

\paragraph{Deciding tetraheralizability.}
The Sch\"onhardt polyhedron was used by Ruppert and Seidel in~\cite{polypart4} to prove that deciding if a given polyhedron is tetrahedralizable without the addition of Steiner points is \NP-complete.

\paragraph{The octoplex.}
There are untetrahedralizable polyhedra also among orthogonal polyhedra. Figure~\ref{f2:octoplex} shows an example, called \emph{octoplex} in~\cite{adventures}.

To prove that the octoplex is not tetrahedralizable, observe that there are points $p$ around its center, such that the segment connecting $p$ with any vertex necessarily crosses the boundary. It follows that such points $p$ cannot belong to any tetrahedron in a tetrahedralization, hence the octoplex is untetrahedralizable.

\begin{figure}[h!]
\centering{\includegraphics[width=0.5\linewidth]{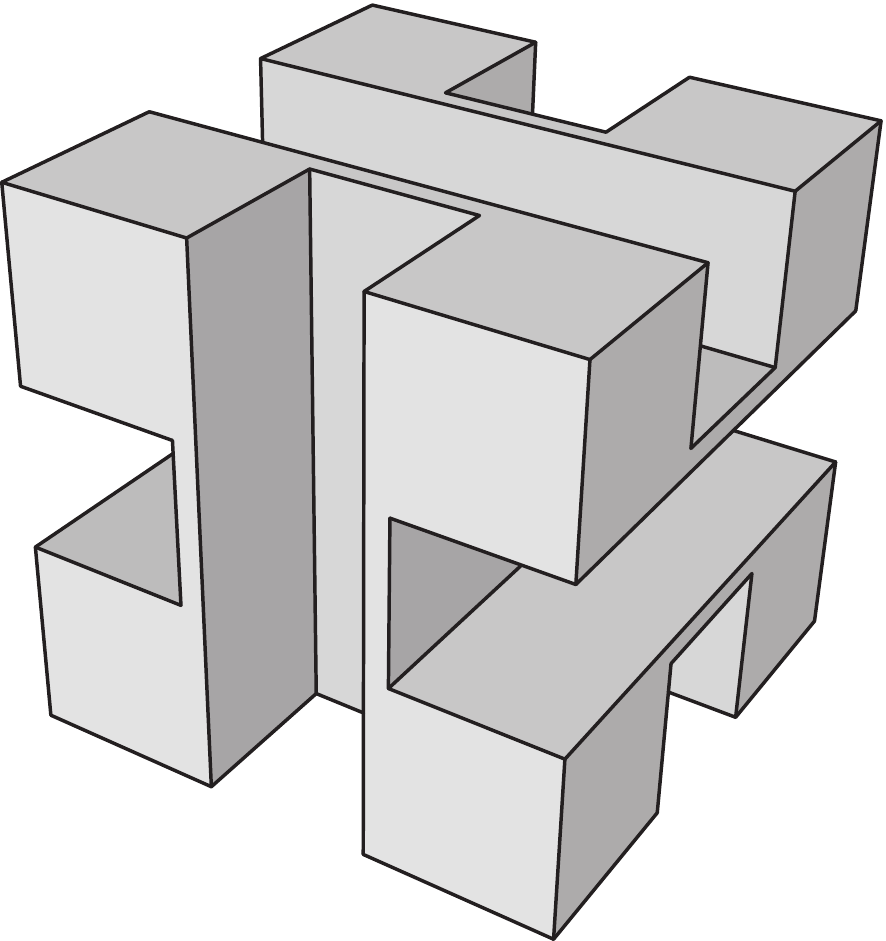}}
\caption{Octoplex.}
\label{f2:octoplex}
\end{figure}

\subsection*{Bounds on partitions into tetrahedra}

Chazelle and Palios proved in~\cite{polypart2} that any simply connected polyhedron with $n$ vertices and $r$ reflex edges can be partitioned, with the addition of Steiner points, into $O(n+r^2)$ tetrahedra. The partition can be computed in $O(n+r^2)$ space and $O((n + r^2) \log r)$ time. Since in
most applications $n$ greatly exceeds $r$, the method is viable in practice.

The algorithm starts by eliminating some low-degree vertices called ``cups'' in order to reduce the size of the polyhedron to $O(r)$, and then proceeds by erecting vertical ``fences'' that partition the resulting shape into $O(r^2)$ prisms, which are easily tetrahedralized.

Due to Theorem~\ref{t2:chaz}, the same upper bounds also hold for polyhedra of any genus, as observed by Chazelle and Shouraboura in~\cite{polypart3}.

The $O(n+r^2)$ bound on the number of tetrahedra is asymptotically tight in the worst case, as the class of \emph{Chazelle's polyhedra} shown in Figure~\ref{f2:chaz} implies (see~\cite{polypart1}).

\begin{figure}[h!]
\centering{\includegraphics[width=0.5\linewidth]{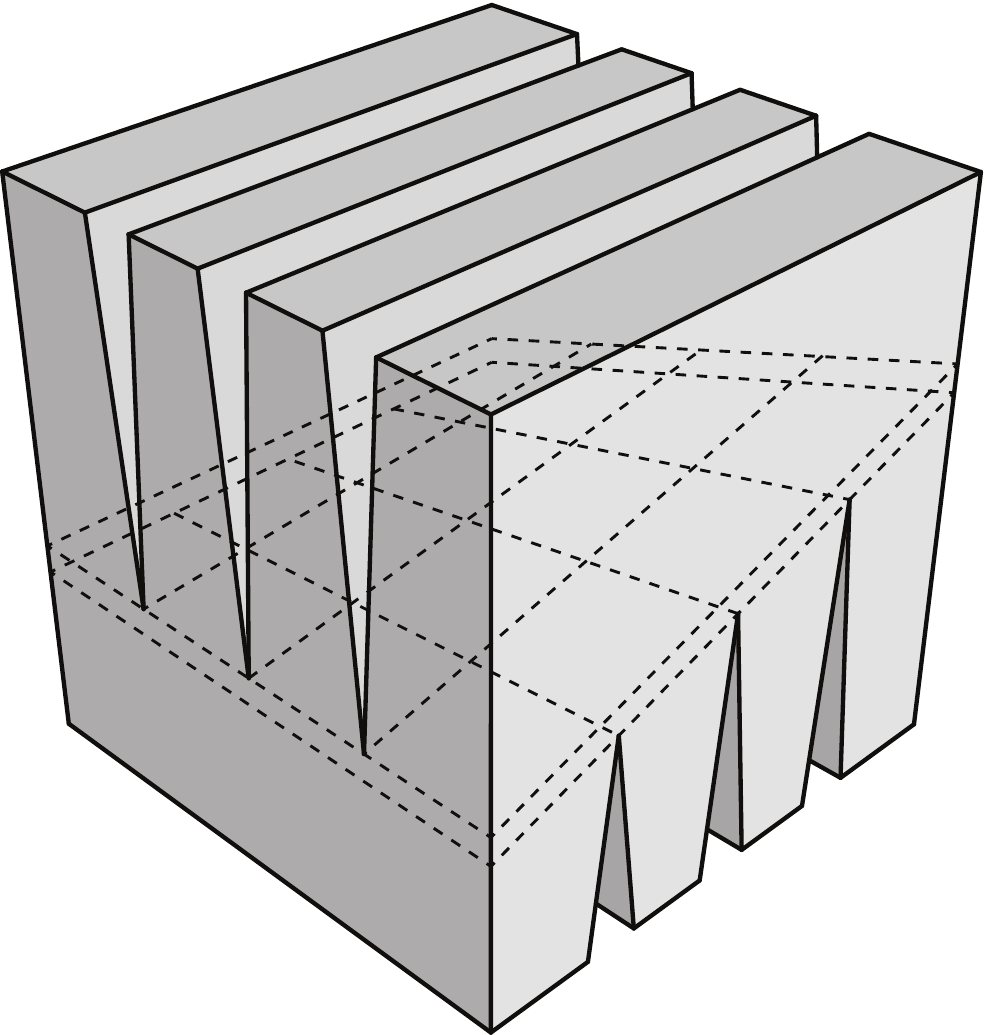}}
\caption{Chazelle's polyhedron.}
\label{f2:chaz}
\end{figure}

Indeed, the reflex edges on the bottom reflex edges lie on the hyperbolic paraboloid $z = xy$, while those on the top reflex edges lie on $z = xy +\varepsilon$. If $\varepsilon$ is small enough, the intersection of the region lying between the two hyperbolic paraboloids with any convex subset of the polyhedron can only have such a small volume that a quadratic number of convex pieces is necessary to cover the whole volume.

\subsection*{Decompositions into convex parts}

Note that Chazelle's polyhedra (Figure~\ref{f2:chaz}) also yield a lower bound of $\Omega(r^2)$ on the number of convex parts (not necessarily tetrahedral) required to decompose a polyhedron with $r$ reflex edges.

In~\cite{polypart1}, Chazelle showed how to partition any polyhedron into at most
$$\frac {r^2} 2 + \frac r 2 +1$$
convex pieces (with the addition of Steiner points) in $O(nr^3)$ time and $O(nr^2)$ space, where $n$ is the number of vertices.

The proposed algorithm, called ``revised naive decomposition'' in~\cite{polypart1}, picks a reflex edge and resolves it by cutting the polyhedron along a plane adjacent to that edge. Then the decomposition proceeds recursively on the resulting pieces, making sure that the reflex edge parts that originally belonged to the same reflex edge are all resolved with coplanar cuts. (A very similar decomposition will be thoroughly described in Chapter~\ref{chapter9} and applied to the \TSSPext.)

Again, the partition obtained is asymptotically optimal, and the algorithm is viable due to the low amount of reflex edges in polyhedra encountered in most applications.
\chapter{Guards}\label{chapter3}
\begin{chapterabstract}
We define several types of guards and guarding modes in polyhedra, also introducing some new concepts. Namely, we consider face guards and edge guards, each of which may be open or closed, and traditional point guards. Furthermore, we distinguish between orthogonal and non-orthogonal guarding.

We give matching upper and lower bounds quantifying the relationship between closed and open edge guards in orthogonal polyhedra, showing that closed edge guards are three times more ``powerful'' than open edge guards.

Next, we provide some upper and lower bounds on the number of point guards, edge guards, and face guards required to guard a given polyhedron. We review the state of the art on each problem, while also proving some new basic facts. We discuss both general and orthogonal polyhedra, and both closed and open guards.

Finally we focus on edge guards again, and we give some hardness proofs related to the computation and approximation of minimum edge guard numbers, employing almost direct generalizations of well-known planar constructions.
\end{chapterabstract}

\section{Visibility and guarding}

\subsection*{Visibility}
Given two points $x$ and $y$, we denote by $xy$ the (closed) straight line segment joining $x$ and $y$, and by $\open{xy}$ the corresponding \emph{open segment}, i.e., $\open{xy}=xy\setminus\{x,y\}$.

\emph{Visibility} with respect to a (closed or open) polyhedron $\mathcal{P}$ is a relation between points in $\mathbb{R}^3$: point $x$ \emph{sees} point
$y\neq x$ (equivalently, $y$ is \emph{visible} to $x$) if $xy \setminus \{x\}$ lies entirely in $\mathcal P$. $x$ sees itself if and only if it belongs to $\mathcal P$.

Note that, if $\mathcal P$ is a closed polyhedron, then visibility is a symmetric relation. In this case, a ``visibility segment'' $xy$ could touch $\mathcal P$'s boundary, or even lie on it.

On the other hand, if $\mathcal P$ is an open polyhedron, its boundary ``occludes'' visibility: if $x$ sees $y$, no portion of $xy$, except the endpoint $x$, can
lie on the boundary of $\mathcal P$. Hence, even if a boundary point cannot see itself, it can see some points inside $\mathcal P$.

When $\mathcal P$ is understood, we can safely omit any explicit reference to it, and just generically refer to visibility.

\begin{definition}[visibility region]\label{d3:visregion}
The \emph{visibility region} $\mathcal V(x)$ of a point $x\in \R^3$ (with respect to some polyhedron) is the set of points that are visible to $x$. The visibility region of a set $X \subseteq \mathbb R^3$, denoted by $\mathcal V(X)$, is the set of points that are visible to at least one point in $X$.
\end{definition}

With respect to open polyhedra, visibility regions are open sets. This property will be used in this chapter in a couple of occasions.

\begin{proposition}
\label{prop:1}
For every $X\subseteq \R^3$, the visibility region of $X$ with respect to an open polyhedron $\mathcal P$ is an open set.
\end{proposition}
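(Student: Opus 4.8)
The plan is to reduce the statement to single-source visibility and then argue openness pointwise. Since $\mathcal{V}(X)=\bigcup_{x\in X}\mathcal{V}(x)$ and an arbitrary union of open sets is open, it suffices to prove that $\mathcal{V}(x)$ is open for every single point $x\in\R^3$. So I fix $x$, take any $y\in\mathcal{V}(x)$, and look for a ball around $y$ contained in $\mathcal{V}(x)$, i.e.\ a ball all of whose points $y'$ satisfy $xy'\setminus\{x\}\subseteq\mathcal P$.

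First I would dispose of the degenerate configurations. If $x$ lies outside the closed polyhedron, then a whole neighborhood of $x$ avoids $\mathcal P$, so no half-open segment $xy\setminus\{x\}$ can stay in $\mathcal P$ near its endpoint $x$; hence $\mathcal V(x)=\emptyset$, trivially open. If $x\in\mathcal P$ and $y=x$, openness of $\mathcal P$ gives a ball $B(x,\eta)\subseteq\mathcal P$, and convexity of the ball makes every segment $xz$ with $z\in B(x,\eta)$ lie in $\mathcal P$, so that $B(x,\eta)\subseteq\mathcal V(x)$.

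The substantive case is $y\neq x$ with $S:=xy\setminus\{x\}\subseteq\mathcal P$. I would split $S$ at an interior point $x'=x+\delta(y-x)$, for a small $\delta\in(0,1)$, into a \emph{far} part (parameters in $[\delta,1]$) and a \emph{near} part (parameters in $(0,\delta]$), and then perturb $y$ to a nearby $y'$. The far part of $xy$ is compact and contained in the open set $\mathcal P$, so it admits a positive-distance tube inside $\mathcal P$; by uniform continuity of $(t,y')\mapsto x+t(y'-x)$ on the compact parameter interval, the far part of $xy'$ stays in that tube, hence in $\mathcal P$, whenever $y'$ is close enough to $y$.

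The near part is where the difficulty concentrates, and it is the crux of the whole proposition: when $x\in\partial\mathcal P$ the full segment $xy$ is \emph{not} contained in $\mathcal P$ (its endpoint $x$ lies on the boundary), so the tube argument breaks down as the segment approaches $x$ and skirts the boundary. Here I would invoke the piecewise-linear structure: at any point $x$ of a polyhedron there are a radius $R>0$ and a tangent cone $\mathcal C_x$ with apex $x$ such that $\mathcal P\cap B(x,R)=\mathcal C_x\cap B(x,R)$, where $\mathcal C_x$ is an open set invariant under scaling toward $x$ (if $p\in\mathcal C_x$ then $x+s(p-x)\in\mathcal C_x$ for all $s>0$). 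Choosing $\delta$ so small that $x'$ and a ball $B(x',\rho)\subseteq\mathcal P\cap B(x,R)$ around it lie in $B(x,R)$, the point $x+\delta(y'-x)=x'+\delta(y'-y)$ lands in $B(x',\rho)\subseteq\mathcal C_x$ once $\delta\,|y'-y|<\rho$; scaling toward the apex then places every near-part point $x+u(y'-x)$ with $u\in(0,\delta]$ inside $\mathcal C_x\cap B(x,R)=\mathcal P\cap B(x,R)$. Combining the two parts gives $xy'\setminus\{x\}\subseteq\mathcal P$ for all $y'$ in a suitable ball about $y$, proving that $y$ is interior to $\mathcal V(x)$. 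The only genuinely delicate point is establishing and exploiting this local conical description of $\mathcal P$ at a boundary point; everything else reduces to compactness and continuity.
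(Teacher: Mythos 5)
Your proof is correct, but it follows a genuinely different route from the paper's. The paper works globally on the complement of $\mathcal V(x)$: for each face $f$ not containing $x$ it forms the occluded region $\mathcal O(x,f)$, a closed truncated unbounded pyramid with apex $x$ and base $f$, adds the closed exterior solid angle determined by the faces through $x$, and observes that the finite union of these closed sets is exactly $\R^3\setminus\mathcal V(x)$, so $\mathcal V(x)$ is open at once. You instead prove openness pointwise by perturbing the sight segment: the far part $\{x+t(y-x): t\in[\delta,1]\}$ is a compact subset of the open set $\mathcal P$ and survives perturbation of $y$ by a tube argument, while the near part is controlled by the local conical structure of the polyhedron at $x$ together with the scale-invariance of the cone toward its apex. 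Both arguments are sound. The paper's is shorter and avoids any case split, but it quietly relies on the occlusion regions being closed and on their union exhausting precisely the invisible points; your version isolates the genuine difficulty (the behaviour of the segment near its endpoint when $x$ lies on the boundary) at the price of having to justify that a polyhedron in the sense of Definition~\ref{polydef} is locally a cone at each point of its closure. That fact does hold here: only the finitely many tetrahedra containing $x$ matter inside a small ball, each is a cone with apex $x$ there, a finite union of cones with a common apex is a cone, and passing to interiors commutes with intersecting with the open ball, so your identity $\mathcal P\cap B(x,R)=\mathcal C_x\cap B(x,R)$ with $\mathcal C_x$ open and scale-invariant is legitimate. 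The remaining steps (reduction to singletons via the union, the empty case when $x$ is outside the closure, and the convexity argument when $y=x\in\mathcal P$) match the paper's or are routine.
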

\begin{proof}
We will prove our claim just for $X=\{x\}$. In general,
$$\mathcal V(X)= \bigcup_{x\in X} \mathcal V(x),$$
which is open because it is a union of open sets.

First of all, if $x$ does not lie in the topological closure of $\mathcal P$, then its visibility region is the empty set, which is open. Otherwise, let $f$ be a face of $\mathcal{P}$ not containing $x$. The region of space
occluded by $f$ is a closed set $\mathcal O(x,f)$, shaped like a truncated
unbounded pyramid with apex $x$ and  base $f$.
Taking the union of all $\mathcal O(x,f)$'s, for every face $f$ not containing $x$, we obtain
a closed set $\mathcal O(x)$, because $\mathcal P$ has finitely many faces.

The region occluded by the faces containing $x$ is the corresponding
unbounded solid angle, external with respect to $\mathcal P$, which is a
closed set.  Its union with $\mathcal O(x)$ is again a closed set, whose complement is therefore an open set. By definition, this is exactly $\mathcal V(x)$.
\end{proof}

\begin{observation}
For closed polyhedra, a weaker statement holds: the visibility region of any point, with respect to a closed polyhedron, is a closed set (the proof is similar to that of Proposition~\ref{prop:1}). However, the visibility region of an infinite point set, with respect to a closed polyhedron, may be neither a closed set nor an open set.
\end{observation}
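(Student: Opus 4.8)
The statement splits into two independent claims, which I would handle separately. The first (a single-point visibility region is closed) simply dualizes Proposition~\ref{prop:1}; the second is where the content lies, and demands an explicit counterexample. Throughout I write $\mathcal P$ for the closed polyhedron and keep the paper's convention that $x$ sees $y\neq x$ iff $xy\setminus\{x\}\subseteq\mathcal P$.

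For the first claim the plan is to run the argument of Proposition~\ref{prop:1} with the words ``open'' and ``closed'' interchanged. First I would dispose of $x\notin\mathcal P$: since $\mathcal P$ is closed its complement is an open neighbourhood of $x$, so the points of $xy\setminus\{x\}$ arbitrarily close to $x$ escape $\mathcal P$, forcing $\mathcal V(x)=\emptyset$, which is closed. For $x\in\mathcal P$ I would argue directly by sequences, which is cleaner than dualizing the occlusion sets (though that works too): given $y_n\to y$ with $y_n\in\mathcal V(x)$, every point $z\in xy\setminus\{x\}$ is the limit of the corresponding collinear points $z_n\in xy_n\setminus\{x\}\subseteq\mathcal P$, so $z\in\mathcal P$ because $\mathcal P$ is closed; hence $xy\setminus\{x\}\subseteq\mathcal P$ and $y\in\mathcal V(x)$. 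The same computation yields $\mathcal V(x)\subseteq\mathcal P$, a fact I will reuse below. There is no real obstacle here.

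The substance is the second claim, where I would exhibit a concrete genus-$0$ orthogonal polyhedron together with an infinite source set whose pooled visibility region is simultaneously non-open and non-closed --- neither being possible for a single source, by the first claim. I would take the L-shaped prism $\mathcal P=L\times[0,1]$ with $L=([0,3]\times[0,1])\cup([0,1]\times[0,3])$, whose single reflex corner sits at $(1,1)$, and the source set $X=\{x_n=(1+1/n,\,0,\,1/2):n\geqslant 1\}\subseteq\partial\mathcal P$. A short planar computation --- valid because all $x_n$ and all the test points I use lie in the plane $z=1/2$, so the relevant segments are horizontal and reduce to $2$-dimensional visibility in $L$ --- shows that $x_n$ sees a point $(u,v)$ of the top arm exactly when $u\leqslant 1-(v-1)/n$. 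Consequently the part of $\mathcal V(X)$ inside the top arm is the half-open slab $\{\,0\leqslant u<1,\ 1\leqslant v\leqslant 3\,\}$ at height $1/2$.

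Both failures then fall out of this description. The point $q=(1,2,1/2)$ is a limit of points $(1-\varepsilon,2,1/2)\in\mathcal V(X)$ yet is seen by no source (it would need $1\leqslant 1-1/n$), so $\mathcal V(X)$ is not closed; this is precisely the limit that a single closed visibility region could never shed. Openness fails for a trivial but unavoidable reason: each $x_n\in\mathcal V(X)$ lies on $\partial\mathcal P$, and since $\mathcal V(X)\subseteq\mathcal P$ every neighbourhood of $x_n$ meets the exterior, so $x_n$ is not interior. The one delicate point I expect is getting the wedges to sweep the right way --- verifying that bringing the source closer to alignment with the reflex edge enlarges rather than shrinks the illuminated region --- so that the family is strictly increasing and its limiting shadow boundary $\{u=1\}$ is approached but never attained. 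Reversing this monotonicity yields a closed union and destroys the example, so that is where I would be most careful.
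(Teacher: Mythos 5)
Your proposal is correct, and it supplies strictly more than the paper does: the thesis states this observation without an explicit proof, merely pointing at Proposition~\ref{prop:1} for the closedness claim and asserting the existence of the pathology for infinite sets. For the first claim, the paper's hint is to dualize the occlusion-set argument (the occluded region becomes open, so $\mathcal V(x)$ is an intersection of closed sets); your direct sequence argument --- pulling each $y_n\to y$ back along the segments $xy_n$ to get $z_n\to z$ in the closed set $\mathcal P$ --- is an equally valid and arguably cleaner route, and it handles the degenerate cases ($x\notin\mathcal P$, $y=x$) correctly. For the second claim your L-shaped prism is a legitimate witness: I checked the visibility condition $u\leqslant 1-(v-1)/n$ for a source at $(1+1/n,0)$ against the reflex corner $(1,1)$ (at equality the segment grazes the corner, which is allowed in the closed model), so the union over $n$ is indeed the half-open wedge $\{u<1\}$ in the top arm for $v>1$, the point $(1,2,1/2)$ is a non-attained limit, and non-openness is immediate from $\mathcal V(X)\subseteq\mathcal P$ together with the sources lying on $\partial\mathcal P$. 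The monotonicity you flag as the delicate point does go the right way: moving the source toward alignment with the occluding wall widens the illuminated wedge, which is exactly what makes the union fail to be closed.
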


\subsection*{Guard types and guarding modes}

Given a polyhedron $\mathcal P$, our variations of the \ART ask for a tight bound on the minimum $k$ such that there exists a \emph{guarding set} $G=\{g_1, \cdots, g_k\}$ that completely sees $\mathcal P$. In other terms, we require that
$$\mathcal P =  \bigcup_{i=1}^k \mathcal V(g_i).$$

The exact nature of the \emph{guards} $g_i$ depends on the particular variation of the problem we are considering. Our guards could be:

\begin{itemize}
\item \emph{Point guards}, i.e., points chosen anywhere.
\item \emph{Boundary point guards}, i.e., points chosen on $\mathcal P$'s boundary.
\item \emph{Vertex guards}, i.e., points chosen among $\mathcal P$'s vertices.
\item \emph{Segment guards}, i.e., line segments lying in the topological closure of $\mathcal P$.
\item \emph{Boundary segment guards}, i.e., line segments lying on $\mathcal P$'s boundary.
\item \emph{Edge guards}, i.e., line segments chosen among $\mathcal P$'s edges.
\item \emph{Face guards}, i.e., polygons chosen among $\mathcal P$'s faces.
\end{itemize}

\paragraph{Orthogonal guarding.}
We can further distinguish a more restrictive \emph{guarding mode}, which we will often use in conjunction with edge guards in orthogonal polyhedra. We say that an edge guard $e$ \emph{orthogonally sees}  a point $x$ (with respect to some polyhedron) if there is a point $y\in e$ that sees $x$, such that $xy$ is orthogonal to $e$. Similarly, we introduce the variant of the \ART in which an (orthogonal) polyhedron must be orthogonally guarded by edge guards.

\paragraph{Bound parameters.}
The parameters used to express bounds on guard numbers depend on the nature of the guards themselves. We usually bound point guards in terms of the number of vertices of the polyhedron, and we bound face guards in terms of the total number of faces. We may bound segment and edge guards either in terms of the total number of edges, or of the number of reflex edges.

\section{Open edge guards}

Although in the traditional variations of the \ART, guards are usually topologically closed sets, we introduce a different model, which we claim is more natural in certain circumstances.

Segment guards, edge guards and face guards may be \emph{closed} or \emph{open}, depending on whether they contain their relative boundary or not. For instance, an \emph{open edge guard} is an edge minus its endpoints.

\subsection*{Motivations}
In Part~\ref{part2}, we will study the \ART for edge guards. We will generally consider closed edge guards in conjunction with closed polyhedra, and open edge guards in conjunction with open polyhedra.

To better understand our motivations for studying open edge guards in open polyhedra, consider a polyhedron representing an empty room with solid walls. We are tasked to place guards in this room, who can detect unwelcome intruders. Because an intruder cannot hide ``within'' a wall, but rather must be located inside the room, there is no need to guard the walls of the room, i.e., the boundary of the polyhedron.

A guarding problem can alternatively be viewed as an illumination problem, with guards acting as light sources. Incandescent lights are modeled as point guards, and fluorescent lights are modeled as segment guards. In the latter case, it may be convenient to disregard the endpoints of the edge guards.
Indeed, the amount of light that a point interior to the polyhedron
receives is proportional to the total length of the segments illuminating it. Employing the open edge guard model in open polyhedra ensures that if a point is illuminated, it receives a strictly positive amount of light, and makes the model more realistic. 

Observe that these two notions of illuminated points (visible to an
open edge guard or receiving a strictly positive amount of light from
closed edge guards) cease to be equivalent in the case of closed polyhedra.

\subsection*{Comparing open and closed edge guards}

Although closed edge guards have only two more points than their open counterparts, they can be three times more ``powerful''. This is because the number of open guards needed to guard a polyhedron may be three times the corresponding number of closed edge guards. For open orthogonal polyhedra, this bound becomes tight.

\begin{theorem}
\label{th:1}
Any open orthogonal polyhedron guardable by $k$ closed edge guards is guardable by at most $3k$ open edge guards, and this bound is tight.
\end{theorem}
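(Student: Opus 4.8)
The plan is to prove the two halves separately, reducing the upper bound to a purely local fact about the visibility of a single vertex, and constructing an explicit gadget for tightness.

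For the upper bound, fix a guarding set of $k$ closed edge guards $e_1,\dots,e_k$ and write $e_i=v_iw_i$. Since $e_i=\open{e_i}\cup\{v_i,w_i\}$, Definition~\ref{d3:visregion} gives $\mathcal V(e_i)=\mathcal V(\open{e_i})\cup\mathcal V(v_i)\cup\mathcal V(w_i)$, so it suffices to replace each closed guard $e_i$ by the three open edge guards $\open{e_i}$, $\open{e_i'}$, $\open{e_i''}$, where $e_i'$ and $e_i''$ are edges incident to $v_i$ and $w_i$ chosen so that $\mathcal V(v_i)\subseteq\mathcal V(\open{e_i})\cup\mathcal V(\open{e_i'})$ and $\mathcal V(w_i)\subseteq\mathcal V(\open{e_i})\cup\mathcal V(\open{e_i''})$. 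Everything therefore hinges on the local claim: \emph{for every vertex $v$ of the open orthogonal polyhedron and every edge $e$ incident to $v$, there is a second incident edge $e'$ with $\mathcal V(v)\subseteq\mathcal V(\open e)\cup\mathcal V(\open{e'})$.} Counting $\open{e_i}$ once and one extra open edge per endpoint yields exactly $3k$ open guards.

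To prove the local claim I would set up a \emph{cone of directions} at $v$. Placing a small octahedron at $v$ as in the vertex classification of Chapter~\ref{chapter2}, the directions along which $v$ sees into the open polyhedron are exactly those pointing into the union of the interior octants, while for a point $p$ on an open incident edge close to $v$, the directions along which $p$ sees locally into the interior form the interior dihedral wedge of that edge (a $90^\circ$ wedge for a convex edge, a $270^\circ$ wedge for a reflex one). The bridge between the two is a continuity argument: if $v$ sees $x$ and the direction $\vec{vx}$ lies in the interior wedge of an incident edge $e'$, then the segment from a point $p\in\open{e'}$ sufficiently close to $v$ to $x$ stays in the open polyhedron, so $x\in\mathcal V(\open{e'})$; here one uses that $vx\setminus\{v\}$ is a compact subset of the open set $\mathcal P$, hence at positive distance from its boundary. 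The claim then reduces to the combinatorial statement that the set of interior octants at $v$ is covered by the union of two dihedral wedges, one of which is $e$'s, which I would verify by going through the six vertex types A--F and every choice of incident edge $e$, exploiting the fact that the interior and exterior octant sets are each connected because the boundary is a $2$-manifold (Definition~\ref{polydef}).

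For tightness it suffices to build, for each $k$, an open orthogonal polyhedron guarded by $k$ closed edge guards but requiring $3k$ open ones; by chaining $k$ essentially independent copies it is enough to treat $k=1$. I would design a single edge $e=vw$ together with three pockets $P_0,P_1,P_2$ so that the interior $\open e$ sees $P_0$ but is blocked from $P_1$ and $P_2$, the endpoint $v$ is the only point of $e$ reaching into $P_1$ and $w$ the only one reaching into $P_2$, and no single open edge of the polyhedron sees into two different pockets. The first two conditions make the closed edge $e$ a valid single guard, while the last forces any open guarding set to spend one open edge per pocket, hence at least three. The main obstacle will be this last step: proving ``mutual incompatibility'' of the pockets for \emph{all} open edges (not just the three obvious ones) requires shaping each pocket as a thin orthogonal tunnel whose only sightline from inside the polyhedron passes through the prescribed vertex or through $\open e$, so that any open edge seeing into two pockets would have to contain such a vertex as an interior point --- which, being open, it does not. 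Verifying the continuity lemma and completing the wedge-covering case analysis are the two places where orthogonality and the $2$-manifold hypothesis do the real work.
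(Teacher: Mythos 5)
Your proposal follows essentially the same route as the paper: the upper bound by replacing each endpoint of a closed guard with one additional incident open edge and verifying, via the octant classification of vertex types and a perturbation argument, that two open dihedral wedges cover the endpoint's visibility region; the lower bound by a comb-like gadget in which each closed edge sees three distinguished regions (one through its interior and one through each endpoint) while no open edge sees more than one. The only slip is your justification of the continuity step --- $vx\setminus\{v\}$ is \emph{not} compact, and the genuine difficulty sits precisely near $v$, which the paper resolves by inscribing a small cone with apex $v$ inside $\mathcal V(v)$ and a ball $\mathcal D$ around $v$ meeting only the incident faces, then picking the replacement point on the open edge within $\mathcal D$.
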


\begin{proof}
Given a set of $k$ closed edges that guard the entire polyhedron, we first construct a guarding set of
open edges of size at most $3k$ and then show that this set also guards the entire polyhedron. The construction is
as follows: for each closed edge $uv$ from the original guarding set, place the open edge $\open{uv}$ into the new guarding
set. To replace the endpoint $u$, add a reflex edge $\open{uw}$, with $w \neq v$, if one such edge exists. Otherwise, add any other edge incident to $u$. Similarly, an incident edge, preferably reflex, is selected for the other endpoint $v$. Hence, for each edge of the original guarding set, at most three open edges are placed in the new guarding set.

To prove the equivalence of the two guarding sets, we show that the region that was guarded
by endpoint $u$ of the closed edge $uv$ from the original guarding set is guarded by some point belonging to the
interior of $uv$ or the interior of $uw$, as chosen above, 
i.e., $\mathcal V (u) \subseteq \mathcal V (\open{uv}) \cup \mathcal V (\open{uw})$.

Let $x \in \mathcal V (u)$.
By Proposition~\ref{prop:1}, a small-enough ball $\mathcal{B}$ centered at $x$  
belongs to $\mathcal V(x)$. We consider a right circular cone $\mathcal{C}$ with apex $u$, whose base is centered at $x$ and is contained
in $\mathcal{B}$. Clearly, $\mathcal{C} \subset \mathcal V(u)$. Let $\mathcal{D}$ be a (small-enough) ball centered at $u$ that does not intersect any face of the polyhedron except those containing $u$ (refer to Figure~\ref{fig:coneball}).

\begin{figure}[h]
\centering{\includegraphics[width=.95\linewidth]{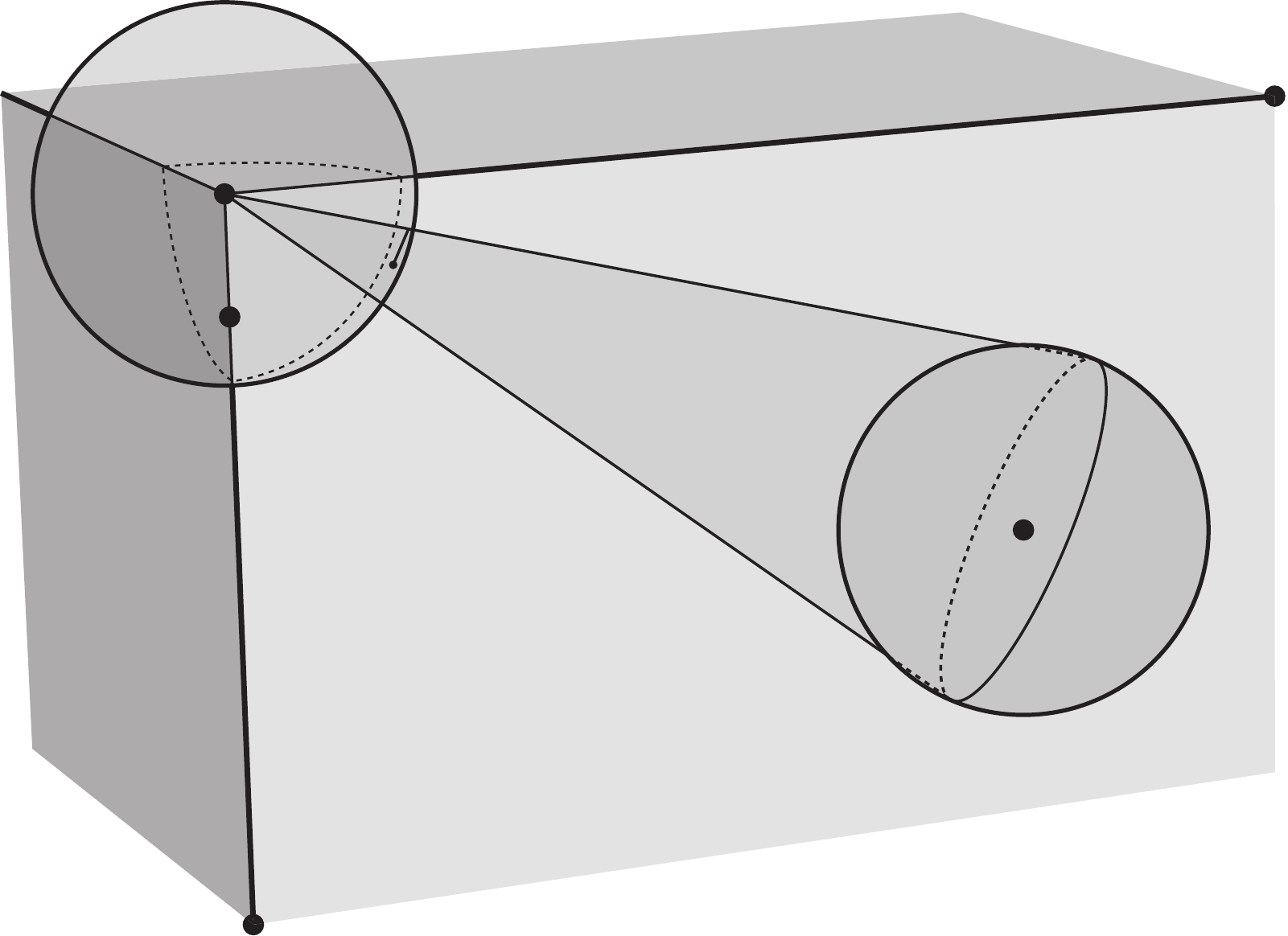}}
\put(-348,222){$u$}
\put(-320,12){$v$}
\put(-20,246){$w$}
\put(-293,209){$y$}
\put(-89,112){$x$}
\put(-327,183){$z$}
\put(-275,215){$\rho$}
\Large
\put(-344,275){$\mathcal D$}
\put(-200,170){$\mathcal C$}
\put(-50,104){$\mathcal B$}
\caption{Construction from the proof of Theorem~\ref{th:1}.}
\label{fig:coneball}
\end{figure}

We prove that $\mathcal{D} \cap \mathcal{P} \subseteq \mathcal V(\open{uv} \cap \mathcal{D}) \cup \mathcal V(\open{uw} \cap \mathcal{D})$.

If $u$ is an A-vertex (recall the classification of vertices from Chapter~\ref{chapter2} and refer to Figure~\ref{f2:vtypes}), then $\mathcal{D} \cap \mathcal{P} \subseteq \mathcal V (\open{uv} \cap \mathcal{D})$. If
$u$ is a B-vertex (as illustrated in Figure~\ref{fig:coneball}), then of the eight octants in which $\mathcal{D}$ is partitioned by
orthogonal planes crossing at its center, one is external to $\mathcal{P}$.
Out of the seven octants that need be guarded, six are guarded
by $\open{uv} \cap \mathcal{D}$. The same holds for $\open{uw}$, and together they
guard all seven octants (two of the octants guarded by $\open{uv}$ are missing a face, but those two faces are guarded by $\open{uw}$).

In all other cases ($u$ is a $C$-, $D$-, $E$- or $F$-vertex), 
either $uv$ or $uw$ is a reflex edge.
Assume without loss of generality that $uv$ is reflex.
Then, $\open{uv} \cap \mathcal{D}$ sees all of
$\mathcal{D} \cap \mathcal{P}$ (refer again to Figure~\ref{f2:vtypes}). 

The boundaries of $\mathcal{D}$ and $\mathcal{C}$ intersect at a circle of
radius $\rho > 0$. Let $y$ be the center of that circle. By the above reasoning, there is
a point $z$ on $\open{uv} \cap \mathcal{D}$ or on $\open{uw} \cap \mathcal{D}$ that sees $y$, and hence
the entire open segment $\open{uz}$ sees $y$. Pick a point $t$ on $\open{uz}$
such that $||ut||< \rho$. Then $t$ sees $x$, hence $x$ is guarded.

A similar argument holds for the visibility region of the other endpoint, $v$, of
$uv$.

\begin{figure}[h]
\centering{\includegraphics[width=.55\linewidth]{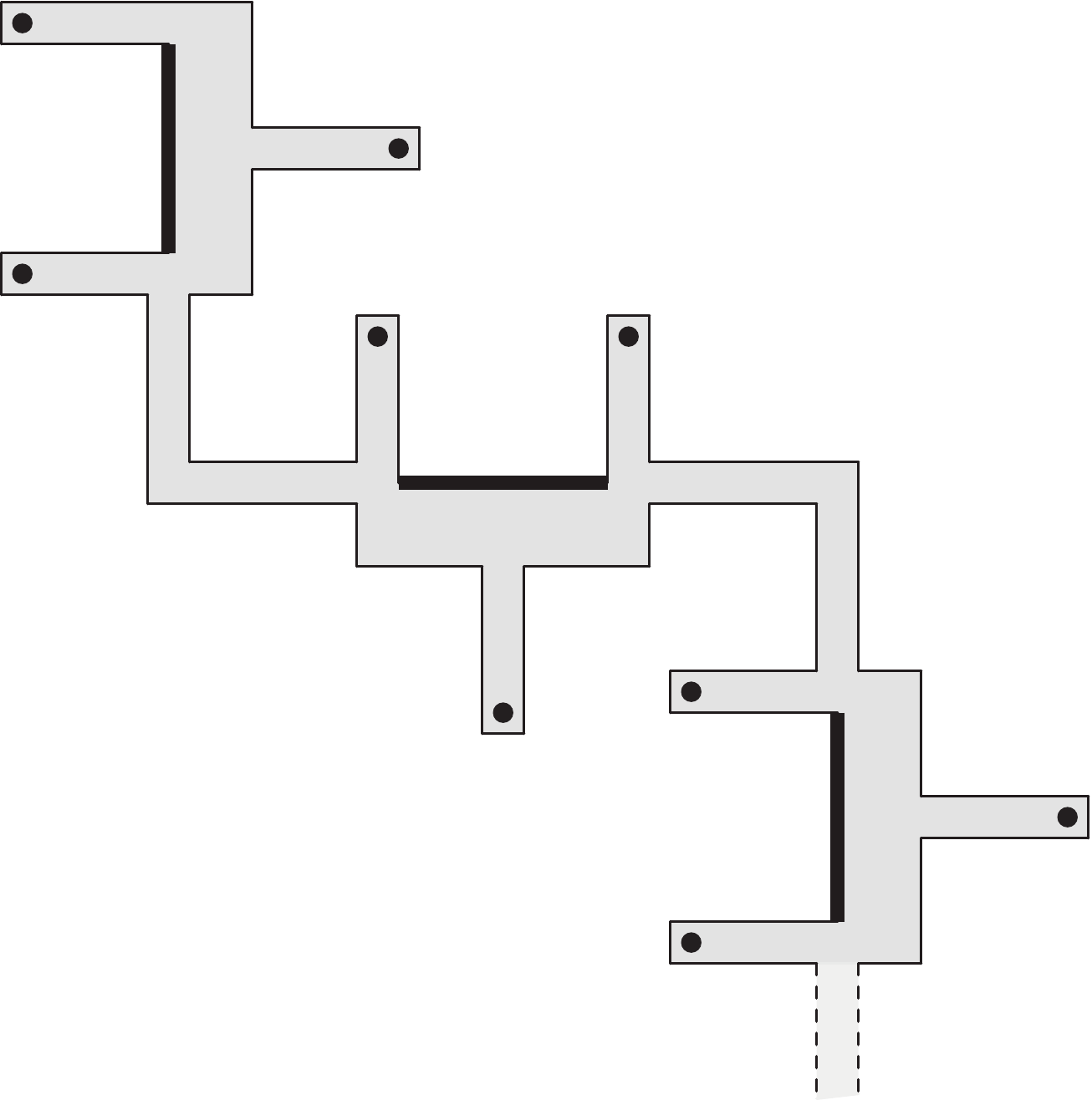}}
\caption{Matching ratio in Theorem~\ref{th:1}. Notice that the same example also solves the corresponding problem for 2-dimensional polygons.}
\label{fig:openclosed}
\end{figure}

To see that 3 is the best achievable ratio between the
number of open and closed edge guards, consider the
polygon in Figure~\ref{fig:openclosed} and extrude it to an orthogonal
prism. Each large dot in the figure represents a distinguished point located in the interior of the prism. The only (closed) edges that can see more than two selected points are the highlighted edges (located on the lower or upper base of the prism).
Picking those edges as guards yields the minimum guarding set.
On the other hand, the relative interior of any edge
can see at most one distinguished point. Therefore, at least as many
open edge guards as there are distinguished points are necessary.
\end{proof} 

Note that the above analysis does not hold in the case of closed polyhedra, since we can no longer argue that a single closed edge guard is locally dominated by three open edge guards.

\subsection*{Planar open edge guards}

The open guard model that we introduced was later studied, in the case of open edge guards in 2-dimensional polygons, by T\'oth, Toussaint and Winslow in~\cite{open1}, and subsequently by Cannon, Souvaine and Winslow in~\cite{open2}.

Concerning planar open edge guards, we present a result that will serve as a lemma for a theorem in Chapter~\ref{chapter9}. We want to guard a polygon $P$ with open edge guards and, for technical reasons (see the proof of Theorem~\ref{heur}), we also want to force the selection of a specific edge.

\begin{lemma}\label{edges}Any polygon with $r$ reflex vertices, $h$ holes and a distinguished edge $e$ is guardable by at most $r-h+1$ open edge guards, one of which lies on $e$.\end{lemma}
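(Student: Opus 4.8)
The plan is to induct on $r+h$, at each step resolving one reflex vertex with a single straight cut and invoking the observation that a convex piece carrying a boundary edge of $P$ is entirely seen by an open edge guard placed on that edge.

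First I would record the visibility fact that makes open edge guards behave well on convex pieces: if $R$ is a convex subregion of $P$ and a positive-length subsegment $s$ of $\partial R$ lies on an edge $f$ of $P$, then the open edge guard $\open{f}$ sees all of $R$. Indeed, for any $x\in R$ convexity forces the whole segment from $x$ to each point of $s$ to lie in $R\subseteq P$, so every point of $s$ sees $x$; since $s$ has positive length it meets the relative interior of $f$, and such an interior point witnesses that $\open{f}$ sees $x$. This is exactly the place where the open/closed distinction must be checked, and it is harmless precisely because $s$ is never a single point.

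For the base case $r=h=0$, the polygon $P$ is convex and a single open edge guard on the distinguished edge $e$ suffices, giving $1=r-h+1$. For the inductive step I would distinguish two cases. If $h\geqslant 1$, pick a hole; being a simple polygon it has a vertex that is convex with respect to the hole, hence reflex with respect to $P$, so a reflex vertex $v$ on it always exists. I would resolve $v$ by a single cut that splits its reflex angle into two convex angles and runs to the outer boundary, thereby connecting the hole to the outside \emph{without} splitting $P$ into two components. This produces a polygon with $h-1$ holes, at most $r-1$ reflex vertices, and still containing $e$; induction then yields a guarding set of size at most $(r-1)-(h-1)+1=r-h+1$ with one guard on $e$, and since the cut did not disconnect $P$ the same set guards $P$. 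If $h=0$ and $r\geqslant 1$, I would instead cut from a reflex vertex $v$ so as to split $P$ into two simple polygons $P_1,P_2$ with $r_1+r_2\leqslant r-1$, choosing the cut to avoid the interior of $e$ so that $e\subseteq\partial P_1$. Taking $e$ as the distinguished edge of $P_1$, and any original $P$-edge of $P_2$ as the distinguished edge of $P_2$ (one exists, since $P_2$ is bounded by a single chord and therefore retains at least one edge of $\partial P$, necessarily other than $e$), induction gives guarding sets of sizes at most $r_1+1$ and $r_2+1$; their union has size at most $r_1+r_2+2\leqslant r+1=r-h+1$ and contains exactly one guard on $e$. Throughout, every guard sits on an original edge of $P$ and covers its (eventually convex) piece by the visibility fact above, and visibility inside a piece implies visibility inside $P$.

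The hard part will be justifying the two cut constructions rigorously: that one can always resolve a chosen reflex vertex with a single segment staying inside $P$, creating no new reflex vertices and lowering the reflex count by at least one, while in the $h\geqslant 1$ case steering the far endpoint onto the outer boundary so that the cut \emph{merges} the hole rather than splitting the region. These are the standard moves underlying the classical bound of $r+1$ convex pieces for simple polygons, but they demand a general-position argument and some care to ensure that the distinguished edge $e$ is never consumed by a cut and always survives to host its single guard.
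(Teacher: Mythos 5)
Your strategy is essentially the paper's: resolve reflex vertices one at a time by straight cuts, watch the quantity $r-h+(\mbox{number of pieces})$, and finish with the observation that an open edge guard on an original edge of $P$ sees any convex piece retaining a positive-length subsegment of that edge (your opening visibility fact is exactly the right justification for why openness is harmless). The genuine gap is in your $h\geqslant 1$ branch. A resolving cut from a reflex vertex $v$ on a hole must terminate at the \emph{first} boundary point it meets, and you cannot in general steer it onto the outer boundary: the entire resolving cone at $v$ may be occluded by other holes, or by the \emph{same} hole $H$ --- for instance, if $H$ is a square ring with a slit, a convex corner of the slit has its whole resolving cone pointing across the enclosed courtyard and back onto $H$. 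Hitting a different hole is harmless (two boundary components merge, $h$ drops by one, $P$ stays connected), but a cut returning to the same hole \emph{disconnects} $P$ into two pieces whose hole counts sum to $h$, not $h-1$; this case fits neither of your two branches. It does still satisfy the bound, since $(r_1-h_1+1)+(r_2-h_2+1)\leqslant (r-1)-h+2=r-h+1$, but you must say so. The clean dichotomy, which is what the paper uses, is: every cut lowers $r$ by at least one and either joins two boundary components (so $h$ drops by one and $P$ stays connected) or splits $P$ into two pieces whose holes total $h$; in either case $r-h+k$ does not increase, so the final number of convex pieces is at most $r-h+1$.

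A secondary problem is your handling of the distinguished edge. Requiring the splitting cut to avoid the interior of $e$ is not always achievable (the resolving cone at $v$ can be occluded entirely by $e$), and your recursion also needs $e$ to survive intact as an edge of each intermediate polygon, which fails once any cut lands inside $e$. Both issues disappear if you drop the requirement: assign the guard to the full original open edge $\open{e}$ of $P$; by your own visibility fact it sees every convex piece of the final decomposition that keeps a positive-length subsegment of $e$ on its boundary, no matter how many times $e$ has been subdivided. This is precisely why the paper performs the whole decomposition first and only afterwards assigns guards to original edges of $P$, rather than threading a distinguished edge through the recursion.
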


\begin{proof}Let $P$ be a polygon, select any reflex vertex $v$ and draw the bisector of the corresponding internal angle, until it again hits the boundary of $P$. If the ray hits another vertex, slightly rotate it about $v$, so that it instead hits the interior of an edge. Two situations can occur: either $P$ gets partitioned in two parts, with $r-1$ total reflex vertices and $h$ total holes, or two connected components of the boundary are joined, so that $P$ loses both a hole and a reflex vertex.

\begin{figure}[h]
\centering
\subfigure[]{\label{fig:7a}\includegraphics[scale=0.35]{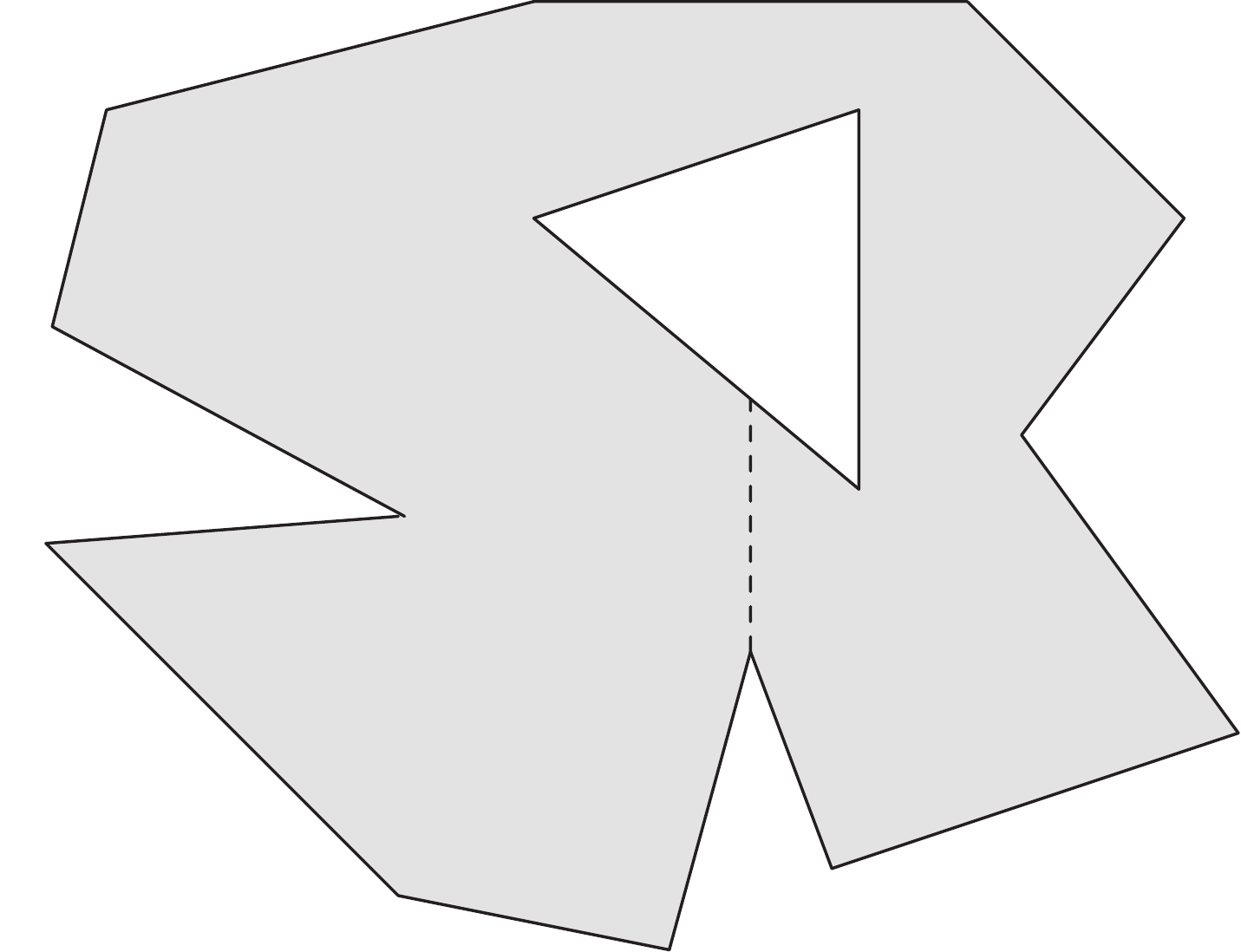}}\qquad \qquad
\subfigure[]{\label{fig:7b}\includegraphics[scale=0.35]{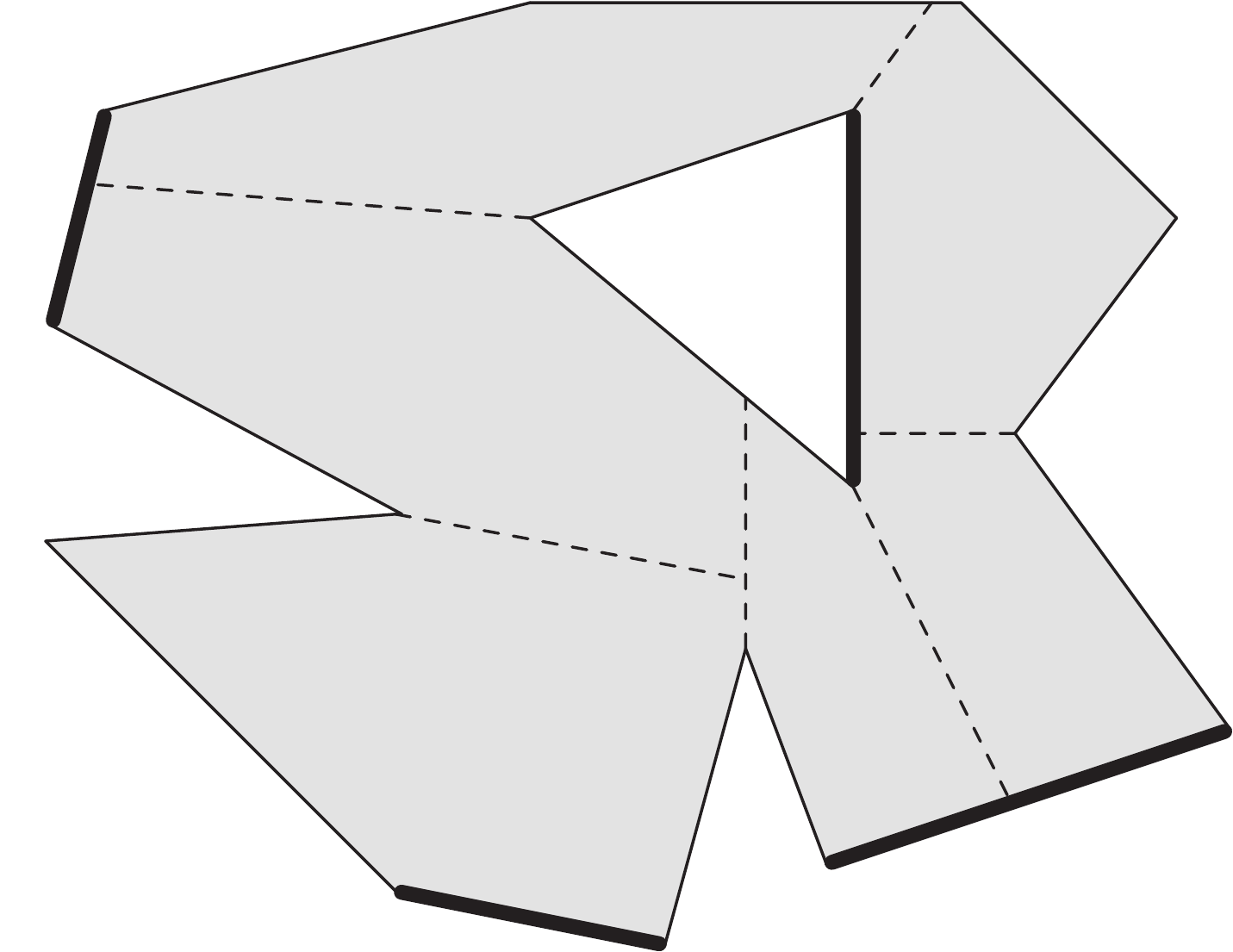}}
\put(-144,86){$e$}
\put(-336,86){$e$}
\caption{Example of the construction in Lemma~\ref{edges}. In \subref{fig:7a} the first step is shown, where the dashed line acts as a degenerate edge of the resulting polygon. In \subref{fig:7b} the final partition is shown, with the selected edges represented as thick lines.}
\label{fig:7}
\end{figure}

Repeat the process inductively on the resulting polygons, until no reflex vertex remains. Notice that polygonal boundaries may be \emph{degenerate} in the intermediate steps of this construction, meaning that a single segment should occasionally be regarded as two coincident segments (refer to Figure~\ref{fig:7a}). $P$ is now partitioned into convex pieces, which of course have no holes. Thus, during the process, the number of holes decreased $h$ times, while the number of pieces in the partition increased $r-h$ times, resulting in $r-h+1$ convex polygons. Additionally, each polygon has at least one edge lying on $P$'s boundary, and conversely every internal point of each edge of $P$ sees at least one complete region. Place a guard on the distinguished edge $e$, thus guarding at least one region of the partition. For each unguarded (or partially guarded) region, choose an edge of $P$ that completely sees it, and place a guard on it. As a result, $P$ is completely guarded and at most $r-h+1$ open edge guards have been placed.\end{proof}

\begin{observation}
The previous bound is asymptotically tight (in terms of $r$), because polygons with $r$ reflex vertices can be constructed that require $r$ open edge guards, for every $r$ (e.g., the shutter polygons in Figure~\ref{f1:art2}).
\end{observation}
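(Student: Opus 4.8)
The plan is to complement the upper bound of Lemma~\ref{edges} with a matching lower bound, thereby showing that no bound of the form $o(r)$ is possible and that $r-h+1$ is optimal up to an additive constant. Concretely, for each $r$ I would exhibit a simple polygon $S_r$ (so $h=0$, and Lemma~\ref{edges} gives the upper bound $r+1$) with exactly $r$ reflex vertices that cannot be guarded by fewer than $r$ open edge guards; since $r\leqslant r+1$, this establishes asymptotic tightness in $r$. I would take $S_r$ to be a shutter polygon of the type in Figure~\ref{f1:art2}: a row of $r$ thin, slightly tilted \qq{slats} carving out $r$ deep pockets, each pocket introduced by a single reflex vertex, so that one guard is forced per reflex vertex.

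The heart of the argument is a separation claim asserting that no open edge guard can see into two distinct pockets. To make this precise I would fix, for each pocket $i$, a witness point $p_i$ deep inside it, lying in the interior of $S_r$, and consider its visibility region $\mathcal V(p_i)$, which is an open set by Proposition~\ref{prop:1}. The shutter is designed so that $p_i$ is visible only through the narrow slit left by slat $i$, making $\mathcal V(p_i)$ a thin cone confined to pocket $i$; the depth and tilt of the slats are chosen so that these cones are pairwise disjoint and so that the only edges of $S_r$ meeting $\mathcal V(p_i)$ are the $O(1)$ walls bounding pocket $i$. Consequently the sets $E_i=\{\,e:\open{e}\cap\mathcal V(p_i)\neq\emptyset\,\}$ of open edges able to see $p_i$ are nonempty and pairwise disjoint. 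Any guarding set must contain an edge of $E_i$ for each $i$, since otherwise $p_i$ stays unseen, and disjointness of the $E_i$ then forces at least $r$ distinct open edge guards.

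I expect the main obstacle to be exactly this geometric separation: guaranteeing that a single edge, being one-dimensional rather than a point, cannot straddle the visibility cones of two different witnesses. The classical shutter property only rules this out for individual viewpoints, whereas for edge guards I must additionally ensure that no edge of $S_r$ has a subsegment entering two cones. The cleanest way to enforce this is to choose the slat depth and tilt so that each $\mathcal V(p_i)$ lies entirely within pocket $i$, i.e.\ so that $p_i$ is invisible from the main hall; this localizes $E_i$ to the walls of pocket $i$ and makes disjointness immediate. Passing to the open model only helps, since $\open{e}\subseteq e$ gives $\mathcal V(\open{e})\subseteq\mathcal V(e)$, so open edge guards are weaker and the lower bound is robust; moreover each $p_i$ is interior to $S_r$, so no witness is lost in the open polygon. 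Combining this lower bound of $r$ with the upper bound $r+1$ from Lemma~\ref{edges} yields the asserted asymptotic tightness in $r$.
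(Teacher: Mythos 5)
Your proposal is correct and matches the paper's (implicit) argument: the paper simply points to the shutter polygons of Figure~\ref{f1:art2}, and the standard justification is exactly the disjoint-witness argument you spell out, with one deep witness point per pocket whose visibility region is confined so that no single edge can see two witnesses. The extra care you take for the fact that guards are one-dimensional and open is the right refinement, but it does not change the route.
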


\section{Bounds on point guard numbers}

Recall from Chapter~\ref{chapter1} that any 2-dimensional polygon can be guarded by vertex guards, due to the existence of triangulations. In Chapter~\ref{chapter2} we showed that triangulations do not generalize to 3-dimensional polyhedra. Actually, there are polyhedra that are not guardable by vertex guards, such as the octoplex, shown in Figure~\ref{f2:octoplex}. Indeed, even assigning a guard to each vertex of the octoplex fails to guard some points around its center.

\subsection*{Orthogonal polyhedra}
The class of polyhedra illustrated in Figure~\ref{f3:seidel} somewhat generalizes the octoplex. These simply connected orthogonal polyhedra are called \emph{multiplexes} in~\cite{adventures} and \emph{Seidel polyhedra} in~\cite{do-dcg-11,art}.

\begin{figure}[h]
\centering
\subfigure[outer view]{\includegraphics[width=0.4\linewidth]{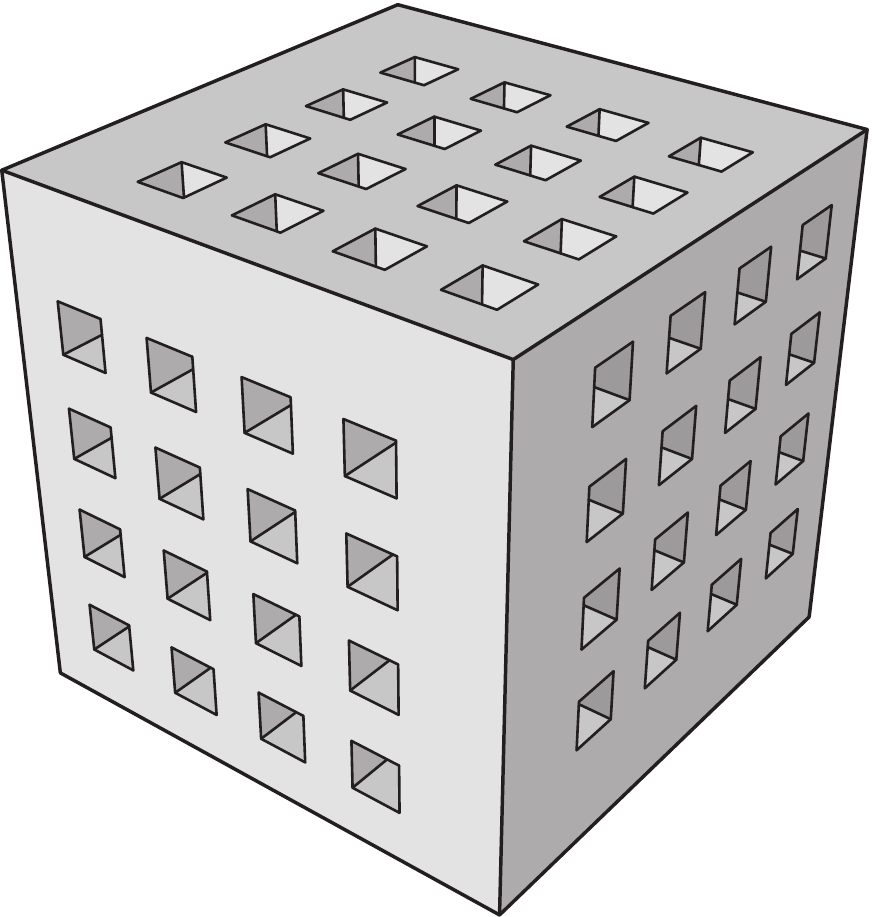}\label{f3:seidela}}\qquad \qquad
\subfigure[cross section]{\includegraphics[width=0.4\linewidth]{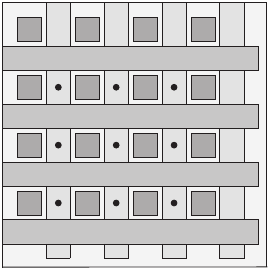}\label{f3:seidelb}}
\caption{Multiplex.}
\label{f3:seidel}
\end{figure}

To construct a multiplex, start from a cube and carve an array of deep cuboidal dents on three of
its faces. The dents do not intersect inside the cube;
instead, they subdivide its interior into smaller empty cubes, with narrow
cracks along every edge. Figure~\ref{f3:seidelb} shows a cross section of a multiplex, with a dot in the center of each small cube. Around each dot, there are points that are invisible to all vertices. Therefore, vertex guards are insufficient to guard multiplexes, and the number of necessary point guards is proportional to the number of small cubes. It easily follows that $\Omega(n^{3/2})$ point guards are necessary to guard multiplexes, where $n$ is the number of their vertices.

For orthogonal polyhedra, this bound is asymptotically tight, as proved by Paterson and Yao in~\cite{paterson}, via Binary Space Partitioning.

\begin{theorem}[Paterson--Yao]\label{t3:patyao}
$\Theta(n^{3/2})$ point guards are sufficient and occasionally necessary to guard an orthogonal polyhedron with $n$ vertices.\hfill\qed
\end{theorem}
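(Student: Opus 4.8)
The plan is to prove the two directions separately. The lower bound is already in hand: the multiplexes of Figure~\ref{f3:seidel} furnish, for infinitely many values of $n$, orthogonal polyhedra on $n$ vertices in which $\Omega(n^{3/2})$ point guards are forced, since each of the $\Theta(n^{3/2})$ small empty cubes contains a point invisible to every vertex, and no single guard can serve two adjacent cubes because of the narrow cracks separating them. So the entire work lies in the upper bound: showing that $O(n^{3/2})$ point guards always suffice.

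For the upper bound I would reduce guarding to a decomposition problem. If the closure of the polyhedron $\mathcal P$ is partitioned into $N$ convex cells, then placing one point guard anywhere in each cell guards all of $\mathcal P$: a convex set is seen in its entirety from any of its points, and since each cell is contained in $\mathcal P$, no face of $\mathcal P$ can occlude this visibility. Thus it suffices to exhibit a partition of $\mathcal P$ into $O(n^{3/2})$ convex pieces, and since $\mathcal P$ is orthogonal I would aim for a partition into axis-aligned boxes obtained by a Binary Space Partition.

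The construction proceeds as follows. By Euler's formula and Observation~\ref{o2:fve}, the boundary of $\mathcal P$ consists of $O(n)$ faces, each an orthogonal polygon; subdividing every face into axis-parallel rectangles yields a set $\mathcal R$ of $O(n)$ isothetic rectangles whose union is $\partial\mathcal P$. I then build a BSP of the bounding box of $\mathcal P$ by recursively cutting with axis-parallel planes, each supported by a rectangle of $\mathcal R$. Because every cutting plane carries a piece of the boundary, each leaf cell of the resulting tree is a box lying entirely inside or entirely outside $\mathcal P$; discarding the outside cells and placing one guard in each inside cell produces the desired guarding set, of size equal to the number of inside leaves.

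The heart of the argument, and the step I expect to be the real obstacle, is the Paterson--Yao bound asserting that the rectangles of $\mathcal R$ admit a BSP with only $O(n^{3/2})$ cells. The naive estimate is cubic, since cuts in three independent directions can each fragment the surviving rectangles. The improvement to $n^{3/2}$ comes from a balancing argument: one does not cut blindly, but alternates between auxiliary \emph{separating} planes, chosen $O(\sqrt n)$ at a time so as to split the surviving rectangles into groups of size $O(\sqrt n)$, and \emph{free} cuts along the rectangles themselves, which are charged against the rectangles they consume. Bounding the total number of rectangle fragments created by the separating planes, summed over all recursion levels, by $O(n^{3/2})$ is the delicate combinatorial core, and is precisely where the $\sqrt n$ factor, matching the multiplex lower bound, is born. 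Once this counting is established, combining it with the convex-cell reduction above yields the stated $\Theta(n^{3/2})$.
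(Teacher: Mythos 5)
Your proposal follows essentially the same route as the paper: the lower bound is the multiplex construction of Figure~\ref{f3:seidel}, and the upper bound reduces guarding to a binary space partition of the $O(n)$ boundary rectangles with one guard per interior cell, which is precisely the Paterson--Yao argument the paper cites. The one step you leave sketched---that $n$ isothetic rectangles in $\R^3$ admit a BSP of size $O(n^{3/2})$---is exactly the content of the cited reference, so nothing essential is missing relative to the paper's own treatment.
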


\subsection*{General polyhedra}

For general polyhedra, a quadratic upper bound can be obtained with little effort:

\begin{observation}
For general polyhedra with $r$ reflex edges, an upper bound of $$\frac {r^2} 2 + \frac r 2 +1$$ guards trivially follows from Chazelle's ``revised naive decomposition'' (refer to Chapter~\ref{chapter2} and~\cite{polypart1}). Furthermore, such guards can be chosen on reflex edges.
\end{observation}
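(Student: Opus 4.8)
The plan is to read the bound off directly from the convex decomposition recalled in Chapter~\ref{chapter2}, since the observation asserts that it follows trivially. By Chazelle's revised naive decomposition~\cite{polypart1}, any polyhedron $\mathcal P$ with $r$ reflex edges can be partitioned, with the addition of Steiner points, into at most $\frac{r^2}{2}+\frac r2+1$ convex pieces with mutually disjoint interiors whose union is $\mathcal P$. The only extra ingredient I need is the elementary fact that a single point guard suffices for a convex piece: if $C$ is convex and $g\in C$, then for every $x\in C$ the segment $gx$ lies in $C\subseteq\mathcal P$, so $g$ sees all of $C$. Placing one guard in each piece therefore yields a guarding set of $\mathcal P$ of size at most the number of pieces, namely $\frac{r^2}{2}+\frac r2+1$, which is the stated bound.

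For the second assertion I would argue, exactly as in the planar reflex-vertex bound of Chapter~\ref{chapter1}, that the guards may be pushed onto reflex edges. First I would set up an invariant along the recursion that builds the decomposition: every piece produced carries, on the closure of its boundary, a portion of some reflex edge of $\mathcal P$. This holds initially whenever $r\geqslant 1$. Each step of the revised naive decomposition resolves a reflex edge $E$ by cutting along a plane adjacent to (hence containing) $E$; after the cut, $E$ lies on that plane, which is shared by the two resulting pieces, so both inherit a portion of $E$ on their boundary. Consequently, at the end of the recursion every convex piece $C$ has a portion of a reflex edge of $\mathcal P$ on its boundary, and since $C$ is convex any point of that portion sees all of $C$; placing the guard there proves the claim.

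The hard part will be verifying the persistence of this invariant under later cuts, together with termination at the stated piece count. When a piece that already carries a portion of a reflex edge $E$ is subsequently split by a plane through another reflex edge $E'$, that portion of $E$ is inherited by at least one child while the complementary child acquires a portion of $E'$; either way both children keep a reflex-edge portion on their boundary, so the invariant survives. The requirement, stressed in Chapter~\ref{chapter2}, that all parts of one reflex edge be resolved by coplanar cuts is precisely what keeps the piece count quadratic rather than exponential and guarantees the $\frac{r^2}{2}+\frac r2+1$ bound. Finally, I would dispose of the degenerate case $r=0$ separately: $\mathcal P$ is then convex and a single, arbitrarily placed guard suffices, matching the formula with no reflex edge required.
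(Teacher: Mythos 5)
Your proposal is correct and follows essentially the route the paper intends: the observation is stated without proof, but it rests exactly on the piece count of Chazelle's revised naive decomposition recalled in Chapter~\ref{chapter2}, one point guard per convex piece, and the invariant that every piece retains a portion of a reflex edge on its boundary --- the same invariant the paper itself invokes (by ``a straightforward induction on the construction steps'') for the partition $\{\mathcal D_j\}$ in the proof of Theorem~\ref{speed1}. Your handling of the inheritance of reflex-edge portions under later coplanar cuts and of the convex case $r=0$ matches what the paper leaves implicit.
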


The general impression is that point guards are too ``weak'' to guard polyhedra, in that they do not allow generalizations of any of the results on 2-dimensional polygons mentioned in Chapter~\ref{chapter1}. In particular, no linear upper bound on the number of point guards exists, due to Theorem~\ref{t3:patyao}.

\section{Bounds on edge guard numbers}

\subsection*{Motivations}
We already discussed the analogy between guarding problems and illumination problems when introducing open edge guards. Recall from Chapter~\ref{chapter1} that another application of edge guards, and segment guards in general, is that of modeling patorling point guards moving back and forth on a line.

The question whether and how edge guards in polyhedra are truly more ``powerful'' than point guards immediately arises. In the next sections, we will show that the situation with edge guards is much more appealing, in terms of the number of guards required to guard a polyhedron.

Here, we merely point out a basic fact that holds even for edge guards in simple orthogonal polygons, which is another evidence of the intrinsic superiority of edge guards with respect to point guards.

\begin{figure}[h]
\centering{\includegraphics[width=.35\linewidth]{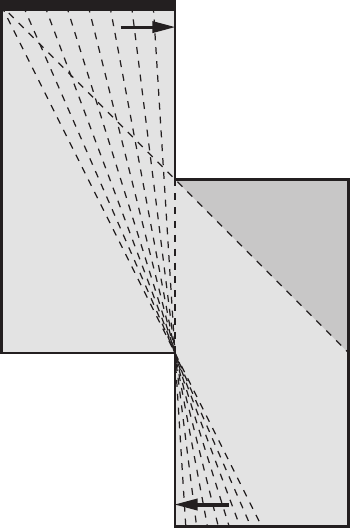}}
\caption{The upper edge guard cannot be replaced by a finite number of its points.}
\label{f3:edgeq}
\end{figure}

\begin{observation}\label{o3:edge}
There are simple orthogonal polygons in which an edge guard cannot be replaced by a finite number of its points. Figure~\ref{f3:edgeq} illustrates an example: if a subset $G$ of the upper edge $\ell$ is such that $\mathcal V(G)=\mathcal V(\ell)$, then the right endpoint of $\ell$ must be a limit point of $G$.
\end{observation}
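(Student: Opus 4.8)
The plan is to read off from Figure~\ref{f3:edgeq} a one-parameter family of target points whose viewing windows on $\ell$ slide toward the right endpoint $p$ while excluding $p$ itself, and then finish with a transversal (piercing-set) argument.

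First I would set up coordinates so that $\ell$ is horizontal with right endpoint $p=(p_x,p_y)$, and isolate the reflex corner $C$ of the polygon that occludes the lower pocket reachable through the narrow aperture beneath $\ell$. For a point $q$ in the pocket, the set
$$I(q)=\{x\in\ell : x \text{ sees } q\}$$
is an interval, bounded on the left by the line of sight grazing $C$ and bounded on the right by the edge incident to $p$ (or a second reflex corner $C'$), which blocks $p$'s own view into the deepest part of the pocket. I would then single out a sequence of targets $q_n$ tending to the limiting corner $q^{*}$ of the pocket, chosen so that $q^{*}$ lies on the ray from $p$ through $C$.

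The two geometric facts I need are: (i) $p$ does not see $q_n$, so $\sup I(q_n)<p_x$; and (ii) as $q_n\to q^{*}$, the left endpoint of $I(q_n)$, determined by the grazing-$C$ sightline, tends to $p_x$. Since each $I(q_n)$ lies strictly below $p_x$ while its left end tends to $p_x$, the whole interval is squeezed into $(p_x-\delta_n,\,p_x)$ with $\delta_n\to 0$; in particular $I(q_n)\neq\varnothing$, so $q_n\in\mathcal V(\ell)$. Now suppose $G\subseteq\ell$ satisfies $\mathcal V(G)=\mathcal V(\ell)$. Since each $q_n\in\mathcal V(\ell)$, the set $G$ must meet every $I(q_n)$. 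Were $G$ finite it would have a maximum element $t_{\max}<p_x$; choosing $n$ with $p_x-\delta_n>t_{\max}$ gives $I(q_n)\subseteq(t_{\max},p_x)$, disjoint from $G$, a contradiction. Hence $G$ contains points of $\ell$ arbitrarily close to $p$, so $p$ is a limit point of $G$ (and $G$ is infinite), which is exactly the claim.

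The main obstacle is the construction itself: I must exhibit an honest simple orthogonal polygon in which a single aperture realizes both (i) and (ii) simultaneously — that $p$'s sightline into the pocket is exactly tangent (grazing) while interior points of $\ell$ reach slightly deeper, and that the grazing-$C$ viewpoint genuinely sweeps all the way up to $p_x$ as the target approaches $q^{*}$. This forces the alignment of $q^{*}$ with $p$ and $C$, and requires verifying that no other edge of the polygon interferes with the relevant lines of sight; once this alignment is fixed, the monotone convergence $I(q_n)\to\{p\}$ and the emptiness of $I(q_n)\cap\{p\}$ are routine to check.
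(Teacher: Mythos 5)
Your argument is correct and is precisely the content the paper intends Figure~\ref{f3:edgeq} to convey: a family of pocket points whose visibility windows on $\ell$ shrink toward the right endpoint while excluding it, which any guarding subset $G$ must pierce. The paper states this only as an observation with no written proof, so your sliding-interval piercing formalization matches (and fills in) the intended approach; the only inessential detail is that $I(q)$ need not literally be an interval, but your argument only uses that it is nonempty and contained in $(p_x-\delta_n,\,p_x)$.
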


Of course, Observation~\ref{o3:edge} holds for open and closed edge guards, and for open and closed polygons and polyhedra alike.

\subsection*{Bounds in terms of $e$}

\paragraph{Solvability.}
First of all, the \ART with edge guards is always ``solvable'', in that assigning guards to every edge is sufficient to guard any polyhedron $\mathcal P$. Indeed, given a point $x\in \mathcal P$, a cross section of $\mathcal P$ through $x$ is a polygon, whose vertices lie on $\mathcal P$'s edges. Because a polygon is guarded by its vertex set, $x$ is guarded. We may even select the cross section so that it contains no vertex of $\mathcal P$ (other than $x$ itself, if it coincides with a vertex).

\begin{observation}
Any polyhedron is guardable by its open edge set.
\end{observation}

\paragraph*{Urrutia's bounds.}

More bounds on edge guard numbers were given by Urrutia in his survey~\cite[Section~10]{urrutia2000}.

\begin{observation}[Urrutia]\label{o3:urr1}
There are polyhedra with $e$ edges that require at least
$$\left\lfloor \frac e 6\right\rfloor-1$$
(closed) edge guards to be guarded, for arbitrarily large $e$. Figure~\ref{fig3:lower1} shows an example.
\end{observation}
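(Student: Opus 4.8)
The plan is to exhibit an explicit infinite family of polyhedra, parameterized by an integer $k$, as depicted in Figure~\ref{fig3:lower1}, in which $e$ grows linearly in $k$ and at least one edge guard must be dedicated to each of $k$ independent ``hidden'' features. First I would describe the family as a common base polyhedron (say a large cuboid or slab) to which $k$ congruent \emph{gadgets} are attached, each gadget being a narrow spike or a deep pyramidal pocket carved into the base. The key design requirement is that each gadget $i$ contains a distinguished \emph{witness point} $p_i$ in $\R^3$ that is buried deep enough that its visibility region $\mathcal V(p_i)$ meets only the few edges that bound gadget $i$ itself.

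The heart of the argument is a visibility-separation claim: for every edge $\ell$ of the polyhedron, $\ell$ can see at most one of the witness points $p_1,\dots,p_k$. Intuitively, any segment from $p_i$ reaching outside gadget $i$ toward a point of $p_j$'s neighborhood would be forced to exit the interior through the base's boundary and re-enter, so it cannot lie entirely inside the polyhedron; hence no single edge sees two distinct witnesses. I would establish this by analyzing the truncated pyramids of occlusion induced by the gadget walls (in the spirit of the occluded regions $\mathcal O(x,f)$ used in the proof of Proposition~\ref{prop:1}), showing that the only unoccluded lines of sight to $p_i$ emanate from within gadget $i$. Consequently the $k$ witness points require $k$ pairwise-distinct edge guards.

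It then remains to count edges. I would verify that each gadget contributes exactly six new edges to the polyhedron while the shared base contributes only a constant number, so that $e = 6k + O(1)$, and in particular $k \geqslant \lfloor e/6 \rfloor - 1$ once the family is calibrated. Combining this with the separation claim yields that any guarding set of closed edge guards has size at least $k \geqslant \lfloor e/6\rfloor - 1$, which is the desired bound; letting $k \to \infty$ gives examples for arbitrarily large $e$.

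The main obstacle I anticipate is the visibility-separation step: rigorously ruling out that some cleverly placed edge (possibly on the base, lying far from every gadget) manages to peer into two gadgets simultaneously. This demands a careful occlusion analysis of the boundary geometry, choosing the gadget dimensions (aspect ratio and depth) so that every witness point is genuinely invisible from all edges outside its own gadget. The secondary subtlety is purely bookkeeping: arranging the gadgets and the base so that the edge total lands exactly on the form $6k + O(1)$, producing the precise $\lfloor e/6\rfloor - 1$ rather than a looser constant.
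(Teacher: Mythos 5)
Your proposal is correct and matches the paper's (and Urrutia's) argument, which is exactly this: a base polyhedron with $k$ long, thin gadgets, each contributing six new edges (giving $e=6k+6$) and each containing a witness point whose visibility region meets no edge outside its own gadget, so that $k=\lfloor e/6\rfloor-1$ pairwise-distinct edge guards are forced. The paper states this only as an observation with a figure, so your explicit visibility-separation lemma and edge count are precisely the details one would supply to make it rigorous.
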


\begin{figure}[h]
\centering
\includegraphics[width=0.55\linewidth]{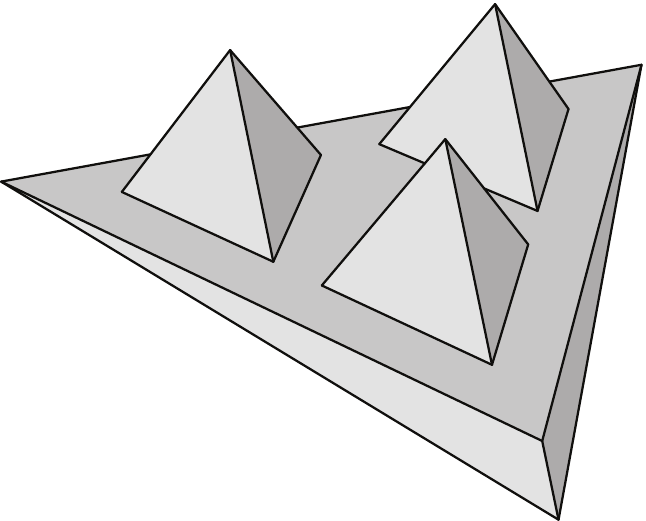}
\caption{Polyhedron with $6k+6$ edges that requires $k$ edge guards.}
\label{fig3:lower1}
\end{figure}

Urrutia conjectures this lower bound to be tight.

\begin{conjecture}[Urrutia]\label{con:urrutia}
Any genus-zero polyhedron with $e$ edges is guardable by at most
$$\frac e 6+O(1)$$
(closed) edge guards.
\end{conjecture}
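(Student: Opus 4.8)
The plan is to emulate Fisk's planar argument (Chapter~\ref{chapter1}), recast for the three-dimensional, edge-guard setting. In the plane one triangulates, $3$-colours the vertices, and keeps the least-used colour class, so that every triangle---and hence the whole polygon---is covered by some guard while only $\lfloor n/3\rfloor$ guards are spent. The analogous goal here is to decompose $\mathcal P$ into at most $e/6+O(1)$ regions, each \emph{entirely visible from a single edge of $\mathcal P$}, and then place one guard on the witnessing edge of each region. The target ratio of six edges per guard is not arbitrary: it is dictated by Urrutia's lower bound (Observation~\ref{o3:urr1}), whose extremal polyhedra spend exactly one guard per six-edge protrusion, so any successful argument must charge each guard to an essentially disjoint block of roughly six edges. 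Euler's formula for genus zero, $f+v=e+h+2$, would supply the bookkeeping that converts such a charging scheme into the claimed bound.

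To build the edge-visible regions I would start from a tetrahedralization. Since general polyhedra need not be tetrahedralizable (the Sch\"onhardt polyhedron and the octoplex of Figure~\ref{f2:octoplex}), I would invoke Chazelle--Palios~\cite{polypart2} to obtain a tetrahedralization using Steiner points, each tetrahedron being convex and therefore visible in its entirety from \emph{any} one of its edges. The combinatorial heart would then be to group the tetrahedra into bundles, each bundle admitting a common \emph{original} edge of $\mathcal P$ that sees all of it, and to select a sparse family of such bundles---an analogue of Fisk's minority colour class---whose witnessing edges number at most $e/6+O(1)$ and whose union is all of $\mathcal P$.

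The hard part, and the reason the statement remains a conjecture, is that none of the three pillars of Fisk's proof survives intact in dimension three. A tetrahedralization may require Steiner points, whose edges cannot host guards, so the convenient ``any edge of the simplex guards it'' fails precisely on the added edges; the dual of a tetrahedralization is not a tree, so there is no inductive colouring carrying the fairness that forces a small minority class; and three-dimensional visibility is global rather than captured by simplex incidence, so a region visible from a fixed edge need not be a union of tetrahedra at all. Merely routing through a convex decomposition is hopeless, since Chazelle's polyhedra (Figure~\ref{f2:chaz}) force $\Omega(r^2)$ convex pieces, and $r$ may be linear in $e$, so a convex decomposition alone produces far more regions than the $e/6$ budget allows. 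My realistic expectation is therefore that the orthogonal structural machinery developed in Chapter~\ref{chapter5}---where the rigidity of axis-parallel faces tames visibility and yields the $\lfloor 11e/72\rfloor$ bound---is the natural stepping stone, and that the general genus-zero case will demand a genuinely new way of localising three-dimensional edge visibility before the factor $1/6$ can be matched.
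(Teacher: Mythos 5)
This statement is an open conjecture in the paper (attributed to Urrutia), not a theorem: the paper offers no proof of it, only the lower-bound construction of Observation~\ref{o3:urr1} and partial progress for restricted classes (the $\lfloor 11e/72\rfloor$ bound for orthogonal polyhedra in Chapter~\ref{chapter5}, the $\lfloor(e+r)/6\rfloor$ bound for 4-oriented polyhedra in Chapter~\ref{chapter6}, and the $\lfloor 27e/32\rfloor$ bound of Cano--T\'oth--Urrutia for general polyhedra). Your proposal, to its credit, does not pretend otherwise: you sketch the natural Fisk-style programme and then correctly identify the three obstructions that kill it (Steiner edges cannot host guards, the dual of a tetrahedralization is not a tree so no $3$-colouring argument forces a small colour class, and edge visibility in $\R^3$ is not localised to incident simplices), plus the fatal quantitative point that Chazelle's polyhedra force $\Omega(r^2)$ convex pieces while the budget is $e/6$. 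So there is no disagreement with the paper to report --- but there is also no proof. What you have written is a well-informed explanation of why the conjecture is hard, which matches the paper's own assessment; it should not be presented as a proof attempt, since the concluding paragraph explicitly concedes that every step of the plan fails and that a genuinely new idea is needed. If you want to contribute something provable here, the realistic targets are the restricted classes the paper actually handles, or an improvement of the $\lfloor 27e/32\rfloor$ constant for general polyhedra.
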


For orthogonal polyhedra, the best known lower bound halves:

\begin{observation}[Urrutia]\label{obs3:urrutia}
There are orthogonal polyhedra with $e$ edges that require at least
$$\left\lfloor \frac e {12}\right\rfloor-1$$
(closed) edge guards to be guarded, for arbitrarily large $e$. Figure~\ref{fig3:lower2} shows an example.
\end{observation}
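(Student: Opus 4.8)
The statement is a lower-bound construction, so the plan is to exhibit, for each $k$, a single orthogonal polyhedron $\mathcal P_k$ with exactly $12k+12$ edges that cannot be guarded by fewer than $k$ closed edge guards; substituting $e=12k+12$ then gives $k=\lfloor e/12\rfloor-1$, which is the claimed bound. The recipe I would use mirrors the comb-type arguments of Chapter~\ref{chapter1} and the general construction of Observation~\ref{o3:urr1}, adapted to the orthogonal setting via the ``dent'' idea already seen in the multiplexes of Figure~\ref{f3:seidel}. Starting from a cuboid, I would carve $k$ deep, narrow, axis-parallel rectangular wells into its top face, pairwise disjoint and well separated, so that the top face becomes a single face pierced by $k$ rectangular holes. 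Each well contributes $4$ edges on its top opening, $4$ vertical edges at its corners, and $4$ edges around its bottom rectangle, i.e.\ exactly $12$ new edges; together with the $12$ edges of the enclosing cuboid this yields $e=12k+12$. This accounting also explains the factor-of-two loss with respect to Observation~\ref{o3:urr1}: an orthogonal (box-shaped) pocket is forced to spend twice as many edges as its non-orthogonal counterpart.

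To force $k$ guards I would place one distinguished point $p_i$ at the center of the bottom of the $i$-th well and establish two visibility facts. First, the only edges of $\mathcal P_k$ that see $p_i$ are (a subset of) the twelve edges bounding the $i$-th well: any other edge lies outside the vertical shaft above $p_i$, so every segment joining it to $p_i$ must cross a vertical wall and is therefore occluded. Second, no single edge sees two distinct distinguished points, since the twelve edges of well $i$ are confined to the shaft of well $i$, and the shafts are pairwise disjoint vertical columns. Granting these facts, any guarding set must use, for each $i$, at least one edge from the boundary of well $i$; as these boundary-edge sets are disjoint across $i$, at least $k$ edge guards are necessary, and the lower bound follows.

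The edge count and the existence of a guarding edge for each $p_i$ (its own bottom edges) are routine; the substantive step — and the one I expect to require the most care — is the occlusion claim, that \emph{foreign} edges genuinely cannot see $p_i$ and that a well's own edges cannot reach into a neighboring well. I would make this quantitative by fixing each well's cross-section to be a small square and choosing its depth large relative both to the square's side and to the inter-well spacing, so that the cone of directions from $p_i$ escaping the shaft without meeting a wall is pinched into a neighborhood of the vertical, which meets only the opening and interior of the \emph{same} well. The orthogonality of the construction actually eases this estimate, since all occluding walls are axis-parallel and the shafts are convex columns. Finally, the whole argument uses only closed edges on a closed polyhedron, matching the statement, and the accompanying Figure~\ref{fig3:lower2} would depict $\mathcal P_k$ together with the hidden points $p_i$.
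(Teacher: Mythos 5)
Your overall strategy --- $k$ disjoint pockets of $12$ edges each, a private witness point per pocket, and a pigeonhole count --- is exactly the right one, and it is what the paper's Figure~\ref{fig3:lower2} implements. The gap is in the construction itself: you carve the pockets \emph{out of} the cuboid instead of attaching them \emph{onto} it, and this inverts the visibility analysis. A well carved into the top face is a void, i.e., it is \emph{not} part of the polyhedron; the point $p_i$ at the center of its floor is a boundary point of the solid, reachable through the material \emph{below} it, not only through the shaft above it. Concretely, if the wells have common depth $d$ and the cuboid has height $H$, any segment from $p_i$ to a point at height at most $H-d$ stays at height at most $H-d$ throughout, and therefore never enters the (open) void of any well; hence every point of every bottom edge of the enclosing cuboid sees every $p_i$, and a single closed edge guard on one bottom edge covers all $k$ of your witness points at once. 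Your occlusion claim (``every segment joining a foreign edge to $p_i$ must cross a vertical wall'') treats the shaft as the free space and the material as the obstacle, which is backwards for the Art Gallery visibility relation.

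The fix is the one the paper uses (the figure is literally called ``chimneys''): attach $k$ tall, thin, solid cuboids to the top face of the base cuboid. Each chimney still contributes exactly $12$ edges ($4$ reflex edges where it meets the top face, $4$ vertical convex edges, $4$ convex edges at its summit), so $e=12k+12$ as you computed; but now the witness point $p_i$ sits in the \emph{interior} of the polyhedron near the top of chimney $i$, and every visibility segment leaving chimney $i$ genuinely must descend through the narrow shaft, reaching only a small edge-free region of the base cuboid far from its perimeter edges. With that single inversion, your edge count, your pigeonhole argument, and your ``factor of two versus Observation~\ref{o3:urr1}'' remark all go through essentially verbatim.
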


\begin{figure}[h]
\centering{\includegraphics[width=.65\linewidth]{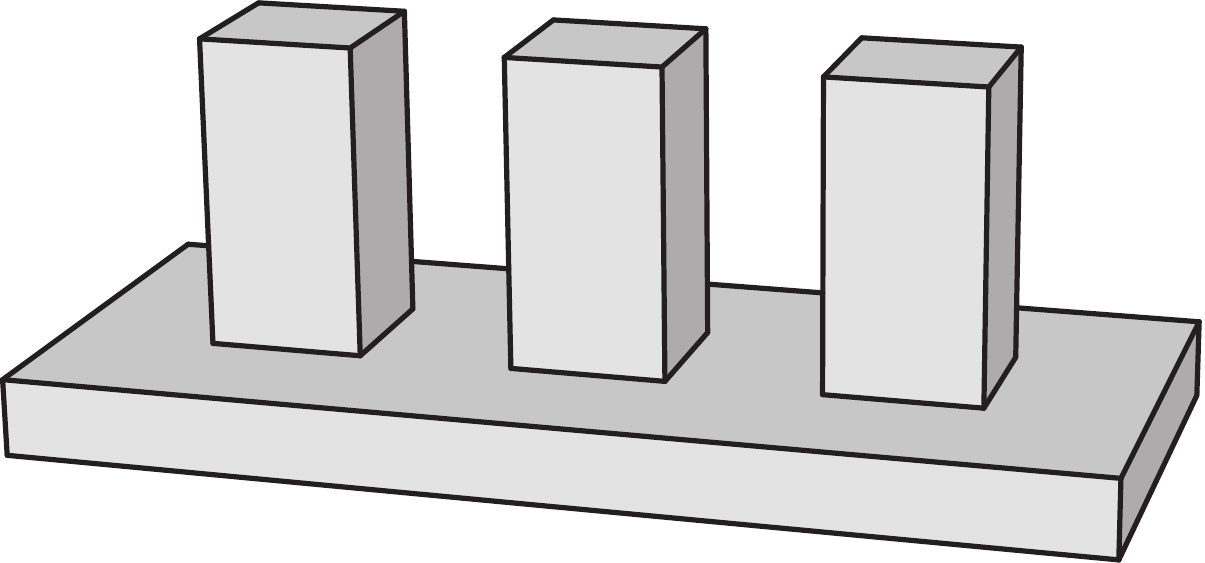}}
\caption{Orthogonal polyhedron with $12k+12$ edges that requires $k$ edge guards.}
\label{fig3:lower2}
\end{figure}

Once again, the lower bound is conjectured to be tight.

\begin{conjecture}[Urrutia]\label{con:urrutia}
Any orthogonal polyhedron with $e$ edges is guardable by at most
$$\frac e {12}+O(1)$$
(closed) edge guards.
\end{conjecture}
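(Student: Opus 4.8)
The plan is to attack the conjecture by reducing it to the case of \emph{mutually parallel} edge guards and then exploiting the three-fold symmetry of the axis directions. Because orthogonal polyhedra need not be tetrahedralizable --- the octoplex of Figure~\ref{f2:octoplex} is a counterexample --- the planar Fisk recipe of triangulating and $3$-colouring has no direct analogue, so instead of a global decomposition I would slice the polyhedron into horizontal slabs and guard each slab by vertical edges.

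First I would establish the reduction. Fix the vertical ($z$) direction and take any interior point $x$. The horizontal plane through $x$ cuts the polyhedron in an orthogonal polygon $Q_x$ whose corners lie on $z$-parallel edges, exactly as in the \qq{Solvability} cross-section argument; guarding $Q_x$ from a subset of its corners therefore amounts to orthogonally guarding $x$ by vertical edges, the connecting segment being horizontal and hence orthogonal to the edge. This shows that the $z$-parallel edges alone orthogonally guard the entire polyhedron, and symmetrically for each of the three axis directions. Now partition the polyhedron by the critical heights (the $z$-coordinates of horizontal edges and vertices) into finitely many slabs; inside each slab the cross-section $Q$ is combinatorially constant and each of its corners sweeps out a sub-segment of a single vertical edge. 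By the planar orthogonal Art Gallery bound, each such $Q$ needs only $\lfloor |Q|/4\rfloor$ of its corners as vertex guards.

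The counting step is what makes the target constant appear. The goal would be to choose one global set $S$ of vertical edges, with $|S|\leqslant \tfrac14 e_z$, that restricts to a valid guard set in \emph{every} slab simultaneously, where $e_z$ is the number of $z$-parallel edges. Granting this for each direction, and using $e = e_x+e_y+e_z$, the sparsest direction satisfies $\min_d e_d \leqslant e/3$, so guards from that direction alone number at most $\tfrac14\cdot\tfrac{e}{3}=\tfrac{e}{12}$ --- matching Urrutia's lower bound of $\lfloor e/12\rfloor-1$ up to an additive constant, and consistent with the $12k+12$-edge construction of Figure~\ref{fig3:lower2}, whose sparsest direction carries about $4k$ edges guarded by $k$ of them.

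The hard part is precisely the simultaneous selection: a quarter-size guard set chosen for one slab need not guard the adjacent one, since crossing a critical height makes corners of the cross-section appear, vanish, or shift, and the way L-shaped partitions of consecutive slices interlock seems genuinely delicate. A further complication is that for polyhedra of positive genus the cross-sections acquire holes, and vertex-guarding orthogonal polygons with holes is only known to cost $\lfloor(n+2h)/4\rfloor$ (Shermer's tight $\lfloor(n+h)/4\rfloor$ being itself conjectural). These two effects are, I expect, exactly what forces the weaker constant in the thesis, where only $\lfloor 11e/72\rfloor$ mutually parallel guards are obtained and $11/72>1/12$. Closing the gap would require producing a single \emph{coherent} family of vertical cuts that induces compatible L-shaped decompositions in all slabs at once --- reducing the stack of slice-polygons to one three-dimensional partition --- together with a charging argument that controls the $O(1)$ boundary heights where slabs meet. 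Building such a globally consistent decomposition is the principal obstacle, and its apparent intractability is what keeps the statement a conjecture rather than a theorem.
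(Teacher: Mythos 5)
The statement you were asked to prove is labelled a \emph{conjecture} in the thesis, and the thesis offers no proof of it --- only the lower-bound construction of Figure~\ref{fig3:lower2} (requiring $\lfloor e/12\rfloor-1$ guards) and the partial upper bounds $\lfloor e/6\rfloor$ (Urrutia) and $\lfloor 11e/72\rfloor$ (Corollary~\ref{cor:1}). Your proposal is likewise not a proof, and to your credit you say so explicitly: the step you call ``the simultaneous selection'' is exactly the missing ingredient, and everything before it is a correct but conditional reduction. So the honest verdict is that there is a genuine gap, but it is the gap that keeps the statement open, and you have located it accurately rather than papered over it.

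To make the obstruction concrete: choosing $\lfloor\lvert Q\rvert/4\rfloor$ corners independently in each slab does not bound the union by $e_z/4$, because a single vertical edge meets many slabs and the sum of $\lvert Q\rvert$ over all slabs can be arbitrarily large compared with $e_z$; without a globally coherent L-shaped decomposition you cannot charge the per-slab guards to distinct vertical edges. The thesis's actual technique (Theorem~\ref{th:3}) sidesteps consistency entirely by using a selection rule that is \emph{uniform across all slices} --- the ``grow a rectangle leftward until it hits a vertex'' argument --- but the price is that every one of the four candidate guard sets must contain two of the four reflex configurations, so the count becomes $(e_x+r_x)/4\leqslant(e+r)/12$ rather than $e_x/4\leqslant e/12$; feeding in $r\leqslant 5e/6-2g-12$ from Theorem~\ref{th:2} then yields the $11e/72$ of Corollary~\ref{cor:1}, precisely as you observe. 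One small correction: the inapproximability of the planar bound for cross-sections with holes is less of an obstacle than you suggest, since Hoffmann's $\lfloor n/4\rfloor$ bound for orthogonal polygons with holes holds for point guards not restricted to vertices, and the genus contributes only an additive term via Theorem~\ref{t2:chaz}; the real difficulty is, as you say, the inter-slab coherence. Your proposal is a fair road map for why the conjecture is plausible and where an eventual proof must do new work, but it does not establish the statement.
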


Urrutia also stated the following theorem, without proof. We will improve it in two directions, in Chapters~\ref{chapter4} and~\ref{chapter5}.

\begin{theorem}[Urrutia]\label{t:urrutia}
Any orthogonal polyhedron with $e$ edges is guardable by at most
$$\left\lfloor \frac e 6\right\rfloor$$
(closed) edge guards.\hfill\qed
\end{theorem}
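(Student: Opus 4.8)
The plan is to reduce the three-dimensional problem to the planar orthogonal \ART by slicing the polyhedron into horizontal slabs. Let $h_0 < h_1 < \cdots < h_m$ be the distinct $z$-coordinates of the vertices. Between two consecutive critical heights the horizontal cross-section of the polyhedron is a fixed orthogonal polygon $Q_i$ (possibly with holes), so the slab $S_i$ is the prism $Q_i\times[h_{i-1},h_i]$, and the polyhedron is the vertical stack of the $S_i$. The vertical edges of the polyhedron that cross $S_i$ are exactly the vertical segments over the vertices of $Q_i$: each such edge $E$ projects to a vertex $q_E$ of $Q_i$, which is reflex precisely when $E$ is a reflex edge.

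The crucial observation is that vertical edge guards behave like planar vertex guards slab by slab. If $E$ projects to $q_E$, then for any target $(p,z)$ with $z\in[h_{i-1},h_i]$ and $p$ visible from $q_E$ inside $Q_i$, the horizontal segment at height $z$ joining $(q_E,z)\in E$ to $(p,z)$ lies in $S_i$; hence $E$ sees (in fact \emph{orthogonally} sees) the whole prism over the planar visibility region $\mathcal V_{Q_i}(q_E)$. Consequently, a set $\mathcal G$ of vertical edges guards the entire polyhedron as soon as, for every slab $S_i$, the vertices $\{q_E : E\in\mathcal G \text{ crosses } S_i\}$ form a guarding set of $Q_i$ in the planar sense. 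For the special case of a single prism ($m=1$) this is immediate: the $n$-vertex cross-section is guarded by $\lfloor n/4\rfloor$ of its vertices (the orthogonal \ART bound recalled in Chapter~\ref{chapter1}), and since the prism has $e=3n$ edges we already obtain $\lfloor e/12\rfloor$ vertical edge guards, matching Urrutia's conjectured optimum and comfortably below $\lfloor e/6\rfloor$.

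The difficulty of the general case is entirely one of reuse and bookkeeping across slabs: a long vertical edge may cross many slabs, yet a single guard must be counted only once even though it must simultaneously satisfy the planar guarding constraint in every slab it meets. To control the total, I would sweep the slabs from bottom to top, maintaining an active set of vertical-edge guards, letting guards expire at their top endpoints, and introducing new guards only at a critical height $h_j$, where the cross-section changes from $Q^-$ to $Q^+$. The number of vertical edges born at $h_j$ is bounded by the number of horizontal edges bounding the horizontal faces lying in the plane $z=h_j$, so each newly inserted guard can be charged to a constant number of horizontal edges that are never charged again; summing these charges and invoking the planar bound inside each slab should yield the claimed $\lfloor e/6\rfloor$, the factor of two separating the target from the per-prism optimum $\lfloor e/12\rfloor$ being exactly the slack that absorbs the reuse overhead. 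I expect the main obstacle to be making this amortized charging airtight, together with two technical points: cross-sections of a genus-$g$ polyhedron may contain holes, so the planar step must use the $\lfloor (n+2h)/4\rfloor$ vertex-guard bound for orthogonal polygons with holes rather than the simple-polygon bound; and one must verify that the vertices selected for consecutive slabs can always be realized by the same genuine polyhedron edges, so that the sweep never pays twice for a single physical edge.
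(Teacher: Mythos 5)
First, note that the thesis does not actually prove this statement: it is quoted from Urrutia's survey without proof, and the thesis instead establishes the stronger Theorem~\ref{th:3}, which places $\left\lfloor (e+r)/12\right\rfloor$ mutually parallel edge guards and implies the $\left\lfloor e/6\right\rfloor$ bound because $r\leqslant e$. Your slicing setup coincides with the starting point of that proof: cut with planes orthogonal to a coordinate axis, observe that the axis-parallel edges project to the vertices of each cross-section, and that an axis-parallel edge guard sees (orthogonally) the prism over the planar visibility region of its projected vertex. Up to that point your reduction is sound, and your remark that cross-sections of positive-genus polyhedra may have holes is well taken.

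The gap is exactly at the step you flag as ``reuse and bookkeeping,'' and it is not a technicality --- it is the entire difficulty. Your charging scheme pays only for vertical edges that are \emph{born} at a critical height $h_j$, charging them to horizontal edges in the plane $z=h_j$. But the quantity you must control is the number of vertical edges that are newly \emph{recruited} as guards when the cross-section changes from $Q^-$ to $Q^+$, and these need not be newly born: the $\left\lfloor n/4\right\rfloor$ planar constructions (L-shaped partitions, quadrilateralizations) are global, so an $O(1)$-size local modification of the cross-section --- contributing only $O(1)$ horizontal edges at height $h_j$ --- can force a completely different planar guarding set of size $\Theta(n)$ made of vertices whose vertical edges already existed below $h_j$. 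Those recruitments cannot be charged to the horizontal edges at $z=h_j$, and without consistency between slabs the total $\sum_i\left\lfloor n_i/4\right\rfloor$ can far exceed $e/6$. Theorem~\ref{th:3} escapes this by abandoning per-slab art-gallery solutions altogether: it classifies the $x$-parallel edges into eight local types and selects, once and for all, three of the types (the cheapest of four symmetric triples, in the cheapest of the three axis directions), so every cross-section is automatically guarded by the selected vertices and the count $(e_x+r_x)/4\leqslant(e+r)/12$ follows by pure averaging, with no amortization. To salvage your plan you would need a planar guarding rule that is \emph{local}, i.e., determined by the type of each vertex --- which is precisely what that type selection provides. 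Note also that fixing the vertical direction, rather than minimizing over the three axes, loses a constant factor in general.
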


\paragraph*{Upper bound for closed edge guards.}

A recent breakthrough by Cano, T\'oth and Urrutia, appeared in~\cite{edgenew}, slightly lowered the trivial upper bound of $e$ closed edge guards in general polyhedra.

\begin{theorem}[Cano--T\'oth--Urrutia]\label{thm:cano}
Any polyhedron with $e$ edges is guardable by at most
$$\left\lfloor \frac{27e}{32}\right\rfloor$$
(closed) edge guards.\hfill\qed
\end{theorem}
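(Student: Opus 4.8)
The plan is to refine the argument behind the trivial bound---that the full edge set guards $\mathcal{P}$---by \emph{discarding} a constant fraction of the edges while preserving coverage. Recall how the trivial bound works: given an interior point $x$, one takes a generic planar cross-section $Q$ through $x$; the vertices of $Q$ lie on edges of $\mathcal{P}$, each edge of $Q$ lies on a face of $\mathcal{P}$, and since the polygon $Q$ is guarded by its own vertex set, $x$ is seen by some vertex $q$ of $Q$ lying on an edge $\epsilon$ of $\mathcal{P}$. The key structural input (Observation~\ref{o2:fve}) is that $\epsilon$ is shared by exactly two faces, and that the two edges of $Q$ incident to $q$ lie one on each of these faces. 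So the first step is to reformulate the whole problem as a covering condition phrased per face: it suffices to choose an edge subset $S$ so that, for \emph{every} point $x$, some cross-section through $x$ witnesses $x$ being guarded by a retained edge in the manner above.

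The second step is to localize. I would fix a generic direction and, for each face $f$, isolate the set of points whose guarding responsibility is assigned to $f$ (e.g., those seen across $f$ in a cross-section sweeping in that direction), and prove a two-dimensional lemma bounding how \emph{few} of the edges bounding $f$ must be kept in $S$ to discharge $f$'s responsibility. This reduces the global three-dimensional statement to a family of planar edge-selection problems, one per face, coupled only through the fact that each edge is bounded by (hence shared between) two faces. Because $\sum_f (\text{number of edges of } f) = 2e$, a saving of a fixed fraction of edges \emph{per face}, made consistently across the two faces sharing each edge, translates directly into a saving in the total count.

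The third and decisive step is the combinatorial optimization that produces the exact constant. I would set up a weighting (or discharging) scheme over the edge--face incidences, using the degree bound $\deg(v)\geqslant 3$ from Observation~\ref{o2:fve} and Euler's formula to control the incidence structure, and then solve the resulting local optimization---essentially identifying the worst-case configuration of kept versus discarded edges around a face or vertex. The target $\lfloor 27e/32\rfloor = \lfloor (1-5/32)e\rfloor$ should emerge as the value forced by this worst configuration. \textbf{The main obstacle is this last step:} the bound $27/32$ is a sharp-looking constant rather than a soft asymptotic one, so a crude averaging will not suffice. One must (i) prove the planar per-face lemma with exactly the right numerical strength, and (ii) reconcile the two competing demands on each shared edge---an edge cheap to discard from one face's viewpoint may be essential from the other's---so that a \emph{single} global choice of $S$ simultaneously satisfies all faces. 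Making these local savings globally consistent, and verifying that no vertex configuration defeats the $5/32$ saving, is where the real work lies.
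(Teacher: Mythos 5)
This theorem is stated in the thesis without proof: it is quoted from Cano, T\'oth and Urrutia~\cite{edgenew}, so there is no in-paper argument to compare yours against. Judged on its own terms, your proposal is a strategy outline rather than a proof, and the gap you yourself flag at the end is genuine and is in fact the entire content of the theorem. The trivial argument works precisely because it keeps \emph{every} edge: for any point $x$ and any generic cross-section $Q$ through $x$, \emph{some} vertex of $Q$ sees $x$, but which one does so varies discontinuously as $x$ and the cutting plane move. Your step two --- assigning to each face $f$ a well-defined set of points whose ``guarding responsibility'' belongs to $f$, and then proving a planar lemma that only a fraction of $f$'s bounding edges need be retained --- is never made precise, and it is not clear it can be: a vertex of $Q$ lying on an edge $\epsilon$ of $\mathcal P$ guards $x$ by virtue of the local geometry of the whole polygon $Q$ around that vertex, not by anything intrinsic to one of the two faces meeting at $\epsilon$. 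Without a concrete statement of the per-face lemma, the discharging step in your third stage has nothing to discharge, and the constant $27/32$ is simply posited as the answer the (unperformed) optimization ``should'' produce.

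Concretely, what is missing is (i) a rule, uniform over all points $x$, selecting a retained edge that sees $x$ --- the actual argument of~\cite{edgenew} does this by fixing a generic sweep direction, classifying edges by the local configuration of their two incident faces relative to that direction (much as Chapters~\ref{chapter5} and~\ref{chapter6} classify edge types in cross-sections), and proving that certain classes can be dropped while a canonical ``first edge hit'' witness survives; and (ii) the averaging over the finitely many symmetric choices of retained classes that yields the fraction $27/32$. Your plan points in roughly the right direction --- reduce to cross-sections, classify edge--face incidences, average --- but every quantitative claim is deferred to a lemma you do not state or prove, so the proposal cannot be accepted as a proof.
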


\subsection*{Bounds in terms of $r$}

\paragraph{Guarding with reflex edges.}
Remarkably, assigning guards only to reflex edges is sufficient to guard any polyhedron.

\begin{lemma}\label{l3:reflex}
Any open (resp.\ closed) non-convex polyhedron can be guarded by assigning an open (resp.\ closed) edge guard to each reflex edge.
\end{lemma}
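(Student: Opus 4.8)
The plan is to reduce the three-dimensional statement to the planar fact, recalled in Chapter~\ref{chapter1}, that the reflex vertices of a polygon suffice to guard it. It is enough to treat the open case with an interior point $x$; the closed case follows by the same construction, allowing visibility segments to touch $\partial\mathcal P$, and boundary points are handled by continuity together with Proposition~\ref{prop:1}. Since a polyhedron all of whose dihedral angles are convex is convex, the non-convexity hypothesis guarantees that at least one reflex edge exists.

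First I would fix an interior point $x$ and choose a plane $\Pi$ through $x$ in generic position, i.e.\ containing no vertex of $\mathcal P$ other than possibly $x$ and parallel to no edge. The cross-section $P=\Pi\cap\mathcal P$ is then a polygon (possibly with holes) whose vertices are exactly the transversal intersections of $\Pi$ with the edges of $\mathcal P$, each lying in the relative interior of the corresponding edge. The crucial local observation is that a vertex of $P$ is reflex if and only if the edge of $\mathcal P$ producing it is reflex: near a convex edge the solid is locally a dihedral wedge of angle at most $180^\circ$, hence locally convex, so its planar section has a convex vertex; near a reflex edge the solid is locally a $270^\circ$ wedge, whose section has a reflex vertex. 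Thus reflex vertices of $P$ correspond precisely to reflex edges of $\mathcal P$, and any point of $\Pi$ seeing such a vertex within $P$ sees a relative-interior point of a reflex edge of $\mathcal P$, so the witness is a legal \emph{open} edge guard.

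Next I would apply the planar result. If the component $Q$ of $P$ containing $x$ has a reflex vertex, then the bisector partition recalled in Chapter~\ref{chapter1} splits $Q$ into convex pieces, each incident to a reflex vertex of $Q$; the piece containing $x$ is convex and carries such a reflex vertex $w$ on its boundary, so $w$ sees $x$ along a segment lying in $Q\subseteq\mathcal P$. By the correspondence above, $w$ lies on a reflex edge of $\mathcal P$, and we are done for this $x$. It remains to produce a plane $\Pi$ for which $Q$ is non-convex, i.e.\ crosses a reflex edge. If some generic plane through $x$ yields a non-convex $x$-component we are finished; otherwise every planar section of $\mathcal V(x)$ through $x$ is convex, which forces $\mathcal V(x)$ itself to be convex (any $y,z\in\mathcal V(x)$ lie in the convex $x$-component of the plane through $x,y,z$, whence $yz\subseteq\mathcal P$). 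A convex star region whose boundary meets $\partial\mathcal P$ only along visible faces makes $\mathcal P$ star-shaped from $x$; then $x$ sees all of $\partial\mathcal P$, in particular the reflex edge guaranteed by non-convexity.

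The main obstacle is this last dichotomy: ruling out that $x$ sees a proper convex sub-region of $\mathcal P$ bounded by a flat ``window'' carrying no reflex feature. The clean way to close it is to show that any silhouette edge casting a shadow into the interior of $\mathcal P$ — equivalently, any edge generating a window on $\partial\mathcal V(x)$ — must be reflex, since along a sightline from the interior a convex edge is immediately followed by the exterior rather than by a hidden interior region; the grazing sightline then exhibits a point of that reflex edge visible from $x$. Care is needed to make the genericity precise (finitely many vertices and edge directions to avoid, so that the witness falls in the relative interior of a reflex edge) and to treat faces with holes, for which the bisector partition of Chapter~\ref{chapter1} applies verbatim.
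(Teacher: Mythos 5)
Your route is genuinely different from the paper's. The paper's proof is three lines long: pick an interior point $y$ of any reflex edge (one exists by non-convexity); if $y$ does not see $x$, take a shortest path from $x$ to $y$ in the closure of $\mathcal P$, whose bending points all lie on reflex edges by the known structure of geodesics in polyhedra, and observe that the first bending point sees $x$; Proposition~\ref{prop:1} then handles the open model when that point is a non-convex vertex. Your cross-section argument is sound as far as it goes: for a generic plane $\Pi$ through $x$, reflex vertices of the section do correspond exactly to transversal crossings of reflex edges of $\mathcal P$, and when the $x$-component is non-convex the bisector partition hands you a reflex vertex of the section whose segment to $x$ runs through the relative interior of a convex piece, hence is a legal open-edge-guard witness.

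The gap is exactly where you placed it, and what you wrote does not close it. Everything reduces to the claim that an edge whose shadow reaches the interior of $\mathcal P$ is reflex and donates a point visible to $x$; you assert this as an intuition (\emph{a convex edge is immediately followed by the exterior}), but the actual argument is the wedge computation: a segment from $x$ lying in the closure of $\mathcal P$ that passes through a boundary point $z$ and continues into the interior beyond $z$ determines two opposite directions, both contained in the closed dihedral wedge at $z$; projecting orthogonally to the edge through $z$, a closed planar sector containing two opposite directions has angle at least $180^\circ$, so that edge is reflex. One must then dispose of the degenerate cases ($z$ a vertex, the segment parallel to the edge, and, in the open model, the segment $xz$ running along the boundary), all of which need the perturbation of Proposition~\ref{prop:1}. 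Note also that once this grazing argument is in place, your cross-section machinery is redundant: either $\mathcal V(x)=\mathcal P$, in which case $x$ sees the reflex edge guaranteed by non-convexity, or $\partial\mathcal V(x)$ meets the interior of $\mathcal P$ and the grazing segment already exhibits a visible reflex-edge point. The paper's shortest-path lemma is precisely a packaged, citable form of this \emph{occluders are reflex} fact, which is why its proof is so much shorter; if you complete your argument, do it by proving the grazing claim directly and drop the planar detour.
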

\begin{proof}
Let $x$ be a point in a non-convex polyhedron $\mathcal P$, and let $y$ be an interior point of a reflex edge. If $y$ sees $x$, we are done. Otherwise, consider any shortest path from $x$ to $y$ in the topological closure of $\mathcal P$ (because this is a compact set, shortest paths within it exist, although they are not necessarily unique). Such path is a polygonal chain whose bending points lie on reflex edges (see~\cite{path}). The first bending point $z$ sees $x$. If $\mathcal P$ is closed and edge guards are closed, we are done. Otherwise, if $\mathcal P$ is an open polyhedron and $z$ lies on a (non-convex) vertex, then there is a reflex edge $\ell$ with an endpoint in $z$ such that a small-enough neighborhood of $z$ in $\ell$ sees $x$ (refer to Proposition~\ref{prop:1}).
\end{proof}

If $r>0$ is the number of reflex edges, this establishes an upper bound of $r$ edge guards.

\paragraph{Lower bound.}

For general (open) polyhedra, we have a matching lower bound of $r$ open edge guards.

\begin{theorem}\label{t3:reflex}
$r$ open edge guards are sufficient and occasionally necessary to guard an open polyhedron having $r$ reflex edges.
\end{theorem}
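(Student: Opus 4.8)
The sufficiency half is already in hand: by Lemma~\ref{l3:reflex}, assigning one open edge guard to each of the $r$ reflex edges guards the whole open polyhedron, so at most $r$ open edge guards are ever needed. The entire content of the statement is therefore the matching lower bound, namely a family of open polyhedra, one for each $r$, each having exactly $r$ reflex edges and admitting no guarding set of fewer than $r$ open edge guards.

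My plan is to lift the planar shutter polygons of Figure~\ref{f1:art2} into three dimensions. Let $Q$ be a shutter polygon with exactly $r$ reflex vertices, and let $\mathcal P$ be the open right prism $\mathrm{int}(Q)\times(0,1)$ over it. Its horizontal (top and bottom) edges meet the lateral faces at $90^\circ$ and are hence convex, while a vertical edge is convex or reflex exactly according as the corresponding vertex of $Q$ is convex or reflex; thus $\mathcal P$ has precisely $r$ reflex edges. Since shutter polygons exist for every $r$, this yields the desired family.

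The heart of the argument is a separation claim: no single open edge of $\mathcal P$ can see two of the witness points $p_i=(q_i,1/2)$, where $q_1,\dots,q_r$ are the points lying deep inside the $r$ notches of $Q$. The key simplification is that visibility in a prism factors through the planar projection: because any segment between two points whose heights lie in $(0,1)$ keeps its height in $(0,1)$, a point on an edge of $\mathcal P$ sees $p_i$ if and only if its projection to the base plane sees $q_i$ inside $Q$. Consequently a vertical edge, projecting to a single vertex $w$ of $Q$, sees $p_i$ iff $w$ sees $q_i$, and a horizontal edge, projecting to an edge $\sigma$ of $Q$, sees $p_i$ iff $\sigma$ (viewed as a planar open edge guard) sees $q_i$. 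An open edge of $\mathcal P$ therefore sees two witnesses only if some planar point, or some planar open edge, of $Q$ sees two of the $q_i$ --- and neither happens: shutter polygons are built so that no single point sees into two notches (the property forcing $r$ point guards), and so that no open edge sees two notch points (the property, recorded after Lemma~\ref{edges}, forcing $r$ planar open edge guards).

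With the separation in place, any guarding set of $\mathcal P$ must contain a distinct edge for each of the $r$ witnesses, so at least $r$ open edge guards are necessary; together with Lemma~\ref{l3:reflex} this makes $r$ both sufficient and occasionally necessary. The main obstacle I anticipate is making the visibility-factorization precise for the \emph{open} polyhedron --- taking due care of the boundary, since an edge point is itself a boundary point and a sightline may touch the boundary only at that single endpoint (cf.\ Proposition~\ref{prop:1}) --- and, in tandem, verifying that the \emph{same} witness set $\{q_i\}$ simultaneously defeats planar point guards and planar open edge guards, which is exactly what the two-fold separation above requires.
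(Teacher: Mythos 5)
Your proposal is correct and takes essentially the same route as the paper: sufficiency from Lemma~\ref{l3:reflex}, and necessity by extruding the shutter polygons of Figure~\ref{f1:art2} into prisms with $r$ reflex edges. The paper disposes of the lower bound in a single sentence, so your height-factorization of visibility and the two-fold separation argument are simply a careful elaboration of the same construction.
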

\begin{proof}
Sufficiency follows from Lemma~\ref{l3:reflex}. To see that $r$ open edge guards may be necessary, extrude the shutter polygons in Figure~\ref{f1:art2} to obtain ``shutter prisms'', which have $r$ reflex edges and require as many open edge guards.
\end{proof}

For orthogonal polyhedra and open edge guards, we can also give a lower bound in terms of $r$ that we believe to be tight (see our conjectures in Chapters~\ref{chapter4} and~\ref{chapter5}).

\begin{observation}\label{o3:rourke}
There are orthogonal polyhedra with $r$ reflex edges that require at least
$$\left\lfloor \frac r 2\right\rfloor+1$$
open edge guards, for any $r$. The ``staircase polyhedron'' in Figure~\ref{fig3:tight} is an example.
\end{observation}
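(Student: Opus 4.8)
The plan is to construct, for each $r$, an explicit orthogonal polyhedron with exactly $r$ reflex edges in which no fewer than $\left\lfloor r/2\right\rfloor+1$ open edge guards can collectively see the interior. The figure cited (a ``staircase polyhedron'') suggests building a monotone staircase of axis-aligned steps, so I would first make this construction precise: stack $k$ cuboidal blocks in a descending staircase pattern, arranged so that each interior ``corner'' of the staircase creates reflex edges, and count the reflex edges to pin down the relationship between $k$ and $r$ (I expect roughly $r=2k$, so that $\left\lfloor r/2\right\rfloor+1 = k+1$ guards are claimed necessary).

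The heart of the lower bound is a \emph{witness-point} argument, exactly as in the planar comb and shutter constructions surveyed in Chapter~\ref{chapter1}. First I would place a finite set of distinguished ``pocket'' points, one deep inside each step of the staircase, in such a way that each pocket is visible only from a very restricted set of edges. The key geometric claim to establish is that, for orthogonal guarding in this staircase, \emph{no single open edge guard can see two distinct distinguished pocket points}; equivalently, the visibility region of any one open edge meets at most one pocket. Granting this claim, if there are $N$ pockets then at least $N$ open edge guards are required, and I would choose the construction so that $N = \left\lfloor r/2\right\rfloor+1$.

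The main obstacle will be the visibility/separation claim: I must verify that the pockets can be positioned so that the occlusion caused by the reflex steps genuinely isolates them from one another, and in particular that the convex edges on the outer boundary (which are long and could potentially see many pockets at once) are blocked by the intervening steps. Here I would argue by exhibiting, for each pair of pockets, a face of the staircase that separates them and occludes the line of sight from any candidate edge. A secondary subtlety is handling the \emph{endpoints} of edges under the open-guard model: since open edges exclude their endpoints, I should confirm that the separation argument does not secretly rely on a shared vertex being visible, so that the bound is genuinely a lower bound for open edge guards (and a fortiori survives the passage from closed to open). Finally, I would tally $f$, $e$, and $r$ for the completed staircase to confirm that the construction has precisely $r$ reflex edges for every $r$, closing the argument; note that no matching upper bound is asserted here, so I need only the lower bound and the exact edge count.
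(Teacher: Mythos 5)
The paper offers no written proof here---the Observation is justified only by Figure~\ref{fig3:tight}---so your witness-point strategy is indeed the intended argument, and in outline it is the right one. However, two of your design choices would fail as stated. First, the construction: a \emph{monotone} descending staircase, i.e.\ an extruded staircase polygon, does not work at all. In such a prism the bottom-front edge (the long edge shared by the bottom face and the front face) sees \emph{every} interior point---from the point of that edge directly below any target point one can look straight up through all the wider lower steps---so a single open edge guard suffices and no witness-point argument can succeed. The example must be a genuinely two-reflex staircase (this is exactly how the same figure is described in the caption of Figure~\ref{fig4:3}): the steps must recede alternately in two horizontal directions, precisely so that no single edge runs alongside the whole structure. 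Your worry about ``long convex outer edges'' is therefore not a verification step to be discharged but a constraint that rules out the monotone version of the construction outright.

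Second, the counting. With one pocket per step, the claim that no open edge sees two pockets is false for \emph{any} staircase of cuboids: a reflex edge bounding the contact rectangle of two consecutive steps lies on the bottom face of one convex step and the top face of the other, hence sees both steps entirely and both their pockets. You must either (i) place pockets in alternate steps only, so that each of the $\lceil k/2\rceil$ pockets is the unique one visible to any given edge, or (ii) keep one pocket per step and prove the weaker claim that each open edge sees at most \emph{two} pockets (necessarily in consecutive steps), in which case you need each contact to contribute exactly one reflex edge, giving $r=k-1$ and $\lceil k/2\rceil=\lfloor r/2\rfloor+1$. Your stated expectation $r=2k$ together with ``at most one pocket per edge'' does not produce a consistent tally. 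Finally, drop the phrase ``for orthogonal guarding'': the Observation concerns ordinary visibility, and a lower bound proved only against orthogonal guarding would not imply one for unrestricted guarding.
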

	
\begin{figure}[h]
\centering{\includegraphics[width=.85\linewidth]{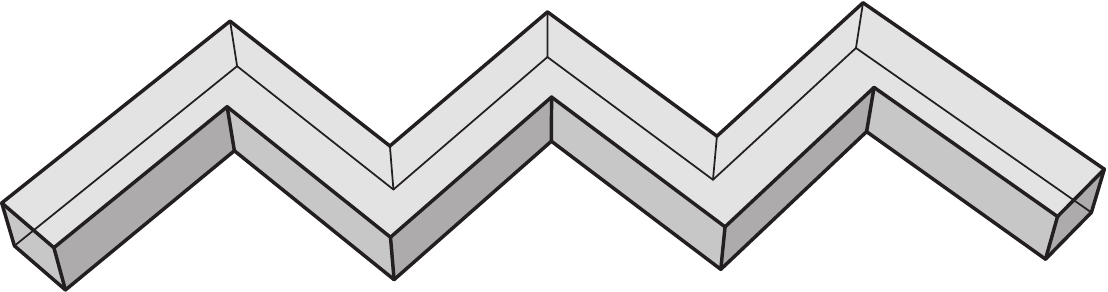}}
\caption{Orthogonal polyhedron with $r$ reflex edges that requires $\lfloor r/2\rfloor+1$ open edge guards.}
\label{fig3:tight}
\end{figure}

\section{Bounds on face guard numbers}

\subsection*{Motivations}
It is rather hard to find concrete applications of face guards. The obvious analogy with illumination that we mentioned for edge guards suggests that face guards may be luminous panels or screens.

On the other hand, imagining that a point guard could patrol a whole face of a polyhedron poses some problems. Recall that, due to Observation~\ref{o3:edge}, a point guard patroling an edge may not be ``locally'' replaced by finitely many static point guards. We can exploit this fact to construct the class of polyhedra sketched in Figure~\ref{f3:faceq}.

\begin{figure}[h]
\centering
\subfigure[3-dimensional view]{\label{f3:faceqa}\includegraphics[height=7cm]{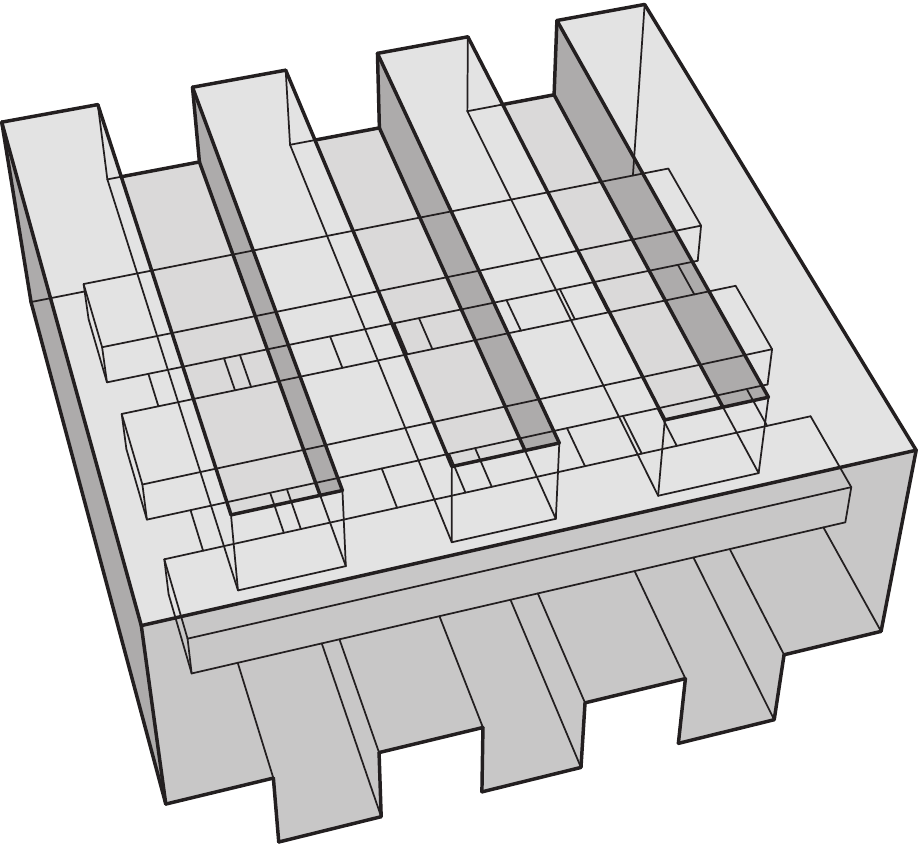}}\qquad \quad
\subfigure[top view]{\label{f3:faceqb}\includegraphics[height=6.5cm]{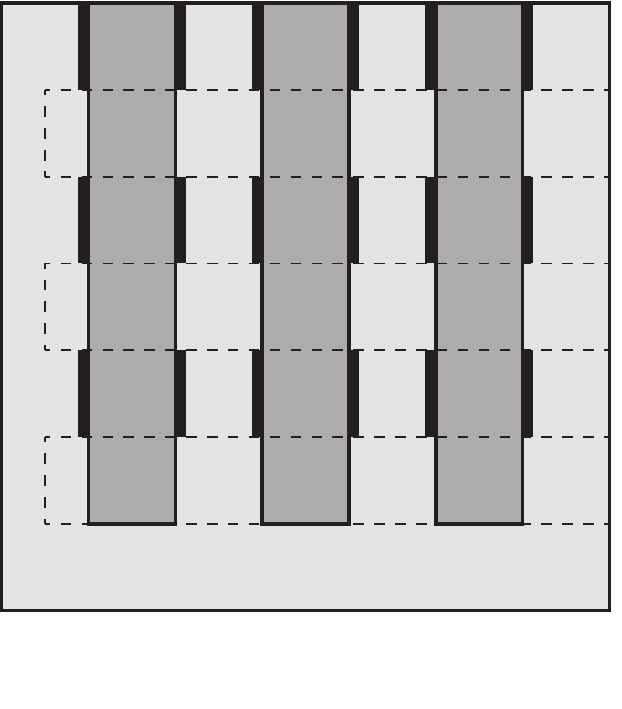}}
\caption{Orthogonal polyhedron whose top face cannot be ``dominated'' by a subquadratic number of segments lying on it.}
\label{f3:faceq}
\end{figure}

We start by cutting long parallel ``dents'' on opposite faces of a cuboid, in such a way that the resulting polyhedron looks like an extruded ``iteration'' of the polygon illustrated in Figure~\ref{f3:edgeq}. Then we stab this construction with a row of ``girders'' running orthogonally with respect to the dents, as Figure~\ref{f3:faceqa} suggests.

Suppose that a point guard has to patrol the top face of this construction, eventually seeing every point that is visible to that face. The situation is represented in Figure~\ref{f3:faceqb}, where the light-shaded region is the top face, and the dashed lines mark the underlying girders. By Observation~\ref{o3:edge}, and by the presence of the girders, each thick vertical segment must be approached by the patroling guard from the interior of the face.

Suppose that the polyhedron has $n$ dents and $n$ girders. Then, the number of its vertices, edges, or faces is~$\Theta(n)$. Now, if the guard moves along a polygonal chain lying on the top face, such a chain must have at least a vertex on each thick segment, which amounts to~$\Omega(n^2)$ vertices. Equivalently, if the face guard has to be substituted with segment guards lying on it, quadratically many guards are needed.

\begin{observation}
For arbitrarily large $n$, there are simply connected orthogonal polyhedra with $n$ edges having face guards that cannot be replaced by $o(n^2)$ segment guards lying on them.
\end{observation}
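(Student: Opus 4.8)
The plan is to exhibit, for each integer $n$, the simply connected orthogonal polyhedron $\mathcal P_n$ sketched in Figure~\ref{f3:faceq}, and to prove that its distinguished top face, regarded as a single face guard, has a visibility region that no $o(n^2)$ segment guards lying on that face can reproduce. First I would fix the construction precisely: starting from a cuboid, carve $n$ long parallel dents on a pair of opposite faces so that a generic cross section orthogonal to the dents reproduces the comb-shaped polygon of Figure~\ref{f3:edgeq}; then stab the resulting solid with $n$ girders running orthogonally to the dents, as in Figure~\ref{f3:faceqa}. A routine count gives $\Theta(n)$ vertices, edges and faces, matching the statement's edge parameter up to a constant factor, so the claimed bound amounts to $\Omega(n^2)$ segment guards. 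The $n$ dents and $n$ girders meet in a grid of $\Theta(n^2)$ crossings, and to each crossing I associate a \emph{thick segment} $s$ (the vertical segments of the top view in Figure~\ref{f3:faceqb}) together with a fixed \emph{critical corner} $c_s$ of the top face.

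Next I would localise the obstruction at a single crossing. Near $c_s$ the geometry of $\mathcal P_n$ is, by construction, a faithful copy of the configuration of Figure~\ref{f3:edgeq}: there is a sequence of \qq{deep} points $p_s^{(1)}, p_s^{(2)}, \dots$ inside the dent whose visibility from the top face is governed exactly as in Observation~\ref{o3:edge}, so that the set of points of the top face seeing $p_s^{(k)}$ is a neighborhood of $c_s$ shrinking to $c_s$ as $k \to \infty$. Consequently, if $S$ is any set of segment guards lying on the top face whose visibility region $\mathcal V(S)$ contains the visibility region of the top face, then $c_s$ must be a limit point of $S$; since each segment is closed, $c_s$ itself lies on some segment of $S$. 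This already forces every one of the $\Theta(n^2)$ critical corners onto the guard set, but it does not yet yield a quadratic count, because many corners could be collinear and hence share a single long segment.

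The decisive step, and the one I expect to be the main obstacle, is to upgrade \qq{$c_s$ lies on some segment} to \qq{some segment of $S$ has an \emph{endpoint} (equivalently, the patrolling chain has a vertex) at $c_s$, reached from the interior of the face}. This is exactly where the girders are needed: a girder occludes every line of sight toward the deep points $p_s^{(k)}$ except those approaching $c_s$ from the interior of the top face, so a guard that merely passes straight through $c_s$ along a row or column of collinear corners runs parallel to the girder and sees none of the $p_s^{(k)}$. Hence covering feature $s$ requires a segment of $S$ that \emph{terminates} (or bends) at $c_s$ while coming from the face interior. A single straight segment has only two endpoints, and distinct critical corners are separated by a fixed positive distance determined by the dent and girder spacing, so each guard discharges the requirement of at most $O(1)$ crossings. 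Therefore $\Omega(n^2)$ segment guards are unavoidable, and once the girder occlusion claim is certified the counting is immediate; verifying that occlusion claim against oblique, non-axis-aligned lines of sight is the delicate part of the argument.
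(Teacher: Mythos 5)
Your proposal is correct and follows essentially the same route as the paper: the identical dents-and-girders construction with $\Theta(n^2)$ crossings, the appeal to Observation~\ref{o3:edge} to force the guard set to accumulate at each critical corner, and the girder-occlusion argument forcing a segment endpoint (rather than a mere pass-through) at each of the $\Theta(n^2)$ sites, so that each segment accounts for only $O(1)$ of them. The paper's own justification is no more detailed than yours on the delicate occlusion point you flag; your write-up, if anything, makes the counting step more explicit.
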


This suggests that face guards may not be the proper ``tool'' to model point guards patroling the surface of a polyhedron. Indeed, a face that counts as a single guard in this model could represent a point guard with a route of quadratic ``complexity'', or quadratically many patroling guards.

Even if we are allowed to replace a face guard with point guards patroling any segment in the polyhedron (i.e, not necessarily constrained to move on that face), a linear number of them may be needed. To see why, consider again the class of orthogonal polyhedra illustrated in Figure~\ref{fig3:lower2}, and arrange the ``chimneys'' in such a way that no straight line intersects more than two of them. If there are $n$ chimneys, then the complexity of the polyhedron is $\Theta(n)$, and a face guard lying on the bottom face must be replaced by $\Omega(n)$ segment guards. 

\begin{observation}
For arbitrarily large $n$, there are simply connected orthogonal polyhedra with $n$ edges having convex face guards that cannot be replaced by $o(n)$ segment guards (chosen anywhere).
\end{observation}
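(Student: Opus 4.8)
The plan is to exploit the chimney construction of Figure~\ref{fig3:lower2} together with a genericity argument, turning the two-dimensional ``reach'' of a face into the one-dimensional ``reach'' of a segment. I would start from the orthogonal polyhedron with $n$ chimneys erected on a flat cuboidal base: its complexity is $\Theta(n)$, it is simply connected, and its bottom face is a single rectangle, hence a \emph{convex} face guard. In each chimney $i$ I fix a \emph{deep point} $p_i$ located at the very top of the shaft. Since every shaft opens downward onto the base, the floor meets the vertical column below each $p_i$, so the bottom face sees all of the $p_i$ (indeed it guards the whole polyhedron). Consequently, any family of segment guards that replaces this face guard must, in particular, collectively see all $n$ deep points.

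The key geometric observation is that the visibility region $\mathcal V(p_i)$ is confined to a thin tube. A viewer outside chimney $i$ can see $p_i$ only through the narrow shaft opening, so $\mathcal V(p_i)$ is contained in the union of the shaft itself with the cone subtended at $p_i$ by the opening. By making the chimneys tall and thin and the base short, this cone can be forced to stay within an arbitrarily thin neighborhood of the shaft axis throughout the interior of the polyhedron. Next I would \emph{arrange} the chimneys --- as suggested in the text --- so that no straight line meets more than two of these thin tubes; this is a standard general-position (perturbation) condition on the $(x,y)$-footprints of the shafts, and it can be imposed while keeping every edge axis-parallel and the total complexity $\Theta(n)$.

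With these two ingredients the lower bound is immediate. A segment guard lies on a single line, and a line meets at most two of the tubes $\mathcal V(p_i)$ by construction; hence each segment guard sees at most two deep points. Since a replacement family must see all $n$ deep points, it must contain at least $\lceil n/2\rceil = \Omega(n)$ segments, which proves the claim. The main obstacle is the genericity step: one must check that the footprints of the shafts can be placed with sufficiently fine coordinates so that no line stabs three of the thin tubes, while simultaneously keeping the cones narrow enough that $\mathcal V(p_i)$ genuinely lies inside its tube. Once this placement is secured, the dimension drop from the two-dimensional face to the one-dimensional segment does the rest.
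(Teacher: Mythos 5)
Your proposal is correct and follows essentially the same route as the paper, which also takes the chimney polyhedron of Figure~\ref{fig3:lower2}, arranges the chimneys so that no straight line meets more than two of them, and concludes that each segment guard can see into at most two chimneys. Your refinement of the general-position condition from the chimneys themselves to the thin visibility tubes of the deep points is a careful (and welcome) way of making the paper's one-line argument rigorous, but it is the same idea.
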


However, we know from the previous section that just a linear amount of edge guards is sufficient to guard any polyhedron, let alone ``dominate'' a face guard. Nonetheless, as we will show below, a linear amount of face guards may be needed to guard a polyhedron. Once again, face guards appear to be a needlessly ``powerful'' type of guard, and definitely a poor model for patroling guards.

For these reasons, in Parts~\ref{part2} and~\ref{part3} we will focus primarily on edge guards, which we think achieve the best tradeoff among pleasant theoretical properties, efficiency and applicability.

\subsection*{Upper bounds}

We provide a very simple upper bound on face guard numbers, which becomes tight for open face guards in orthogonal polyhedra.

\begin{theorem}\label{t3:face}
Any open (resp.\ closed) $c$-oriented polyhedron with $f$ faces is guardable by
$$\left\lfloor\frac f 2 - \frac f c\right\rfloor$$
open (resp.\ closed) face guards.
\end{theorem}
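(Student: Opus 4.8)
The plan is to guard the polyhedron $\mathcal P$ by selecting faces according to a single generic direction, and then to squeeze the count down using the $c$-orientation structure. First I would fix a direction $d\in\R^3$ that is generic with respect to $\mathcal P$, meaning $d$ is neither parallel to any face nor orthogonal to any of the $c$ orientation vectors $v_1,\dots,v_c$. Call a face an \emph{exit face} for $d$ if its outward normal $n$ satisfies $n\cdot d>0$. The basic guarding primitive is a ray-shooting argument: from any point $x$ in the interior of $\mathcal P$, follow the ray from $x$ in direction $d$ until it first meets $\partial\mathcal P$; the face it meets is an exit face, and it sees $x$ because the traversed segment lies inside $\mathcal P$. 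Hence the set of all exit faces guards $\mathcal P$.

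Next I would organize the count. Group the faces into the $c$ orientation classes and, within class $i$, write $a_i^+$ and $a_i^-$ for the number of faces with outward normal $+v_i$ and $-v_i$, so that $f=\sum_i(a_i^++a_i^-)$. For generic $d$ the exit count is $E(d)=\sum_i a_i^{\epsilon_i}$ with $\epsilon_i=\operatorname{sign}(v_i\cdot d)$. Since $E(d)+E(-d)=f$, some direction already gives $E(d)\leqslant f/2$; moreover, because any three of the $v_i$ form a basis, the realizable sign vectors $(\epsilon_1,\dots,\epsilon_c)$ are exactly the cells of a great-circle arrangement on the sphere, and a local flip argument across adjacent cells lets me reach a direction that selects the smaller side of every class, recovering the bound $\lfloor f/2\rfloor$. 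The remaining task — and the crux — is to extract the extra $f/c$ saving, i.e.\ to discard at least $f/c$ of the selected faces while preserving full visibility. Here the plan is to exploit that, by pigeonhole, some orientation class contains at least $f/c$ faces, and to reassign the guarding responsibility of that heavy class: a point whose $d$-exit face lies in it should be re-covered either by an exit face of another orientation, or by the opposite $-v_i$ faces of the same class, using convex-pocket visibility just as in the purely convex case, where a single face already sees everything and the bound correctly collapses (e.g.\ $\lfloor f/2-f/c\rfloor=1$ for a cuboid).

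Finally, the open-versus-closed distinction is handled by Proposition~\ref{prop:1}: since visibility regions in an open polyhedron are open, the first-hit face in the ray-shooting argument sees an entire open neighbourhood of $x$, so open face guards behave like their closed counterparts and the same count applies; the floor then comes from $E$ being integer-valued. I expect the genuine difficulty to lie in the reassignment step that produces the $f/c$ improvement: unlike the clean $\lfloor f/2\rfloor$ bound, which needs only ray-shooting, this step must invoke the global visibility structure to certify that the heavy orientation class is redundant, and it is precisely this step that must be tight against the extremal orthogonal construction realizing $\lfloor f/6\rfloor$ open face guards in the case $c=3$.
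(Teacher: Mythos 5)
Your ray-shooting primitive and your treatment of the open case via Proposition~\ref{prop:1} match the paper, but the step you yourself identify as the crux --- extracting the $f/c$ saving --- is genuinely missing, and the route you sketch for it does not work. First, the quantitative accounting falls short: by pigeonhole you get one orientation class with at least $f/c$ faces, but only the \emph{exit} half of that class lies in your selected set, so you cannot guarantee that discarding it removes $f/c$ guards; and even if it did, removing one class of size $f/c$ \emph{after} halving only yields roughly $f/2 - f/(2c)$, not $f/2 - f/c$. To reach the stated bound you must exclude about $2f/c$ faces \emph{before} the halving step, i.e.\ two orientation classes, not one. Second, the reassignment step (``re-cover points whose exit face lies in the heavy class by faces of other orientations'') is asserted, not proved --- there is no argument that such alternative covering faces exist, and in general they need not. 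Third, your claim that a local flip argument over the cells of the great-circle arrangement lets you select the smaller side of every class independently is false: only $O(c^2)$ of the $2^c$ sign vectors are realizable by an actual direction $d$ (though this claim is harmless, since $E(d)+E(-d)=f$ already gives $\lfloor f/2\rfloor$).

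The paper's proof avoids reassignment entirely by choosing the ray direction more cleverly: order the classes so that $f_1\geqslant f_2\geqslant\cdots$, note $f_1+f_2\geqslant\lfloor 2f/c\rfloor$, and declare the direction $v_1\times v_2$ to be ``vertical.'' Then every face of the two heaviest classes is parallel to the ray direction and is never the first face hit by a vertical ray from an interior point, so those $\lfloor 2f/c\rfloor$ faces are excluded for free. Among the remaining at most $f-\lfloor 2f/c\rfloor$ non-vertical faces, at most half face downward (or upward --- take the smaller side and orient the ray accordingly), giving $\lfloor f/2 - f/c\rfloor$ guards; the ray-shooting and open-polyhedron arguments then proceed exactly as you describe. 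The missing idea in your proposal is precisely this alignment of the sweep direction with the two largest orientation classes.
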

\begin{proof}
Let $\mathcal P$ be a polyhedron whose faces are orthogonal to $c\geqslant 3$ distinct vectors. Let $f_i$ be the number of faces orthogonal to the $i$-th vector $v_i$. Without loss of generality, $i<j$ implies $f_i \geqslant f_j$. Then,
$$f_1+f_2 \geqslant \left\lfloor \frac {2f} c \right\rfloor.$$
Assume the direction of the cross product $v_i \times v_j$ to be \emph{vertical}. Thus, there are at most
$$f-\left\lfloor \frac {2f} c \right\rfloor$$
non-vertical faces. Some of these are facing up, the others are facing down. Without loss of generality, at most half of them are facing down, and we assign a face guard to each of them. Therefore, at most
$$\left\lfloor\frac f 2 - \frac f c\right\rfloor$$
face guards have been assigned.

Let $x$ be a point in $\mathcal P$. Casting a vertical ray from $x$ directed upward, we eventually reach a point $y$ on a face $F$ that is assigned a guard. If $\mathcal P$ is closed and face guard are closed, then $x$ is guarded by $F$. 

Otherwise, if $\mathcal P$ and the face guards are open, $x$ may not be guarded by $y$, as $y$ may lie on an edge of $F$. However, because $\mathcal P$ is open and $x$ lies in its interior, it can see a neighborhood of $y$ belonging to $F$ (recall Proposition~\ref{prop:1}). Such a neighborhood contains points in the relative interior of $F$, which guard $x$.
\end{proof}

For orthogonal polyhedra, the upper bound given in Theorem~\ref{t3:face} becomes $\left\lfloor f/6\right\rfloor$. In this case, if face guards are closed, they also \emph{orthogonally} guard the polyhedron.

Our guarding strategy becomes less and less efficient as $c$ grows. If $c=f$, we get an upper bound of $\left \lfloor f/2\right\rfloor - 1$ face guards. The same construction works even for general polyhedra that are not $f$-oriented (i.e., even if three distinct face normal vectors do not form a basis for $\R^3$).

\begin{corollary}
Any open (resp.\ closed) polyhedron with $f$ faces is guardable by
$$\left \lfloor\frac f 2\right\rfloor - 1$$
open (resp.\ closed) face guards.\hfill\qed
\end{corollary}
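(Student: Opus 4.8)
The plan is to rerun the construction from the proof of Theorem~\ref{t3:face} essentially unchanged, after observing that the genericity built into the definition of a $c$-oriented polyhedron (that any three of the $c$ defining vectors form a basis of $\R^3$) is never actually invoked: the argument only needs a choice of vertical direction for which at least two faces come out vertical. Once two vertical faces are secured, the remaining $f-2$ or fewer faces split cleanly into up-facing and down-facing, and guarding the smaller class yields the claimed bound $\left\lfloor f/2\right\rfloor-1$. This is exactly the degenerate $c=f$ situation, with the basis condition discarded.

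First I would manufacture the two vertical faces. Pick any edge of the polyhedron; by Observation~\ref{o2:fve} it lies on exactly two non-coplanar faces, whose outward normals $\mathbf n_1$ and $\mathbf n_2$ are therefore non-parallel, so $\mathbf n_1\times\mathbf n_2\neq 0$. Declaring the vertical direction to be $\mathbf n_1\times\mathbf n_2$ makes both $\mathbf n_1$ and $\mathbf n_2$ horizontal, hence both incident faces become vertical. Thus the number $v$ of vertical faces satisfies $v\geqslant 2$, and establishing this uses nothing about the global orientation structure of the polyhedron.

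Next, every one of the $f-v$ non-vertical faces has a normal with nonzero vertical component, so it is either up-facing or down-facing. Letting the two counts be $u$ and $d$ with $u+d=f-v$, I would assign a face guard to each face in the smaller class, which costs at most
$$\left\lfloor\frac{f-v}{2}\right\rfloor\leqslant\left\lfloor\frac{f-2}{2}\right\rfloor=\left\lfloor\frac f2\right\rfloor-1$$
guards. To verify that these suffice, I would copy the ray-casting argument of Theorem~\ref{t3:face}: from an arbitrary point $x$ in the polyhedron, shoot a vertical ray toward the guarded class; since the polyhedron is bounded, the ray exits through a boundary point $y$ lying on a face $F$ of the guarded orientation, so $F$ guards $x$ in the closed case, while in the open case Proposition~\ref{prop:1} lets $x$ see relative-interior points of $F$ even when $y$ happens to fall on an edge of $F$.

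The main obstacle is precisely the degeneracy already implicit in the proof of Theorem~\ref{t3:face}, and I must check that our deliberately non-generic vertical direction does not worsen it: the exit point $y$ may lie on an edge or vertex, or the ray may graze one of the vertical faces. The key points to nail down are that the up/down/vertical trichotomy of faces is exhaustive, and that the sign of the vertical component of the outward normal at the exit is forced by the direction of travel, so $y$ always belongs to a face of the guarded class; the open-polyhedron subtlety is then closed out verbatim by Proposition~\ref{prop:1}. No appeal to three normals forming a basis is needed anywhere, which is exactly why the bound survives for general, possibly non-$f$-oriented, polyhedra.
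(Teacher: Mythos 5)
Your proposal is correct and follows essentially the same route as the paper, which obtains the corollary by running the construction of Theorem~\ref{t3:face} with $c=f$ and noting that the basis condition on the orientation vectors is never used. Your device of anchoring the vertical direction at the cross product of the two non-coplanar face normals along a single edge is just a concrete way of securing the two vertical faces that the paper gets from its two largest orientation classes, and the ray-casting and open-polyhedron steps are identical.
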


\subsection*{Lower bounds}
\paragraph{Open face guards.}
We give a simple lower bound construction for open face guards in general polyhedra.

\begin{observation}
There are (open) polyhedra with $f$ faces that require at least
$$\left \lfloor\frac f 4\right\rfloor$$
open face guards, for arbitrarily large $f$. Figure~\ref{f3:facelower} shows an example.
\end{observation}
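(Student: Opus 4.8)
The plan is to exhibit an explicit family of open polyhedra $\mathcal{P}_k$, one for each $k$, each having $f = 4k + O(1)$ faces and requiring at least $k$ open face guards; the polyhedron sketched in Figure~\ref{f3:facelower} is the intended family. The strategy mirrors the classic comb lower bounds surveyed in Chapter~\ref{chapter1}: partition the face set into $k$ pairwise disjoint \emph{modules} $F_1, \dots, F_k$, each consisting of roughly four faces forming a deep pocket, and exhibit for every module a \emph{witness point} $p_i$ in the interior of $\mathcal{P}_k$ that is visible only to the faces of $F_i$.

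First I would describe the construction concretely: start from a slab or enclosing box and carve $k$ identical narrow pockets into it, arranging them like the teeth of a comb so that the pockets point in mutually skew directions and the opening of each pocket does not expose its depths to any face outside that pocket. Each pocket contributes a constant number of new faces, namely its walls and its floor; tuning the geometry so that this number is four on average yields the ratio $f \approx 4k$, while the surrounding enclosure adds only $O(1)$ faces.

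Next, for each module I would select the witness point $p_i$ deep inside the $i$-th pocket and argue that the only boundary points visible to $p_i$ lie on faces of $F_i$. Here I would invoke the occlusion description from the proof of Proposition~\ref{prop:1}: every face $g \notin F_i$ is separated from $p_i$ by the walls of pocket $i$, so the truncated pyramid $\mathcal{O}(p_i, g)$ blocks the line of sight and $g$ sees no neighborhood of $p_i$. Consequently any face guard covering $p_i$ must belong to $F_i$, and since open face guards are weaker than closed ones, the same confinement holds in the open model. Because the $F_i$ are pairwise disjoint, any guarding set must contain at least one face from each module, hence at least $k$ guards.

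Finally, from $f = 4k + O(1)$ I would conclude that the number of required open face guards is at least $\lfloor f/4 \rfloor$ for arbitrarily large $f$, absorbing the additive constant into the construction. The main obstacle is the visibility-confinement step: in three dimensions a point deep in a pocket can still peek out through the opening and catch sight of a far-away face, so the pockets must be made deep and narrow, or suitably angled, to guarantee that no external face sees the witness point. Verifying this occlusion rigorously, rather than the routine bookkeeping of face counts, is where the real work lies.
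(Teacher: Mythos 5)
Your overall plan --- a comb-style lower bound with one witness point per ``tooth'' and a face count of $4k+O(1)$ --- is the same one the paper realizes with $k$ tetrahedral spikes on a body of $4k+3$ faces. The gap is in the confinement lemma you propose to prove: you require a witness point $p_i$ in the \emph{interior} of the polyhedron that is visible \emph{only} to the four faces of its module $F_i$. This is unachievable for a pocket with a mouth. Every ray from the interior point $p_i$ meets the boundary, and its first hit is a boundary point that sees $p_i$; so if only faces of $F_i$ see $p_i$, the four closed faces of $F_i$ must intercept every direction from $p_i$, i.e., they enclose it. A straight pocket (or spike) does not enclose its deep points: some cone of rays escapes through the mouth into the main body and first strikes a face of the enclosure, which therefore sees $p_i$ and lies outside $F_i$. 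Making the pocket bent so that $p_i$ cannot see out of the mouth costs more than four faces per module and destroys the $\lfloor f/4\rfloor$ ratio. (Relatedly, $\mathcal O(p_i,g)$ in Proposition~\ref{prop:1} is the region occluded \emph{by} $g$ as seen from $p_i$, not an obstruction between $p_i$ and $g$; what you would need is that $g$ lies inside $\mathcal O(p_i,h)$ for some pocket wall $h$, and by the above no four walls can occlude everything else.)

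The repair is to prove the weaker property that the paper's figure asserts: \emph{each open face sees at most one of the $k$ witness points} (the spike tips). A witness may then be seen by faces far outside its own spike, as long as no single face sees two witnesses; this is achievable with narrow spikes pointing in suitably different directions, so that the visibility cones escaping their mouths land on pairwise disjoint portions of the boundary, none of which straddles two such landing regions. The counting is then immediate: every guarding set must see a neighborhood of each of the $k$ tips, hence contains at least $k$ faces, with no need for disjoint modules at all. Finally, ``absorbing the additive constant'' is not automatic: if $f=4k+c$ then $\left\lfloor f/4\right\rfloor=k+\left\lfloor c/4\right\rfloor$, so a lower bound of $k$ guards proves the statement only when $c\leqslant 3$, which is exactly what the paper's count of $4k+3$ faces is tuned to give.
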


\begin{figure}[h]
\centering
\includegraphics[width=0.75\linewidth]{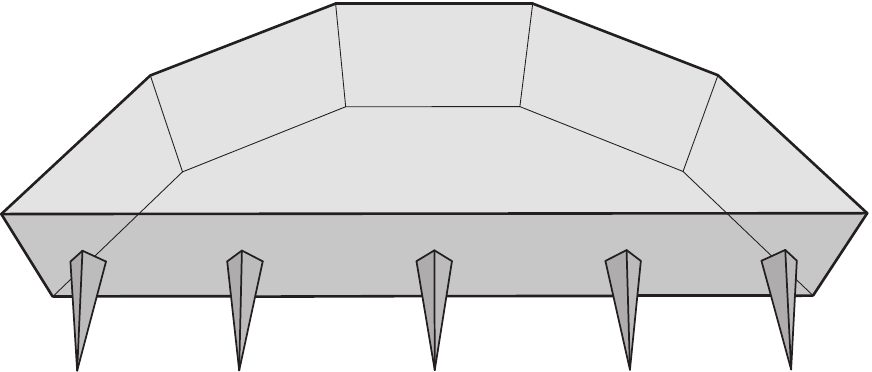}
\caption{Polyhedron with $4k+3$ faces that requires $k$ face guards. Each open face sees the tip of at most one of the $k$ tetrahedral ``spikes''.}
\label{f3:facelower}
\end{figure}

For open face guards in orthogonal polyhedra, we have a lower bound that is also tight.

\begin{theorem}
To guard an open orthogonal polyhedron with $f$ faces,
$$\left\lfloor \frac f 6\right\rfloor$$
open face guards are sufficient and occasionally necessary.
\end{theorem}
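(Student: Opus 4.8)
The sufficiency half is immediate: an orthogonal polyhedron is $3$-oriented, so applying Theorem~\ref{t3:face} with $c=3$ yields $\lfloor f/2 - f/3\rfloor = \lfloor f/6\rfloor$ open face guards. The whole content of the statement therefore lies in the matching lower bound, i.e.\ in exhibiting, for arbitrarily large $f$, an open orthogonal polyhedron with $f$ faces that cannot be guarded by fewer than $\lfloor f/6\rfloor$ open face guards.

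My plan for the lower bound is the \emph{witness-point} scheme already used for the general bound in Figure~\ref{f3:facelower}: I would construct a family $\{\mathcal P_k\}$ of open orthogonal polyhedra, each carrying $k$ distinguished interior points $p_1,\dots,p_k$, with the property that the visibility region of any single face of $\mathcal P_k$ contains at most one $p_i$. Since a guarding set must see every $p_i$, and each open face guard accounts for at most one of them, at least $k$ open face guards are necessary. The remaining task is to make the bookkeeping match $\lfloor f/6\rfloor$ exactly: I would design $\mathcal P_k$ so that $f=6k+O(1)$, whence $\lfloor f/6\rfloor = k+O(1)$ and the family realizes the bound.

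Concretely, I would take $k$ mutually invisible cube-shaped cells, each contributing its six faces and a witness point near its centre, and splice them into a single connected orthogonal solid through bent (L-shaped) passages, so that the reflex edges of each passage block every straight line of sight between the interiors of two distinct cells; the openness of visibility regions (Proposition~\ref{prop:1}) guarantees that each $p_i$ is seen only by genuine open sub-faces, matching the open-guard model. The connecting structure must add only $o(k)$ faces in total (for instance by threading the cells along a common backbone), so as not to spoil the $f=6k+O(1)$ count.

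The main obstacle is exactly this delicate balance, which is what pins the constant at $6$ rather than at something smaller. A thin cuboidal ``arm'' costs only five faces, but its far end can be scanned by a face looking down its axis, so arms are readily co-guarded and cannot force a private guard --- consistent with the fact that otherwise $\lfloor f/6\rfloor<k$ guards could never suffice. The cells must therefore be genuine, fully isolated chambers whose only witnesses are their own six faces, and the crux of the argument is verifying that no planar face --- of a cell, of a passage, or of the backbone --- can simultaneously see two witness points, while keeping the overhead of the connecting structure sublinear in $k$.
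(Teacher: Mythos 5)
Your sufficiency argument is exactly the paper's: apply Theorem~\ref{t3:face} with $c=3$. The gap is in the lower bound, and it is precisely the issue you flag as ``the main obstacle'' without resolving: the face-budget bookkeeping of the cells-plus-passages scheme cannot be made to work. Each of your $k$ isolated chambers needs its own passage to the backbone, and every passage --- straight or bent --- contributes at least four new faces of its own (its lateral walls), quite apart from the six faces of the chamber; punching the openings does not delete any existing faces either. The connecting structure therefore adds $\Theta(k)$ faces, not $o(k)$: you end up with $f\geqslant 10k$ or so, and a forced-guard count of only $k\leqslant f/10$, which falls short of $\left\lfloor f/6\right\rfloor$. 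There is no way to thread the cells along a common backbone ``for free,'' because the wall-face cost is incurred once per cell.

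The paper's construction (Figure~\ref{f3:face}) closes this gap by collapsing your ``cell'' and your ``passage'' into a single gadget. It attaches $k$ small L-shaped (bent) prisms directly to one big cuboid; each attachment adds exactly six faces, and the bend of the L hides a witness point from every face outside that gadget, so at least one of its six faces must be selected. No separate connecting structure exists, so $f=6k+6$ on the nose. The extra $+1$ in $\left\lfloor f/6\right\rfloor=k+1$ is then realized by observing that, however the $k$ gadget faces are chosen, some portion of the big cuboid remains unseen and requires one more guard. If you want to salvage your plan, this is the missing idea: the object that forces a private guard must itself be the entire six-face contribution, rather than a chamber reached through additional face-bearing plumbing.
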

\begin{proof}
Sufficiency follows from Theorem~\ref{t3:face} with $c=3$. Necessity is implied by Figure~\ref{f3:face}. Indeed, any small L-shaped polyhedron that is attached to the big cuboid adds six faces to the construction, of which at least one must be selected. Moreover, no matter how these faces are selected, some portion of the big cuboid remains unguarded, and needs one more face guard.
\end{proof}

\begin{figure}[h]
\centering{\includegraphics[width=.65\linewidth]{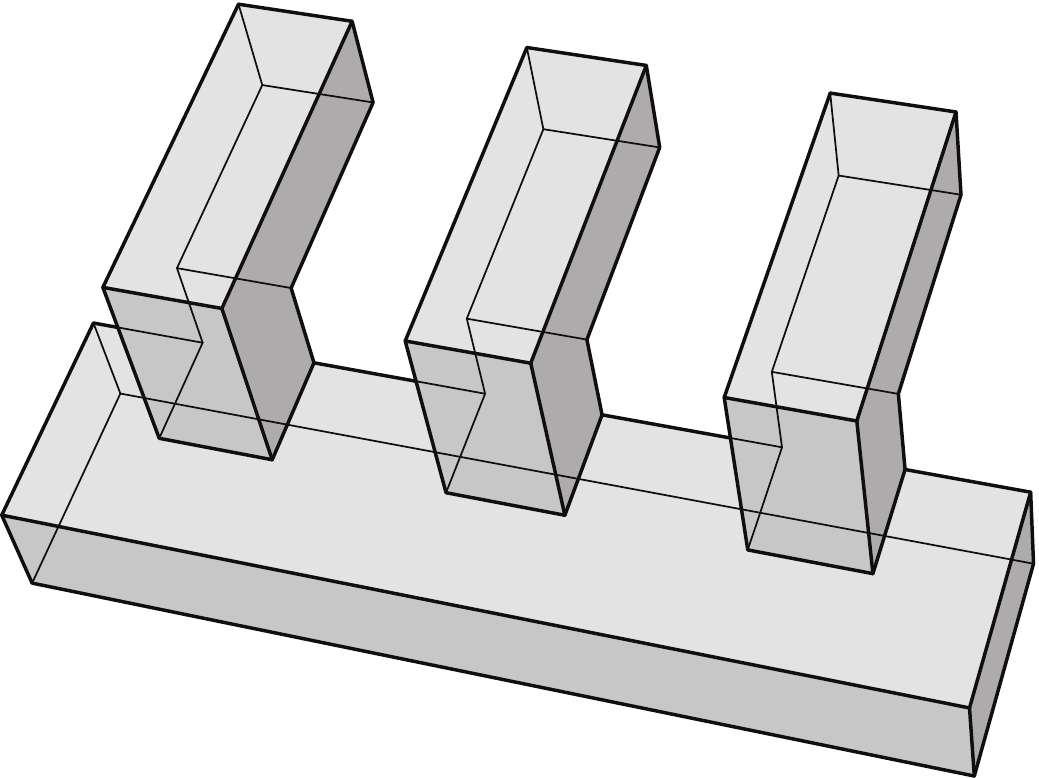}}
\caption{Orthogonal polyhedron that needs $f/6$ open face guards.}
\label{f3:face}
\end{figure}

\paragraph{Closed face guards.}
Some special cases of Theorem~\ref{t3:face} have been recently obtained by Souvaine, Veroy and Winslow, for closed face guards in closed polyhedra only (see~\cite{faceguards}). For this type of guards, they also gave two lower bounds.

\begin{observation}[Souvaine--Veroy--Winslow]
For arbitrarily large $f$, there are polyhedra with $f$ faces that require at least
$$\left\lfloor \frac f 5 \right\rfloor$$
closed face guards, and orthogonal polyhedra that require at least
$$\left\lfloor \frac f 7 \right\rfloor$$
closed face guards.
\end{observation}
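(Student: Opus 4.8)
The plan is to establish both bounds by constructing explicit families of polyhedra, since these are lower bounds and therefore require only a suitable construction together with an argument that no small guard set can succeed. For each bound I would start from a central body to which $k$ identical \emph{gadgets} are attached, every gadget carrying a distinguished \qq{witness} point placed deep enough inside it to be invisible both from the central body and from every other gadget. The construction is engineered so that (i) each gadget contributes a fixed number of new faces—five in the general case, seven in the orthogonal case—and (ii) no single closed face can see two distinct witnesses. Granting (i) and (ii), any guarding set must contain, for each gadget, at least one face that sees its witness, and by (ii) these faces are distinct; hence at least $k$ face guards are needed. Counting faces then gives $f = 5k + O(1)$ (respectively $f = 7k + O(1)$), so that $\lfloor f/5\rfloor$ (respectively $\lfloor f/7\rfloor$) guards are necessary once $k$ is large, as claimed.

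For the general bound I would adapt the spike construction of Figure~\ref{f3:facelower}, in which $k$ tetrahedral spikes are attached to a common body and each \emph{open} face sees the tip of at most one spike. To defeat the extra reach available to closed guards, each spike is augmented by a single additional \qq{shielding} face that occludes its tip from the boundary of every face belonging to a neighboring gadget; this raises the per-gadget face cost from four to five and so degrades the ratio from $f/4$ to $f/5$. The orthogonal bound is obtained analogously, beginning from the axis-parallel construction realizing the $f/6$ ratio of Figure~\ref{f3:face} and again adding one shielding face per gadget. Orthogonality forbids tapering the pockets, so each gadget is already bounded by more faces, and the shielding step brings the per-gadget cost to seven, yielding $f/7$. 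In both cases the figure of merit is the same heuristic: a closed guard \qq{costs} each gadget one extra face of privacy relative to the open case.

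The main obstacle, and the step that demands the most care, is property (ii). A \emph{closed} face guard contains its boundary edges and vertices, so it is strictly more powerful than an open one and can \qq{reach around} corners through its extremal points; it is precisely this effect that makes the closed-guard bounds ($f/5$ and $f/7$) weaker than the open-guard bounds ($f/4$ and $f/6$) proved earlier. The gadgets must therefore be calibrated so that even the boundary points of any face are blocked by the occluding mouths of all gadgets but one. Concretely, I would make the mouth apertures narrow and the witnesses both deep and sufficiently off-axis, so that the solid visibility cone of each witness, intersected with the surface of the polyhedron, meets the closure of only the faces of its own gadget. Verifying this occlusion rigorously—ruling out long grazing sightlines that exploit an included edge or vertex of a distant face—is the crux; once it is in place, the face count and the resulting floors follow immediately.
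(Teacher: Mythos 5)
This observation is stated in the paper without proof: it is an external result cited from Souvaine, Veroy and Winslow~\cite{faceguards}, so there is no in-paper argument to compare against. Judged on its own terms, your proposal is a proof \emph{plan} rather than a proof, and the gap is concentrated exactly where you locate it yourself. The entire argument rests on exhibiting a general gadget with five faces and an orthogonal gadget with seven faces satisfying property (ii) for \emph{closed} faces, but no such gadget is ever constructed. The per-gadget face counts are not derived from a construction; they are postulated to match the target ratios ($f/5$ and $f/7$), which is reverse engineering the answer. The claim that ``a single additional shielding face'' converts the open-guard spike construction of Figure~\ref{f3:facelower} into a closed-guard one is precisely the assertion that needs proof: one must check that the \emph{closure} of every face of the resulting polyhedron --- including the large faces of the central body and any face whose boundary edge or vertex is shared between the mouths of two adjacent gadgets --- sees at most one witness. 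A closed face can see a witness through a single extremal vertex, and such a vertex may be incident to several gadgets at once, so it is not evident that one extra face per gadget, rather than two or three, suffices. Since you explicitly defer ``ruling out long grazing sightlines that exploit an included edge or vertex of a distant face,'' the crux of both bounds remains open in your write-up.

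A secondary issue is the accounting. Even granting the gadgets, you need $f = 5k + O(1)$ where the $O(1)$ term covers the central body, \emph{and} you need that the central body itself does not force extra structure (e.g., additional faces per gadget at the attachment site where a gadget's mouth is cut into the body, which typically splits a body face into several). In Figure~\ref{f3:facelower} the count $4k+3$ works out because the spikes replace triangles of the base; an analogous bookkeeping must be done for your modified gadgets and is not. To close the argument you would have to draw the two gadgets explicitly, count their faces including the effect of the attachment on the host body, and verify the closed-visibility occlusion claim face by face.
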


\section{Hardness of edge-guarding}
Computing optimal edge guard numbers is hard even for simply connected orthogonal polyhedra, and this can be proved in almost the same way as for vertex or edge guards in polygons.

\paragraph{\NP-hardness.}
Several variations on the \ART with edge guards are \NP-hard, such as open or closed edge guarding, or reflex edge guarding, or guarding with mutually parallel edge guards. All reductions are obtained by operating small adjustments on a common pattern (inspired by the 2-dimensional one, originally given by Lee and Lin  in~\cite{lee}), which we roughly sketch here for reflex edge guards in simply connected orthogonal polyhedra.

\begin{theorem}\label{t3:hard}
Deciding whether a simply connected orthogonal polyhedron is guardable by $k$ (open or closed) reflex edge guards is strongly \NP-complete.
\end{theorem}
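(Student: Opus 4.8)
The plan is to prove strong \NP-completeness by showing membership in \NP\ and then reducing from a known strongly \NP-hard problem. Membership is the easy direction: given a candidate set of $k$ reflex edges, we must verify that their visibility regions cover $\mathcal P$. Since there are finitely many edges and the arrangement of visibility cones is semialgebraic of polynomial complexity, one can check coverage in polynomial time (for instance, by testing that no cell of the arrangement induced by the candidate guards' occlusion boundaries is left uncovered). I would state this briefly and move on, since the substance lies in the hardness reduction.

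For the hardness direction, I would reduce from a strongly \NP-hard planar guarding or covering problem — the natural choice being the \NP-hardness of minimum vertex/edge guarding in orthogonal polygons, itself descended from Lee and Lin's construction via \SAT\ in~\cite{lee}. The idea is to take the planar gadget construction that encodes a \TSAT\ (or suitable covering) instance as an orthogonal polygon where choosing guards corresponds to setting truth values and satisfying clauses, and then \emph{extrude} it into a thin orthogonal prism so that the polygon's edges become reflex edges of a simply connected orthogonal polyhedron. The key design requirement is that the only edges worth selecting as guards in the 3-dimensional instance are the vertical reflex edges sitting directly above the planar polygon's reflex vertices, so that reflex-edge guarding of the prism faithfully simulates vertex/edge guarding of the original polygon. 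One must verify that extrusion preserves simple-connectivity (genus zero) and keeps the polyhedron orthogonal, and that the coordinates remain polynomially bounded, which is what makes the reduction \emph{strong} rather than merely weak.

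The main obstacle I anticipate is controlling visibility in the third dimension. In the plane, a guard on a reflex vertex sees a clean wedge; after extrusion, a vertical reflex edge sees a 3-dimensional region, and one must ensure that selecting such an edge covers exactly the extrusion of the planar visibility region and nothing spurious, while also guaranteeing that no ``cheap'' horizontal reflex edge or diagonal sightline lets the adversary cover clauses without paying the intended cost. I would handle this by making the prism thin enough (height small relative to the horizontal feature scale) and by inserting blocking features so that cross-visibility between different vertical slices is impossible, forcing every visibility computation to reduce to the planar one layer by layer. A secondary technical point is the equivalence between the open and closed reflex-edge variants: by Theorem~\ref{th:1} (and the local-domination argument behind it), the two guarding models differ only by a constant factor, and for the decision version one can argue directly that in the extruded construction an open reflex edge guards a point iff the corresponding closed one does, up to boundary sets of measure zero that the gadgets are built to avoid.

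Finally I would assemble the pieces: argue that the \TSAT\ instance is satisfiable if and only if the constructed polyhedron is guardable by the prescribed number $k$ of reflex edge guards, checking both directions of the correspondence (a satisfying assignment yields a guard set of size $k$; conversely any guard set of size $k$ must select the ``truth-value'' edges in a consistent way, from which a satisfying assignment is read off). Since the entire construction is polynomial in size \emph{and} in the magnitude of the coordinates, strong \NP-hardness follows, and combined with membership in \NP\ this gives strong \NP-completeness.
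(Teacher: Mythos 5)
Your plan matches the paper's proof: it reduces from \TSAT via a Lee--Lin-style gadget construction realized as a genus-zero orthogonal prism, whose reflex edges are exactly the vertical edges over the reflex vertices of the cross-section polygon, so that visibility reduces slice-by-slice to the planar picture exactly as you anticipate (and your worry about ``cheap'' horizontal reflex edges evaporates, since a prism has none). The one piece you defer --- designing the clause and variable gadgets so that \emph{reflex} edges alone can and must do the guarding, which existing planar \NP-hardness constructions do not guarantee off the shelf --- is where the paper spends essentially all of its effort, using alcoves, nooks, wells and dents to force any guard set of size $3m+3n+1$ to select literal and variable edges consistently with a satisfying assignment.
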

\begin{proof}
Membership in \NP is straightforward. To prove \NP-hardness, we reduce from \TSAT, so let $\varphi$ be a Boolean formula with $n$ variables and $m$ clauses. We will construct a genus-zero orthogonal prism that is guardable by $3m+3n+1$ reflex edge guards if and only if $\varphi$ is satisfiable.

One side of the construction contains an array of $m$ \emph{clause gadgets}, sketched in Figure~\ref{f3:np1}, and the opposite side contains an array of $n$ \emph{variable gadgets}, sketched in Figure~\ref{f3:np2}.

\begin{figure}[h]
\centering{\includegraphics[width=.85\linewidth]{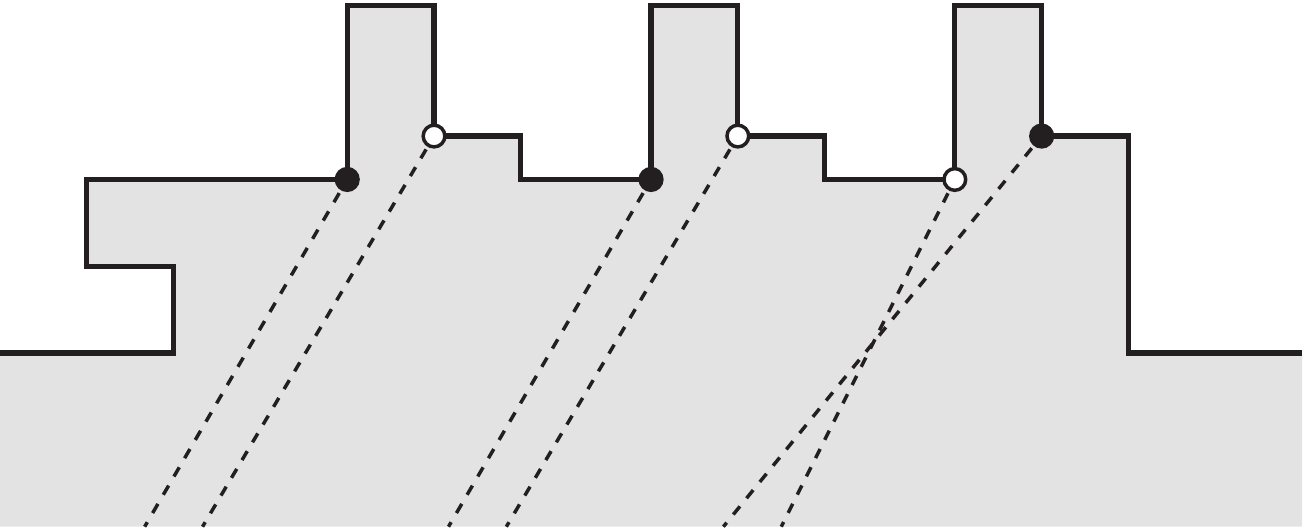}}
\caption{Clause gadget for $(x\vee y\vee \neg z)$.}
\label{f3:np1}
\end{figure}

Each clause gadget contains three \emph{alcoves}, corresponding to the three literals in the clause. Only two reflex edges can guard the bottom of each alcove; one is colored black, the other is colored white in the figure. These are called \emph{literal edges}. In each pair of literal edges, the lower one is colored black if and only if the literal is positive in $\varphi$. Thus, at least three literal edges per clause gadget must be selected. Three are sufficient if and only if at least one of them is the lower one in its pair (otherwise, if only the three topmost literal edges are selected, the small ``cleft'' on the left remains unguarded).

\begin{figure}[h]
\centering{\includegraphics[width=.7\linewidth]{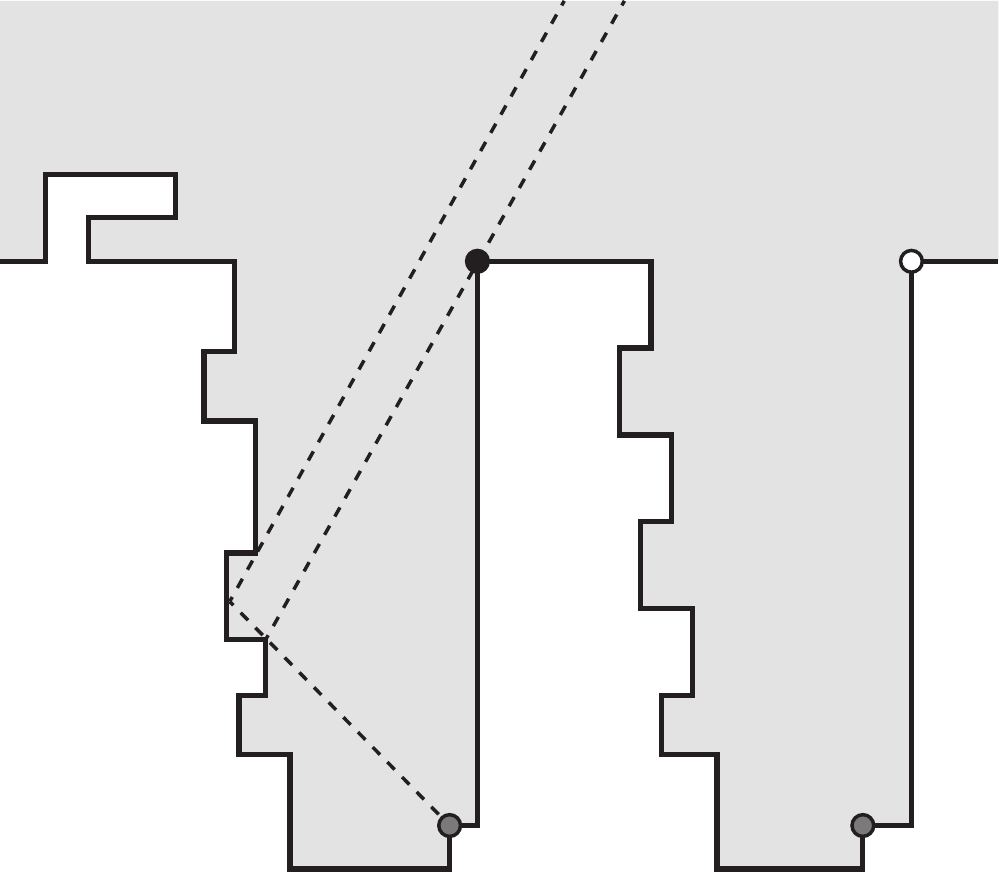}}
\caption{Variable gadget.}
\label{f3:np2}
\end{figure}

Each variable gadget is made of two \emph{wells} and a small enlosed area on the left, called \emph{nook}. Each well contains one \emph{dent} for each occurrence of the variable in $\varphi$. The two bottom reflex edges, marked with a gray dot, are always selected, and they guard the upper part of each dent. The lower parts of the dents in the left (resp.\ right) well are guardable by the edge marked with a black (resp.\ white) dot (these two are called \emph{variable edges}). Moreover, of the two dents corresponding to some literal $\ell$ (one in the left well and one in the right well), the leftmost (resp.\ rightmost) dent and the black (resp.\ white) variable edge are ``aligned'' with the black (resp.\ white) literal edge corresponding to $\ell$ (as the dashed lines in Figures~\ref{f3:np1} and~\ref{f3:np2} suggest: the two almost parallel lines coming out of the variable gadget should converge on one of the black dots in the clause gadget). Every single element of the construction is properly positioned and stretched, so that the bottom part of each dent of a variable gadget is completely visible to exactly one literal edge.

Now, if a variable is assigned the value true (resp.\ false), the white (resp.\ black) edge in the corresponding variable gadget is selected. Thus, the dents in the corresponding well are guarded, and the ones in the other dent are to be guarded by the respective literal edges, which are also selected. A clause is considered satisfied if and only if one of the three bottom literal edges has been selected. All the nooks are also guarded by variable edges, and we observe that the presence of nooks enforces the selection of at least one variable edge in each variable gadget. It follows that each variable gadget and each clause gadget can be guarded by exactly three reflex edge guards if and only if $\varphi$ is satisfiable.

If each gadget is guarded, then the whole construction is guarded, provided that one additional guard is assigned to the top-left reflex edge of the nook of the leftmost variable gadget (such edge must be selected anyway, as it is the only one that can see ``behind'' the leftmost nook).
\end{proof}

\paragraph{Hardness of approximation.}

Analogous hardness results also hold for the problem of approximating the minimum size of an edge-guarding set in a given polyhedron. As a general rule, if an \ART is hard for vertex guards in 2-dimensional polygons \emph{with holes}, then the corresponding problem for edge guards in \emph{simply connected} 3-dimensional polyhedra is also hard.

As an example, we show that the \emph{VC-dimension} of the \emph{range spaces} associated to edge-guarding problems in simply connected orthogonal polyhedra may grow unboundedly.

Observe that a guarding problem may be viewed as an instance of \computproblem{SET COVER}, in which some potential guard locations have to be selected, each of which covers a subset of the whole environment (refer to~\cite{artapproxfirst}).

The concept of VC-dimension, named after Vapnik and Chervonenkis (who defined it in~\cite{vcdimension}), is associated to the ``complexity'' of a \computproblem{SET COVER} instance, and is defined as the size of the largest \emph{shattered set} in that instance.

For guarding problems, a set of points $S$ is \emph{shattered} by the set of potential guard postions if, for each subset $S'\subseteq S$, there exists a (potential) guard $g$ such that $S \cap \mathcal V(g) = S'$.

It was shown in a series of papers by Blumer, Br\"onnimann, Kalai and others (refer to~\cite{blumer,bronnimann,kalai}) that any class of instances of \computproblem{SET COVER} whose VC-dimension is bounded by a constant has an $O(\log \rm{OPT})$-approximation algorithm.

Later, in~\cite{valtr}, Valtr proved that the VC-dimension associated to point guards in a simple polygon is not more than 23, while polygons with $h$ holes may yield VC-dimensions of $\Omega (\log h)$.

With a very similar construction to that found in~\cite{valtr} for polygons with holes, we show that simply connected orthogonal polyhedra with $e$ edges may have VC-dimensions of $\Omega (\log e)$, implying that an $O(\log \rm{OPT})$-approximation algorithm for the corresponding minimization \ART does \emph{not} follow from general theorems. This provides evidence that approximating minimum edge-guarding numbers is indeed hard.

\begin{figure}[h!]
\centering{\includegraphics[width=.65\linewidth]{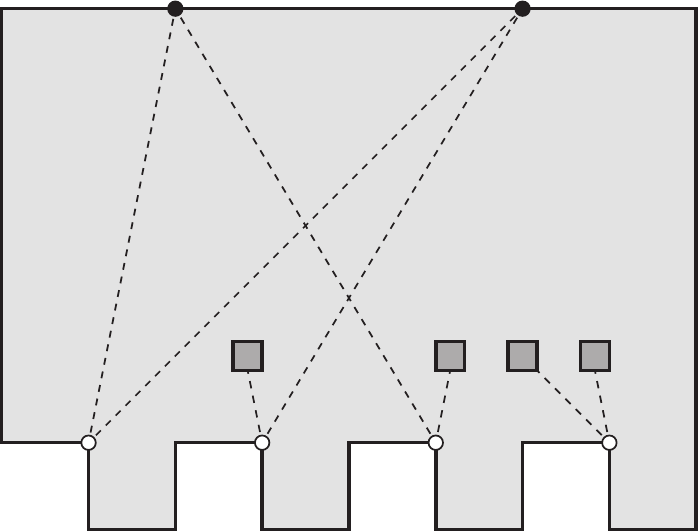}}
\caption{Set of two black points shattered by the white guards. Darker squares represent pillars.}
\label{f3:shatter}
\end{figure}

\begin{theorem}\label{t3:aaa}
There are simply connected orthogonal polyhedra with $e$ edges containing point sets of size $\Omega (\log e)$ that are shattered by the set of edges.
\end{theorem}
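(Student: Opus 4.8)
The plan is to mirror Valtr's planar construction from~\cite{valtr}, replacing the holes (which are forced in the two-dimensional, genus-carrying setting) by orthogonal \emph{pillars} that exploit the third dimension to keep the polyhedron simply connected. First I would recall what must be achieved: a set $S$ of $k$ points is shattered by the edges precisely when, for each of the $2^k$ subsets $S'\subseteq S$, some edge $g$ satisfies $\mathcal V(g)\cap S = S'$. In particular a shattered set of size $k$ requires at least $2^k$ distinct candidate edge guards, so to obtain $k=\Omega(\log e)$ it suffices to exhibit a family of simply connected orthogonal polyhedra carrying a shattered set of size $k$ while using only $e = 2^{O(k)}$ edges.

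I would fix $k$ points $p_1,\dots,p_k$ and lay out $2^k$ candidate edge guards, indexing them by the subsets $S'\subseteq\{p_1,\dots,p_k\}$, and then install a grid of axis-aligned pillars so that the sightline from the guard indexed by $S'$ reaches $p_i$ exactly when $p_i\in S'$. The mechanism is the usual binary one: with bit position $i$ I associate a family of pillars that, given the placement of the guards along a common line, occludes $p_i$ from precisely those guards whose index has a $0$ in position $i$. Arranging the $k$ occlusion patterns so that they act independently produces all $2^k$ distinct visibility profiles on $S$, one per edge, which is exactly shattering.

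The essential modification relative to the planar case is preserving genus zero. In the plane the occluders must be genuine holes, but in $\mathbb R^3$ I would realize each pillar as a box attached to the floor that stops short of the ceiling (a one-sided ``stalagmite''), so that the complement of the polyhedron stays connected and, by the remarks following Definition~\ref{polydef}, the polyhedron remains simply connected; at the relevant height the pillar still blocks the intended horizontal sightlines. After verifying that the resulting object is a legitimate orthogonal polyhedron and that each subset $S'$ is realized by exactly one edge restricted to $S$ (no accidental occlusion and no accidental visibility), I would count edges: with $O(2^k)$ guard edges and $O(k\,2^k)$ pillar edges the total is $e = 2^{O(k)}$, so $\log e = O(k)$ and hence $k = \Omega(\log e)$, giving the claimed shattered set and VC-dimension bound.

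The hard part will be engineering the occluders so that all $k$ binary tests remain genuinely independent once the obstacles are constrained to be orthogonal, one-sided pillars rather than through-holes. This is the geometric core of Valtr's argument, and it must be re-derived in a form robust to these restrictions: I expect the delicate step to be certifying that orthogonality and one-sidedness of the pillars do not introduce spurious sightlines that either block a point that should be visible or open a point that should be hidden, thereby collapsing two of the $2^k$ intended visibility profiles into one.
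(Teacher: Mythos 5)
Your proposal matches the paper's proof essentially exactly: the paper likewise takes Valtr's planar construction for polygons with holes, extrudes the outer boundary to an orthogonal prism, and replaces each hole by a pillar rising from the floor and stopping just short of the ceiling, so that the polyhedron stays simply connected while the pillars' vertical reflex edges serve as the $2^k$ candidate guards shattering $k=\Omega(\log e)$ points on the floor. The edge count and the deferral of the combinatorial occlusion pattern to Valtr's argument are also the same as in the paper.
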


\begin{proof}
The construction is illustrated in Figure~\ref{f3:shatter} for a shattered set of two points. The outer boundary of the polyhedron is extruded to form an orthogonal prism, and the darker squares become high \emph{pillars} that start from the ``floor'' and almost touch the ``ceiling''.

The two black points lie on the floor, so that they are shattered by the white ones, which mark vertical reflex edges (dashed lines represent lines of sight).

It is easy to add more black points to the construction, and exponentially many white edges and pillars, so that the black points are shattered by the white edges.
\end{proof}

From the construction given in Theorem~\ref{t3:aaa}, it is clear how our pillars substitute holes in 2-dimensional polygons, thus enabling hardness proofs even for simply connected polyhedra.

\part{Guarding polyhedra}\label{part2}
\chapter{Reflex edge guards in 2-reflex orthogonal polyhedra}\label{chapter4}
\markthischapter{CHAPTER 4. \ EDGE GUARDS IN 2-REFLEX POLYHEDRA}

\begin{chapterabstract}
We consider the problem of guarding orthogonal polyhedra having reflex edges in just two directions (as opposed to three), by placing guards on reflex edges only.

We generalize a classic result by O'Rourke, showing that $$\left\lfloor \frac{r-g}{2} \right\rfloor +1$$ reflex edge guards are sufficient, where $r$ is the number of reflex edges in a given polyhedron and $g$ is its genus. This bound is tight for $g=0$.

Then we give a similar upper bound in terms of $e$, the total number of edges in the polyhedron. We prove that $$\left\lfloor \frac{e-4}{8} \right\rfloor +g$$ reflex edge guards are sufficient, whereas the previous best known bound, due to Urrutia, was $\lfloor e/6 \rfloor$ edge guards (not necessarily reflex).

Ultimately, we show that guard locations achieving the above bounds can be computed in $O(n \log n)$ time.

En route, we also discuss the setting in which guards and polyhedra are open, proving that the same results hold even in this more challenging case.
\end{chapterabstract}

\section{2-reflex orthogonal polyhedra}

\subsection*{Motivations}
This is the first of three chapters in which we study edge guards in polyhedra. Here we focus on 2-reflex orthogonal polyhedra, i.e., orthogonal polyhedra whose reflex edges lie in at most two different directions. This is a case of intermediate complexity, between the 1-reflex case (i.e., orthogonal prisms) and the full 3-reflex case (i.e., general orthogonal polyhedra).

Recall from Chapter~\ref{chapter1} that simple orthogonal polygons with $r$ reflex vertices can be guarded by $\lfloor r/2 \rfloor +1$ guards. This obviously extends to simply connected orthogonal prisms with $r$ reflex edges. Our main research question is whether the same bound in terms of $r$ extends to the whole class of orthogonal polyhedra. We are still unable to fully answer this question, although we have evidence that this may be the case.

However, we can prove (as we will do in this chapter) that the $\lfloor r/2 \rfloor +1$ upper bound holds at least for 2-reflex orthogonal polyhedra. We perceive this as a very important sub-case, and a necessary step toward a proof for general orthogonal polyhedra. Indeed, we believe that recursively ``cutting away'' 2-reflex orthogonal subpolyhedra from a given orthogonal polyhedron eventually yields a ``kernel'' that can be efficiently guarded, due to its structural properties.

Regardless of this theoretical aspect, 2-reflex orthogonal polyhedra have an interest by themselves, as they can already express a rich variety of shapes, which planar structures cannot attain.

\subsection*{Structure and terminology}

Without loss of generality, we stipulate that every 2-reflex orthogonal polyhedron encountered in this chapter has only horizontal reflex edges, and no vertical ones.

The intersection between a 2-reflex orthogonal polyhedron and a horizontal plane not containing any of its vertices is a collection of mutually disjoint rectangles. As the plane is moved upward or downward, the intersection does not change, until a vertex is reached.  Thus, a natural way to partition a 2-reflex orthogonal polyhedron is into maximal cuboids, any two of which are either disjoint or touch each other at their top or bottom faces. Each cuboid in the partition is called a \emph{brick}, and any non-empty intersection between two bricks is called their \emph{contact rectangle}.

Figure~\ref{fig4:1} shows all the possible ways two bricks can touch each other. They are viewed from above, and the light shaded brick lies on top of the darker one. Thick lines denote reflex edges, and the dashed ones are those covered by the top brick (hence hidden to the viewer).

\begin{figure}[h]
\centering
\subfigure[$e-e'=0$ $r-r'=4$]{\includegraphics[width=0.15\linewidth]{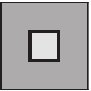}}\qquad
\subfigure[$e-e'=0$ $r-r'=3$]{\includegraphics[width=0.15\linewidth]{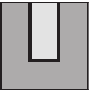}}\qquad
\subfigure[$e-e'=-3$ $r-r'=2$]{\includegraphics[width=0.15\linewidth]{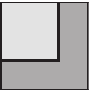}}\qquad
\subfigure[$e-e'=-6$ $r-r'=1$]{\includegraphics[width=0.15\linewidth]{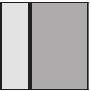}}\qquad
\subfigure[$e-e'=0$ $r-r'=2$]{\includegraphics[width=0.15\linewidth]{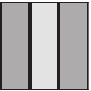}}\\
\subfigure[$e-e'=0$ $r-r'=4$]{\includegraphics[width=0.15\linewidth]{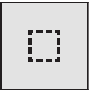}}\qquad
\subfigure[$e-e'=0$ $r-r'=3$]{\includegraphics[width=0.15\linewidth]{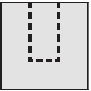}}\qquad
\subfigure[$e-e'=-3$ $r-r'=2$]{\includegraphics[width=0.15\linewidth]{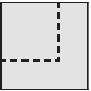}}\qquad
\subfigure[$e-e'=-6$ $r-r'=1$]{\includegraphics[width=0.15\linewidth]{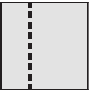}}\qquad
\subfigure[$e-e'=0$ $r-r'=2$]{\includegraphics[width=0.15\linewidth]{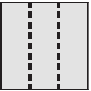}}\\
\subfigure[$e-e'=2$ $r-r'=3$]{\includegraphics[width=0.15\linewidth]{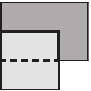}}\qquad
\subfigure[$e-e'=4$ $r-r'=4$]{\includegraphics[width=0.15\linewidth]{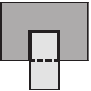}}\qquad
\subfigure[$e-e'=4$ $r-r'=3$]{\includegraphics[width=0.15\linewidth]{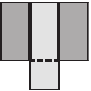}}\qquad
\subfigure[$e-e'=4$ $r-r'=4$]{\includegraphics[width=0.15\linewidth]{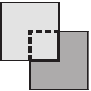}}\qquad
\subfigure[$e-e'=8$ $r-r'=4$]{\includegraphics[width=0.15\linewidth]{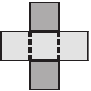}}\\
\subfigure[$e-e'=2$ $r-r'=3$]{\includegraphics[width=0.15\linewidth]{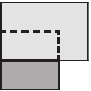}}\qquad
\subfigure[$e-e'=4$ $r-r'=4$]{\includegraphics[width=0.15\linewidth]{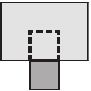}}\qquad
\subfigure[$e-e'=4$ $r-r'=3$]{\includegraphics[width=0.15\linewidth]{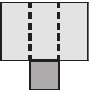}}\qquad
\subfigure[$e-e'=0$ $r-r'=2$]{\includegraphics[width=0.15\linewidth]{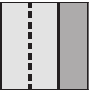}}\qquad
\subfigure[$e-e'=-1$ $r-r'=2$]{\label{f4:typet}\includegraphics[width=0.15\linewidth]{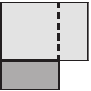}}
\caption{Possible contact rectangles of two adjacent bricks, viewed from above.}
\label{fig4:1}
\end{figure}

By ``cutting'' a 2-reflex orthogonal polyhedron along the contact rectangle of two adjacent bricks, all the reflex edges bordering the rectangle turn into convex edges, and several different things may happen to the edge set. A convex edge may split in two distinct edges, new edges may be created, and several edges may merge together.

Figure~\ref{fig4:1} also indicates the number of edges gained or lost during a split, for each different configuration. $e$ and $e'$ are the number of edges in the polyhedron before and after the split, respectively. Similarly, $r$ and $r'$ are the number of reflex edges before and after the split, respectively.

In Figure~\ref{fig4:2}, a type-(t) contact between two bricks is illustrated, before and after the cut. The polyhedron in the second picture has $e=23$ edges, of which $r=2$ are reflex. The polyhedra in the third picture have $e'=24$ edges in total, of which $r'=0$ are reflex (cf.\ Figure~\ref{f4:typet}).

\begin{figure}[h]
\centering
\includegraphics[width=\linewidth]{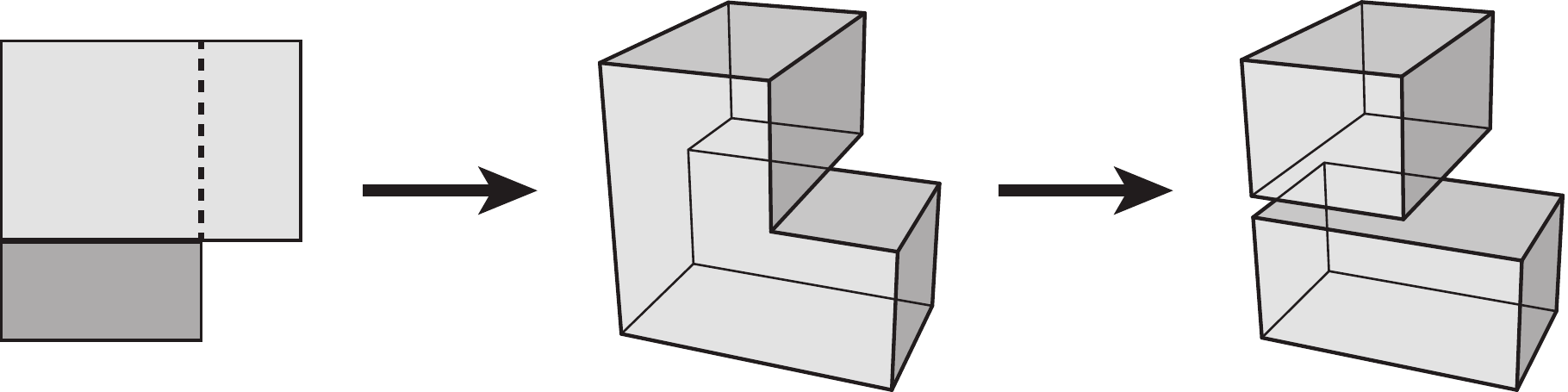}
\caption{Type-(t) contact between two bricks, before and after a cut.}
\label{fig4:2}
\end{figure}

\begin{remark}\label{r4:degen}
When cutting a polyhedron of positive genus along a contact rectangle, we may fail to disconnect it, but just lower its genus. The resulting polyhedron is \emph{degenerate}, in that its boundary is self-intersecting. We will occasionally encounter such degeneracies in intermediate steps of inductive proofs, and we will tolerate them whenever their presence will not invalidate our reasoning.
\end{remark}

Referring again to Figure~\ref{fig4:1}, we call each type-(a) or type-(f) contact rectangle a \emph{collar}, because its boundary is made of four reflex edges ``winding'' around a smaller brick. Singling out collars to treat them as separate cases will often be needed in our proofs. The (technical) reason is that collars minimize the ratio $$\frac{e-e'+12}{r-r'}.$$ This ratio is 3 for collars, whereas it is at least 4 for any other contact type.

We also want to single out contact types~(d) and~(i), because they produce just one reflex edge each, and this will turn out to be the hardest case to handle in our later constructions. These two contact types will be called \emph{primitive}, and a 2-reflex orthogonal polyhedron having only primitive contact rectangles will be called a \emph{stack}.

Observe that each brick of a stack may have up to two bricks attached to each of its horizontal faces. A stack in which each brick has either zero or two other bricks attached to its top face is called a \emph{castle} (Figure~\ref{fig4:castle} shows an example). The bottom brick of a castle is called its \emph{base brick}. It is straightforward to see that a castle has an even number of reflex edges, because they all come in parallel pairs.

\begin{figure}[h]
\centering
\includegraphics[width=0.65\linewidth]{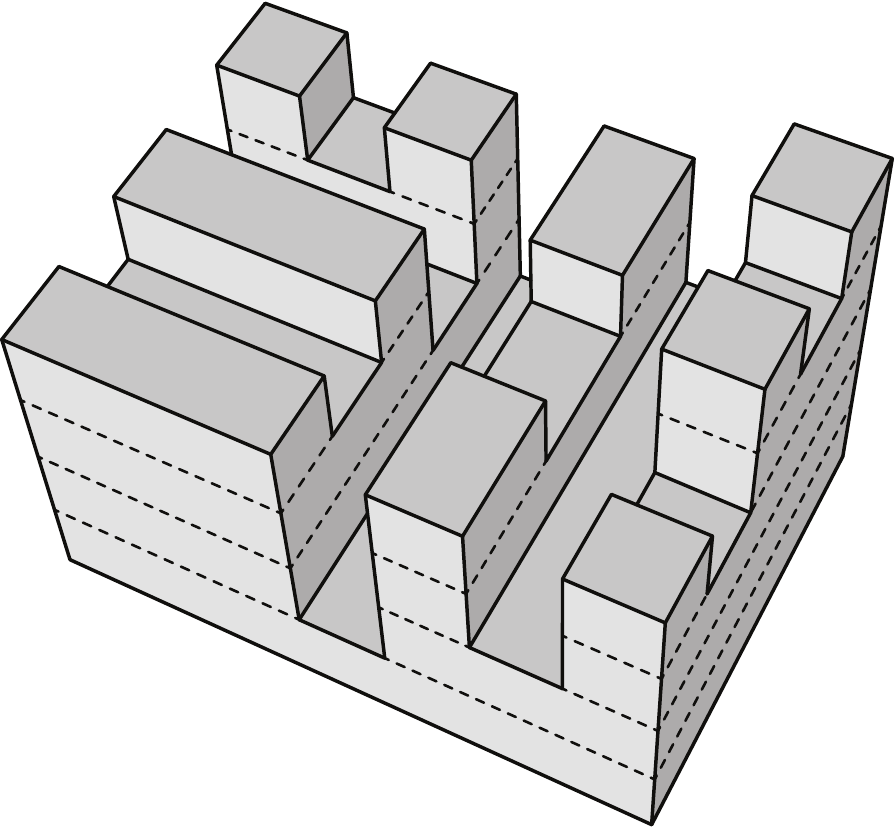}
\caption{Castle with dashed lines bordering contact rectangles between bricks.}
\label{fig4:castle}
\end{figure}

If a castle is turned upside down and its base is attached to another castle's base via a primitive contact rectangle, the resulting shape is a stack called \emph{double castle} (see Figure~\ref{fig4:dcastle}). It follows that a double castle has an odd number of reflex edges.

\begin{figure}[h]
\centering
\includegraphics[width=0.5\linewidth]{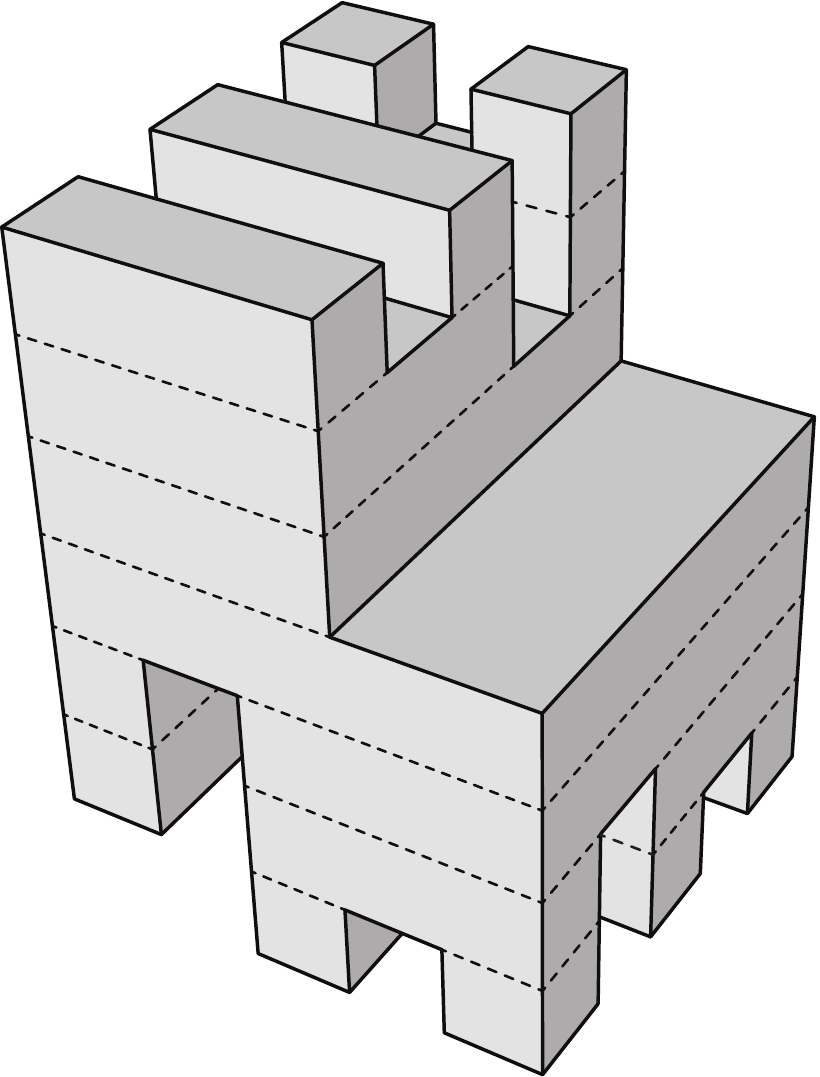}
\caption{Double castle.}
\label{fig4:dcastle}
\end{figure}

Castles and double castles will play a fundamental role in the next section.

\section{Bounds in terms of $r$}\label{sec4:2}
\markthissection{4.2. \ BOUNDS IN TERMS OF $r$}

Recall that orthogonal polygons with $n$ vertices and $h$ holes are guardable by $$\left\lfloor \frac{n+2h}{4} \right\rfloor$$ vertex guards, as established by O'Rourke (see~\cite[Theorem~5.1]{art}). Because $n=2r-4h+4$, where $r$ is the number of reflex vertices in the orthogonal polyhgon (by a straightforward induction on $h$), we can express the same upper bound in terms of $r$, as $$\left\lfloor \frac{r-h}{2} \right\rfloor+1.$$ Even though O'Rourke does not mention this aspect, a careful analysis of his method (a decomposition into L-shaped pieces, see~\cite[Sections~2.5,~2.6]{art}) reveals that, if $r>0$, all the guards can be chosen to lie on \emph{reflex} vertices.

Naturally, the above extends to orthogonal \emph{prisms} of arbitrary \emph{genus} (as opposed to orthogonal polygons with holes), and reflex \emph{edge} guards (as opposed to reflex vertex guards).

As prevoiusly stated, in this section we are going to further generalize this result to 2-reflex orthogonal polyhedra, showing that the exact same upper bound in terms of $r$ holds. It turns out that the central part of O'Rourke's main argument can be considerably simplified, and then rephrased and generalized in terms of polyhedra (\cite[Lemmas~2.13--2.15]{art} are condensed in our Lemma~\ref{l4:stack}), while the other parts of the proof require more sophisticated constructions and some novel ideas.

Eventually, everything boils down to guarding castles and double castles, so we will resolve these first, and then use them as building blocks to prove our full theorem.

We say that an orthogonal polyhedron is \emph{monotone} if it is a prism and if its intersection with any vertical line is either empty or a single line segment. In other terms, a monotone orthogonal polyhedron is constructed by extruding a monotone orthogonal polygon (refer to~\cite{art}).

\begin{lemma}\label{l4:mono}
Any open (resp.\ closed) monotone orthogonal polyhedron with $r>0$ reflex edges is orthogonally guardable by at most
\begin{equation}
\left\lfloor \frac{r}{2} \right\rfloor+1
\label{eq4:aaa}
\end{equation}
open (resp.\ closed) reflex edge guards.
\end{lemma}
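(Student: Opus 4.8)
The plan is to exploit the definition directly: a monotone orthogonal polyhedron $\mathcal{P}$ is by construction the extrusion of a monotone orthogonal polygon $Q$, and I would first pin down that this extrusion is horizontal — a vertical one would make every lateral edge of the prism vertical, contradicting the standing convention of this chapter that all reflex edges are horizontal. So, after relabeling axes, I would take $\mathcal{P}=\{(x,y,z):(x,z)\in Q,\ y\in[y_0,y_1]\}$ with $Q$ in the $xz$-plane. Then I would record the exact dictionary between $\mathcal{P}$ and $Q$: the edges of $\mathcal{P}$ parallel to $y$ are the segments $e_v=\{v\}\times[y_0,y_1]$ over the vertices $v$ of $Q$, the dihedral angle of $\mathcal{P}$ along $e_v$ equals the interior angle of $Q$ at $v$, and every remaining edge of $\mathcal{P}$ (lying in the two cap faces) meets its neighbor at a convex $90^\circ$ dihedral angle. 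Hence the reflex edges of $\mathcal{P}$ correspond bijectively to the reflex vertices of $Q$, so $Q$ has exactly $r$ of them; being monotone, $Q$ is simply connected, and $n=2r-4h+4$ with $h=0$ gives $n=2r+4$, so $\lfloor n/4\rfloor=\lfloor r/2\rfloor+1$.

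The heart of the argument, which I expect to be the main obstacle, is to show that orthogonal guarding of $\mathcal{P}$ by the edges $e_v$ factors exactly through ordinary planar visibility in $Q$. Writing $\pi(x,y,z)=(x,z)$, I would prove that $e_v$ orthogonally sees a point $p\in\mathcal{P}$ if and only if the vertex $v$ sees $\pi(p)$ inside $Q$: a sightline orthogonal to $e_v$ (which is parallel to $y$) is forced to lie in a plane $\{y=c\}$, and each such slice of $\mathcal{P}$ is a congruent copy of $Q$, so the sightline stays inside $\mathcal{P}$ precisely when its projection stays inside $Q$; taking $c$ equal to the $y$-coordinate of $p$ and the source $(v,c)\in e_v$ closes the equivalence. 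This is the step where orthogonality and the product structure must be reconciled, and where the open/closed distinction will demand care.

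With the dictionary in hand, the closed case would follow immediately: I would invoke the planar bound recalled just above (\cite[Theorem~5.1]{art}), which for $r>0$ guards $Q$ with $\lfloor n/4\rfloor=\lfloor r/2\rfloor+1$ guards placed on \emph{reflex} vertices, then lift each chosen vertex $v$ to the closed reflex edge $e_v$. By the factorization, these $\lfloor r/2\rfloor+1$ closed reflex edge guards orthogonally see all of $\mathcal{P}$.

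For the open case I would run the identical construction with the open edges $\open{e_v}$, the only subtleties being that the planar sightline from $v$ guaranteed by closed visibility in $Q$ might graze $\partial Q$, and that the source point must avoid the two endpoints of $e_v$. Both issues are local at the guard and I would dispatch them exactly as in the proof of Theorem~\ref{th:1}: since $p$ lies in the interior of the open polyhedron, its $y$-coordinate is interior to $[y_0,y_1]$, so an admissible source exists on $\open{e_v}$; and because each guard lies on a \emph{reflex} edge, Proposition~\ref{prop:1} ensures that a point of $\open{e_v}$ arbitrarily close to $v$ still orthogonally sees a full neighborhood of $p$, so deleting the endpoints costs nothing. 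This would yield the same bound $\lfloor r/2\rfloor+1$ for open reflex edge guards and complete the lemma.
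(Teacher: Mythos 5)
Your reduction to the planar theorem is sound for the \emph{closed} half of the statement: the factorization lemma (a sightline orthogonal to $e_v$ must lie in a cross-section plane $\{y=c\}$, where it is ordinary planar visibility from $v$ in a copy of $Q$) is correct, and lifting O'Rourke's $\lfloor n/4\rfloor=\lfloor r/2\rfloor+1$ reflex-vertex guards then gives closed orthogonal guarding with the right count, since $n=2r+4$ for a simply connected orthogonal polygon. Note that this is a genuinely different route from the paper's proof, which never invokes O'Rourke's theorem here: it sorts the reflex edges by $x$-coordinate, assigns guards to the odd-indexed edges plus $e_r$ when $r$ is even, and observes that the slab of the polyhedron between the abscissas $x_{i-1}$ and $x_{i+1}$ is a simple step shape whose reflex corner is $e_i$, so each guard orthogonally covers its slab; monotonicity is exploited directly rather than through the planar art-gallery machinery.

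The open half of your proof, however, has a genuine gap. The first ``subtlety'' you list---that the planar sightline from $v$ guaranteed by closed visibility may graze $\partial Q$---is \emph{not} local at the guard: the sightline can run along an edge of $Q$ emanating from $v$ and continue past a far reflex vertex into the interior, or pass tangentially through a distant reflex vertex, so open visibility from $v$ can fail at points arbitrarily far from $v$. Your proposed fix cannot repair this: since orthogonal visibility forces every admissible source on $\open{e_v}$ to project to the same planar point $v$, all sources yield the same planar sightline, and there is no freedom to perturb it; moreover, Proposition~\ref{prop:1} and the argument of Theorem~\ref{th:1} both presuppose that the target already lies in the visibility region \emph{with respect to the open polyhedron}, which is exactly what is in question. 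What you actually need is an open-visibility strengthening of O'Rourke's theorem---that his reflex-vertex guards see every interior point along a sightline meeting $\partial Q$ only at the guard itself---and this is neither stated in the paper (Section 4.2 recalls only the closed version) nor automatic: already in a monotone polygon such as a rectangle with one rectangular notch, a reflex vertex sees some interior points in the closed sense only along sightlines running on $\partial Q$, so closed guarding does not formally imply open guarding and the specific structure of the partition must be re-examined. The paper's slab construction sidesteps this entirely: each slab between consecutive reflex-edge abscissas is an L-shaped or rectangular step, and the segment from its reflex corner to any interior point of that slab meets the boundary only at the corner, which is why the paper gets the open case for free.
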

\begin{proof}
Without loss of generality, suppose that the reflex edges are $y$-parallel (hence, all the points in each reflex edge have the same $x$ coordinate). Sort all reflex edges by increasing $x$ coordinate (breaking ties arbitrarily), and let $\{e_i\}_{1\leqslant i\leqslant r}$ be the sorted sequence. Now assign a guard to each edge whose index is an odd number, plus a guard to $e_r$ (if $r$ is even). Thus, the bound (\ref{eq4:aaa}) is matched.

To show that the polyhedron is guarded, let $x_i$ be the $x$ coordinate of $e_i$, for $1\leqslant i\leqslant r$. Additionally, let $x_0$ and $x_{r+1}$ be the lowest and the highest $x$ coordinate of a vertex of the polyhedron, respectively. Then, it is straightforward to see that a guard lying on $e_i$ guards at least the points in the polyhedron whose $x$ coordinate lies between $x_{i-1}$ and $x_{i+1}$, as Figure~\ref{fig4:mono} suggests.
\end{proof}

\begin{figure}[h]
\centering
\includegraphics[width=0.65\linewidth]{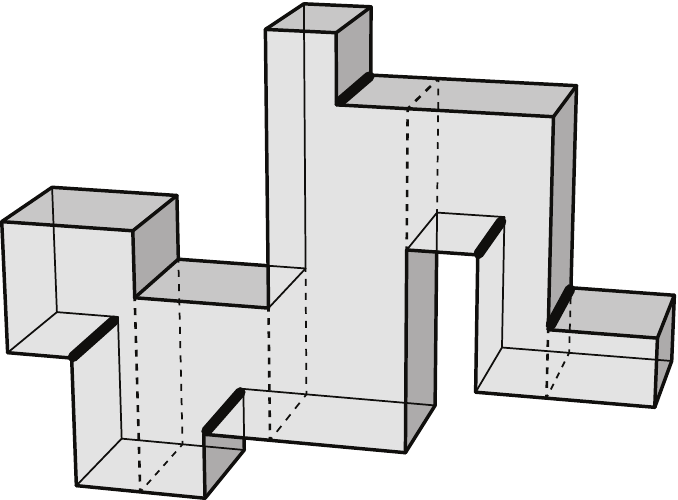}
\caption{Guarding a monotone orthogonal polyhedron. Thick reflex edges represent guards.}
\label{fig4:mono}
\end{figure}

Of course, among monotone orthogonal polyhedra there are all 1-reflex castles. In the next lemma we show that we can actually use one less guard for the remaining castles.

\begin{lemma}\label{l4:castle}
Any open (resp.\ closed) castle with $2r$ reflex edges that is not a prism is orthogonally guardable by at most $r$ open (resp.\ closed) reflex edge guards.
\end{lemma}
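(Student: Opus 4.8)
The plan is to argue by induction on the number of bricks, peeling off the base brick and exploiting two structural features of castles. First, the reflex edges come in parallel pairs, one pair for each \qq{branching} brick, and every such reflex edge spans the \emph{full width} of the slab it borders: since a castle is a stack, each brick sits on its parent through a primitive (flush-on-three-sides) contact, so the contact rectangle has a single interior side that runs the whole extent of the slab. This full-width property is exactly what makes orthogonal guarding strong here. A guard placed on the reflex edge $\rho_B$ between a slab $B$ and its parent box $A$ lies on the top face of $A$, so constant-coordinate (hence orthogonal) segments let it clear all of the convex box $A$; moreover, slicing the subcastle rooted at $B$ by the planes orthogonal to $\rho_B$ produces monotone orthogonal polygons, so the planar argument underlying Lemma~\ref{l4:mono} controls how far up the column a single guard reaches.

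Before the induction I would record a clean dichotomy: a castle is a prism if and only if all its reflex edges are mutually parallel. Indeed, a slab flush on three sides spans the full extent of its parent in the direction along its reflex edge, so if every reflex edge points the same way then every brick inherits the same cross-section and the shape is an extrusion. Consequently a castle that is \emph{not} a prism must branch in both horizontal directions, i.e.\ carry reflex edges parallel to two distinct axes. This single hypothesis is what will let me extract the saving of one guard relative to the monotone count $\lfloor 2r/2\rfloor+1 = r+1$ of Lemma~\ref{l4:mono}.

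For the inductive step, let $A$ be the base brick carrying the two sub-castles $L$ and $R$ rooted at its child slabs $B$ and $C$, and write $r_L,r_R$ for their branching-brick counts, so $r = r_L + r_R + 1$. When both $L$ and $R$ are themselves non-prism castles, the inductive hypothesis guards them with $r_L + r_R$ reflex edge guards, each subcastle covering its own base brick, and one extra guard on $\rho_B$ mops up the box $A$, giving $r$ in total. When a child subtree is a single leaf slab, its convex interior is swept by the very same extra guard placed on the reflex edge bordering it, since that guard already has to clear $A$; this is the mechanism behind the off-by-one improvement over the prism case.

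The hard part will be the case in which a sub-castle, say $R$, is a non-trivial \emph{prism}: Lemma~\ref{l4:mono} alone would spend $r_R+1$ guards on it, and in the worst case (both $L$ and $R$ prisms branching in perpendicular directions) the naive total reaches $r+1$, one too many. Here I would not peel $R$ off in isolation; instead I merge the base box into the monotone column $A \cup R$ and guard that histogram with the alternate-edge scheme of Lemma~\ref{l4:mono}, but anchored so that its bottom guard sits on $\rho_C$ and therefore simultaneously clears all of $A$. This is precisely where non-prism-ness is indispensable: because the castle also branches in the second horizontal direction, the \qq{top} slab that normally forces the $+1$ is seen orthogonally across that direction by a guard belonging to the other side, so the two independent $+1$'s collapse into one. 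Showing that this merge always succeeds — that the orthogonal slices never create a staircase occlusion escaping every chosen guard, and that the degenerate intermediate shapes of Remark~\ref{r4:degen} do no harm — is the crux. Finally, the open variant follows from the closed one essentially verbatim, replacing each boundary or endpoint witness by a nearby interior point through Proposition~\ref{prop:1}, just as in Lemma~\ref{l4:mono}.
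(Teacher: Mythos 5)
Your skeleton---induction peeling off the base brick, the prism/non-prism dichotomy for the two child sub-castles, Lemma~\ref{l4:mono} applied to the union of the base brick with a child column, and the observation that non-prism-ness is what buys back the missing guard---is essentially the paper's. The cases where neither child is a prism, or where a child is a bare cuboid, go through as you describe. But you have left the decisive case as an acknowledged ``crux,'' and the mechanism you sketch for it does not appear to work. When a child $\mathcal C_1$ is a non-trivial prism, the saving is \emph{not} obtained by having ``a guard belonging to the other side'' see the top slab of one column across the perpendicular direction: the two child columns are separate towers rising from the base brick, and a guard sitting partway up one of them in general sees nothing of the other tower. The actual saving is local to the contact edge. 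The two contact reflex edges $e_1,e_2$ on top of the base brick are necessarily parallel (this is exactly why the reflex edges of a castle come in parallel pairs), so ``not a prism'' forces at least one prism child, say $\mathcal C_1$, to have its reflex edges \emph{orthogonal} to $e_1$. Such a child, being an extrusion in the direction orthogonal to $e_1$ with every slab flush on three sides of its parent, is entirely (orthogonally) visible to $e_1$ alone, together with the base brick (Figure~\ref{fig4:monoguard}); it therefore costs one guard instead of $r_1+1$, and since $r_1\geqslant 1$ in this situation the count closes. The other child is then disposed of either by induction, or by merging it with the base brick into a monotone polyhedron, or by the same single-contact-edge argument, exactly as in your easier cases.

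A smaller inaccuracy: in the merged monotone column there is no single ``bottom guard anchored at $\rho_C$'' that clears all of the base brick. Lemma~\ref{l4:mono} covers the base brick collectively, each selected reflex edge seeing only the slab of coordinates between its two neighbouring reflex edges. This does not affect the count, but your phrasing attributes to one guard a visibility claim that the lemma does not provide, and it is worth not relying on it when you repair the main case.
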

\begin{proof}
We prove our claim by well-founded induction on $r$. So, suppose the claim is true for all castles that are not prisms and have fewer than $2r$ reflex edges, and let a castle $\mathcal C$ be given, having exactly $2r$ reflex edges. Assuming that $\mathcal C$ is not a prism, it cannot be a cuboid, and hence $r$ is strictly positive and two castles $\mathcal C_1$ and $\mathcal C_2$ lie on top of $\mathcal C$. Let $e_1$ (resp.\ $e_2$) be the reflex edge bordering the contact rectangle between the base brick of $\mathcal C$ and the base brick of $\mathcal C_1$ (resp.\ $\mathcal C_2$). Let also $2r_1$ and $2r_2$ be the numbers of reflex edges of $\mathcal C_1$ and $\mathcal C_2$, respectively. It follows that $$r=r_1+r_2+1.$$

Three cases arise.
\begin{itemize}
\item If neither $\mathcal C_1$ nor $\mathcal C_2$ is a prism, they both satisfy the inductive hypothesis and can be guarded by $r_1$ and $r_2$ reflex edge guards, respectively. It is straightforward to see that, if guards are placed in this fashion, also the contact rectangles lying on the base brick of $\mathcal C$ are guarded, even if guards and polyhedra are open. Now we just assign a guard to $e_1$ in order to guard the base block of $\mathcal C$, and our upper bound of $r$ guards is matched.

\item If $\mathcal C_1$ is a prism and $\mathcal C_2$ is not (the symmetric case is analogous), then the induction hypothesis applies to $\mathcal C_2$, and we place $r_2$ reflex edge guards accordingly (again, guarding both $\mathcal C_2$ and the contact rectangle shared with the base brick of $\mathcal C$, even if guards and polyhedra are open). Now, because $\mathcal C_1$ is a prism, its reflex edges are all parallel, and two sub-cases arise.
\begin{itemize}
\item If the reflex edges of $\mathcal C_1$ are parallel to $e_1$ (or $\mathcal C_1$ has no reflex edges), then $\mathcal C_1$ and the base brick of $\mathcal C$ together form a monotone orthogonal polyhedron with $2r_1+1$ reflex edges in total. By Lemma~\ref{l4:mono}, such polyhedron can be guarded by $r_1+1$ guards. Along with the previously assigned $r_2$ guards, this yields $r$ guards, as desired.

\item If the reflex edges of $\mathcal C_1$ are not parallel to $e_1$, then they are orthogonal to $e_1$. As a consequence, all of $\mathcal C_1$ is visible to $e_1$, and so is the base brick of $\mathcal C$, as Figure~\ref{fig4:monoguard} illustrates. It follows that $r_2+1$ guards are sufficient in this case, which is less than $r$.
\end{itemize}

\begin{figure}[h]
\centering
\includegraphics[width=0.45\linewidth]{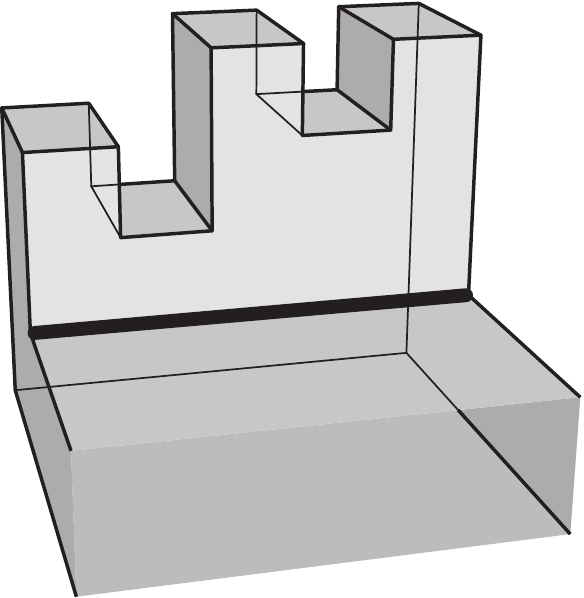}
\Large
\put(-30,100){$e_1$}
\put(-100,120){$\mathcal C_1$}
\caption{Edge $e_1$ orthogonally guards all of $\mathcal C_1$, plus the base brick.}
\label{fig4:monoguard}
\end{figure}

\item If both $\mathcal C_1$ and $\mathcal C_2$ are prisms, at least one of them (say, $\mathcal C_1$) must have reflex edges that are orthogonal to $e_1$, otherwise $\mathcal C$ would be a prism, too. Therefore $r_1\geqslant 1$, and $\mathcal C_1$ is guarded by assigning a guard to $e_1$. Such guard clearly sees also the contact rectangle to which it belongs. Again, two sub-cases arise.
\begin{itemize}
\item If the reflex edges of $\mathcal C_2$ are parallel to $e_2$, then $\mathcal C_2$ and the base brick of $\mathcal C$ form a monotone orthogonal polyhedron with $2r_2+1$ reflex edges, which is guardable by $r_2+1$ reflex edge guards, by Lemma~\ref{l4:mono}. Overall, we assigned $1+r_2+1 \leqslant r_1+r_2+1=r$ guards, as required.

\item If the reflex edges of $\mathcal C_2$ are orthogonal to $e_2$, then $e_2$ guards $\mathcal C_2$, along with the base brick of $\mathcal C$. We assigned only two guards, and $2\leqslant r_1+1\leqslant r$, concluding the proof.
\end{itemize}
\end{itemize}
\end{proof}

The two previous lemmas enable us to prove that our upper bound holds at least for double castles.

\begin{lemma}\label{l4:dcastle}
Any open (resp.\ closed) double castle with $r$ reflex edges is orthogonally guardable by at most
\begin{equation}
\left\lfloor \frac{r}{2} \right\rfloor +1
\label{eq4:ccc}
\end{equation}
open (resp.\ closed) reflex edge guards.
\end{lemma}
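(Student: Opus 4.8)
The plan is to mirror the proof of Lemma~\ref{l4:castle}, exploiting the defining decomposition of a double castle. By definition, a double castle $\mathcal{D}$ is made of two castles $\mathcal{A}$ and $\mathcal{B}$, one of them turned upside down, whose base bricks are joined through a single primitive contact rectangle; I denote by $e_0$ the unique reflex edge of that contact. Writing $2a$ and $2b$ for the numbers of reflex edges of $\mathcal{A}$ and $\mathcal{B}$ (each castle has an even number of reflex edges), the total is $r=2a+2b+1$, so the target bound~(\ref{eq4:ccc}) reads $\lfloor r/2\rfloor+1=a+b+1$ reflex edge guards. Since a double castle is simply connected, no genus-related degeneracy (cf.\ Remark~\ref{r4:degen}) can arise, and the open/closed distinction will be inherited verbatim from Lemmas~\ref{l4:mono} and~\ref{l4:castle}.

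First I would dispose of the generic case in which neither $\mathcal{A}$ nor $\mathcal{B}$ is a prism. Here Lemma~\ref{l4:castle} applies to each castle separately, guarding $\mathcal{A}$ with $a$ reflex edge guards and $\mathcal{B}$ with $b$ reflex edge guards. Because each castle-guarding covers its own base brick entirely and $\mathcal{D}=\mathcal{A}\cup\mathcal{B}$ as point sets (so every sight segment witnessing coverage inside $\mathcal{A}$ or $\mathcal{B}$ still lies inside $\mathcal{D}$), these $a+b\leqslant a+b+1$ guards already suffice, with the waist pocket of $e_0$ covered automatically as part of the base bricks.

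The remaining work is the case in which at least one of the two castles is a prism, so that Lemma~\ref{l4:castle} no longer applies and one must fall back on Lemma~\ref{l4:mono}, which is one guard more expensive. I would then case on the orientation of the prism's (mutually parallel) reflex edges relative to $e_0$, exactly as in the proof of Lemma~\ref{l4:castle}. When a prism $\mathcal{A}$ has reflex edges parallel to $e_0$, then $\mathcal{A}$ together with the base brick of $\mathcal{B}$ forms a monotone orthogonal polyhedron with $2a+1$ reflex edges, guardable by $a+1$ guards via Lemma~\ref{l4:mono}; combined with the $b$ guards for $\mathcal{B}$ this yields $a+b+1$. When instead the prism's reflex edges are orthogonal to $e_0$, the edge $e_0$ orthogonally sees all of that prism together with the waist region, so a single guard on $e_0$ handles it (the three-dimensional analogue of Figure~\ref{fig4:monoguard}). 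The fully symmetric sub-analysis when both castles are prisms splits into three configurations: both orientations parallel to $e_0$, in which case $\mathcal{D}$ is itself monotone and Lemma~\ref{l4:mono} applies directly to give $\lfloor r/2\rfloor+1$; one parallel and one orthogonal, handled by combining the two treatments above and checking that the arithmetic stays within $a+b+1$ (using $b\geqslant1$, with the degenerate case of a single-cuboid castle, $b=0$, absorbed by one extra guard on $e_0$); and both orthogonal, in which case one guard on $e_0$ orthogonally illuminates both prisms and the waist at once.

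I expect the main obstacle to be the bookkeeping at the waist: verifying in each case that the concave pocket created by the primitive contact $e_0$ is actually covered, and that the monotone sub-polyhedra obtained by gluing a prism to a neighboring base brick are genuinely monotone and have exactly $2a+1$ (resp.\ $2b+1$) reflex edges, so that Lemma~\ref{l4:mono} returns precisely the guard count needed. A secondary delicate point is confirming, in the orthogonal sub-cases, that orthogonal visibility from $e_0$ really reaches across the waist into the entire up-growing or down-growing prism; once this geometric claim is in place, the count $a+b+1$ falls out uniformly from the two preceding lemmas.
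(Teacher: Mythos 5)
Your proposal is correct and follows essentially the same route as the paper's proof: the same three-way case split on whether each constituent castle is a prism, with Lemma~\ref{l4:castle} handling non-prisms, Lemma~\ref{l4:mono} handling prisms whose reflex edges are parallel to the contact edge, and a single orthogonally-seeing guard on that contact edge handling prisms whose reflex edges are orthogonal to it. The only cosmetic difference is that you glue a prism to the opposite base brick before invoking Lemma~\ref{l4:mono} (borrowing the device from the proof of Lemma~\ref{l4:castle}, which neatly absorbs the $a=0$ cuboid case), whereas the paper applies Lemma~\ref{l4:mono} to the prism alone and treats the cuboid sub-case separately; the guard counts come out the same.
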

\begin{proof}
Let $\mathcal D$ be a double castle with $r$ reflex edges, made of two  castles $\mathcal C_1$ and $\mathcal C_2$ having $2r_1$ and $2r_2$ reflex edges, respectively. Let $e$ be the reflex edge lying on the contact rectangle between the two castles. Because $r=2r_1+2r_2+1$, (\ref{eq4:ccc}) can be rewritten as
\begin{equation}
r_1+r_2+1.
\label{eq4:ccc2}
\end{equation}

We distinguish three cases.
\begin{itemize}
\item If neither $\mathcal C_1$ nor $\mathcal C_2$ is a prism, by Lemma~\ref{l4:castle} they can be guarded by at most $r_1$ and $r_2$ reflex edge guards, respectively. Also the contact rectangle between $\mathcal C_1$ and $\mathcal C_2$ is clearly guarded, even if guards and polyhedra are open, and (\ref{eq4:ccc2}) is then satisfied.

\item If $\mathcal C_1$ is a prism and $\mathcal C_2$ is not (the symmetric case is analogous), by Lemma~\ref{l4:castle} $\mathcal C_2$ can be guarded by at most $r_2$ reflex edge guards. Two sub-cases arise.
\begin{itemize}
\item If $r_1=0$, $\mathcal C_1$ is a cuboid and can be guarded by $e$. In total, $r_2+1$ guards have been assigned, which matches (\ref{eq4:ccc2}).

\item If $r_1>0$, $\mathcal C_1$ can be guarded by $r_1+1$ guards, by Lemma~\ref{l4:mono}. Together with the previous $r_2$ guards, these match the upper bound (\ref{eq4:ccc2}).
\end{itemize}
Again, the contact rectangle between the two castles is clearly guarded.

\item If both $\mathcal C_1$ and $\mathcal C_2$ are prisms, we have three sub-cases.
\begin{itemize}
\item If the reflex edges of $\mathcal C_1$ and $\mathcal C_2$ are parallel to $e$, then $\mathcal D$ is a monotone orthogonal polyhedron with $2r_1+2r_2+1$ reflex edges, and according to Lemma~\ref{l4:mono} it is guardable by $r_1+r_2+1$ reflex edge guards, which agrees with (\ref{eq4:ccc2}). This holds also if $r_1=0$ or $r_2=0$.

\item If the reflex edges of $\mathcal C_1$ are parallel to $e$ and the reflex edges of $\mathcal C_2$ are orthogonal to $e$ (the symmetric case is analogous), then we may assume that $r_2\geqslant 1$. We assign one guard to $e$ in order to guard $\mathcal C_2$, and $r_1+1$ guards to reflex edges in $\mathcal C_1$, in accordance with Lemma~\ref{fig4:mono}. In total, we assigned $r_1+1+1 \leqslant r_1+r_2+1$ guards, thus satisfying (\ref{eq4:ccc2}).

\item Finally, if the reflex edges of both $\mathcal C_1$ and $\mathcal C_2$ are orthogonal to $e$, then a guard assigned to $e$ sees all of $\mathcal D$, which obviously satisfies (\ref{eq4:ccc2}).
\end{itemize}
\end{itemize}
\end{proof}

Now that we know how to guard double castles, we can move on to more general shapes, such as stacks. This is the last intermediate step before our main theorem, and it generalizes~\cite[Lemmas~2.13--2.15]{art}.

Recall that a stack has only primitive contact rectangles. If two neighboring bricks $\mathcal B$ and $\mathcal B'$ share a type-(d) (resp.\ type-(i)) contact rectangle (refer to Figure~\ref{fig4:1}), and $\mathcal B$ lies below (resp.\ above) $\mathcal B'$, then $\mathcal B$ is a \emph{parent} of $\mathcal B'$, and $\mathcal B'$ is a \emph{child} of $\mathcal B$. It follows that a brick in a stack can have at most two children above and two children below, and shares exactly one reflex edge with each child (see Figure~\ref{fig4:primitive}). Moreover, if a brick has one parent above (resp.\ below), then it has no other neighboring bricks above (resp.\ below). Finally, if a brick $\mathcal B$ has exactly one child $\mathcal B'$ on one side (regardless of the number of neighboring bricks on the other side), then the reflex edge shared by $\mathcal B$ and $\mathcal B'$ is said to be \emph{isolated}. It follows that the number of reflex edges in a stack and the number of isolated reflex edges have the same parity.

\begin{figure}[h]
\centering
\subfigure[]{\includegraphics[width=0.25\linewidth]{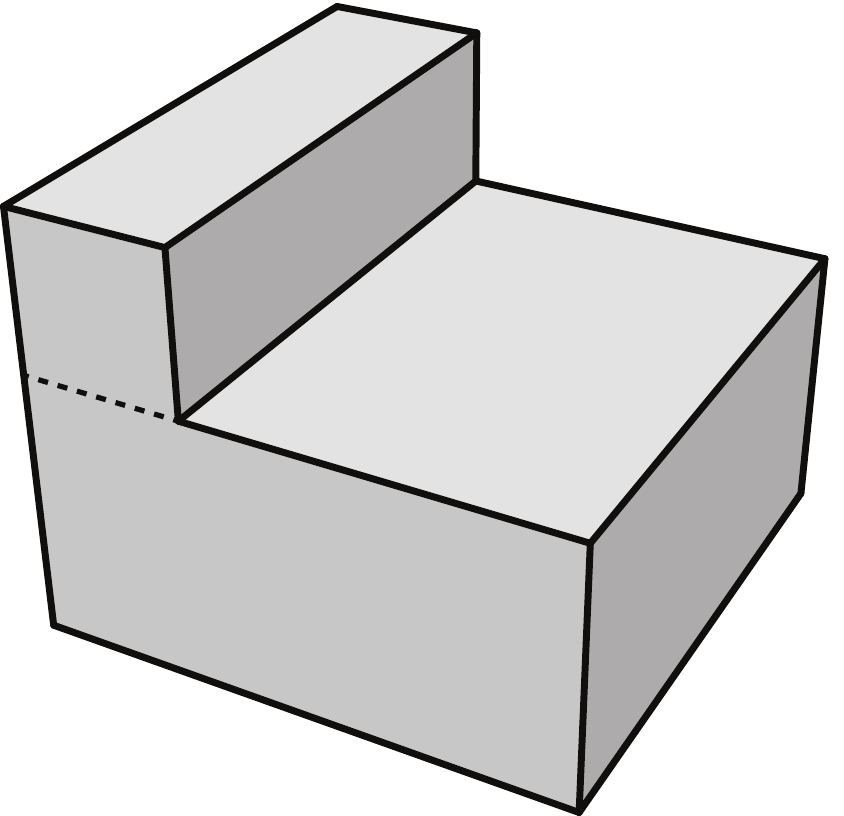}}\qquad
\subfigure[]{\includegraphics[width=0.25\linewidth]{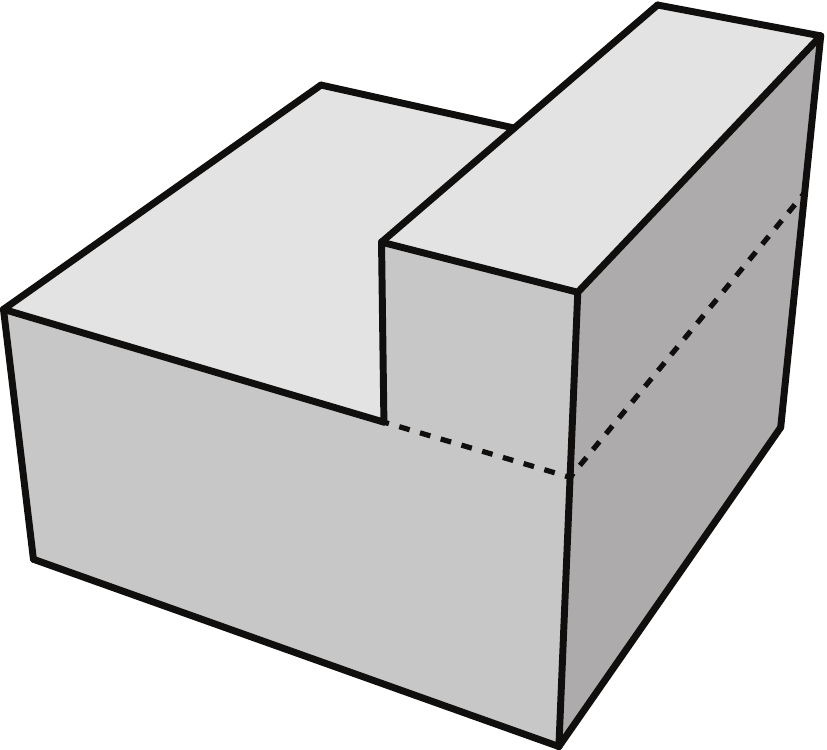}}\qquad
\subfigure[]{\includegraphics[width=0.25\linewidth]{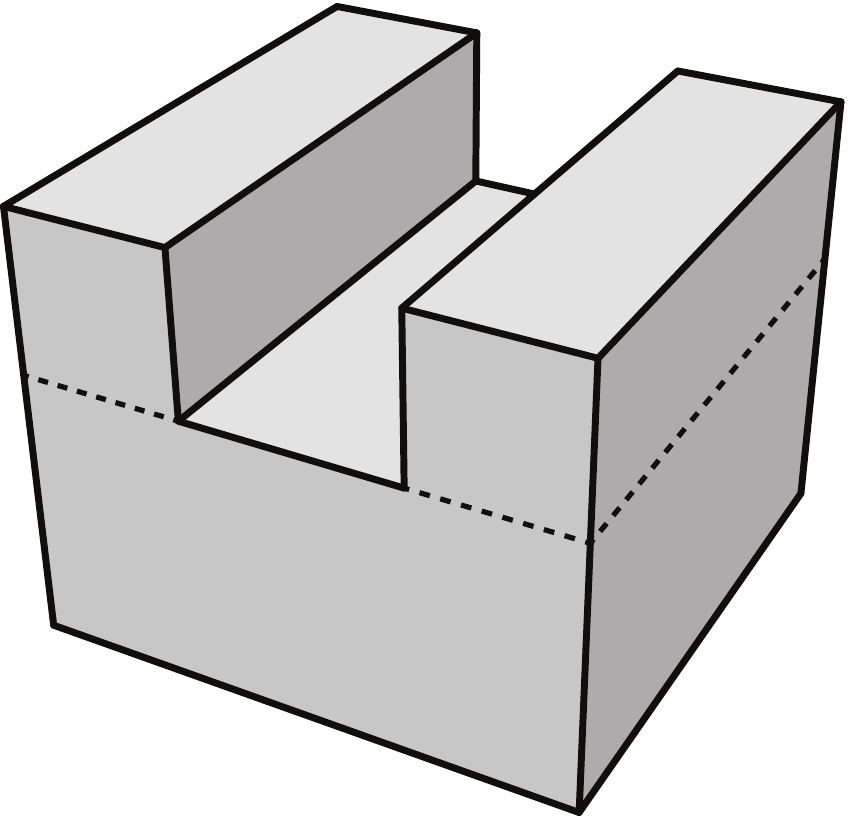}}\\ \vspace{0.25cm}
\subfigure[]{\includegraphics[width=0.25\linewidth]{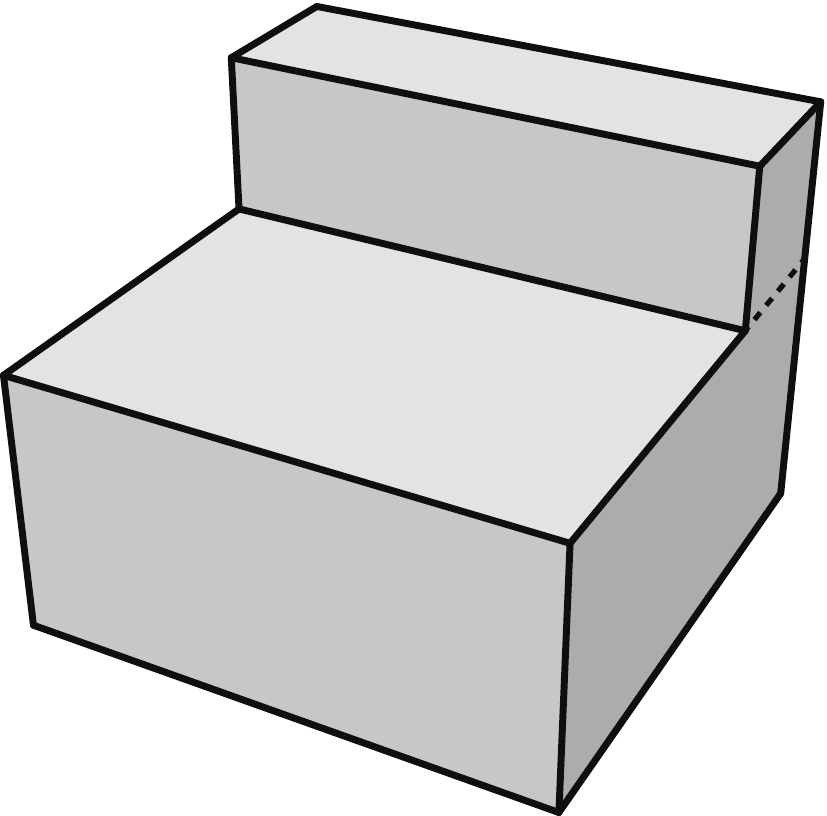}}\qquad
\subfigure[]{\includegraphics[width=0.25\linewidth]{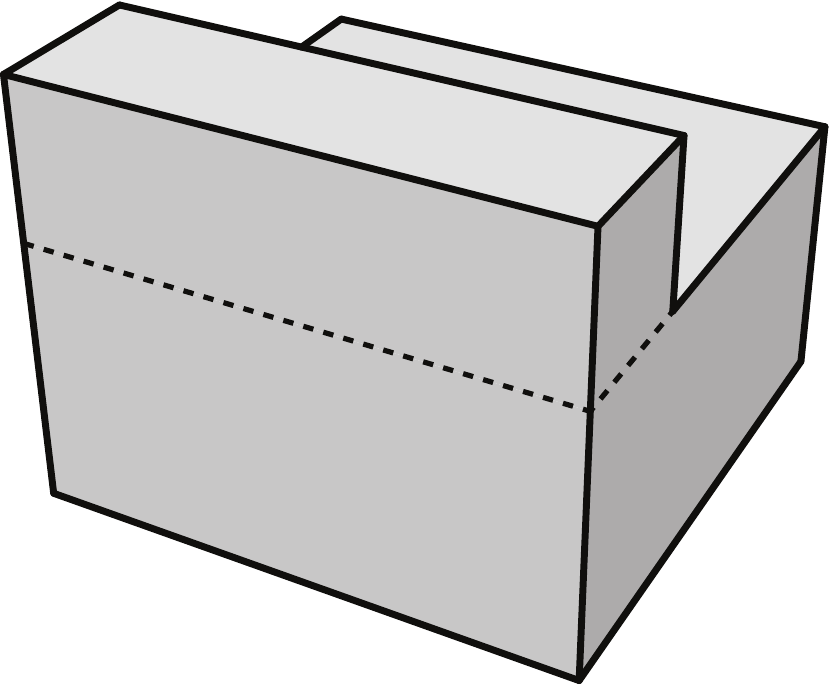}}\qquad
\subfigure[]{\includegraphics[width=0.25\linewidth]{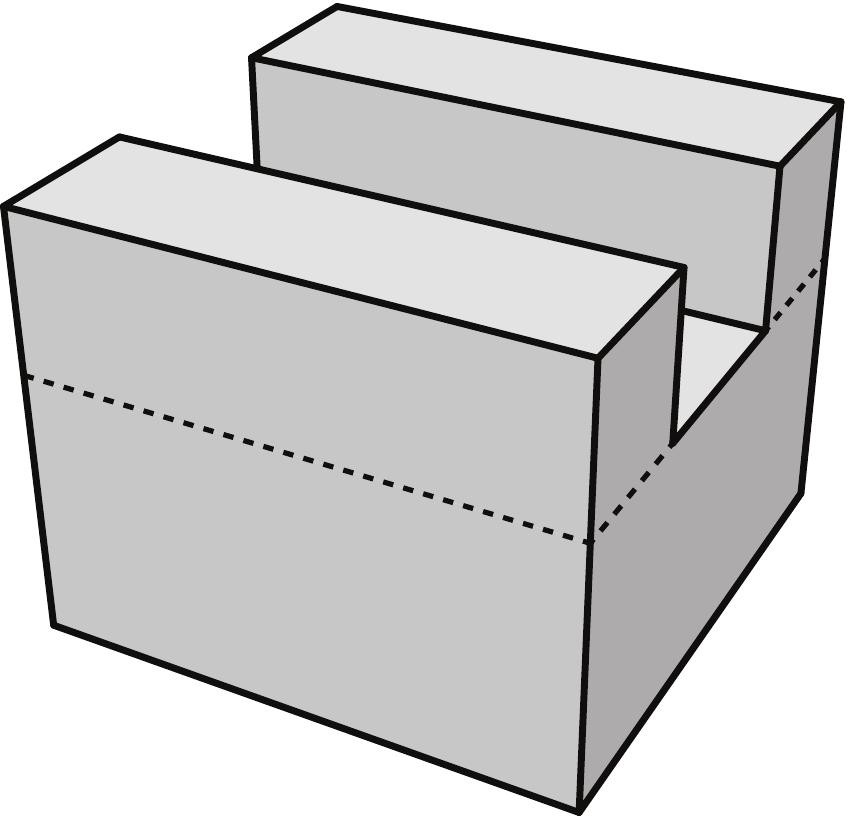}}
\caption{Different configurations for the upper children of a brick in a stack.}
\label{fig4:primitive}
\end{figure}

\begin{lemma}\label{l4:stack}
Any open (resp.\ closed) stack with $r>0$ reflex edges and genus $g$ is orthogonally guardable by at most
\begin{equation}
\left\lfloor \frac{r-g}{2} \right\rfloor+1
\label{eq4:ddd}
\end{equation}
open (resp.\ closed) reflex edge guards.
\end{lemma}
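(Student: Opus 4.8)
The plan is to argue by strong induction on $r$, reducing the genus to zero essentially for free and then invoking the castle and double castle machinery of Lemmas~\ref{l4:castle} and~\ref{l4:dcastle}. The organizing observation is that a stack is encoded by its \emph{brick-adjacency graph} $H$, whose nodes are the bricks and whose edges are the primitive contacts; since every primitive contact carries exactly one reflex edge, $H$ has exactly $r$ edges, and for a connected stack one checks (by gluing topological balls along disks) that the genus equals the cyclomatic number of $H$, i.e.\ $g=r-b+1$, where $b$ is the number of bricks. In particular $g>0$ forces a cycle in $H$.

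First I would dispose of positive genus. If $g>0$, I pick a reflex edge $e$ lying on a cycle of $H$ and cut the stack along the contact rectangle of $e$. By Remark~\ref{r4:degen} this does not disconnect the stack but merely lowers its genus by one, producing a (possibly degenerate) stack $\mathcal S'$ with $r-1$ reflex edges and genus $g-1$. Cutting only introduces a new pair of coincident boundary faces, so it can only occlude visibility; hence any point orthogonally seen in $\mathcal S'$ is still orthogonally seen in the original stack, and a guarding set for $\mathcal S'$ guards the original. By the inductive hypothesis $\mathcal S'$ is guardable by at most $\lfloor((r-1)-(g-1))/2\rfloor+1=\lfloor(r-g)/2\rfloor+1$ reflex edge guards, which is exactly the desired bound. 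Thus I may assume $g=0$, so that $H$ is a tree and $r=b-1$, and the target reduces to $\lfloor r/2\rfloor+1$.

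For the genus-zero case I would induct on the tree $H$, peeling off a castle (or, when it degenerates to a prism, a monotone piece) at the top or bottom of the stack and charging its guards to Lemmas~\ref{l4:castle}, \ref{l4:dcastle} and~\ref{l4:mono}. The geometric fact that drives this is that a guard placed on a reflex edge $e$ shared by a parent brick $\mathcal B$ and a child brick $\mathcal B'$ orthogonally sees \emph{all} of $\mathcal B'$ together with the slab of $\mathcal B$ spanned by $e$; since in a primitive contact the child is flush on the two sides parallel to $e$, that slab is all of $\mathcal B$ whenever $\mathcal B$ carries no further branching in that range, and otherwise the parallel reflex edges tile $\mathcal B$ exactly as in Lemma~\ref{l4:mono}. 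A maximal upward subtree forms a castle attached to the remainder by a single reflex edge, which I guard by Lemma~\ref{l4:castle} (or Lemma~\ref{l4:mono} if it is a prism) and then cut away; the bookkeeping follows the three cases of Lemma~\ref{l4:dcastle} (both, one, or neither attached piece is a prism, with sub-cases by the orientation of the reflex edges relative to the connecting edge), each time verifying that $\lfloor r/2\rfloor+1$ is met and using the parity fact that $r$ and the number of isolated reflex edges agree to decide when the final unit is genuinely needed.

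The main obstacle will be this genus-zero step: locating a peelable castle/double-castle substructure inside an arbitrarily branching tree-stack and matching the budget $\lfloor r/2\rfloor+1$ across all prism and orientation sub-cases. A secondary recurring difficulty is the open setting, where I must ensure that the contact rectangles joining a peeled-off piece to the rest remain guarded by \emph{open} edge guards; this is handled exactly as in Theorem~\ref{th:1}, via Proposition~\ref{prop:1} and the vertex-type analysis, since an open reflex edge guard still dominates a small neighborhood of its endpoints. Once the genus has been removed for free, everything else is a careful but essentially routine extension of the double-castle argument.
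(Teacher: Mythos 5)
Your genus-reduction step is essentially the paper's: cut along a contact rectangle on a cycle, drop $r$ and $g$ by one each, and apply induction (the paper additionally checks, for open stacks, that the cut rectangle $R$ itself stays guarded because the guards covering the child brick's interior are not coplanar with $R$ -- your appeal to Theorem~\ref{th:1} is not quite the right mechanism here, but this is a minor repair). The genuine gap is in the genus-zero step, and it is precisely the part you flag as ``the main obstacle'' without resolving it. Your structural claim that ``a maximal upward subtree forms a castle'' is false: a castle requires every brick to have zero or two children on its top face, whereas a general tree-stack can have a chain of bricks each with exactly one child above (e.g.\ three bricks stacked $A$, $B$, $C$ with single primitive contacts). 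No nontrivial castle can be peeled from such a stack, so Lemma~\ref{l4:castle} is not available, and the accounting then breaks: splitting a stack with $r$ reflex edges at an arbitrary contact into pieces with $r_1$ and $r_2$ reflex edges ($r=r_1+r_2+1$) and charging each piece $\lfloor r_i/2\rfloor+1$ gives $\lfloor r/2\rfloor+2$ whenever $r_1$ and $r_2$ are both even. The ``$+1$'' is only recovered when one piece has an \emph{odd} number of reflex edges (so that $\lfloor(2r_1+1)/2\rfloor+1=r_1+1$) or is a non-prism castle (so Lemma~\ref{l4:castle} saves a unit).

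The paper's proof is organized around exactly this point, via the notion of an \emph{odd cut}. Its case analysis for $g=0$ is: (i) if $r$ is even, every contact rectangle yields an odd cut, since the two resulting reflex-edge counts sum to the odd number $r-1$; (ii) if some brick has exactly one neighbour above and exactly one below, one of its two bordering contact rectangles yields an odd cut by a parity argument; (iii) only in the remaining case ($r$ odd and no such brick) does it prove that the stack above an isolated reflex edge is a castle, hence that the \emph{whole} stack is a double castle, and invoke Lemma~\ref{l4:dcastle} globally rather than as a peeling step. Your proposal is missing the odd-cut device entirely, and the castle/double-castle structure you want to peel off only exists after cases (i) and (ii) have been eliminated. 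To close the gap you would need to replace ``peel a maximal upward castle'' with this trichotomy (or an equivalent parity-aware cutting rule).
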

\begin{proof}
Our proof is by well-founded induction on $r$. Following O'Rourke (see~\cite[Section~2.5]{art}), we say that a contact rectangle $R$ in a simply connected stack $\mathcal S$ yields an \emph{odd cut} if $R$ partitions $\mathcal S$ into two stacks $\mathcal S_1$ and $\mathcal S_2$, one of which has an odd number of reflex edges. The presence of odd cuts in a simply connected stack is very desirable, in that it permits to successfully apply the inductive hypothesis.  Indeed, let $2r_1+1$ and $r_2$ be the number of reflex edges of $\mathcal S_1$ and $\mathcal S_2$ (the symmetric case is analogous). Then $r=2r_1+r_2+2$, because the cut resolves exactly one reflex edge. Recall that $\mathcal S$ is simply connected, and so are $\mathcal S_1$ and $\mathcal S_2$. Since $\mathcal S_1$ is non-convex, we can apply the inductive hypothesis on it, guarding it with at most
$$\left\lfloor \frac{2r_1+1}{2} \right\rfloor+1=r_1+1$$
reflex edge guards. Similarly, if $\mathcal S_2$ is non-convex, we can guard it with at most
\begin{equation}
\left\lfloor \frac{r_2}{2} \right\rfloor+1
\label{eq4:ddd2}
\end{equation}
reflex edge guards. On the other hand, if $\mathcal S_2$ is a cuboid, we can orthogonally guard it by assigning a guard to the reflex edge lying on $R$ (see Figure~\ref{fig4:primitive}). In this case, $r_2=0$, hence (\ref{eq4:ddd2}) still holds. As a result, we have guarded all of $\mathcal S$, except perhaps $R$ itself (if $\mathcal S$ is open), with at most
$$r_1+1+\left\lfloor \frac{r_2}{2} \right\rfloor+1 = \left\lfloor \frac{2r_1+r_2+2}{2} \right\rfloor+1=\left\lfloor \frac{r}{2} \right\rfloor+1$$
reflex edge guards. If $\mathcal S_2$ is a cuboid, then we have assigned a guard to the reflex edge lying on $R$, and so $R$ is obviously orthogonally guarded. Otherwise, let $\mathcal B_1$ and $\mathcal B_2$ be the bricks of $\mathcal S$ sharing $R$, with $\mathcal B_1$ child of $\mathcal B_2$. Then, $\mathcal B_1$'s interior is guarded by some guards not coplanar to $R$, which must also see $R$ itself. We stress that it makes sense to talk about odd cuts only for genus-zero stacks.

Now, let $\mathcal S$ be a stack with exactly $r>0$ reflex edges and genus $g$, and assume that the lemma's claim holds for all non-convex stacks with fewer than $r$ reflex edges. Four cases arise.

\begin{itemize}
\item Let $g>0$. Note that each cut along a contact rectangle either disconnects $\mathcal S$ or lowers its genus, and each cut resolves exactly one contact rectangle without creating or modifying other contact rectangles. Because $\mathcal S$ is partitioned by contact rectangles into cuboids, whose genus is zero, it follows that there exists at least one contact rectangle $R$ such that cutting $R$ yields a (degenerate, see Remark~\ref{r4:degen}) stack $\mathcal S'$ with $r'=r-1$ reflex edges and genus $g'=g-1$. Observe that $\mathcal S'$ is non-convex, because it is made of at least two bricks, so we can apply the inductive hypothesis on it and guard it with at most
$$\left\lfloor \frac{r'-g'}{2} \right\rfloor+1=\left\lfloor \frac{r-g}{2} \right\rfloor+1$$
reflex edge guards. $R$ may still be unguarded, if $\mathcal S$ is open. We show that this is not the case, as above: let $\mathcal B_1$ and $\mathcal B_2$ be the two bricks sharing $R$, with $\mathcal B_1$ child of $\mathcal B_2$. Then the interior of $\mathcal B_1$ is guarded in $\mathcal S'$ by some guards that are not coplanar with $R$, and hence $R$ is guarded in $\mathcal S$ by the same guards.

\item If $g=0$ and $r>0$ is even, then any contact rectangle yields an odd cut. Indeed, cutting $\mathcal S$ along a contact rectangle resolves exactly one reflex edge and leaves with two stacks (because $\mathcal S$ is simply connected). The reflex edge numbers in these two stacks must have opposite parity, because their sum must be odd. Hence, one of them is odd.

\item Let $g=0$, and let $\mathcal S$ have a brick $\mathcal B$ with exactly one neighboring brick above and exactly one below (Figure~\ref{fig4:oneone} sketches one possible configuration for $\mathcal B$). We show that one of the two contact rectangles bordering $\mathcal B$ yields an odd cut. Let $R_1$ be the upper contact rectangle and $R_2$ be the lower one. If $R_1$ does not yield an odd cut, the stack above $R_1$ has an even number of reflex edges. But then, the stack above $R_2$, which additionally includes $\mathcal B$ and the reflex edge belonging to $R_1$, has an odd number of reflex edges. It follows that either $R_1$ or $R_2$ yields an odd cut.

\begin{figure}[h]
\centering
\includegraphics[width=0.3\linewidth]{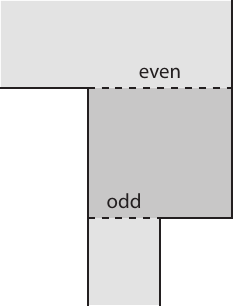}
\large
\put(-48,80){$\mathcal B$}
\put(-0,114){$R_1$}
\put(-100,42){$R_2$}
\caption{Sketch of a brick with one neighbor above and one below.}
\label{fig4:oneone}
\end{figure}

\item If none of the above is satisfied, we show that $\mathcal S$ must be a double castle, hence our claim holds by Lemma~\ref{l4:dcastle}. Let then $r$ be odd, so that $\mathcal S$ has at least one isolated reflex edge corresponding to a contact rectangle $R$. Additionally, let no brick of $\mathcal S$ have exactly one neighbor above and one neighbor below. We show that $\mathcal S'$, i.e., the stack above $R$, is a castle. (By a symmetric argument, the stack below $R$ will be an upside-down castle, and $\mathcal S$ will then be a double castle.) This is straightforward to see (build the castle from bottom to top, without putting bricks that have only one neighbor below and one above), however here follows a formal proof.

Let $d(\mathcal B)$ be the minimum number of bricks one has to traverse to reach brick $\mathcal B$ from $R$ (while always staying inside $\mathcal S'$), and let $\mathcal S'_h$ be the set of bricks $\mathcal B$ in $\mathcal S'$ such that $d(\mathcal B)\leqslant h$. We prove by induction on $h$ that $\mathcal S'_h$ is a castle whose base brick contains $R$.

The claim is true for $h=0$, because the brick just above $R$ (let it be $\widetilde{\mathcal B}$) is the only one in contact with $R$, and a brick is indeed a castle.

Observe that no brick in $\mathcal S'$ can be attached to the bottom face of $\widetilde{\mathcal B}$, because the reflex edge of $\mathcal S$ corresponding to $R$ is isolated. Hence, $\widetilde{\mathcal B}$ must have either zero or two neighbors in $\mathcal S'$, and both of them are above. It follows that $\mathcal S'_1$ is a castle, as well.

Let now $h\geqslant 1$ and let $\mathcal S'_h$ be a castle whose base brick $\widetilde{\mathcal B}$ contains $R$, and let us show that the same holds for $\mathcal S'_{h+1}$. Any brick $\mathcal B$ such that $d(\mathcal B)=h+1$ must be attached to some brick $\mathcal B'$ of $\mathcal S'_h$ such that $d(\mathcal B')=h$ (see Figure~\ref{fig4:stackcastle}). It is straightforward to see that any such $\mathcal B'$ has one parent brick below and no bricks above, in $\mathcal S'_h$. Hence, $\mathcal B$ can only be attached on top of $\mathcal B'$. Because $\mathcal B'$ cannot have only one top and one bottom neighbor in $\mathcal S'$, it follows that $\mathcal B$ cannot be a parent of $\mathcal B'$, but a child. Any other brick $\mathcal B''$ that is attached on top of $\mathcal B'$ in $\mathcal S'$ must also belong to $\mathcal S'_{h+1}$, by definition. As a consequence, one such brick $\mathcal B''$ must indeed be in $\mathcal S'_{h+1}$, otherwise $\mathcal B'$ would once again have only one top and one bottom neighbor in $\mathcal S'$. No new brick is attached to $\widetilde{\mathcal B}$, which stays the base of $\mathcal S'_{h+1}$. Thus the base brick of $\mathcal S'_{h+1}$ contains $R$, proving our claim and concluding the inductive proof.

\begin{figure}[h]
\centering
\includegraphics[width=0.4\linewidth]{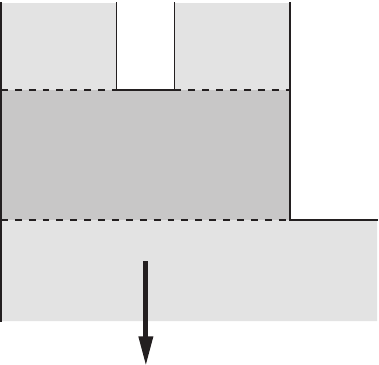}
\large
\put(-149,138){$\mathcal B$}
\put(-110,89){$\mathcal B'$}
\put(-71,138){$\mathcal B''$}
\put(-96,0){$\widetilde{\mathcal B}$}
\caption{Sketch of a brick in a castle.}
\label{fig4:stackcastle}
\end{figure}

Since $\mathcal S'$ is connected and contains only a finite number of bricks, for large-enough $h$ we have $\mathcal S'=\mathcal S'_h$, implying that $\mathcal S'$ is a castle.
\end{itemize}
\end{proof}

\begin{remark}
In the next theorem, we give an upper bound on the number of guards in terms of $r$, $g$ and $b$, where $b$ is the number of collars. The presence of $b$ may look awkward, but we will actually have to carry this parameter along to the next section, in order to prove Theorem~\ref{t4:2orthoe}. Regardless, the casual reader may happily ignore this term (as it just contributes to lowering the bound), and read (\ref{eq4:main}) simply as
$$\left\lfloor \frac{r-g}{2} \right\rfloor +1.$$
\end{remark}

\begin{theorem}\label{t4:2orthor}
Any open (resp.\ closed) 2-reflex orthogonal polyhedron with $r>0$ reflex edges, $b$ collars and genus $g$ is guardable by at most
\begin{equation}
\left\lfloor \frac{r-g}{2} \right\rfloor -b +1
\label{eq4:main}
\end{equation}
open (resp.\ closed) reflex edge guards.
\end{theorem}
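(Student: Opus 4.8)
The plan is to proceed by strong induction on $r$, reducing any non-convex 2-reflex orthogonal polyhedron to a stack, for which Lemma~\ref{l4:stack} already supplies the bound. If $\mathcal{P}$ is itself a stack, then all its contact rectangles are primitive, so it has no collars and $b=0$; Lemma~\ref{l4:stack} then gives $\left\lfloor (r-g)/2 \right\rfloor + 1 = \left\lfloor (r-g)/2 \right\rfloor - b + 1$ guards, settling the base case. Otherwise $\mathcal{P}$ has at least one non-primitive contact rectangle $R$, and the strategy is to cut $\mathcal{P}$ along $R$, apply the inductive hypothesis to the resulting piece(s), and reassemble the guard set.

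First I would classify the cut according to Figure~\ref{fig4:1}. Writing $k=r-r'$ for the number of reflex edges resolved, a non-primitive contact has $k\geqslant 2$, with $k=4$ exactly when $R$ is a collar (types~(a) and~(f)). The cut either disconnects $\mathcal{P}$ into two pieces (with $r_1+r_2=r-k$ and $g_1+g_2=g$), or, in positive genus, fails to disconnect and instead lowers the genus by one, producing a single possibly degenerate polyhedron (cf.\ Remark~\ref{r4:degen}) with $r-k$ reflex edges and genus $g-1$. In either case every collar other than $R$ survives the local cut, so the collar counts of the pieces sum to $b-1$ when $R$ is a collar and to $b$ otherwise. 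Applying the inductive hypothesis to each non-convex piece, and guarding any convex (cuboid) piece with a single guard on one of the formerly reflex edges bordering $R$, I would then verify the arithmetic. In every sub-case the governing inequality reduces to checking that $\left\lfloor (r-g)/2 \right\rfloor$ dominates the sum of the piece-bounds: the bound $k\geqslant 2$ always buys one unit of slack against the extra $+1$ coming from a second piece, while a collar cut trades a full collar ($b\to b-1$) for the resolution of $k=4$ reflex edges, which is exactly tight against the $-b$ term and is precisely why collars must be carried as a separate parameter.

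Finally, as in the proof of Lemma~\ref{l4:stack}, I would argue that the contact rectangle $R$ itself is guarded even in the open setting: of the two bricks meeting at $R$, the child brick has its interior guarded by reflex-edge guards that are not coplanar with $R$, and such guards necessarily see $R$ in the reassembled polyhedron. Since every interior point of $\mathcal{P}$ away from $R$ already lies in one of the inductively guarded pieces, this completes the guarding, with all guards lying on reflex edges.

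The main obstacle I anticipate is not a single geometric insight but the bookkeeping: one must confirm that the floor arithmetic closes across every combination of \{disconnecting, genus-reducing\} $\times$ \{collar, non-collar\} $\times$ \{both pieces non-convex, one piece a cuboid\}, and---more delicately---that cutting along $R$ does not silently change the classification (primitive, collar, or other) of the remaining contact rectangles, so that the parameters $b$ and $g$ of the pieces are exactly as claimed. A secondary technical point is the open case, where one must ensure that the guards inherited from the pieces cover a full neighborhood of $R$ rather than merely its relative interior.
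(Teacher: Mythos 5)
Your inductive skeleton---base case the stacks via Lemma~\ref{l4:stack}, inductive step a cut along a non-primitive contact rectangle with the collar traded at the rate of four resolved reflex edges against the $-b$ term---is exactly the paper's, and the floor arithmetic does close in all the sub-cases you enumerate, since every non-primitive contact resolves at least two reflex edges. The genuine gap is in your final step, which you flag as a ``secondary technical point'' but which is where the paper has to do real work. Your claim that the brick on one side of $R$ ``has its interior guarded by reflex-edge guards that are not coplanar with $R$'' is only true when $R$ coincides with an \emph{entire} horizontal face of one of the two bricks, i.e., for contact types (a)--(j) of Figure~\ref{fig4:1}. For the partially overlapping types (k)--(t), the remainder of that brick's horizontal face can carry further contact rectangles at the same height, and the inductive guarding of the cut piece may well assign the relevant guards to reflex edges lying \emph{in the plane of} $R$ but not on its boundary. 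In the open model such a guard cannot see the relative interior of $R$: the visibility segment would run inside the plane of $R$ and meet genuine faces of $\mathcal P$ lying in that plane. So the reassembled polyhedron may leave $R$ unguarded. (A lesser point: ``child brick'' is defined in the paper only for primitive contacts.)

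The paper closes this hole with a preprocessing construction that your proposal does not contain: before starting the induction, it eliminates every contact of type (k)--(t) by slicing $\mathcal P$ with the horizontal plane through such a contact, translating the upper part upward, and extruding the cross-section to fill the gap (Figure~\ref{fig4:stretch}); each offending contact becomes two contacts of types (a)--(j), while $r$, $b$, $g$ and the guardability correspondence are preserved. Only after this normalization does the projection-containment property hold for every remaining contact rectangle, which is what makes your last paragraph valid. Without some such device, or a direct ad hoc argument for each of types (k)--(t), the open case of the theorem is not proved. The rest of your concerns are unproblematic: cuts are local, so they neither create nor reclassify other contact rectangles, and the genus/cuboid bookkeeping works as you describe via Remark~\ref{r4:degen}.
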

\begin{proof}
Preliminarily, we eliminate all contact rectangles of types~(k) to~(t) (refer to Figure~\ref{fig4:1}), by transforming them into other types of contacts, without altering the genus of the polyhedron or the amount of collars and reflex edges.

Let $\mathcal P^\circ$ be the interior of a given 2-reflex polyhedron $\mathcal P$ (if $\mathcal P$ is open, $\mathcal P^\circ = \mathcal P$), and let $\alpha$ be a horizontal plane containing at least a contact rectangle of $\mathcal P$. We translate upward by $d>0$ every point of $\mathcal P^\circ$ that lies strictly above $\alpha$. As a result, the faces lying on $\alpha$ and pointing downward are translated upward, while those pointing upward do not move. Then we extrude $\mathcal P^\circ \cap \alpha$ upward by $d$, hence closing the ``gaps'', and obtaining a new open polyhedron. If $\mathcal P$ is open, then let $\mathcal P'$ be this new polyhedron. If $\mathcal P$ is closed, then let $\mathcal P'$ be the closure of the new polyhedron.

Figure~\ref{fig4:stretch} shows how this transformation acts on a contact rectangle of type~(k). Observe that the number of reflex edges is preserved.

\begin{figure}[h]
\centering
\includegraphics[width=0.95\linewidth]{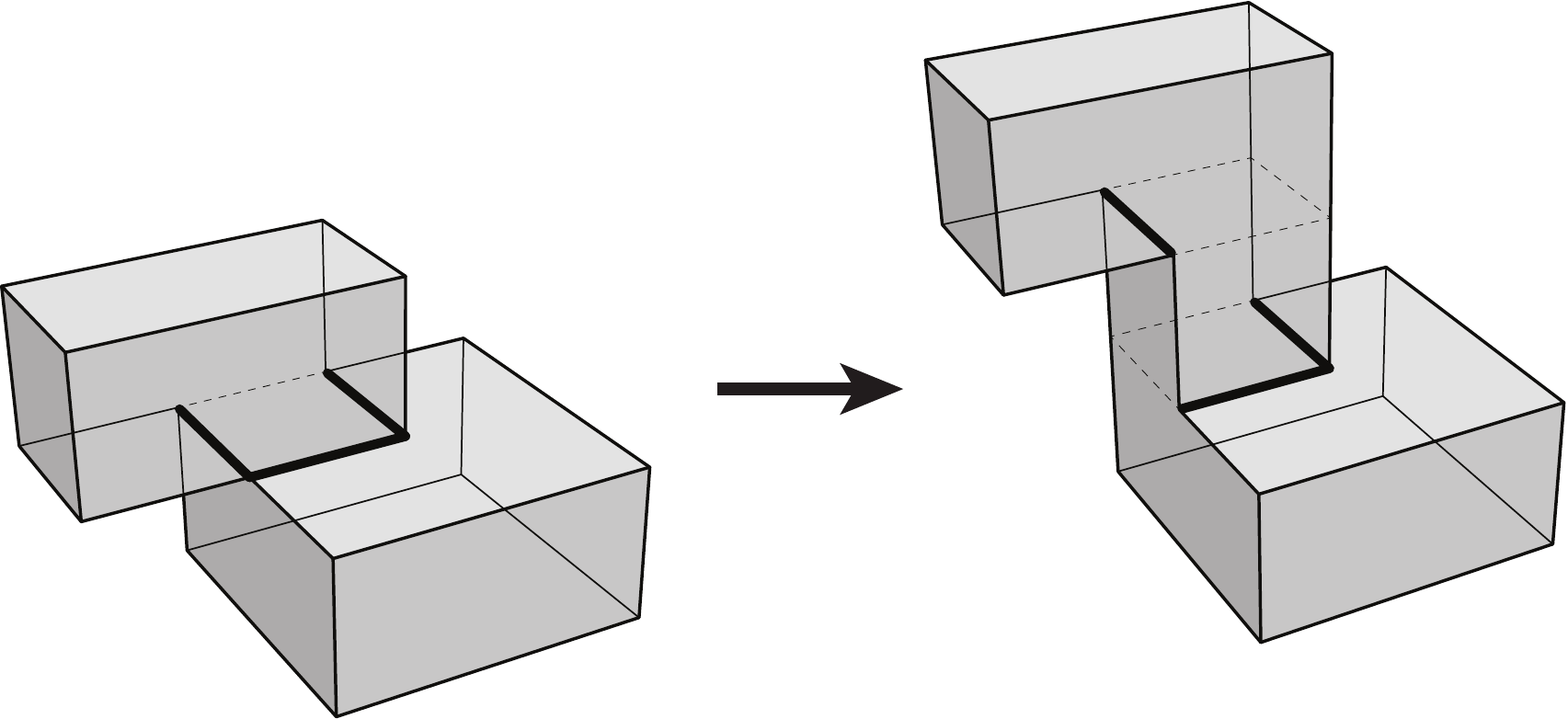}
\caption{Transforming a contact rectangle of type~(k) into one of type~(i) (above) and one of type~(c) (below), while preserving the number of reflex edges (thick lines).}
\label{fig4:stretch}
\end{figure}

In general, contact rectangles of type~(a) to~(e) that lie on $\alpha$ are not altered by our transformation, whereas those of type~(f) to~(j) are merely translated upward by $d$, but do not change their type. If follows that no collar is lost in the process (recall that collars are contacts of type (a) and (f)).

On the other hand, contacts of type~(k) to~(t) lying on $\alpha$ are eliminated, as detailed below.

\begin{itemize}
\item Each contact of type (k) is transformed into two contacts of type (i) and (c).
\item Each contact of type (l) is transformed into two contacts of type (i) and (b).
\item Each contact of type (m) is transformed into two contacts of type (i) and (e).
\item Each contact of type (n) is transformed into two contacts of type (h) and (c).
\item Each contact of type (o) is transformed into two contacts of type (j) and (e).
\item Each contact of type (p) is transformed into two contacts of type (h) and (d).
\item Each contact of type (q) is transformed into two contacts of type (g) and (d).
\item Each contact of type (r) is transformed into two contacts of type (j) and (d).
\item Each contact of type (s) is transformed into two contacts of type (i) and (d).
\item Each contact of type (t) is transformed into two contacts of type (i) and (d).
\end{itemize}

Observe that, if $\mathcal P'$ is guarded by reflex edge guards, then applying the ``reverse transformation'' to such guards yields a guarding set for $\mathcal P$ of equal size. In particular, referring to Figure~\ref{fig4:stretch}, if the two contact rectangles on the right-hand side are guarded, it implies that also the contact rectangle on the left-hand side must be guarded, no matter if $\mathcal P$ and $\mathcal P'$ are open or closed.

We can repeat the same procedure for different $\alpha$'s, until no contacts of type~(k) to~(t) are left. It will then suffice to prove our theorem for polyhedra featuring only contacts of type~(a) to~(j).

We proceed by induction on the number of non-primitive contact rectangles. The base case is given by non-convex stacks, for which (\ref{eq4:main}) holds due to Lemma~\ref{l4:stack} and the fact that $b=0$.

For the inductive step, let $\mathcal P$ be a 2-reflex orthogonal polyhedron with $r>0$ reflex edges, $b$ collars, genus $g$, and a non-primitive contact rectangle $R$. We cut $\mathcal P$ along $R$, thus resolving one non-primitive contact rectangle, and we distinguish two cases.
\begin{itemize}
\item If the cut does not disconnect $\mathcal P$, then it lowers its genus by 1. Let $\mathcal P'$ be the resulting polyhedron (which is degenerate, see Remark~\ref{r4:degen}), and let $g'=g-1$ be its genus. By inductive hypothesis, $\mathcal P'$ is guardable by
\begin{equation}
\left\lfloor \frac{r'-g'}{2} \right\rfloor -b'+1
\label{eq4:main2}
\end{equation}
guards, where $r'$ and $b'$ are, respectively, the number of reflex edges and collars of $\mathcal P'$.  Two sub-cases arise.
\begin{itemize}
\item If $R$ is a collar, then $r'=r-4$ and $b'=b-1$. By plugging these values into (\ref{eq4:main2}), we obtain that $\mathcal P\setminus R$ is guardable by
$$\left\lfloor \frac{r-4-g+1}{2} \right\rfloor -b+1+1 \leqslant \left\lfloor \frac{r-g}{2} \right\rfloor -b+1$$
reflex edge guards.

\item If $R$ is not a collar, then $b'=b$. Because $R$ is not primitive, $r'\leqslant r-2$ (refer to Figure~\ref{fig4:1}). Hence, $\mathcal P\setminus R$ is guardable by at most
$$\left\lfloor \frac{r-2-g+1}{2} \right\rfloor -b+1 \leqslant \left\lfloor \frac{r-g}{2} \right\rfloor -b+1$$
reflex edge guards.
\end{itemize}

\item If the cut disconnects $\mathcal P$ into $\mathcal P_1$ and $\mathcal P_2$, then $g_1+g_2=g$, where $g_1$ (resp.\ $g_2$) is the genus of $\mathcal P_1$ (resp.\ $\mathcal P_2$).  Let $r_1$ and $b_1$ (resp.\ $r_2$ and $b_2$) be, respectively,  the number of reflex edges and collars of $\mathcal P_1$ (resp.\ $\mathcal P_2$). Two sub-cases arise.

\begin{itemize}
\item If $R$ is a collar, then $r_1+r_2=r-4$ and $b_1+b_2=b-1$. If $r_1>0$, then we can apply the inductive hypothesis on $\mathcal P_1$ and guard it with at most
\begin{equation}
\left\lfloor \frac{r_1-g_1}{2} \right\rfloor -b_1+1
\label{eq4:main3}
\end{equation}
guards. Otherwise, $\mathcal P_1$ is a cuboid, and we can guard it by assigning a guard to any edge of $R$ (they are all reflex). In this case, $r_1=g_1=b_1=0$, and (\ref{eq4:main3}) is still satisfied. We proceed similarly with $\mathcal P_2$, and thus we have assigned a combined number of guards that is at most
$$\left\lfloor \frac{r_1-g_1}{2} \right\rfloor + \left\lfloor \frac{r_2-g_2}{2} \right\rfloor -b_1-b_2+2 \leqslant \left\lfloor \frac{r-4-g}{2} \right\rfloor -b+3 = \left\lfloor \frac{r-g}{2} \right\rfloor -b+1.$$
This many guards are then sufficient to guard $\mathcal P\setminus R$.

\item If $R$ is not a collar, then $b_1+b_2=b$. Because $R$ is not primitive, $r_1+r_2\leqslant r-2$ (refer to Figure~\ref{fig4:1}). Once again, if $r_1>0$, then we can apply the inductive hypothesis on $\mathcal P_1$ and guard it with a number of guards that is bounded by (\ref{eq4:main3}). Otherwise, $\mathcal P_1$ is a cuboid, and we can guard it by assigning a guard to any reflex edge of $R$ (there is at least one). In this case, $r_1=g_1=b_1=0$, and (\ref{eq4:main3}) is still satisfied. Again, we do the same with $\mathcal P_2$, and thus we have assigned a combined number of guards that is at most
$$\left\lfloor \frac{r_1-g_1}{2} \right\rfloor + \left\lfloor \frac{r_2-g_2}{2} \right\rfloor -b_1-b_2+2 \leqslant \left\lfloor \frac{r-2-g}{2} \right\rfloor -b+2 = \left\lfloor \frac{r-g}{2} \right\rfloor -b+1,$$
and $\mathcal P\setminus R$ is guarded.
\end{itemize}
\end{itemize}
It remains to show that, in all the cases above, also $R$ is guarded. For closed polyhedra, this is trivial, because $\mathcal P'$, $\mathcal P_1$ and $\mathcal P_2$ (as defined above) all contain $R$.

For open polyhedra, the claim follows from the fact that we eliminated all contact rectangles of type~(k) to~(t) beforehand. Indeed, if $R$ is of type~(a) to~(j), then it separates two bricks $\mathcal B_1$ and $\mathcal B_2$, such that the vertical projection of $\mathcal B_1$ is entirely contained in the vertical projection of $\mathcal B_2$ (refer to Figure~\ref{fig4:1}). Therefore, the guards that guard the interior of $\mathcal B_1$ in $\mathcal P'$, $\mathcal P_1$, or $\mathcal P_2$ either lie on the boundary of $R$ (if $\mathcal P_1$ or $\mathcal P_2$ is a cuboid), or do not lie on the same plane or $R$. In either case, $R$ is guarded in $\mathcal P$, even if $\mathcal P$ is open.
\end{proof}

\begin{observation}\label{o4:affine}
Theorem~\ref{t4:2orthor} holds more generally for 2-reflex \emph{3-oriented} polyhedra. Indeed, each 3-oriented polyhedron is mapped in an orthogonal polyhedron by a suitable affine transformation, which preserves visibility.
\end{observation}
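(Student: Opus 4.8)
The plan is to exhibit an explicit invertible affine map $A$ that carries the given 2-reflex 3-oriented polyhedron $\mathcal{P}$ to an orthogonal polyhedron, to verify that $A$ preserves every quantity appearing in Theorem~\ref{t4:2orthor} as well as visibility, and then to transport a guarding set back across $A^{-1}$.

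First I would construct the map. Let $v_1,v_2,v_3$ be the three face-normal directions of $\mathcal{P}$; by the definition of $3$-oriented polyhedron they form a basis of $\R^3$. Taking $A$ to be the linear map whose matrix has rows $v_1^\top, v_2^\top, v_3^\top$, a plane with normal $v_i$ is sent to a plane with normal parallel to the $i$-th coordinate axis, so $A$ is invertible and sends each face of $\mathcal{P}$ to an axis-orthogonal polygon. Being a homeomorphism of $\R^3$, $A$ maps $\mathcal{P}$ to a set $A(\mathcal{P})$ whose boundary is again a connected $2$-manifold, i.e.\ a bona fide (closed or open) polyhedron in the sense of Definition~\ref{polydef}, and in fact an orthogonal one.

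Next I would check the invariants. Since affine maps send segments bijectively to segments and preserve betweenness, $x$ sees $y$ with respect to $\mathcal{P}$ if and only if $Ax$ sees $Ay$ with respect to $A(\mathcal{P})$; this is the visibility-preservation claim, and it holds verbatim for both open and closed polyhedra because $A$ preserves interiors and boundaries. Edges map to edges and vertices to vertices, so the number of reflex edges is preserved. Here the one slightly delicate point is that reflexness is affine-invariant: at an interior point of an edge the polyhedron looks locally like a dihedral wedge, the edge being convex or reflex according as this wedge or its complement is convex, and invertible affine maps preserve convexity of sets. For the same reason the $2$-reflex hypothesis survives (reflex edges occupying at most two of the three edge directions are sent to reflex edges occupying at most two of the three axis directions), the genus $g$ is unchanged because $A$ is a homeomorphism, and the collar count $b$ of the orthogonal image is a well-defined combinatorial quantity in terms of which the bound is read.

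Finally, I would apply Theorem~\ref{t4:2orthor} to $A(\mathcal{P})$ to obtain a set $G$ of at most $\left\lfloor (r-g)/2\right\rfloor - b + 1$ (open, resp.\ closed) reflex edge guards of $A(\mathcal{P})$, and take $A^{-1}(G)$. By the previous paragraph $A^{-1}(G)$ consists of reflex edges of $\mathcal{P}$, has the same cardinality, and guards all of $\mathcal{P}$ by visibility preservation. I expect the only real content to lie in the affine-invariance of reflexness and of the $2$-reflex property; everything else is a routine consequence of $A$ being an invertible affine bijection. Note that $A$ need not preserve orthogonality of segments, so this argument transfers ordinary guarding but not the orthogonal-guarding refinement, which is precisely why the statement of Theorem~\ref{t4:2orthor} is phrased in terms of plain guarding.
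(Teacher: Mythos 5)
Your proposal is correct and is exactly the paper's argument—the paper justifies the observation with the same one-line appeal to an affine map sending the three face normals to the coordinate axes, which preserves visibility; you have merely filled in the routine verifications (affine invariance of reflexness, genus, the 2-reflex property, and the collar count). Your closing remark that orthogonality of visibility rays is not preserved, so only plain guarding transfers, correctly explains why Theorem~\ref{t4:2orthor} is not stated with ``orthogonally guardable'' (cf.\ Observation~\ref{obs4:noreflex}).
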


Note that we cannot say ``orthogonally guardable'' in the statement of Theorem~\ref{t4:2orthor}, due to the following.

\begin{observation}\label{obs4:noreflex}
Not every 2-reflex orthogonal polyhedron can be orthogonally guarded by reflex edge guards, as Figure~\ref{fig4:noreflex} shows. Indeed, even if guards are placed on both reflex edges, a large cuboid remains orthogonally unguarded.
\end{observation}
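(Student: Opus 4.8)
The plan is to verify that the polyhedron of Figure~\ref{fig4:noreflex} defeats orthogonal reflex-edge guarding, and the whole argument rests on a single structural feature of orthogonal visibility. First I would record the key constraint built into the definition: if an edge guard $e$ is parallel to a coordinate direction $\hat u$, then $e$ can \emph{orthogonally} see a point $p$ only when the witnessing segment is perpendicular to $\hat u$, hence has vanishing $\hat u$-component; this forces $p$ and the point of $e$ that sees it to share the same $\hat u$-coordinate. Consequently, the set of points orthogonally visible to $e$ is contained in the union of planes orthogonal to $\hat u$ that meet $e$; equivalently, $p$'s projection onto the $\hat u$-axis must lie inside the projection of $e$. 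This holds regardless of any occlusion and is the crux of the obstruction.

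Next I would exhibit the polyhedron explicitly as a base cuboid carrying a smaller cuboid on one corner of its top face, recessed inward in the two perpendicular horizontal directions (concretely, $B_0=[0,4]\times[0,4]\times[0,1]$ with $B_1=[0,2]\times[0,2]\times[1,2]$ placed flush along the planes $x=0$ and $y=0$). A short finite check, guided by the figure, confirms that stepping inward on exactly two perpendicular sides produces exactly two horizontal reflex edges: a $y$-parallel edge $e_2$ at $x=2$ with $y$-extent $[0,2]$, and an $x$-parallel edge $e_1$ at $y=2$ with $x$-extent $[0,2]$; every other edge created by $B_1$ is flat or convex. This is a valid simply connected $2$-reflex orthogonal polyhedron, and placing a guard on each of $e_1$ and $e_2$ uses \emph{both} its reflex edges.

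Finally I would take the distinguished ``large cuboid'' to be the corner of the base diagonally opposite $B_1$, namely $C=[2,4]\times[2,4]\times[0,1]$, and compare projections. The interior of $C$ has $x$-coordinate in $(2,4)$, disjoint from the $x$-extent $[0,2]$ of the $x$-parallel edge $e_1$, so by the constraint above $e_1$ orthogonally sees no interior point of $C$; symmetrically, $C$'s $y$-coordinate lies in $(2,4)$, disjoint from the $y$-extent $[0,2]$ of $e_2$, so $e_2$ orthogonally sees nothing in $C$ either. Hence $C$ is orthogonally unguarded even with guards on both reflex edges, whereas a diagonal segment shows that $C$ is perfectly visible to these edges in the ordinary (non-orthogonal) sense, consistent with Theorem~\ref{t4:2orthor}. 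The main thing to get right is the reflex-edge bookkeeping of the second step --- ensuring the example has \emph{only} these two edges, both perpendicular and both localized away from $C$ --- since it is precisely the presence of two perpendicular reflex directions (impossible in the $1$-reflex, monotone, or castle settings where orthogonal guarding did succeed) that renders a region extremal in both directions unreachable.
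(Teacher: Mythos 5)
Your proposal is correct and takes essentially the same route as the paper, which justifies the observation purely by exhibiting the polyhedron of Figure~\ref{fig4:noreflex} (a box with a smaller box stacked flush over one corner, whose two perpendicular reflex edges cannot orthogonally reach the diagonally opposite corner region). You have simply made the figure's argument explicit: the correct coordinate bookkeeping confirming exactly two perpendicular reflex edges, and the key projection fact that an axis-parallel guard orthogonally sees only points whose coordinate along the guard's direction falls within the guard's own extent.
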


\begin{figure}[h]
\centering
\includegraphics[width=0.34\linewidth]{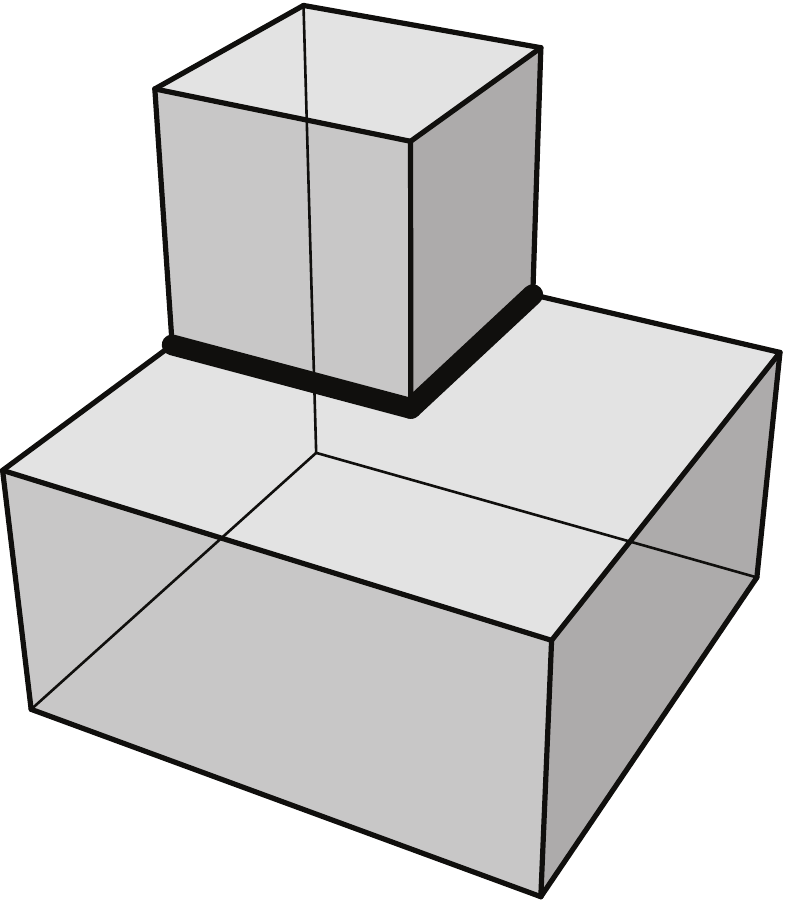}
\caption{2-reflex orthogonal polyhedron that is not orthogonally guardable by reflex edge guards.}
\label{fig4:noreflex}
\end{figure}

For $g=0$, the upper bound given in Theorem~\ref{t4:2orthor} is tight, as Figure~\ref{fig4:3} implies (cf.\ Observation~\ref{o3:rourke}). In contrast, for arbitrary genus, the situation is still unclear, and even the analogous 2-dimensional problem is open (i.e., optimally vertex-guarding orthogonal polygons with holes, see~\cite{art,urrutia2000}).

\begin{figure}[h]
\centering
\includegraphics[width=0.85\linewidth]{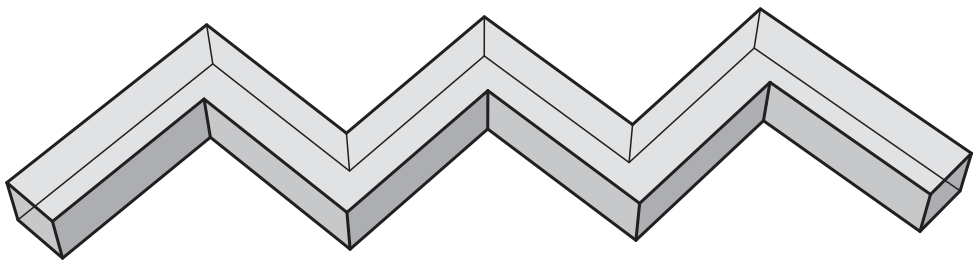}
\caption{2-reflex orthogonal polyhedron needing $\lfloor r/2 \rfloor +1$ reflex edge guards.}
\label{fig4:3}
\end{figure}

We believe that the bound given in Theorem~\ref{t4:2orthor} holds also for general (i.e., 3-reflex) orthogonal polyhedra.

\begin{conjecture}\label{con4:1}
Any open (resp.\ closed) orthogonal polyhedron with $r>0$ reflex edges and genus $g$ is guardable by at most
$$\left\lfloor \frac {r-g} 2\right\rfloor +1$$
open (resp.\ closed) reflex edge guards.
\end{conjecture}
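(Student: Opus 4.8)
The plan is to reduce the general (3-reflex) case to the 2-reflex case already settled in Theorem~\ref{t4:2orthor}, by induction on the number of reflex edges lying in one distinguished direction. I would fix the vertical direction as distinguished: if $\mathcal{P}$ has no vertical reflex edges it is 2-reflex and Theorem~\ref{t4:2orthor} applies directly, yielding at most $\lfloor (r-g)/2 \rfloor + 1$ reflex edge guards (the favorable collar term $-b$ only helps). Otherwise I would pick a vertical reflex edge $e$ and resolve it by cutting $\mathcal{P}$ along a vertical supporting plane through $e$, exactly mirroring the cut-and-recurse structure of Lemma~\ref{l4:stack} and Theorem~\ref{t4:2orthor}: each cut either disconnects $\mathcal{P}$ into two pieces $\mathcal{P}_1,\mathcal{P}_2$ with $g_1+g_2=g$, or leaves it connected while lowering the genus by one. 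In the first case I recurse on each piece, guarding a cuboid piece by a single edge on the cut rectangle; in the second I recurse on the (possibly degenerate, cf.\ Remark~\ref{r4:degen}) single piece of genus $g-1$.

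The crucial bookkeeping is the parity control that made the 2-reflex induction work. There, the notion of an \emph{odd cut} guaranteed that $\lfloor(2r_1+1)/2\rfloor+1+\lfloor r_2/2\rfloor+1$ collapses to exactly $\lfloor r/2\rfloor+1$. For the general case I would first pass to a generalized brick decomposition, slicing $\mathcal{P}$ by axis-parallel planes through its vertices to obtain a partition into cuboids whose pairwise contacts are axis-aligned rectangles in all three orientations. I would then classify the vertical contact rectangles in analogy with Figure~\ref{fig4:1}, isolate the analogues of the \emph{primitive} contacts (those resolving a single vertical reflex edge) and of the \emph{collars}, and argue that a vertical cut resolving exactly one reflex edge always exists and produces the desired parity split, so the same arithmetic goes through and each resolved edge contributes the needed half of a guard. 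The hope, following the authors' own intuition, is that iterating such cuts peels away 2-reflex subpolyhedra and leaves a structurally rigid "kernel" guardable within budget.

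The main obstacle is that the base-case savings and the parity argument of the 2-reflex proof rest on structures — monotone pieces, \emph{castles} and \emph{double castles} (Lemmas~\ref{l4:castle} and~\ref{l4:dcastle}) — that have no evident 3-reflex analogue. In those lemmas the essential fact is that the reflex edges of a monotone or castle-like piece admit a one-dimensional ordering along a single axis, so that alternating reflex edges sweep the whole piece and a non-prism castle saves one guard; this ordering is precisely what a third family of reflex edges destroys. Worse, a vertical cut chosen to resolve one vertical reflex edge can \emph{create} new horizontal reflex edges, and can sever the clean nesting of one brick's vertical projection inside another's that, in the proof of Theorem~\ref{t4:2orthor}, guaranteed the cut rectangle $R$ remains guarded in the open setting. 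Controlling these newly created reflex edges — ensuring each cut resolves at least as many reflex edges as it introduces, that the pieces stay guardable within the $\lfloor(r-g)/2\rfloor+1$ budget, and that the open-polyhedron visibility argument for $R$ survives — is exactly the gap that keeps this statement a \emph{conjecture} rather than a theorem, and a complete proof would require a genuinely new kernel structure to replace castles.
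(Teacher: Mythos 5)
The statement you are addressing is Conjecture~\ref{con4:1}: the paper offers no proof of it, and explicitly leaves it open. So there is nothing to compare your argument against, and what you have written is, by your own admission, not a proof but a strategy sketch together with an accurate diagnosis of why it fails. That diagnosis is correct, and it is worth making the central failure completely explicit: the entire inductive bookkeeping of Theorem~\ref{t4:2orthor} rests on the fact that, in a 2-reflex polyhedron, a cut along a contact rectangle resolves at least one reflex edge and \emph{never splits any other reflex edge transversally} --- the contact rectangle is a common face of two bricks, all reflex edges are horizontal, and Figure~\ref{fig4:1} exhaustively enumerates what happens to $r-r'$ and $e-e'$. Once a third family of reflex edges is present, a vertical cut through a vertical reflex edge generically crosses horizontal reflex edges in its interior, splitting each into two reflex edges, one per piece; the quantity $r_1+r_2$ can then exceed $r-1$ by an unbounded amount, and the telescoping sum $\lfloor(2r_1+1)/2\rfloor+1+\lfloor r_2/2\rfloor+1=\lfloor r/2\rfloor+1$ that drives the induction collapses. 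No finite analogue of the contact-type table exists for cuboids meeting in three orientations, so there is no analogue of ``odd cuts'' to restore the parity argument.

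Your second obstacle is equally real: the one-guard saving in Lemma~\ref{l4:castle} and the base case of Lemma~\ref{l4:dcastle} depend on the reflex edges of a monotone piece admitting a linear order along a single axis, so that every other reflex edge in the order sweeps the whole piece. A third reflex direction destroys that order, and no candidate ``kernel'' structure with an analogous sweeping property is known. So the proposal does not close the conjecture; it correctly reproduces the authors' own reasons for leaving it open. If you want to make progress, the place to start is not the cut-and-recurse machinery but a replacement for the castle base case --- a class of genuinely 3-reflex polyhedra for which $\lfloor r/2\rfloor+1$ reflex edge guards can be exhibited directly --- since without that even a perfected cutting scheme has nothing to bottom out on.
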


\section{Bounds in terms of $e$}
\markthissection{4.3. \ BOUNDS IN TERMS OF $e$}

In this section we bound the number of reflex edge guards required to guard a 2-reflex orthogonal polyhedron in terms of $e$, as opposed to $r$. Rather than providing a radically new upper bound construction, we bound $e$ in terms of $r$, and then apply Theorem~\ref{t4:2orthor}.

Note that a straightforward application of this method would not improve on Urrutia's upper bound of $\lfloor e/6\rfloor$ given in Theorem~\ref{t:urrutia}. Indeed, the sharpest possible inequality between $e$ and $r$ (involving also the genus $g$) is
\begin{equation}
e\geqslant 3r-12g+12,
\label{eq4:weakineq}
\end{equation}
which still yields an upper bound of
$$\frac{e}{6}+O(g)$$
guards when applied to Theorem~\ref{t4:2orthor}.

To get around this, we will refine (\ref{eq4:weakineq}) by introducing the number of collars as an additional parameter.

\begin{lemma}\label{t4:ineq}
In every 2-reflex orthogonal polyhedron, the number of edges $e$, the number of reflex edges $r$, the number of collars $b$ and the genus $g$ satisfy the inequality
$$e\geqslant 4r-12g-4b+12.$$
\end{lemma}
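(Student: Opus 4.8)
The plan is to induct on the brick decomposition, cutting the polyhedron along one contact rectangle at a time and charging the change in the edge count against the reflex edges and collars that get resolved. Concretely, I would introduce, for every (possibly degenerate, see Remark~\ref{r4:degen}) 2-reflex orthogonal polyhedron, the potential
$$\Phi(\mathcal P) = e - 4r + 12g + 4b - 12,$$
and prove $\Phi(\mathcal P)\geqslant 0$ by well-founded induction on the number of contact rectangles. The base case is a single cuboid, for which $e=12$ and $r=g=b=0$, so that $\Phi=0$ with equality (this already signals that the inequality is tight).

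For the inductive step, I would take any contact rectangle $R$ and cut $\mathcal P$ along it. Writing $e',r'$ for the edge and reflex-edge counts after the cut and reading off Figure~\ref{fig4:1}, the key local fact is that $e-e'+12\geqslant 4(r-r')$ for every non-collar contact (types (b)--(e) and (g)--(t)), while $e-e'+12=3(r-r')=12$ for the two collar types (a) and (f). Setting $\Delta e=e-e'$, $\Delta r=r-r'$, and $\beta=1$ if $R$ is a collar and $\beta=0$ otherwise, this is exactly the inequality
$$\Delta e - 4\Delta r + 4\beta + 12 \geqslant 0,$$
with equality in the tight cases, namely the collars and the types (b), (g), (l), (n), (q).

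Two cases then arise, and the unifying observation—which I expect to be the conceptual crux—is that both produce the same increment. If the cut disconnects $\mathcal P$ into $\mathcal P_1$ and $\mathcal P_2$, then $e=e_1+e_2+\Delta e$, $r=r_1+r_2+\Delta r$, $g=g_1+g_2$ and $b=b_1+b_2+\beta$, so a direct expansion gives
$$\Phi(\mathcal P)=\Phi(\mathcal P_1)+\Phi(\mathcal P_2)+\bigl(\Delta e-4\Delta r+4\beta+12\bigr).$$
If instead the cut does not disconnect $\mathcal P$, it lowers the genus by exactly one (each cut resolves a single contact rectangle and leaves the others unchanged), yielding a single degenerate polyhedron $\mathcal P'$ with $g'=g-1$; here the genus weight supplies the missing $12$, and the same expansion gives $\Phi(\mathcal P)=\Phi(\mathcal P')+\bigl(\Delta e-4\Delta r+4\beta+12\bigr)$. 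In either case the increment is nonnegative by the local fact above, and the resulting pieces have strictly fewer contact rectangles, so $\Phi(\mathcal P)\geqslant 0$ follows by induction, which is precisely the claimed inequality.

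The main obstacle is the bookkeeping that makes the constant $-12$, the genus weight $+12$, and the per-cut slack $+12$ all cohere: the reason one constant serves both the disconnecting case (two $-12$ terms offset by the increment's $+12$) and the genus-lowering case (one $+12$ from the dropped handle) is exactly why $\Phi$ is the right potential, and getting this alignment right is where care is needed. A secondary point to justify is that the local deltas $\Delta e,\Delta r$ depend only on the contact type of $R$ and not on the surrounding bricks, and that degenerate intermediate polyhedra may be fed back into the induction; both follow because cutting $R$ modifies only the edges bordering $R$ and resolves exactly one contact rectangle. The verification of $\Delta e-4\Delta r+4\beta+12\geqslant 0$ across the twenty contact types is then routine case-checking from Figure~\ref{fig4:1}, and it is worth emphasizing that the collar term $-4b$ is introduced for exactly this purpose: it is what rescues the two collar types, whose ratio is $3$ rather than $4$, and without which the sharper bound would fail.
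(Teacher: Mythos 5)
Your proof is correct and follows essentially the same route as the paper's: both cut along a contact rectangle, invoke the local inequality $e-e'\geqslant 4(r-r')-12$ read off Figure~\ref{fig4:1} with collars as the exceptional type, and exploit the fact that a disconnecting cut and a genus-lowering cut contribute the same amount. Your potential $\Phi$ is just a repackaging of the paper's formulation, which proves $e\geqslant 4r-12g-4b+12k$ for a collection of $k$ polyhedra and uses the identity $k'-g'=k-g+1$ in place of your observation that both cases yield the increment $\Delta e-4\Delta r+4\beta+12$.
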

\begin{proof}
We will prove that, for any collection of $k$ (mutually internally disjoint) 2-reflex orthogonal polyhedra,
\begin{equation}
e\geqslant 4r-12g-4b+12k
\label{eq4:ineq}
\end{equation}
holds. Here, $e$ (resp.\ $r$, $g$, $b$) is the sum of the edges (resp.\ reflex edges, genera, collars) of the $k$ polyhedra. Then, by plugging $k=1$, we will obtain our claim.

Our proof proceeds by induction on $r$. For $r=0$ we have a collection of $k$ cuboids, each of which has 12 edges, so $e=12k$, $g=b=0$, and (\ref{eq4:ineq}) holds as desired.

If $r>0$, there is at least one (horizontal) reflex edge, which is a side of the contact rectangle of two adjacent bricks, both belonging to the same polyhedron $\mathcal P$ of the collection. We can resolve this reflex edge (and up to three others) by separating the two bricks with a horizontal cut along the contact rectangle. As a consequence, either $\mathcal P$ gets partitioned in two polyhedra (in which case the new number of polyhedra is $k'=k+1$), or the genus of $\mathcal P$ decreases by 1 (in which case the new total genus is $g'=g-1$). Either way,
\begin{equation}
k'-g' = k-g+1.
\label{eq4:ineq4}
\end{equation}
By inductive hypothesis,
\begin{equation}
e'\geqslant 4r'-12g'-4b'+12k',
\label{eq4:ineq3}
\end{equation}
where $e'$ (resp.\ $r'$, $b'$) is the new number of edges (resp.\ reflex edges, collars), after the cut.

Two cases arise. If the cut is along a collar, then $b'=b-1$, $e'=e$, and $r'=r-4$ (see Figure~\ref{fig4:1}). Plugging these into (\ref{eq4:ineq3}) and combining the result with (\ref{eq4:ineq4}) immediately yields (\ref{eq4:ineq}), as claimed.

Otherwise (i.e., the cut is not along a collar), by inspection of Figure~\ref{fig4:1}, it is clear that
\begin{equation}
e-e'\geqslant 4(r-r')-12.
\label{eq4:ineq5}
\end{equation}
(Recall that cases (a) and (f) must be ignored, because they correspond to a collar.) By combining (\ref{eq4:ineq4}), (\ref{eq4:ineq3}), (\ref{eq4:ineq5}) and plugging $b'=b$, we obtain again (\ref{eq4:ineq}), concluding the proof.
\end{proof}

\begin{theorem}\label{t4:2orthoe}
Any open (resp.\ closed) non-convex 2-reflex orthogonal polyhedron with $e$ edges and genus $g$ is guardable by at most
\begin{equation}
\left\lfloor \frac{e-4}{8} \right\rfloor +g
\label{e4:bound2}
\end{equation}
open (resp.\ closed) reflex edge guards.
\end{theorem}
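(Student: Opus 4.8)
The plan is to derive this statement purely as a corollary of the two preceding results, following the strategy announced at the opening of this section: bound $e$ from below in terms of $r$, $g$ and $b$, and then feed that into the guard bound of Theorem~\ref{t4:2orthor}. No new geometric construction is needed, which is the whole point of having carried the collar parameter $b$ through the previous theorem.

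First I would observe that, since the polyhedron is non-convex and orthogonal, it has at least one reflex edge, so $r>0$ and Theorem~\ref{t4:2orthor} applies: the polyhedron is guardable by at most $\lfloor (r-g)/2 \rfloor - b + 1$ reflex edge guards, where $b\geqslant 0$ is its number of collars. Both the open and the closed cases are handled at once, because Theorem~\ref{t4:2orthor} and Lemma~\ref{t4:ineq} hold verbatim in both settings with identical bounds, so the open/closed distinction requires no separate argument here.

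Next I would invoke Lemma~\ref{t4:ineq}, namely $e \geqslant 4r - 12g - 4b + 12$, which rearranges to $r \leqslant (e-12)/4 + 3g + b$, hence $\tfrac{r-g}{2} \leqslant \tfrac{e-12}{8} + g + \tfrac{b}{2}$. Subtracting $b$ and discarding the non-positive term $-b/2$ gives $\tfrac{r-g}{2} - b \leqslant \tfrac{e-12}{8} + g$, so that
\[
\left\lfloor \frac{r-g}{2} \right\rfloor - b + 1 \;\leqslant\; \frac{r-g}{2} - b + 1 \;\leqslant\; \frac{e-4}{8} + g.
\]
Since the number of guards is an integer and $g$ is an integer, this real inequality upgrades to the bound $\lfloor (e-4)/8 \rfloor + g$, which is exactly (\ref{e4:bound2}).

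The argument is essentially mechanical, and the only point requiring care is checking that the collar parameter behaves correctly. It is precisely the factor-$4$ coefficient on $r$ in the refined inequality of Lemma~\ref{t4:ineq} (as opposed to the factor-$3$ coefficient one would get from (\ref{eq4:weakineq})) that yields $e/8$ rather than Urrutia's $e/6$, and one must verify that the $-b$ term in Theorem~\ref{t4:2orthor} is consistent with the $-4b$ term in the edge count, so that dropping the leftover $-b/2$ is a legitimate (and harmless) over-estimate. The genuine difficulty of the whole development lies entirely in establishing those two ingredients; at this stage they simply need to be assembled, with attention paid only to the rounding.
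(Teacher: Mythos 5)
Your proposal is correct and follows essentially the same route as the paper's own proof: invoke Lemma~\ref{t4:ineq} to bound $r$ by $\frac e4+3g+b-3$, substitute into the $\left\lfloor (r-g)/2\right\rfloor-b+1$ bound of Theorem~\ref{t4:2orthor}, and discard the leftover $-b/2$. The only cosmetic difference is that you drop the floor early and restore it at the end via integrality of the guard count, whereas the paper carries the floor through the whole computation; the two manipulations are equivalent.
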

\begin{proof}
Let $r>0$ be the number of reflex edges in the polyhedron. By Lemma~\ref{t4:ineq},
\begin{equation}
r\leqslant \frac e4+3g+b-3,
\label{eq4:ineq2}
\end{equation}
where $b$ is the number of collars. Applying (\ref{eq4:ineq2}) to Theorem~\ref{t4:2orthor}, we obtain that the number of guards is bounded by
$$\left\lfloor \frac {r-g}2 \right\rfloor -b+1 \leqslant \left\lfloor \frac e8 +\frac 32 g+\frac b2-\frac 32 - \frac g2 \right\rfloor -b+1 = \left\lfloor \frac{e-4}{8}-\frac b2 \right\rfloor +g\leqslant \left\lfloor \frac{e-4}{8} \right\rfloor +g.$$
\end{proof}

\begin{observation}
Once again, Theorem~\ref{t4:2orthor} holds more generally for 2-reflex \emph{3-oriented} polyhedra (cf.\ Observation~\ref{o4:affine}).
\end{observation}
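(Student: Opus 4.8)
The plan is to reduce the 3-oriented case to the orthogonal case by a single non-degenerate affine transformation, exactly as in Observation~\ref{o4:affine}, and then to verify that every quantity occurring in the bound is an invariant of that transformation. Let $\mathcal P$ be a 2-reflex 3-oriented polyhedron and let $v_1,v_2,v_3$ be the three vectors to which its faces are orthogonal; by the definition of $c$-oriented polyhedron these form a basis of $\mathbb R^3$. Writing $V=[\,v_1\mid v_2\mid v_3\,]$, I would take the linear map $T=V^{\mathsf T}$, so that a plane with normal $v_i$ is sent to a plane with normal $V^{-1}v_i=e_i$; hence every face of $\mathcal P$ becomes axis-orthogonal, and $\mathcal P':=T(\mathcal P)$ is an orthogonal polyhedron. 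Since $T$ is a bijection of $\mathbb R^3$ carrying segments to segments and $\mathcal P$ onto $\mathcal P'$, it preserves visibility: $x$ sees $y$ in $\mathcal P$ if and only if $Tx$ sees $Ty$ in $\mathcal P'$.

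Next I would check that the parameters entering the bound~(\ref{e4:bound2}) (and, for the reflex-edge formulation, those of Theorem~\ref{t4:2orthor}) are unchanged. As an incidence-preserving homeomorphism, $T$ maps vertices, edges and faces bijectively, so the edge count $e$ and the genus $g$ (a topological invariant) are preserved. The one delicate point is that $T$ distorts angles, so I must argue directly that it preserves the reflex/convex classification of an edge. This holds because, near a relative-interior point of a \emph{convex} edge, the polyhedron agrees with a convex dihedral wedge, i.e.\ an intersection of two half-spaces, and affine images of convex sets are convex; thus a convex edge maps to a convex edge, and applying the same reasoning to $T^{-1}$ shows a reflex edge maps to a reflex edge. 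Since $T$ also preserves parallelism, reflex edges lying in at most two directions map to reflex edges lying in at most two directions, so $\mathcal P'$ is again 2-reflex, the reflex-edge count $r$ is preserved, and the structurally defined collars $b$ are preserved as well.

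Finally I would apply Theorem~\ref{t4:2orthoe} (equivalently Theorem~\ref{t4:2orthor}) to $\mathcal P'$ to obtain a guarding set of reflex edges of $\mathcal P'$ of the asserted size, and pull it back through $T^{-1}$: because $T$ carries reflex edges of $\mathcal P$ onto reflex edges of $\mathcal P'$ and preserves visibility, the preimages form an equally large set of reflex edge guards covering all of $\mathcal P$. I expect the main obstacle to be exactly the angle distortion: one must resist claiming that $T$ transfers \emph{orthogonal} guardability, since $T$ destroys orthogonality, which is precisely why the statement (like Theorem~\ref{t4:2orthor}) can only assert ordinary guardability, consistently with Observation~\ref{obs4:noreflex}.
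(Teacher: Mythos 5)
Your proposal is correct and follows exactly the paper's own argument: the paper justifies this observation purely by reference to Observation~\ref{o4:affine}, i.e.\ a visibility-preserving affine map sending the 3-oriented polyhedron to an orthogonal one. You simply make explicit the details the paper leaves implicit (the choice $T=V^{\mathsf T}$, preservation of the convex/reflex classification, of parallelism, of the counts $e$, $r$, $b$, and of the genus, and the pullback of the guard set), and your closing caveat about orthogonal guardability not transferring matches the paper's own remark via Observation~\ref{obs4:noreflex}.
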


We remark that our Theorem~\ref{t4:2orthoe} is an improvement upon the previous state of the art, in that it lowers the upper bound provided by Urrutia's Theorem~\ref{t:urrutia} by roughly 25\% (for $g=0$), and also shows how guards can be chosen to lie on reflex edges, rather than on arbitrary edges.

We are unable to raise the lower bound of $\lfloor e/12\rfloor+1$ guards given in Observation~\ref{obs3:urrutia}, even if just reflex edge guards are to be employed. On the other hand, our upper bound in terms of $e$ seems far from optimal: indeed, we could treat separately other contact configurations between bricks, rather than just collars, and this could be enough to improve Lemma~\ref{t4:ineq} and lower the bound in (\ref{e4:bound2}). Therefore, we formulate a stronger version of Urrutia's Conjecture~\ref{con:urrutia} (cf.~Conjecture~\ref{con4:1}, Observation~\ref{obs4:noreflex}).

\begin{conjecture}\label{con4:2}
Any non-convex orthogonal polyhedron with $e$ edges is guardable by at most
$$\frac e {12}+O(1)$$
\emph{reflex} edge guards.
\end{conjecture}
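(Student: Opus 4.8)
The plan is to reduce Conjecture~\ref{con4:2} to two separate strengthenings of the machinery developed for the 2-reflex case, and then to combine them. Recall that the bound of Theorem~\ref{t4:2orthoe} was obtained by pairing the reflex-edge guarding bound of Theorem~\ref{t4:2orthor} with the edge inequality of Lemma~\ref{t4:ineq}. To reach $e/12$ for general (3-reflex) orthogonal polyhedra I would first establish the $r$-bound of Conjecture~\ref{con4:1} in full generality, and then replace Lemma~\ref{t4:ineq} by a sharper inequality carrying more structural parameters than the single collar count $b$.

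For Conjecture~\ref{con4:1} I would extend the recursive ``cutting away'' strategy behind Theorem~\ref{t4:2orthor}. The idea is to single out a reflex edge lying in the third direction (a vertical reflex edge, the kind absent from a 2-reflex polyhedron), resolve it by cutting the solid along an adjacent plane, and induct. Each cut either disconnects the polyhedron, lowers its genus, or strictly reduces the set of directions in which reflex edges occur, so that after finitely many cuts one is left with 2-reflex pieces that Theorem~\ref{t4:2orthor} handles. As in the open-polyhedron arguments of Lemma~\ref{l4:stack} and Theorem~\ref{t4:2orthor}, the guard covering the interior of the smaller of two bricks is never coplanar with their shared rectangle, which is what guarantees that the newly exposed cut surfaces remain guarded.

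This alone yields only $e/6 + O(g)$, because the sharpest general inequality $e \geq 3r - 12g + 12$ of~(\ref{eq4:weakineq}) feeds the factor $r/2$ coming from the guard bound. The decisive step is therefore to refine Lemma~\ref{t4:ineq}. Collars were isolated because they minimise the ratio $(e-e'+12)/(r-r')$ at the value $3$, while the remaining contacts of Figure~\ref{fig4:1} range up to $6$, the primitive contacts (d) and (i) already attaining $6$. I would introduce a separate counter for each contact family whose ratio falls below $6$---the collars at ratio $3$, together with the ratio-$4$ families such as (b) and (g)---and prove a refined inequality in which those counters enter with negative sign, morally of the shape $e \geq 6r - O(g) - (\text{slack})$. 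Crucially, the very same contacts resolve three or four reflex edges with only one or two guards, so they must also shrink the refined guard bound. If the two families of corrections cancel up to $O(1)$---exactly as the $-b$ in the current guard bound already cancels part of the $-4b$ in Lemma~\ref{t4:ineq}---the combined estimate drops from $e/8$ to $e/12$, matching the lower bound of Observation~\ref{obs3:urrutia}.

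The main obstacle is twofold. Combinatorially, the brick-and-contact taxonomy of Figure~\ref{fig4:1} does not survive the passage to genuine 3-reflex polyhedra: once reflex edges point in all three directions there is no single sweeping plane cutting the solid into rectangles, so ``brick'', ``contact rectangle'' and the notion of collar must be replaced by a purely local analysis through the vertex classes A--F of Chapter~\ref{chapter2}, and tracking the edge count across a cut becomes considerably more delicate. Conceptually, the extremal examples pull in opposite directions---the staircase of Observation~\ref{o3:rourke} forces $r/2$ guards with $e \approx 6r$, whereas the chimneys of Observation~\ref{obs3:urrutia} force only about $r/4$ guards with $e \approx 3r$---so a single inequality must interpolate between these regimes with only $O(1)$ error. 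Finally, positive genus is a genuine barrier, since Conjecture~\ref{con4:1} subsumes the still-open problem of optimally vertex-guarding orthogonal polygons with holes; a fully general proof for arbitrary $g$ should not be expected before that two-dimensional question is settled.
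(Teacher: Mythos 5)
The statement you are addressing is an open conjecture in this thesis, not a theorem: the paper offers no proof, and explicitly frames the $e/12$ bound as a strengthening of Urrutia's conjecture that it can verify only for simply connected prisms and, via Lemma~\ref{l4:stack}, for stacks (Theorem~\ref{t4:stackse}). Your proposal is a research plan rather than a proof. To your credit it points in the same direction the paper itself suggests --- treating contact configurations other than collars separately in order to sharpen Lemma~\ref{t4:ineq} --- but it does not close either of the two gaps it rests on. First, it presupposes Conjecture~\ref{con4:1} (the $\left\lfloor (r-g)/2\right\rfloor+1$ bound for general orthogonal polyhedra), which is itself open; the ``cut a vertical reflex edge and induct'' step is precisely where the known techniques break down, because a cut through a vertical reflex edge of a genuinely 3-reflex polyhedron is not a contact rectangle between bricks, need not be a rectangle at all, and there is no analogue of Figure~\ref{fig4:1} controlling how $e$ and $r$ change across it.

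Second, the refined inequality you sketch, ``morally $e \geqslant 6r - O(g) - (\text{slack})$,'' cannot be a small perturbation of Lemma~\ref{t4:ineq}: Theorem~\ref{th:2} shows that $r$ can be as large as $5e/6 - 2g - 12$, and that bound is attained, so for such polyhedra the slack term must account for essentially all of $r$, and the entire burden of the proof shifts to showing that whatever structures generate the slack also reduce the guard count at exactly the compensating rate. You name this tension (staircases versus chimneys) but offer no mechanism for resolving it, and the local vertex-class analysis you propose as a substitute for the brick-and-contact taxonomy is not developed. In short, the proposal correctly identifies why the conjecture is plausible and where the difficulty lies, but every step that would constitute actual progress beyond Theorems~\ref{t4:2orthoe} and~\ref{t4:stackse} is left as an acknowledged obstacle, so there is no proof here to compare against the paper's (nonexistent) one.
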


We already know this claim holds for simply connected prisms, and Lemma~\ref{l4:stack} implies that it holds more generally for simply connected stacks. Specifically, we have the following.

\begin{theorem}\label{t4:stackse}
Any open (resp.\ closed) non-convex stack with $e$ edges and genus $g$ is guardable by at most
\begin{equation}
\left\lfloor \frac{e}{12}+\frac{g}{2} \right\rfloor
\label{eq4:stackse}
\end{equation}
open (resp.\ closed) reflex edge guards.
\end{theorem}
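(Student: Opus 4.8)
The plan is to reduce everything to the $r$-based bound of Lemma~\ref{l4:stack} by establishing a relation between $e$, $r$ and $g$ that is sharper than the generic one and is special to stacks. Since a stack has only primitive contact rectangles (types~(d) and~(i)), it has no collars, so $b=0$; but plugging $b=0$ into Lemma~\ref{t4:ineq} only gives $e\geqslant 4r-12g+12$, which is too weak (it would yield roughly $e/8$ guards). The crucial point is that, precisely because every contact of a stack is primitive, each contact resolves a single reflex edge while changing the edge count by exactly $6$: reading off the entries $e-e'=-6$ and $r-r'=1$ for types~(d) and~(i) in Figure~\ref{fig4:1}. This forces the sharper identity $e=6r-12g+12$, equivalently $r=e/6+2g-2$.

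First I would prove this identity by induction on $r$, following the scheme of Lemma~\ref{t4:ineq}: I show that any collection of $k$ mutually disjoint stacks satisfies $e=6r-12g+12k$, where $e$, $r$, $g$ are summed over the pieces, and then set $k=1$. The base case $r=0$ is a disjoint union of $k$ cuboids, for which $e=12k$ and $g=0$. For the inductive step I cut along one primitive contact: the resulting collection has $r'=r-1$ reflex edges and $e'=e+6$ edges, and the cut either disconnects one piece (so $k'=k+1$ with $g'=g$) or lowers the genus of one piece by $1$ (so $g'=g-1$ with $k'=k$). In both cases $12k'-12g'=12k-12g+12$, and combining this with the inductive hypothesis $e'=6r'-12g'+12k'$ reproduces $e=6r-12g+12k$, as desired.

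Finally I would substitute into Lemma~\ref{l4:stack}. Since $r>0$ for a non-convex stack, that lemma guards it with at most $\left\lfloor (r-g)/2 \right\rfloor+1$ reflex edge guards. Using $r=e/6+2g-2$ gives, as real numbers, $e/12+g/2=(r-g)/2+1$, whence $\left\lfloor e/12+g/2 \right\rfloor=\left\lfloor (r-g)/2 \right\rfloor+1$: the two bounds coincide exactly, which yields~(\ref{eq4:stackse}). (Note $e=6(r-2g+2)$ is a multiple of $6$, so the arithmetic is clean even though $e/12$ need not be an integer.) The main obstacle is the identity $e=6r-12g+12$ itself: one must carefully track the disconnect-versus-genus-drop dichotomy of each cut and tolerate the degenerate, self-intersecting intermediate polyhedra that appear when a cut only lowers the genus (Remark~\ref{r4:degen}), exactly as in the proofs of Lemma~\ref{l4:stack} and Lemma~\ref{t4:ineq}.
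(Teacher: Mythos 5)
Your proposal is correct and follows essentially the same route as the paper: the paper's proof also establishes the identity $e=6r-12g+12$ by a straightforward induction on $r$ using the type-(d)/(i) entries of Figure~\ref{fig4:1}, then solves for $r$ and substitutes into the bound of Lemma~\ref{l4:stack}. Your write-up merely fills in the induction (via the collection-of-$k$-pieces bookkeeping borrowed from Lemma~\ref{t4:ineq}) that the paper leaves as ``straightforward''.
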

\begin{proof}
A straightforward induction on the number of reflex edges $r$, based on Figure~\ref{fig4:1}, reveals that
$$e = 6r-12g+12.$$
(Recall that stacks only have type-(d) and type-(i) contact rectangles.) Solving for $r$ and substituting in (\ref{eq4:ddd}) immediately yields (\ref{eq4:stackse}).
\end{proof}

\section{Time complexity}

To conclude the chapter, we show how to efficiently compute guard positions matching the upper bounds given in Theorems~\ref{t4:2orthor} and~\ref{t4:2orthoe}. Notice that both bounds refer to the very same construction, described in Section~\ref{sec4:2}. In the present section, we will merely translate such construction into an algorithm that runs in $O(n \log n)$ time (see the discussion on representing polyhedra in Chapter~\ref{chapter2}).

O'Rourke's algorithm for \emph{simple} orthogonal polygons, detailed in~\cite[Section~2.6]{art}, also runs in $O(n \log n)$ time. As observed by Urrutia in~\cite{urrutia2000}, it could be optimized to run in $O(n)$ time, if Chazelle's linear time triangulation algorithm were used (see~\cite{chazelletriang}). Unfortunately, Urrutia's speedup is only applicable to orthogonal polygons without holes.

In principle, we could rephrase O'Rourke's algorithm in terms of 2-reflex orthogonal polyhedra, and obtain a new $O(n \log n)$ algorithm. However, four issues arise that require additional care:
\begin{enumerate}
\item\label{i4:1} O'Rourke's algorithm works on simply connected polygons, while our algorithm should be applied to polyhedra of any genus.

\item\label{i4:2} O'Rourke's algorithm may assign guards to convex vertices, whereas we insist on having guards only on reflex edges.

\item\label{i4:3} O'Rourke's method to find horizontal cuts in polygons does not trivially extend to polyhedra.

\item\label{i4:4} O'Rourke's algorithm relies on guarding double histograms, whereas we need guard double castles.
\end{enumerate}

We will now give a very crude sketch of our modified algorithm, showing that each of the above issues has a relatively simple solution.

Our algorithm takes as input a 2-reflex orthogonal polyhedron $\mathcal P$, and outputs the set of reflex edges to which guards are assigned.

\subsection*{Preprocessing: $O(n \log n)$}

First of all, regardless of the data structures we use to represent orthogonal polyhedra, we do some preprocessing on $\mathcal P$ to construct \emph{adjacency tables} of faces, edges and vertices. We also store every face's boundary as a sorted list of its vertices and edges. (This typically takes $O(n)$ time, but can take $\Theta(n \log n)$ time if our initial data are severely unstructured.) These tables allow us to efficiently navigate the polyhedron's boundary. We mark each edge as reflex or convex and, if needed, we turn $\mathcal P$ by $90^\circ$ so that it contains no vertical reflex edges.

\subsection*{Finding contact lines: $O(n \log n)$}

We first compute a structure that is very similar to the \emph{horizontal visibility map} (also known as \emph{trapezoidalization}) of each vertical face of $\mathcal P$. This is a well-studied 2-dimensional problem that consists in partitioning a polygon into trapezoids by drawing horizontal lines at vertices. In our case, faces are orthogonal polygons (perhaps with holes), trapezoids are actually rectangles, and cut lines are drawn at reflex vertices only.

Let $F_i$ be a vertical face of $\mathcal P$ with $n_i$ vertices. We sort all the vertical edges of $F_i$ by the $z$ coordinate of their upper vertex (in $O(n_i \log n_i)$ time), and we ``scan'' $F_i$ from top to bottom with a sweep line. We maintain a horizontally sorted list of all the vertical edges of $F_i$ pierced by the sweep line, in which insertion and deletion take $O(\log n_i)$ time. Every time our sweep line hits a new reflex vertex $v$ belonging to a reflex edge of $\mathcal P$, we draw a \emph{contact line} from $v$ to the next (or previous) edge in the list, we add a \emph{fake vertex} there (if needed), and we proceed with our sweep.

This process takes $O(n_i \log n_i)$ time and, letting $n=\sum_i n_i$, finding contact lines on every vertical face of $\mathcal P$ takes $O(n \log n)$ in total, because
$$\sum_i n_i \log n_i \leqslant \sum_i n_i \log n = n \log n.$$

Every time we find a new contact line, we also update the face, edge, and vertex data we computed in the preprocessing step. That is, as soon as a new contact line is found, the corresponding face gets a new edge and is perhaps partitioned in two coplanar faces (this step takes constant time). Moreover, as soon as a new fake vertex $v$ is found, it is added to the other face sharing it (say, $F_j$). If we still have to process $F_j$, we consider $v$ as a reflex vertex and draw a contact line at $v$ in $F_j$ when we process it. Otherwise, $F_j$ is now a rectangle, and we just draw an additional contact line in it at $v$, in constant time.

\subsection*{Finding bricks and contact rectangles: $O(n)$}

Notice that contact lines are exactly the boundaries of the contact rectangles of $\mathcal P$. Indeed, in the previous step, we did not draw contact lines only at the reflex vertices of each face, but also at convex vertices that lie on reflex edges of $\mathcal P$.

It is easy now to identify all the contact rectangles and all the bricks, navigating the boundary of $\mathcal P$ using our precomputed data structures. While we do it, we also build a \emph{brick graph} $G$, having a node for each brick and an arc connecting each pair of bricks sharing a contact rectangle.

Observe that issue~(\ref{i4:3}) above is now solved.

\subsection*{Resolving non-primitive contact rectangles: $O(n)$}

Non-primitive contact rectangles are those that are surrounded by more than one reflex edge of $\mathcal P$ and, as such, are easy to find. As proven in Theorem~\ref{t4:2orthor}, it is safe to cut $\mathcal P$ at a non-primitive contact rectangle and place guards in the resulting polyhedra (or polyhedron).

Instead of actually cutting $\mathcal P$ and updating all the data structures, we merely delete the arcs of $G$ corresponding to the non-primitive contact rectangles.

\subsection*{Forcing simple connectedness: $O(n)$}

By this point, $\mathcal P$ has been partitioned into several, possibly not simply connected, stacks. As proven in Theorem~\ref{l4:stack}, it is safe to further cut the stacks until they all become simply connected. To do so, we again process only $G$, turning it into a forest. Such a task is accomplished by a straightforward traversal, starting at each connected component and deleting arcs leading to already visited nodes. Recall that bricks in stacks have at most four neighbors, hence the time complexity of this traversal is indeed linear.

Observe that this step also solves issue~(\ref{i4:1}) above.

\subsection*{Adjusting brick parity: $O(n)$}

For reasons that will be clear shortly, we insist on having only stacks with an even number of bricks. O'Rourke solves this by adding an extra ``chip'' to the polygon, in case he wants to change the parity of its reflex vertices. Then he applies his algorithm to the new polygon, and later removes the chip. If the chip happens to hold a guard, then that guard is reassigned to the nearest convex vertex, after the chip is removed. Notice how this choice causes issue~(\ref{i4:2}) above.

In order to avoid placing guards on convex edges, we proceed as follows. We compute the size of each connected component of $G$ by a simple traversal. Then, in each component with an odd number of nodes, we find one leaf (recall that $G$ is a forest) and delete the arc attached to it (if one exists). Finally, we collect each isolated node, remove it from $G$, find its corresponding brick $\mathcal B$ in $\mathcal P$, find a contact rectangle bordering $\mathcal B$ (one must exist), find one reflex edge $e$ on it and assign it a guard. Referring to Figure~\ref{fig4:primitive}, it is obvious that $e$ orthogonally sees all of $\mathcal B$.

The correctness of this step follows from the remarks contained in the proof of Lemma~\ref{l4:stack}, that every contact rectangle in a stack with an odd number of bricks yields an odd cut, and that it is always safe to make odd cuts.

\subsection*{Identifying odd cuts: $O(n)$}

We are left with stacks having an even number of bricks, and we want to further partition them with odd cuts. In order to identify odd cuts, we pick each connected component of $G$ and we do a depth-first traversal, rooted anywhere. During the traversal, we compute, at each arc, the parity of the nodes in the dangling subtree. Such parity is even if and only if that arc corresponds to an odd cut.

Now, it is straightforward to notice that cutting a stack having an even number of bricks at an odd cut yields two stacks that again have an even number of bricks. Additionally, this operation does not change the parity of the cuts in the two resulting stacks. Hence, there is no need to re-identify odd cuts after a cut is made. In contrast, stacks with an odd number of bricks do not have such property, and this motivates our previous step.

It follows that we may remove from $G$ all the arcs corresponding to odd cuts, without worrying about side effects.

\subsection*{Guarding double castles: $O(n)$}

At this point, only non-convex stacks without odd cuts are left. As a consequence of the observations in Lemma~\ref{l4:stack}, these are all double castles, which we now have to guard in linear time. (In contrast, O'Rourke's algorithm was left at this point with double histograms, giving rise to issue~(\ref{i4:4}) above.)

Our algorithm is loosely based on the proofs of Lemmas~\ref{l4:mono},~\ref{l4:castle}, and~\ref{l4:dcastle}, which naturally yield a procedure that cuts along certain contact rectangles and selects guards in monotone orthogonal polyhedra.

The only non-trivial aspect is that, occasionally in the procedure, we need know if some castle (or upside-down castle) is a prism, and what the orientation of its reflex edges is. To efficiently answer this question, we precompute this information for every ``sub-castle'' of each double castle that we have. We identify the two castles constituting each double castle (in linear time), then we do a depth-first traversal of the subgraphs of $G$ corresponding to those castles, starting from their base bricks. When we reach an internal node $v$, we recursively check if the subtrees dangling from its two children correspond to prisms, and if their reflex edges are oriented in the same direction. Then, after inspecting also the brick corresponding to $v$, we know if its dangling subtree corresponds to a prism and, if so, the direction of its reflex edges. Leaves are trivial to handle, in that they always correspond to prisms with no reflex edges.
\\

Summarizing, and recalling the upper bound given in Theorem~\ref{t4:2orthor}, we have the following.

\begin{theorem}
Given a 2-reflex orthogonal polyhedron with $r>0$ reflex edges and genus $g$, a guarding set of at most 
$$\left\lfloor \frac{r-g}{2} \right\rfloor +1$$
reflex edge guards can be computed in $O(n \log n)$ time.\hfill \qed
\end{theorem}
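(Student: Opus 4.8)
The plan is to observe that the combinatorial content of the result --- that $\lfloor (r-g)/2\rfloor+1$ reflex edge guards suffice, and that they can all be placed on reflex edges --- is already supplied by Theorem~\ref{t4:2orthor}. Thus nothing new need be proved about correctness: it suffices to exhibit an algorithm that realizes the very construction used in the proof of Theorem~\ref{t4:2orthor} within the claimed time bound. Accordingly I would structure the proof as a running-time analysis of the phases sketched in the subsections above, checking that each phase faithfully implements one step of the inductive argument of Theorem~\ref{t4:2orthor} and Lemma~\ref{l4:stack}, and that the per-phase costs sum to $O(n\log n)$.

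The two steps that are potentially superlinear are the preprocessing and the computation of contact lines, so I would treat these first. Preprocessing builds the adjacency tables of faces, edges and vertices, stores each face boundary as a sorted list, and orients $\mathcal P$ so that no reflex edge is vertical; this costs $O(n\log n)$ in the worst case only because the raw input may be unstructured, and it provides the constant-time local navigation on which all later phases rely. The contact lines amount to a horizontal visibility map of each vertical face $F_i$, drawn not only at the reflex vertices of $F_i$ but at \emph{every} convex vertex of $F_i$ lying on a reflex edge of $\mathcal P$ --- which is exactly what is needed for the drawn segments to recover the boundaries of the contact rectangles. I would compute these by a top-to-bottom sweep of $F_i$ maintaining a horizontally sorted list of pierced vertical edges with $O(\log n_i)$ insert and delete, emitting a contact line (and a \emph{fake vertex}, which must be propagated to the face sharing the pierced edge) at each relevant vertex. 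This costs $O(n_i\log n_i)$ per face, and the bound $\sum_i n_i\log n_i \le (\sum_i n_i)\log n = n\log n$ yields $O(n\log n)$ overall.

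The remaining phases are purely combinatorial, and I would argue each is $O(n)$ by operating on the \emph{brick graph} $G$ (one node per brick, one arc per contact rectangle), exploiting that bricks in a stack have at most four neighbors (Figure~\ref{fig4:primitive}). Identifying bricks and contact rectangles and building $G$ is a single boundary traversal; resolving non-primitive contacts, forcing simple connectedness, adjusting the parity of each component, and locating odd cuts then all reduce to deleting the appropriate arcs during bounded-degree traversals of $G$, and these four operations are precisely what dispose of issues~(\ref{i4:1})--(\ref{i4:4}). Guarding the residual double castles follows the recursions of Lemmas~\ref{l4:mono},~\ref{l4:castle} and~\ref{l4:dcastle}; the one subtlety is that the recursion repeatedly asks whether a sub-castle is a prism and in which direction its reflex edges point, so I would precompute this with a single depth-first pass that propagates, from each base brick upward, an ``is-a-prism'' flag and an edge-orientation label, again in linear time.

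The part deserving the most care is twofold. Combinatorially, one must verify that every arc deletion performed on $G$ corresponds to a legitimate cut in the sense of Theorem~\ref{t4:2orthor} and Lemma~\ref{l4:stack} --- odd cuts are always safe, and an odd-sized component always admits a safe isolated-brick removal that is resolved by a single orthogonally-seeing reflex edge --- so that the final guard count remains $\lfloor(r-g)/2\rfloor+1$. Analytically, the $O(n\log n)$ term is genuinely the bottleneck: the sweep is a trapezoidalization, and since (per Urrutia's remark) the linear-time alternative applies only to simply-connected faces, which ours need not be, I do not expect to beat $O(n\log n)$ by this method. Collecting the per-phase bounds then gives the stated total, completing the proof.
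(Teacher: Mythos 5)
Your proposal is correct and follows essentially the same route as the paper: it delegates correctness to Theorem~\ref{t4:2orthor} and then performs the identical phase-by-phase analysis (the $O(n\log n)$ preprocessing and face sweeps with the $\sum_i n_i\log n_i\leqslant n\log n$ bound, followed by linear-time manipulation of the brick graph for cutting, parity adjustment, odd-cut identification, and double-castle guarding with precomputed prism flags). No substantive differences to report.
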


Of course, the same algorithm also achieves the upper bound in terms of $e$ given in Theorem~\ref{t4:2orthoe}:
$$\left\lfloor \frac{e-4}{8} \right\rfloor +g.$$

Observe that the only superlinear step is the vertical sweep that finds the contact lines in every vertical face of the polyhedron, plus perhaps the preprocessing step. Whether a more efficient algorithm exists remains an open problem.
\chapter{Mutually parallel edge guards in orthogonal polyhedra}\label{chapter5}
\markthischapter{CHAPTER 5. \ MUTUALLY PARALLEL EDGE GUARDS}

\begin{chapterabstract}
We consider the problem of edge-guarding orthogonal polyhedra, with the additional constraint that our guards must be all parallel.

We show that any orthogonal polyhedron with $e$ edges, of which $r$ are reflex, is guardable by at most
$$\left\lfloor\frac{e+r}{12}\right\rfloor$$
mutually parallel open edge guards, regardless of its genus. If the guards are closed, then the polyhedron is orthogonally guardable by the same number of parallel guards.

On the other hand, asymptotically $(e+r)/14$ parallel guards may be necessary, as $r$ tends to infinity. Further lower bounds are $e/12$ and $r/2$.

We also establish tight inequalities relating $e$ with $r$, by virtue of which we obtain other upper bounds on the number of parallel edge guards, in terms of $e$ and $r$ only (and the genus $g$):
$$\left\lfloor\frac{11e}{72} - \frac{g}{6}\right\rfloor - 1\qquad \mbox{and}\qquad \left\lfloor\frac{7r}{12}\right\rfloor - g + 1.$$

In particular, we slightly improve upon the previous best upper bound of $\left\lfloor e/6\right\rfloor$ edge guards, due to Urrutia.
\end{chapterabstract}

\section{Orthogonal polyhedra}

We start by giving tight inequalities relating the number of edges in an orthogonal polyhedron with the number of reflex edges. In the next section, we will use these inequalities to obtain upper bounds on parallel edge guard numbers.

Recall from Chapter~\ref{chapter2} that orthogonal polyhedra have six types of vertices, shown in Figure~\ref{fig:vertextypes}.

\begin{figure}[h]
\centering{\includegraphics[width=0.68\linewidth]{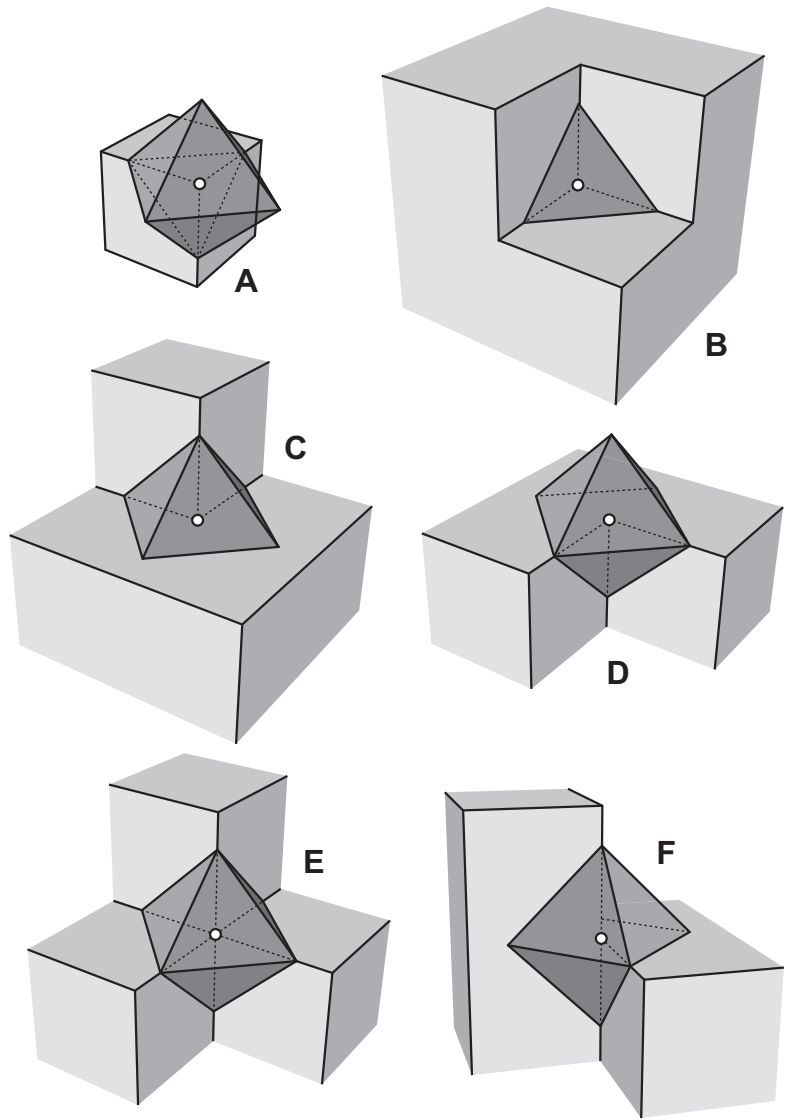}}
\caption{Vertex types.}
\label{fig:vertextypes}
\end{figure}

Let us denote by $A$ the number of A-vertices in a given orthogonal polyhedron, and so on, for each vertex class.
 
\begin{lemma}
\label{lem:2}
In any orthogonal polyhedron with $r>0$ reflex edges,
$3A + D \geqslant 28$.
\end{lemma}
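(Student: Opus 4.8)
The plan is to reduce everything to the single robust geometric fact that an orthogonal polyhedron always has at least eight convex (A-)vertices, and then to extract the remaining ``$+4$'' in the bound $3A+D\geqslant 28$ from the hypothesis $r>0$ by analysing the six extreme faces of the polyhedron. It is worth noting first that the polyhedral Gauss--Bonnet theorem alone cannot deliver this lemma: assigning to each vertex its angle defect (using the octant model of Chapter~\ref{chapter2}, one gets defect $+\pi/2$ for $A$- and $B$-vertices, $-\pi/2$ for $C$- and $D$-vertices, and $-\pi$ for $E$- and $F$-vertices) yields $A+B-C-D-2E-2F=8-8g$, whose right-hand side degrades with the genus. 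Since $3A+D\geqslant 28$ is genus-free and tight already at genus $0$ (a cube with one square dent realises $A=8$, $D=4$), the argument must rest on a genus-independent geometric input rather than on Gauss--Bonnet.

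First I would show $A\geqslant 8$. For each of the eight directions $\vec d=(\pm 1,\pm 1,\pm 1)$, the linear functional $x\mapsto\langle \vec d,x\rangle$ is strictly monotone along every axis-parallel edge and non-constant on every axis-orthogonal face (because $\vec d$ is parallel to no edge and orthogonal to no face); hence its maximum over the compact solid is attained at a \emph{unique} vertex $v_{\vec d}$. At $v_{\vec d}$ the solid can extend only in the three directions that decrease the functional, so exactly one octant is filled: $v_{\vec d}$ is an A-vertex whose filled octant is $-\vec d$. The eight vertices $v_{\vec d}$ carry eight distinct octant-orientations and are therefore distinct, giving $A\geqslant 8$; moreover, if $A=8$ there is exactly one A-vertex of each orientation. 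Consequently, if $A\geqslant 10$ then $3A\geqslant 30>28$ and we are done, and it remains only to treat $A\in\{8,9\}$.

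The main tool for these two cases is the \emph{extreme-face analysis}. For each axis direction, the topmost (resp.\ leftmost, etc.) boundary slab is a planar orthogonal polygon $R$; near $R$ all filled octants lie on one side, so each \emph{convex} corner of $R$ is an A-vertex (one filled octant) and each \emph{reflex} corner is a D-vertex (three filled octants forming an $L$). For such a region one has the identity $(\text{convex corners})-(\text{reflex corners})=4(c-h)$, where $c\geqslant 1$ is the number of components and $h\geqslant 0$ the number of holes. Now suppose $A=8$. Only four A-vertices have a downward octant, so the top region has at most four convex corners, hence exactly four and a single component; running the same argument in all six directions pins each of the eight A-vertices simultaneously to the extreme value in all three axes, i.e.\ to a distinct corner of the bounding box. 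Each extreme face is then the \emph{full} bounding-box rectangle, possibly with holes. If some extreme face has a hole it contributes four reflex corners, so $D\geqslant 4$ and $3A+D\geqslant 28$. If no extreme face has a hole, all six faces of the bounding box lie on $\partial P$; since the complement of $P$ is connected (Chapter~\ref{chapter2}), this forces $P$ to equal its bounding box, whence $r=0$, contradicting the hypothesis. Thus $A=8$ yields $D\geqslant 4$.

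For $A=9$ only $D\geqslant 1$ is needed. Assuming $D=0$, no extreme face has a reflex corner, so every extreme face is a disjoint union of rectangles; comparing components against the available A-vertices in each opposite pair of directions forces each extreme face to be a \emph{single} rectangle, and a connectivity argument analogous to the one above should again yield $r=0$, a contradiction. I expect this $A=9$ borderline to be the main obstacle: unlike the $A=8$ case, the A-vertices need not all sit at bounding-box corners, so one must rule out ``tiered''/stepped configurations (which add A-vertices in groups while keeping $D=0$) and handle the double counting of corners shared by several extreme faces. Making the step ``all extreme faces are full/rectangular $\Rightarrow$ $P$ is a cuboid'' fully rigorous — via the connectedness and compactness of $P$ and its complement — is the delicate technical heart of the whole proof; the clean cases $A\geqslant 10$ and $A=8$ are comparatively routine once the extreme-face correspondence is established.
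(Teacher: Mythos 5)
Your route is genuinely different from the paper's, and most of it is sound. The paper never isolates the extremal statement $A\geqslant 8$; it looks directly at the intersections of the polyhedron with the six bounding-box faces and case-splits on their structure (every face a single rectangle plus a non-corner vertex gives ten A-vertices via three well-chosen faces; two components on one face give twelve A-vertices; a hole gives eight A-vertices and four D-vertices; a single non-convex polygon gives nine A-vertices and one D-vertex). Your cases $A\geqslant 10$ and $A=8$ are correct as written: the eight-functionals argument, the count ``at most four downward A-vertices, hence the top region is one component with exactly four convex corners,'' and the dichotomy ``some extreme face has a hole (so $D\geqslant 4$) or all six bounding-box faces lie on the boundary and connectedness of the complement forces the polyhedron to be a cuboid'' all go through.

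The genuine gap is exactly where you flag it: the case $A=9$, $D=0$, which must be excluded since $3\cdot 9+0=27<28$. Your counting does force each extreme region to be a single rectangle (a seventh rectangle would require $28>27$ vertex--face incidences), but from there ``a connectivity argument analogous to the one above'' does not yet exist: a single extreme rectangle can a priori sit strictly inside its bounding face, so the boundary of the bounding box need not be covered and the complement-connectedness argument has nothing to grip. The missing step can be supplied along your own lines. With $A=9$, exactly one of the eight octant-orientations is carried by two A-vertices and each of the other seven by a unique one; each orientation owns exactly one corner slot in each of three extreme rectangles, so a uniquely carried orientation must have its vertex fill all three of its slots and hence sit at a bounding-box corner. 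Each of the three extreme rectangles containing a slot of the doubled orientation therefore has three of its four corners at corners of its bounding face, and three corners of a rectangle determine the fourth --- so that fourth corner is the remaining box corner. This places eight A-vertices at the eight box corners, makes every extreme rectangle a full face, and the cuboid contradiction goes through (a cuboid has no ninth vertex and no reflex edge). Without this observation, or the paper's direct case analysis of the bounding faces, the proof is incomplete at its stated ``technical heart.''
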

\begin{proof}
Consider the bounding cuboid of the polyhedron and the set of orthogonal polygons (perhaps with
holes, without degeneracies), formed by intersection of the faces of the cuboid and the polyhedron. The vertices of those polygons are either A-vertices or D-vertices of the polyhedron (convex vertices yield A-vertices, and reflex vertices yield D-vertices).  Our strategy is to only look at the vertices belonging to the bounding faces and ensure that there is a sufficient number of them.  Namely, we only need show that there are at least
 
\begin{enumerate}
\item[(a)]ten A-vertices, or
\item[(b)]nine A-vertices and one D-vertex, or
\item[(c)]eight A-vertices and four D-vertices.
\end{enumerate}

\begin{figure}[h]
\centering{\includegraphics[width=.5\linewidth]{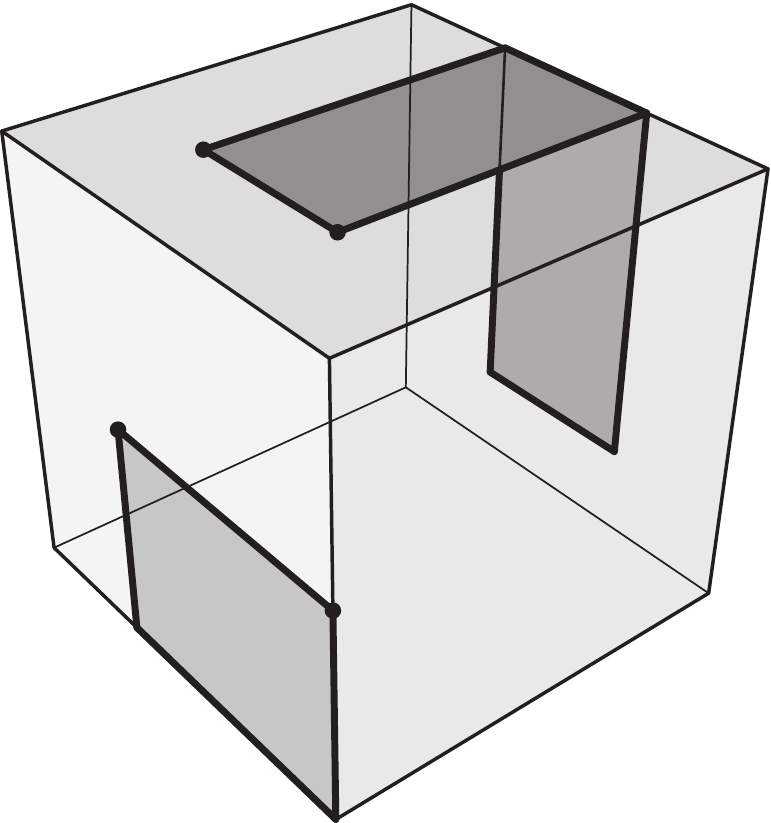}}
\large
\put(-121.46,150.42){$v'$}
\put(-170,183.42){$u'$}
\put(-116.87,55.42){$v$}
\put(-186.25,114.58){$u$}
\Large
\put(-195.62,146.87){$f$}
\put(-28.54,128.33){$f'$}
\put(-130.83,204.37){$f''$}
\caption{Illustration of the proof of Lemma~\ref{lem:2}.}
\label{fig:boundingbox}
\end{figure}

Suppose each face of the bounding cuboid contains exactly one rectangle. If all the vertices of these rectangles coincide with the corners of the bounding cuboid, then the polyhedron is convex, contradicting the assumptions.  Hence, there is a vertex $v$ that is not a corner of the bounding cuboid. Let $f$ be a face of the bounding cuboid containing $v$. At least one of the vertices, denoted by $u$, adjacent to $v$ in the rectangle contained in $f$ is such that the relative interior of $uv$ does not lie on an edge of the bounding cuboid. Let $f'$ be the bounding face opposite to $f$, and $f''$ be the bounding face chosen as shown in Figure~\ref{fig:boundingbox}: out of the four faces surrounding $f$, $f''$ is the one that lies on the \textquotedblleft side\textquotedblright\ of $uv$. $f$ and $f'$ contain two disjoint rectangles, and thus exactly eight distinct A-vertices. Additionally, $f''$ has two extra A-vertices, lying on an edge $u'v'$ parallel to $uv$ (refer to Figure~\ref{fig:boundingbox}). Collectively, $f$, $f'$ and $f''$ contain at least ten A-vertices, so (a) holds.

On the other hand, if there exists a bounding face $f$ whose intersection with the polyhedron is not a single rectangle, then we need analyze the following three cases.  Let once again $f'$ be the bounding face opposite to $f$.  

\begin{itemize}
\item If $f$ contains at least two polygons (those polygons' boundaries must be disjoint because $f$ is a bounding face), then collectively $f$ and $f'$ contain at least 12 distinct A-vertices, so (a) holds.  Indeed, every orthogonal polygon has at least four convex vertices.

\item If $f$ contains a polygon with at least one hole, then the polygon's external boundary contains at least four convex vertices (i.e., A-vertices), and the hole has at least four reflex vertices (D-vertices). $f'$ also contains at least four convex vertices (A-vertices). Together $f$ and $f'$ contain at least eight A-vertices and four D-vertices, so (c) holds.

\item If $f$ contains just one polygon, which is not convex, then such a polygon has at least five convex vertices and one reflex vertex. Together with $f'$, there are at least nine A-vertices and one D-vertex, so (b) holds.
\end{itemize} 
\end{proof}

\begin{theorem}
\label{th:2}For every orthogonal polyhedron with $e$ total edges, $r>0$ reflex edges and genus $g\geqslant 0$,
$$\frac{e}{6} + 2g - 2\ \leqslant\ r\ \leqslant\ \frac{5e}{6} - 2g - 12$$
holds. Both inequalities are tight for every $g$.
\end{theorem}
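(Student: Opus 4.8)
The plan is to read off, for each of the six vertex types of Chapter~\ref{chapter2}, three purely local quantities and then assemble three global identities. For a vertex $v$ let $d_v$ denote its number of incident edges and $\rho_v$ its number of incident reflex edges, and recall that its curvature is $k_v=2\pi-(\text{sum of the incident face angles})$, each face angle being $\pi/2$ or $3\pi/2$. A direct inspection of the octant configurations in Figure~\ref{fig:vertextypes} yields the table: an A-vertex has $(k_v,d_v,\rho_v)=(\pi/2,3,0)$, a B-vertex $(\pi/2,3,3)$, a C-vertex $(-\pi/2,3,2)$, a D-vertex $(-\pi/2,3,1)$, while the two four-octant types have $(-\pi,6,3)$ and $(-\pi,4,2)$. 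Writing $A,B,C,D,E,F$ for the numbers of vertices of each type, summing $k_v$ over all vertices and invoking the polyhedral Gauss--Bonnet theorem (Theorem~\ref{t2:gauss}) gives
$$(A+B)-(C+D)-2(E+F)=8-8g,$$
while double counting edge endpoints and reflex-edge endpoints gives
$$3A+3B+3C+3D+6E+4F=2e \qquad\text{and}\qquad 3B+2C+D+3E+2F=2r.$$

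For the lower bound I would eliminate $A$ and $g$ by forming the combination $12r-2e-24g$ from these three identities; it collapses to $18B+6C+6E+2F-24$. Since all six vertex counts are nonnegative, this is at least $-24$, i.e. $12r\ge 2e+24g-24$, which is exactly $r\ge e/6+2g-2$. This half is essentially free: it uses only Gauss--Bonnet and the two counting identities, and equality forces $B=C=E=F=0$.

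For the upper bound the analogous elimination applied to $10e-12r-24g$ collapses to $18A+6D+6E+2F-24$, so the target inequality $r\le 5e/6-2g-12$ is equivalent to $9A+3D+3E+F\ge 84$. This is precisely where Lemma~\ref{lem:2} enters: since $r>0$ it yields $3A+D\ge 28$, hence $9A+3D\ge 84$, and as $3E+F\ge 0$ the claim follows. Thus the whole weight of the upper bound rests on Lemma~\ref{lem:2}, the rest being bookkeeping. The main obstacle is therefore twofold: first, correctly computing the per-type data (curvature, degree, and reflex-incidence) for all six vertex types, which is mechanical but error-prone because of the pinched and straight-through faces occurring at the two four-octant vertices; and second, Lemma~\ref{lem:2} itself, which carries the genuine combinatorial content and is already established.

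Finally, to show both inequalities are tight for every $g$ I would exhibit two families. Equality in the lower bound requires polyhedra built only from A- and D-vertices (with $D>0$ to force $r>0$), for which staircase-like shapes of each genus work; equality in the upper bound requires $E=F=0$ together with $3A+D=28$, i.e. configurations that meet the bounding-box count of Lemma~\ref{lem:2} with equality. In both cases the genus is raised one unit at a time by attaching handles assembled from vertices that preserve the relevant counts, so that every value of $g$ is realized; checking that these attachments leave the extremal combinations undisturbed is the last step.
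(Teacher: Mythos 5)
Your proposal is correct and follows essentially the same route as the paper: the per-type data in your table reproduces exactly the paper's identities (its equations for $2c$ and $2r$ sum to your $2e$ identity, and its Gauss--Bonnet relation is your first identity), your elimination $12r-2e-24g=18B+6C+6E+2F-24$ is the paper's combination of $9B+3C+3E+F\geqslant 0$ with the same three equations, and your reduction of the upper bound to $9A+3D+3E+F\geqslant 84$ via Lemma~\ref{lem:2} is precisely the paper's step. The only place you are lighter than the paper is tightness, where the paper pins down explicit families (a staircase with $g$ holes giving $e=6k+12g+6$, $r=k+4g-1$, and its ``negative'' with $e=6k+12g+18$, $r=5k+8g+3$) rather than arguing abstractly about handle attachments, but your stated strategy is the same in spirit.
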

\begin{proof}
Let $c=e-r$ be the number of convex edges.
Let $A$ be the number of A-vertices, etc.  Double counting the pairs
(edge, endpoint) yields (refer to Figure~\ref{fig:vertextypes})
\begin{equation}
2c = 3A + C + 2D + 3E + 2F,
\label{eq:1}
\end{equation}
\begin{equation}
2r = 3B + 2C + D + 3E + 2F.
\label{eq:2}
\end{equation}
The curvature
of A- and B-vertices is $\pi/2$, the curvature of C- and
D-vertices is $-\pi/2$, the curvature of E- and F-vertices is $-\pi$.
Hence, by the polyhedral Gauss--Bonnet theorem (Theorem~\ref{t2:gauss}),
\begin{equation}
A + B - C - D - 2E - 2F = 8 - 8g.
\label{eq:3}
\end{equation}
Finally, since all the variables involved are non-negative,
\begin{equation}
9B + 3C + 3E + F \geqslant 0.
\label{eq:4}
\end{equation}
Subtracting 3 times (\ref{eq:3}) from 2 times (\ref{eq:4}) yields
\[-3A +15B+9C+3D+12E+8F \geqslant 24g-24.\]
Further subtracting (\ref{eq:1}) and adding 5 times (\ref{eq:2}) to the last inequality yields
\[2c - 10 r +24 g - 24 \leqslant 0,\]
which is equivalent to
$$\frac{e}{6} + 2g - 2 \leqslant r.$$

To see that the left-hand side inequality is tight for every $r$ and $g$, consider the staircase-like polyhedron with holes depicted in Figure~\ref{fig:staircaseholes}. If the staircase has $k$ \textquotedblleft segments\textquotedblright\ and $g$ holes, then it has a total of $6k+12g+6$ edges and  $k+4g-1$ reflex edges.

\begin{figure}[h]
\centering{\includegraphics[width=.5\linewidth]{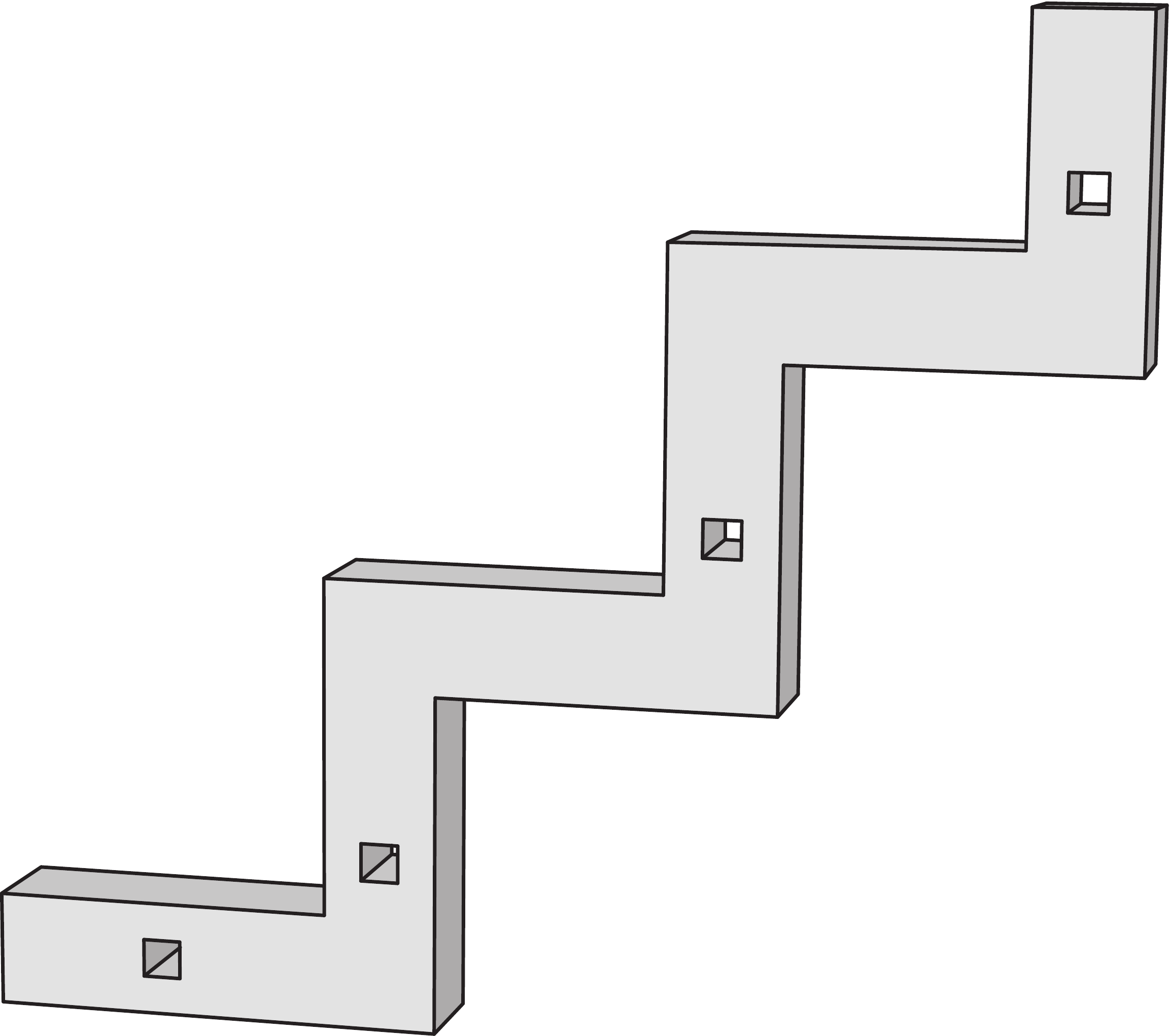}}
\caption{Polyhedron that achieves the tight left-hand side bound in Theorem~\ref{th:2}.}
\label{fig:staircaseholes}
\end{figure}

According to Lemma~\ref{lem:2}, $3A + D \geqslant 28$, unless the polyhedron is a cuboid.
Then
\begin{equation}
9A + 3D + 3E + F \geqslant 84.
\label{eq:5}
\end{equation}
 Subtract 3 times (\ref{eq:3}) from 2 times (\ref{eq:5}):
\[15A - 3B + 3C + 9D + 12E + 8F \geqslant 24g + 144.\]
Subtract (\ref{eq:2}) and add 5 times (\ref{eq:1}):
\[2r - 10c + 24g + 144 \geqslant 0,\]
which is equivalent to
$$r \leqslant \frac{5e}{6} - 2g - 12.$$

To see that the right-hand side inequality is also tight, consider the polyhedron represented in Figure~\ref{fig:reflexbound}: a cuboid with a staircase-like well carved in it, and a number of cuboidal \textquotedblleft poles\textquotedblright\ carved out from the surface of the well (i.e., the ``negative'' version of Figure~\ref{fig:staircaseholes}). If the staircase has $k$ \textquotedblleft segments\textquotedblright\ and $g$ poles, then the polyhedron has a total of $6k+12g+18$ edges and  $5k+8g+3$ reflex edges.
\end{proof}

\begin{figure}[h]
\centering{\includegraphics[width=.5\linewidth]{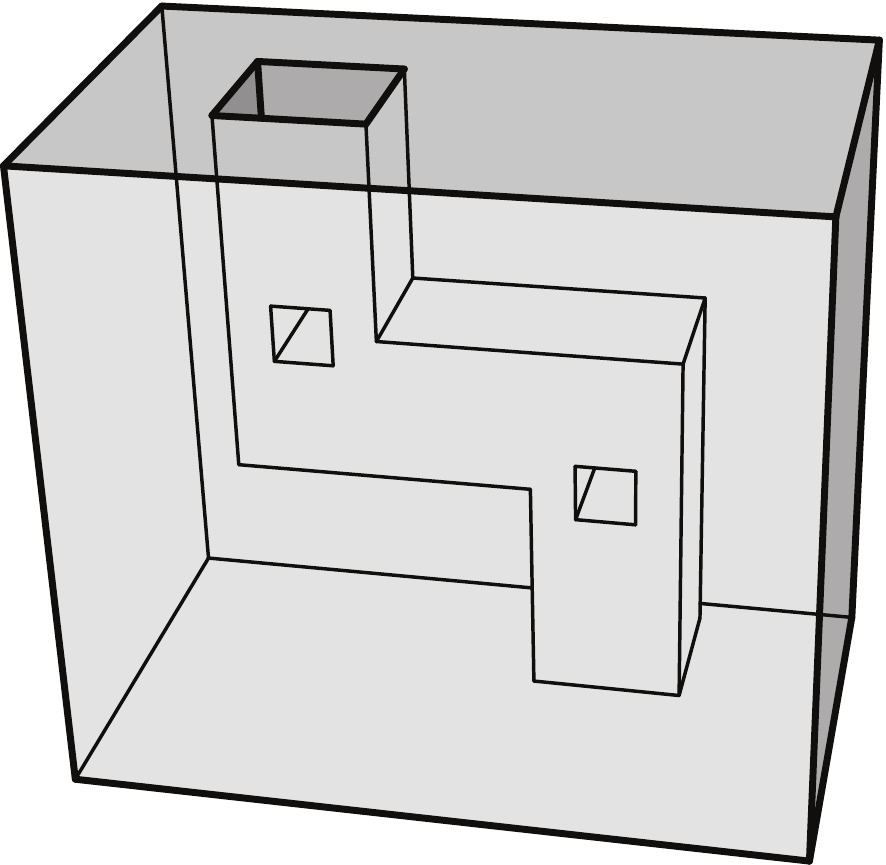}}
\caption{Polyhedron that achieves the tight right-hand side bound in Theorem~\ref{th:2}.}
\label{fig:reflexbound}
\end{figure}

\begin{observation}\label{o5:nonono}
The statement of Theorem~\ref{th:2} does not hold if we change the definition of orthogonal polyhedron by dropping the condition of connectedness of the boundary. Indeed, consider a cube and remove several smaller disjoint cubic regions from its interior. The resulting shape has unboundedly many reflex edges and just 12 convex edges.
\end{observation}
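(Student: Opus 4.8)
The plan is to exhibit an explicit infinite family of shapes with disconnected boundary for which the upper inequality of Theorem~\ref{th:2} fails, the underlying point being that the construction sketched in the statement decouples the reflex-edge count from the convex-edge count. Concretely, for each $m\geqslant 1$ I would take a solid axis-aligned cube $Q$ and remove from its interior $m$ pairwise disjoint open axis-aligned cubes $Q_1,\dots,Q_m$, each also disjoint from $\partial Q$, setting $P_m = Q\setminus(Q_1\cup\cdots\cup Q_m)$. The first step is to observe that $\partial P_m = \partial Q \cup \partial Q_1 \cup \cdots \cup \partial Q_m$ is a disjoint union of $m+1$ topological spheres, hence disconnected; this is precisely why $P_m$ violates the connectedness requirement of Definition~\ref{polydef}, and is therefore admissible only once that requirement is dropped.

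Next I would count edges. The twelve edges of the outer cube $Q$ survive in $P_m$ with internal dihedral angle $90^\circ$, so they remain convex, and the number of convex edges is $c = 12$ for every $m$. For a single cavity $Q_i$, a local computation at any of its twelve edges shows that the cavity occupies a $90^\circ$ wedge while the solid material fills the complementary $270^\circ$ wedge, so each such edge has a reflex internal dihedral angle with respect to $P_m$. Since the cavities are pairwise disjoint, this yields $r = 12m$ reflex edges and a total of $e = c + r = 12m + 12$ edges.

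Finally I would check the failure of the bound. Writing $c = e - r$, the right-hand inequality of Theorem~\ref{th:2} is equivalent to $r \leqslant 5c - 12g - 72$; for a fixed convex-edge count $c$ this forces $r$ to be bounded above by a constant depending only on $c$ and $g$. But $P_m$ has the fixed value $c = 12$ together with $r = 12m$, which tends to infinity with $m$, so no such bound can hold. Even granting the most generous reading $g = 0$ (each of the $m+1$ boundary components being a sphere), the inequality would demand $12m \leqslant 5\cdot 12 - 72 = -12$, which is absurd. Hence the upper inequality of Theorem~\ref{th:2} fails as soon as the connectedness hypothesis is removed, which is the assertion of the observation. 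I do not expect any genuine obstacle: the only points needing care are the one-line local-wedge verification that each cavity edge is reflex, and the bookkeeping of genus, which is moot precisely because the boundary is disconnected and the contradiction already arises under any nonnegative genus assignment.
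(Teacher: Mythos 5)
Your construction is exactly the one the paper uses (a cube with $m$ disjoint cubic cavities removed from its interior), and your edge counts, the local-wedge check that cavity edges are reflex, and the algebraic verification that $r\leqslant 5c-12g-72$ fails for $c=12$ are all correct. This is the same approach as the paper, just carried out in more detail than the paper's one-sentence justification.
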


\section{Parallel edge guarding}

\subsection*{Motivations}

Now we consider the problem of guarding a given orthogonal polyhedron with mutually parallel edge guards. Recall that Urrutia gave an upper bound on the number of edge guards of $\lfloor e/6\rfloor$, where $e$ is the total number of edges (Theorem~\ref{t:urrutia}). Here we improve that upper bound to show that, asymptotically, $11e/72$ (open or closed) edge guards suffice, or, in terms of $r$, that $7r/12$ guards suffice.

One of the most obvious ways to treat guarding problems in polyhedra is to consider cross sections and try to apply known techniques and theorems to the resulting planar polygons. In the case of edge-guarding orthogonal polyhedra, one could for instance consider all the horizontal cross sections, and solve the \ART with vertex guards in each section. Such vertex guards become edge guards when extended to the whole polyhedron. Of course, the challenge is to ``consistently'' select vertex guards in neighboring cross sections, so that the overall amount of edge guards is minimized. This suggests how the problem of orthogonally guarding with mutually parallel edge guards naturally emerges.

Nonetheless, this guarding mode has notable applications in point location, tracking, and navigation. Imagine that a polyhedron is orthogonally guarded by vertical edge guards. Then, if each guard represents an array of sensors, each of which scans a horizontal area, we can immediately determine the altitude of any object in the polyhedron, based on which sensor detects the object. We can also monitor the object's movements, and guide it along a path, without ever ``losing track'' of it. In some applications, employing parallel edge guards may even be a necessity derived by environmental constraints.

\subsection*{Guarding strategy}

\begin{theorem}
\label{th:3}
Any open (resp.\ closed) orthogonal polyhedron with $e$ edges, of which $r$ are reflex, is guardable (resp.\ orthogonally guardable) by at most
$$\left\lfloor \frac{e+r}{12} \right\rfloor$$ 
mutually parallel open (resp.\ closed) edge guards. Guard locations can be computed in linear time.
\end{theorem}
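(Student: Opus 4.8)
The plan is to reduce the problem to guarding with edges parallel to a single, well-chosen axis, extracting the factor $1/3$ in the denominator by averaging over the three coordinate directions. For each axis $d\in\{x,y,z\}$, I would exhibit a guarding set $G_d$ consisting of edges parallel to $d$ and argue that, taken together, the three sets satisfy $|G_x|+|G_y|+|G_z|\le (e+r)/4$. Since every edge is parallel to exactly one axis, the natural target for each direction is $|G_d|\le (e_d+r_d)/4$, where $e_d$ and $r_d$ count the edges and reflex edges parallel to $d$; summing over $d$ gives $\sum_d e_d=e$ and $\sum_d r_d=r$, hence $\sum_d|G_d|\le (e+r)/4$. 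The cheapest direction then uses at most $(e+r)/12$ mutually parallel guards, which is the claim.

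The core of the argument is the single-direction bound; suppose the guards are vertical. I would slice $\mathcal P$ by horizontal planes through all its horizontal edges, cutting it into slabs on each of which the cross section is a fixed orthogonal polygon $Q_i$, possibly with holes. The key structural fact is that the vertices of $Q_i$ are precisely the vertical edges crossing slab $i$, and that, because a vertical edge carries a constant dihedral angle along its length, it appears as a reflex vertex of every $Q_i$ it meets exactly when it is reflex, and as a convex vertex otherwise. Consequently a single vertical edge selected as a guard behaves as a vertex guard simultaneously in all cross sections it spans. Since an orthogonal polygon is guarded by $\lfloor n/4\rfloor$ vertex guards placeable at reflex vertices (Chapter~\ref{chapter1}), and $\lfloor n/4\rfloor=\lfloor r/2\rfloor+1-h$, the per-direction target $(e_d+r_d)/4=(c_d+2r_d)/4$ is exactly the $1/4$ charge per convex edge plus an extra $1/4$ per reflex edge.

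The main obstacle is \emph{consistency across slabs}: I must select one family of vertical edges that guards every $Q_i$ at once, rather than guard each slab independently, which would charge per slab and ruin the bound. I would address this by sweeping from bottom to top while maintaining a coherent decomposition of the current cross section into L-shaped or monotone pieces in the style of O'Rourke, editing it only at the horizontal edges where $Q_i$ actually changes and assigning guards so that each guard persists through all slabs in which its piece survives. The delicate accounting is to show that this can be done while spending at most $(e_d+r_d)/4$ guards per direction with \emph{no} surviving additive constant---so that the final $\lfloor(e+r)/12\rfloor$ carries no $+O(1)$ term---which requires the per-polygon corrections $+1-h$ to cancel globally against the genus and hole contributions, much as the curvature bookkeeping does in Theorem~\ref{th:2}. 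The open/closed dichotomy is routine: for open polyhedra, Proposition~\ref{prop:1} ensures that a guard on the relative interior of the chosen edge still dominates everything the corresponding reflex vertex would, and for closed guards the same choice yields orthogonal guarding.

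Finally, the slab decomposition, the vertical sweep, and the guard assignment each process every combinatorial feature of $\mathcal P$ a bounded number of times, and no sorting is needed once the face--edge--vertex adjacency is available, so the guard set is produced in linear time, establishing the algorithmic part of the statement.
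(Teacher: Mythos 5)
Your outer framework---averaging $|G_d|\le (e_d+r_d)/4$ over the three axis directions and taking the cheapest one---is exactly the paper's, and your observation that a vertical edge acts as a vertex guard of constant convexity type in every horizontal cross-section it spans is also the right starting point. But the per-direction bound, which is the entire content of the theorem, is not actually established. You correctly identify that guarding each slab independently via O'Rourke's $\lfloor n/4\rfloor$ theorem charges once per slab rather than once per edge and ruins the bound, and you propose to fix this by sweeping upward while maintaining a coherent L-shaped decomposition whose guards persist across slabs. That consistency-plus-accounting step is precisely the hard part, and you only describe it as something to be shown (``the delicate accounting is to show that this can be done\ldots with no surviving additive constant''). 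It is far from clear that an O'Rourke-style decomposition can be edited at each horizontal event while (i) never re-charging a surviving guard, (ii) absorbing the per-polygon $+1-h$ corrections globally, and (iii) landing exactly on $(e_d+r_d)/4$. As written, the proof has a gap exactly where the theorem lives.

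The paper avoids this difficulty entirely with a \emph{local} selection rule, which is why no cross-slab consistency issue ever arises. Each $x$-parallel edge falls into one of eight types according to the orientation of its two incident faces (four convex types $\alpha,\beta,\gamma,\delta$ and four reflex types $\alpha',\beta',\gamma',\delta'$). One selects all edges of one convex type and two reflex types; there are four symmetric such choices, and their sizes sum to $e_x+2r_x-(\text{one reflex class})\le e_x + r_x$ counted appropriately --- more precisely the four sums average to $(e_x+r_x)/4$, so the smallest is at most $(e+r)/12$ after minimizing over the axis. Correctness is then a short planar argument inside a single cross-section $Q$: shoot an axis-parallel ray upward from the point $p$ to the boundary, then grow the resulting segment leftward into a rectangle until it first hits a vertex $v$ of $Q$; the three possible local configurations of $v$ are exactly the three selected edge types. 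Because the rule depends only on each edge's own type, the same global set works in every cross-section simultaneously, and the open-guard case is handled by the usual small-neighborhood argument (a minimum positive distance $d$ from the cutting plane to the faces it misses). If you want to salvage your approach, you should either carry out the sweep accounting in full or replace it with a local classification of this kind.
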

\begin{proof}
Let $e_x$ and $r_x$ be the number of $x$-parallel edges and $x$-parallel reflex
edges, respectively; $e_y$, $e_z$, $r_y$, $r_z$ are similarly
defined.
Without loss of generality,
assume $x$ is the direction that minimizes the sum $e_x + r_x$, so that
$$e_x + r_x \leqslant \frac{e+r}{3}.$$
Of course, a guard on every $x$-parallel edge suffices to cover all of
$\mathcal P$, but we can do much better with a selected subset of these edges.
We argue below that selecting the three types of $x$-parallel edges
circled in Figure~\ref{fig:guardplacement} suffices
(as do three other symmetric configurations).
Let the number of  $x$-edges of each of the eight types be $\alpha, \cdots, \delta'$ as labeled
in Figure~\ref{fig:guardplacement}.

\begin{figure}[h]
\centering{\includegraphics[width=.75\linewidth]{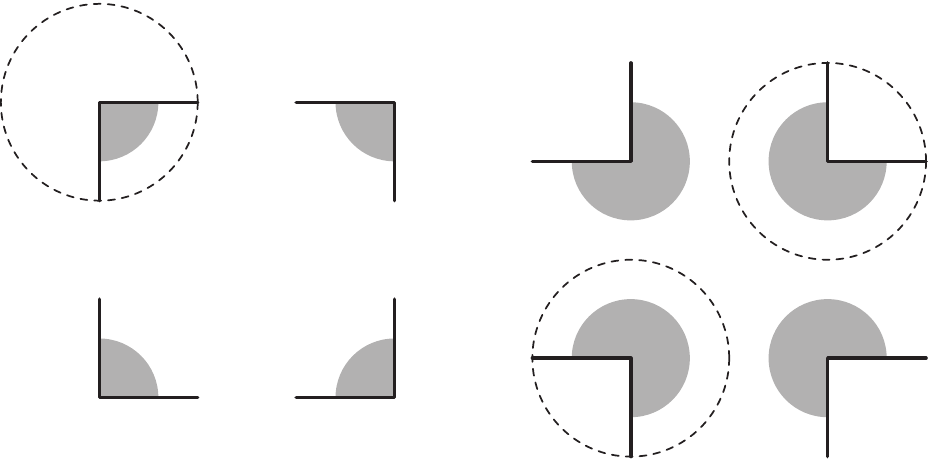}}
\put(-299.01,129.28){$\alpha$}
\put(-177.43,129.28){$\beta$}
\put(-177.24,7.3){$\gamma$}
\put(-299.01,7.3){$\delta$}
\put(-121.18,110.64){$\alpha'$}
\put(-26.45,110.64){$\beta'$}
\put(-26.84,19.14){$\gamma'$}
\put(-121.18,19.14){$\delta'$}
\caption{Possible configurations of $x$-edges. The $x$ axis is directed
  toward the reader. The circled configurations are those selected in
  the proof of Theorem~\ref{th:3}.}
\label{fig:guardplacement}
\end{figure}

Hence we could place $\alpha + \beta' + \delta'$ guards, or (symmetrically) $\gamma +\beta'+\delta'$ guards, or $\beta+\alpha'+\gamma'$ guards, or $ \delta+\alpha'+\gamma'$ guards.

By choosing the minimum of these four sums, we place at most 
$$
\left(\alpha+\beta+\gamma+\delta+2\alpha'+2\beta'+2\gamma'+2\delta'\right)/4
$$
$$
\; = \; \frac{e_x+r_x}{4} \; \leqslant \; \frac{e+r}{12}
$$
guards.

Next we prove that our guard placement works.

We consider any point $p$ in $\mathcal{P}$ and show that $p$ is
guarded by the edges selected 
in Figure~\ref{fig:guardplacement}. 
Let $\omega$ be the $x$-orthogonal plane containing $p$ and let $Q$ be
the intersection of the (open) polyhedron $\mathcal{P}$ with
$\omega$. To prove that $p$ is guarded, we first cast an axis-parallel ray
from $p$. For our choice of guarding edges, the ray is directed upward.
Let $q$ be the intersection point of the ray and the boundary of $Q$ that is nearest
to $p$.
Next, grow leftwards a rectangle whose right side is $pq$ until it
hits a vertex $v$ of $Q$.  If it hits several vertices simultaneously, let $v$ be the
topmost.
There are three possible configurations for $v$, shown in  Figure~\ref{fig:guarding},
and each corresponds to a selected configuration in our placement of
guards (Figure~\ref{fig:guardplacement}).

If guards are closed, then $p$ is orthogonally guarded. Otherwise, if guards are open and $v$ lies in the interior of the guarding edge, then $p$ is (orthogonally)
guarded. If guards are open and $v$ is an endpoint of the guarding edge, then we show that $p$
is guarded by
a sufficiently small neighborhood of $v$ that belongs to that edge. Every face of $\mathcal{P}$ that does not intersect $\omega$ has a positive distance from $\omega$. Let $d$ be the smallest such distance. Then, the points of the guarding edge at distance strictly less than $d$ from $v$ see $p$. Hence $p$ is guarded.

If a different triplet of guarding edges is chosen, the above construction is suitably rotated by a multiple of $90^\circ$.

Computing guard locations merely involves counting edges of each type, which can be done in linear time.
\end{proof}

\begin{figure}[h]
\centering{\includegraphics[width=.85\linewidth]{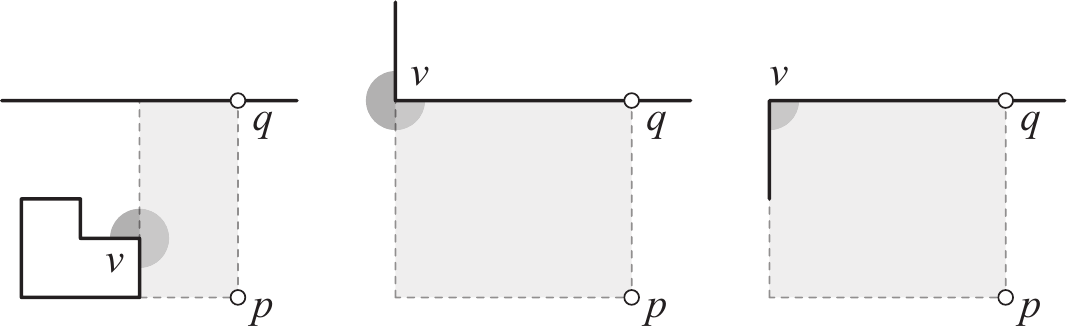}}
\caption{Illustration of the proof of Theorem~\ref{th:3}.}
\label{fig:guarding}
\end{figure}

We remark that our placement of guards in the single slices resembles a construction given in~\cite{aesu-iopof-98}, in a slightly different model.

\begin{observation}
Due to Observation~\ref{o5:nonono}, our methods do not improve on Urrutia's $\left\lfloor e/6\right\rfloor$ upper bound when applied to orthogonal shapes with disconnected boundary. Indeed, in this case the $\left\lfloor(e+r)/12\right\rfloor$ bound given by Theorem~\ref{th:3} still holds, but the $r$ to $e$ ratio can be arbitrarily close to 1.
\end{observation}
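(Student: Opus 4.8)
The plan is to prove the two halves of the statement separately: that the bound $\left\lfloor (e+r)/12 \right\rfloor$ survives the dropping of the connectedness hypothesis, and that on shapes with disconnected boundary this bound degrades to $\left\lfloor e/6 \right\rfloor$. For the first half I would revisit the proof of Theorem~\ref{th:3} and check that it never appeals to the boundary being a connected $2$-manifold. Indeed, that argument only counts the $x$-, $y$-, and $z$-parallel edges (splitting each class into the eight types $\alpha,\dots,\delta'$) and then verifies, entirely locally, that an arbitrary point $p$ is seen by one of the selected edges: it casts an upward ray from $p$ inside the cross-section $Q=\mathcal{P}\cap\omega$, grows a rectangle leftward until it meets a vertex $v$ of $Q$, and matches $v$ against one of the three admissible configurations. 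None of these steps uses global connectedness of the boundary; $Q$ may perfectly well be an orthogonal region with several components or holes. Hence the counting bound $\left\lfloor (e+r)/12 \right\rfloor$ continues to hold verbatim.

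For the second half I would exhibit the family from Observation~\ref{o5:nonono}: start from a cube, whose $12$ edges are all convex, and carve out $m$ pairwise-disjoint small cubic cavities lying strictly in its interior. Each cavity contributes $12$ edges, all of them reflex (their dihedral angles are $270^\circ$ with respect to the solid), and introduces no new convex edge. Such a shape therefore has $c=e-r=12$ convex edges, $r=12m$ reflex edges, and $e=12m+12$ edges in total, so that
$$\frac{r}{e}=\frac{12m}{12m+12}=\frac{m}{m+1}\xrightarrow[m\to\infty]{}1.$$
Consequently $\left\lfloor (e+r)/12 \right\rfloor = 2m+1$ differs from $\left\lfloor e/6 \right\rfloor = 2m+2$ by a single guard, independently of $m$; equivalently $e/6-(e+r)/12 = c/12$, so the gain stays constant while $e$ grows without bound. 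This is exactly the claimed failure of improvement: the inequality $r \leqslant 5e/6 - 2g - 12$ of Theorem~\ref{th:2} forced $c=e-r$ to grow linearly in $e$ and thereby yielded the genuine $11e/72$ improvement, but it rested on the Gauss--Bonnet computation for a connected boundary and collapses once the boundary is allowed to disconnect.

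The work here is verification rather than discovery, and the one point that I expect to need genuine care — the main obstacle — is confirming that the local visibility argument of Theorem~\ref{th:3} is truly insensitive to connectedness: one must make sure that the ray-and-rectangle construction still terminates at a vertex $v$ of $Q$ of one of the three admissible types even when $Q$ is a multiply connected cross-section of a cavity-riddled solid. Once that robustness is checked, the numerical collapse $r/e\to 1$ together with the identity $e/6-(e+r)/12=c/12$ makes the conclusion immediate.
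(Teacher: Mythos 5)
Your proposal is correct and takes essentially the same route as the paper, which justifies this observation by the very same cube-with-disjoint-cubic-cavities construction of Observation~\ref{o5:nonono} (so $c=e-r=12$ stays fixed while $r=12m$ grows, giving $r/e\to 1$) together with the fact that the purely local counting and visibility argument of Theorem~\ref{th:3} never invokes connectedness of the boundary. Your identity $e/6-(e+r)/12=c/12$ simply makes quantitative the collapse of the improvement that the paper leaves implicit.
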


By combining the results of Theorem~\ref{th:3} with those of Theorem~\ref{th:2}, we immediately obtain two corollaries.

\begin{corollary}
\label{cor:1}
Any open (resp.\ closed) orthogonal polyhedron with $e$ edges and genus $g$ is guardable (resp.\ orthogonally guardable) by at most
$$\left\lfloor\frac{11e}{72} - \frac{g}{6}\right\rfloor - 1$$
mutually parallel open (resp.\ closed) edge guards.\hfill\qed
\end{corollary}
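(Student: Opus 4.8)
The plan is to derive this bound purely by combining the guarding strategy of Theorem~\ref{th:3} with the structural inequality of Theorem~\ref{th:2}, so that no new geometric construction is needed. First I would invoke Theorem~\ref{th:3}, which already supplies a set of at most $\lfloor (e+r)/12 \rfloor$ mutually parallel open (resp.\ closed) edge guards that guard (resp.\ orthogonally guard) the polyhedron. The only remaining task is then to eliminate the parameter $r$ in favour of $e$ and $g$, using the upper half of the chain of inequalities proved in Theorem~\ref{th:2}.

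Concretely, since we are dealing with a non-convex polyhedron we have $r>0$, so Theorem~\ref{th:2} applies and gives $r \leqslant \frac{5e}{6} - 2g - 12$. Substituting this into the guard count yields
$$\frac{e+r}{12} \;\leqslant\; \frac{e + \tfrac{5e}{6} - 2g - 12}{12} \;=\; \frac{\tfrac{11e}{6} - 2g - 12}{12} \;=\; \frac{11e}{72} - \frac{g}{6} - 1.$$
Because the number of guards is an integer, I would then apply the floor function to both sides. The crucial (but elementary) observation is that the $-1$ on the right-hand side is itself an integer, so $\lfloor \frac{11e}{72} - \frac{g}{6} - 1 \rfloor = \lfloor \frac{11e}{72} - \frac{g}{6} \rfloor - 1$, which is exactly the claimed bound.

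The argument is essentially mechanical, so I do not anticipate a genuine obstacle; the only points requiring care are bookkeeping ones. I would make explicit the standing assumption $r>0$ (equivalently, that the polyhedron is non-convex), since Theorem~\ref{th:2} is false for a cuboid and the stated bound would otherwise fail already for $e=12$, $g=0$. I would also keep the open/closed dichotomy aligned throughout: Theorem~\ref{th:3} guarantees ordinary guarding in the open case and orthogonal guarding in the closed case, and since the substitution step touches only the counting, this distinction is inherited verbatim by the corollary. Finally, I would recall (as Observation~\ref{o5:nonono} points out) that the boundary-connectedness hypothesis in the definition of polyhedron is precisely what licenses Theorem~\ref{th:2}, so the corollary genuinely relies on that hypothesis and does not extend to shapes with disconnected boundary, where the ratio of $r$ to $e$ can approach $1$.
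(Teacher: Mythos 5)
Your proposal is correct and is exactly the paper's intended argument: the text preceding the corollary states that it follows immediately by combining Theorem~\ref{th:3} with Theorem~\ref{th:2}, which is precisely the substitution $r \leqslant \frac{5e}{6}-2g-12$ into the $\left\lfloor (e+r)/12\right\rfloor$ bound that you carry out. Your observation that the bound implicitly requires $r>0$ (the statement fails for a cuboid) is a legitimate and careful point that the paper glosses over.
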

 
\begin{corollary}
\label{cor:2}
Any open (resp.\ closed) orthogonal polyhedron with $r$ reflex edges and genus $g$ is guardable (resp.\ orthogonally guardable) by at most
$$\left\lfloor\frac{7r}{12}\right\rfloor - g + 1$$
mutually parallel open (resp.\ closed) edge guards.\hfill\qed
\end{corollary}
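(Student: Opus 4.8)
The plan is to derive this corollary directly by feeding the edge--reflex-edge inequality of Theorem~\ref{th:2} into the guard bound of Theorem~\ref{th:3}, exactly as the surrounding text announces. Theorem~\ref{th:3} already guarantees a guarding set of at most $\lfloor (e+r)/12 \rfloor$ mutually parallel open (resp.\ closed) edge guards, with the orthogonal-guarding conclusion in the closed case and linear-time computability, so the only work remaining is to eliminate the parameter $e$ in favour of $r$ and $g$. Because the guard count is monotone increasing in $e$, I would bound $e$ from above in terms of $r$ and $g$.

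The key step is to select the correct one of the two inequalities in Theorem~\ref{th:2}. I would use the \emph{left-hand} inequality $\frac{e}{6} + 2g - 2 \leqslant r$, which rearranges into the upper bound $e \leqslant 6r - 12g + 12$. Substituting this into the guard bound gives $\frac{e+r}{12} \leqslant \frac{(6r - 12g + 12) + r}{12} = \frac{7r}{12} - g + 1$. Applying the floor and noting that $-g+1$ is an integer, so it may be pulled outside, yields $\left\lfloor \frac{e+r}{12} \right\rfloor \leqslant \left\lfloor \frac{7r}{12} \right\rfloor - g + 1$, which is precisely the claimed bound. The open/closed and orthogonal-guarding distinctions, as well as the linear running time, are inherited verbatim from Theorem~\ref{th:3} and require no additional argument.

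There is no genuine obstacle here, since the statement is an immediate corollary of two already-established results; the proof is a one-line substitution followed by a floor simplification. The only two points demanding care are (i) using the \emph{lower} bound on $r$ from Theorem~\ref{th:2} (equivalently the upper bound on $e$), rather than the upper bound on $r$, which is the side that instead produces Corollary~\ref{cor:1}; and (ii) keeping the integer shift $-g+1$ outside the floor so that the final expression reads $\left\lfloor \frac{7r}{12}\right\rfloor - g + 1$ and not $\left\lfloor \frac{7r}{12} - g + 1\right\rfloor$.
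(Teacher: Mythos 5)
Your derivation is correct and is exactly the argument the paper intends: the corollary is stated with no explicit proof precisely because it follows by substituting the left-hand inequality of Theorem~\ref{th:2} (in the form $e \leqslant 6r-12g+12$) into the $\left\lfloor (e+r)/12\right\rfloor$ bound of Theorem~\ref{th:3}, and your handling of the floor and the integer shift $-g+1$ is right. You correctly identified which side of Theorem~\ref{th:2} to use; nothing further is needed.
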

 
Observe that the lower bounds on the number of edge guards given in Chapter~\ref{chapter3} in terms of $e$ and $r$ also hold for mutually parallel edge guards. Moreover, in terms of both $e$ and $r$, we have a combined lower bound of $(e+r)/14$ (refer to Figure~\ref{fig3:tight}). We conjecture that these bounds are also tight.

\begin{conjecture}\label{con5:1}
Any open (resp.\ closed) orthogonal polyhedron with $e$ edges, of which $r$ are reflex, is guardable (resp.\ orthogonally guardable) by at most
$$\frac{e+r}{14}+O(1),\qquad \frac e {12} +O(1),\qquad \frac r 2+O(1)$$
mutually parallel open (resp.\ closed) edge guards.
\end{conjecture}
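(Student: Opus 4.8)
Since Conjecture~\ref{con5:1} is open, the plan is to mount a coordinated attack on its three bounds, which are really three faces of one statement: on the extremal stack family (where $e=6r+12$ at genus $0$, by Theorem~\ref{t4:stackse}) the quantities $(e+r)/14$, $e/12$ and $r/2$ all coincide with $r/2+O(1)$, and this is exactly the family that realizes the lower bound of Figure~\ref{fig3:tight}. So the genuinely new content is to shave the constant of Theorem~\ref{th:3} from $1/12$ down to the value forced by that lower bound, and then re-derive the $e$- and $r$-forms through Theorem~\ref{th:2}.

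First I would revisit the proof of Theorem~\ref{th:3}. Its factor $1/12$ factors as $e_x+r_x\leqslant(e+r)/3$ (best of three directions) times $1/4$ (the uniform average over the four symmetric in-slice selections, which charges each convex $x$-edge weight $1/4$ and each reflex $x$-edge weight $1/2$). Since the lower bound is $(e+r)/14$, the target in-slice constant is $3/14<1/4$, so all of the slack must be recovered \emph{inside a single slice}. I would replace the uniform average by a fractional charging scheme over the six vertex types A--F, reusing the double-counting identities (\ref{eq:1})--(\ref{eq:3}) and the Gauss--Bonnet relation behind Theorem~\ref{th:2}: the idea is to charge nothing to a convex $x$-edge that is already swept by a neighbouring selected reflex edge, and to let the tight relations of Theorem~\ref{th:2} certify that enough reflex structure is present nearby. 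Solving the resulting linear program (exactly as in the derivation of $r\leqslant 5e/6-2g-12$) should pin the constant, and recomputing Corollaries~\ref{cor:1} and~\ref{cor:2} through Theorem~\ref{th:2} would then give the $e/12$ and $r/2$ forms.

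For the $r/2$ form specifically I would instead lean on reflex edge guards together with the parallelism constraint, building on the castle/stack machinery of Chapter~\ref{chapter4}. Conjecture~\ref{con4:1} and its proved case Theorem~\ref{t4:2orthor} already produce roughly $r/2$ reflex edge guards, but not parallel ones; restricting to the axis carrying the most reflex edges and guarding each cross-orthogonal slice by a monotone sweep in the style of Lemma~\ref{l4:mono} (where one reflex vertex covers an entire interval of altitudes) is the route to make them parallel without increasing their number.

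The hard part will be \emph{consistency across cross-sections}. Theorem~\ref{th:3} succeeds precisely because one fixed triple of edge types can be selected in every slice simultaneously; any sharper, slice-local selection (of the kind needed to reach $3/14$, or to emulate the planar $\lfloor n/4\rfloor=\lfloor r/2\rfloor+1$ bound) chooses different vertex sets in neighbouring slices, and those choices need not extend to a single three-dimensional edge. Reconciling the locally optimal planar selections into globally parallel edge guards, while matching the constant $1/14$ that Figure~\ref{fig3:tight} pins down with no slack and while keeping the reflex charge from degrading on reflex-rich polyhedra, is exactly the gap that leaves Conjecture~\ref{con5:1} open; I would regard a clean treatment of this gluing step as the real content of any eventual proof.
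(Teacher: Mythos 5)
The statement you are addressing is Conjecture~\ref{con5:1}: the paper offers no proof of it, so there is nothing to compare your argument against. You correctly treat it as open, and your proposal is a research program rather than a proof; the obstruction you single out --- that a slice-by-slice optimal selection of vertex guards need not be consistent across neighbouring cross-sections, and hence need not lift to a set of parallel three-dimensional edges --- is indeed the crux, and matches the paper's own diagnosis in its concluding open problems (it suggests isolating the reflex-rich polyhedra that are worst for the analysis of Theorem~\ref{th:3} yet intuitively easy to guard).

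One concrete flaw in the plan as stated: you propose to establish the $(e+r)/14+O(1)$ bound first and then ``re-derive the $e$- and $r$-forms through Theorem~\ref{th:2}.'' The arithmetic does not support this for the $e$-form. Substituting the extremal value $r\leqslant 5e/6-2g-12$ into $(e+r)/14$ gives $11e/84+O(1)\approx 0.131e$, not $e/12\approx 0.083e$; this is exactly the loss the paper already incurs in passing from Theorem~\ref{th:3} to Corollary~\ref{cor:1}. The implications in fact run the other way: since $r\geqslant e/6+2g-2$ gives $e\leqslant 6r+12$ and $6r-e\geqslant -12$, a bound of $e/12+O(1)$ implies both $r/2+O(1)$ (the paper's own observation immediately after the conjecture) and $(e+r)/14+O(1)$, whereas $(e+r)/14+O(1)$ implies only the $r/2$ form. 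So the single master target should be $e/12+O(1)$, and your charging scheme should be calibrated against that constant; aiming at $3/14$ per slice of $e_x+r_x$ would still leave the $e$-form of the conjecture open.
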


\begin{observation}
Due to Theorem~\ref{th:2}, guardability by $e/{12} +O(1)$ guards implies guardability by $r/2+O(1)$ guards.
\end{observation}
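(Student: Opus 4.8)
The plan is to derive the implication directly from the left-hand inequality of Theorem~\ref{th:2}, which is the only ingredient needed. First I would recall that for any orthogonal polyhedron with $r>0$ reflex edges and genus $g\geqslant 0$, Theorem~\ref{th:2} gives $\frac{e}{6}+2g-2\leqslant r$. Rearranging this for $e$ yields $e\leqslant 6r-12g+12$, and dividing by $12$ gives
\[
\frac{e}{12}\ \leqslant\ \frac{r}{2}-g+1\ \leqslant\ \frac{r}{2}+1,
\]
where the last step uses $g\geqslant 0$.

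With this in hand, the implication is immediate. Suppose a class of orthogonal polyhedra is guardable by at most $\frac{e}{12}+c$ mutually parallel edge guards for some constant $c$. By the displayed inequality, this quantity is at most $\frac{r}{2}+c+1$, so the same polyhedra are guardable by at most $\frac{r}{2}+O(1)$ guards. Hence guardability by $\frac{e}{12}+O(1)$ guards implies guardability by $\frac{r}{2}+O(1)$ guards, as claimed.

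The crucial point---and the only place where any content enters---is that the inequality runs in the favorable direction: the $e/12$ bound is the smaller (stronger) of the two, precisely because the number of edges is controlled from above by the number of reflex edges via Theorem~\ref{th:2}. There is no genuine obstacle here beyond checking that the additive $O(1)$ terms and the $-g$ contribution (which only helps, since $g\geqslant 0$) are absorbed correctly; no new geometric argument is required, and in particular the statement says nothing about the reverse implication, which would need the right-hand inequality of Theorem~\ref{th:2} and is not asserted.
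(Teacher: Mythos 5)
Your proposal is correct and follows exactly the route the paper intends: the observation is stated as an immediate consequence of the left-hand inequality of Theorem~\ref{th:2}, which rearranges to $e/12 \leqslant r/2 - g + 1 \leqslant r/2 + 1$, and your handling of the $O(1)$ terms and the sign of $g$ is right. No further comment is needed.
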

\chapter{Edge guards in 4-oriented polyhedra}\label{chapter6}
\begin{chapterabstract}
We study the problem of edge-guarding polyhedra whose faces are oriented in four possible directions, any three of which are linearly independent (orthogonal polyhedra come as a subclass).

We prove that any such 4-oriented polyhedron with $e$ edges, of which $r$ are reflex, can be guarded by at most
$$\left\lfloor \frac{e+r}{6}\right\rfloor$$
edge guards, regardless of its genus. This bound is obtained in two different ways, via independent constructions that can be computed in linear time.

On the other hand, for unboundedly large values of $e$ and $r$, we exhibit 4-oriented polyhedra that require
$$\left\lfloor \frac{e}{6}\right\rfloor-1 = \left\lfloor \frac{e}{12}+\frac{r}{6}\right\rfloor$$
(reflex) edge guards.

All our results hold both in the open and in the closed guard model.
\end{chapterabstract}

\section{4-oriented polyhedra}

\subsection*{Motivations}
Recall that a polyhedron is 4-oriented if each of its faces is orthogonal to one of four possible vectors (denoted from now on by $x$, $y$, $z$, $w$), no three of which are linearly dependent. Orthogonal polyhedra, being 3-oriented, come as a proper subset of this class. In the present chapter we aim at guarding 4-oriented polyhedra with (open and closed) edge guards, and we give bounds on the number of guards that are required in each polyhedron, in terms of $e$, the total number of edges, and $r$, the number of reflex edges.

The problem of guarding $c$-oriented polyhedra naturally arises when trying to generalize the guarding technique employed in Chapter~\ref{chapter5} for orthogonal polyhedra. We still want to reduce our 3-dimensional problem to a set of planar problems, by cutting our polyhedra with parallel planes, and placing point guards in each plane. Of course, for every constant $c$, the number of different configurations we have to consider is finite, and the technique can conceivably be generalized.

However, as $c$ grows, the complexity of the possible configurations blows up, and the results become weaker. As $c$ tends to infinity, our technique will place guards on almost every edge, making our approach pointless. Nonetheless, for $c=4$, the complexity is still manageable, and the results obtained are still remarkable.

Once again, we have the same application that we mentioned for parallel edge guards in orthogonal polyhedra: instead of having, say, vertical edge guards, now we have non-horizontal edge guards. These can be interpreted again as arrays of sensors that scan the area horizontally, detecting any object in the environment, providing its altitude, and having the ability to monitor its movements and navigate it through space.

\subsection*{Structure}
An example of a 4-oriented polyhedron is given in Figure~\ref{fig6:1}, made by gluing together two regular tetrahedra.

\begin{figure}[h]
\centering
\includegraphics[width=0.35\linewidth]{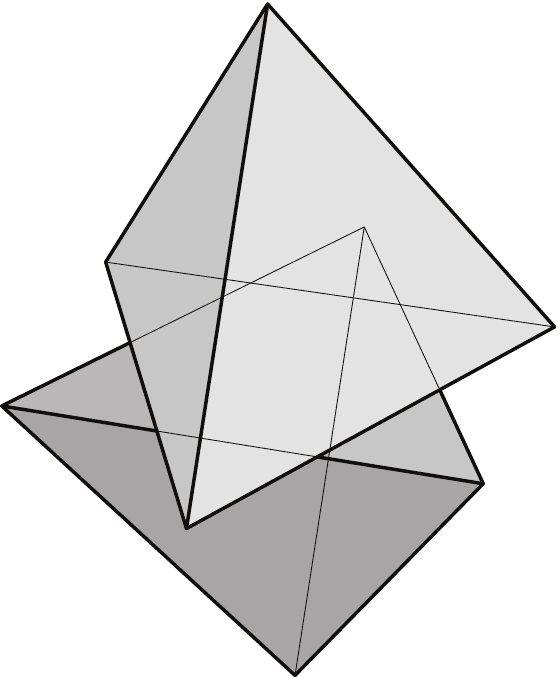}
\caption{Example of a 4-oriented polyhedron.}
\label{fig6:1}
\end{figure}

Note that, in general, each pair of \emph{orientation classes} for faces gives rise to a differently oriented edge. For instance, two adjacent faces orthogonal to $x$ and $y$, respectively, meet at an edge parallel to the cross product $x \times y$. In total, we have ${4 \choose 2} = 6$ possible orientation classes for edges. Furthermore, each such orientation class includes edges of eight different types, depending on the dihedral angle (internal with respect to the polyhedron) formed by the two adjacent faces. Figure~\ref{fig6:2} shows the eight possible ways two faces of a 4-oriented polyedron may meet to form an edge of a given orientation class.

\begin{figure}[h]
\centering
\subfigure[]{\includegraphics[scale=0.5]
{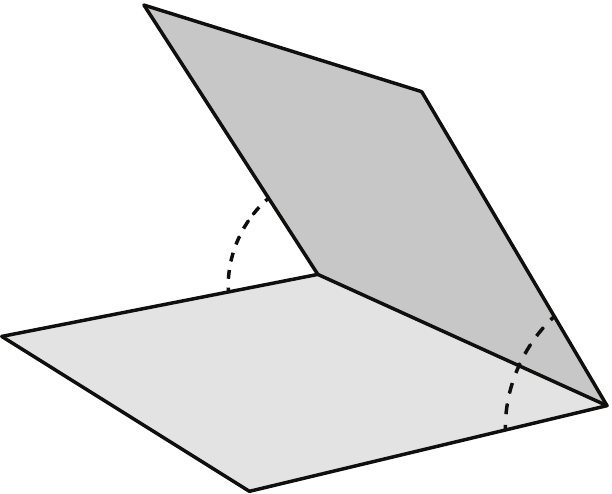}}\ 
\subfigure[]{\includegraphics[scale=0.5]{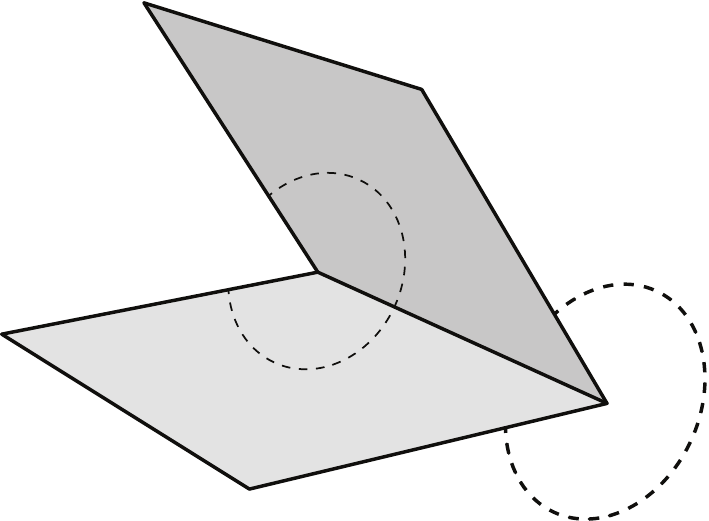}}\ 
\subfigure[]{\includegraphics[scale=0.5]{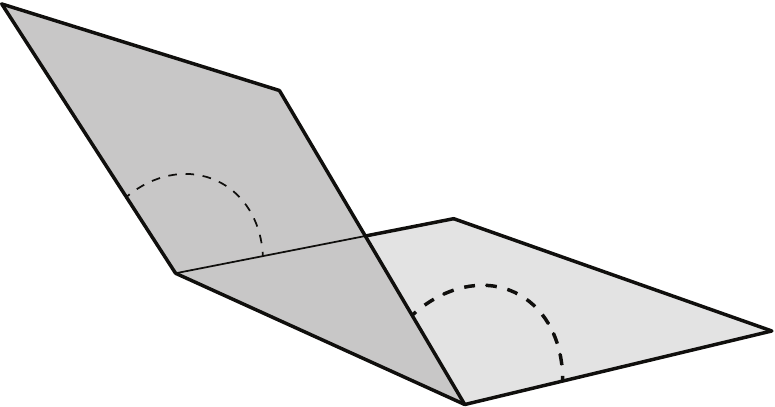}}\ 
\subfigure[]{\includegraphics[scale=0.5]{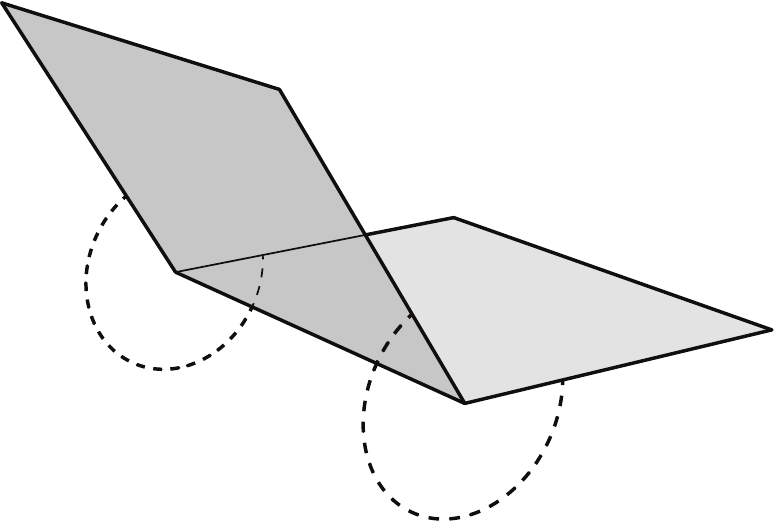}}\\
\subfigure[]{\includegraphics[scale=0.5]
{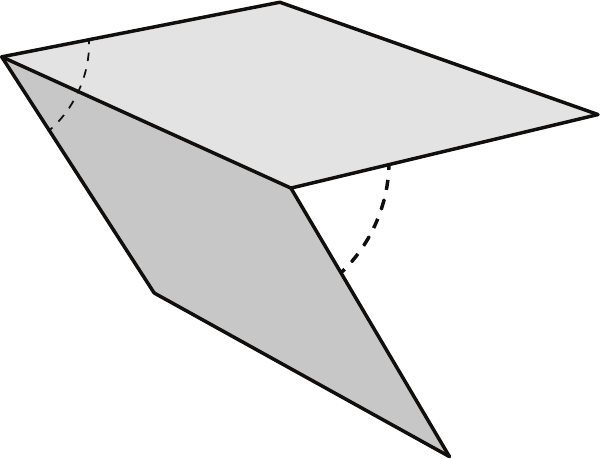}}\ 
\subfigure[]{\includegraphics[scale=0.5]{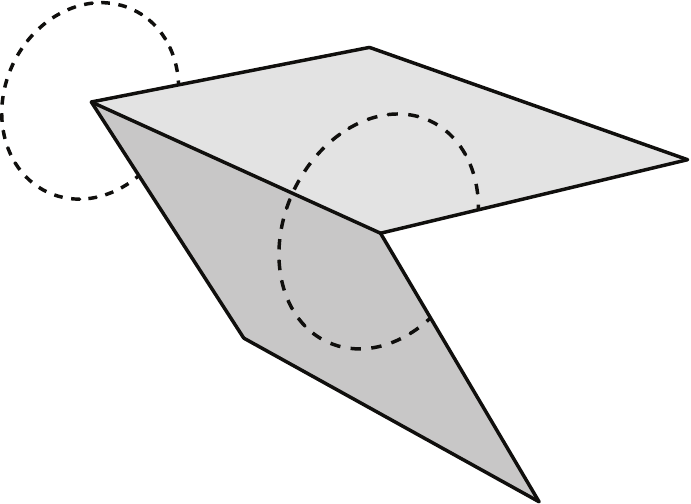}}\ 
\subfigure[]{\includegraphics[scale=0.5]{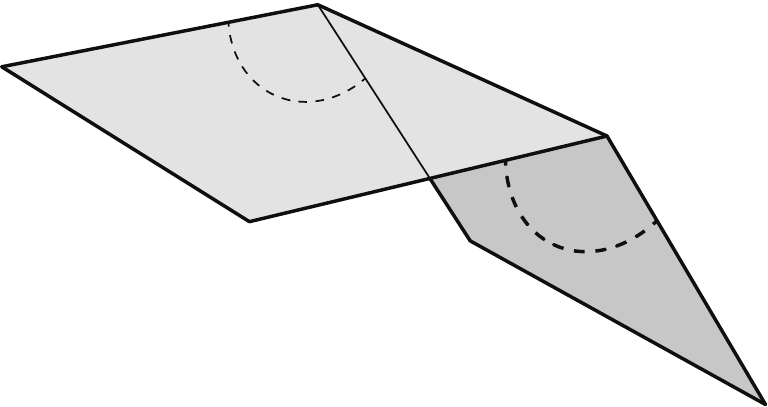}}\ 
\subfigure[]{\includegraphics[scale=0.5]{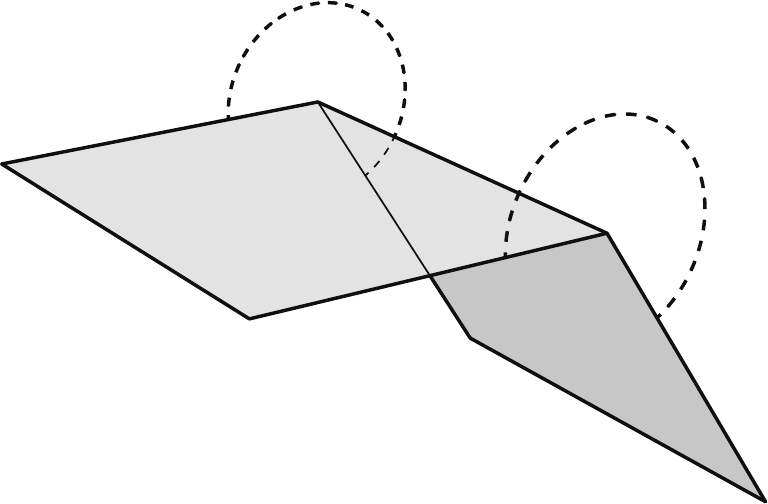}}\\
\caption{The same edge may be formed in eight possible ways.}
\label{fig6:2}
\end{figure}

Each plane parallel to a face of a 4-oriented polyhedron is also parallel to three edge classes. Therefore, if such plane is devoid of vertices of the polyehdron, it may intersect only edges from the three remaining classes. For instance, a plane orthogonal to $x$ is also parallel to $x\times y$, $x\times z$ and $x\times w$, and is not parallel to $y\times z$, $y\times w$ and $z\times w$.

In fact, the intersection of any such plane with a 4-oriented polyhedron is a collection of 3-oriented polygons, possibly with holes, whose vertices may be of 24 species, because there are eight different edge types in each edge class. The 24 species are listed in Figure~\ref{fig6:3}, each paired with its complement.

\begin{figure}[h]
\centering
\includegraphics[width=\linewidth]{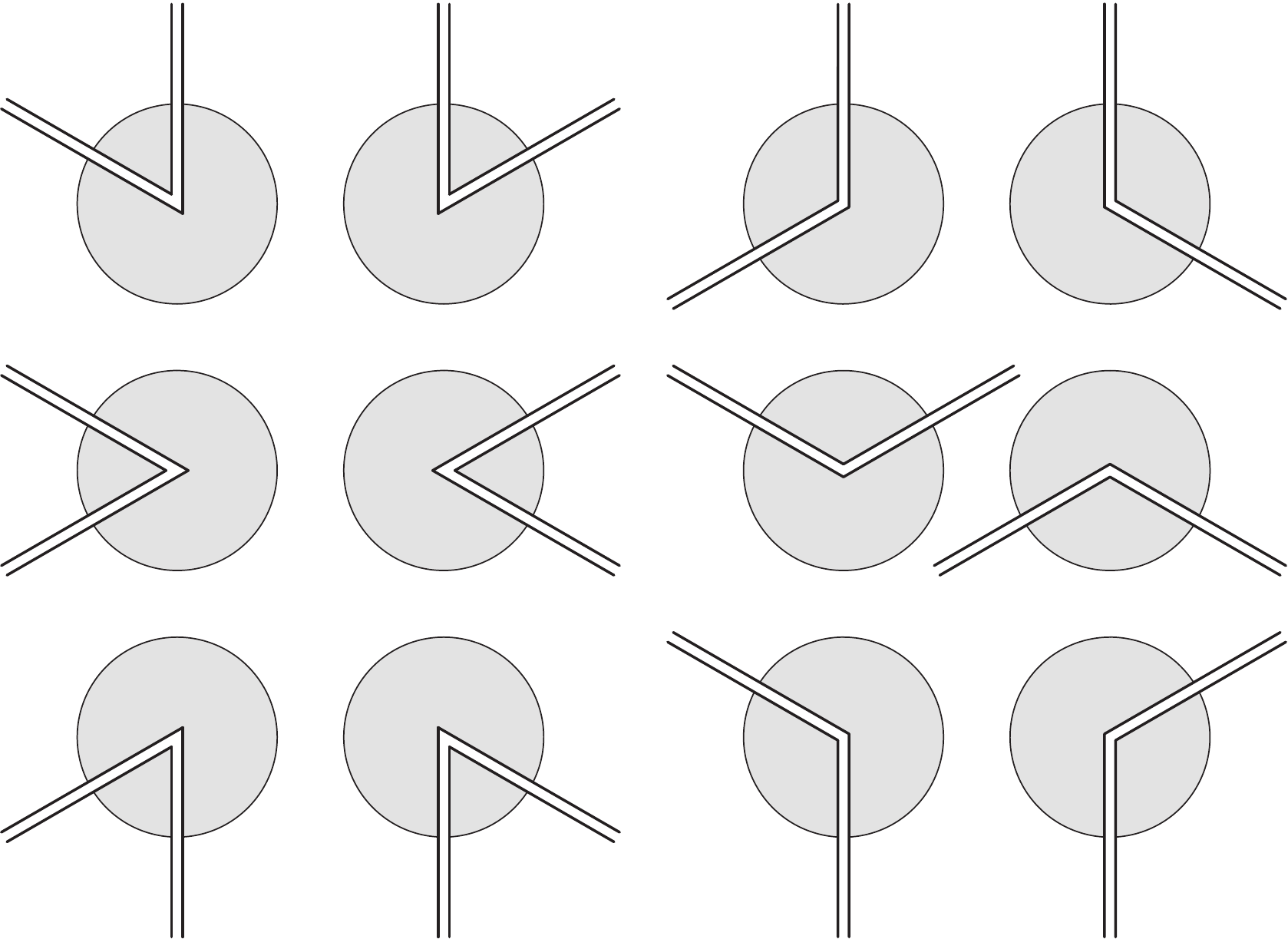}
\caption{Different types of vertices found in a section.}
\label{fig6:3}
\end{figure}

Observe that there are six \emph{smaller} and six \emph{bigger} differently oriented convex angles, and as many reflex angles, which are their complements.

Note that the terms ``smaller'' and ``bigger'' are not necessarily rigorous, depending on the orientations of $y$, $z$ and $w$ with respect to $x$: for example, vectors $y\times z$, $y\times w$ and $z\times w$, when projected onto a plane orthogonal to $x$, may induce on it a partition into six angles measuring $120^\circ$, $30^\circ$, $30^\circ$, $120^\circ$, $30^\circ$, and $30^\circ$, in this order. These would be our ``smaller'' angles, while the ``bigger'' angles would be those formed by pairs of adjacent smaller ones, and would measure $150^\circ$, $60^\circ$, $150^\circ$, $150^\circ$, $60^\circ$, and $150^\circ$. Hence, despite the names, some smaller angles would be twice as large as some bigger angles.

\section{Guarding strategy}

Much like we did in Chapter~\ref{chapter5} with orthogonal polyhedra, we cut 4-oriented polyhedra with parallel planes, and make sure that each planar section is guarded.

Let a 4-oriented polyhedron $\mathcal P$ be given, and let us pick planes orthogonal to $x$, not passing through any vertex of the polyhedron. As previously noted, each such plane intersects only edges of three different types (i.e., orientations), giving rise to the 24 species of vertices illustrated in Figure~\ref{fig6:3}. Our strategy consists in placing edge guards just on the species listed in Figure~\ref{fig6:5}

\begin{figure}[h]
\centering
\includegraphics[scale=0.7]{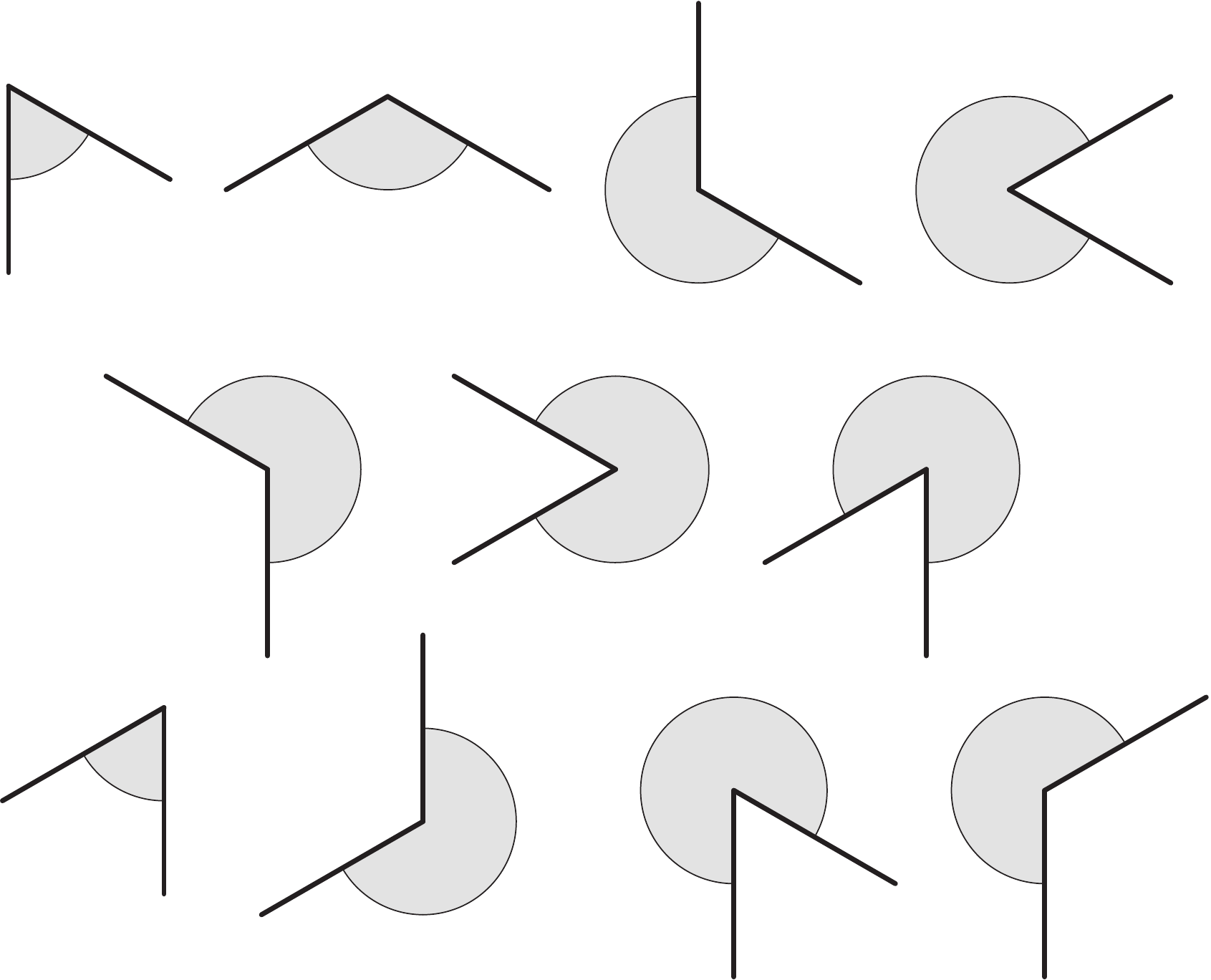}
\caption{Edge types that are assigned a guard.}
\label{fig6:5}
\end{figure}

Now, let a point $p\in \mathcal P$ be given, lying in an $x$-orthogonal section $S$ devoid of vertices of $\mathcal P$, and let us show that $p$ is guarded.

Cast a ray from $p$, parallel to $x\times y$, and extend it until it first encounters the boundary of $S$, in $q$. Because the ray is parallel to $x\times y$, $q$ must lie on an edge of $S$ that is parallel to either $x\times z$ or $x\times w$.

Suppose $q$ lies on an edge parallel to $x\times z$. We consider the segment $pq$, open at $p$ and closed at $q$, and we continuously translate it parallel to $x\times z$. Of the two possible directions of translation (leftward and rightward), we pick the one that forms the smaller angle with the ray from $p$ to $q$. We keep translating $pq$ until it first hits a vertex $v$ of $S$. If several vertices are encountered simultaneously, then $v$ is the topmost one.

Figure~\ref{fig6:6} shows the possible configurations for vertex $v$. If $v$ is one endpoint of the edge containing $q$, it belongs to one of the first four types, from (a) to (d). On the other hand, if $v$ is encountered before such endpoint is reached, then it must belong to one of the last three types, from (e) to (g). Observe that all seven vertex types have been listed in Figure~\ref{fig6:5}, and so the corresponding edges of $\mathcal P$ contain a guard. It follows that $p$ is guarded.

\begin{figure}[h]
\centering
\subfigure[]{\includegraphics[scale=1]{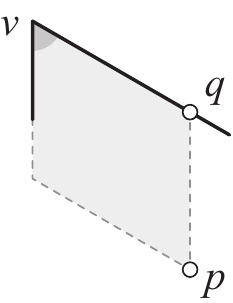}}\qquad\quad
\subfigure[]{\includegraphics[scale=1]{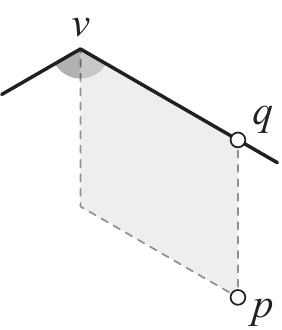}}\qquad\quad
\subfigure[]{\includegraphics[scale=1]{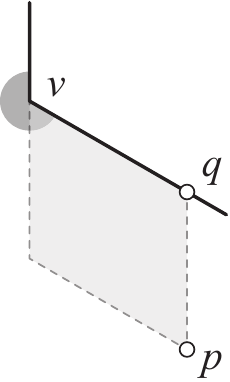}}\qquad\quad
\subfigure[]{\includegraphics[scale=1]{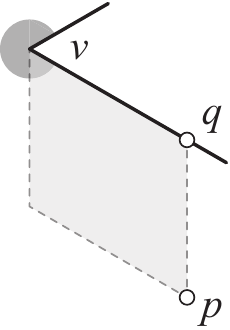}}\\ \vspace{0.2cm}
\subfigure[]{\includegraphics[scale=1]{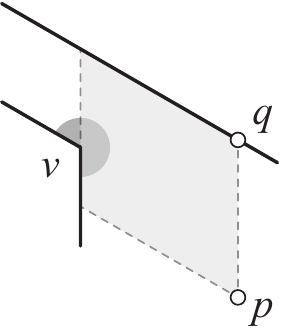}}\qquad\qquad
\subfigure[]{\includegraphics[scale=1]{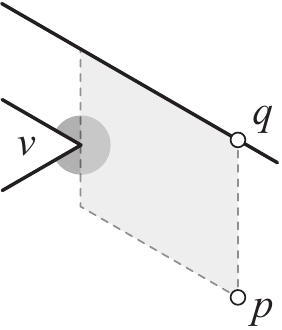}}\qquad\qquad
\subfigure[]{\includegraphics[scale=1]{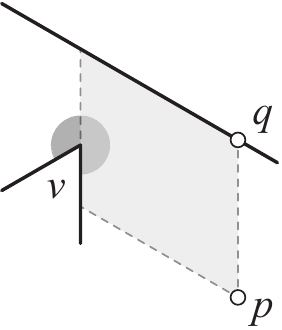}}
\caption{$p$ is necessarily guarded.}
\label{fig6:6}
\end{figure}

If, instead, $q$ lies on an edge parallel to $x\times w$, we repeat the same construction on the other side of $pq$, in a vertically symmetric fashion. It is straightforward to observe that the vertically symmetric counterparts of the seven vertex types shown in Figure~\ref{fig6:6} are also listed in Figure~\ref{fig6:5}. Hence, $p$ is guarded in this case, as well.

Based on these preliminary observations, we are now ready to prove the main result of this chapter.

\begin{theorem}\label{t6:1}
Any open (resp.\ closed) 4-oriented polyhedron with $e$ edges, of which $r$ are reflex, is guardable by at most
$$\left\lfloor \frac{e+r}{6}\right\rfloor$$
open (resp.\ closed) edge guards. Guard locations can be computed in linear time.
\end{theorem}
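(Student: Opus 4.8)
The plan is to promote the slicing construction established just above the statement into a counting bound and then minimize that count by a double averaging, in exact analogy with the orthogonal case of Theorem~\ref{th:3}. Correctness is already essentially in hand: slicing $\mathcal P$ by planes orthogonal to $x$ and placing guards on the edge types of Figure~\ref{fig6:5} guards every vertex-free section, and hence all of $\mathcal P$, since the ray-and-translate procedure always halts at a witness vertex $v$ carrying a guard. For closed guards this is literal; for open guards, when $v$ is an endpoint of its guarding edge, the point $p$ is still seen from the part of that edge within distance $d$ of $v$, where $d$ is the least distance from the slicing plane to a face of $\mathcal P$ missing it --- the same device used in Theorem~\ref{th:3}. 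What remains is purely enumerative: to count the selected edges and to choose the construction that selects the fewest.

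Two symmetries drive the count. First, there are four admissible slicing directions, orthogonal to $x$, $y$, $z$, and $w$. A plane orthogonal to one of these is parallel to three of the six edge orientation classes and transverse to the remaining three; equivalently, each edge orientation class is crossed by exactly two of the four slicing directions. Writing $e^{(d)}$ and $r^{(d)}$ for the number of edges and reflex edges crossed by direction $d$, this double incidence gives $\sum_d e^{(d)} = 2e$ and $\sum_d r^{(d)} = 2r$, so the direction $d_0$ minimizing $e^{(d)}+r^{(d)}$ satisfies
$$e^{(d_0)}+r^{(d_0)} \;\le\; \frac{1}{4}\sum_{d}\left(e^{(d)}+r^{(d)}\right) \;=\; \frac{e+r}{2}.$$
Secondly, inside a fixed section the initial ray may be cast parallel to any of the three edge directions of the section, which yields three symmetric copies of the ray-and-translate procedure, each guarding the section by the argument above (the roles of the three section directions being interchangeable).

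The bound then follows once one establishes the in-section identity that is the analogue of the fact, in Theorem~\ref{th:3}, that the four admissible triplets of $x$-edges sum to $e_x+r_x$: namely, that over the three ray-variants of a fixed slicing direction $d$, every crossed convex edge is selected once and every crossed reflex edge is selected twice (reflex section-vertices corresponding exactly to reflex edges of $\mathcal P$). Granting this, the three variant costs sum to $e^{(d)} + r^{(d)}$, so the cheapest of them uses at most $(e^{(d)}+r^{(d)})/3$ guards; taking $d=d_0$ yields at most
$$\frac{e^{(d_0)}+r^{(d_0)}}{3} \;\le\; \frac{1}{3}\cdot\frac{e+r}{2} \;=\; \frac{e+r}{6}$$
guards, and since this number is an integer the bound $\lfloor (e+r)/6\rfloor$ follows. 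The overall factor $\tfrac{1}{2}\cdot\tfrac{1}{3} = \tfrac{1}{6}$ is the counterpart of the orthogonal $\tfrac{1}{3}\cdot\tfrac{1}{4} = \tfrac{1}{12}$, the improvement coming entirely from the fact that here each edge class is met by two slicing directions rather than one. Counting and selecting edges by orientation class and type is linear, as is emitting the guard set for the optimal direction and variant.

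The step I expect to be the main obstacle is precisely the in-section identity of the previous paragraph: verifying, through the case analysis over the seven witness configurations of Figure~\ref{fig6:6} and their vertically symmetric counterparts, that across the three ray-variants the selected types meet each convex section-vertex species exactly once and each reflex species exactly twice. Pinning this down requires the full classification of the twenty-four vertex species of Figure~\ref{fig6:3}, including the delicate \qq{smaller versus bigger angle} distinction, since it is there that the exact selection pattern --- and hence the weight $2$ attached to reflex edges --- is determined; the rest is either routine bookkeeping or a direct transcription of the orthogonal argument.
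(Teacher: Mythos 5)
Your overall architecture matches the paper's: slice by planes orthogonal to each of $x,y,z,w$, note that each edge orientation class is crossed by exactly two of the four slicing directions (so $\sum_d e^{(d)}=2e$, $\sum_d r^{(d)}=2r$), guard each vertex-free section by the ray-and-translate case analysis of Figures~\ref{fig6:5} and~\ref{fig6:6}, handle the open-guard endpoints as in Theorem~\ref{th:3}, and combine an outer average over the four directions with an inner average over rotated copies of the construction. The outer averaging and the final arithmetic are exactly right.

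The genuine gap is the inner averaging, which you yourself flag as the main obstacle. You average over only \emph{three} variants (one per undirected edge direction of the section) and posit the exact tiling identity that each crossed convex edge is selected once and each crossed reflex edge twice. This cannot hold for the selection actually proved to guard a section: that selection consists of $11$ of the $24$ species ($2$ smaller convex, $1$ bigger convex, $4$ smaller reflex, $4$ bigger reflex), so three variants make $33$ selections, while your identity requires $12+2\cdot 12=36$. Worse, the three positive ray directions do not form an orbit of the index-two subgroup of the $6$-fold rotational symmetry acting on the angle species (in cyclic order the six directed rays interleave as $d_1,\pm d_2,\pm d_3,-d_1,\dots$, so the three chosen ones sit at positions such as $\{0,1,2\}$, never $\{0,2,4\}$); hence rotating a fixed selection through only these three positions cannot produce uniform coverage of all six orientations of each kind, and no symmetry argument will rescue the identity. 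The paper's proof instead uses all \emph{six} directed ray choices (the three directions and their reversals, i.e.\ the construction rotated by $180^\circ$ as well); full $\mathbb{Z}_6$-symmetry then gives each smaller convex species twice, each bigger convex species once, and each reflex species four times, for a total of at most $2\bigl(e^{(d)}+r^{(d)}\bigr)$ selections over the six variants --- an inequality obtained by padding with extra bigger convex edges, not an exact tiling. The minimum variant then uses at most $\bigl(e^{(d)}+r^{(d)}\bigr)/3$ guards, and your outer averaging finishes the proof. So the repair is concrete: replace your three variants by the six directed ones and replace the exact identity by this padded count.
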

\begin{proof}
By placing guards as described above, we can guard the entire polyhedron $\mathcal P$, except perhaps its $x$-orthogonal sections containing vertices.

Now, if $\mathcal P$ and our guards are closed, we can guard the missing sections in exactly the same way as the other sections. Notice that our construction works even if $p$ is on the boundary of $S$.

If $\mathcal P$ and the guards are open, let $S$ be one such section, repeat our construction above, and let $p$ be guarded by a vertex $v$ of $S$ that is also a vertex of $\mathcal P$. Then there must exist an edge $\ell$ of $\mathcal P$ hosting a guard, such that $v$ is an endpoint of $\ell$. By the same argument used in Theorem~\ref{th:3}, there exists a point $t$ in the relative interior of $\ell$, close enough to $v$, that sees $p$. Indeed, because the faces of $\mathcal P$ are finitely many, there exists a minimum positive distance $d$ between the plane containing $S$ and a face of $\mathcal P$ not touching (the boundary of) $S$. If the distance between $v$ and $t$ is smaller than $d$, then $t$ sees $p$. Hence, even in the open guard model, $\mathcal P$ is completely guarded.

Recall that we placed guards only on edges of $\mathcal P$ that are not orthogonal to $x$. In particular, referring to Figure~\ref{fig6:6}, we picked:
\begin{itemize}
\item two smaller convex angles,
\item one bigger convex angle,
\item four smaller reflex angles, and
\item four bigger reflex angles.
\end{itemize}

We can repeat the same construction, rotated by different angles about $x$. Instead of casting a ray from $p$ parallel to $x\times y$, obtaining a point $q=p+\lambda(x\times y)$ with $\lambda\geqslant 0$, we could pick a negative lambda and rotate the entire construction by $180^\circ$, thus putting our guards on different edges. We could also pick $x\times z$ or $x\times w$ instead of $x\times y$, giving rise to four more different sets of guarding edges.

Because these six constructions correspond to all the six possible orientations of our species of angles, if we add up all the edges that we picked in every construction, we obtain:
\begin{itemize}
\item all smaller convex angles, twice;
\item all bigger convex angles, once;
\item all smaller reflex angles, four times;
\item all bigger reflex angles, four times.
\end{itemize}

Let $e_x$ and $r_x$ be the number of edges and reflex edges of $\mathcal P$, respectively, that are not parallel to $x$ (and similarly we define $e_y$, $r_y$, etc.). Then, the total number of edges listed above is not greater than  $2e_x+2r_x$. Indeed, this is the sum we get if we further add one copy of each bigger convex edge.

If we pick the least numerous of all the six choices, we place not more than
$$\left\lfloor\frac{2e_x+2r_x}{6}\right\rfloor = \left\lfloor\frac{e_x+r_x}{3}\right\rfloor$$
edge guards.

By the same reasoning, if we section $\mathcal P$ with planes orthogonal to $y$ instead of $x$, we obtain the upper bound
$$\left\lfloor\frac{e_y+r_y}{3}\right\rfloor,$$
and similarly for $z$ and $w$.

Among these four constructions, each edge type was considered exactly twice. For instance, the edges parallel to $x\times y$ can be assigned guards only when we section $\mathcal P$ with planes orthogonal to $z$ or to $w$. As a consequence, the smallest guarding set among these four contains at most
$$\left\lfloor\frac{e_x+e_y+e_z+e_w+r_x+r_y+r_z+r_w}{12}\right\rfloor = \left\lfloor\frac{2e+2r}{12}\right\rfloor = \left\lfloor\frac{e+r}{6}\right\rfloor$$
edge guards.

In order to place guards as described, it is sufficient to count the amount of edges of each of the 48 species, and then do a constant amount of sums and comparisons. This can be easily carried our in linear time.
\end{proof}

\begin{corollary}
Any open (resp.\ closed) 4-oriented polyhedron with $e$ edges is guardable by at most
$$\left\lfloor \frac{e}{3}\right\rfloor$$
open (resp.\ closed) edge guards. Guard locations can be computed in linear time.
\end{corollary}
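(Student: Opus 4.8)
The plan is to obtain this bound as an immediate consequence of Theorem~\ref{t6:1}, which was just established: any 4-oriented polyhedron with $e$ edges, of which $r$ are reflex, is guardable by at most $\left\lfloor (e+r)/6\right\rfloor$ edge guards (in both the open and the closed model), and the guard locations can be computed in linear time. So the entire corollary reduces to re-stating that bound in a weaker form that does not mention $r$.

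First I would note that every reflex edge of a polyhedron is in particular an edge of the polyhedron, so the count of reflex edges can never exceed the total edge count, i.e.\ $r\leqslant e$. Substituting this into the bound of Theorem~\ref{t6:1} gives
$$\frac{e+r}{6}\ \leqslant\ \frac{e+e}{6}\ =\ \frac{e}{3},$$
and since the floor function is monotone non-decreasing, it follows that $\left\lfloor (e+r)/6\right\rfloor\leqslant\left\lfloor e/3\right\rfloor$. Hence the guarding set already produced by Theorem~\ref{t6:1} has cardinality at most $\left\lfloor e/3\right\rfloor$, as required, in both the open and the closed setting. The linear-time claim needs no further argument, since the guard locations are exactly those computed in the construction of Theorem~\ref{t6:1}; we are only re-expressing the size bound.

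Because this statement is a pure weakening of a theorem proved immediately above, there is no real obstacle to overcome. The only point worth verifying is that the inequality $r\leqslant e$ may legitimately be pushed through the floor, which is immediate from monotonicity and introduces no rounding subtleties, as the comparison is made between the real quantities $(e+r)/6$ and $e/3$ before either floor is taken.
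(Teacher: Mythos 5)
Your proposal is correct and is exactly the paper's own argument: the corollary is stated in the text as immediate from Theorem~\ref{t6:1} together with the fact that $r\leqslant e$, and the guard locations are simply those from that theorem's linear-time construction. Your extra care about pushing the inequality through the floor (by comparing the real quantities first) is sound and only makes the one-line proof more explicit.
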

\begin{proof}
Immediate by Theorem~\ref{t6:1} and the fact that $r\leqslant e$.
\end{proof}

\section{Alternative construction}

In this section we prove Theorem~\ref{t6:1} again, using a slightly different construction. Each of the two constructions performs better than the other on certain polyhedra, although they perform equally well in the respective worst case scenarios, in terms of $e$ and $r$ combined.

We start as in the previous section but, instead of choosing the edges corresponding to the 11 angles in Figure~\ref{fig6:5}, we choose those listed in Figure~\ref{fig6:7}.

\begin{figure}[h]
\centering
\includegraphics[scale=0.7]{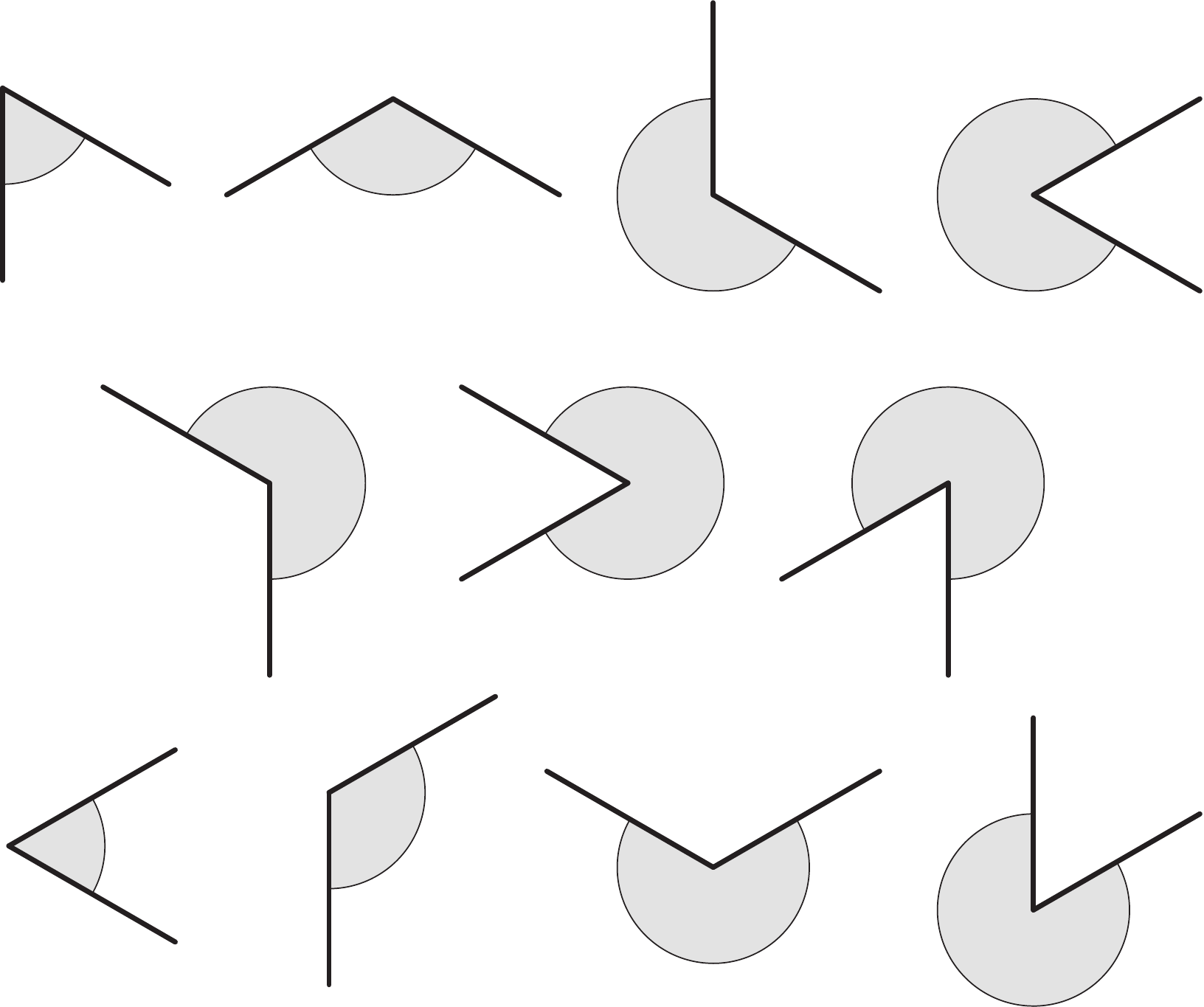}
\caption{Alternative choice of guards.}
\label{fig6:7}
\end{figure}

Now, using the same notation of the previous section, when we cast a ray from $p$ parallel to $x\times y$, two cases arise. If we find $q$ on an edge of $S$ parallel to $x\times z$, we argue as before. If, instead, the edge containing $q$ is parallel to $x\times w$, we do not repeat our argument on the other side of $pq$, but we stay on the same side. Referring to Figure~\ref{fig6:8}, we translate $pq$ (open at $p$ and closed at $q$) until we hit a vertex $v$. If $v$ is found on the edge of $S$ containing $q$, then we are in one of the first four cases, from (a) to (d). Otherwise, $v$ is hit by the interior of $pq$ as it translates, and the possible cases are those from (e) to (g). All the edges corresponding to these configurations have been selected to host a guard (see Figure~\ref{fig6:7}), hence $p$ is guarded.

\begin{figure}[h]
\centering
\subfigure[]{\includegraphics[scale=1]{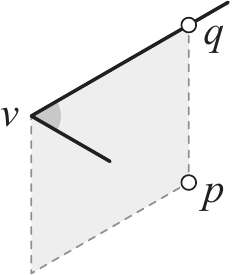}}\qquad\quad
\subfigure[]{\includegraphics[scale=1]{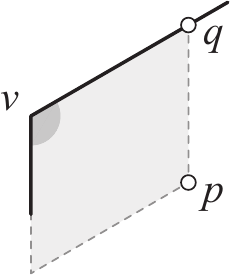}}\qquad\quad
\subfigure[]{\includegraphics[scale=1]{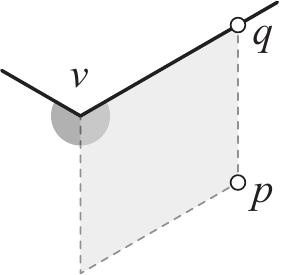}}\qquad\quad
\subfigure[]{\includegraphics[scale=1]{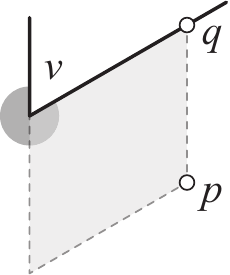}}\\ \vspace{0.2cm}
\subfigure[]{\includegraphics[scale=1]{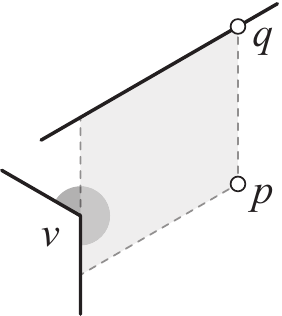}}\qquad\qquad
\subfigure[]{\includegraphics[scale=1]{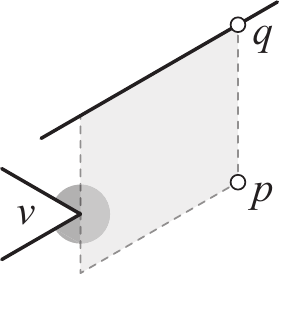}}\qquad\qquad
\subfigure[]{\includegraphics[scale=1]{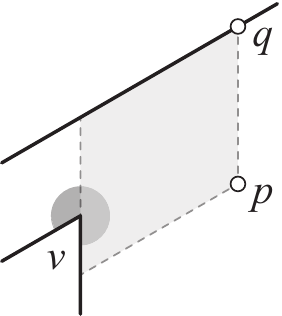}}
\caption{$p$ is necessarily guarded.}
\label{fig6:8}
\end{figure}

From this point on, we proceed as in the previous section. Note that this time we chose:
\begin{itemize}
\item two smaller convex angles,
\item two bigger convex angles,
\item four smaller reflex angles, and
\item three bigger reflex angles.
\end{itemize}

By adding up the contributions of the six possible orientations of our construction around $x$, we obtain:
\begin{itemize}
\item all smaller convex angles, twice;
\item all bigger convex angles, twice;
\item all smaller reflex angles, four times;
\item all bigger reflex angles, three times.
\end{itemize}

Now, if we add one more copy of each bigger reflex angle to this sum, we get $2e_x+2r_x$, as before. Hence, averaging this sum over the 24 possible orientations of our construction, we reach once again the bound given in Theorem~\ref{t6:1}, of
$$\left\lfloor \frac{e+r}{6}\right\rfloor$$
edge guards.

We remark that, in our first construction, we had to add some extra bigger convex angles to our sums, in order to obtain a ``clean'' upper bound formula. In our second construction, we added extra bigger reflex angles, instead. This leads to the following:

\begin{observation}
Despite having the same worst-case upper bound in terms of $e$ and $r$, our first construction is expected to outperform the second one if and only if the amount of ``bigger'' convex edges exceeds the amount of ``bigger'' reflex edges.
\end{observation}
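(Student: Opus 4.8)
The plan is to treat this as a bookkeeping comparison between the two constructions, since both ultimately report the \emph{minimum} guarding set over the same family of rotated constructions, and the only thing distinguishing them is which edge species they select. First I would fix notation: for a given sectioning axis, let $\sigma_c$, $\beta_c$, $\sigma_r$, $\beta_r$ count, respectively, the smaller convex, bigger convex, smaller reflex, and bigger reflex (non-parallel) edges that receive a guard across the six rotations about that axis. Reading off the two selection patterns exactly as tallied in the two proofs of Theorem~\ref{t6:1} (the $11$ angles of Figure~\ref{fig6:5} versus those of Figure~\ref{fig6:7}), the first construction selects each type $2,1,4,4$ times and the second $2,2,4,3$ times, so the total number of guard placements is
$$2\sigma_c + \beta_c + 4\sigma_r + 4\beta_r$$
for the first construction and
$$2\sigma_c + 2\beta_c + 4\sigma_r + 3\beta_r$$
for the second.

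Subtracting the two totals is the decisive step: the smaller-convex and smaller-reflex contributions cancel, and the difference (first minus second) equals $\beta_r - \beta_c$. Hence the first construction places fewer guards, \emph{in aggregate over the rotated configurations}, precisely when $\beta_c > \beta_r$, i.e.\ when there are more bigger convex edges than bigger reflex edges. By the symmetric treatment of all edge classes across the four sectioning axes $x,y,z,w$, this per-axis comparison assembles into a global one whose sign is still governed by the total count of bigger convex edges against that of bigger reflex edges. Since each construction outputs the minimum over its configurations, a strictly smaller aggregate (equivalently, a smaller average) is what makes one construction \emph{expected} to achieve the smaller minimum, which yields the claimed equivalence.

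The main point requiring care — and the reason the statement is phrased with ``expected to'' rather than as a hard guarantee — is that a smaller sum over the configurations does not force a smaller minimum on every individual polyhedron: a sufficiently skewed distribution of edge types among the six rotations could let the construction with the larger total nonetheless attain the smaller minimum. I would therefore make the argument rigorous only at the level of the average (or total) guard count, where the sign of $\beta_r-\beta_c$ is conclusive, and read ``outperform'' in that sense. As a consistency check I would note that this matches the two proofs of Theorem~\ref{t6:1}: the first construction had to be padded with one extra copy of each bigger convex edge to reach the clean total $2e_x+2r_x$, whereas the second was padded with one extra copy of each bigger reflex edge — exactly the asymmetry producing the difference $\beta_r-\beta_c$.
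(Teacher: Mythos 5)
Your proposal is correct and follows essentially the same reasoning the paper relies on: the observation is justified there only by the preceding remark that the first construction's tally must be padded with one extra copy of each bigger convex angle while the second's must be padded with one extra copy of each bigger reflex angle, and your explicit subtraction of the two totals ($2\sigma_c+\beta_c+4\sigma_r+4\beta_r$ versus $2\sigma_c+2\beta_c+4\sigma_r+3\beta_r$, difference $\beta_r-\beta_c$) is exactly that bookkeeping made precise. Your caveat that the comparison is rigorous only for the aggregate (not for the minimum on every individual polyhedron) correctly accounts for the paper's hedge ``expected to outperform.''
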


\section{Lower bounds}

Now we give some lower bounds on the number of edge guards required to guard a 4-oriented polyhedron, in terms of $e$ and $r$.

Of course, our lower bound of
$$\left\lfloor \frac r 2\right\rfloor +1$$
guards, which holds for orthogonal polyhedra due to Figure~\ref{fig4:3}, also holds for 4-oriented polyhedra.

But in this more general scenario, we can provide better lower bounds in terms of $e$, and also in terms of $e$ and $r$ combined, as Figure~\ref{fig6:4} implies (cf.\ Figure~\ref{fig3:lower1}).

\begin{observation}
There are 4-oriented polyhedra with $e$ edges, of which $r$ are reflex, that require at least
$$\left\lfloor \frac{e}{6}\right\rfloor-1 = \left\lfloor \frac{e}{12}+\frac{r}{6}\right\rfloor$$
(reflex) edge guards, for unboundedly large values of $e$ and $r$.
\end{observation}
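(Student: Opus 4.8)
The plan is to exhibit a single family of 4-oriented polyhedra, parametrized by an integer $k$, that simultaneously realizes the edge count $e = 6k + O(1)$, the reflex-edge count $r = 3k + O(1)$, and a guard lower bound of $k$. Since orthogonal (i.e., 3-oriented) polyhedra only yield the weaker bound $\lfloor e/12\rfloor - 1$ of Observation~\ref{obs3:urrutia}, the construction must genuinely exploit the fourth face orientation to produce \emph{sharp} pockets that are more economical than orthogonal ones: the extra direction lets us carve narrow wedge-shaped notches, in the spirit of the general comb of Figure~\ref{fig3:lower1}, while keeping every face orthogonal to one of only four fixed vectors $x,y,z,w$. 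Concretely, I would start from a base 4-oriented slab (built from the glued tetrahedra of Figure~\ref{fig6:1}) and cut into it a row of $k$ congruent, mutually parallel spikes, each ending in a deep, narrow notch; Figure~\ref{fig6:4} depicts exactly this shape.

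Next I would place a distinguished point $p_i$ deep inside the $i$-th notch and prove the central \emph{visibility-separation lemma}: every edge $\ell$ of the polyhedron sees at most one of the points $p_1,\dots,p_k$. The argument is the standard comb argument adapted to our angles: each $p_i$ lies at the bottom of a slot so narrow that the whole set of lines of sight reaching it is confined to a thin cone anchored at the slot, and the notches are spaced and oriented so that no segment contained in the polyhedron can pass from the neighborhood of one slot to another. Once this lemma is in hand, the lower bound is immediate: each $p_i$ must be seen by some guard edge, and since distinct $p_i$ require distinct edges, at least $k$ edge guards are necessary, in both the open and the closed model (a closed edge seeing $p_i$ contains an open neighborhood still seeing $p_i$, and conversely, by the usual argument behind Proposition~\ref{prop:1}).

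To match the two displayed expressions I would then count edges carefully, arranging the spikes so that each added notch contributes exactly $6$ edges, of which exactly $3$ are reflex; together with the $O(1)$ edges of the base slab this gives $e = 6k + O(1)$ and $r = 3k + O(1)$, whence $\lfloor e/6\rfloor - 1 = k$ and, using $r = e/2 + O(1)$, also $\lfloor e/12 + r/6\rfloor = \lfloor k/2 + k/2 + O(1)\rfloor = k$, so the two formulas coincide as asserted.

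I expect the main obstacle to be geometric realizability: one must verify that the narrow slots can be cut using only the four allowed orientations (so that, in particular, all slot walls and spike faces are orthogonal to one of $x,y,z,w$) while still being thin enough for the visibility-separation lemma to hold. In the fully general case of Figure~\ref{fig3:lower1} one is free to pick arbitrary slopes; here the fourth direction is fixed once and for all, and must then be shown to admit arbitrarily deep, arbitrarily thin notches whose reflex edges align with the edge classes of Figure~\ref{fig6:3}. Checking that the per-notch bookkeeping yields precisely $6$ edges and $3$ reflex edges, with no hidden visibility interactions between adjacent notches, is the routine but delicate remainder.
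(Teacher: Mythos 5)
Your proposal matches the paper's own (essentially unwritten) argument: Figure~\ref{fig6:4} is exactly such a comb-like 4-oriented polyhedron with $k$ congruent tetrahedral pockets, each contributing $6$ edges of which $3$ are reflex, carved into a base with $9$ convex edges, giving $e=6k+9$ and $r=3k$ so that both displayed floors evaluate to $k$. Since the paper states this as an observation supported only by the figure (cf.\ Figure~\ref{fig3:lower1}), your visibility-separation lemma and per-notch bookkeeping are precisely the standard argument it implicitly relies on.
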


\begin{figure}[h]
\centering
\includegraphics[width=0.6\linewidth]{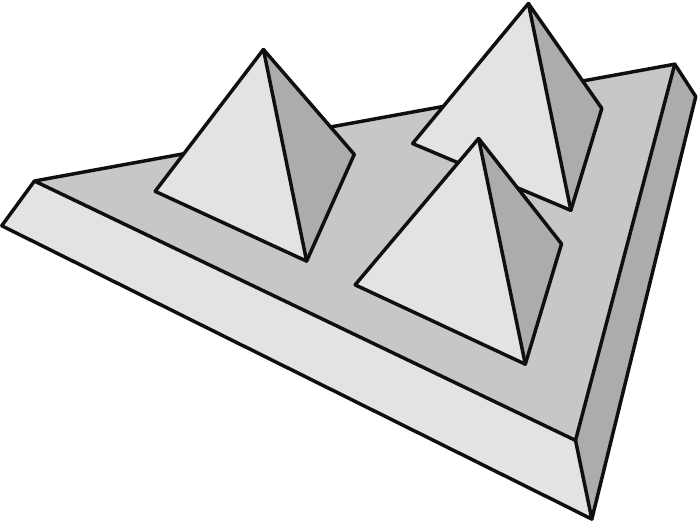}
\caption{4-oriented polyhedron with $e=6k+9$ and $r=3k$ that requires $k$ (reflex) edge guards.}
\label{fig6:4}
\end{figure}

We believe that such lower bounds are asymptotically tight, and that guards may always be chosen to lie on reflex edges.

\begin{conjecture}
Any 4-oriented polyhedron with $e$ edges can be guarded by at most
$$\frac{e}{6}+O(1)$$
reflex edge guards. If $r$ is the number of reflex edges, the same polyhedron can be also guarded by at most
$$\frac{e}{12}+\frac{r}{6}+O(1)$$
reflex edge guards.
\end{conjecture}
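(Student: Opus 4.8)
The plan to attack this conjecture is to sharpen the cross-section method behind Theorem~\ref{t6:1} in two ways: restrict guards to \emph{reflex} edges only, and charge each chosen edge once across the whole stack of parallel sections it meets, rather than counting it independently in every section (which is what causes the loss from $e/6$ to $(e+r)/6$). As in Section~6.2, I would cut $\mathcal P$ with planes orthogonal to one of the four directions $x,y,z,w$; a generic section $S$ is a $3$-oriented polygon, possibly with holes, whose reflex vertices are exactly the traces of the reflex edges of $\mathcal P$ transverse to that direction. Guarding $S$ with reflex-\emph{vertex} guards then automatically yields reflex-\emph{edge} guards of $\mathcal P$, and the open-versus-closed distinction is handled routinely by the neighbourhood argument of Theorem~\ref{th:3} (via Proposition~\ref{prop:1}), exactly as in the proof of Theorem~\ref{t6:1}.

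Two planar ingredients are needed first. The first is the right planar lemma: a $3$-oriented polygon with $\rho$ reflex vertices and $h$ holes should be guardable by reflex-vertex guards in a number bounded by a favourable linear function of $\rho$ and $h$ — the ``hexagonal'' analogue of O'Rourke's $\lfloor r/2\rfloor+1$ bound for orthogonal polygons (\cite[Theorem~5.1]{art}), obtained by adapting the bisector-partition idea of Lemma~\ref{edges} so that the pieces are anchored at reflex vertices rather than at edges. The second is an edge-counting identity for $4$-oriented polyhedra, i.e.\ the analogue of Theorem~\ref{th:2}: by double counting $(\text{edge},\text{endpoint})$ incidences over the $48$ vertex species of Figure~\ref{fig6:3} and applying the polyhedral Gauss--Bonnet theorem (Theorem~\ref{t2:gauss}), I would derive tight inequalities among $e$, $r$, and $g$ that interpolate the two target forms $e/6$ and $e/12+r/6$; this is what lets an $r$-type bound be converted into the two claimed $e$-type bounds (note that neither bound implies the other, since $r$ may reach a constant fraction of $e$).

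The heart of the matter, and the main obstacle, is global consistency: coordinating the per-section guard choices so that consecutive sections reuse the \emph{same} reflex edges. The local argument of Theorem~\ref{t6:1} selects every edge of a fixed family and therefore overcounts; to reach $e/6$ one must instead show that a single well-chosen reflex edge guards an entire contiguous block of sections lying between two vertex events, so that the number of \emph{distinct} reflex edges actually used is governed by $e/6$ and not by the raw per-section totals. Concretely I would track how a section's reflex and hole structure evolves as the sweeping plane passes each vertex (a finite set of events, one of each of the $48$ species of Figure~\ref{fig6:3}) and argue that a canonical, continuously varying partition of the sections keeps its guards aligned along common reflex edges — a sweeping generalisation of the brick/stack/castle induction of Chapter~\ref{chapter4}, combined with the fact that each edge class is transverse to exactly two of the four directions, so that averaging the four per-direction counts produces $e/12+r/6$.

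I expect this alignment step to be the genuine difficulty. In the orthogonal setting of Chapter~\ref{chapter4}, horizontal planes slice the solid into disjoint rectangles and yield a rigid brick decomposition, which is precisely what makes the guard choices stable across sections; in the $4$-oriented case the sections evolve into general $3$-oriented polygons whose convex decompositions may require $\Omega(r^2)$ pieces (Chapter~\ref{chapter2}), so a naive per-section partition need not vary coherently as the plane sweeps. Ruling out that the optimal planar guards ``jump'' from edge to edge at events — equivalently, proving that a stable reflex-edge guarding set exists globally — is exactly what separates the proven $(e+r)/6$ bound of Theorem~\ref{t6:1} from the conjectured $e/6$, and is the reason the statement is still only a conjecture.
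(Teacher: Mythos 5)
This statement is a \emph{conjecture} in the paper: no proof is given, and the only results the paper actually establishes in its vicinity are the weaker upper bound of $\left\lfloor (e+r)/6\right\rfloor$ arbitrary edge guards (Theorem~\ref{t6:1}) and the matching lower-bound construction of Figure~\ref{fig6:4}. Your submission is, accordingly, not a proof but a research programme, and to your credit you say so explicitly in your final paragraph. Still, it is worth naming precisely where the gaps lie, because each of your three ingredients is currently unproven, not just the last one. First, the planar lemma you posit --- that a $3$-oriented polygon with $\rho$ reflex vertices is guardable by roughly $\rho/2$ \emph{reflex-vertex} guards --- is not in the paper and does not follow from adapting Lemma~\ref{edges}, which places guards on \emph{edges}, not reflex vertices, and gives only $r-h+1$ pieces; O'Rourke's $\lfloor r/2\rfloor+1$ bound rests on L-shaped partitions that are specific to orthogonal polygons, and no analogue for hexagonally-oriented polygons is established anywhere in the thesis. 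Second, the counting identity you want as an analogue of Theorem~\ref{th:2} is genuinely harder for $4$-oriented polyhedra: the Gauss--Bonnet argument in Chapter~5 works because every face angle of an orthogonal polyhedron is a multiple of $\pi/2$, so each of the six vertex types contributes a fixed curvature; in a $4$-oriented polyhedron the face angles depend on the choice of the four normals, the $48$ vertex species do not have uniform curvatures, and the clean linear system (\ref{eq:1})--(\ref{eq:3}) does not carry over.

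The third gap is the one you correctly identify as the crux, and it is fatal to the proposal as a proof: nothing in the paper, and nothing in your sketch, shows that the per-section reflex-vertex guards can be made to lie on a set of reflex edges of $\mathcal P$ whose total cardinality is $e/6+O(1)$ rather than being recounted in every slab between vertex events. The mechanism that makes this work in Chapter~\ref{chapter4} is the rigidity of the brick decomposition of $2$-reflex orthogonal polyhedra, where horizontal cross sections are unions of disjoint rectangles that are literally constant between events; in the $4$-oriented setting the sections are general $3$-oriented polygons with holes, and you give no invariant that prevents the chosen planar guards from migrating to different reflex edges at each event. Until that alignment argument exists, the averaging over the four directions only reproduces the already-proven $(e+r)/6$ of Theorem~\ref{t6:1}. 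So the proposal is a sensible map of the territory, but every load-bearing step in it remains open, which is consistent with the paper leaving the statement as a conjecture.
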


\part{Searching polyhedra}\label{part3}
\chapter{Model definition}\label{chapter7}
\begin{chapterabstract}
We give a thorough definition of the \TSSPext. In our model, guards are 1-dimensional segments that rotate a half-plane with one degree of freedom. We motivate our model choice with observations and examples, also discussing some alternative models.

Notably, we introduce the notion of filling guard, which will be central in the next chapter.
\end{chapterabstract}

\section{Guards and searchplanes}

We start by defining our notion of guard for the \TSSPext. Later, we will motivate our choice and discuss some alternative models. In the terminology of Chapter~\ref{chapter3}, we will employ open segment guards, not necessarily lying on an edge.

\begin{definition}[guard]\label{guard}A \emph{guard} in a closed polyhedron $\mathcal P$ is a positive-length straight line segment minus its endpoints, lying in $\mathcal P$. A \emph{boundary guard} is a guard that lies entirely on $\mathcal P$'s boundary. An \emph{edge guard} is a guard that coincides with the relative interior of an edge of $\mathcal P$.\end{definition}

\begin{remark}
For the \TSSPext, we consider only \emph{closed} polyhedra and \emph{open} guards. If we take \emph{open} polyhedra instead, then some of our results do not hold anymore, notably the one-way sweep strategy of Chapter~\ref{chapter8}. In order to obtain similar results with open polyhedra, we would have to model intruders as balls with positive radius, as opposed to dimensionless points.
\end{remark}

Throughout Part~\ref{part3}, we will be concerned mainly with boundary guards. Unlike the situation in the planar \SSPext, boundary guards already yield a rich and diverse theory.

In Definition~\ref{guard}, we exclude endpoints because we do not want guards to see beyond reflex edges or non-convex vertices, as the next definitions will clarify.

Recall that $\mathcal V(\ell)$,the visibility region of a guard $\ell$, is the set of points that are visible to $\ell$ (see Definition~\ref{d3:visregion}).

\begin{definition}[searchplane]\label{def:searchplane}A \emph{searchplane} of a guard $\ell$ is the intersection between $\mathcal V(\ell)$ and any (topologically closed) half-plane whose bounding line contains~$\ell$.\end{definition}

Consequently, every guard has a searchplane for every possible direction of the half-plane generating it, and the union of a guard's searchplanes coincides with its visibility region.

\begin{figure}[h]
\centering
\includegraphics[scale=0.8]{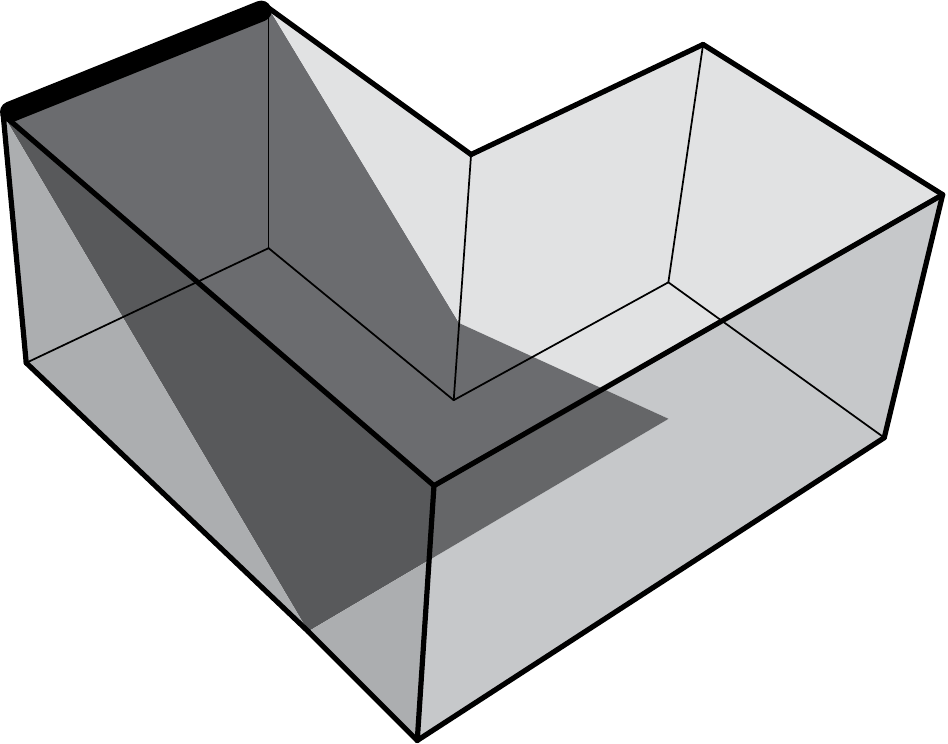}
\caption{Edge guard with one of its searchplanes, depicted as a thick line and a dark surface, respectively.}
\label{fig:3}
\end{figure}

\begin{remark}
Throughout Part~\ref{part3}, we will always use the term ``searchplane'' with the meaning of Definition~\ref{def:searchplane}, i.e., a static planar area visible to some guard. Nonetheless, we will keep using the term ``searchlight'' informally, referring to the tool that guards use to scan the environment. Thus, guards can turn searchlights around, but cannot turn searchplanes. While a searchplane is a well-defined mathematical object, a searchlight is not, in our terminology (in contrast with the standard terminology of 2-dimensional \SSP).
\end{remark}

\paragraph{Blind directions.}

If a searchplane is just a line segment, it is said to be \emph{trivial}, and the corresponding direction is said to be \emph{blind} for its guard. A guard has blind directions if and only if it is a boundary guard. We arbitrarily define a \emph{left} and \emph{right} side for each boundary guard, and we call \emph{leftmost position} the leftmost non-blind direction, for each boundary guard. Similarly, we define the \emph{rightmost position} of every boundary guard. Observe that the leftmost and rightmost positions are well-defined, because the polyhedron is a closed set, and every direction aiming straight at its exterior is blind for a boundary guard, even if the endpoints of (the topological closure of) the guard lie on reflex edges or vertices. This is because we did not include endpoints in Definition~\ref{guard}, a choice motivated also by Theorem~\ref{one}. Conversely, every other direction is not blind, because the corresponding searchplanes must contain a plane neighborhood of some point.

\section{Search schedules}

\begin{definition}[schedule]A \emph{schedule} for a guard $\ell$ is a continuous function $f_\ell: [0,T] \to S^1$, where $T\in \R^+$ and $S^1$ is the unit circle.\end{definition}

Intuitively, $f_\ell(t)$ expresses the orientation of the guard at time $t\in [0,T]$, which is the angle at which $\ell$ is aiming its searchlight. In other words, $\ell$ is able to emit a half-plane of light in any desired direction, and to rotate it continuously about the axis defined by $\ell$ itself. We will say that, at time $t$, $\ell$ is \emph{aiming its searchlight} at point $x$ if the orientation expressed by $f_\ell(t)$ corresponds to a searchplane of $\ell$ containing $x$ (assuming that one exists).

For the following definitions, we stipulate that a polyhedron $\mathcal P$ is given, along with a finite \emph{multiset} of guards, each of which is provided with a schedule. 

\begin{remark}
The reason why we use multisets rather than sets is that some guards may be coincident, yet distinct. This is the same as providing guards with unique identifiers, and also naturally models an analogous of the $k$-searcher of~\cite{search2}, i.e., a guard carrying $k$ independent searchlights. However, we will need multisets of guards only to prove Theorem~\ref{speed1} and Corollary~\ref{speed2}, which are of independent interest and may be safely disregarded without invalidating any other result in Part~\ref{part3}.
\end{remark}

\begin{definition}[illuminated point]A point is \emph{illuminated} at a given time if some guard is aiming its searchlight at it.\end{definition}

\begin{definition}[contaminated point, clear point]A point $x$ is \emph{contaminated} at time $t$, with respect to a given schedule, if there exists a continuous function $h: [0, t] \to \mathcal P$ such that $h(t)=x$ and there is no time $t'\in [0, t]$ at which $h(t')$ is illuminated. A point that is not contaminated is said to be \emph{clear}.\end{definition}

It follows that, at any time in a schedule, a maximal connected region of $\mathcal P$ without illuminated points is either all clear or all contaminated.

\begin{definition}[search schedule]A set of schedules of the form $f_\ell: [0,T] \to S^1$, where $\ell$ ranges over a finite guard multiset in a polyhedron $\mathcal P$, is a \emph{search schedule} if every point in $\mathcal P$ is clear at time $T$.\end{definition}

\begin{sloppypar}Next we define the \TSSPext\ (\TSSP).\end{sloppypar}

\begin{definition}[\TSSP]\emph{\TSSP}\ is the problem of deciding if a given multiset of guards in a given polyhedron has a search schedule.\end{definition}

An instance of \TSSP\ is said to be \emph{searchable} or \emph{unsearchable}, depending on the existence of a search schedule for its guards. Since an instance is trivially unsearchable if its guards cannot see the whole polyhedron, it is worth singling out those instances.

\begin{definition}[guard-visible instance]\label{viable}An instance of \TSSP\ is \emph{guard-visible} if every point of the polyhedron belongs to the visibility region of at least one guard.\end{definition}

\section{Filling guards}

Finally, a relevant role is played by a special type of (boundary) guard.

\begin{definition}[filling searchplane]A searchplane of a guard in a polyhedron $\mathcal P$ is \emph{filling} if it is a closed set whose relative boundary lies entirely on $\mathcal P$'s boundary.\end{definition}

Consider a guard $\ell$ with a non-trivial searchplane $S$, and let $\alpha$ be the plane containing $S$. Then $S$ is filling if and only if it coincides with the connected component of $\alpha \cap \mathcal P$ containing $\ell$.

Notice that the searchplane depicted in Figure~\ref{fig:3} is not filling, because one of its edges lies in the interior of the polyhedron (let us call this edge $e$). In addition, this searchplane is not even a closed set: recall that guards have no endpoints, so edge $e$ is not actually part of the searchplane (except for one endpoint). 

\begin{definition}[filling guard]\label{def:exhaustive}A guard is \emph{filling} if all its searchplanes are filling.\end{definition}

Because searchplanes are induced by half-planes, it follows that only boundary guards can be filling. Intuitively, a filling guard is similar to a traditional boundary guard from \SSP\ in simple polygons, in that its searchlight provides at any time an effective barrier which cannot be crossed by the intruder just by walking past its borders. The importance of such guards in developing search algorithms will be elucidated in Chapter~\ref{chapter8}.

\section{Motivations and alternative models}

Now that we have given a complete definition of our 3-dimensional searchlight model, we can discuss some of its basic features, and compare it with other plausible models.

First of all, it is obvious that no finite set of 1-dimensional searchlights can ever capture a dimensionless intruder in a polyhedron. Hence, our searchlights must be at least 2-dimensional to make any sense theoretically. Moreover, our model should achieve a good tradeoff among realism, theoretical appeal, and ease of manipulation.

We can imagine each of our guards (as previously defined in this chapter) to be an array of sensors arranged on a support that is able to rotate, say, horizontally. Each sensor is a laser beam that casts a ray and continuously turns it up and down at high speed. As a result, the support can reorient itself to aim at any desired searchplane, and the sensors can detect any intruder lying in that searchplane, provided that they turn fast enough. Implementing such a structure is in general quite demanding. However, note that it is also very scalable: depending on the application and the size of the intruder, fewer sensors can be placed on a single rotating support.

On the theoretical side, we want our model to be as close as possible to the classic 2-dimensional one, so that it can hopefully provide some insights also on planar \SSP. At the very least, we want \SSP to be a proper subproblem of \TSSP, and trivially so. This is indeed the case:

\begin{proposition}$\SSP\preceq_{\LOGSPACE} \TSSP$.\label{p7:redu}\end{proposition}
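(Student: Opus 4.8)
The statement $\SSP \preceq_{\LOGSPACE} \TSSP$ asserts a logarithmic-space many-one reduction from the 2-dimensional searchlight scheduling problem to the 3-dimensional one. The natural plan is to realize the classic embedding of a planar problem into 3-space by \emph{extrusion}: given an instance of \SSP consisting of a simple polygon $P$ together with a finite set of point guards, I would map it to the prism $\mathcal P = P \times [0,1]$ and replace each planar point guard at position $p$ by the vertical edge guard $\open{\{p\}\times[0,1]}$. The key claim to verify is that this map is computable in logarithmic space and that the resulting \TSSP instance is searchable if and only if the original \SSP instance is.

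First I would set up the extrusion carefully. The prism $\mathcal P$ is a closed polyhedron in the sense of Chapter~\ref{chapter2}, and each vertical segment guard is a legitimate (open) guard in the sense of Definition~\ref{guard}. The crucial geometric observation is that, for a vertical guard $\ell = \{p\}\times[0,1]$, the \emph{horizontal} half-planes through $\ell$ intersect $\mathcal P$ in exactly the cylinder over the planar visibility ray emanating from $p$ in $P$. That is, a horizontal searchplane of $\ell$ at angle $\theta$ is precisely $(R_\theta(p)) \times [0,1]$, where $R_\theta(p)$ is the planar region that the point guard $p$ illuminates when aiming its searchlight in direction $\theta$ inside $P$. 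Thus each planar searchlight direction corresponds bijectively to a horizontal searchplane of the extruded edge guard, and a planar schedule $f_p\colon[0,T]\to S^1$ lifts verbatim to a schedule $f_\ell$ for $\ell$ that only ever uses horizontal searchplanes.

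Next I would establish the equivalence in both directions. For the forward direction, given a planar search schedule, lift every guard's schedule to the corresponding vertical edge guard using only horizontal searchplanes. Because contamination in $\mathcal P$ under such a schedule is invariant in the $z$-coordinate (the illuminated set is always a union of vertical cylinders, and any intruder path in $\mathcal P$ projects to an admissible intruder path in $P$), a point $(x,z)$ is clear in $\mathcal P$ at time $t$ exactly when $x$ is clear in $P$ at time $t$; hence the lifted schedule clears all of $\mathcal P$. For the converse, one restricts an arbitrary \TSSP schedule to a generic horizontal slice $P\times\{z_0\}$: the intersection of each searchplane with this slice gives a planar searchlight configuration, and I would argue that the induced motion of these planar rays constitutes a valid planar schedule clearing $P$. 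The forward direction is the genuinely useful half for the reduction; the converse is needed to make the reduction an honest \emph{iff}, but one may also invoke the fact that searchability is a property preserved under the cylinder structure.

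The main obstacle is the converse direction, because a general \TSSP schedule may use \emph{non-horizontal} searchplanes of the vertical edge guards, and such tilted searchplanes do not project to a single planar searchlight direction in a clean, time-consistent way. I expect to handle this either by showing that tilted searchplanes can be ignored without loss (a tilted searchplane of a vertical guard intersects a fixed horizontal slice in the same ray as some horizontal searchplane, so slicing at a generic height still yields a continuous planar schedule), or by arguing directly that the cylinder structure of $\mathcal P$ forces the clear/contaminated partition to be $z$-invariant at the relevant moments. The logspace bound itself is routine: the output polyhedron's vertex and face arrays, and the guard endpoints, are obtained from the input by appending a single extra coordinate ($0$ or $1$) and duplicating the vertex list, all of which requires only counters of logarithmically many bits. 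I would close by noting that the reduction is in fact trivial in structure, matching the \qq{trivially so} of the surrounding discussion, with the only subtlety being the verification that tilting the searchplanes buys the 3-dimensional searcher no additional power on this restricted class of instances.
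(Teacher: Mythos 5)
Your reduction is the same one the paper uses: extrude the polygon to a prism and stretch each point guard into a vertical segment guard. The construction and the logspace claim are fine, but the \qq{main obstacle} you identify does not exist in this model, and worrying about it suggests a misreading of Definition~\ref{def:searchplane}. A searchplane of a guard $\ell$ is cut out by a half-plane whose \emph{bounding line contains $\ell$}; for a positive-length segment that line is uniquely the line through $\ell$, so a vertical guard admits only the one-parameter family of vertical half-planes rotating about that axis --- exactly the single degree of freedom of a schedule $f_\ell\colon[0,T]\to S^1$. There are no \qq{tilted} searchplanes available to the 3-dimensional searcher: every searchplane of the extruded guard is the vertical wall $R_\theta(p)\times[0,1]$ over a planar searchlight ray (and, incidentally, these half-planes are vertical, not \qq{horizontal} as you call them). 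Consequently the illuminated set is at all times a union of such vertical walls, the clear/contaminated partition is $z$-invariant throughout \emph{any} legal 3-dimensional schedule, and both directions of the equivalence follow at once; your fallback of slicing at a generic height is a correct but unnecessary precaution that would only be needed in a model with two rotational degrees of freedom, such as the flashlight model the paper explicitly declines to adopt.
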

\begin{proof}Any polygon can be extruded to a prism, while each point guard can be transformed into a segment guard by stretching it parallel to the prism's sides. The resulting construction has a search schedule if and only if the original one does.\end{proof}

The above proof is enabled by the fact that our searchlights rotate with only one degree of freedom, which, incidentally, also makes for a simple description and analysis of their schedules.

Because of this, any hardness result for \SSP also holds for \TSSP, and any algorithm that works for \TSSP also works for \SSP. On the other hand, a hardness result for \TSSP can perhaps serve as an intermediate step for an analogous hardness result for \SSP. Indeed, the expressiveness of polyhedral structures may give us enough freedom to let us focus on the ``big picture'' of the reduction, which can later be refined and perhaps embedded in the plane. An example is found in Chapter~\ref{chapter11}, where we will deal with the complexity of the \PSSPext, first proving that the 3-dimensional version is \PSPACE-hard, and then modifying our construction to work also for the 2-dimensional version of the problem.

Likewise, some algorithms or techniques that work for \SSP may have very natural generalizations in \TSSP, and once again the simple reduction given in Proposition~\ref{p7:redu} may make their discovery easier. In the next chapter we will give some concrete examples of this general principle.

We point out that not for every reasonable model of searchlights the reduction from \SSP to \TSSP is immediate. As an example, consider a point guard carrying a flashlight that can be rotated with two degrees of freedom. The flashlight emits a cone of light of constant width. This is a very realistic model, whose 2-dimensional version has been proposed and studied by Obermayer et al. in~\cite{bullo}. However, while the planar model still resembles the original one, and actually generalizes it (1-dimensional searchlights are flashlights whose cone of light has null width), this is not the case for the 3-dimensional model. In fact, the relationship between searching polyhedra with 3-dimensional flashlights and searching polygons with 1-dimensional searchlights is not clear at all, and the first problem may even be computationally easier than the second one.

Furthermore, one could consider a ``hybrid'' model, in which guards rotate dihedral angles about an axis, as opposed to half-planes of light. In other terms, we could add another parameter to our model, which is the width of the beam of light emanating from each searchlight. To our understanding, such a modification shifts the problem from the ``topological side'' to the ``metric side''. Recall that the main results for classic \SSP were all expressed in terms of visibility graphs among guards, and the presence of bounday guards. In a sense, these are all topological aspects of the problem. In contrast, as hinted in~\cite{bullo} for the planar case, the model with positive-width searchlights becomes very sensitive to the actual distances between elements, such as the width of certain ``bottlenecks'' in the polygon, etc. We think this addition, despite making the model more general, deprives it of much of its appeal from the theoretical point of view. For our research, we prefer to keep our model simple and our results more dependent on the topology of the problem.

\section{Counterexamples}\label{basic}

The most noteworthy aspect of our guard model is that searchplanes of boundary guards, as opposed to searchlight rays emanating from boundary guards in \SSP, may fail to disconnect a polyhedron when aimed at its interior, regardless of its genus. As it turns out, this is the main reason why \TSSP\ seems harder than \SSP, in that exploiting such a property will enable the relatively simple \NP-hardness proofs in Chapter~\ref{chapter10}, as well as the construction of several counterexamples to positive statements about \SSP.

\begin{figure}[h]
\centering
\includegraphics[width=0.5\linewidth]{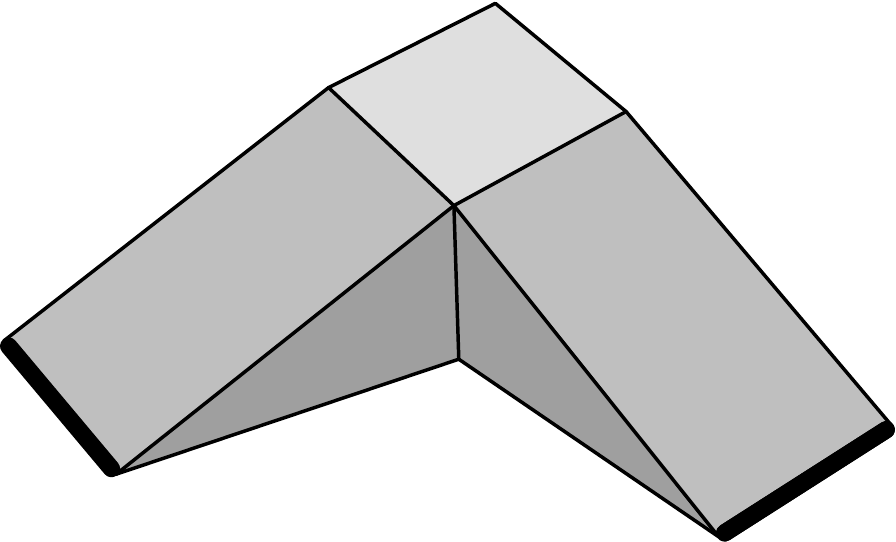}
\caption{Searchable instance of \TSSP\ with no sequential search schedule. Thick lines mark guards.}
\label{fig:4}
\end{figure}

For example, the reduction of the search space to \emph{sequential schedules} (i.e., schedules in which the guards sweep in turns) given by Obermeyer et al.\ in~\cite{bullo} is no longer possible. Figure~\ref{fig:4} shows an instance of \TSSP\ whose two guards are forced to turn their searchlights simultaneously, or else they would create gaps in the illuminated surface which would result in the recontamination of the whole polyhedron.

Moreover, in spite of the searchability of all simply connected \SSP\ instances whose guards lie on the boundary and collectively see the whole polygon (see~\cite{search1}), it is easy to construct simply connected guard-visible but unsearchable instances of \TSSP\ with only boundary guards, such as those in Figure~\ref{fig:5}.

\begin{figure}[h]
\centering
\includegraphics[scale=0.75]{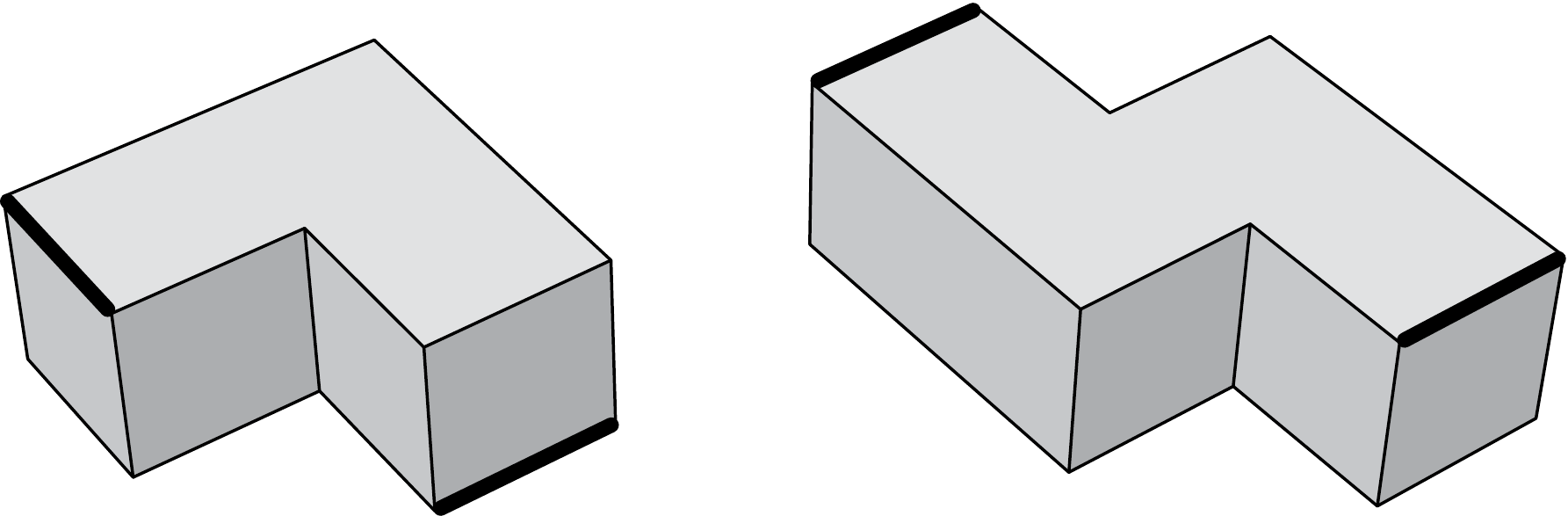}
\caption{Two unsearchable guard-visible simply connected instances of \TSSP\ whose guards lie on the boundary.}
\label{fig:5}
\end{figure}

Indeed, whenever the two guards attempt to clear the center (in either of these two instances), they fail to disconnect the polyhedron, since their searchplanes are not coplanar, which results in the recontamination of the entire instance. In Chapter~\ref{chapter10} we will provide more sophisticated unsearchable but guard-visible instances of \TSSP\ with only boundary guards.
\chapter{Searching with filling guards}\label{chapter8}
\begin{chapterabstract}
We investigate the properties of filling guards, as introduced in the previous chapter.

We point out how filling guards act as boundary guards in the planar \SSPext, in that both the one-way sweep strategy and the sequentiability of search schedules generalize to 3-dimensional problem instances with only this type of guards.

Moreover, we characterize the searchable polyhedra containing just one guard, showing that such guard must be filling.

Finally, we sketch a polynomial time algorithm to determine whether a guard is filling, implying that the hypotheses of our previous theorems are practically checkable.
\end{chapterabstract}

\section{Filling guards}

Notice that all the counterexamples described in Section~\ref{basic} employ boundary guards that are not filling. Conversely, it comes as no surprise that employing only filling guards yields positive results. To see why, let us give a characterization of the different shapes a searchplane can take with respect to the surrounding polyhedral environment. The topological closure of a searchplane is always a polygon, perhaps with holes, perhaps with some additional segments sticking out radially, and the whole searchplane is visible to some line segment lying on its external boundary, which would be the guard generating it (refer to Figure~\ref{fig:6}). There may be intersections between a searchplane's relative interior and the polyhedral boundary, which could be collections of polygons, straight line segments, and isolated points (recall that polyhedra in our model are closed sets). But what is central for our purposes is the searchplane's relative boundary, which may entirely lie on the polyhedron's boundary, or may not. If it does, and the searchplane is a closed set, then the only way an intruder could travel from one side of the searchplane to the other, without crossing the light and being caught, would be to take a detour through a suitable handle of the polyhedron. In particular, in genus-zero polyhedra, that would be impossible. In other words, any searchplane emanating from a guard aiming its searchlight at the interior of a genus-zero polyhedron disconnects the polyhedron if and only if the searchplane is filling. Thus, any filling guard aiming its searchlight at the interior of a genus-zero polyhedron always disconnects it.

\begin{figure}[h]
\centering
\subfigure[]{\label{fig:6a}\includegraphics[width=0.4\linewidth]{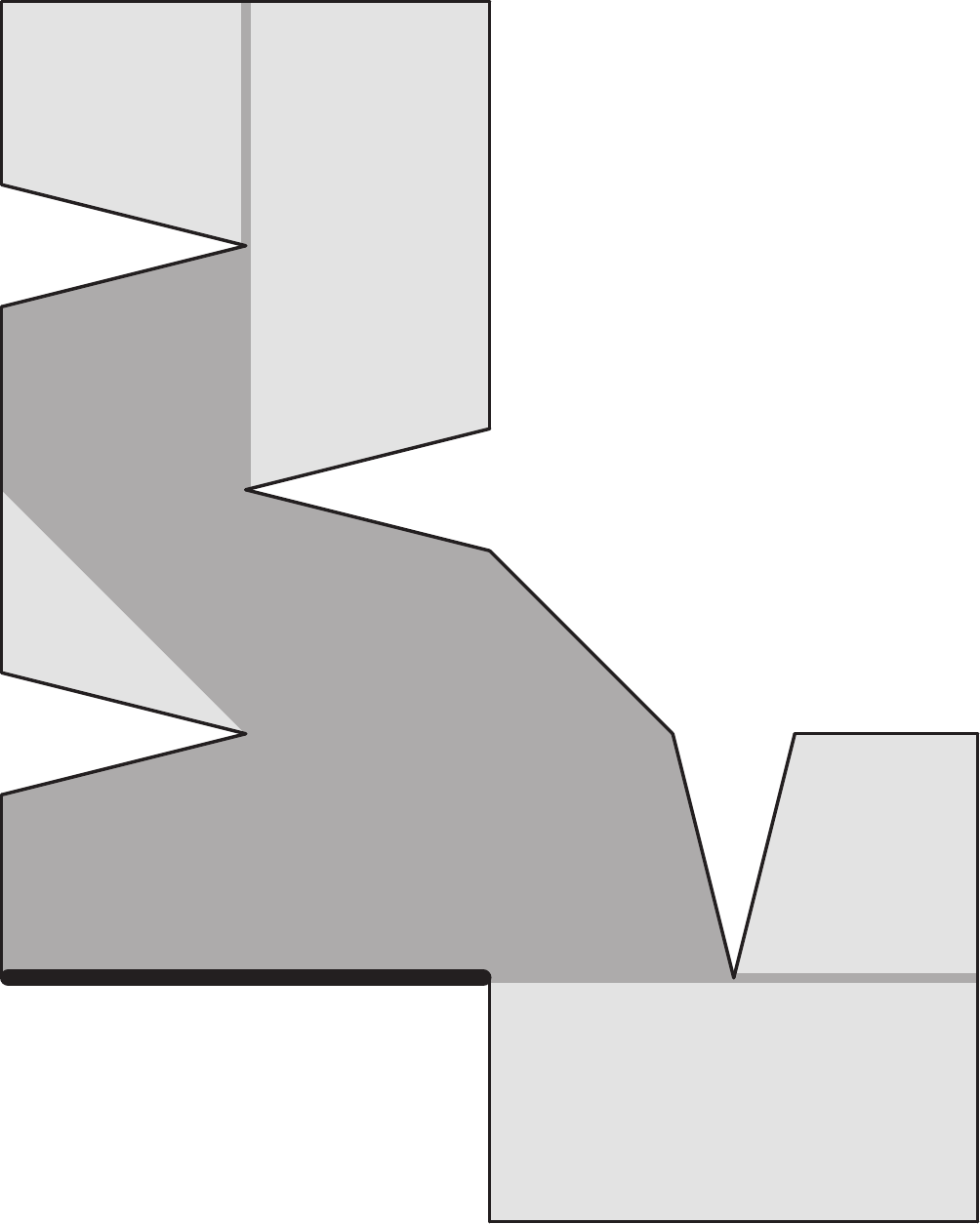}}\qquad \qquad
\subfigure[]{\label{fig:6b}\includegraphics[width=0.4\linewidth]{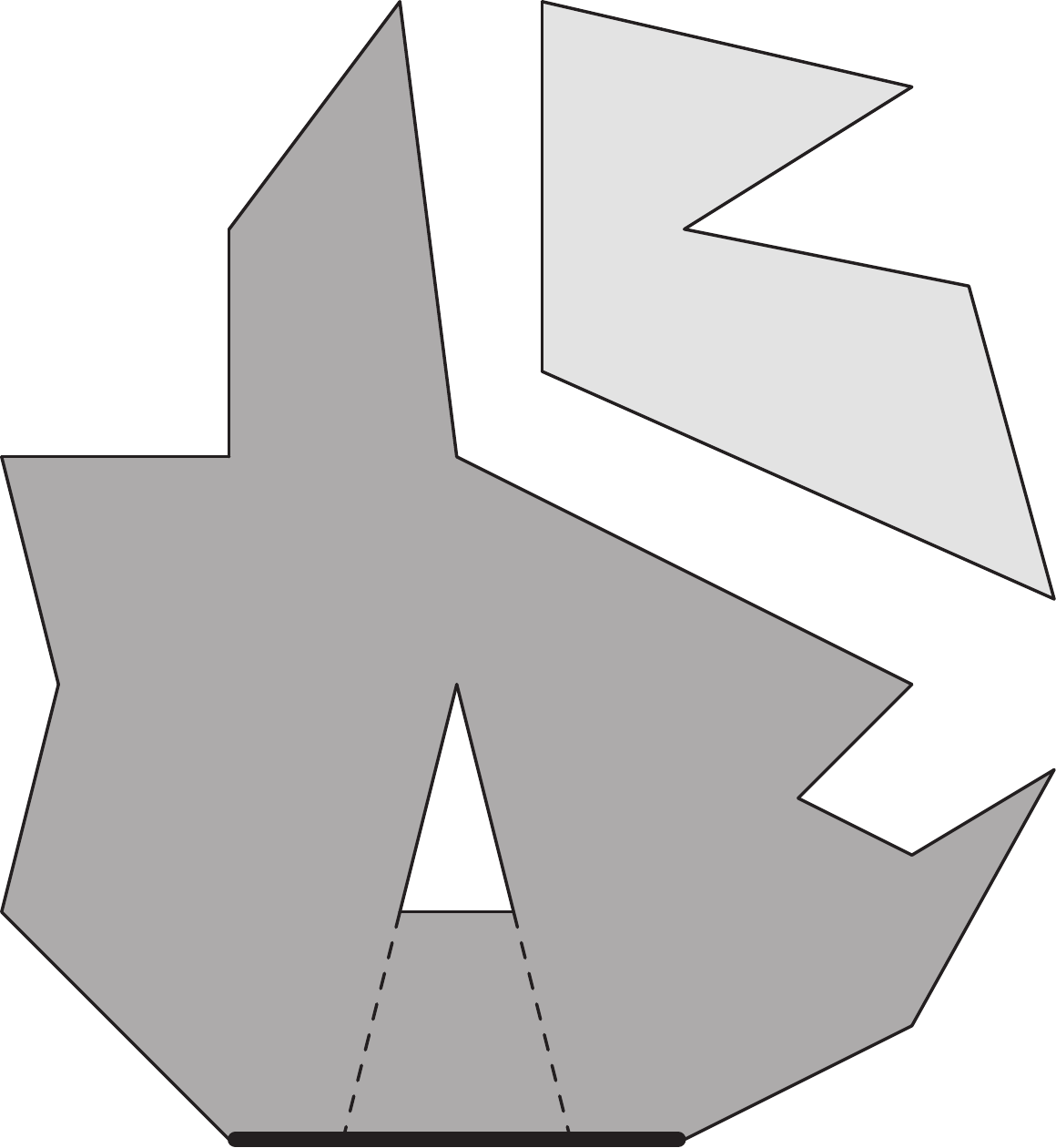}}
\caption{Two sections of polyhedra, with searchplanes represented as dark regions. The searchplane in \subref{fig:6a} has two dangling segments, while the searchplane in \subref{fig:6b} is filling but not simply connected.}
\label{fig:6}
\end{figure}

On the other hand, if a guard $\ell\subset\mathcal P$ is filling, the topological closure of $\mathcal V(\ell)$ is always a polyhedron, perhaps with some dangling polygons (which may originate from dangling segments of single searchplanes, such as those in Figure~\ref{fig:6a}). Of course, the boundary of $\mathcal V(\ell)$ may contain polygons that do not lie on $\mathcal P$'s boundary. But, because $\ell$ is filling, every such polygon is entirely contained in some searchplane of $\ell$.

\begin{definition}[critical searchplane, critical position]\label{def:critical}A searchplane $S$ of a filling guard $\ell$ is \emph{critical} if it shares a polygon with the boundary of $\mathcal V(\ell)$ that does not lie on $\mathcal P$'s boundary. Whenever $\ell$ is aimed at $S$, it is in a \emph{critical position}.\end{definition}

Every filling guard $\ell$ has only a finite number of critical searchplanes, because $\mathcal V(\ell)$'s surface is made of finitely many polygons. Equivalently, $\ell$ has a finite number of critical positions.

Now, as a filling guard $\ell$ in a genus-zero polyhedron starts turning from its leftmost position toward the right, every point that it illuminates will remain clear forever, unless the illuminated searchplane becomes tangent to some region of the polyhedron that is not in $\mathcal V(\ell)$ and that would be responsible for recontamination, once the tangency is crossed by the searchlight. This happens exactly when $\ell$ reaches a critical position.

\section{One-way sweeping}

We now have the tools required to generalize the one-way sweep strategy for guards in simple polygons (outlined by Sugihara et al.\ in~\cite{search1}, and in Chapter~\ref{chapter1}) to work with filling guards in simply connected polyhedra.

\begin{theorem}\label{exhaustive}Every guard-visible genus-zero instance of the \TSSPext\ is searchable if its guards are filling.\end{theorem}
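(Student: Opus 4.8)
The plan is to generalize the one-way sweep strategy (OWSS) of Sugihara et al.~\cite{search1}, recalled in Chapter~\ref{chapter1}, from boundary guards in simple polygons to filling guards in genus-zero polyhedra, exploiting the fact established just before the statement that a filling guard aiming its searchlight at the interior of a genus-zero polyhedron always disconnects it. I would phrase the argument as a recursive clearing procedure and prove its correctness by well-founded induction on the number of guards. The basic object is a (sub)region $R\subseteq\mathcal P$ whose boundary consists partly of $\mathcal P$'s boundary and partly of finitely many filling searchplanes held \emph{fixed} (the analogue of the semiconvex subpolygon of the planar OWSS); since only planar barriers are added to a genus-zero polyhedron, each such $R$ is again genus zero, so the disconnection property keeps applying to it.

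At each step I would select a filling guard $\ell$ seeing at least one point of $R$ and sweep it from its leftmost toward its rightmost position, keeping the fixed barriers in place. As noted in the discussion preceding the statement, every point of $R\cap\mathcal V(\ell)$ that $\ell$ illuminates stays clear, because the current searchplane disconnects $R$ and separates the already-swept side from the contaminated side; the only threat is recontamination through a region not visible to $\ell$, and this occurs exactly when $\ell$ reaches one of its finitely many \emph{critical positions} (Definition~\ref{def:critical}). At such a position I would freeze $\ell$ — its searchplane now acting as a new fixed barrier — and recursively clear the invisible pocket $R'$ that is about to be exposed, using a guard $\ell'$ that sees into $R'$; such an $\ell'$ exists because the instance is guard-visible (Definition~\ref{viable}), and $\ell'\neq\ell$ since $R'\not\subseteq\mathcal V(\ell)$. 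Once $R'$ is clear it is joined to the already-cleared region without reintroducing contamination, so $\ell$ may resume its sweep up to the next critical position. When $\ell$ finally reaches its rightmost position with no remaining invisible pockets, all of $R$ is clear.

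Termination follows from a nesting argument: along any chain of nested recursive calls the frozen guards are pairwise distinct, since each newly invoked guard must see into a pocket that is invisible to every guard frozen above it in the chain. Hence the recursion depth is at most the number of guards, and the base case — a region entirely visible to a single filling guard, cleared by one monotone sweep — is always reached. Because $\mathcal P=\bigcup_i\mathcal V(\ell_i)$ by guard-visibility and $\mathcal P$ is connected, every contaminated pocket is eventually adjacent to some already-cleared region and thus exposed at a critical position of some guard, so all of $\mathcal P$ is cleared; concatenating the individual sweeps in the order dictated by the recursion yields the required search schedule.

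The hard part will be making the ``no-recontamination'' invariant precise in three dimensions: unlike the $1$-dimensional searchlight rays of planar \SSPext, here the barriers are $2$-dimensional searchplanes, and I must verify carefully that (i) freezing $\ell$ at a critical position genuinely seals off $R'$ as a genus-zero subregion, (ii) resuming a frozen guard after a recursive clear cannot reopen a previously sealed boundary, and (iii) the finitely many critical positions of a guard suffice to expose \emph{every} adjacent invisible pocket. The disconnection property of filling searchplanes is the key lever for (i) and (ii), while (iii) reduces to the finiteness of $\partial\mathcal V(\ell)$ as a union of polygons; the delicate bookkeeping of which searchplanes are held fixed simultaneously is where the argument demands the most care.
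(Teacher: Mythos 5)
Your overall strategy is exactly the paper's: a one-way sweep in which a guard rotates from its leftmost position, freezes at each critical position to recursively hand off an invisible pocket to a guard that can see into it (which exists by guard-visibility, since frozen guards see nothing of the pocket's interior), with termination from the finiteness of critical positions. However, there is one concretely false step in your justification of the no-recontamination invariant: you claim that each semiconvex subregion $R$ is ``again genus zero'' because only planar barriers are added to a genus-zero polyhedron, and you lean on this to keep applying the disconnection property. This is not true. A filling searchplane of a genus-zero polyhedron can itself be a polygon with holes (e.g.\ an annular cross-section of a dented cylinder), and cutting along it can produce a subregion $R$ of strictly positive genus; the paper explicitly flags this possibility. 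Since the disconnection property you established was only for filling searchplanes in \emph{genus-zero} polyhedra, your argument as written does not cover the sub-sweeps inside such an $R$.

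The repair is short but is a genuinely different argument from the one you give: a searchplane of the active guard $\ell'$ passing through the interior of $R$ still disconnects $R$, not because $R$ is genus zero, but because any path inside $R$ joining the two sides while avoiding that searchplane would also be a path inside $\mathcal P$ avoiding the full (filling) searchplane of $\ell'$, contradicting the fact that a filling searchplane disconnects the genus-zero polyhedron $\mathcal P$. In other words, you must lift the hypothetical crossing path from the subregion back to the ambient polyhedron rather than re-derive the disconnection property intrinsically for $R$. With that substitution your points (i) and (ii) go through, and the rest of your outline matches the paper's proof.
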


\begin{proof}We first sketch a search schedule before detailing it further. Select any guard $\ell$ and start turning it rightward from its leftmost position. As soon as it reaches a critical position, it means that some \emph{subpolyhedron} $\mathcal R \subset \mathcal P$ has been encountered that is invisible to $\ell$. So stop turning $\ell$ and select another guard to continue the job. Proceed recursively until $\mathcal R$ is clear, then turn $\ell$ rightward again, stopping at every critical position, until the entire polyhedron is clear.

At any time, there is a clear region of $\mathcal P$ that is steadily growing, and a \emph{semiconvex subpolyhedron} $\mathcal R$ \emph{supported} by a set of guards $L$ that is being cleared by some guard not in $L$, while the guards in $L$ hold their searchlights fixed. Intuitively, the term \textquotedblleft semiconvex\textquotedblright\ is used because the only points of non-convexity of such a polyhedron lie on $\mathcal P$'s boundary. For a formal definition of \emph{semiconvex subpolygon} and \emph{support}, refer to~\cite{bullo2} or~\cite{search1}. Extending these definitions to polyhedra is straightforward, since we are considering only filling guards in genus-zero polyhedra. Thus, part of the boundary of $\mathcal R$ coincides with $\mathcal P$'s boundary, while the remaining part is determined by the searchplanes of the guards in $L$. Moreover, all the guards in $L$ are in a critical position, waiting for $\mathcal R$ (or some larger semiconvex subpolyhedron of $\mathcal P$, supported by a subset of $L$) to be cleared. It follows that none of the guards in $L$ can see any point in the interior of $\mathcal R$.

Hence, there must be some guards not in $L$ that can see an internal portion of $\mathcal R$, otherwise the problem instance would not be guard-visible. We select one of them, say $\ell'$, and start sweeping $\mathcal R$ from left to right (according to $\ell'$'s notion of left and right). Notice that a searchplane bounding $\mathcal R$ could have holes, and thus $\mathcal R$ itself could have strictly positive genus. But that does not affect our invariants, because every searchplane of $\ell'$ passing through $\mathcal R$'s interior still disconnects it, or it would not even disconnect $\mathcal P$. As a consequence, the points in $\mathcal R$ that are illuminated by $\ell'$ never get recontaminated as $\ell'$ continues its sweep.

Again, whenever $\ell'$ reaches a critical position, it stops there until the semiconvex subpolyhedron $\mathcal R' \subset \mathcal R$ supported by $L\cup \{\ell'\}$ has been cleared by some other guard, and so on recursively. Since every guard has only a finite number of critical positions, eventually $\mathcal P$ gets cleared.\end{proof}

Remarkably enough, the core argument supporting the planar one-way sweep strategy of Sugihara et al.\ (see~\cite{search1}) applies also to our polyhedral model, where there is no well-defined global concept of clockwise rotation.

\section{Sequentiality}

In addition to this, a version of the main result of Obermeyer et al.\ (see~\cite{bullo}) also extends to instances of \TSSP\ with filling guards. Assuming that all the guards are filling, we say that a search schedule is \emph{critical} and \emph{sequential} if at most one guard is turning at any given time, and guards stop or change direction only at critical positions, or at their leftmost or rightmost positions.

\begin{corollary}\label{sequentiality}Every searchable genus-zero instance of the \TSSPext\ whose guards are filling has a critical and sequential search schedule.\end{corollary}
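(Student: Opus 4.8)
The plan is to derive the corollary directly from the one-way sweep strategy of Theorem~\ref{exhaustive}, by observing that the search schedule constructed in its proof is \emph{already} critical and sequential, so that no separate reorganization argument is needed. First I would dispatch the hypotheses: the searchability assumption implies guard-visibility, since if some point of the polyhedron were invisible to every guard, the intruder could sit there indefinitely and no search schedule could exist. Thus any searchable genus-zero instance with filling guards is a guard-visible instance in the sense of Definition~\ref{viable}, and Theorem~\ref{exhaustive} applies to it, furnishing a concrete search schedule.

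Next I would revisit that constructed schedule and check the two defining properties of a critical and sequential schedule. \emph{Sequentiality:} at every instant the procedure turns a single guard, namely the guard $\ell'$ currently sweeping the semiconvex subpolyhedron $\mathcal{R}$, while all guards of the supporting set $L$ hold their searchlights fixed; the recursion only promotes the active sweeper into $L$ and selects one fresh guard to continue, so at most one guard is ever turning. \emph{Criticality:} each active guard begins at its leftmost position, sweeps monotonically rightward, and is halted only upon reaching a critical position (where it pauses until the newly revealed invisible subpolyhedron is cleared recursively) or its rightmost position (where its sweep ends). In particular the schedule never reverses direction, so every stop occurs at a critical, leftmost, or rightmost position, exactly as required. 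Finiteness is guaranteed by the remark following Definition~\ref{def:critical}: each filling guard has only finitely many critical positions, hence only finitely many pause-and-recurse events occur, and $\mathcal{P}$ is cleared in finite time $T$.

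The step requiring the most care — and the one that makes the filling hypothesis indispensable — is confirming that pausing a sweeping guard at a critical position and later resuming its rightward motion never recontaminates an already-cleared region. This is precisely where the genus-zero and filling assumptions enter: as argued in the discussion preceding Theorem~\ref{exhaustive}, every searchplane of a filling guard aimed into the interior of a genus-zero polyhedron disconnects it, so the illuminated barrier cannot be circumvented while a guard holds it fixed, and the cleared region grows monotonically throughout the sweep. I expect the substantive obstacle to be verifying that this monotonicity invariant is preserved across the interleaving of held and sweeping guards in the recursion — that the searchplanes of the fixed guards in $L$, together with the advancing searchplane of $\ell'$, jointly seal off the clear region at every intermediate moment. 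Once this invariant is established, the critical-and-sequential character of the schedule is immediate from its very construction, and the corollary follows.
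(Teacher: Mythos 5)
Your proposal is correct and follows exactly the paper's own route: searchability implies guard-visibility, so the schedule constructed in Theorem~\ref{exhaustive} applies, and one observes that this schedule is already critical and sequential by construction. The additional verification you supply (one active sweeper at a time, stops only at critical or extremal positions, finiteness from the finitely many critical positions) merely makes explicit what the paper leaves implicit.
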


\begin{proof}Since the instance is searchable, it must be guard-visible. Thus, the search schedule detailed in the proof of Theorem~\ref{exhaustive} applies, which is indeed critical and sequential.\end{proof}

\section{Searching with one guard}

As a further application of the concept of filling guard, we characterize the searchable instances of \TSSP\ having just one guard. We include polyhedra of any genus, not just zero.

\begin{theorem}\label{one}An instance of the \TSSPext\ (of any genus) with just one guard is searchable if and only if it is guard-visible and the only guard is a boundary guard.\end{theorem}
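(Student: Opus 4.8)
The plan is to prove both implications, treating the necessity direction as essentially routine and concentrating the effort on the converse. For necessity, guard-visibility is forced by the standard argument recalled in Chapter~\ref{chapter1}: if a point $x$ lies outside $\mathcal V(\ell)$, the constant path $h\equiv x$ is never illuminated, so $x$ remains contaminated under every schedule. To see that the lone guard $\ell$ must be a boundary guard, I would argue by contradiction. If $\ell$ is not a boundary guard, some relative-interior point $p$ of $\ell$ lies in the interior of $\mathcal P$, so a small ball $\mathcal B\subset\mathcal P$ around $p$ is entirely visible to $p$, and inside $\mathcal B$ each searchplane is exactly a half-disk of the corresponding half-plane. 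Given any schedule $f_\ell\colon[0,T]\to S^1$, the intruder sits at a fixed small radius about the axis of $\ell$, at angular coordinate $f_\ell(t)+\delta$ for a fixed $\delta\neq 0$; this path is continuous, stays in $\mathcal B$, avoids the axis, and never lies on the illuminated half-plane, hence is never caught. So a non-boundary guard can never search, and (recalling that only boundary guards have blind directions) the condition is necessary.

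For sufficiency, assume $\ell$ is a boundary guard seeing all of $\mathcal P$. The plan is to sweep $\ell$ monotonically once from its leftmost position $\theta_L$ to its rightmost position $\theta_R$ and show this single pass clears $\mathcal P$, \emph{irrespective of genus and without needing $\ell$ to be filling}. The decisive observation is angular: every point $x\in\mathcal P$ off the axis line of $\ell$ lies in a unique half-plane through that line, whose direction $\phi(x)\in S^1$ depends continuously on $x$; since $x\in\mathcal V(\ell)$, the point is illuminated during the sweep at exactly the instant $\theta(t)=\phi(x)$, and $\phi(x)$ is non-blind, hence lies in the arc $[\theta_L,\theta_R]$. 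Points on the axis line (and on $\ell$ itself) lie on the bounding line of every half-plane, so they are illuminated throughout the sweep; consequently any never-illuminated path must avoid the axis, and $\phi$ is well-defined and continuous along it.

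The heart of the proof is the absence of recontamination. Suppose a point $x$ with $\phi(x)<\theta(t)$ were contaminated at time $t$, witnessed by a continuous never-illuminated path $h\colon[0,t]\to\mathcal P$ with $h(t)=x$. Since $h$ avoids the axis, $\phi\circ h$ is continuous and takes values in $[\theta_L,\theta_R]$, which I may treat as a real interval after cutting $S^1$ at a blind direction. Consider $D(s)=\phi(h(s))-\theta(s)$: never being illuminated forces $D(s)\neq 0$ for all $s$, while $D(0)=\phi(h(0))-\theta_L\geq 0$ (and in fact $D(0)>0$, since $D(0)=0$ would mean $h(0)$ is illuminated at time $0$) and $D(t)=\phi(x)-\theta(t)<0$. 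By the intermediate value theorem $D$ vanishes on $(0,t)$, a contradiction. Hence every point swept by the searchplane stays clear, and as the sweep covers all of $[\theta_L,\theta_R]$, every point is clear at time $T$ (points reached exactly at $\theta_R$ are illuminated at $T$, hence also clear).

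I expect the main obstacle to be the careful justification that the non-blind directions of a boundary guard form a single arc $[\theta_L,\theta_R]$ of $S^1$ of length strictly less than $2\pi$, so that $\phi(h(s))$ and $\theta(s)$ can legitimately be compared as real numbers and the intermediate value theorem applies; this rests on the local geometry at $\ell$ (relative interior of a face, or a convex or reflex edge) together with the well-definedness of the leftmost and rightmost positions from Chapter~\ref{chapter7}. A secondary point, conceptual rather than technical, deserves emphasis: unlike the multi-guard counterexamples of Section~\ref{basic}, the single-guard sweep never needs its searchplane to disconnect $\mathcal P$, and it is precisely this feature that makes the argument genus-independent.
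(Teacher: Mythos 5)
Your proof is correct and follows essentially the same strategy as the paper's: the necessity arguments are the standard ones (your circling intruder is exactly why ``no small-enough ball centered at an interior point of the guard can ever be cleared''), and the sufficiency argument is the same single monotone sweep from leftmost to rightmost position, justified by the observation that guard-visibility confines all of $\mathcal P$ off the axis to the non-blind angular sector. The only difference is presentational: where the paper shows each searchplane equals the full cross-section $\alpha\cap\mathcal P$ and hence separates $\mathcal P$ (the blind half-planes meeting $\mathcal P$ only in the axis segment $\ell'$), you package the same fact as an intermediate-value-theorem argument on the angular coordinate of a hypothetical evading path --- a slightly cleaner formalization that, as you note, makes the genus-independence transparent, though both versions ultimately lean on the Chapter~\ref{chapter7} claim that a boundary guard's non-blind directions form a proper arc of $S^1$.
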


\begin{proof}Both conditions are clearly necessary, for the same reasons why they are necessary in 2-dimensional \SSP (see~\cite{search1} and Chapter~\ref{chapter1}). Specifically, if the guard has a point that does not lie on the polyhedron's boundary, then no small-enough ball centered at that point can ever be cleared.

Conversely, since the instance is guard-visible, the visibility region of its only guard $\ell$ coincides with the whole polyhedron $\mathcal P$. Let $\ell'$ be the maximal straight line segment contained in $\mathcal P$ and containing $\ell$. Then $\ell'$ entirely belongs to the boundary of $\mathcal P$. Indeed, if a point $x\in \ell'$ lay strictly in the interior of $\mathcal P$, then also some neighborhood of $x$ would. Recall that $\ell$ has a range of blind directions past its leftmost and rightmost position, where all its searchplanes degenerate to the single line $\ell'$. In all those directions, part of the neighborhood of $x$ would lie outside $\mathcal V(\ell)$, contradicting the guard-visibility of the instance.

\begin{figure}[h]
\centering
\includegraphics[scale=0.5]{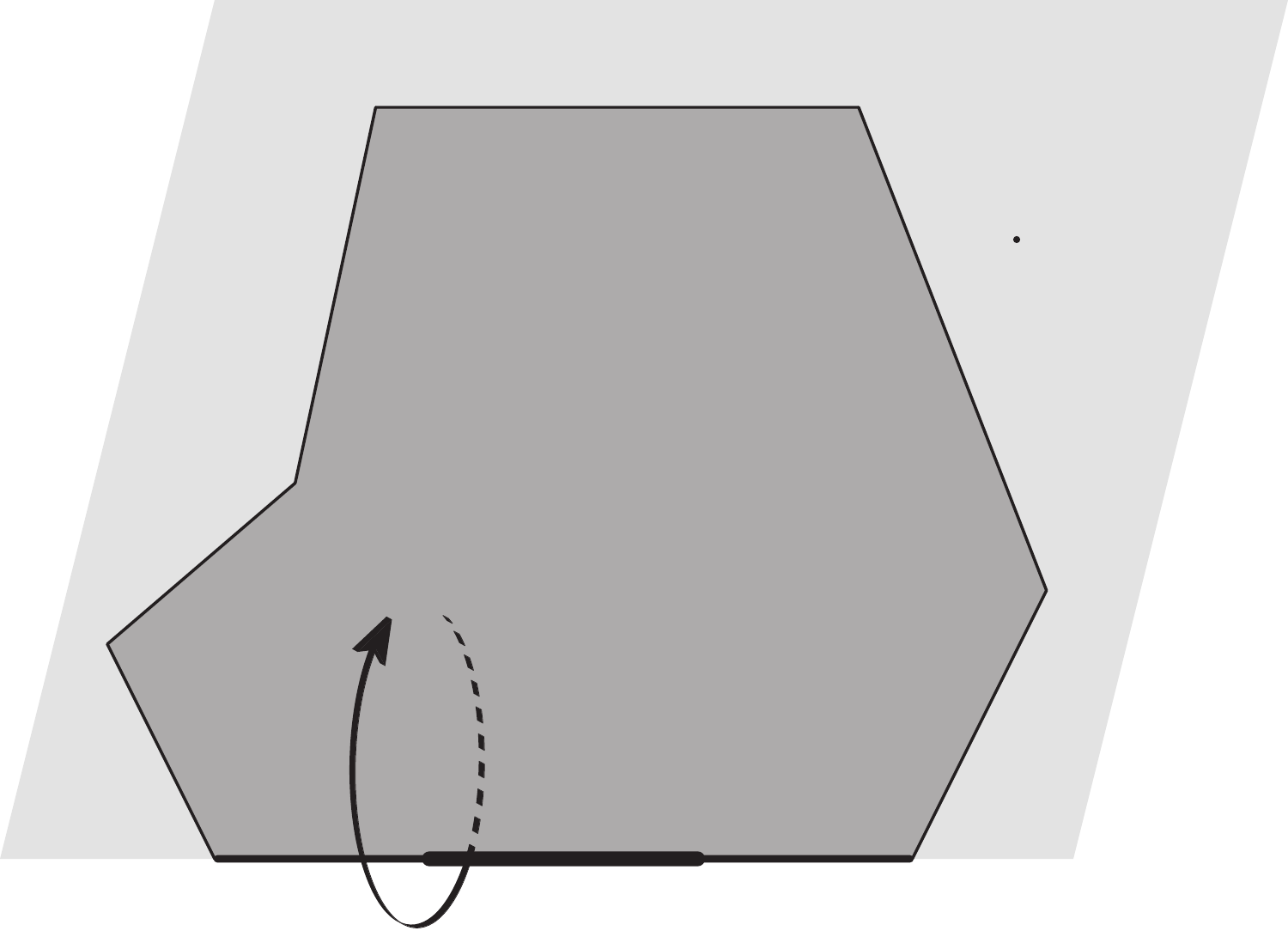}
\put(-174,146){$\alpha$}
\put(-42,119){$x$}
\put(-122,1){$\ell$}
\put(-82.5,1){$\ell'$}
\large
\put(-114.5,72){$S$}
\caption{Illustration of the proof of Theorem~\ref{one}.}
\label{fig:n1}
\end{figure}

Consider a searchplane $S$, lying on a half-plane $\alpha$ whose bounding line contains $\ell$ (refer to Figure~\ref{fig:n1}). If a point $x \in \alpha \cap \mathcal P$ lay outside $S$, then $x$ would necessarily be covered by another searchplane, because $\mathcal V(\ell)=\mathcal P$. But the only points in $\alpha$ that could lie on a searchplane different from $S$ would be those in $\ell'$, because any two searchplanes of $\ell$ intersect just on $\ell'$. Nonetheless, $\ell'$ belongs to $S$ too, which yields a contradiction. Hence $S=\alpha \cap \mathcal P$ and also $\ell'$ lies on the boundary of $\mathcal P$, implying that $\ell$ is a filling guard with no critical positions. Moreover, $S$ disconnects $\mathcal P$ if it intersects its interior. Suppose by contradiction that an intruder could walk from one side of $S$ to the other side. Since $S=\alpha \cap \mathcal P$, the intruder would necessarily have to take the long route around $\ell'$ (i.e., opposite to $\alpha$, see Figure~\ref{fig:n1}), and by doing so it would cross all the half-planes in the blind directions of $\ell$. But the only points of $\mathcal P$ that belong to such half-planes are those in $\ell'$, so the intruder would be caught in any case.

It follows that turning $\ell$ from its leftmost position to its rightmost position produces a search schedule.\end{proof}

Notice that, had we included endpoints in Definition~\ref{guard}, the statement of Theorem~\ref{one} would have been false. Indeed, the guard-visibility assumption would have been satisfied by more \TSSP\ instances, including unsearchable ones. For example, consider a cube with a \textquotedblleft pyramidal well\textquotedblright\ pointing inside, and an edge guard on a reflex edge. The entire polyhedron is visible to the pyramid's vertex, which is an endpoint of the guard. However, the guard cannot search the polyhedron, because no searchplane disconnects it, in either guard model---with or without endpoints.

\section{Checking for fillingness}

We conclude this chapter by sketching an argument supporting the claim that the conditions of Theorem~\ref{exhaustive} are practically checkable.

\begin{proposition}\label{efficient}Whether a guard $\ell\subset\mathcal P$ is filling can be decided in time polynomial in the size of $\mathcal P$.\end{proposition}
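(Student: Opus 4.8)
The plan is to reduce the question to a finite collection of planar visibility tests. The starting point is the recorded characterization of filling searchplanes: for a direction whose half-plane lies in the plane $\alpha$, the searchplane $S$ (see Definition~\ref{def:searchplane}) is filling if and only if $S$ coincides with the connected component of $\alpha\cap\mathcal P$ containing $\ell$. Since a segment contained in $\alpha$ lies in $\mathcal P$ exactly when it lies in the planar region $\alpha\cap\mathcal P$, visibility from $\ell$ restricted to $\alpha$ is nothing but ordinary $2$-dimensional visibility inside the cross-sectional polygon $Q_\theta=\alpha_\theta\cap\mathcal P$ (a planar polygon, possibly with holes). Hence the searchplane in direction $\theta$ is filling precisely when $\ell$ sees, within $Q_\theta$, every point lying on the relevant side of $\ell$ of the connected component of $Q_\theta$ that contains $\ell$; equivalently, when that region is star-shaped with $\ell$ in its kernel. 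This is a standard planar visibility computation and runs in time polynomial in the complexity of $Q_\theta$, which is $O(n)$.

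First I would dispose of the trivial cases. Only boundary guards can be filling (as observed after Definition~\ref{def:exhaustive}), so if $\ell$ does not lie on $\partial\mathcal P$ the answer is immediately ``no''. Otherwise I parameterize the half-planes hinged at the line through $\ell$ by an angle $\theta$ ranging over the arc of non-blind directions, bounded by the leftmost and rightmost positions of $\ell$. As $\theta$ increases, the combinatorial type of the cross-section $Q_\theta$ --- which edges and faces of $\mathcal P$ the plane $\alpha_\theta$ meets, and in which cyclic order around the component of $\ell$ --- changes only when $\alpha_\theta$ sweeps through a vertex of $\mathcal P$ or becomes flush with one of its faces. Each vertex and each face contributes $O(1)$ such \emph{critical angles}, so there are $O(n)$ of them, computable directly from the coordinates. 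Sorting them partitions the arc into $O(n)$ open intervals.

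Within one such interval the set of edges of $\mathcal P$ that can occlude $\ell$'s view inside $Q_\theta$, together with their combinatorial incidences, is fixed; the only way the filling property can flip is when a reflex edge or vertex starts or stops casting a shadow that reaches the relevant side of the component of $\ell$. I would add these ``shadow'' events --- there are only polynomially many, since each is an alignment of $\ell$ with a pair of features of $\mathcal P$ --- to the list of critical angles, refining the partition while keeping it polynomial. Then the filling property is constant on each refined subinterval, so it suffices to run the planar star-shapedness test once at an interior sample of each subinterval and once at each critical angle, reporting ``filling'' iff every test succeeds. The whole procedure performs $O(\mathrm{poly}(n))$ planar visibility tests, each polynomial, hence runs in polynomial time.

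The main obstacle is the soundness of the discretization: proving that the filling property really is invariant between consecutive refined critical angles, and that the critical angles can be enumerated in polynomial time without missing an event. This amounts to showing that the functions describing where $\ell$'s sight lines meet $\partial\mathcal P$ are piecewise algebraic in $\theta$ with polynomially many breakpoints, and to handling degeneracies (a plane coplanar with a face, a sight line grazing an edge, coincident guards) with care. Once this combinatorial bookkeeping is in place, the remaining ingredients are routine planar visibility and point-location computations.
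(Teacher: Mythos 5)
Your proposal is correct and follows essentially the same route as the paper: discard non-boundary guards, reduce each direction to a planar visibility test in the cross-section, and discretize the rotation into polynomially many critical angles, testing at each critical angle and once inside each interval between consecutive ones. The paper pins down your ``shadow events'' concretely as the \emph{critical points} obtained by intersecting the plane of each face with the line of each non-parallel edge (giving $O(n^2)$ events and an $O(n^4)$ naive bound), and, like you, it treats the invariance of fillingness between consecutive critical angles as the step requiring justification, asserting it rather than proving it in detail.
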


\begin{proof}First of all, if $\ell$ is not a boundary guard, then it cannot be filling and may be discarded.

Then, observe that the fillingness of a searchplane $S$ generated by a half-plane $\alpha$ can be efficiently checked by computing the polygonal section $P=\alpha \cap \mathcal P$, and then computing the boundary of $S$ by drawing lines through every pair of vertices of $P$. Once the boundary of $S$ is known, its fillingness can be checked easily.

Now, determining if $\ell$ is filling can be carried out by inspecting just a polynomial number of its searchplanes. For every face $F\subset \mathcal P$ and every edge $e\subset \mathcal P$ that is not parallel to $F$, we call the point of intersection between the plane containing $F$ and the line containing $e$ a \emph{critical point} (not to be confused with ``critical position'', see Definition~\ref{def:critical}). In particular, every vertex of $\mathcal P$ is a critical point. Imagine turning the searchlight of $\ell$ from its leftmost position to the rightmost: it is straightforward to see that the fillingness of the illuminated searchplane can change only when the searchlight crosses a critical point. Hence, it suffices to check the fillingness of every searchplane corresponding to a critical point, plus one searchplane for each interval between two consecutive critical points. The number of searchplanes to check is thus polynomial.\end{proof}

Making the above proposition into an optimal algorithm goes beyond our scopes. However, the interested reader can readily see that even a naive implementation yields a time complexity of $O(n^4)$, regardless of the actual data structures used to represent polyhedra.
\chapter{Guard placing strategies}\label{chapter9}
\begin{chapterabstract}
We tackle the problem of efficiently placing boundary guards in a given polyhedron, in order to make it searchable.

Our strategy for general polyhedra with $r>0$ reflex edges employs $r^2$ boundary guards, which we believe to be asymptotically suboptimal. However, we prove that $r$ reflex edge guards are sufficient for orthogonal polyhedra.

A related goal is to minimize the search time, assuming that each guard's angular speed is bounded by some constant. We show that, in our previous constructions, it is possible to trade guards for search time, to some extent.
\end{chapterabstract}

\section{Preliminaries}

In this chapter we attempt to place boundary guards in a given polyhedron in order to make it searchable. We aim at proving the following.

\begin{conjecture}\label{con9:1}
Any non-convex polyhedron with a guard on each reflex edge is searchable.
\end{conjecture}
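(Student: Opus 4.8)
The natural plan is to combine the guard-visibility already guaranteed by one reflex edge guard per reflex edge with a sweeping argument modeled on the one-way sweep strategy. First I would invoke Lemma~\ref{l3:reflex}: placing an open guard on each reflex edge makes the instance guard-visible, so the only obstruction to searchability is recontamination, not invisibility. To control recontamination I would use a \emph{convex decomposition} of the polyhedron in the style of Chazelle's revised naive decomposition (Chapter~\ref{chapter2},~\cite{polypart1}): repeatedly resolve a reflex edge by cutting along a plane adjacent to that edge, keeping the cuts that split the same reflex edge coplanar, until only convex cells remain. Every internal cut-face then lies in a plane containing some reflex edge, and the cell-adjacency graph can be rooted so that the cells admit a linear clearing order.

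The search schedule would then clear the convex cells one at a time in this order, exactly as in the proof of Theorem~\ref{exhaustive}. To clear the next cell, the reflex edge guard whose cut-plane bounds it rotates its searchlight until its searchplane coincides with that cut-plane, so that the searchplane forms a barrier separating the already-clear region from the still-contaminated region; the frontier of cleared space then advances by one cell. The cut-planes play the role of the critical positions of Definition~\ref{def:critical}, and the genus is absorbed along the way, since each cut along a reflex-edge plane either disconnects the polyhedron or lowers its genus (cf.\ Theorem~\ref{t2:chaz}), so no separate higher-genus argument is needed. If every barrier searchplane were filling, the cleared region could never be recontaminated and the schedule would succeed, reproducing for general polyhedra the behaviour that makes the orthogonal case provable with exactly $r$ reflex edge guards.

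The hard part, and the reason the statement survives only as a conjecture, is precisely the \emph{fillingness} of these barriers. In an orthogonal polyhedron a reflex edge guard aimed along a coordinate plane generates a searchplane that genuinely disconnects the polyhedron, so each guard can serve as a sequence of filling barriers; this is what underlies the $r$-guard bound proved for orthogonal polyhedra. In a general polyhedron a reflex edge guard need not be filling: as the counterexamples of Figures~\ref{fig:4} and~\ref{fig:5} show, a boundary guard's searchplanes are not coplanar across directions and may fail to disconnect $\mathcal P$ when aimed at its interior, so the cleared region can leak around a non-filling searchplane and recontaminate. Worse, Chazelle's decomposition may route several non-coplanar cut-planes through a single reflex edge, and one guard on that edge can realise only a one-parameter family of half-planes through its axis: it can present each barrier in turn, but between two successive barriers its searchplane must sweep through intermediate, generally non-filling, positions. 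Ruling out recontamination during these transitions---or replacing the decomposition by one whose barriers are simultaneously filling and realizable by a single guard per reflex edge---is the essential missing step. The proven $r^2$ upper bound circumvents it by dedicating a separate guard to each cut-plane and positioning it to be filling there; reducing $r^2$ to $r$ is exactly the gap that keeps Conjecture~\ref{con9:1} open.
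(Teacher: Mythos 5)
The statement you were asked to prove is explicitly left open in the paper: it is Conjecture~\ref{con9:1}, proved there only for orthogonal polyhedra (Theorem~\ref{orth}) and replaced by the $r^2$ bound of Theorem~\ref{heur} in the general case. Your proposal honestly does not claim to close the gap, and your diagnosis of where the natural plan breaks down is essentially the paper's own: one guard per reflex edge gives guard-visibility via Lemma~\ref{l3:reflex}, the one-way sweep of Theorem~\ref{exhaustive} would finish the job if the barriers were filling, and the counterexamples of Figures~\ref{fig:4} and~\ref{fig:5} show that reflex edge guards in general polyhedra are not filling. So there is nothing to fault in the proposal as a proof attempt, because no proof exists to match it against.

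One refinement to your account of the obstruction: the quadratic count in Theorem~\ref{heur} is not forced only by non-fillingness \emph{during rotation}, but by a more basic static failure. The cross-section $\mathcal Q_e$ of $\mathcal P$ through the bisector plane of a reflex edge $e$ can be a polygon with up to $r-1$ reflex vertices and holes, and the single guard on $e$ generally cannot see all of $\mathcal Q_e$; the paper needs Lemma~\ref{edges} to cover it with up to $r$ boundary guards lying in that plane merely to \emph{hold} one disconnecting barrier in place. So with one guard per reflex edge you cannot in general even erect the static walls of the Chazelle-style decomposition, let alone sweep between them. The orthogonal case escapes this because the fences of Steps~\ref{step1}--\ref{step3} are engineered to be single rectangles each entirely contained in the visibility region of its generating reflex edge guard (Lemmas~\ref{fences} and~\ref{cuboids}), after which Theorem~\ref{one} clears each guard's union of cuboids. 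Any attack on the conjecture for general polyhedra has to replace Chazelle's cuts by barriers that are simultaneously disconnecting and fully visible to a single reflex edge, which is exactly the step nobody knows how to do.
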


Although we only manage to prove it for orthogonal polyhedra, we also obtain a quadratic upper bound on the number of guards for general polyhedra.

A complementary goal is to minimize the \emph{search time}, which is the total time of a search schedule, assuming that the maximum angular speed of every guard is $2\pi~\mathrm{rad} / \mathrm{sec}$. We will show that our constructions allow to trade guards for search time, to some extent.

\begin{remark}
Formally, when considering search time, the set of a guard's legal schedules is restricted to Lipschitz continuous functions, whose Lipschitz constant corresponds to an angular speed of $2\pi~\mathrm{rad} / \mathrm{sec}$.
\end{remark}

The problem of minimizing the number of (boundary) guards required to search a given polyhedron is strongly \NP-hard, even for genus-zero orthogonal polyhedra. The problem is strongly \NP-hard also restricted to edge guards, or to edge guards lying on reflex edges. Indeed, our construction in Theorem~\ref{t3:hard} not only holds for the \ART, but also for \TSSP, due to the reduction in Proposition~\ref{p7:redu}, and the fact that point guards can search a given simple polygon if and only if they solve the \ART (see Chapter~\ref{chapter1} or~\cite{search1}).

\section{Searching general polyhedra}

\subsection*{Basic search strategy}

To begin with, we argue that $r^2+r$ boundary guards are sufficient to search any given polyhedron with $r>0$ reflex edges. Subsequently, we show how to lower this bound to $r^2$, at the expense of increasing search time.

\begin{theorem}\label{speed1}Any polyhedron with $r>0$ reflex edges can be searched in less than 1 second by at most $r^2+r$ suitably chosen boundary guards.\end{theorem}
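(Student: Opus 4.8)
The plan is to use the convex decomposition machinery from Chapter~\ref{chapter2} (Chazelle's ``revised naive decomposition'') together with the one-guard searchability result of Theorem~\ref{one}. The core idea is to resolve the reflex edges one at a time, placing guards on the cutting planes so that, after each cut, the newly exposed cross-sections can be swept as if they were filling searchplanes. First I would invoke the decomposition that cuts the polyhedron along planes incident to reflex edges, producing at most $r^2/2 + r/2 + 1$ convex pieces; on each such cutting plane I would place a set of boundary guards whose searchplanes can reproduce the cut. The crucial feature is that each reflex edge is ``resolved'' by a single planar cut, and a guard lying on the reflex edge itself can orthogonally reach the cutting plane. Counting guards: one guard per reflex edge handles the cuts, but to actually \emph{search} (not merely decompose) each resulting convex region we need guards capable of sweeping them, which is where the quadratic count enters.

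Next I would make the search schedule explicit and order it to run in less than one second. The decomposition induces a tree (or forest) of convex pieces glued along cutting rectangles/polygons. I would search the pieces in a leaf-to-root order: a convex piece that has been isolated by the searchplanes of already-committed guards is itself searchable by a single additional guard, by the argument underlying Theorem~\ref{one} (a single boundary guard whose searchplanes fill a convex region disconnects it as it sweeps, hence clears it). Since each reflex edge contributes a bounded number of cutting planes, and each cutting plane needs $O(r)$ guards to span it (because a single cut may interact with up to $r$ other reflex features), the total is bounded by $r \cdot r + r = r^2 + r$: the $r^2$ term accounts for the guards supporting the $r$ cuts against the $O(r)$ features each, and the extra $r$ accounts for the sweeping guards that actually clear the convex cells. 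To bound the time by one second, I would sweep guards in a coordinated fashion so that the whole schedule is a single monotone progression of the clear region; since each guard turns at most once through an angle strictly less than $2\pi$, and we may run disjoint-region sweeps in parallel and rescale the time axis of the finitely many sequential stages, the entire schedule fits in any prescribed positive duration, in particular under one second. (The multiset formalism from Chapter~\ref{chapter7}, which permits coincident-but-distinct guards, is what makes the rescaling and parallel sweeping legitimate.)

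The main obstacle I anticipate is verifying that the searchplanes really do disconnect the polyhedron at each stage \emph{for arbitrary genus} and \emph{for non-orthogonal cuts}. Unlike the filling-guard setting of Theorem~\ref{exhaustive}, here the supporting searchplanes need not be filling in the strict sense of Definition~\ref{def:exhaustive}, so I cannot directly cite the one-way sweep strategy. The honest fix is to argue that, by construction, the \emph{collection} of committed searchplanes along a cutting plane reconstitutes the entire planar cut $\alpha \cap \mathcal{P}$, so that their union acts as a single filling barrier even though no individual guard is filling. This is precisely why $O(r)$ guards per cut, rather than one, are required: a single boundary guard may fail to fill its plane (recall the counterexamples of Section~\ref{basic}), but enough of them placed along the cut together cover $\alpha \cap \mathcal{P}$ and thus collectively disconnect the polyhedron. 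Making this ``collective filling'' rigorous, and checking that recontamination cannot leak through handles of positive-genus polyhedra once a full planar cut is illuminated, is the delicate point; everything else is bookkeeping on the decomposition tree and a routine time-rescaling argument.
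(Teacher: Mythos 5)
Your overall architecture matches the paper's: for each reflex edge $e$ you take a plane $\alpha_e$ near its bisector, observe that the cross-section $\mathcal Q_e = $ (component of $\alpha_e\cap\mathcal P$ containing $e$) is a polygon with at most $r-1$ reflex vertices, and cover it with at most $r$ boundary guards (this is exactly Lemma~\ref{edges}, which also lets you force one of them onto $e$); the $r^2$ static guards then hold a convex partition in place while $r$ further reflex-edge guards do the sweeping, and your ``collective filling'' observation --- that the union of the supporting searchplanes reconstitutes all of $\alpha_e\cap\mathcal P$ and therefore acts as a barrier regardless of genus --- is precisely how the paper disposes of the disconnection worry. So the guard count and the decomposition are essentially the paper's.

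The genuine gap is in the time bound. Under the model fixed in the Remark at the start of Chapter~\ref{chapter9}, schedules are Lipschitz with angular speed capped at $2\pi~\mathrm{rad}/\mathrm{sec}$, so you cannot ``rescale the time axis of the finitely many sequential stages'': a leaf-to-root sequential traversal of up to $\Theta(r^2)$ cells, each demanding a positive-angle sweep, takes time that grows with $r$ --- indeed this is exactly the schedule of Theorem~\ref{heur}, whose search time the paper concedes may be quadratic in $r$. To get under one second you must instead turn all $r$ clearing guards \emph{simultaneously}, once, each through less than a full revolution. Justifying that simultaneous sweep requires a property your proposal does not establish: every convex cell must be entirely visible to the moving guard on some reflex edge lying on its own boundary, so that each cell is cleared by ``its'' guard with no dependence on the order in which other cells are cleared. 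This fails for the finest partition $\{\mathcal C_i\}$ in which every cutting plane is extended across the whole polyhedron --- a cell there need not meet any reflex edge of $\mathcal P$ at all. The paper's fix is a coarser, BSP-style partition $\{\mathcal D_j\}$ in which each cut is confined to the subpolyhedron currently being split; an easy induction then shows every $\mathcal D_j$ contains a subsegment of a reflex edge, hence is wholly seen by the edge guard placed there, and the single simultaneous sweep (each guard turning by less than $2\pi$, i.e.\ in less than one second) clears everything without recontamination because the $r^2$ static guards never move. Without the coarsening step and the simultaneity argument, your schedule either leaves some cells unclearable by the $r$ sweepers or blows past the one-second budget.
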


\begin{proof}Let $\mathcal P$ be a non-convex polyhedron. We first partition it into convex regions $\mathcal C_i$, showing how this partition is induced by at most $r^2$ boundary guards. Next, we obtain a superpartition $\mathcal D_j$ with some better properties that we exploit to obtain a search schedule, using $r$ new boundary guards (a very similar partition was described by Chazelle in~\cite{polypart1}, under the name of \textquotedblleft revised naive decomposition\textquotedblright, see Chapter~\ref{chapter2}).

\paragraph{First partition.}

Let $e$ be a reflex edge of $\mathcal P$, and let $\alpha_e$ be a plane through $e$, close enough to its angle bisector, but not containing any vertex of $\mathcal P$ other than $e$'s endpoints. Let $\mathcal Q_e$ be the connected component of $\mathcal \alpha_e \cap \mathcal P$ containing $e$. We claim that $\mathcal Q_e$ is a polygon with at most $r-1$ reflex vertices, possibly with holes. Indeed, $e$ is an edge of $\mathcal Q_e$, and each reflex vertex of $\mathcal Q_e$ lies on a reflex edge of $\mathcal P$. Moreover, if an endpoint of $e$ is a reflex vertex of $\mathcal Q_e$, then it belongs at least to another reflex edge of $\mathcal P$, different from $e$. But $\alpha_e$ intersects every edge of $\mathcal P$ other than $e$ in at most one point (otherwise it would contain its endpoints, as well), hence there are at most $r-1$ reflex vertices of $\mathcal Q_e$, i.e., one for every reflex edge of $\mathcal P$ other than $e$.

By Lemma~\ref{edges}, $\mathcal Q_e$ can be guarded by at most $r$ open edge guards, one of which lies on $e$. Equivalently, $\mathcal Q_e$ can be completely illuminated by at most $r$ suitably placed boundary guards (with searchlights), lying on $\alpha_e$ and aimed parallel to it, one of which is an edge guard lying on $e$. By repeating the same construction with every other reflex edge, at most $r^2$ boundary guards are placed, and $\mathcal P$ is partitioned by illuminated searchplanes into convex polyhedra $\mathcal C_i$. We remark that, during our construction, previously placed searchplanes are not considered as part of $\mathcal P$'s boundary. Hence, every $\mathcal Q_{e_k}$ is indeed bounded by $\mathcal P$. This is because we need place guards on the boundary of $\mathcal P$, and not in its interior.

\paragraph{A coarser partition.}

Consider now a slightly different partition: proceed as above by drawing angle bisectors through reflex edges, but this time every previously drawn splitting polygon acts as a boundary. In other words, as soon as $\mathcal P$ splits into several subpolyhedra, we partition them recursively one by one, confining each split to just one subpolyhedron. So, if we consider some intermediate subpolyhedron $\mathcal P' \subseteq \mathcal P$ and select a reflex edge $e'\subset \mathcal P'$, we look at the reflex edge $e$ of $\mathcal P$ that contains $e'$, and let $\mathcal R_{e'}$ be the connected component of $\alpha_e \cap \mathcal P'$ (as opposed to $\alpha_e \cap \mathcal P$) containing $e'$. As a result, $\mathcal P$ is again partitioned into convex polyhedra $\mathcal D_j$, in such a way that every $\mathcal C_i$ is contained in some $\mathcal D_j$ (i.e., $\{\mathcal C_i\}$ is a \emph{refinement} of $\{\mathcal D_j\}$). Notice that, even though two different \emph{splitting polygons} $\mathcal R_{e'}$ and $\mathcal R_{e''}$ may correspond to the same reflex edge $e$ of $\mathcal P$ (because $e$ itself has been split by a previous cut), they are nonetheless coplanar, as they both belong to the same plane $\alpha_e$.

\begin{figure}[h]
\centering
\subfigure[$\{\mathcal C_i\}$]{\label{fig:n2a}\includegraphics[scale=0.75]{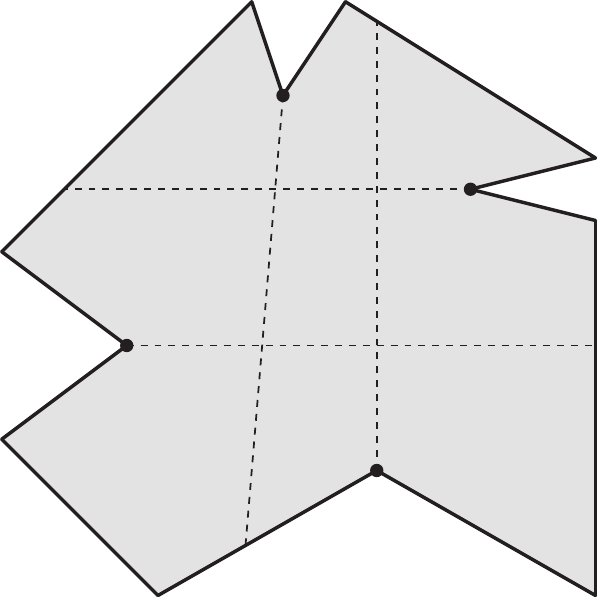}}\qquad \qquad \quad
\subfigure[$\{\mathcal D_j\}$]{\label{fig:n2b}\includegraphics[scale=0.75]{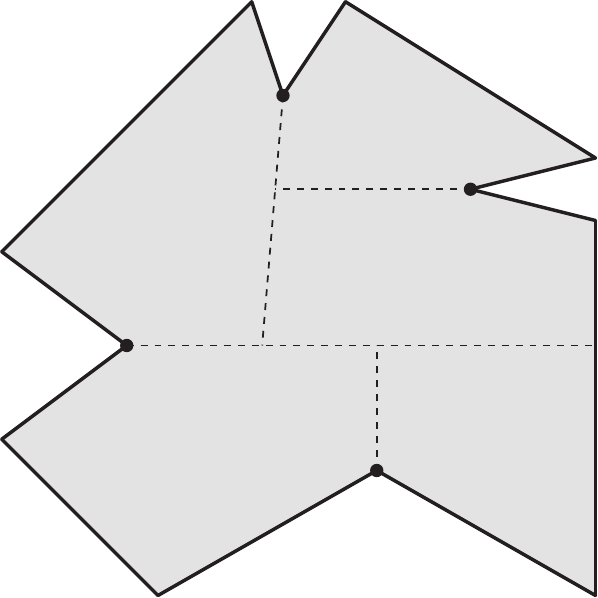}}
\caption{Comparison between partitions $\{\mathcal C_i\}$ and $\{\mathcal D_j\}$ in a section of a polyhedron. For simplicity, only the splitting planes corresponding to visible reflex edges are shown.}
\label{fig:n2}
\end{figure}

\paragraph{Search schedule.}

The point of having this coarser partition is that every $\mathcal D_j$ contains a subsegment of a reflex edge of $\mathcal P$ (see Figure~\ref{fig:n2}). This property can be checked by a straightforward induction on the construction steps.

Now we add a new edge guard on each reflex edge, and we turn these additional guards simultaneously, from their leftmost position to the rightmost. While this happens, the $r^2$ previously placed guards are never moved, so that the partition $\{\mathcal D_j\}$ is always preserved. Because each $\mathcal D_j$ is completely visible to some reflex edge, at the end of the schedule every $\mathcal D_j$ is successfully cleared.

Since every guard turns by less than $2\pi~\mathrm{rad}$, the search time is less than one second.\end{proof}

\subsection*{Improved search strategy}

We can slightly lower the number of guards used in the previous theorem, at the cost of increasing the search time. Next we show how to use only $r^2$ boundary guards, although the search time of our schedule could be quadratic in $r$, as well.

\begin{theorem}\label{heur}Any polyhedron with $r>0$ reflex edges can be searched by at most $r^2$ suitably chosen boundary guards.\end{theorem}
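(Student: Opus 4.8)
The plan is to recycle, rather than augment, the guard set already built in the proof of Theorem~\ref{speed1}. Recall that the first partition there places at most $r^2$ boundary guards on the planes $\alpha_e$ and induces the convex partition $\{\mathcal C_i\}$, and that for each reflex edge $e$ exactly one of those guards lies on $e$ itself, namely the distinguished edge guard supplied by Lemma~\ref{edges}. Recall also the coarser convex partition $\{\mathcal D_j\}$, each of whose pieces contains a subsegment of some reflex edge and is therefore entirely visible to it. First I would keep this entire first partition in place and simply refuse to add the $r$ fresh edge guards used in Theorem~\ref{speed1}; instead I would reuse the $r$ guards $g_1,\dots,g_r$ that the construction has already seated on the reflex edges $e_1,\dots,e_r$ as the guards that perform the actual sweeping. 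Since no guard beyond the first partition is ever introduced, the total is at most $r^2$, as claimed.

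The schedule then differs from that of Theorem~\ref{speed1} in that the reflex-edge guards are moved one at a time rather than simultaneously. While $g_k$ turns, every other guard is held fixed, so the remainder of the $\{\mathcal D_j\}$ partition is preserved. Sweeping $g_k$ from its leftmost to its rightmost position passes its searchplane across every piece $\mathcal D_j$ that contains a subsegment of $e_k$; each such piece is convex and visible to $e_k$, hence is cleared during the pass and, by convexity, is not re-entered by that searchplane. The cost of this serialization is that the search time is no longer bounded by one second, and, as explained below, may grow to a quantity quadratic in $r$.

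The step I expect to be the main obstacle is ruling out recontamination while $g_k$ is away from the orientation in which it contributed to the cut $\alpha_{e_k}$: at those instants the portion of that cut illuminated by $g_k$ alone is momentarily open, so contamination could in principle leak across it. To control this I would borrow the bookkeeping of the one-way sweep strategy of Chapter~\ref{chapter8}, ordering the reflex edges according to the tree of the recursive decomposition that produced $\{\mathcal D_j\}$ and holding each $g_k$ at the appropriate critical position while deeper pieces are cleared, so that at the moment $g_k$ opens its cut the region beyond it is either already clear or is exactly the piece about to be cleared. Because a single guard may have to be parked at such a critical position, later released to clear its own piece, and possibly revisited as the recursion backtracks, each of the $r$ guards can be set in motion up to $\Theta(r)$ times; this is precisely what inflates the total search time to $O(r^2)$. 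Verifying that this ordering, together with the seals provided by the frozen guards on each $\alpha_{e}$ and by $\mathcal P$'s boundary, indeed prevents any cleared piece from being recontaminated is the heart of the argument and the part that would require the most care.
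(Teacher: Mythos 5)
Your proposal follows essentially the same route as the paper's proof: reuse the distinguished edge guards already seated on the reflex edges by the first partition (so no guards beyond the $r^2$ are added), activate them one at a time in an order dictated by a depth-first traversal of the partition tree, and argue by induction on that traversal that a cut momentarily opened by the active guard can only recontaminate pieces whose systematic clearing has not yet begun. The part you flag as requiring the most care is indeed where the paper spends its effort, and the mechanism you propose (tree ordering plus the fact that the active guard's restriction to the leaf being cleared is filling, while all other cuts are sealed by frozen guards or by $\mathcal P$'s boundary) is exactly the paper's resolution.
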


\begin{proof}
As in the proof of Theorem~\ref{speed1}, we partition the polyhedron $\mathcal P$ into convex regions $\mathcal D_j$ by placing at most $r^2$ boundary guards. Next we show that some of these guards can be turned in a certain order to clear every piece of the partition.

In the following, we use the same notation as in the proof of Theorem~\ref{speed1}.

\paragraph{Partition tree.}

During the splitting process of $\mathcal P$ that culminates in the partition $\{\mathcal D_j\}$, we build a tree whose nodes represent the intermediate subpolyhedra, and whose arcs are marked by the splitting polygons $\mathcal R_{e}$ (see Figure~\ref{fig:n3}). Thus, the root of the tree is $\mathcal P$ and its leaves are the $\mathcal D_j$'s. Every time we draw a splitting polygon $\mathcal R_{e}$ for a subpolyhedron $\mathcal P'$ corresponding to some node $v$ of the tree, we could either decrease the genus of $\mathcal P'$, or we could partition it into two subpolyhedra $\mathcal P'_1$ and $\mathcal P'_2$, one for each side of $\mathcal R_{e}$. In the first case we just attach to $v$ an arc labeled $\mathcal R_{e}$ with a child node labeled as the new polyhedron, say $\mathcal P''$. In the second case we attach to $v$ two arcs labeled $\mathcal R_{e}$, with two children nodes labeled $\mathcal P'_1$ and $\mathcal P'_2$. This structure is somewhat reminiscent of a Binary Space Partitioning tree. Again, as in the proof of Lemma~\ref{edges}, we have to slightly extend the notion of polyhedron, to include those with internal polygons acting as faces, resulting from non-disconnecting splits.

\begin{figure}[h]
\centering
\includegraphics[scale=1.1]{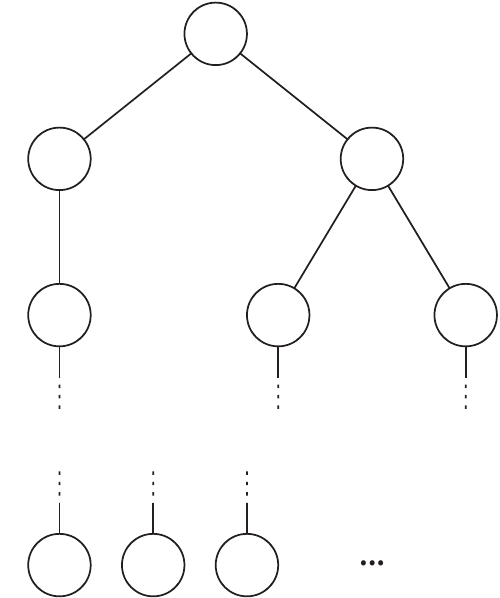}
\put(-94.5,175){$\mathcal P$}
\put(-146,136){$\mathcal P_1$}
\put(-46.5,136){$\mathcal P_2$}
\put(-146.5,86){$\mathcal P_1'$}
\put(-77,86){$\mathcal P_2'$}
\put(-17.5,86){$\mathcal P_2''$}
\put(-146,6.5){$\mathcal D_1$}
\put(-116,6.5){$\mathcal D_2$}
\put(-86,6.5){$\mathcal D_3$}
\put(-130.75,165){${}_{\mathcal R_{e_1}}$}
\put(-67.75,165){${}_{\mathcal R_{e_1}}$}
\put(-158,114.5){${}_{\mathcal R_{e_2}}$}
\put(-75.5,114.5){${}_{\mathcal R_{e_3}}$}
\put(-23.5,114.5){${}_{\mathcal R_{e_3}}$}
\caption{Sketch of a partition tree.}
\label{fig:n3}
\end{figure}

\paragraph{Search schedule.}

Next we describe a search schedule for the $r^2$ boundary guards we placed. Recall that each $\mathcal D_j$ contains a subsegment of a reflex edge of $\mathcal P$, and that on each such reflex edge lies one edge guard. We turn only \emph{some} of these $r$ guards, one by one, while all the other guards stay still. The order of activation of the guards is determined by the above partition tree, and the same guard could be activated more than once. The subpolyhedra of $\mathcal P$ are cleared recursively, following a depth-first walk of the partition tree, starting from the root. Every time a leaf is reached through an arc labeled $\mathcal R_{e}$, its corresponding $\mathcal D_j$ is swept by the edge guard lying on $e$. This is feasible because $e$ lies on the boundary of $\mathcal D_j$, which is convex. When $\mathcal D_j$ is clear, the guard moves back to its initial position and the depth-first walk proceeds.

\paragraph{Correctness.}

It remains to show that no \emph{significant} recontamination can occur among the $\mathcal D_j$'s while their bounding searchlights are rotated. Suppose the depth-first walk of the partition tree reaches a leaf labeled $\mathcal D_j$, and let $\ell_e$ be the edge guard whose duty is to clear $\mathcal D_j$. Perhaps the corresponding edge $e$ of $\mathcal P$ is divided several times by the partition, so let $E$ be the set of subsegments of $e$ whose corresponding splitting planes actually appear as labels of the edges of the partition tree. Let $e'\in E$ be the subsegment of $e$ that is also an edge of $\mathcal D_j$. On the other side of $\mathcal R_{e'}$ lies a subpolyhedron $\mathcal P'$, such that $\mathcal D_j$ and $\mathcal P'$ are represented by sibling nodes in the partition tree. In order to clear $\mathcal D_j$, $\ell_e$ turns its searchlight from $\mathcal R_{e'}$ to sweep over $\mathcal D_j$, and then back to $\mathcal R_{e'}$. Since the \emph{restriction} of $\ell_e$ to $\mathcal D_j$ is filling, no recontamination occurs between $\mathcal D_j$ and $\mathcal P'$ during this back-and-forth sweep  (see Figure~\ref{fig:8}).

\begin{figure}[h]
\centering
\includegraphics[scale=0.6]{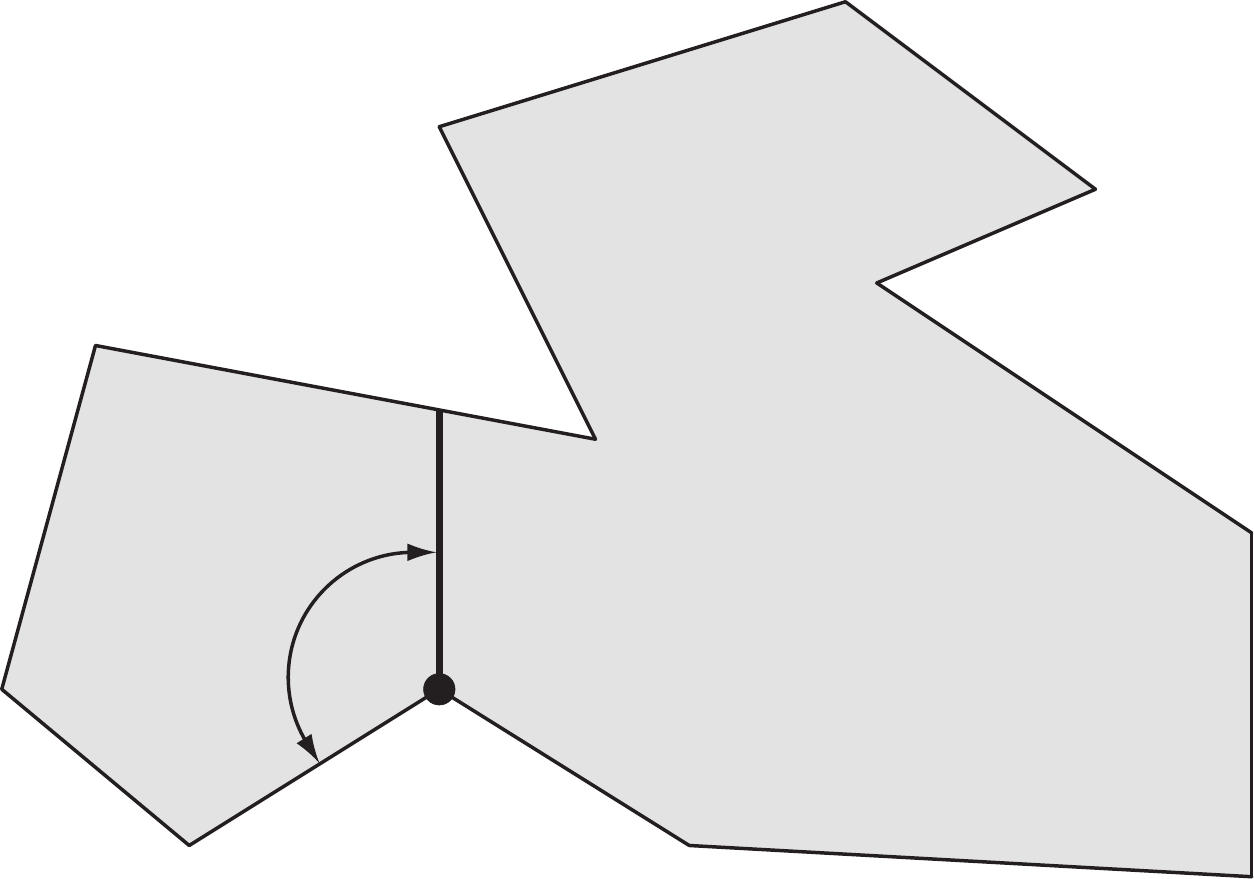}
\put(-144.5,17.5){$\ell_e$}
\put(-138.25,68.5){${}_{\mathcal R_{e'}}$}
\put(-186,61.75){$\mathcal D_j$}
\put(-62.1,48.65){$\mathcal P'$}
\caption{$\mathcal P'$ is not recontaminated while $\ell_e$ sweeps $\mathcal D_j$.}
\label{fig:8}
\end{figure}

Nonetheless, recontamination could still occur between other subpolyhedra bounded by $\alpha_e$. Let $\{e_1, e_2\}\subseteq E$, and let the subpolyhedra $\mathcal P_1$ and $\mathcal P_2$ be partitioned by $\mathcal R_{e_1}$, and $\mathcal R_{e_2}$, respectively (observe that the relative interiors of $e_1$ and $e_2$ must be disjoint, by construction). Obviously, no recontamination between $\mathcal P_1$ and $\mathcal P_2$ is possible while $\ell_e$ sweeps, because $\alpha_e$ is not a common boundary (refer to Figure~\ref{fig:n4}). However, recontamination could occur within $\mathcal P_1$ (or within $\mathcal P_2$), provided that $e_1\neq e'$. As it turns out, this type of recontamination is irrelevant, because it would imply that the node labeled $\mathcal P_1$ (call it $p$) has not been reached yet by the depth-first walk in the partition tree. We set up an induction argument on the partition tree, assuming that all the subpolyhedra corresponding to previously visited leaves are still clear before $\ell_e$ sweeps. By construction, $p$ cannot be an ancestor of the node labeled $\mathcal D_j$, otherwise $\mathcal D_j$ would be a subpolyhedron of $\mathcal P_1$, while in fact their interiors are disjoint. So, had the depth-first walk reached $p$, it would have also visited its entire dangling subtree and, by inductive assumption, $\mathcal P_1$ would still be all clear. Moreover, since $\mathcal P_1$ is not bounded by $\alpha_e$, it cannot get recontaminated while $\ell_e$ sweeps. On the other hand, if the walk has not reached $p$ yet, then perhaps some portions of $\mathcal P_1$ have been accidentally cleared, but can safely be recontaminated, because the systematic clearing process of $\mathcal P_1$ has yet to start.\end{proof}

\begin{figure}[h]
\centering
\includegraphics[scale=0.9]{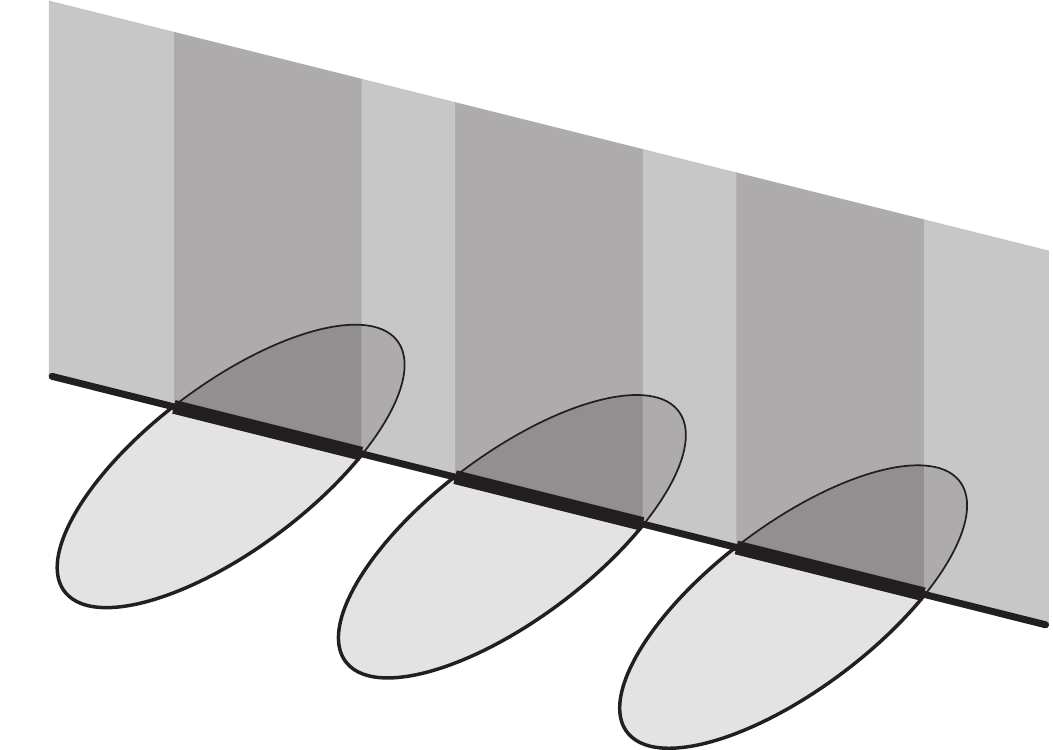}
\put(-251.25,177.25){$\alpha_e$}
\put(-208,140.5){$\mathcal R_{e_1}$}
\put(-136,123){$\mathcal R_{e_2}$}
\put(-62.5,104.5){$\mathcal R_{e'}$}
\put(-259.75,86.5){$e$}
\put(-215.2,74){$e_1$}
\put(-141.25,55.5){$e_2$}
\put(-69,36){$e'$}
\put(-273,46){$\mathcal P_1$}
\put(-200.75,27){$\mathcal P_2$}
\put(-93.75,15){$\mathcal P'$}
\put(-47.25,55.5){$\mathcal D_j$}
\caption{Sketch of a splitting plane.}
\label{fig:n4}
\end{figure}

\begin{remark}
A bonus feature of this improved construction, apart from using fewer guards, is that no two guards are coincident: indeed, if two guards end up coinciding, we can slightly perturb their respective planes $\alpha_e$. Observe that practically implementing two coincident guards (equivalently, one guard with two independent searchlights) may conceivably be unfeasible in some applications.
\end{remark}

\section{Searching orthogonal polyhedra}

The previous results can be greatly improved if the polyhedron is orthogonal.

\subsection*{Erecting fences}

We first obtain a partition of a given orthogonal polyhedron $\mathcal P$ into cuboids by constructing vertical \emph{fences} in a 3-step process (recall the terminology for orthogonal polyhedra introduced in Chapter~\ref{chapter2}). Then we will argue that these fences may be regarded as (parts of) searchplanes of suitably oriented edge guards lying on the reflex edges of $\mathcal P$. Finally, after pointing out some basic properties of our partition, we will show that rotating all the searchlights one by one clears every cuboid.

The term \textquotedblleft fence\textquotedblright\ is borrowed from~\cite{polypart2}, where Chazelle and Palios describe a partition of general polyhedra into prisms that somewhat resembles that of our Step~\ref{step1} (see also Chapter~\ref{chapter2}).
\begin{enumerate}
\item\label{step1} Each horizontal reflex edge $r$ of $\mathcal P$ has a horizontal adjacent face and a vertical adjacent face $F$, going upwards or downwards. From every point on $r$, cast a vertical ray in the direction opposite to $F$, until it again reaches the boundary of $\mathcal P$. Repeat the process with every horizontal reflex edge, so that each generates a vertical \emph{fence}, either upwards or downwards. It is easy to see that $\mathcal P$ is partitioned by fences into orthogonal prisms (i.e., extruded polygons) with horizontal bases.

\item\label{step2} Consider any vertical reflex edge $r$ of $\mathcal P$ with an internal point lying on a lateral face (i.e., an $x$-orthogonal face) of a prism $\mathcal Q$ generated in Step~\ref{step1}. By construction, $r$ must lie entirely on the boundary of $\mathcal Q$. Extend $r$ to a maximal segment $r'$ contained in the boundary of $\mathcal Q$. If possible, cast a horizontal ray from every point of $r'$, going through the interior of $\mathcal Q$ in the left-right direction, until it reaches $\mathcal Q$'s boundary, as shown in Figure~\ref{fig:n5}. The rectangle so formed is again called a fence. Repeating the same construction with every other vertical reflex edge of $\mathcal P$ lying on a lateral face of some prism, we obtain a refinement of the initial partition by these new fences. Clearly, to every such reflex edge corresponds just one fence, which in turn goes through the interior of just one prism.

\begin{figure}[h]
\centering
\includegraphics[scale=0.7]{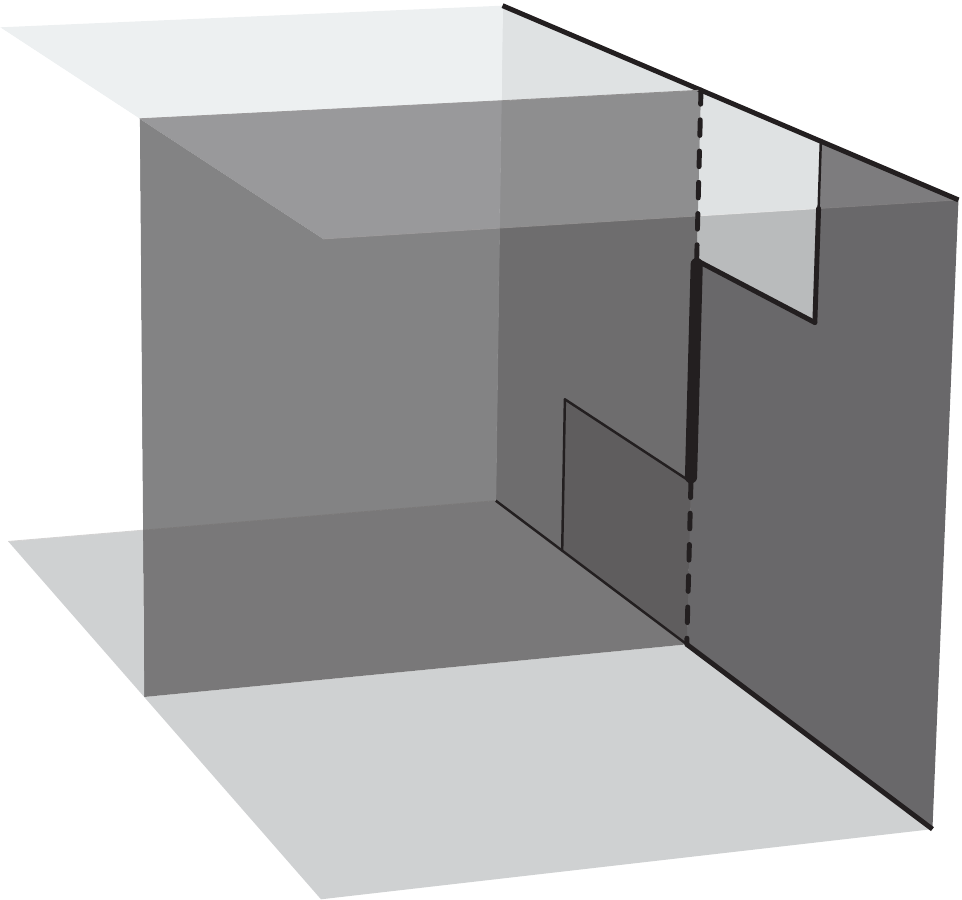}
\put(-49.5,102){$r$}
\put(-51.5,62.5){$r'$}
\large
\put(-101,19.65){$\mathcal Q$}
\caption{Fence generated by $r$ during Step~\ref{step2}. Fences are shown as darker regions.}
\label{fig:n5}
\end{figure}

\item\label{step3} The pieces resulting from Step~\ref{step2} are once again prisms, perhaps containing some additional vertical fences that are not faces. Indeed, some fences constructed in Step~\ref{step2} split a prism in two subprisms, while others just lower a prism's genus and act as degenerate faces (which will become legitimate faces of cuboids shortly). By construction, each reflex edge of such a prism is vertical, with no edges of $\mathcal P$ lying on it. Repeat the procedure in Step~\ref{step2} also with these reflex edges, casting horizontal rays, and thus building vertical fences that extend laterally the front faces of the prisms, as shown in Figure~\ref{fig:9}. As a result, $\mathcal P$ gets partitioned into cuboids.
\end{enumerate}

\begin{figure}[h]
\centering
\includegraphics[scale=1]{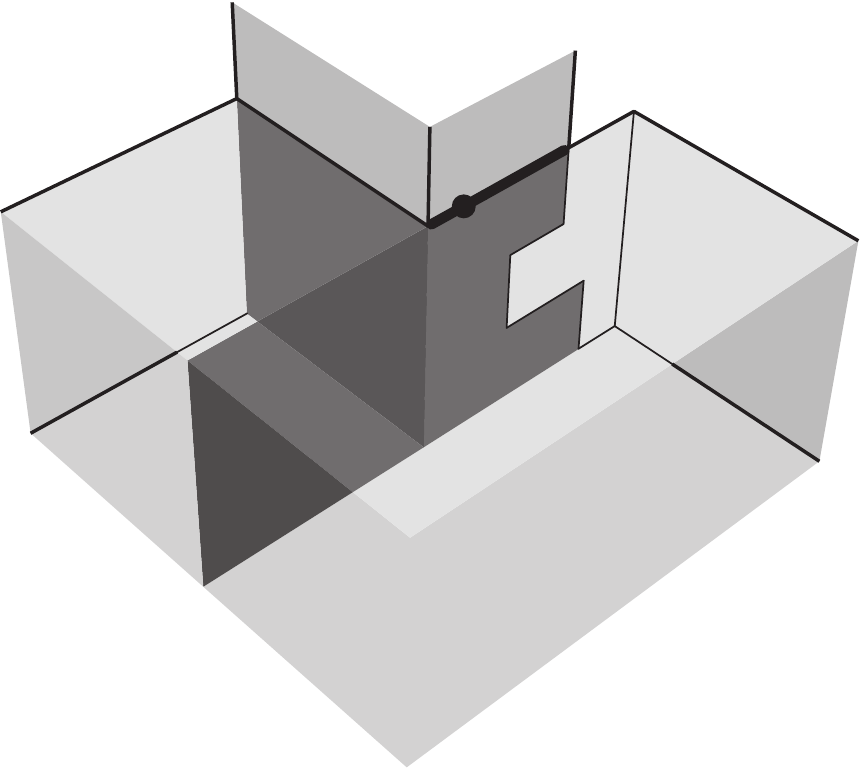}
\put(-99.75,176.5){$\ell$}
\put(-95.5,139.75){$F$}
\put(-163.25,151.15){$F'$}
\put(-122.4,126){$r$}
\put(-118.3,168.5){$x$}
\large
\put(-105.5,48.5){$\mathcal C$}
\put(-212.75,90.75){$\mathcal C'$}
\caption{Fence generated by $r$ during Step~\ref{step3}. Fences are shown as darker regions. The labels comes from Lemmas~\ref{fences} and~\ref{cuboids}.}
\label{fig:9}
\end{figure}

Observe that all the fences generated in Steps~\ref{step2} and~\ref{step3} are $y$-orthogonal rectangles. The reason why we distinguish three steps is that the respective fences need be treated differently in the proofs to come.

\subsection*{Placing guards}

Now place an edge guard on each reflex edge of $\mathcal P$. Aim horizontal guards vertically and aim vertical guards in the left-right direction, in such a way that every guard aims at the interior of $\mathcal P$. We first want to prove that the fences in our 3-step construction are \emph{coherent} with searchlights.

\begin{lemma}\label{fences}Every fence is contained in some illuminated searchplane.\end{lemma}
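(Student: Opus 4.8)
The plan is to verify the claim separately for the fences generated in each of the three construction steps, since Lemma~\ref{fences} (as the author notes just before it) treats the three kinds of fences differently. In each case the guard responsible is the edge guard placed on the very reflex edge that generated the fence, and the task is to show that this guard, aimed as prescribed, has a searchplane containing (at least) that fence. First I would set up the common geometry: recall that after Step~\ref{step3} the polyhedron $\mathcal P$ is partitioned into cuboids, that an edge guard sits on every reflex edge, and that horizontal guards are aimed vertically while vertical guards are aimed in the left-right direction, always pointing into the interior of $\mathcal P$.

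For the Step~\ref{step1} fences, the generating object is a horizontal reflex edge $r$, carrying a horizontal guard aimed vertically (upward or downward, opposite the vertical adjacent face $F$). The fence is the union of vertical rays cast from the points of $r$ in that direction. I would argue that these rays all lie on a single vertical half-plane $\alpha$ whose bounding line contains $r$, namely the half-plane orthogonal to the horizontal adjacent face and aimed away from $F$; the intersection $\mathcal V(r)\cap\alpha$ is exactly a searchplane of $r$, and by the very definition of the fence (each ray is extended only until it re-meets $\partial\mathcal P$) every point of the fence is visible to the point of $r$ directly ``below'' (or above) it. Hence the fence is contained in that searchplane. The Step~\ref{step2} and Step~\ref{step3} fences are handled analogously, with $r$ (or its maximal extension $r'$) a vertical reflex edge carrying a vertical guard aimed left-right: the fence is a horizontal rectangle swept out by horizontal rays through the prism's interior, these rays lie in a common horizontal half-plane bounded by the line through the guard, and again each fence point is visible to the guard point on its own horizontal ray. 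The only subtlety to flag is the maximal-extension step: in Step~\ref{step2} the fence is generated not from $r$ itself but from $r'\supseteq r$, so I must confirm that the guard on $r$ (an edge of $\mathcal P$, hence a genuine reflex edge) still sees the whole fence — this works because the portion $r'\setminus r$ lies on the boundary of the prism $\mathcal Q$ and the horizontal rays from those extra points remain unobstructed within $\mathcal Q$, so collinearity of the rays with $r$ guarantees visibility.

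The key point that makes each case go through is that visibility here is \emph{along} the ray: for any fence point $x$, there is a point $y$ on the guard with $xy\setminus\{y\}$ lying in $\mathcal P$ (indeed running through the interior of a single prism or cuboid before the ray terminates on $\partial\mathcal P$), so $x\in\mathcal V(r)$ and $x$ lies on the fixed half-plane through the guard; therefore $x$ belongs to the searchplane cut out by that half-plane. Summing over all fence points shows the fence $\subseteq$ searchplane.

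I expect the main obstacle to be the bookkeeping around degenerate and recursively-created fences rather than any deep geometric difficulty. Specifically, fences built in Steps~\ref{step2} and~\ref{step3} sometimes merely lower a prism's genus and act as degenerate (internal) faces, and in Step~\ref{step3} the relevant reflex edges need not be edges of $\mathcal P$ at all but artifacts of earlier cuts — yet guards are only placed on reflex edges of $\mathcal P$. The careful part of the argument is therefore to trace each Step~\ref{step3} fence back to a genuine reflex edge of $\mathcal P$ bearing a guard (via the maximal-extension and coplanarity observations already recorded in the construction: fences generated from subsegments of the same edge are coplanar, lying in a common half-plane through that edge), and to check that the presence of previously erected fences does not occlude the ray, since during the placement those fences are illuminated searchplanes and not opaque boundary. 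Once the provenance of every fence is pinned down, containment in a searchplane follows from the along-the-ray visibility argument above.
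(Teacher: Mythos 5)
Your plan for the Step~\ref{step1} and Step~\ref{step2} fences is essentially the paper's (the paper dismisses them as obvious, and your along-the-ray/convex-planar-rectangle argument is the right justification, modulo the slip that the Step~\ref{step2} half-plane is vertical, not horizontal, since its bounding line contains the vertical guard). The genuine gap is in Step~\ref{step3}, which is the only case the paper actually argues. Your opening premise --- ``in each case the guard responsible is the edge guard placed on the very reflex edge that generated the fence'' --- fails there: by the construction, the vertical reflex edge $r$ of the intermediate prism that generates a Step~\ref{step3} fence has \emph{no edge of $\mathcal P$ lying on it} (Step~\ref{step2} has already consumed all vertical reflex edges of prisms that meet vertical reflex edges of $\mathcal P$), so no guard sits on $r$, and $r$ is not collinear with, nor an extension or subsegment of, any reflex edge of $\mathcal P$. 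Your proposed repair --- tracing the fence back to a genuine reflex edge ``via the maximal-extension and coplanarity observations\dots fences generated from subsegments of the same edge are coplanar'' --- imports an observation from the general-polyhedron partition of Theorem~\ref{heur} that does not apply here; there is simply no vertical reflex edge of $\mathcal P$ in the plane of $r$ to trace back to.

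The missing idea is that the illuminating guard is a \emph{horizontal} one. The Step~\ref{step3} fence extends laterally the front face $F$ of the prism $\mathcal Q$ adjacent to $r$, hence is coplanar with $F$. Since $r$ contains no reflex edge of $\mathcal P$, the interior of $F$ avoids $\partial\mathcal P$ near $r$, while both bases of $\mathcal Q$ lie on $\partial\mathcal P$ (no horizontal fences were ever erected); therefore a subsegment of a horizontal edge of $F$ sharing an endpoint with $r$ must be a reflex edge of $\mathcal P$ and carries a guard $\ell$ aimed vertically. The vertical searchplane of $\ell$ lies in the plane of $F$ and covers $F$ together with its lateral extension, i.e., the whole Step~\ref{step3} fence (see Figure~\ref{fig:9}). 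Without identifying this horizontal guard and establishing its existence, the Step~\ref{step3} case of Lemma~\ref{fences} does not go through.
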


\begin{proof}The claim is obvious for fences constructed in Step~\ref{step1} and Step~\ref{step2}. As for fences constructed in Step~\ref{step3}, consider an edge $r$ generating one of them (see Figure~\ref{fig:9}). Recall that $r$ is a reflex edge of a prism, let it be $\mathcal Q$, in the partition obtained in Step~\ref{step1}. By the construction in Step~\ref{step2}, $r$ contains no reflex edge of $\mathcal P$. Hence its adjacent front face $F\subset \mathcal Q$ has no intersection with the boundary of $\mathcal P$, in a thin-enough neighborhood of $r$. But both bases of $\mathcal Q$ belong to the boundary of $\mathcal P$, because no horizontal fences were constructed (recall that the top face and the bottom face of a prism are both called ``bases''). Then, at least a subsegment of a horizontal edge of $F$, sharing an endpoint with $r$, belongs to a reflex edge of $\mathcal P$, and hence belongs to a guard $\ell$. As a consequence, $\ell$ illuminates the whole fence generated by $r$ in Step~\ref{step3}.\end{proof}

Thus, every fence \emph{belongs} to one guard. We could also incorporate each fence built in Step~\ref{step3} into its adjacent vertical fence built in Step~\ref{step1} that belongs to the same guard (e.g., in Figure~\ref{fig:9}, the two fences that are coplanar with face $F$ are ``merged'' and belong to guard $\ell$). As a result, every guard \emph{generates} at most one fence.

\begin{lemma}\label{cuboids}Every cuboid in the partition belongs entirely to the visibility region of any guard whose fence bounds it.\end{lemma}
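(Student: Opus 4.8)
The plan is to prove the statement one cuboid at a time and to reduce spatial visibility to a single convexity fact. Fix a cuboid $\mathcal{C}$ of the partition and a guard $\ell$ whose fence $F$ bounds $\mathcal{C}$, so that a face $G$ of $\mathcal{C}$ lies on $F$. I claim it suffices to show $\ell\cap\mathcal{C}\neq\emptyset$. Indeed, $\mathcal{C}$ is convex and $\mathcal{C}\subseteq\mathcal{P}$; hence, picking once and for all a point $q\in\ell\cap\mathcal{C}$, for every $p\in\mathcal{C}$ the segment $\overline{qp}$ lies in $\mathcal{C}$ by convexity, and therefore in $\mathcal{P}$. Thus $p\in\mathcal{V}(\ell)$ for every $p\in\mathcal{C}$, which is exactly the assertion $\mathcal{C}\subseteq\mathcal{V}(\ell)$. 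Note that this argument makes no use of $q$ being an orthogonal foot of $p$ nor of $\ell$ spanning $\mathcal{C}$: a single common witness $q$ works for all points of $\mathcal{C}$.

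Granting the reduction, the whole proof collapses to the incidence claim that the reflex edge $r$ carrying $\ell$ actually meets $\mathcal{C}$, i.e.\ that $r$ (equivalently its relative interior $\ell$) lies along an edge of the face $G$. This is geometrically plausible because $F$ is generated by casting rays orthogonally from $r$ (vertically in Step~\ref{step1}, horizontally in Steps~\ref{step2} and~\ref{step3}), so $r$ sits on the boundary edge of $F$ from which the sweep emanates; the cuboid that abuts $F$ on the interior side of $\mathcal{P}$ then inherits that edge as one of its own. I would verify this by taking coordinates with $r$ along one axis and $G$ in the coordinate plane through $\ell$, and checking in each of the three fence families that the cuboid adjacent to $F$ reaches the edge carrying $r$.

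The step I expect to be the main obstacle is precisely ruling out a \emph{detached} cuboid: a piece of the partition whose face lies on $F$ but on a portion of $F$ separated from $r$ by an intervening cut, for which $\ell\cap\mathcal{C}$ could be empty and the reduction would fail. The resolution I would use rests on the orientations produced by the construction: every fence of Steps~\ref{step2} and~\ref{step3} is $y$-orthogonal, a fence of Step~\ref{step1} is a vertical plane, and no fence is ever horizontal. Two parallel fences never subdivide one another, and within a single prism $\mathcal{Q}$ obtained in Step~\ref{step1} there is no fence transverse to a given Step~\ref{step2} fence, since the only transverse planes are the prism's own walls. Hence $F$ is uncut in its ray direction inside $\mathcal{Q}$, the cuboid it bounds spans the full ray-extent of $F$, and therefore reaches the edge carrying $r$. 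The analogous bookkeeping for Step~\ref{step1} and Step~\ref{step3} fences follows the same pattern.

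Finally, I would note that the apparent difficulty of the \emph{overhang} in Step~\ref{step2} --- where $F$ is cast from the maximal extension $r'\supseteq r$ and the adjacent cuboid, being an extruded rectangle, may be strictly taller than $\ell$ --- is in fact harmless. Since $\ell\subseteq r\subseteq\partial\mathcal{C}$ still meets $\mathcal{C}$, the convexity clinch of the first paragraph applies verbatim: a point of $\mathcal{C}$ lying above or below $\ell$ is seen through a slanted segment contained in $\mathcal{C}$, with no need for $\ell$ to cover $\mathcal{C}$'s vertical extent. This is precisely why the lemma can be stated for \emph{any} guard whose fence bounds $\mathcal{C}$, rather than only for a guard positioned along a full edge of $\mathcal{C}$.
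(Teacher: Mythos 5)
Your convexity reduction is sound \emph{where it applies}: for a cuboid $\mathcal C$ with $\ell\cap\overline{\mathcal C}\neq\emptyset$, a single witness point plus convexity of $\mathcal C$ does the job, and this is exactly how the paper disposes of the fences of Steps~\ref{step1} and~\ref{step2} (and of one of the two cuboids bounded by a Step~\ref{step3} fence). The genuine gap is that your ``incidence claim'' --- that the guard always meets the cuboid its fence bounds --- is false for the second cuboid bounded by a Step~\ref{step3} fence, and this is precisely the case that carries the weight of the lemma. You have also misread who the guard is there: the edge $r$ generating a Step~\ref{step3} fence is a vertical reflex edge of an intermediate \emph{prism} that, by construction, contains no edge of $\mathcal P$, so no guard lies on it. The guard $\ell$ associated with that fence (via Lemma~\ref{fences}) is a \emph{horizontal} reflex edge of $\mathcal P$ lying along a horizontal edge of the adjacent front face $F$ and sharing only an endpoint with $r$. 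Consequently $\ell$ lies entirely on one side of the fence: it meets the closure of the near cuboid $\mathcal C$, but its intersection with the closure of the far cuboid $\mathcal C'$ is at most the single shared corner point, which is an excluded endpoint of the open guard. So $\ell\cap\overline{\mathcal C'}=\emptyset$, your reduction collapses, and your ``detached cuboid'' analysis (which only rules out cuts \emph{transverse to the fence}) does not touch this configuration, since $\mathcal C'$ abuts the fence right at $r$.

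The paper closes this case with an argument you would need to supply: it takes a point $x\in\ell$ at distance from $r$ smaller than the minimum positive coordinate gap between vertices of $\mathcal P$, and shows that $x$ sees all of $\mathcal C'$ \emph{through} the lateral face $F'\subset\mathcal Q$ adjacent to $r$, because the pyramid with apex $x$ and base $F'$ is contained in $\mathcal P$. That containment in turn rests on a structural fact about the construction --- the relative interior of $F'$ meets no part of $\partial\mathcal P$, since any such intersection would force a vertical reflex edge of $\mathcal P$ on $F'$ or on $r$, which Step~\ref{step2} has already eliminated. A segment from $x$ to a point of $\mathcal C'$ then decomposes into a piece inside this pyramid and a piece inside the convex set $\mathcal C'$, so visibility holds even though the guard never touches $\mathcal C'$. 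Without this (or an equivalent) argument, your proof establishes the lemma only for cuboids incident to the guard, which is strictly weaker than the statement.
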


\begin{proof}If a fence built in Step~\ref{step1} bounds a cuboid, then it belongs to a guard $\ell$ located, at least partially, on its border. Indeed, fences are vertical and the upper and lower faces of a cuboid belong to faces of $\mathcal P$, and therefore $\ell$ cannot lie entirely outside the cuboid. It follows that $\ell$ sees the whole cuboid.

Fences built in Step~\ref{step2} bound exactly two cuboids each, because the fences built in Step~\ref{step3} are all parallel to them. Again, any guard generating one such fence belongs to a common edge of the cuboids that it bounds, which of course are entirely visible to the guard. 

Also fences built in Step~\ref{step3} bound exactly two cuboids each. With the same notation as in the proof of Lemma~\ref{fences}, the fence generated by $r$ bounds a cuboid $\mathcal C$ that is also bounded by $F$, and therefore contains, at least partially, guard $\ell$ on one of the horizontal edges of $\mathcal C$ (again, $\ell$ is defined as in the proof of Lemma~\ref{fences}). $\ell$ also shares one endpoint with $r$. Moreover, the other cuboid $\mathcal C'$ is bounded also by a lateral face $F'\subset \mathcal Q$ adjacent to $r$. The interior of $F'$ has no intersection with the boundary of $\mathcal P$. Indeed, if the two had any intersection, then there would be also a vertical reflex edge of $\mathcal P$ lying in the interior of $F'$, or on $r$. But this disagrees with the construction in Step~\ref{step2}, which eliminates all such reflex edges. Let $x$ be a point in $\ell$ that is close enough to $r$ (e.g., whose distance to $r$ is lower than the minimum positive difference between any two coordinates of vertices of $\mathcal P$ (such minimum exists due to the finiteness of $\mathcal P$'s vertices). Of course, $x$ completely sees $\mathcal C$, because it lies on its boundary. But $x$ also sees every point in $\mathcal C'$, through the fence $F'$ (see Figure~\ref{fig:9}). Indeed, by our choice of $x$, the pyramid determined by $F'$ and $x$ is completely contained in $\mathcal P$.\end{proof}

\subsection*{Search strategy}

\begin{theorem}\label{orth}Any non-convex orthogonal polyhedron with an edge guard on each reflex edge is searchable.\end{theorem}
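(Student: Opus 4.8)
The plan is to lift the partition-tree search schedule of Theorem~\ref{heur} to the cuboidal decomposition built in the three-step fence construction, profiting from the far cleaner structure guaranteed by Lemmas~\ref{fences} and~\ref{cuboids}. First I would record the recursive history of the fences as a \emph{partition tree}, exactly as in Theorem~\ref{heur}: the root is $\mathcal P$, each internal node is an intermediate piece (an orthogonal prism after Step~\ref{step1}, progressively refined in Steps~\ref{step2} and~\ref{step3}), each arc is labelled by the fence that performed the corresponding split, and the leaves are the cuboids of the final partition. A non-disconnecting split, which merely lowers the genus of a piece, contributes one child; a disconnecting split contributes two. Since every cuboid is genus zero, the high-genus difficulties that confined the one-way sweep of Theorem~\ref{exhaustive} to simply connected instances do not arise here, and the schedule will work for a polyhedron of any genus.

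With the tree in hand, the schedule is a depth-first walk of it. Every guard not currently sweeping is held at the orientation that produces its fence, so by Lemma~\ref{fences} the whole cuboidal partition is continuously maintained as a configuration of illuminated searchplanes. Whenever the walk reaches a leaf $\mathcal C$ through an arc whose fence is generated by a guard $\ell$, I would sweep $\ell$ across $\mathcal C$ and back to its fence orientation. By Lemma~\ref{cuboids}, $\ell$ sees all of $\mathcal C$, so the sweep illuminates every point of $\mathcal C$; and because $\mathcal C$ is convex and $\ell$ lies on its boundary, the restriction of $\ell$ to $\mathcal C$ is filling in the sense of Theorem~\ref{one}, so every searchplane crossing the interior of $\mathcal C$ disconnects it. Hence the back-and-forth sweep clears $\mathcal C$, and on $\ell$'s return to the fence orientation the wall separating $\mathcal C$ from its sibling across that fence is re-established, so no recontamination leaks across it.

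Correctness then follows by induction on the partition tree, essentially verbatim from the proof of Theorem~\ref{heur}. As $\ell$ sweeps $\mathcal C$, every other guard stays at its fence orientation, so the only wall $\ell$ disturbs is its own fence; the only recontamination it can cause therefore lies across that fence, or in whatever pieces its rotating searchplane meets outside $\mathcal C$. The sibling sharing $\ell$'s fence is protected by the filling property above. Any other affected piece either corresponds to a node already fully processed by the walk---in which case the inductive hypothesis keeps it clear, for once its entire dangling subtree has been cleared it is bounded externally only by walls other than $\ell$'s current plane---or to a node not yet reached, in which case any accidental partial clearing is harmless, since its systematic clearing has not begun. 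The depth-first order guarantees that the node labelled $\mathcal C$ and any such sibling are never simultaneously \qq{in progress}, which is precisely the invariant driving the induction.

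The main obstacle I anticipate is the recontamination bookkeeping across the three distinct fence types. Step~\ref{step1} fences are vertical rectangles erected from horizontal reflex edges, while those from Steps~\ref{step2} and~\ref{step3} are $y$-orthogonal and bound exactly two cuboids each; one must check that the \qq{restriction to a cuboid is filling} property and the disjointness of sibling interiors hold uniformly across all three types, so that the inductive step transfers without a gap. Once this coherence is in place---and Lemmas~\ref{fences} and~\ref{cuboids} essentially package it---the schedule clears every cuboid and therefore all of $\mathcal P$, establishing searchability.
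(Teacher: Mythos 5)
Your schedule breaks on the cuboids that a guard sees only \emph{through a window}. In the proof of Lemma~\ref{cuboids}, the fence generated by a vertical reflex edge $r$ in Step~\ref{step3} bounds two cuboids: one, $\mathcal C$, has (part of) the generating guard $\ell$ on its boundary, but the other, $\mathcal C'$, is seen by $\ell$ only through the lateral face $F'$ of the surrounding prism (Figure~\ref{fig:9}); $\ell$ does not lie on $\partial\mathcal C'$. When your depth-first walk reaches the leaf $\mathcal C'$ and sweeps $\ell$ across it, the searchplanes of $\ell$ restricted to $\mathcal C'$ are truncated by the window $F'$ and do \emph{not} disconnect $\mathcal C'$, so your justification (``$\mathcal C$ is convex and $\ell$ lies on its boundary, hence the restriction of $\ell$ to $\mathcal C$ is filling'') fails precisely there, and the intruder can walk around the light. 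A second, structural problem is that the three-step fence construction is not recursive: Step~\ref{step1} fences are cast from every horizontal reflex edge all the way to $\mathcal P$'s boundary, and two such fences can cross each other, so there is no partition tree ``exactly as in Theorem~\ref{heur}'' whose arcs are labelled by single fences; imposing an artificial recursive order would truncate later fences at earlier ones and yield a different (coarser) partition than the one for which Lemmas~\ref{fences} and~\ref{cuboids} are actually proved.

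The paper sidesteps both difficulties by clearing neither one cuboid at a time nor in any tree-driven order. Each guard $\ell$ is activated exactly once, in an \emph{arbitrary} order; when activated, it clears in a single sweep the \emph{union} $\mathcal Q$ of all the cuboids bounded by its fence. That union is a connected polyhedron, entirely contained in $\mathcal V(\ell)$ by Lemma~\ref{cuboids}, with $\ell$ on its boundary, so Theorem~\ref{one} applies to $\mathcal Q$ directly and the window problem evaporates: the searchplanes of $\ell$ disconnect $\mathcal Q$ even though they do not disconnect $\mathcal C'$ alone. While $\ell$ sweeps, every cuboid not bounded by $\ell$'s fence keeps all of its bounding fences illuminated by the other guards and so cannot be recontaminated, and every cuboid bounded by $\ell$'s fence is clear when $\ell$ returns to its fence position. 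Since $\mathcal P$ is connected and non-convex, every cuboid is bounded by at least one fence, so one pass over the guards clears $\mathcal P$. To salvage your per-leaf sweep you would at minimum have to merge all cuboids sharing a fence into a single unit cleared by one application of Theorem~\ref{one}---which is exactly the paper's proof, with the tree and the induction discarded as unnecessary.
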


\begin{proof}Aim the guards as described above, in such a way that every fence is illuminated by some guard, by Lemma~\ref{fences}. Clearly, illuminated searchplanes induce the same partition of $\mathcal P$ into cuboids (perhaps even a finer partition). Now pick a guard $\ell$ generating a fence $F$, and let $\mathcal Q$ be the union of the cuboids bounded by $F$. Since $F$ is connected and has cuboids on both sides, it follows that $\mathcal Q$ is connected as well, and therefore it is a polyhedron. Moreover, by Lemma~\ref{cuboids}, $\mathcal Q$ entirely belongs to $\mathcal V(\ell)$. Hence, by Theorem~\ref{one}, $\mathcal Q$ can be cleared by $\ell$ while all the other guards keep their searchlights fixed. Turn $\ell$ to clear $\mathcal Q$, put $\ell$ back in its original position, and repeat the procedure for all the other guards, one at a time. Notice that every turning guard clears all the cuboids that it bounds, while the other cuboids cannot be recontaminated, because their boundaries remain fixed. Since $\mathcal P$ is both connected and non-convex, every cuboid is bounded by at least one fence, which in turn is generated by some guard. Thus, after the last guard has finished sweeping, $\mathcal P$ is completely clear.\end{proof}

\subsection*{Improving search time}

The search time of the schedule given by Theorem~\ref{orth} is linear in the number of reflex edges of $\mathcal P$, but once again we can achieve constant search time by doubling the number of guards.

\begin{corollary}\label{speed2}Any non-convex orthogonal polyhedron with two (coincident) edge guards on each reflex edge is searchable in $3/4$ seconds.\end{corollary}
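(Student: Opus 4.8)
The plan is to recycle the fence construction and cuboid partition from the proof of Theorem~\ref{orth}, and then apply the same guard-doubling idea used in Theorem~\ref{speed1} to replace the sequential, one-guard-at-a-time sweep by a single simultaneous sweep. Concretely, I would place two coincident guards on every reflex edge, calling them the \emph{holder} and the \emph{sweeper} of that edge, and aim both exactly as in Theorem~\ref{orth} (horizontal guards vertically, vertical guards in the left-right direction). The holders never move: they keep every fence permanently illuminated, so the partition of $\mathcal P$ into cuboids induced by the fences (Lemma~\ref{fences}) is maintained for the entire duration of the schedule. The sweepers are then all turned at once, each from its leftmost to its rightmost position.

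The first thing to verify is that this simultaneous motion clears $\mathcal P$ without recontamination. Here the holders do the crucial work: since every fence is held lit at all times, the boundary of each cuboid---made up of faces of $\mathcal P$ together with fences---stays sealed throughout, so no contamination can ever pass between two distinct cuboids. This reduces the analysis to each cuboid in isolation. Fix a cuboid $\mathcal C$; by Lemma~\ref{cuboids} it is bounded by at least one fence, which belongs to some guard $g$, and $g$ sees all of $\mathcal C$. Restricted to the convex set $\mathcal C$, the sweeper of $g$ is a boundary guard every one of whose searchplanes is filling, so by Theorem~\ref{one} its single left-to-right sweep clears $\mathcal C$. Because $\mathcal C$ stays sealed, once cleared it remains clear. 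As every cuboid is handled in this way, and $\mathcal P$ is connected and non-convex so that each cuboid is bounded by some fence, the whole polyhedron is clear once all sweepers reach their rightmost positions.

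It remains to bound the time, and this is where orthogonality pays off. Each guard sits on a reflex edge, whose internal dihedral angle is $270^\circ$; hence the only blind directions are those aiming into the $90^\circ$ exterior wedge, and the non-blind range between the leftmost and rightmost positions spans exactly $3\pi/2$ radians. A sweeper therefore turns through at most $3\pi/2$ radians, and at the prescribed maximal angular speed of $2\pi~\mathrm{rad}/\mathrm{sec}$ this takes at most $(3\pi/2)/(2\pi)=3/4$ seconds. Since all sweepers move simultaneously, the total search time is at most $3/4$ seconds.

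I expect the main obstacle to be the recontamination argument for the simultaneous sweep, rather than the timing. The subtle point is that, unlike the sequential schedule of Theorem~\ref{orth}, many fences are ``open to sweeping'' at once; the resolution is precisely that the duplicate guards let the holders keep every fence permanently lit, so that the global partition is rigid and the problem decouples into independent convex cuboids, each cleared by one monotone sweep as in Theorem~\ref{one}. A secondary point worth checking is that starting the holders at their fence positions (rather than at leftmost) causes no trouble: a held fence only adds illumination and never obstructs a sweeper, so Theorem~\ref{one} still applies cuboid by cuboid.
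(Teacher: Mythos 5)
Your proposal is correct and follows essentially the same route as the paper: half the guards are frozen to preserve the fence-induced partition into cuboids, while the other half sweep simultaneously from leftmost to rightmost, each such sweep clearing the cuboids it bounds via Lemma~\ref{cuboids} and Theorem~\ref{one}, and the $3/4$-second bound comes from the $3\pi/2$ sweep range of a guard on a reflex edge. The paper states this much more tersely, but your added detail (holders sealing every cuboid so the analysis decouples, and the note that holders resting at fence positions rather than at leftmost causes no harm) is consistent with, and a faithful elaboration of, the intended argument.
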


\begin{proof}Half of the guards are positioned as in the proof of Theorem~\ref{orth} and never move, thus preserving the partition into cuboids. The other guards simultaneously sweep their visibility region by turning from their leftmost position to the rightmost. Since every guard lies on a reflex edge, they all have to turn by an angle of $3/2~\pi~\mathrm{rad}$, which can be done in $3/4$ seconds.\end{proof}
\chapter{Complexity of full searching}\label{chapter10}
\begin{chapterabstract}
We consider the computational complexity of deciding if a given instance of the $\TSSPext$ is searchable.

First we show that the problem is strongly \NP-hard for general polyhedra and boundary guards. Then we prove that, even if we are given a searchable orthogonal polyhedron, approximating the minimum search time within a certain constant factor is \NP-hard. As a consequence, no \PTAS exists for this problem, unless $\P=\NP$.

No analogous result is known for the planar \SSPext.
\end{chapterabstract}

\section{\NP-hardness of searchability}\label{complexity}

Recall from Chapter~\ref{chapter1} that the computational complexity of 2-dimensional \SSP is still unknown, and left as an open question by Obermeyer et al. in~\cite{bullo}.

Next we prove that \TSSP\ is strongly \NP-hard by a reduction from \TSAT, thus converting a formula $\varphi$ in 3-conjunctive normal form into an instance of \TSSP\ that is searchable if and only if $\varphi$ is satisfiable.

\subsection*{Building blocks}

A \emph{variable gadget} is a cuboid with a guard lying on one edge. The two faces adjacent to the guard are called \emph{A-side} and \emph{B-side}, respectively, as shown in Figure~\ref{fig:10a}. The guard itself is called \emph{variable guard}.

\begin{figure}[h]
\centering
\subfigure[variable gadget]{\label{fig:10a}\includegraphics[scale=0.8]{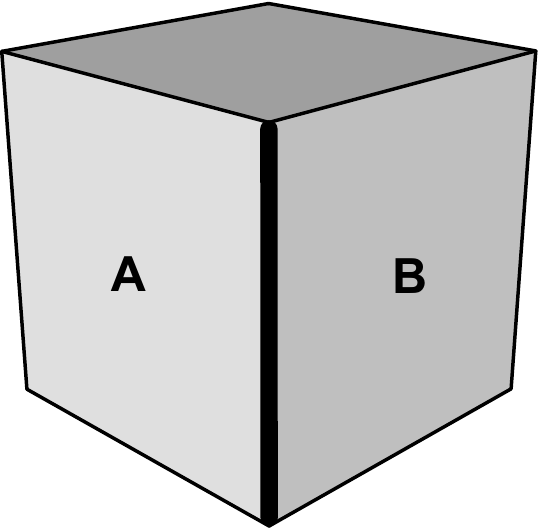}}\qquad \qquad
\subfigure[link]{\label{fig:10b}\includegraphics[scale=0.8]{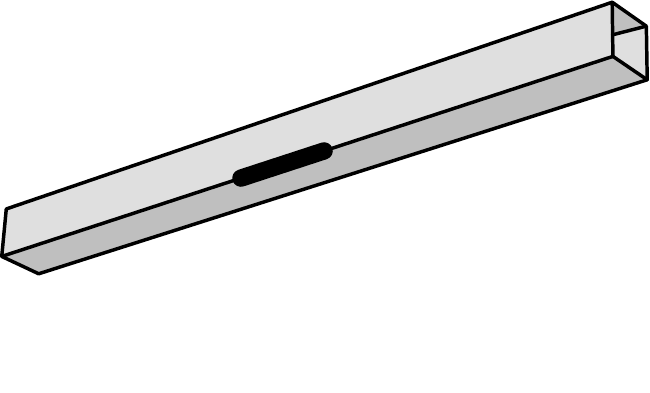}}
\caption{Two building blocks of the reduction.}
\label{fig:10}
\end{figure}

A \emph{clause gadget} is made of several parts, shown in Figure~\ref{fig:11}. The first part is a prism, shaped like a wide cuboid with three \emph{cavities} on one side. On its top there is a \emph{nook} shaped as a triangular prism lying on a side face, with a long guard on the upper edge. The guard is called \emph{separator}, because its searchplanes partition the clause gadget in two regions. One of the two regions contains none of the three cavities, and its top face is called \emph{A-side}. On the other hand, the back face of each cavity is a \emph{B-side}, and a \emph{literal guard} lies on the top edge of each B-side. When any literal guard is aiming at the A-side of its clause gadget, it also completely closes the nook containing the separator. We define the leftmost position of the separator to be the one that is closer to the A-side. All three cavities are then pairwise connected by V-shaped prisms with vertical bases. When two literal guards from the same clause gadget are both aiming at their B-sides, their searchplanes intersect in an area around the bottom of the V: such area is called \emph{C-side}. Thus, every clause gadget has three B-sides and three C-sides, all coplanar.

The A-sides, B-sides and C-sides of all the gadgets are collectively referred as \emph{distinguished sides}.

\begin{figure}[h]
\centering
\includegraphics[scale=1]{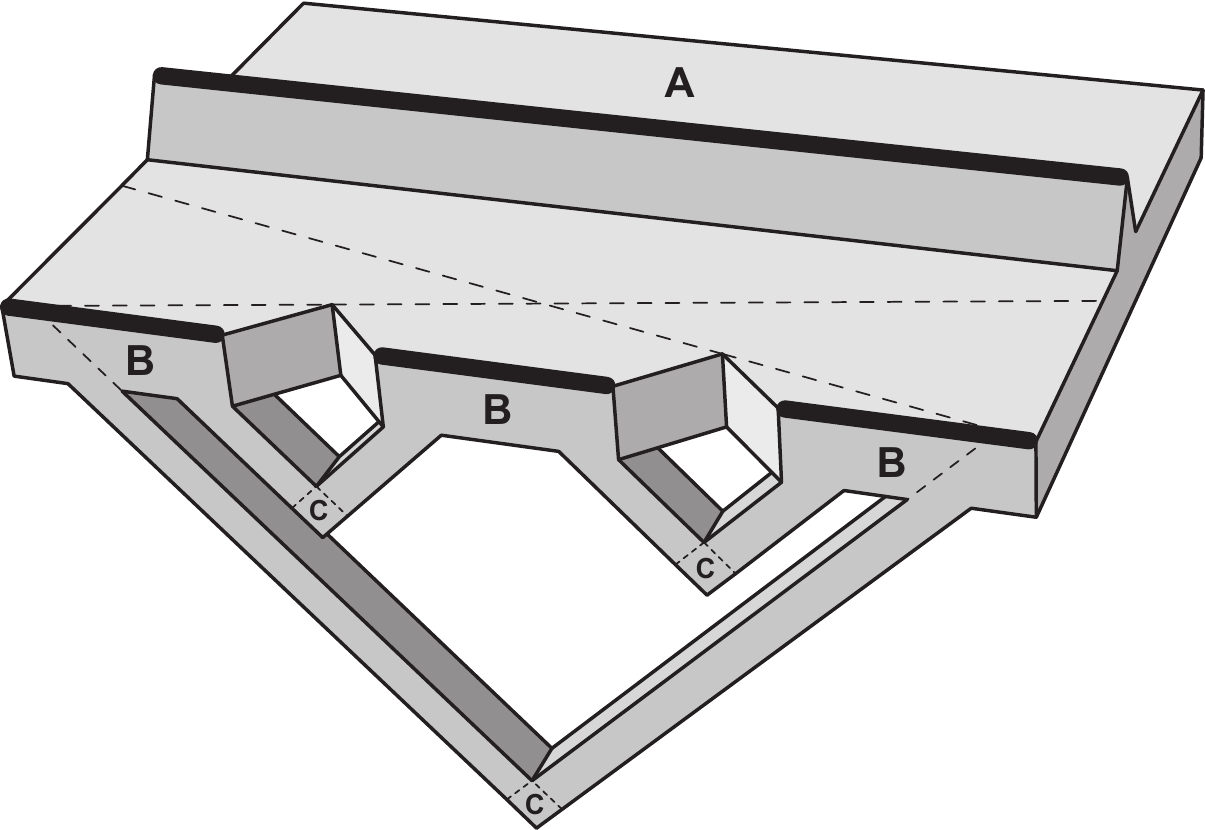}
\caption{Clause gadget.}
\label{fig:11}
\end{figure}

To connect together all the different gadgets we use structures called \emph{links}. A link is a very thin prism with its two bases removed, and with a short \emph{link guard} lying in the middle of an edge (see Figure~\ref{fig:10b}). In the following, whenever we wish to connect two gadgets, we will cut a hole in their surfaces (where indicated from time to time), and place a link stretching from one hole to the other. The angle of incidence does not matter, as long as no guard, other than its link guard, can see inside a link. Conversely, we will arrange the links in such a way that every link guard's searchlight will not interfere with the gadget guards. So, in every gadget there will be a thin illuminated polygon jutting from each of its links, which will be easily avoidable by the intruder.  As a consequence, a link can be cleared by its guard only while both its bases are \emph{capped} by some guards lying in the adjacent gadgets.

Given a Boolean formula $\varphi$ in 3-conjunctive normal form, we construct a row of variable gadgets, one for every variable of $\varphi$, and a row of clause gadgets, one for every clause of $\varphi$. We arrange the variable gadgets so that all the A-sides are coplanar, and all the B-sides are coplanar. We arrange the clause gadgets similarly, and we place the two rows of gadgets in such a way that every distinguished side of every variable gadget can see every distinguished side of every clause gadget. We also associate the $i$-th B-side of a clause gadget to the $i$-th literal in the corresponding clause of $\varphi$, for $1 \leqslant i \leqslant 3$.

Finally we add a \emph{bridge}, whose purpose is to prevent variable gadgets from being cleared if the corresponding literal guards do not behave ``coherently'', as explained later in the proof of Theorem~\ref{hard}. The bridge is constructed like a variable gadget, but it is not associated to any variable of $\varphi$. It is shaped as a long, thin pole, whose guard lies on one of the long edges, and it is arranged in such a way that its distinguished sides can see the B-side of every clause gadget.

\subsection*{Connections}

Then we connect the distinguished sides of our gadgets by placing links. The exact points of connection and angles of incidence do not matter, as long as guards are non-interfering, as specified earlier, when introducing links. Mutual intersections of links will be resolved later.

Referring to Figures~\ref{fig:13} and~\ref{fig:12}, we make the following connections.
\begin{itemize}
\item Connect the A-side of every clause gadget to both the A-side and the B-side of every variable gadget.
\item Connect all the B-sides of every clause gadget to both the A-side and the B-side of the bridge.
\item Connect all the C-sides of every clause gadget to both the A-side and the B-side of every variable gadget.
\item Connect each B-side of each clause gadget to the A-side (resp.\ B-side) of the variable gadget corresponding to its associated literal, if that literal is negative (resp.\ positive) in $\varphi$.
\end{itemize}

\begin{figure}[h]
\centering
\includegraphics[scale=0.6]{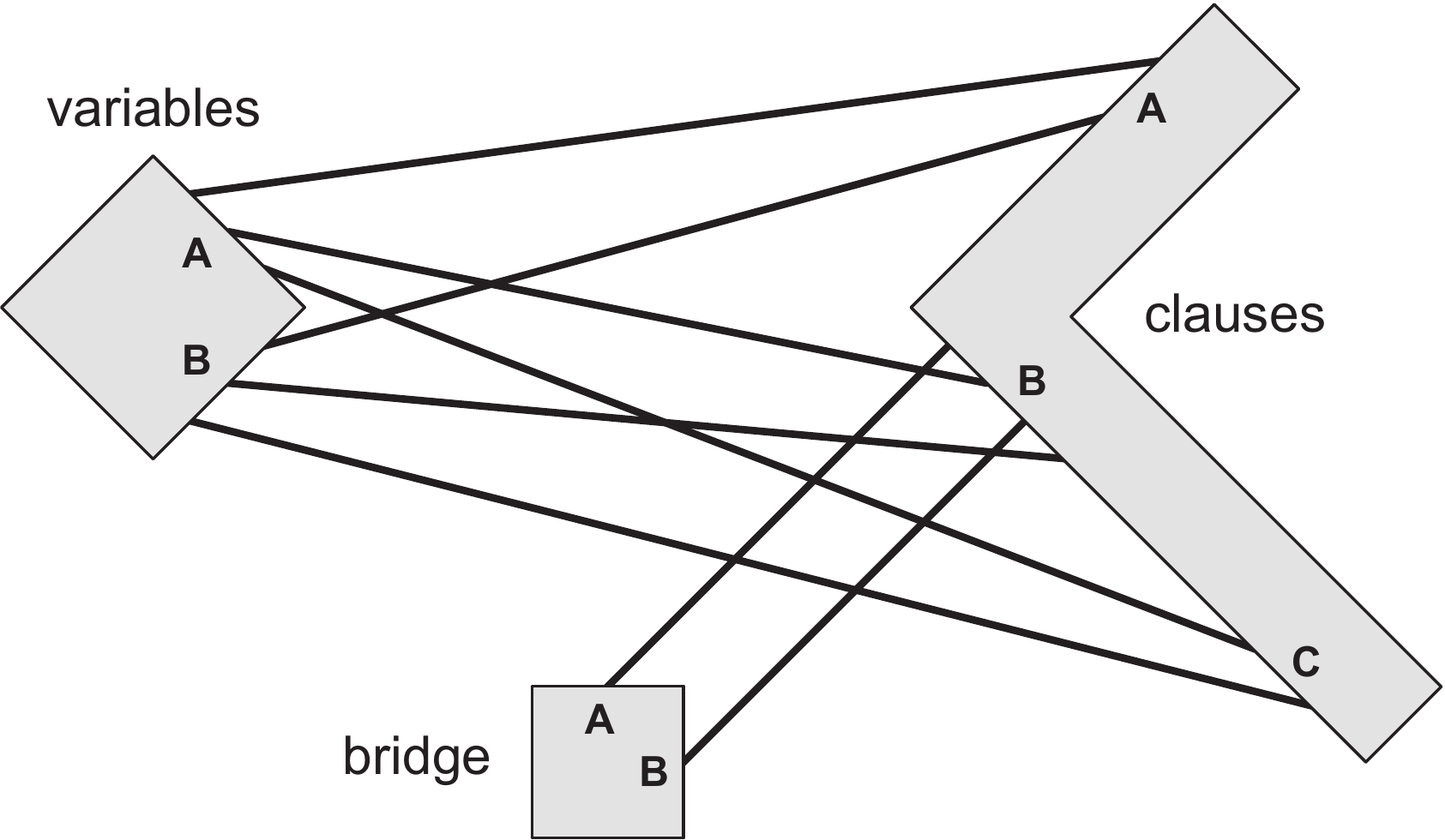}
\caption{Relative positions of the gadgets and the bridge, with their links.}
\label{fig:13}
\end{figure}

We can easily position the bridge so that it is not accidentally hit by any link running between a variable gadget and a clause gadget, such as in Figure~\ref{fig:13}. We also want links to be pairwise disjoint. To achieve this, we consider any pair of intersecting links, and shrink them while translating them slightly, until their intersection vanishes. This can be accomplished without creating new intersections with other links, for example by making sure that the \textquotedblleft new version\textquotedblright\ of each link is always strictly contained in its \textquotedblleft previous version\textquotedblright.

\begin{figure}[h]
\centering
\subfigure[clause gadget, top view]{\label{fig:12a}\includegraphics[scale=0.8]{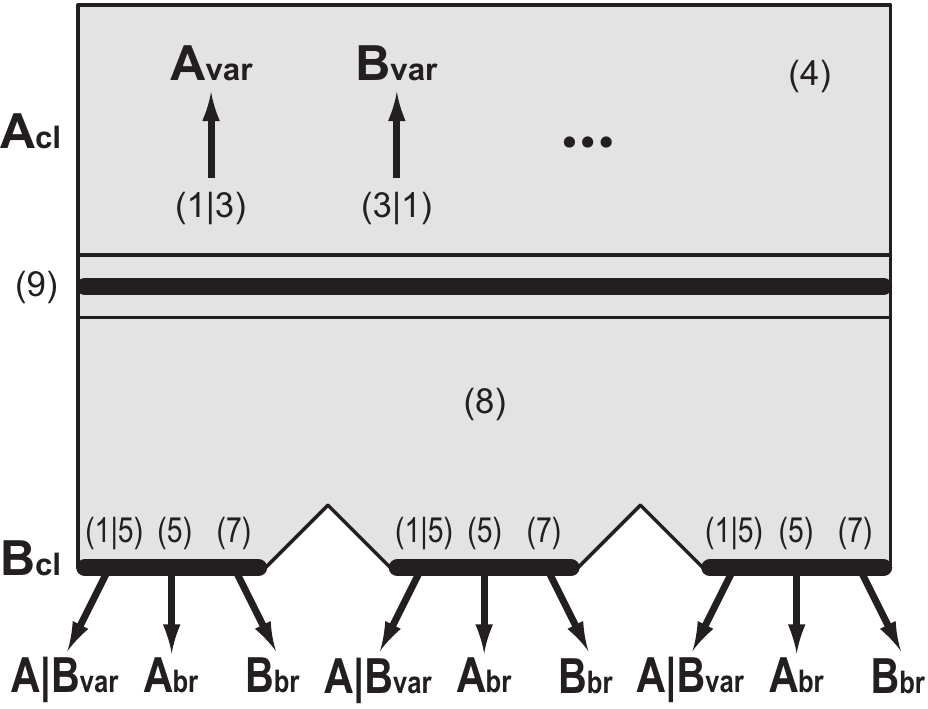}}\qquad
\subfigure[clause gadget, C-side]{\label{fig:12b}\includegraphics[scale=0.8]{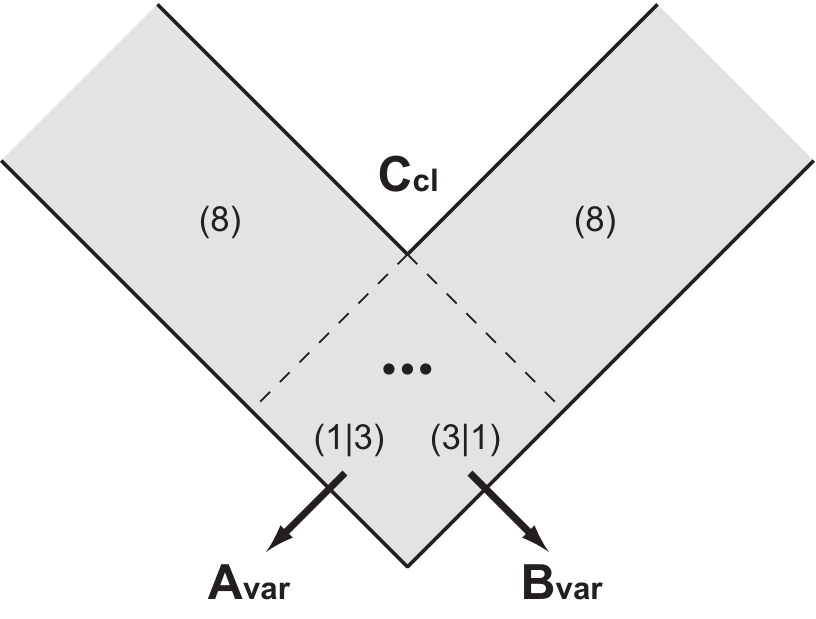}}\vspace{0.5cm}
\subfigure[variable gadget]{\label{fig:12c}\includegraphics[scale=0.8]{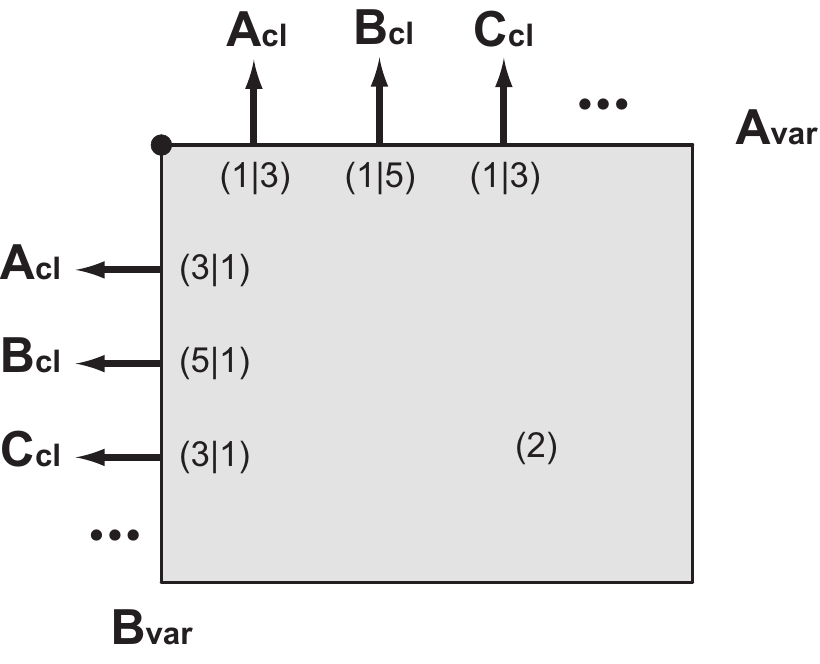}}\qquad \qquad
\subfigure[bridge]{\label{fig:12d}\includegraphics[scale=0.8]{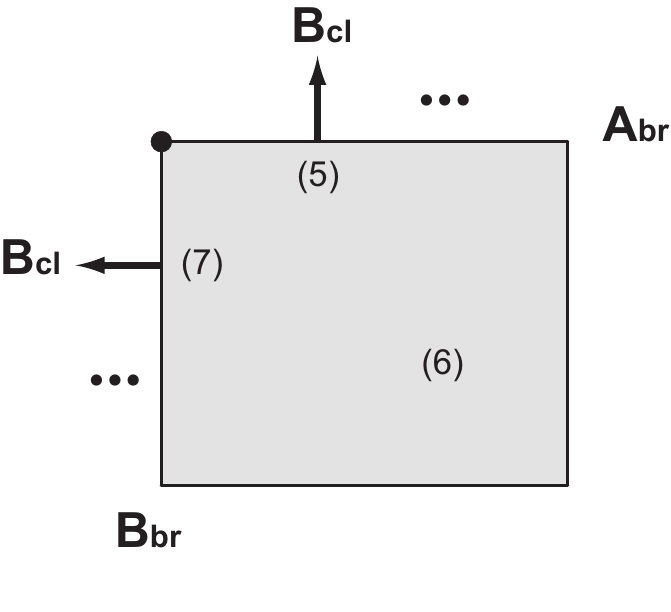}}
\caption{Connections among gadgets, and their clearing order given in Theorem~\ref{hard}. The abbreviation \textquotedblleft $\mathrm{A_{cl}}$\textquotedblright\ stands for \textquotedblleft A-side of a clause gadget\textquotedblright, etc.}
\label{fig:12}
\end{figure}

\subsection*{Reduction}

\begin{theorem}\label{hard}The \TSSPext\ is strongly \NP-hard.\end{theorem}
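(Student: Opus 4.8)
The plan is to establish a Karp reduction from \TSAT to \TSSP, showing that the constructed instance is searchable if and only if the input formula $\varphi$ is satisfiable. Since membership concerns only hardness here, I would focus entirely on correctness of the reduction described above, proving both directions of the biconditional and arguing that the construction has polynomial size (so that strong \NP-hardness follows, all coordinates being representable with polynomially many bits).

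First I would prove the \emph{completeness} direction: if $\varphi$ is satisfiable, the instance is searchable. Given a satisfying truth assignment, I would exhibit an explicit search schedule, following the clearing order suggested in Figure~\ref{fig:12}. The key idea is that each variable guard can be aimed at either its A-side or its B-side according to the truth value assigned to the variable, and this choice determines which links and distinguished sides get \emph{capped}. Using Theorem~\ref{one} and the filling-guard machinery of Chapter~\ref{chapter8}, I would argue that once the appropriate sides are held fixed, the separator and literal guards in each clause gadget can sweep their regions without recontamination, because a satisfied clause guarantees that at least one literal guard aims at its B-side in agreement with the variable assignment, so the nook and the C-sides are properly closed. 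I would clear the gadgets in a carefully chosen sequence—variable gadgets capping first, then bridge, then clause cavities—verifying that every link is cleared only while both its bases are capped by adjacent gadget guards, and that the thin illuminated polygons from links never block the needed sweeps.

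Next I would prove \emph{soundness}: if the instance is searchable, then $\varphi$ is satisfiable. This is the harder direction and the main obstacle. The strategy is to extract a consistent truth assignment from any successful schedule by examining, at the critical moment when each variable gadget's interior is finally cleared, whether its variable guard is committed to covering the A-side or the B-side. The crucial invariant to establish is \emph{coherence}: a variable guard cannot simultaneously help clear a positive-literal connection and a negative-literal connection, because the links force mutually exclusive capping commitments, and the bridge prevents the literal guards from ``cheating'' by momentarily covering both B-sides of conflicting polarity. I would argue that because searchplanes of boundary guards in this 3-dimensional model may fail to disconnect the polyhedron (the phenomenon exploited throughout Section~\ref{basic}), the only way to avoid global recontamination is to maintain a static partition via capped distinguished sides, which pins down each variable guard to a single truth value for the entire relevant phase. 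The reading of these commitments then yields a well-defined assignment, and the fact that every clause gadget was cleared forces at least one of its three literal guards into the B-side position consistent with that assignment, so every clause is satisfied.

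The delicate part of soundness is ruling out ``clever'' schedules that interleave guard motions in unexpected ways—for instance, temporarily uncapping a side, partially clearing a region, and recapping. I would handle this by a recontamination argument: the moment any distinguished side is uncovered while an adjacent region is only partly clear, that region's contamination spreads back through the undisconnected polyhedron, so no progress survives unless the commitments are globally consistent throughout. Invoking the non-interference property of link guards ensures these local arguments compose into a global one. Finally I would verify that the entire construction—$O(n)$ variable gadgets, $O(m)$ clause gadgets, one bridge, and $O(nm)$ links—has polynomially bounded size and integer (or small-denominator rational) coordinates, so the reduction is computable in polynomial time and establishes \emph{strong} \NP-hardness, completing the proof.
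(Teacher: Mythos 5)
Your overall plan is the same as the paper's: a polynomial-size Karp reduction from \TSAT using the variable gadgets, clause gadgets, links and bridge, with an explicit nine-phase schedule for the completeness direction. That half of your proposal is sound in outline and matches the paper's proof of Theorem~\ref{hard} essentially step for step.

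The soundness direction is where you have a genuine gap. The paper's argument does not read a truth assignment off the variable guards at the moments their gadgets are cleared; it observes that the A-side and all three C-sides of \emph{every} clause gadget are linked to \emph{both} sides of \emph{every} variable gadget, so the instant any of these is uncovered, all variable gadgets are recontaminated at once. Hence any schedule that hopes to have all variable gadgets simultaneously clear at the end must commit, once and for all, to exactly one literal guard per clause covering its A-side (the other two covering the C-sides), and the extracted assignment is ``the selected literals are true.'' The contradiction for an incoherent selection is then a specific non-illuminated path: the A-side of the variable gadget for $x$ is linked to the B-side of the cavity of the selected negative occurrence, both cavity B-sides are linked to the bridge, and the B-side of the cavity of the selected positive occurrence is linked to the B-side of the same variable gadget; since both selected literal guards are away from their cavity B-sides, this path permanently connects the two sides of that variable gadget through contaminated territory, so it can never be cleared. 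Your proposal misdescribes this mechanism in two ways: you state that clause satisfaction corresponds to a literal guard ``forced into the B-side position'' (in the construction the \emph{true} literal's guard covers the A-side, and the two guards on B-sides correspond to the literals whose truth value is irrelevant), and you describe the bridge as preventing literal guards from ``momentarily covering both B-sides of conflicting polarity,'' whereas its actual role is to join all cavity B-sides so that a \emph{permanently} incoherent selection creates the fatal path above. Without identifying the forced static selection and the explicit contaminating path, the claim that ``mutually exclusive capping commitments'' pin down a consistent assignment does not follow, and the ``clever interleaving'' schedules you worry about are not actually excluded by your argument.
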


\begin{proof}Given an instance $\varphi$ of \TSAT, we construct the instance of \TSSP\ described above. It is indeed a polyhedron, because the bridge and all the variable gadgets are connected to all the clause gadgets. Moreover, the number of links is quadratic in the size of $\varphi$, and we may also assume that the coordinates of every vertex are rationals, with a number of digits that is polynomial in the size of $\varphi$. Removing the intersections between links takes polynomial time as well, hence the whole construction is computable in \P.

\paragraph{Positive instances.}

If $\varphi$ is satisfiable, we choose a satisfying assignment for its variables and give a search schedule that clears our polyhedron. Initially we aim each variable guard at its A-side (resp.~B-side) if the corresponding variable is true (resp.~false) in the chosen assignment. By assumption there is at least a true literal in every clause of $\varphi$. For each clause, we pick exactly one true literal, and aim the corresponding literal guard at the A-side of its clause gadget. We aim all the other literal guards at their respective B-sides. As a result, the A-side and the three C-sides of every clause gadget have the corresponding links capped by the literal guards. Finally, we aim the bridge guard at its A-side and put every separator in its leftmost position.

From this starting configuration, we specify a search schedule in nine steps (refer to Figure~\ref{fig:12}).
\begin{enumerate}
\item\label{stepp1} Clear all the links that are capped by some variable guard. This is possible because, by construction, the other end of every such link is capped by some literal guard as well.
\item\label{stepp2} Clear every variable gadget by turning its guard. While this happens, the literal guards retain caps on their own side of the links cleared during Step~\ref{stepp1}, thus preventing recontamination.
\item Clear the remaining links connected to the A-side or to a C-side of a clause gadget. This is now possible because all the variable guards switched side in Step~\ref{stepp2}.
\item Aim at its B-side every literal guard that is currently aiming at its A-side. One half of each clause gadget gets cleared as a result, while the separator prevents the still uncapped links on the B-side from recontaminating the clear links on the A-side.
\item Clear the remaining links connected to a variable gadget, and clear the links capped by the bridge guard.
\item Clear the bridge by turning its guard to the B-side.
\item Clear the remaining links that connect the bridge with the clause gadgets.
\item Turn all the literal guards simultaneously, thus clearing the last half of each clause gadget, and capping the upper nooks. Since the three literal guards of a clause gadget are collinear, when they move together they act as a single filling guard.
\item Clear every nook by turning the separators.
\end{enumerate}
When this is done, the whole polyhedron is clear, which proves that the instance of \TSSP\ is searchable.

\paragraph{Negative instances.}

Conversely, assuming that $\varphi$ is not satisfiable, we claim that the variable gadgets can never be all simultaneously clear, no matter what the guards do.

Recall that every A-side of every clause gadget is linked to both sides of every variable gadget. Hence, as soon as the A-side of any clause gadget is not covered by at least one literal guard, all the variable gadgets get immediately recontaminated, unless they were all clear in the first place. For the same reason, no variable gadget can ever be cleared while the A-side of some clause gadget is uncovered. Similarly, if a C-side of any clause gadget is not covered by at least one literal guard, all variable gadgets get recontaminated, and none of them can be cleared.

It follows that, in order for a schedule to start clearing any variable gadget, it must ensure that each clause gadget has exactly one literal guard covering the A-side and exactly two literal guards covering the C-sides. Moreover, the literal guards that cover the A-sides must be chosen once and for all. Indeed, whenever a schedule attempts to \textquotedblleft switch gears\textquotedblright\ in some clause gadget and cover the A-side with a different literal guard, all the variable gadgets become immediately contaminated, and the search must start over.

Suppose that a schedule selects exactly one literal guard for each clause gadget, to cover its A-side. Since $\varphi$ is not satisfiable, there exist two selected literal guards $\ell_1$ and $\ell_2$ whose corresponding literals in $\varphi$ are a positive occurrence and a negative occurrence of the same variable $x$. Otherwise, if all selected literals were \emph{coherent}, setting them to true would yield a satisfying assignment for the variables of $\varphi$, which is a contradiction.

But in this case, it turns out that the variable gadget corresponding to variable $x$ is impossible to clear. Indeed, there is a non-illuminated path connecting its A-side with its B-side, passing through the B-side of $\ell_1$, the bridge, and the B-side of $\ell_2$. Since $\ell_1$ and $\ell_2$ correspond to incoherent literals, their B-sides are connected to opposite sides of the same variable gadget, by construction.

Summarizing, all the variable gadgets are initially contaminated. In order to clear some of them, a schedule must first select a literal guard from each clause gadget and put it on its A-side. While that position is maintained, there is at least one variable gadget that is impossible to clear. As soon as one literal guard is moved, all the variable gadgets get recontaminated again. It follows that the variable gadgets can never be all clear at the same time, and in particular the polyhedron is unsearchable.

We gave a polynomial time reduction from \TSAT\ to \TSSP\ whose generated numerical coordinates are polynomially bounded in size, hence \TSSP\ is strongly \NP-hard.\end{proof}

\subsection*{Optimization problems}

Obviously, the previous theorem implies that the problems of minimizing search time and minimizing \emph{total angular movement} are both \NP-hard to approximate (where ``unsearchable'' translates to ``infinite search time and angular movement'').

As we will show in the next section, the problem of minimizing search time stays \NP-hard even when restricted to searchable instances. On the other hand, similar results can be obtained also for the problem of minimizing total angular movement. There are several ways to rearrange the links in the construction employed in Theorem~\ref{hard}, so that the unsatisfiable Boolean formulas are mapped into polyhedra that are indeed searchable, but only with very demanding schedules.

\section{Inapproximability of search time}

Let the \TTSSPext (\TTSSP) be the optimization problem of minimizing search time in a given instance of \TSSP.

Next we show a reduction from \TSAT to \TTSSP. Our construction will transform a formula $\varphi$ in 3-conjunctive normal form into an orthogonal polyhedron $\mathcal P$ and a set of guards lying on $\mathcal P$'s edges, each with maximum angular speed of $90^\circ / \sec$, such that $\mathcal P$ can be searched within three seconds if and only if $\varphi$ is satisfiable. Moreover, even if $\varphi$ is not satisfiable, $\mathcal P$ is still searchable. As a consequence, \TTSSP stays hard even under the promise that the instance is searchable.

\begin{remark}
In contrast with Chapter~\ref{chapter9}, where guards turned their searchlights at $2\pi~\mathrm{rad} / \mathrm{sec}$, here our guards are slower. This has the only purpose of yielding more ``pleasant'' numeric values, but has no influence on the substance of our results.
\end{remark}

\subsection*{Building blocks}

We use the same type of link depicted in Figure~\ref{fig:10b}, i.e., a thin ``uncapped'' cuboid, with a short guard in the middle. Clearing a link requires one second, and it is possible if and only if both ends remain capped by some external searchplanes. A \emph{room} is a cube with a guard lying on some edge, much like a variable gadget in our previous construction. Rooms and links can be attached together to form a \emph{chain}, in such a way that the guard in each room has links on both its adjacent faces, as depicted in Figure~\ref{fig:link}.
\begin{figure}[h]
  \centering
  \includegraphics[width=.6\linewidth]{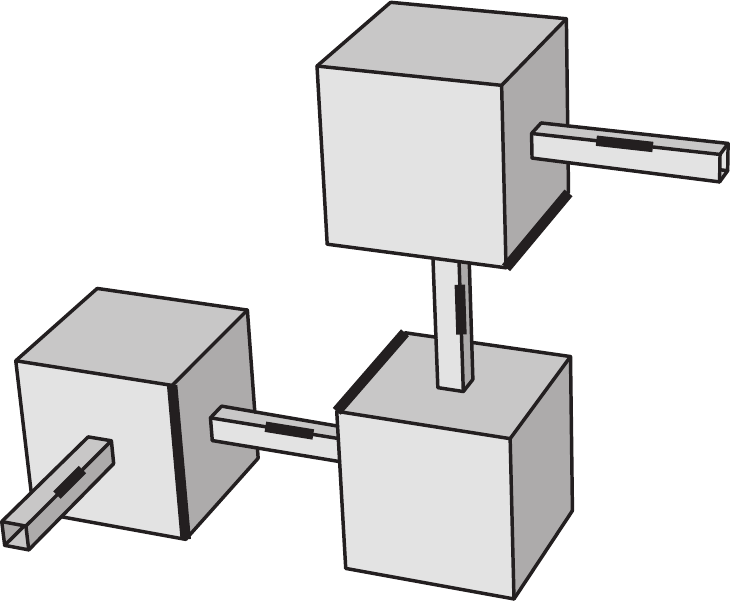}
  \caption{Odd chain made of rooms and links.}
  \label{fig:link}
\end{figure}

If a schedule is going to clear a chain in three seconds, it has to turn all the guards' searchlights in every room during second~2, while during seconds~1 and~3 they have to remain on their sides, in order to cap the surrounding links. As a consequence, the links in a chain must be cleared at second~1 and second~3, alternately. A chain is called \emph{even chain} or \emph{odd chain}, depending on the number of its rooms. Chains will connect \emph{variable gadgets} with the corresponding \emph{clause gadgets}, described next. (Note that we are referring to our ``new'' variable and clause gadget constructions, not the ones used in the previous section.)

To represent a Boolean variable $x$, we use a pair of rooms $\mathcal C_x$ and $\mathcal C'_x$, each with a \emph{fake link} on its left side, as Figure~\ref{fig:variable} suggests. The far end of each fake link is capped, as they are required for synchronization purposes only. For each occurrence of $x$ in $\varphi$, we attach an even chain to the right side of $\mathcal C_x$, and an even chain (resp.\ odd chain) to the right side of $\mathcal C'_x$ if the occurrence of $x$ is positive (resp.\ negative). Both chains will be connected to the same \emph{valve} of the proper clause gadget, as described later. We will say that variable $x$ is assigned the value \emph{true} if and only if both its guards turn their searchlights in the same direction (clockwise or counterclockwise) during second~2. Notice that the truth value of a variable gadget is well-defined in every schedule that clears it in three seconds.

\begin{figure}[h]
  \centering
  \includegraphics[width=.75\linewidth]{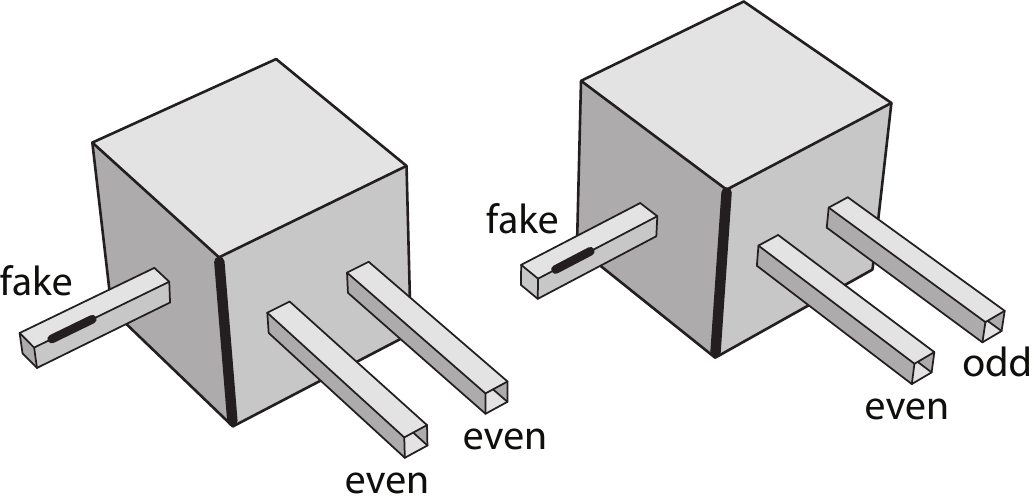}
  \put(-239.21,67.11){$x$}
  \put(-218.68,79.34){$\neg x$}
  \put(-85.66,88.03){$x$}
  \put(-67.89,99.87){$\neg x$}
\large
  \put(-249.32,105){$\mathcal C_x$}
  \put(-97.53,123.95){$\mathcal C'_x$}
  \caption{Boolean variable with two occurrences: one positive and one negative.}
  \label{fig:variable}
\end{figure}

For each clause in $\varphi$, we construct a \emph{clause gadget}, consisting of a \emph{3-input OR gate} and three \emph{valves} attached to it. Each valve corresponds to an occurrence of a variable, and is then attached to the proper chain pair coming from that variable.

An \emph{OR gate} is a cube with three small holes and two guards, such that the guards can simultaneously close at most two holes with their searchlights (see Figure~\ref{fig:or}). Its clearing time is one second, provided that recontamination through holes is avoided, via external searchplanes.
\begin{figure}[h]
  \centering
  \includegraphics[width=.5\linewidth]{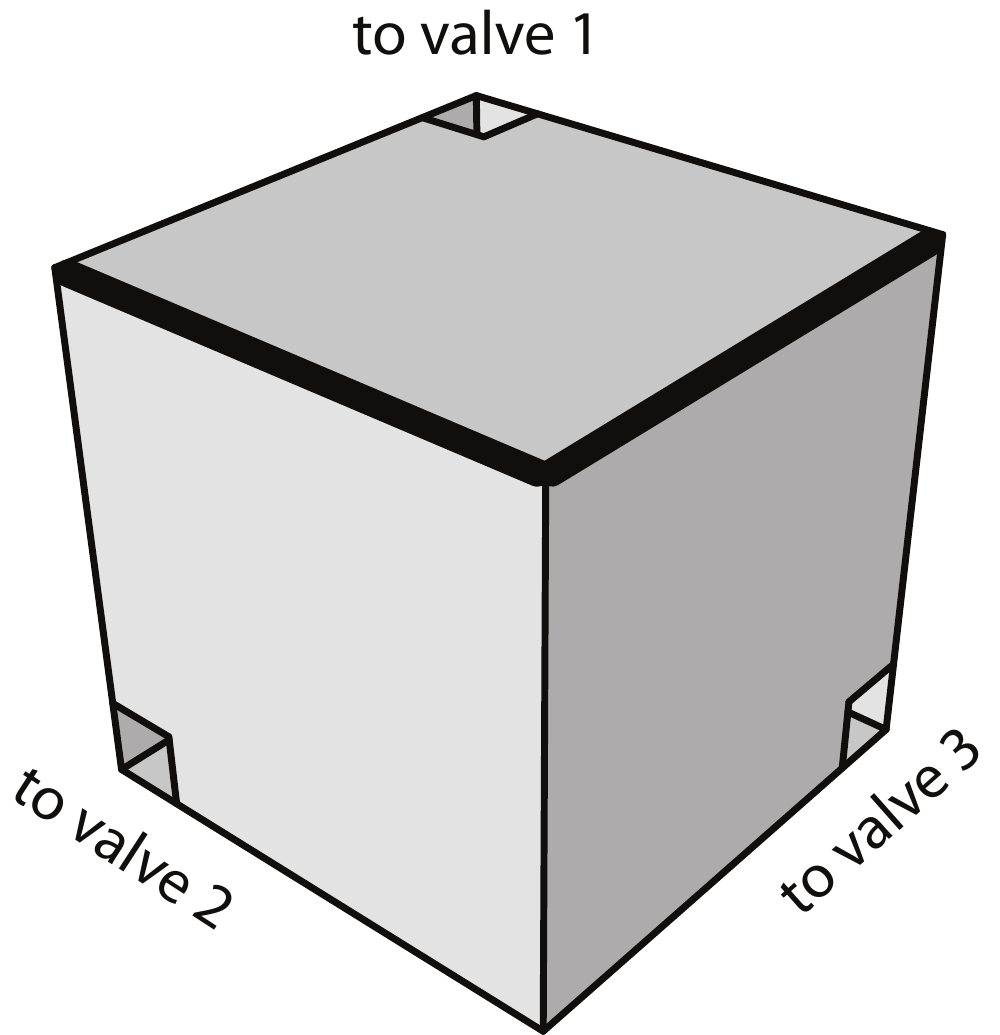}
  \caption{3-input OR gate.}
  \label{fig:or}
\end{figure}

Each hole in an OR gate directly connects to a small \emph{valve}, shown in Figure~\ref{fig:valve}. Guard $\ell_1$ is able to cap the chain pair and close the hole leading to the OR gate but, in order to do both, it has to spend one second switching position. Guard $\ell_2$ must cap the auxiliary fake links, one at a time, during seconds $1$ and $3$. Hence it must switch position during second $2$, incidentally sweeping the whole valve (and possibly clearing it).
\begin{figure}[h]
  \centering
  \includegraphics[width=.7\linewidth]{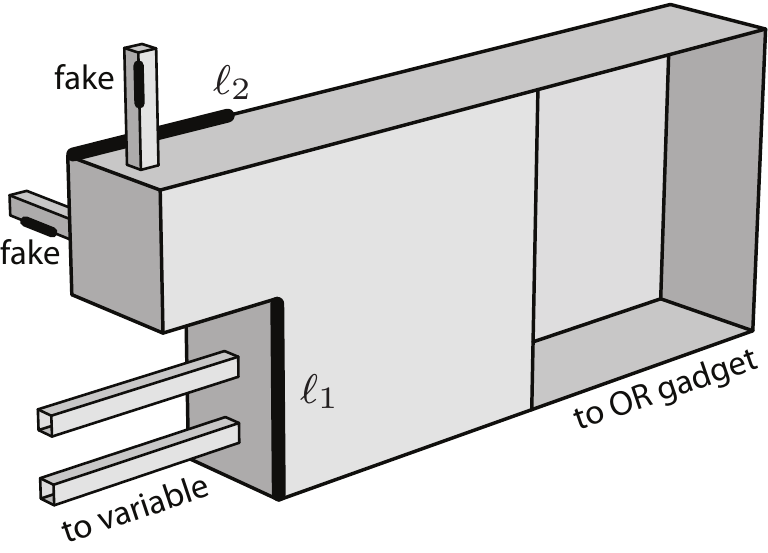}
  \caption{Valve.}
  \label{fig:valve}
\end{figure}

\subsection*{Reduction}

\begin{theorem}\label{hardopt}
It is strongly \NP-hard to compute the optimum search time in the \TTSSPext, even restricted to searchable orthogonal instances.
\end{theorem}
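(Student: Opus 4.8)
The plan is to complete the reduction from \TSAT\ begun above, by exhibiting a three-second schedule for satisfiable instances and proving a timing lower bound for unsatisfiable ones. First I would dispatch the routine facts: the construction uses $O(m+n)$ rooms, links, gates and valves (Figures~\ref{fig:link}--\ref{fig:valve}), each of constant complexity, so the resulting orthogonal polyhedron has polynomially many vertices with coordinates of polynomially many digits, and, exactly as in Theorem~\ref{hard}, the mutual intersections among links can be removed in polynomial time. Hence the whole instance is computable in \P, and any hardness obtained will automatically be \emph{strong}.

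For the completeness direction (satisfiable $\Rightarrow$ three seconds suffice) I would fix a satisfying assignment and set the turning direction of each variable guard according to the truth value of its variable, so that the even/odd parity of the connecting chains delivers caps with the correct phase at every valve. The schedule then runs in three synchronized phases: during second~1 every guard caps one adjacent link and all \emph{phase-1} links are cleared; during second~2 every room, variable, gate and valve guard simultaneously turns its searchlight through $90^\circ$, sweeping and clearing its own cube while the surrounding links stay capped from outside; during second~3 the complementary \emph{phase-3} links are cleared. The only delicate point is the clause gadget: since the assignment satisfies the clause, at least one of its three valves carries a true literal, which lets that valve's guard $\ell_1$ finish capping its chain pair in time to close the third hole of the \mbox{3-input} \textsc{or} gate, so the two gate guards together with $\ell_1$ seal all three holes and the gate clears. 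Verifying that no recontamination leaks through a hole or a link during the turns is precisely the kind of local, filling-guard argument already used in Theorem~\ref{one} and Theorem~\ref{orth}.

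For the soundness direction I would exploit the quantization of the problem. The key lemma is that in \emph{any} three-second schedule each link must be cleared during a single uninterrupted second with both bases capped throughout, which forces each room guard to spend one full second capping each side and one full second turning; consequently the turning direction of the two guards of each variable gadget is well defined, and matching the caps along each chain forces the advertised parity constraint, so a three-second schedule induces a consistent Boolean assignment. Tracing this assignment through the valves and \textsc{or} gates shows that every clause must contain a true literal, i.e. the assignment satisfies $\varphi$; contrapositively, if $\varphi$ is unsatisfiable then no three-second schedule exists. I would then note that the instance is searchable regardless: aiming each guard at the reflex edge it lies on and applying the one-by-one sweeping strategy of Theorem~\ref{orth} clears $\mathcal P$ in finite time, so the promise that the instance be searchable is always met. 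Finally, because all relevant durations are integer multiples of one second, an unsatisfiable instance cannot be cleared in fewer than four seconds — resolving a conflicting clause gadget costs at least one extra synchronized round — giving a multiplicative gap of at least $4/3$ between the yes- and no-cases. Distinguishing search time $3$ from search time $\geqslant 4$ therefore decides \TSAT, which proves the theorem and, as a corollary, rules out any approximation within a factor smaller than $4/3$, hence no \PTAS\ unless $\P=\NP$.

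The hardest part will be the soundness lemma, namely showing rigorously that no asynchronous or partially overlapping schedule can beat the three-second barrier: one must rule out schedules that re-cap links out of phase, that turn guards only partway, or that deliberately clear and then recontaminate regions to buy timing slack, and confirm that every such deviation either forfeits a full second or leaves some link uncapped, thereby preserving both the induced-assignment argument and the one-second additive gap on which the inapproximability rests.
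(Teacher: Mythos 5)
Your outline matches the paper's reduction (same gadgets, same three-phase schedule for satisfiable formulas), but the part you defer as ``the hardest part'' is in fact the entire content of the soundness direction, and the specific mechanism is never identified. The paper's argument is a counting-and-timing argument localized at the OR gate: if a literal is unsatisfied, the two links attached to its valve must be cleared one at second~1 and one at second~3, so the valve guard $\ell_1$ must cap them throughout and, being limited to $90^\circ/\sec$, cannot leave to close the hole into the OR gate and return in time; hence the hole region must be swept by $\ell_2$ during second~2, and while $\ell_2$ sweeps, \emph{some OR-gate guard} must close that hole for a positive amount of time. If all three literals of a clause are false, all three holes demand such attention during second~2, but the gate has only two guards, so one guard would have to close two holes in strictly less than one second --- impossible. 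Your phrase ``tracing this assignment through the valves and OR gates shows that every clause must contain a true literal'' skips exactly this pigeonhole step, which is the crux of the theorem.

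Two further points would fail as written. First, your claim that an unsatisfiable instance needs at least four seconds (hence a $4/3$ gap) is unsupported; the argument above only rules out schedules of length $3$ (and, with care, $3+\varepsilon$ for small $\varepsilon>0$, which is all the paper claims and all that is needed for \NP-hardness and for excluding a \PTAS). Second, you cannot invoke Theorem~\ref{orth} to certify that negative instances are searchable: that theorem requires an edge guard on \emph{every} reflex edge of the orthogonal polyhedron, and the construction places guards only on selected edges of rooms, links, gates and valves. Searchability of negative instances instead follows from the observation that once the time bound is dropped, the synchronization constraints disappear and the gadgets can be cleared sequentially (e.g., a valve guard may first hold its links, then separately close its hole).
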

\begin{proof}
We use the construction described above, and prove that it yields the expected reduction.

Notice that building a chain with the correct parity and direction is never an issue, because three rooms and three links can be added to extend a chain without changing the direction of its end links. Moreover, if the chains are thin enough, the number of their bends can be bounded by a constant, while keeping them pairwise disjoint and with rational vertices. Hence the size of the whole construction is indeed polynomial in the size of~$\varphi$.

\paragraph{Positive instances.}
Suppose now that $\varphi$ is satisfiable, and let us show that our construction can be cleared in three seconds. For each variable $x$, the guard in $\mathcal C_x$ starts with its searchlight aiming at its right, while the guard in $\mathcal C'_x$ starts with its searchlight aiming at its left (resp.\ right) if $x$ is false (resp.\ true) in the chosen satisfying assignment. Consider now an occurrence of $x$, and its corresponding valve. If the occurrence is satisfied, both links attached to the valve can be cleared at second~$1$. So guard $\ell_1$ caps the links during second~$1$, waits for $\ell_2$ to move during second~$2$ (thus protecting the chains from the contaminated fake link still uncapped by $\ell_2$), and finally moves to close the valve during second~$3$. On the other hand, if the occurrence of $x$ is not satisfied, $\ell_1$ keeps capping the links and never moves (since they have to be cleared during seconds~$1$ and~$3$, respectively), while $\ell_2$ moves during second~$2$. Thus, by assumption, at the end of second~$3$, each OR gate will have at least one hole closed by the corresponding valve. The guards in the OR gate keep the remaining two holes closed throughout seconds~$1$ and~$2$, and finally the proper guard turns to clear the OR gate during second~$3$. If a valve's hole is initially closed by the OR gate, then the valve itself is successfully cleared by guard $\ell_2$ during second~$2$, hence the OR gate cannot be contaminated by the valve during second~$3$. If a valve is initially open, then a small portion of it (the \textquotedblleft niche\textquotedblright\ with the fake links) is cleared by $\ell_2$ at second~$2$, while the rest is cleared by $\ell_1$ at second~$3$. Notice also that no recontamination can ever occur between an OR gate and some auxiliary fake link in one of its valves, because guard $\ell_1$'s searchlight always separates the two regions.
 
 \paragraph{Negative instances.}
Conversely, let us assume by contradiction that $\varphi$ is not satisfiable, and some search schedule clears our construction within three seconds. As already noted, every variable gadget must be cleared at second~$2$, while each link must be cleared at second~$1$ or~$3$. By assumption, in at least one clause gadget every valve is attached to a link that is cleared during second~$1$ and to a link that is cleared during second~$3$. While links are being cleared, guard $\ell_1$ has to cap them, and it is too slow to approach the valve's hole and come back in place within second~$2$. Therefore, the region around the hole must be cleared by guard $\ell_2$, which is allowed to move only during second~$2$, because of the fake links it has to cap at seconds~$1$ and~$3$. To avoid recontamination while $\ell_2$ sweeps the hole, some guard in the OR gate must keep it closed for $\varepsilon>0$ seconds. But since there are only two such guards for three holes, one guard has to close two holes in strictly less than one second, which is impossible.
\end{proof}

By further inspecting the construction given above, it is clear that \TTSSP does not even have a \PTAS, unless $\mbox{\P} =\mbox{\NP}$. Indeed, satisfiable Boolean formulas are transformed into polyhedra searchable in three seconds, while the unsatisfiable ones are transformed into polyhedra that are unsearchable in less than $3+\varepsilon$ seconds, for a suitable small-enough $\varepsilon > 0$.

\begin{corollary}
The \TTSSPext does not have a \PTAS (unless $\P=\NP$), even when restricted to searchable orthogonal instances.\hfill\qed
\end{corollary}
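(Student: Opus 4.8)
The plan is to turn the reduction of Theorem~\ref{hardopt} into a hardness-of-approximation argument by exploiting the gap it already produces between satisfiable and unsatisfiable instances. Recall that the construction maps a satisfiable formula $\varphi$ to an orthogonal instance searchable in exactly three seconds, and an unsatisfiable $\varphi$ to an instance that is still searchable but only in strictly more than three seconds. The first thing I would do is sharpen the negative-instance analysis into a \emph{uniform} additive gap: I claim there is a constant $\varepsilon>0$, independent of $\varphi$ and hence of the instance size, such that every unsatisfiable $\varphi$ yields a polyhedron whose optimum search time is at least $3+\varepsilon$.

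To establish this uniform gap I would inspect the obstruction identified in the proof of Theorem~\ref{hardopt}. In the unsatisfiable case, some clause gadget forces a single OR-gate guard to keep two of the three holes closed, and these two holes cannot be covered simultaneously; covering them in turn requires a fixed angular displacement (a quarter turn, taking a fixed amount of time at the prescribed angular speed), during which recontamination through the momentarily exposed hole must be prevented. Since the local geometry of every OR gate, valve, and room is identical across all instances---only the \emph{number} of gadgets and the routing of chains grow with $|\varphi|$---the time penalty incurred by this obstruction is governed purely by fixed local quantities. This yields a lower bound of the form $3+\varepsilon$ with $\varepsilon$ a genuine constant, which cannot be diluted by enlarging the rest of the construction.

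With the uniform gap in hand, the conclusion follows from the standard gap-to-inapproximability argument. Suppose, for contradiction, that \TTSSP\ admits a \PTAS. Fix any $\varepsilon'$ with $0<\varepsilon'<\varepsilon/3$ and run the corresponding $(1+\varepsilon')$-approximation algorithm on the instance produced from $\varphi$; for this fixed $\varepsilon'$ it runs in polynomial time. If $\varphi$ is satisfiable then the optimum is $3$, so the algorithm returns a value at most $3(1+\varepsilon')<3+\varepsilon$; if $\varphi$ is unsatisfiable then the optimum is at least $3+\varepsilon$, so the returned value is at least $3+\varepsilon$. Thus the output drops below $3+\varepsilon$ if and only if $\varphi$ is satisfiable, giving a polynomial-time decision procedure for \TSAT\ and forcing $\P=\NP$.

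The main obstacle is the first step: Theorem~\ref{hardopt} as stated gives only the qualitative fact that unsatisfiable instances need strictly more than three seconds, whereas the inapproximability argument needs this excess to be bounded below by a fixed constant. The delicate point is to verify that the extra time forced by the two-holes-per-guard obstruction is a fixed local quantity and does not shrink to zero as the number of clauses grows---equivalently, that one cannot amortize the deficit across many gadgets or exploit simultaneous guard motions to approach three seconds in the limit.
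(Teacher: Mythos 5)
Your proposal is correct and follows essentially the same route as the paper, which likewise observes that satisfiable formulas map to instances searchable in exactly three seconds while unsatisfiable ones require at least $3+\varepsilon$ seconds for a fixed constant $\varepsilon>0$, and then applies the standard gap argument. In fact you elaborate the one point the paper leaves implicit---that the excess time forced by the two-holes-one-guard obstruction is a fixed local quantity determined by the gadget geometry and the bounded angular speed, independent of the number of clauses---which is exactly the right thing to verify.
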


Determining approximating algorithms for \TTSSP is left as an open problem (as it is even for 2-dimensional \SSP).
\chapter{Complexity of partial searching}\label{chapter11}
\begin{chapterabstract}
We introduce the \PSSPext, in which just a given subregion of the environment has to be cleared, either by catching the intruder or by confining it outside. Note that the traditional \SSPext comes as a subproblem, in which the entire environment must be cleared.

In the 3-dimensional case, we show that the decision version of the \PSSPext is strongly \PSPACE-hard, even restricted to orthogonal polyhedra and boundary guards.

Then we improve our construction, to show that the problem is strongly \PSPACE-complete also in the 2-dimensional case, restricted to orthogonal polygons and a rectangular region to be searched. Our last result stands as the first characterization of the computational complexity of a \SSPext.
\end{chapterabstract}

\section{Partial searching}

Here we introduce a slightly generalized problem: suppose the guards have to clear only a given subregion of the environment, while the rest may remain contaminated. In particular, for 3-dimensional polyhedra, we stipulate that the \emph{target area} that need be cleared is expressed as a ball, whose center and radius are given as input along with the polyhedron and the multiset of guards. We call the resulting problem \textsc{3-dimensional Partial Searchlight Scheduling Problem} (\TRSSP).

\begin{definition}[\TRSSP]\emph{\TRSSP}\ is the problem of deciding if the guards of a given instance of \TSSP\ have a schedule that clears a ball with given center and radius.\end{definition}

The terminology defined in Chapter~\ref{chapter7} for \TSSP\ extends straightforwardly to \TRSSP.

In the 2-dimensional case, we are less restrictive, and we allow any subpolygon to be the target area.

\begin{definition}[\PSSP]\emph{\PSSP}\ is the problem of deciding if the guards of a given instance of \SSP\ have a schedule that clears a given subpolygon.\end{definition}

\section{\PSPACE-hardness for polyhedra}

Next we are going to prove that \TRSSP\ is strongly \PSPACE-hard, even for orthogonal polyhedra. To do so, we give a reduction from the \emph{edge-to-edge} problem for \emph{AND/OR networks} in the \emph{nondeterministic constraint model} (see~\cite{ncl}).

\subsection*{Nondeterministic constraint logic machines}

Consider an undirected 3-connected 3-regular planar graph, whose vertices can be of two types: \emph{AND vertices} and \emph{OR vertices}. Of the three edges incident to an AND vertex, one is called its \emph{output edge}, and the other two are its \emph{input edges}. Such a graph is (a special case of) a \emph{nondeterministic constraint logic machine (NCL machine)}. A \emph{legal configuration} of an NCL machine is an orientation (direction) of its edges, such that:
\begin{itemize}
\item for each AND vertex, either its output edge is directed inward, or both its input edges are directed inward;
\item for each OR vertex, at least one of its three incident edges is directed inward.
\end{itemize}
A \emph{legal move} from a legal configuration to another configuration is the reversal of a single edge, in such a way that the above constraints remain satisfied (i.e., such that the resulting configuration is again legal).

Given an NCL machine with two \emph{distinguished edges} $e_a$ and $e_b$, and a \emph{target orientation} for each, we consider the problem of deciding if there are legal configurations $A$ and $B$ such that $e_a$ has its target orientation in $A$, $e_b$ has its target orientation in $B$, and there is a sequence of legal moves from $A$ to $B$. In a sequence of moves, the same edge may be reversed arbitrarily many times. We call this problem \textsc{Edge-to-Edge for Nondeterministic Constraint Logic machines} (\EENCL).

A proof that \EENCL\ is \PSPACE-complete is given in~\cite{ncl}, by a reduction from \computproblem{True Quantified Boolean Formula}. Based on that reduction, we may further restrict the set of \EENCL\ instances on which we will be working. Namely, we may assume that $e_a\neq e_b$, and that in no legal configuration both $e_a$ and $e_b$ have their target orientation.

\subsection*{Asynchrony}

For our main reduction, it is more convenient to employ an \emph{asynchronous} version of \EENCL. Intuitively, instead of \textquotedblleft instantaneously\textquotedblright\ reversing one edge at a time, we allow any edge to start reversing at any given time, and the \emph{reversal phase} of an edge is not \textquotedblleft atomic\textquotedblright\ and instantaneous, but may take any strictly positive amount of time. It is understood that several edges may be in a reversal phase simultaneously. While an edge is reversing, its orientation is undefined, hence it is not directed toward any vertex. During the whole process, at any time, both the above constraints on AND and OR vertices must be satisfied. We also stipulate that no edge is reversed infinitely many times in a bounded timespan, or else its orientation will not be well-defined in the end. With these extended notions of configuration and move, and with the introduction of \textquotedblleft continuous time\textquotedblright, \EENCL\ is now called \textsc{Edge-to-Edge for Asynchronous Nondeterministic Constraint Logic machines} (\EEANCL).

Despite its asynchrony, such new model of NCL machine has precisely the same power of its traditional synchronous counterpart.

\begin{proposition}\label{asynch}$\EENCL = \EEANCL$.\end{proposition}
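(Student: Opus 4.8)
The plan is to show that for every fixed pair of legal configurations $A$ and $B$ of the NCL machine, a synchronous sequence of legal single-edge moves transforms $A$ into $B$ if and only if a valid asynchronous reversal process transforms $A$ into $B$. Since both $\EENCL$ and $\EEANCL$ range over the same instances and existentially quantify over the same pair $A$, $B$ subject to identical target-orientation conditions, this per-pair equivalence immediately yields the equality of the two problems.

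The single tool I would isolate first is a \emph{monotonicity lemma}. Call a \emph{partial orientation} an assignment giving each edge one of its two directions or leaving it \emph{free} (pointing to neither endpoint), and call it legal if every AND- and OR-vertex constraint is met counting only the edges directed inward. Because both constraints are monotone in the set of inward edges (an AND vertex needs its output, or both inputs, inward; an OR vertex needs at least one incident edge inward), orienting a currently free edge in \emph{either} direction can only add an inward edge at one endpoint and changes nothing at the other, so it preserves legality. This lemma is the engine behind both directions.

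For the forward direction ($\EENCL \Rightarrow \EEANCL$) I would realize a synchronous move sequence as an asynchronous process by reversing the edges one at a time, each over its own disjoint positive-length interval. The only thing to check is that the partial orientation present while a single edge $e=\{u,w\}$ is free stays legal: at each endpoint this partial orientation has the same inward edges as one of the two (legal) configurations flanking the move---the post-move one at $u$ and the pre-move one at $w$---and at every other vertex it is unchanged. Hence every intermediate instant is legal and the process is valid.

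The reverse direction ($\EEANCL \Rightarrow \EENCL$) is the substantive step, and the main obstacle is serializing possibly overlapping reversals (and the same edge reversing several times). Given a valid asynchronous process on $[0,T]$, I would list its finitely many reversal events $R_1,\dots,R_N$ ordered by completion time and define $D_i$ to be the configuration in which each edge points as dictated by the latest of $R_1,\dots,R_i$ involving it (and as in $A$ if none does). Then $D_0=A$, $D_N=B$, and $D_{i-1}\to D_i$ reverses exactly the single edge of $R_i$. To see that each $D_i$ is legal, I would fix an instant of the real process just after $R_i$ completes: there every edge is either free or fixed at the orientation recorded by the latest completed event involving it, which for the fixed edges agrees with $D_i$; thus $D_i$ is obtained from that genuine (hence legal) partial orientation by orienting its free edges, and legality follows from the monotonicity lemma. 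Simultaneous completions and exact boundary instants are handled by the same device---comparing $D_i$ against the partial orientation at a suitably chosen instant and invoking monotonicity---so no perturbation of the schedule is needed. I expect the bookkeeping around ties and repeated reversals to be the only delicate part; the legality of each serialized configuration is forced by monotonicity.
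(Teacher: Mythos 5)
Your proof is correct and follows essentially the same route as the paper's: both serialize the finitely many reversal events into a synchronous sequence and justify each intermediate configuration by comparing it, at a well-chosen instant, to the genuine asynchronous partial orientation, with the monotonicity of the AND/OR constraints in the set of inward edges doing the work (the paper's ``\emph{a fortiori}'' step). The only differences are cosmetic --- you sort events by completion time and compare just after each $t_i$, whereas the paper sorts by start time and compares at $s_m$ --- and your explicit monotonicity lemma and your check of the forward direction are more careful renderings of steps the paper treats as immediate.
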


\begin{proof}
Obviously $\EENCL \subseteq \EEANCL$, because any sequence of moves in the synchronous model trivially translates into an equivalent sequence for the asynchronous model.

For the opposite inclusion, we show how to \textquotedblleft serialize\textquotedblright\ a legal sequence of moves for an asynchronous NCL machine going from a legal configuration $A$ to configuration $B$ in a bounded timespan, in order to make it suitable for the synchronous model. An asynchronous sequence is represented by a set
$$S=\{(e_m, s_m, t_m) \mid m\in M\},$$ where $M$ is a set of \textquotedblleft edge reversal events\textquotedblright, $e_m$ is an edge with a reversal phase starting at time $s_m$ and terminating at time $t_m > s_m$. For consistency, no two reversal phases of the same edge may overlap.

Because no edge can be reversed infinitely many times, $S$ must be finite. Hence we may assume that $M=\{1, \cdots, n\}$, and that the moves are sorted according to the (weakly increasing) values of $s_m$, i.e., $1\leqslant m < m' \leqslant n \implies s_m \leqslant s_{m'}$. Then we consider the serialized sequence
$$S'=\{(e_m, m, m) \mid m\in M\},$$
and we claim that it is valid for the synchronous model, and that it is equivalent to $S$.

Indeed, each move of $S'$ is instantaneous and atomic, no two edges reverse simultaneously, and every edge is reversed as many times as in $S$, hence the final configuration is again $B$ (provided that the starting configuration is $A$). We still have to show that every move in $S'$ is legal. Let us do the first $m$ edge reversals in $S'$, for some $m\in M$, starting from configuration $A$, and reaching configuration $C$. To prove that $C$ is also legal, consider the configuration $C'$ reached in the asynchronous model at time $s_m$, according to $S$, right when $e_m$ starts its reversal phase (possibly simultaneously with other edges). By construction of $S'$, every edge whose direction is well-defined in $C'$ (i.e., every edge that is not in a reversal phase) has the same orientation as in $C$. It follows that, for each vertex, its inward edges in $C$ are a superset of its inward edges in $C'$. By assumption on $S$, $C'$ satisfies all the vertex constraints, then so does $C$, {\it a fortiori}.
\end{proof}

\begin{corollary}\label{eeancl}\EEANCL\ is \PSPACE-complete.\end{corollary}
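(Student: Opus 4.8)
The plan is to obtain this corollary as an immediate consequence of Proposition~\ref{asynch} together with the already-cited fact that \EENCL\ is \PSPACE-complete. Since both problems take exactly the same input (an NCL machine with two distinguished edges and a target orientation for each) and Proposition~\ref{asynch} establishes the equality $\EENCL = \EEANCL$ as decision problems, the two are literally the same language under the same encoding. Hence every complexity-theoretic property of \EENCL\ transfers verbatim to \EEANCL.

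Concretely, I would argue in two short steps. First, membership in \PSPACE: because $\EEANCL = \EENCL$ and \EENCL\ is known to lie in \PSPACE, so does \EEANCL. Second, \PSPACE-hardness: again by the equality of the two languages, any \PSPACE-hardness reduction witnessing the hardness of \EENCL\ is simultaneously a reduction to \EEANCL, so \EEANCL\ is \PSPACE-hard as well. Combining the two yields \PSPACE-completeness.

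There is essentially no obstacle here, since all the real work was done in Proposition~\ref{asynch}; the only point requiring a moment's care is to observe that the serialization argument of that proposition is a genuine \emph{identity} of yes-instances (not merely a polynomial-time reduction in one direction), so that both membership and hardness pass through without any further construction. Given that, the corollary follows directly.
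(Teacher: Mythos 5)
Your proposal is correct and matches the paper's own proof, which likewise derives the corollary immediately from the \PSPACE-completeness of \EENCL\ (cited from~\cite{ncl}) together with the identity $\EENCL = \EEANCL$ of Proposition~\ref{asynch}. Your extra remark that the equality of languages transfers both membership and hardness is exactly the (implicit) content of the paper's one-line argument.
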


\begin{proof}
Recall from~\cite{ncl} that \EENCL\ is \PSPACE-complete, and that $\EENCL = \EEANCL$ by Proposition~\ref{asynch}.
\end{proof}

\subsection*{Building blocks}

We realize a given NCL machine in terms of an orthogonal polyhedron consisting of three levels, called \emph{basement}, \emph{floor} and \emph{attic}, respectively. The floor contains the actual AND/OR network, arranged as an orthogonal plane graph, and is completely unsearchable. The attic is reachable from the floor through two \emph{stairs}, and contains the target ball that has to be cleared. The basement level is just a network of \emph{pipes} connecting different parts of the floor to the stairs, whose purpose is to recontaminate the stairs and the attic unless the floor guards actually \textquotedblleft simulate\textquotedblright\ a legal sequence of moves of the edges of an asynchronous NCL machine.

It is well-known that any 3-regular planar graph can be embedded in the plane as an \emph{orthogonal drawing}. For instance, we can employ the algorithm given in~\cite{planar}, which works with 3-connected 3-regular planar graphs, such as the networks of our NCL machines. The resulting drawing is orthogonal, in the sense that every edge is a sequence of horizontal and vertical line segments, which will be called \emph{subedges}. For our construction, we turn every subedge into a thin-enough cuboid, then we place a \emph{subedge guard} in the middle of each cuboid, on the bottom face, as depicted in Figure~\ref{figp:bend}. The dashed squares between consecutive subedges are called \emph{trapdoors}, and denote areas that will be attached to pipes and connected to other regions, as described later.

\begin{figure}[h]
\centering
\includegraphics[scale=1.25]{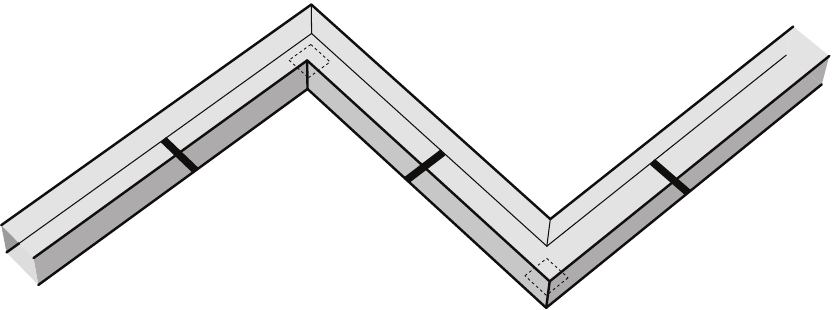}
\caption{Edge made of three orthogonal subedges.}
\label{figp:bend}
\end{figure}

Next we model OR vertices like in Figure~\ref{figp:or}. The three incoming cuboids carrying guards are subedges constructed previously. Again, the dashed square in the middle is a trapdoor that will be attached to pipes. Notice that the trapdoor completely belongs to the visibility region of each of the three guards, as the dashed lines suggest. Moreover, the two subedges coming from opposite directions are displaced, so they do not interfere with each other, in the sense that none of their two guards can see the opposite subedge through the end.

\begin{figure}[h]
\centering
\includegraphics[scale=1.15]{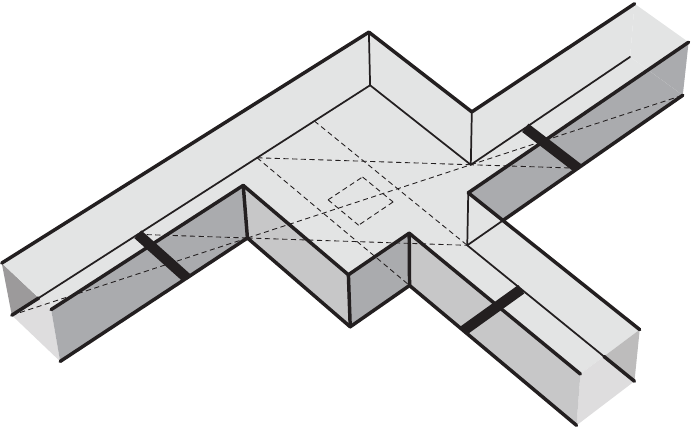}
\caption{OR vertex.}
\label{figp:or}
\end{figure}

We model the AND vertices as shown in Figure~\ref{figp:and}. The output edge is the one whose guard sees both trapdoors, while the guards in the two input edges can see only one trapdoor each. We can always arrange the drawing of our graph by further bending its edges, in such a way that this construction is feasible (i.e., the output edge is located \textquotedblleft between\textquotedblright\ the input edges), as suggested in Figure~\ref{figp:graph}.

\begin{figure}[h]
\centering
\includegraphics[scale=1.15]{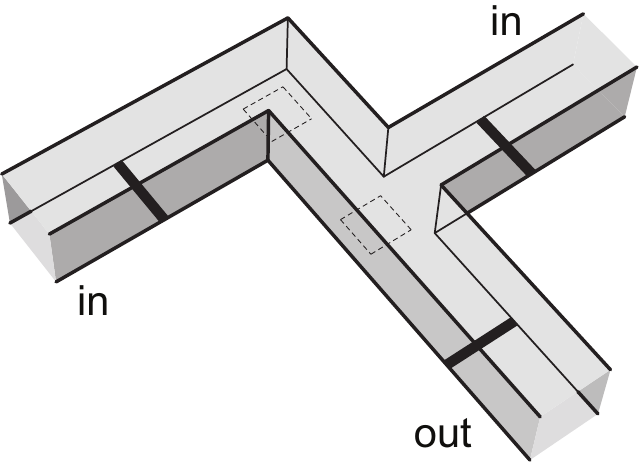}
\caption{AND vertex.}
\label{figp:and}
\end{figure}

\begin{figure}[h]
\centering
\subfigure[]{\includegraphics[scale=1.2]{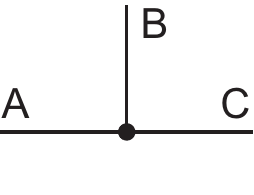}}\qquad \qquad
\subfigure[]{\includegraphics[scale=1.2]{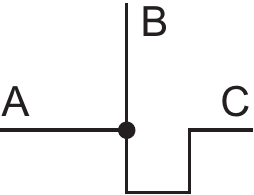}}
\caption{Rearranging the edges of an orthogonal drawing.}
\label{figp:graph}
\end{figure}

Recall that an instance of \EEANCL\ comes with two distinguished edges $e_a$ and $e_b$, each of which is embedded in our construction as a sequence of cuboidal subedges. We select an \emph{internal} subedge $s_a$ of $e_a$ (i.e., not the first, nor the last subedge of $e_a$) and an internal subedge $s_b$ of $e_b$, such that $s_a$ and $s_b$ run in two orthogonal directions. If no such subedges exist in our construction, we can further subdivide $e_a$ and $e_b$ into more small-enough \textquotedblleft redundant\textquotedblright\ subedges, in order to obtain internal ones running in the desired directions. Recall also that both $e_a$ and $e_b$ come in \EEANCL\ with a target orientation, which we want them to reach in order to solve the problem instance. Such target orientation is therefore naturally \textquotedblleft inherited\textquotedblright\ by $s_a$ and $s_b$, too.

\begin{figure}[h]
\centering
\includegraphics[scale=1.75]{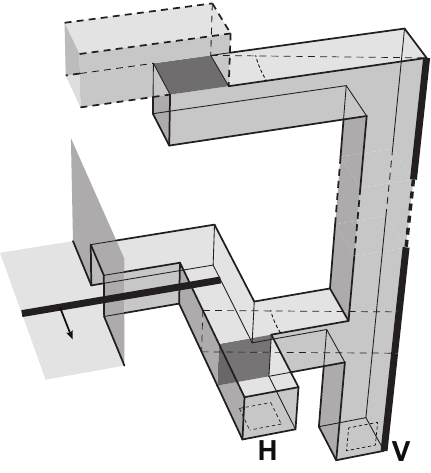}
\put(-194,89){$\ell_a$}
\put(-80,205.5){${}_\alpha$}
\put(-74,76.75){${}_\alpha$}
\caption{Stair.}
\label{figp:stair}
\end{figure}

Let $\ell_a$ and $\ell_b$ be the two subedge guards lying in $s_a$ and $s_b$, respectively. We add two stairs going up to the attic, one for $\ell_a$ and one for $\ell_b$. Figure~\ref{figp:stair} shows a close-up of one end of $\ell_a$, where we have attached a stair (the two \textquotedblleft incomplete\textquotedblright\ rectangles represent faces of $s_a$). Observe that, besides adding the polyhedral model of the stair to $s_a$, we also extend $\ell_a$ to the stair itself. The arrow attached to $\ell_a$ indicates the target orientation of $s_a$, inherited from the instance of \EEANCL. There are two trapdoors, depicted as horizontal and vertical dashed squares. The horizontal trapdoor (labeled H in the picture) lies on the bottom face of an enclosed cuboidal region that we call \emph{alcove}, whose opening is indicated by a darker vertical square. The alcove completely belongs to $\mathcal V(\ell_a)$, and it can also be capped by the long vertical guard showed in the picture, called \emph{stair guard}. While the stair guard caps the alcove, it also covers the vertical trapdoor (labeled V). On the other hand, $\ell_a$ is able to cover the horizontal trapdoor. The top cuboid with dashed edges belongs to the attic, and is not considered part of the stair. The darker horizontal square that separates the attic from the stair is called \emph{attic entrance}.

A similar construction is then repeated for $\ell_b$, and analogous remarks hold.

\begin{figure}[h]
\centering
\includegraphics[scale=1.3]{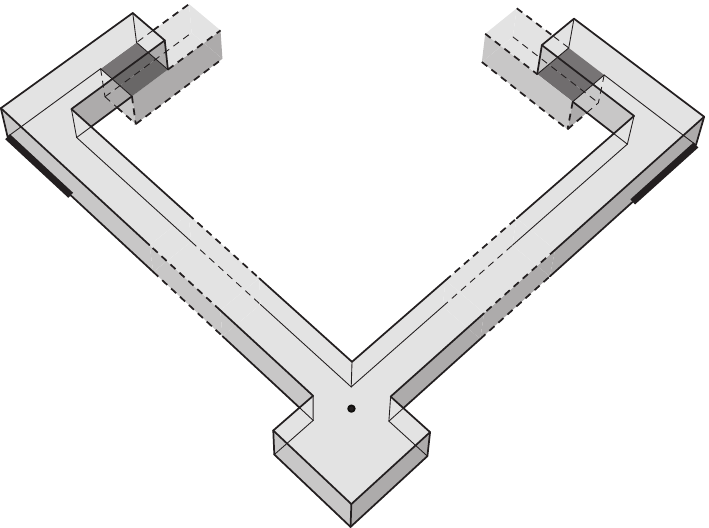}
\caption{Attic.}
\label{figp:attic}
\end{figure}

The whole attic is illustrated in Figure~\ref{figp:attic}, where the darker squares denote the two attic entrances mentioned above, and the underlying cuboids with dashed edges belong to the stairs. A long L-shaped \emph{corridor}, made of two orthogonal \emph{branches}, connects the two entrances, and each entrance can be covered by an \emph{attic guard} located at one end of the corridor. The small dot in the picture, in the middle of the corridor, denotes the target ball that has to be cleared by the guards. There is also a \emph{widening} on one side of the corridor, which is not fully visible to the guards, and hence is a perpetual source of recontamination.

To make sure that the floor is really unsearchable, we add a \emph{groove} all around an end of each subedge guard, including $\ell_a$ and $\ell_b$, like in Figure~\ref{figp:groove}. Every groove is always a source of recontamination, because some parts of it can never be seen by any guard, and cannot be isolated from the rest of the floor, either.

\begin{figure}[h]
\centering
\includegraphics[scale=1]{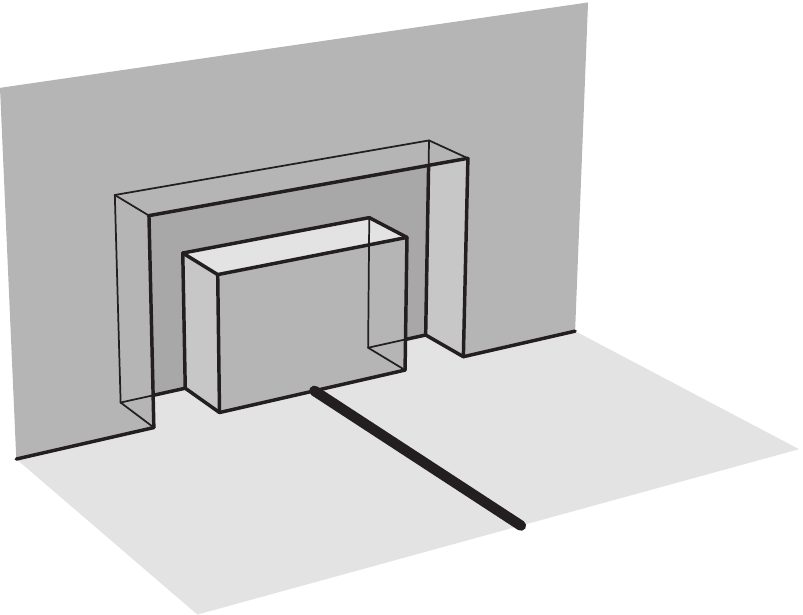}
\caption{Close-up of a subedge guard, with its groove.}
\label{figp:groove}
\end{figure}

Finally, at the basement level, we add pipes, i.e., twisted chains of very thin cuboids, to connect pairs of trapdoors. We connect each trapdoor in each stair to every trapdoor in the floor (i.e., the trapdoors in the AND/OR vertices and the trapdoors between subedges). Since there are two different types of trapdoors in the stairs (horizontal and vertical), while the trapdoors in the floor are all horizontal, we need two types of pipes, as Figure~\ref{figp:pipes} suggests. The end of a pipe that is labeled A goes into a stair trapdoor, whereas the end labeled B goes into a floor trapdoor. Observe that a pipe shaped like in Figure~\ref{figp:pipesb} can always connect a vertical stair trapdoor with any floor trapdoor, except when the latter lies exactly \textquotedblleft behind\textquotedblright\ the former, and the lower part of the stair gets in the way. This can be easily prevented, for example by constructing a \textquotedblleft thinner\textquotedblright\ version of the stair itself, such that no floor trapdoor lies completely behind the lower part of the stair (i.e., the part with the vertical trapdoor). In general, all pipes lie below the floor level, and their mutual intersections can be resolved by shrinking them, as already discussed in Chapter~\ref{chapter10} for links.

\begin{figure}[h]
\centering
\subfigure[]{\label{figp:pipesa}\includegraphics[scale=0.85]{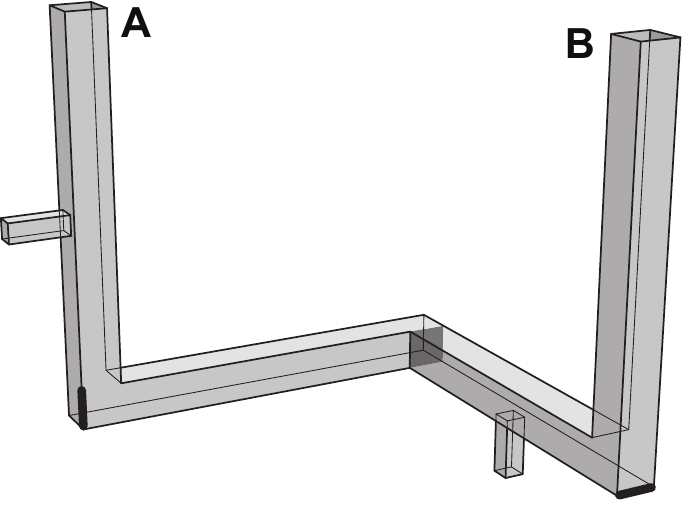}}\qquad
\subfigure[]{\label{figp:pipesb}\includegraphics[scale=0.85]{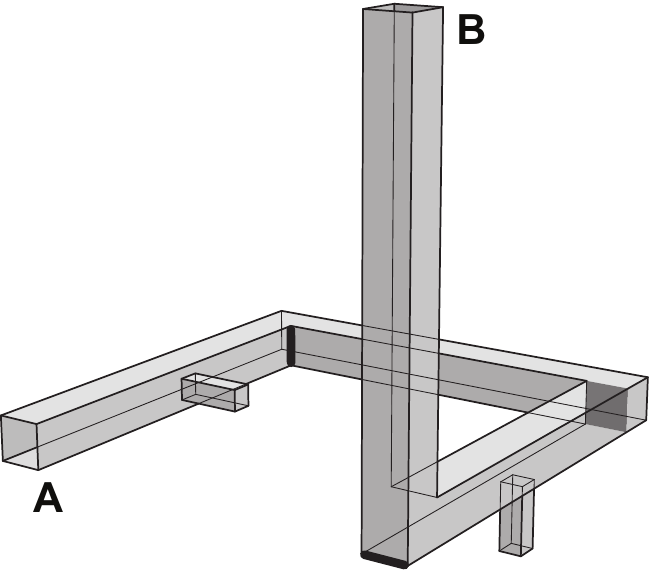}}
\put(-334.75,9.75){$\ell_1$}
\put(-188,2.75){$\ell_2$}
\put(-90.65,39.3){$\ell_1$}
\put(-81.75,3.45){$\ell_2$}
\caption{Two pipes. The pipe in \subref{figp:pipesa} connects two horizontal trapdoors, the pipe in \subref{figp:pipesb} connects a vertical and a horizontal trapdoor.}
\label{figp:pipes}
\end{figure}

Notice that two very small extra cuboids are attached to each pipe: these are called \emph{pits}. Pits are unsearchable regions, but each pit can indeed be capped by a \emph{pipe guard}. Such guards are also responsible for clearing the rest of the pipe, when both its ends A and B are capped by external searchplanes (belonging to a stair guard and to a subedge guard, respectively). Thus, by \emph{clearing a pipe} we will mean clearing its chain of four bigger cuboids connecting A with B, disregarding the two pits.

\begin{lemma}\label{pipe1}While both its A and B ends are completely illuminated by external guards, a pipe can be cleared.\end{lemma}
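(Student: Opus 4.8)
The plan is to exhibit an explicit search schedule for the pipe that exploits the fact that both of its ends are sealed. Throughout the schedule the two external guards hold the trapdoors A and B illuminated (this is exactly the hypothesis), so no contamination can enter the chain of four cuboids either from the adjacent stair through A or from the floor through B. Consequently the chain, together with the two fixed capping searchplanes at its ends, behaves like an isolated genus-zero region whose only remaining internal sources of recontamination are the two pits.

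First I would record the two facts that make the pipe guards useful. Since every cuboid of the chain is very thin, any searchplane of a pipe guard aimed across the tube into its interior cuts the tube completely; together with the sealed ends it disconnects the part of the chain already behind it from the part still ahead, so a pipe guard can perform a monotone one-way sweep of the cuboids on its side, in the spirit of the planar one-way sweep strategy of Chapter~\ref{chapter1} generalised in Theorem~\ref{exhaustive}. Second, each pit is an unsearchable appendage, but its opening is a single polygon that coincides with one filling searchplane of a pipe guard; when a pipe guard is held in this \emph{capping position} the pit is disconnected from the chain and cannot leak contamination into it.

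With these two ingredients the schedule is the following. I would first place each pipe guard in the capping position of its pit, so that both pits are sealed; the chain is then bounded entirely by illuminated searchplanes (the external caps at A and B together with the two pit caps) except across the cross-sections that the guards are about to sweep. I then sweep the four cuboids from A towards B: one pipe guard turns its searchlight across the tube, clearing the cuboids it passes, while the other guard keeps its pit capped; after the first guard has cleared up to a hand-off cross-section, the second guard takes over and sweeps the remaining cuboids toward B, the first guard now holding its own pit capped. Each sweeping searchplane disconnects the tube, so everything behind it stays clear, and the standing pit cap blocks the only route by which the intruder could slip back. Because the chain has finitely many cuboids and each sweep is monotone, all four cuboids are clear when the final sweep reaches B.

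The delicate point, and the step I expect to be the main obstacle, is the recontamination bookkeeping at the two pit junctions: I must verify that at every instant the frontier between the clear and contaminated parts of the chain is fully illuminated, which reduces to checking that whenever a sweeping searchplane is at or past the attachment point of a pit, that pit is simultaneously capped --- either by the other pipe guard or, during the hand-off, by the sweeping guard's own capping position. This requirement is precisely what dictates the order in which the two guards must act, and it is the only place where the exact placement of the pits and guards in Figure~\ref{figp:pipes} enters the argument; once that ordering is fixed, the absence of recontamination follows from the disconnection property of the searchplanes, exactly as in the single-guard argument of Theorem~\ref{one}.
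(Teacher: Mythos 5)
Your proposal is correct and follows essentially the same route as the paper's proof: the sealed A and B ends isolate the chain, each guard's searchplane disconnects the thin tube so a monotone sweep never recontaminates what is behind it, and the two pipe guards split the four cuboids between them in a two-phase hand-off in which each guard's sweep terminates in (or is protected by) a pit-capping position. The only cosmetic difference is the initial configuration --- the paper starts with $\ell_1$ holding a mid-pipe separating square rather than both guards capping their pits --- but the ordering constraint you isolate at the end (a pit must be capped whenever the sweeping frontier is at or past its attachment point) is exactly the figure-dependent check the paper's schedule is built around.
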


\begin{proof}
A similar schedule works for both types of pipe. Referring to Figure~\ref{figp:pipes}, guard $\ell_1$ covers the darker square, allowing $\ell_2$ to clear two of the four cuboids. Then $\ell_2$ keeps capping its pit, while $\ell_1$ sweeps the remaining two cuboids, and caps the other pit.
\end{proof}

On the other hand, an uncapped pipe acts as a \textquotedblleft one-way recontaminator\textquotedblright\ from B to A.

\begin{lemma}\label{pipe2}If neither A nor B is illuminated by external guards, then A is (partly) contaminated.\end{lemma}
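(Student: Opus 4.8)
The plan is to show that, once both ends of the pipe are externally dark, its two pipe guards simply cannot muster enough searchplanes to keep contamination away from A. First I would record the visibility fact that is already built into the construction: a pipe is seen only by its own two pipe guards, exactly as a link is seen only by its link guard (``no guard, other than its link guard, can see inside a link''), and the pipe guards were introduced precisely as the guards ``responsible for clearing the rest of the pipe, when both its ends A and B are capped by external searchplanes.'' Consequently, under the hypothesis that neither A nor B is illuminated from the outside, the \emph{only} light anywhere inside the pipe comes from these two guards, so at any instant the pipe is partitioned by at most two searchplanes. This is the dual of Lemma~\ref{pipe1}, where the four simultaneous caps (external at A and B, internal at the two pits) sufficed to clear the pipe; here I would argue we are two caps short.

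Next I would pin down the contamination sources that push into the four-cuboid chain. The B-trapdoor opens onto the floor, which is perpetually contaminated (the grooves of Figure~\ref{figp:groove} make the floor unsearchable); since B is uncapped, this contamination reaches the cuboid adjacent to B. In addition, each of the two pits is an unsearchable region and hence is always contaminated, and each spreads into the chain unless its opening is capped by a pipe guard (this is exactly the role the caps played in the schedule of Lemma~\ref{pipe1}). Thus there are \emph{three} independent sources abutting the chain: the B-end and the two pit openings.

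The heart of the argument is then a counting (pigeonhole) step: to keep the part of the pipe adjacent to A clear, one would have to separate A simultaneously from all three sources, and each such separation requires a distinct searchplane cutting the corresponding cross-section of the pipe. With only two pipe guards available, at most two cross-sections can be blocked at any instant, so at least one source retains an open, connected, non-illuminated path to A; following $\mathcal V$ of the two guards, this path certifies that A is (at least partly) contaminated. I would finish by noting the \emph{one-way} character claimed in the text: since behind A lies a clearable stair while behind B lies the perpetually contaminated floor, the surviving contaminated path always originates at B (or a pit) and terminates at A, which is why the uncapped pipe recontaminates from B to A and not the reverse.

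The main obstacle is the geometric verification underlying the counting step, namely that no single pipe-guard searchplane can do double duty: I must check, using the displaced and twisted geometry of the two pipe shapes in Figures~\ref{figp:pipesa} and~\ref{figp:pipesb}, that a single half-plane emanating from $\ell_1$ or $\ell_2$ cannot simultaneously cap a pit opening and block the through-chain cross-section separating A from B. In other words, the three required separating cross-sections must be shown to be ``independent'' for the admissible searchplanes of the two guards, so that two half-planes genuinely cannot cover all three; once this is established, the contamination-path conclusion follows immediately.
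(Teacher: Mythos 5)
Your argument is correct in substance and arrives at the same conclusion by the same underlying mechanism, but it is organized differently from the paper's proof, and the difference matters for how much geometric verification is left over. The paper does not run a three-source pigeonhole; it runs a two-step case analysis keyed to the pits alone: if $\ell_1$ is not capping its pit, that pit contaminates A; if $\ell_1$ caps its pit but $\ell_2$ does not, the second pit contaminates A; if both guards cap their pits, then neither searchplane cuts the four-cuboid chain, so B and A are joined by a non-illuminated path, and B is contaminated because of the unsearchable grooves on the floor. This casework silently encodes exactly the ``independence'' facts you flag as your main obstacle --- namely that each pit can be separated from A only by its own guard, and that a guard capping its pit cannot simultaneously block the through-chain --- so nothing is left to verify beyond reading Figure~\ref{figp:pipes}. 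Your pigeonhole version (three sources, two searchplanes) is a valid and arguably more conceptual way to state the same obstruction, but as you yourself note, it is incomplete until you rule out a single searchplane doing double duty; the cleanest way to discharge that is precisely to replace the abstract count by the paper's concrete assignment of pit~1 to $\ell_1$ and pit~2 to $\ell_2$ and then observe the three mutually exclusive cases. One small correction: your closing remark about the ``one-way'' character slightly misattributes the asymmetry. The direction B~$\to$~A is not a consequence of what lies behind each end; it is simply that the lemma's conclusion is about A (the stair end, which the reduction needs to keep clear), while B is guaranteed contaminated by the grooves. In the first two cases of the paper's proof the contamination reaching A actually originates at a pit, not at B.
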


\begin{proof}
If $\ell_1$ does not cap its pit, then the pit contaminates A. If $\ell_1$ caps its pit but $\ell_2$ does not, then the second pit contaminates A. If both guards cap their respective pits, then B and A are connected by a non-illuminated path. Because of the unsearchable grooves around all subedge guards, B must be (partly) contaminated, and so is A.
\end{proof}

\subsection*{Reduction}

\begin{theorem}\label{trssp}\TRSSP\ is strongly \PSPACE-hard, even for orthogonal polyhedra.\end{theorem}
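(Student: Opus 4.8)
The plan is to establish \PSPACE-hardness of \TRSSP\ by completing the reduction from \EEANCL\ (which is \PSPACE-complete by Corollary~\ref{eeancl}) whose building blocks were constructed above. The central claim to prove is that the orthogonal polyhedron we built admits a schedule clearing the target ball in the attic if and only if the given instance of \EEANCL\ is a ``yes'' instance, i.e., there is a legal sequence of asynchronous moves taking $e_a$ to its target orientation and then $e_b$ to its target orientation. The whole argument hinges on a tight correspondence between the orientation of each NCL edge and the position of the searchlights of the subedge guards along that edge: orienting an edge toward a vertex corresponds to the guards of that edge capping the trapdoor at that vertex, thereby keeping the connecting pipes cappable and preventing recontamination of the stairs.

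First I would verify the \emph{local gadget invariants} as static lemmas. For the OR vertex (Figure~\ref{figp:or}), I would check that the central trapdoor is kept illuminated precisely when at least one of the three incident subedge guards aims at it, which mirrors the OR constraint that at least one incident edge is directed inward. For the AND vertex (Figure~\ref{figp:and}), I would check that the two trapdoors are jointly kept capped precisely when either the output guard aims inward (covering both) or both input guards aim inward (covering one each), matching the AND constraint. These verifications are geometric visibility checks on the fixed configurations already drawn, so they are routine given the figures. Combined with Lemmas~\ref{pipe1} and~\ref{pipe2}, they yield the key dynamical fact: a pipe can be cleared and kept clear exactly while both its ends are externally capped, and an uncapped pipe perpetually recontaminates its stair (A) end.

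Next I would prove the two directions of the reduction. For the ``if'' direction, given a legal move sequence, by Proposition~\ref{asynch} I may use the asynchronous model, and I would translate each edge reversal into a continuous motion of the two subedge guards of that edge, sweeping their searchlights from capping one trapdoor to capping the other; the vertex invariants guarantee that at every instant all floor trapdoors that must stay capped do stay capped, so every pipe in use remains clearable and the stairs stay isolated from the perpetually contaminated grooves (Figure~\ref{figp:groove}) and widening (Figure~\ref{figp:attic}). When $e_a$ reaches its target orientation, guard $\ell_a$ caps the horizontal trapdoor of its stair, the stair guard caps the alcove and the vertical trapdoor, so the corresponding attic entrance can be sealed by an attic guard; doing the same for $e_b$ at the later moment $B$ lets the two attic guards, together with the sweeping motion along the corridor, clear the target ball. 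For the ``only if'' direction, I would argue contrapositively: any schedule that clears the ball must, at some instant, have sealed the $s_a$ stair (forcing $\ell_a$ into $e_a$'s target orientation with all its pipes capped, hence the floor configuration legal and $e_a$ at its target), and at a later instant the same for $s_b$; by Lemma~\ref{pipe2} and the unremovable grooves, the floor configuration must remain legal throughout, and the guard motions between these instants decode into a legal asynchronous move sequence from $A$ to $B$.

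The main obstacle I anticipate is the ``only if'' direction: I must rule out that the searcher exploits genuinely three-dimensional freedom — the very phenomenon (searchplanes failing to disconnect a polyhedron, Section~\ref{basic}) that distinguishes \TSSP\ from \SSP\ — to clear the ball \emph{without} faithfully simulating a legal NCL computation. Concretely, I need to show that the basement pipes, pits, grooves, and widening form an inescapable web of recontamination so that the only way to keep the attic clear is to keep \emph{all} relevant trapdoors capped in a pattern satisfying the vertex constraints at every instant; any ``cheating'' guard motion (e.g. briefly uncapping a trapdoor to gain an advantage) immediately leaks contamination up a pipe into a stair and thence into the attic via Lemma~\ref{pipe2}, destroying progress. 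Establishing this rigorously requires a careful global recontamination argument tying the perpetually contaminated floor regions to the A-ends of all pipes, and then to the attic whenever an attic entrance is momentarily unsealed. Once that global invariant is in place, the decoding of a clearing schedule into a legal move sequence, and the polynomiality and orthogonality of the construction (already argued via rational coordinates and intersection-resolution as in Chapter~\ref{chapter10}), complete the proof.

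\begin{proof}
See the plan above; the full argument verifies the gadget invariants, combines them with Lemmas~\ref{pipe1} and~\ref{pipe2}, and establishes the two-way correspondence between clearing schedules and legal \EEANCL\ move sequences.
\end{proof}
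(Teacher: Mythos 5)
Your plan follows the paper's proof in its essential structure: the same reduction from \EEANCL, the same use of Lemmas~\ref{pipe1} and~\ref{pipe2} as the engine tying trapdoor coverage to pipe clearability, the same positive-direction schedule (replicate $A$, sweep subedge guards along each reversing edge in order so no floor trapdoor is ever uncovered, then seal the stairs and sweep the attic), and the same recognition that the negative direction is where the real work lies. However, you leave that negative direction as an acknowledged to-do, and the missing mechanism is quite specific. The paper's argument is: to clear the ball the two attic guards must turn \emph{in concert away from} the entrances (the widening makes the other direction hopeless), so \emph{both} stair guards must simultaneously hold angle at least $\alpha$ (Figure~\ref{figp:stair}); in that position neither can cover its vertical trapdoor nor cap its alcove, so at that single instant $t$ all pipes attached to the vertical trapdoors \emph{and both alcoves} must already be clear; Lemma~\ref{pipe2} then forces every floor trapdoor to be covered at $t$, i.e.\ the subedge guards encode a legal configuration. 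The decoding of the intervening motions into a legal asynchronous move sequence then proceeds \emph{backward} from $t$: since in every legal configuration at least one distinguished edge is strictly opposite its target, one horizontal stair trapdoor is uncovered at $t$, yet its alcove is clear, so there are earlier times $t'$ and $t''$ at which $\ell_a$ and $\ell_b$ last capped their alcoves, and legality must persist on $[t',t]$ and $[t'',t]$ lest Lemma~\ref{pipe2} recontaminate those alcoves. Your sketch instead asserts that the schedule seals the $s_a$ stair "at some instant" and the $s_b$ stair "at a later instant," which conflates sealing the stairs (which happens simultaneously at $t$) with the last alcove visits (which happen at $t'$ and $t''$), and presumes an order on $t'$ and $t''$ that you cannot assume.

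That ordering point is not cosmetic: if $t''<t'$, the schedule simulates a legal sequence from a configuration where $e_b$ has its target orientation to one where $e_a$ does, which is the \emph{reverse} of what \EEANCL\ asks for. The paper closes this with the observation that a sequence of asynchronous moves from $A$ to $B$ is legal if and only if its reverse from $B$ to $A$ is legal, so either order contradicts the unsolvability of the \EEANCL\ instance. Without this remark (or an argument forcing $t'<t''$), your contrapositive does not go through. Your local gadget invariants for the AND/OR vertices and the final remarks on polynomial-size rational coordinates and orthogonality are fine and match the paper.
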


\begin{proof}
We give a reduction from \EEANCL\ to \TRSSP, by proving that the target ball in the above construction is clearable if and only if the two distinguished edges $e_a$ and $e_b$ can be oriented in their target directions one after the other, by a legal sequence of asynchronous moves. Observe that our construction is obviously a polyhedron (it is indeed connected) whose vertices' numerical coordinates may be chosen to be polynomially bounded in size, with respect to the size of the NCL network. Moreover, the orthogonal drawing of the network can be obtained in linear time, for instance as shown in~\cite{planar}.

\paragraph{Positive instances.}

Suppose that the given instance of \EEANCL\ is solvable. Then there exists a legal sequence $S$ of asynchronous moves that, starting from a configuration $A$ in which $e_a$ is in its target direction, ends in a configuration $B$ in which $e_b$ is in its target direction. By the assumptions we made on NCL networks, $e_a\neq e_b$ and both $e_a$ and $e_b$ are reversed by $S$ at least once.

We start by \textquotedblleft replicating\textquotedblright\ configuration $A$ on the subedge guards in our construction: if an edge has an orientation in $A$, then all its subedges in the drawing inherit its orientation, and all the corresponding subedge guards are oriented accordingly. As a result, every trapdoor in the floor (not the four trapdoors in the stairs) is covered by a guard, since $A$ is a legal configuration. Indeed, the structure of the AND/OR vertices that we built implies that the NCL constraints on a vertex are satisfied if and only if all the trapdoors in its polyhedral model are covered. Also the trapdoors between subedges happen to be covered, because all the guards in a subedge chain are oriented in the same \textquotedblleft direction\textquotedblright. Incidentally, $\ell_a$ covers the horizontal trapdoor in its corresponding stair, as well. Next we cover the vertical trapdoors in both stairs with the stair guards, thus incidentally also capping both alcoves. Finally, we cover both attic entrances with the two attic guards.

In order to clear the target ball in the attic, our schedule proceeds as follows.
\begin{enumerate}
\item Clear the pipes that are attached to the stair trapdoors corresponding to $s_a$. This is feasible by Lemma~\ref{pipe1}, because both ends of such pipes are capped.
\item\label{st2} Replicate all the moves of $S$, with the correct timing. If a move reverses edge $\{u,v\}$ by turning it from vertex $u$ toward vertex $v$, we first reverse the guard corresponding to the subedge incident to $u$, then we reverse all the other subedge guards one by one in order, until we reverse the last guard in the chain, whose subedge is incident to $v$. By doing so, no trapdoor is ever uncovered, hence no pipe is recontaminated. Moreover, both $\ell_a$ and $\ell_b$ reverse during the process, incidentally clearing both alcoves, which are still capped by the stair guards.
\item Clear the pipes attached to the stair trapdoors corresponding to $s_b$, again by Lemma~\ref{pipe1}. Indeed, after Step~\ref{st2}, these pipes are all capped.
\item\label{st3} Clear the regions underlying the two attic entrances, by turning both stair guards by $\alpha$ (refer to Figure~\ref{figp:stair}). Angle $\alpha$ is such that, in the end, both attic entrances are separated from the contaminated floor by the searchlights of the stair guards. No contamination is possible through the stair trapdoors either, because all the pipes are still clear (their B ends are still capped).
\item Turn the two attic guards in concert, until they clear the target ball. No recontamination may occur through the attic entrances, whose underlying regions have effectively been cleared in Step~\ref{st3}, whereas the contaminated widening of the corridor is always separated from the portion of corridor that has been swept.
\end{enumerate}

\paragraph{Negative instances.}

Conversely, suppose that no legal sequence of asynchronous moves solves the given instance of \EEANCL, and let us prove that the target ball in the attic is unclearable.

In order to clear the target ball, both attic guards have to turn in concert, away from the attic entrances. Indeed, just one guard is insufficient to clear anything. On the other hand, the attic guards cannot sweep toward the entrances, because of the unavoidable recontaminations from the widening in the corridor.

Therefore, while the attic guards operate, recontamination has to be avoided from the attic entrances. It follows that both stair guards must keep the attic separated from the unsearchable floor. Referring to Figure~\ref{figp:stair}, each stair guard's angle has to be at least $\alpha$.

When in that position, the stair guards cannot cover the vertical trapdoors, nor cap the alcoves. This means that all the pipes attached to a vertical trapdoor and both alcoves have to be simultaneously clear at some time $t$. According to Lemma~\ref{pipe2}, since the stair guards are not capping the A ends of those pipes, then their B ends have to be capped. In other words, all the trapdoors in the floor (not in the stairs) have to be covered. Equivalently, the subedge guards' orientations must correspond to a legal configuration of an asynchronous NCL machine operating in the given network. If the orientations of the subedge guards in the chain corresponding to a same edge do not agree with each other, then that edge is considered in a reversal phase.

Recall that, in any legal configuration of the NCL network, at least one of the two distinguished edges must be oriented strictly opposite to its target direction (i.e., it is not even in a reversal phase), hence at least one horizontal trapdoor in a stair (say, the stair attached to $s_a$, without loss of generality) must be uncovered at time $t$. However, its alcove and attached pipes must indeed be clear, which means that $\ell_a$ has capped the pipes for some time, allowing them to clear themselves, and then has left the alcove for the last time at $t'$, with $t'<t$. Between time $t'$ and time $t$, the orientation of the subedge guards must always correspond to a legal configuration of the NCL network, otherwise the alcove just cleared by $\ell_a$ would be recontaminated by its pipes, again by Lemma~\ref{pipe2}.

Observe that, since at time $t'$ guard $\ell_a$ is not oriented strictly opposite to its (inherited) target direction, then $\ell_b$ must be. But, at time $t$, also the alcove corresponding to $s_b$ must be clear, hence there must be some time $t''$, with $t''<t$, when $\ell_b$ last touched that alcove. Once again, between time $t''$ and time $t$, all the subedge guards must be oriented according to some legal configuration of the contraint graph.

Between time $t'$ and time $t''$ (no matter which comes first), the search schedule must simulate an asynchronous legal sequence of moves between a configuration in which $e_a$ has its target orientation (more appropriately, $e_a$ is in a reversal phase), and one in which $e_b$ does. By hypothesis on the given NCL network, such sequence of moves does not exist: notice in fact that a sequence of moves from $A$ to $B$ is legal if and only if its reverse sequence from $B$ to $A$ is legal.

This concludes the reduction. As a result, because \EEANCL\ is \PSPACE-complete by Corollary~\ref{eeancl}, \TRSSP\ is strongly \PSPACE-hard.
\end{proof}

\section{\PSPACE-completeness for polygons}

Here we consider the 2-dimensional \PSSPext (\PSSP), proving that it is \PSPACE-complete, even for orthogonal polygons. Our reduction is not exactly an improvement on the one given in the previous section, in that our target area is not an arbitrarily small ball, but it is a polygon that could be fairly large. However, we will also show how this target polygon can be assumed to be rectangular.

To prove that \PSSP belongs to \PSPACE, we use the discretization technique of~\cite{bullo} (see also Chapter~\ref{chapter1}), and to prove that \PSSP is \PSPACE-hard we give a reduction from \EEANCL.

\subsection*{\PSSP is in \PSPACE}

\begin{lemma} \label{lemma1}
\PSSP $\in$ \PSPACE.
\end{lemma}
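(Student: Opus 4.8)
The goal is to show that the Partial Searchlight Scheduling Problem for polygons lies in \PSPACE. The plan is to adapt the exact cell decomposition technique of Obermeyer, Ganguli and Bullo (cited in the excerpt as~\cite{bullo}), which was already observed in Lemma~\ref{lemma1}'s analogue for \SSP\ to place the full search problem in \PSPACE. The key idea is that, although a schedule is a priori a collection of continuous functions into $S^1$ (an uncountable search space), the qualitative ``combinatorial state'' of the search changes only at discrete events, so the reachability question can be phrased as a reachability problem in a finite (albeit exponentially large) graph that can be explored in polynomial space.

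First I would recall the discretization machinery: each guard has a finite set of \emph{critical angles} (conditions (i)--(iv) listed in Chapter~\ref{chapter1}), and it is known that if a successful schedule exists, then one exists in which every guard stops or changes direction only at critical angles and only one guard moves at a time (a \emph{sequential critical} schedule). I would then define the notion of a \emph{configuration}: a tuple recording, for each guard, which of its finitely many critical angular intervals it currently occupies, together with the contamination status (clear or contaminated) of each cell in the planar arrangement induced by all guards aimed at all of their critical angles. Since there are polynomially many guards and each has polynomially many critical angles, the arrangement has polynomially many cells, so a single configuration is describable in polynomial space, even though the total number of configurations is exponential.

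Next I would argue that the transitions between configurations are polynomial-space checkable: moving a single guard's searchlight from one critical interval to the adjacent one induces a well-defined update of the clear/contaminated labelling of the cells (a cell stays clear only if it remains separated from every contaminated cell by an illuminated barrier throughout the sweep, exactly as in Definition of contaminated point in Chapter~\ref{chapter7}), and this update is computable in polynomial time from the two configurations. The decision problem then becomes: starting from the all-contaminated initial configuration, is there a path in the configuration graph to some configuration in which every cell of the target subpolygon is labelled clear? This is an instance of reachability in an implicitly represented, exponentially large graph whose nodes and edges admit polynomial-size descriptions and polynomial-time local checks. By Savitch's theorem, $\NPSPACE = \PSPACE$, so it suffices to give a nondeterministic polynomial-space procedure: guess the next configuration, verify in polynomial time that the transition is legal, overwrite the current configuration, and accept when the target is clear; a polynomial step-counter bounds the search to avoid infinite loops, using the fact that the number of distinct configurations is at most singly exponential.

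The main obstacle will be justifying rigorously that the \emph{partial} clearing objective, and the target subpolygon in particular, interacts correctly with the critical-angle discretization: the discretization theorems were originally stated for clearing the \emph{entire} polygon, so I must verify that restricting the success condition to a subregion does not introduce schedules that exploit non-critical angles in an essential way. I expect this to reduce to the observation that the contamination dynamics of every cell are governed by the same barrier-crossing events regardless of which cells we ultimately care about, so the same finite set of critical angles suffices and the reduction of a general schedule to a sequential critical one goes through verbatim. Once that is established, the Savitch-style argument yields membership in \PSPACE.
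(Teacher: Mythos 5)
Your proposal is correct and follows essentially the same route as the paper: the exact cell decomposition of Obermeyer et al.\ yields a configuration graph with polynomial-size vertices and polynomial-time adjacency checks, which is navigated nondeterministically and combined with Savitch's theorem, and the success condition is simply restricted to the cells meeting the target subregion. The point you flag as the ``main obstacle''---that the critical-angle discretization is insensitive to which cells one ultimately wants clear---is exactly the observation the paper relies on, so no further work is needed.
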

\begin{proof}
As explained in~\cite{bullo} and Chapter~\ref{chapter1}, a technique known as \emph{exact cell decomposition} allows to reduce the space of all possible schedules to a finite graph $G$. Each searchlight has a linear number of \emph{critical angles}, which yield an overall partition of the polygon into a polynomial number of \emph{cells}. Searchlights take turns moving, and can stop or change direction only at critical angles. Thus, a vertex of $G$ encodes the \emph{status} of each cell (either \emph{contaminated} or \emph{clear}) and the critical angle at which each searchlight is oriented.

As a consequence, $G$ can be navigated nondeterministically by just storing one vertex at a time, which requires polynomial space. Notice that deciding if two vertices of $G$ are adjacent can be done in polynomial time: an edge in $G$ represents a move of a single searchlight between two consecutive critical angles, and the updated status of each cell can be easily evaluated. Indeed, cells' vertices are intersections of lines through input points, hence their coordinates can also be efficiently stored and handled as rational expressions involving the input coordinates.

Now, in order to verify that a path in $G$ is a witness for SSP, one checks if the last vertex encodes a status in which every cell is clear. But the very same cell decomposition works also for \PSSP: the analysis in~\cite{bullo} applies even if just a subregion of the polygon has to be searched, and a path in $G$ is a witness for \PSSP if and only if its last vertex encodes a status in which every cell that has a non-empty intersection with the target subregion is clear.

Since $\PSPACE = \NPSPACE$, due to Savitch's theorem (see~\cite{papadimitriou}), our claim follows.
\end{proof}

Recalling that \SSP is a special case of \PSSP, we immediately have the following:

\begin{corollary}
\SSP $\in$ \PSPACE.\hfill\qed
\end{corollary}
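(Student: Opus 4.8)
The plan is to derive this directly from Lemma~\ref{lemma1} by exhibiting \SSP as a trivial special case of \PSSP. First I would observe that an instance of \SSP consists of a polygon $P$ together with a multiset of guards, and it asks whether the guards have a schedule clearing the \emph{entire} polygon. An instance of \PSSP, on the other hand, carries an additional input: a target subpolygon to be cleared. The obvious map sends an \SSP instance $(P, \text{guards})$ to the \PSSP instance $(P, \text{guards}, P)$ in which the target subpolygon is taken to be all of $P$ itself.

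Next I would verify that this map is a valid reduction preserving the yes/no answer. This is immediate from the definitions: a schedule is a search schedule for \SSP precisely when every point of $P$ is clear at the final time, which is exactly the condition that the target subpolygon $P$ is cleared in the corresponding \PSSP instance. Hence the \SSP instance is searchable if and only if the associated \PSSP instance is a yes-instance. The map is computable using logarithmic space (indeed it merely copies the polygon description and records $P$ as the target), so it is a legitimate reduction within the resources needed to preserve membership in \PSPACE.

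Finally, since Lemma~\ref{lemma1} establishes $\PSSP \in \PSPACE$, and \PSPACE is closed under such reductions, I would conclude $\SSP \in \PSPACE$. The argument is essentially a one-line corollary, so there is no genuine obstacle to overcome; the only point requiring a moment of care is the confirmation that clearing the entire polygon and clearing the target region coincide when the target is chosen to be the whole polygon, which follows at once from the \PSSP definition.

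\begin{proof}
The \SSPext\ is the special case of the \PSSPext\ in which the subpolygon to be cleared is the whole input polygon $P$. More precisely, any instance of \SSP, consisting of a polygon $P$ and a multiset of guards, is transformed into an instance of \PSSP\ on the same polygon and the same guards, with $P$ itself as the target subpolygon; this transformation is clearly computable in logarithmic space. A schedule clears the target region of the resulting \PSSP\ instance if and only if it clears every point of $P$, which is exactly the condition for it to be a search schedule for the original \SSP\ instance. Hence the \SSP\ instance is searchable if and only if the associated \PSSP\ instance is a yes-instance. By Lemma~\ref{lemma1}, $\PSSP \in \PSPACE$, and since \PSPACE\ is closed under logarithmic-space reductions, it follows that $\SSP \in \PSPACE$.
\end{proof}
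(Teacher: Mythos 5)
Your proof is correct and matches the paper's own justification, which simply notes that \SSP is the special case of \PSSP in which the target subregion is the entire polygon and then invokes Lemma~\ref{lemma1}. Your additional verification that the identity-style reduction preserves the answer and is computable in logarithmic space is a harmless elaboration of the same one-line argument.
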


\subsection*{\PSSP is \PSPACE-hard}

For the \PSPACE-hardness part, we first give a reduction in which the target region to be cleared is an orthogonal hexagon. Later, we will explain how we would have to modify our construction, should we insist on having a rectangular (hence convex) target region.

\begin{lemma} \label{lemma2}
\EEANCL $\preceq_{\P}$ \PSSP restricted to orthogonal polygons.
\end{lemma}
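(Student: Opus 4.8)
The plan is to mimic the three‑dimensional reduction of Theorem~\ref{trssp}, planarizing each gadget so that the whole instance becomes a single orthogonal polygon, while keeping the governing invariant intact: the target subregion is clearable if and only if the given \EEANCL\ instance admits a legal sequence of asynchronous moves carrying a configuration in which $e_a$ has its target orientation to one in which $e_b$ does. As in the polyhedral case, I would first produce an orthogonal planar drawing of the $3$‑connected $3$‑regular NCL graph in linear time via the algorithm of~\cite{planar}, turn each \emph{subedge} of the drawing into a thin corridor carrying a point \emph{subedge guard}, and encode the orientation of each network edge by the common aiming direction of the guards along its subedge chain. The AND and OR vertices become planar junctions (the flat analogues of Figures~\ref{figp:or} and~\ref{figp:and}) whose visibility is arranged so that the NCL constraint at a vertex is satisfied exactly when every \emph{trapdoor} of its gadget is covered by the ray of some incident guard.

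The remaining gadgets then receive planar counterparts. I would replace the two \emph{stairs} by planar structures equipped with \emph{alcoves} and \emph{stair guards} that can seal the target region off from the rest of the polygon through the angle‑$\alpha$ mechanism of Figure~\ref{figp:stair}; replace the \emph{attic} by the target region itself (an orthogonal hexagon) together with a \emph{widening} that acts as a perpetual source of recontamination and is clearable only by two guards turning in concert, exactly as the attic guards do in Theorem~\ref{trssp}; and reproduce the \emph{grooves} (Figure~\ref{figp:groove}) and \emph{pits} as reflex pockets that can never be permanently isolated, so that the \qq{floor} stays hopelessly contaminated while the target may still be cleared. I would then re‑establish the two pipe lemmas in the plane: that a pipe whose both ends are externally capped can be cleared (the analogue of Lemma~\ref{pipe1}), and that an uncapped pipe acts as a one‑way recontaminator (the analogue of Lemma~\ref{pipe2}). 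With these in hand, the two directions of the equivalence transcribe essentially verbatim from the proof of Theorem~\ref{trssp}, and \PSPACE‑hardness follows because \EEANCL\ is \PSPACE‑complete by Corollary~\ref{eeancl}.

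The hard part will be twofold, and is exactly what forces the planar reduction to be \qq{more sophisticated} than its spatial parent. First, there is a routing obstruction: the three‑dimensional proof connected \emph{each} stair trapdoor to \emph{every} floor trapdoor by pipes routed harmlessly in the basement, with mutual crossings removed by shrinking. A simply connected orthogonal polygon has no \qq{below the floor} into which to divert corridors, and a complete connection between the stair trapdoors and all floor trapdoors admits no crossing‑free embedding in a simply connected planar region. The reduction must therefore replace the all‑to‑one pipe network by a non‑crossing monitoring mechanism — for instance by threading the recontamination corridors through the faces of the planar NCL drawing, or by redesigning how the legality of the encoded configuration is enforced — so that the \qq{floor contaminates the stairs unless the configuration is legal} property is preserved without any pipe crossing another.

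Second, the disconnection behaviour is reversed relative to the polyhedral setting. In the plane a boundary guard's ray aimed into the interior of a simple polygon \emph{does} disconnect it — this is precisely why every simply connected boundary‑guarded instance of \SSP\ is searchable (see~\cite{search1})—whereas the whole force of Theorem~\ref{trssp} came from searchplanes that need not be filling. Consequently I cannot rely on non‑disconnecting cuts to manufacture perpetual recontamination; the pits, grooves and widening must be built purely from reflex pockets, and the delicate verification is that no ray configuration attainable while the guards are simultaneously enforcing the configuration check can isolate these pockets. Making this resource‑contention argument watertight, together with resolving the routing obstruction above, is where the bulk of the work lies; once both are settled, combining this lemma with Lemma~\ref{lemma1} yields \PSPACE‑completeness of \PSSP\ for orthogonal polygons.
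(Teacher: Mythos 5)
Your high-level plan matches the paper's: planarize the NCL network, encode edge orientations by the common aiming direction of subsegment guards along a corridor, realize AND/OR constraints as junctions whose trapdoors must all be covered, and replace the non-filling-searchplane trick of Theorem~\ref{trssp} by small unclearable reflex pockets (the paper's \emph{nooks}) that perpetually recontaminate anything not sealed off from them. That second point is exactly how the paper handles the reversed disconnection behaviour, so you have correctly identified one of the two essential adaptations.

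However, the other adaptation --- what you call the routing obstruction --- is left genuinely unresolved in your proposal, and your framing of it rests on a false premise. The statement only requires the target polygon to be orthogonal, not simply connected (the paper explicitly leaves the simply connected case as an open problem), so there is no need for a crossing-free embedding at all. The paper's concrete solution is to let corridors cross one another and to let turning pipes cross corridors \emph{physically}, taking the planar union of the overlapping thin rectangles; each such crossing is then policed by a pair of dedicated \emph{intersection guards} whose rays separate the pipe from the corridor without disconnecting the pipe, the pipe narrowing at each crossing while its centrally placed pipe guard still sees all the way through. Your proposal instead declares that the all-to-one pipe network ``admits no crossing-free embedding'' and gestures at ``threading the recontamination corridors through the faces of the drawing'' or ``redesigning how legality is enforced'' without supplying either mechanism; since the entire negative-instance argument hinges on every vertex and joint being monitored by a pipe reaching the target area, this is the core of the construction and cannot be deferred. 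A secondary divergence: the paper does not planarize the stairs and attic at all --- the upper area is redesigned so that a single guard sweeps across the pipe mouths, capping them one at a time to clear half of the target region (with a nook forcing it to commit), while a second guard, gated by an extra OR-pipe tied to $e_b$, clears the remaining rectangle last; the two distinguished edges are rerouted so that their corridors terminate at the upper area and gate these two guards directly. Your ``two guards turning in concert over a widening'' picture does not survive the translation, so even granting the routing, the endgame of your reduction would still need to be reworked along these lines.
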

\begin{proof}
We show how to transform a given asynchronous NCL machine $G$ with two distinguished edges $e_a$ and $e_b$ into an instance of \PSSP.

\paragraph{Construction.}
A rough sketch of our construction is presented in Figure~\ref{fig1}. All the vertices of $G$ are placed in a row~(a), and are connected together by a network of thin \emph{corridors}~(b), turning at right angles, representing edges of $G$. Each \emph{subsegment} of a corridor is a thin rectangle, containing a \emph{subsegment guard} in the middle (not shown in Figure~\ref{fig1}). Two subsegments from different corridors may indeed cross each other like in~(c), but in such a way that the crossing point is far enough from the ends of the two subsegments and from the two subsegment guards (so that no subsegment guard can see all the way through another subsegment). All the vertices of $G$ and all the \emph{joints} between consecutive subsegments (i.e., the turning points of each corridor) are connected via extremely thin \emph{pipes}~(d) to the upper area~(e), which contains the target region (shaded in Figure~\ref{fig1}).

\begin{figure}[ht]
\centering
\includegraphics[width=\linewidth]{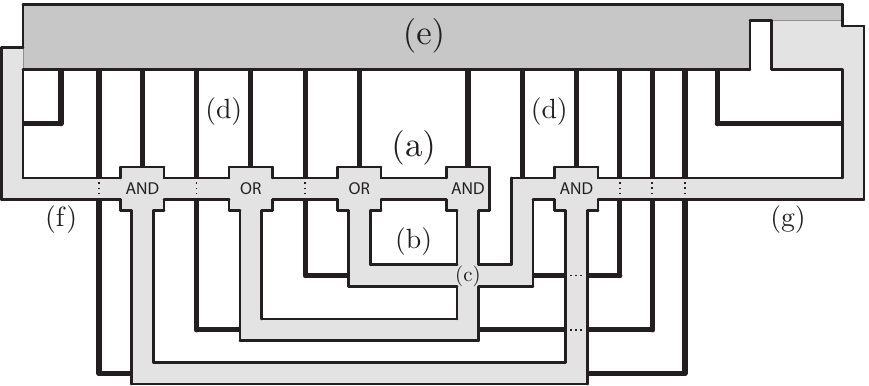}
\caption{Construction overview.}
\label{fig1}
\end{figure}

Two corridors~(f) and~(g) also reach the upper area, and they correspond to the distinguished edges of $G$, $e_a$ and $e_b$, respectively. That is, if $e_a=\{u,v\}$, and the target orientation of $e_a$ is toward $v$, then the corridor corresponding to $e_a$ connects vertex $u$ in our construction to the upper area~(e), rather than to $v$. The same holds for $e_b$. Indeed, observe that we may assume that $e_a$ and $e_b$ are reversed only once (on the first and last move, respectively) in a sequence of moves that solves \EEANCL on $G$. As a consequence, contributions to vertex constraints given by distinguished edges oriented in their target direction may be ignored.

Each pipe bends at most once, and contains one \emph{pipe guard} in the middle, lying on the boundary. Notice that straight pipes never intersect corridors, but some turning pipes do. Figure~\ref{fig2} shows a turning pipe, with its pipe guard~(a) and an intersection with a corridor~(b) (proportions are inaccurate). The \emph{intersection guards}~(c) separate the pipe from the corridor with their searchlights (dashed lines in Figure~\ref{fig2}), without \qq{disconnecting} the pipe itself. Although a pipe narrows every time it crosses a corridor, its pipe guard can always see all the way through it, because it is located in the middle. The small \emph{nook}~(d) is unclearable because no guard can see its bottom, hence it is a constant source of recontamination for the target region~(e), unless the pipe guard is covering it with its laser. (Each straight pipe also has a similar nook.)

\begin{figure}[h]
\centering
\includegraphics[width=0.75\linewidth]{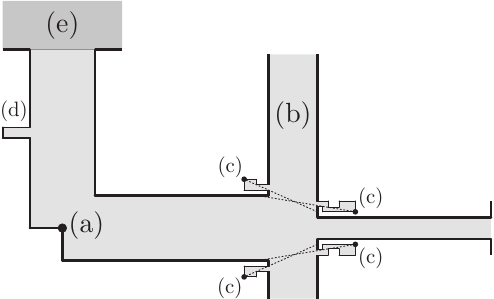}
\caption{Intersection between a pipe and a corridor.}
\label{fig2}
\end{figure}

Once again, corridor guards implement edge orientations in $G$: whenever all the subsegment guards in a corridor connecting vertices $u$ and $v$ have their searchlights oriented in the same \qq{direction} from vertex $u$ to vertex $v$, it means that the corresponding edge $\{u,v\}$ in $G$ is oriented toward $v$.

Figure~\ref{fig3} shows an OR vertex. The three subsegment guards from incoming corridors~(a) can all \qq{cap} pipe~(b) with their searchlights, and nook~(c) guarantees that the pipe is recontaminated whenever all three guards turn their searchlights away.

\begin{figure}[h]
\centering
\includegraphics[width=0.7\linewidth]{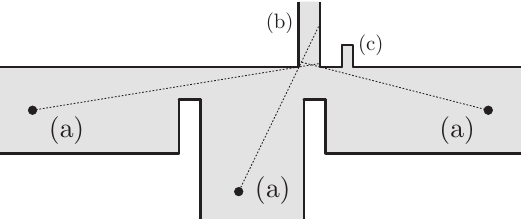}
\caption{OR vertex.}
\label{fig3}
\end{figure}

AND vertices are implemented as in Figure~\ref{fig4}. The two subsegment guards~(a) correspond to input edges, and are able to cap one pipe~(e) each, whereas guard~(c) can cover them both simultaneously. But that leaves pipe~(d) uncovered, unless it is capped by guard~(b), which belongs to the corridor corresponding to the output edge. Again, uncovered pipes are recontaminated by the unclearable nooks~(f).

\begin{figure}[h]
\centering
\includegraphics[width=0.7\linewidth]{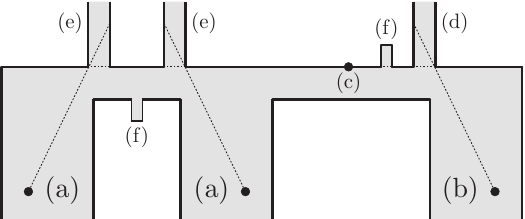}
\caption{AND vertex.}
\label{fig4}
\end{figure}

Joints between consecutive subsegments of a corridor may be viewed as OR vertices with two inputs, shaped like in Figure~\ref{fig3}, but without the corridor coming from the left.

Finally, Figure~\ref{fig5} shows the upper area of the construction, reached by the distinguished edges $e_a$ and $e_b$ ((a) and~(b), respectively), and by all the pipes~(c). The guard in~(d) can cap all the pipes, one at a time, and its purpose is to clear the left part of the target region, while the small rectangle~(e) on the right will be cleared by the guard in~(f). The two pipes~(g) implement additional OR vertices with two inputs, and prevent~(d) and~(f) from acting, unless the respective distinguished edges are in their target orientations. Nook~(h) will contaminate part of the target region, unless~(d) is aiming down. Nooks~(i) prevent area~(e) from staying clear whenever guard~(f) is not aiming up. The guard in~(j) separates the two parts of the target region with its laser, so that they can be cleared in two different moments.

\begin{figure}[h]
\centering
\includegraphics[width=\linewidth]{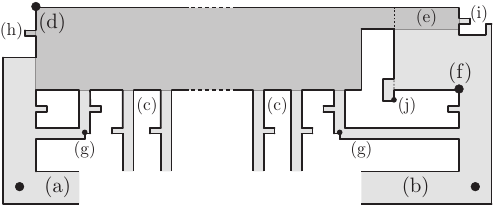}
\caption{Target region.}
\label{fig5}
\end{figure}

\paragraph{Positive instances.}
Suppose $G$ is a solvable instance of \EEANCL. Then we can \qq{mimic} the transition from configuration $A$ to configuration $B$ by turning subsegment guards. Specifically, if edge $e=\{u,v\}$ in $G$ changes its orientation from $u$ to $v$, then all the subsegment guards in the corridor corresponding to $e$ turn their lasers around, one at a time, starting from the guard closest to $u$. Before this process starts, each pipe has one end capped by some subsegment guard, and in particular pipe~(g) on the left of Figure~\ref{fig5} is capped by the guard in~(a). Hence, guard~(d) is free to turn and cap all the pipes one by one, stopping for a moment to let each pipe's internal guard clear the pipe itself (which now has both ends capped) and cover its nook (see Figure~\ref{fig2}). As a result, the left part of the target region can be cleared by rotating~(d) clockwise, from right to down. Then the subsegment guards start rotating as explained above, until configuration $B$ is reached. If done properly, this keeps all the pipes capped and clear, thus preventing the left part of the target region from being recontaminated. When $B$ is reached, guard~(f) can turn up to clear~(e) and finally solve our \PSSP instance. Notice that each subsegment guard has to turn in the correct direction (clockwise or counterclockwise) when it turns away from a pipe, to avoid the formation of \qq{contaminated bubbles} that may extend to the whole pipe.

\paragraph{Negative instances.}
Conversely, suppose that $G$ is not solvable. Observe that rectangle~(e) in Figure~\ref{fig5} has to be cleared by guard~(f) as a last thing, because it will be recontaminated by nooks~(i) as soon as~(f) turns away. On the other hand, as soon as a pipe has both ends uncapped by external guards, some portion of the target region necessarily gets recontaminated by some nook, regardless of where the pipe guard is aiming its searchlight. But guard~(d) can cap just one pipe at a time and, while it does so, nook~(h) keeps some portion of the target region contaminated. Thus, the entire process must start from a configuration $A$ in which all the pipes are simultaneously capped and guard~(d) is free to turn right (i.e., $e_a$ is in its target orientation), then proceed without ever uncapping any pipe (i.e., preserving legality), and finally reach a configuration $B$ in which guard~(f) is free to turn up (i.e., $e_b$ is in its target orientation). By assumption this is impossible, hence our \PSSP instance is unsolvable.
\end{proof}

By putting together Lemma~\ref{lemma1} and Lemma~\ref{lemma2}, we immediately obtain the following:

\begin{theorem} \label{maintheorem}
Both the \PSSPext and its restriction to orthogonal polygons are strongly \PSPACE-complete. \hfill \qed
\end{theorem}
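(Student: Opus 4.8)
The plan is to obtain Theorem~\ref{maintheorem} directly by combining the two preceding lemmas, so no new machinery is required. First I would invoke Lemma~\ref{lemma1} to place \PSSP in \PSPACE. Since orthogonal polygons form a subclass of general polygons, the very same exact cell decomposition argument applies verbatim: the space of schedules reduces to a finite graph $G$ whose vertices (encoding cell statuses and critical-angle orientations) can be navigated nondeterministically while storing only one vertex at a time, and Savitch's theorem (using $\PSPACE=\NPSPACE$) collapses this to deterministic polynomial space. Because Lemma~\ref{lemma1} never exploits non-orthogonality, the containment holds for both the unrestricted \PSSP and its orthogonal restriction at once.

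For the lower bound I would appeal to Lemma~\ref{lemma2}, which exhibits a polynomial-time reduction $\EEANCL \preceq_{\P} \PSSP$ restricted to orthogonal polygons. Combined with Corollary~\ref{eeancl}, stating that \EEANCL\ is \PSPACE-complete and hence \PSPACE-hard, this immediately shows that \PSSP restricted to orthogonal polygons is \PSPACE-hard. Hardness then propagates upward along containment: since the orthogonal instances are a special case of all \PSSP instances, the unrestricted \PSSP is \PSPACE-hard as well. Pairing this with the membership from Lemma~\ref{lemma1} yields \PSPACE-completeness for both versions simultaneously.

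Two points warrant care rather than constituting genuine obstacles. The first is the qualifier \emph{strongly}: here I would note that the construction of Lemma~\ref{lemma2} produces an orthogonal polygon and target region whose vertex coordinates are rationals of size polynomial in the input NCL network, so all numerical magnitudes remain polynomially bounded and the hardness is strong in the usual sense. The second is keeping the direction of each implication explicit, namely that a polynomial-time reduction from a \PSPACE-hard problem establishes \PSPACE-hardness of its target, that hardness for a subclass transfers to the superclass, and that membership for the superclass transfers down to the subclass. The substantive content lives entirely in Lemmas~\ref{lemma1} and~\ref{lemma2}; the only subtlety in assembling them is ensuring the orthogonality restriction is honored on both sides at the same time, which holds because Lemma~\ref{lemma1} is indifferent to orthogonality while Lemma~\ref{lemma2} outputs orthogonal instances.
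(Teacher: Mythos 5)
Your proposal is correct and matches the paper's argument exactly: the theorem is obtained by combining Lemma~\ref{lemma1} (membership via exact cell decomposition and Savitch's theorem) with Lemma~\ref{lemma2} (hardness via the reduction from \EEANCL, which outputs orthogonal instances), with strongness following from the polynomially bounded coordinates. No further commentary is needed.
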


The term \qq{strongly} is implied by the fact that all the vertex coordinates generated in the \PSPACE-hardness reduction of Lemma~\ref{lemma2} are numbers with polynomially many digits (or can be made so through negligible adjustments).

\subsection*{Convexifying the target region} \label{convexregion}

We can further improve our Theorem~\ref{maintheorem} by making the target region in Lemma~\ref{lemma2} rectangular.

Our new target region has the same width as the previous one, and the height of the small rectangle~(e) in Figure~\ref{fig5}. In order for this to work, we have to make sure that some portion of the target region is \qq{affected} by each contaminated pipe that is not capped by guard~(d), no matter where \emph{all} the pipe guards are oriented. To achieve this, we make pipes reach the upper area of our construction at increasing heights, from left to right, in a staircase-like fashion.

\begin{figure}[h]
\centering
\includegraphics[width=0.85\linewidth]{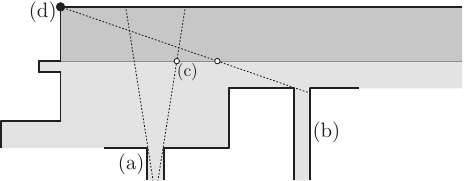}
\caption{Rectangular target region.}
\label{fig6}
\end{figure}

Assume we already placed pipe~(a) as in Figure~\ref{fig6}, and we need find the correct height at which it is safe to connect pipe~(b). First we find the rightmost intersection~(c) between a searchlight emanating from the pipe guard of~(a) and the lower border of the target region. Then we set the height of pipe~(b) so that it is capped by guard~(d) when it aims slightly to the right of~(c). This is always feasible, provided that pipes are thin enough, which is not an issue.

After we have set all pipes' heights from left to right, the construction is complete and the proof of Lemma~\ref{lemma2} can be repeated verbatim, yielding:

\begin{theorem}
Both the \PSSPext and its restriction to orthogonal polygons with rectangular target regions are strongly \PSPACE-complete. \hfill \qed
\end{theorem}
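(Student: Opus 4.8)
The plan is to reduce the final theorem entirely to the machinery already established in the previous statements, so that only a geometric refinement of the target region remains to be verified. The membership in \PSPACE\ is inherited directly from Lemma~\ref{lemma1}, since restricting the target region to a rectangle only restricts the problem instances, and the exact cell decomposition argument is insensitive to the shape of the target. Thus the entire burden is \PSPACE-hardness, and the natural strategy is to re-run the reduction from \EEANCL\ given in Lemma~\ref{lemma2} while replacing the orthogonal hexagonal target by a rectangle of the same width and of height equal to that of the small rectangle~(e) in Figure~\ref{fig5}. The key observation is that the correctness proof of Lemma~\ref{lemma2} depends only on two qualitative properties of the target region: first, that clearing it forces guard~(f) to act last (because of nooks~(i)), and second, that every contaminated uncapped pipe affects \emph{some} portion of the target, no matter where all the pipe guards aim. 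Once these two properties are preserved, the positive-instance and negative-instance arguments transfer verbatim.

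First I would fix all the combinatorial structure of the reduction exactly as in Lemma~\ref{lemma2}: the row of AND/OR vertices, the corridors with subsegment guards, the pipes with their internal nooks, and the two distinguished corridors for $e_a$ and $e_b$. The only modification is in the upper area, where I would arrange the pipes to reach the target at strictly increasing heights, left to right, in a staircase fashion, as indicated in Figure~\ref{fig6}. The core of the argument is an inductive placement of the pipe heights: assuming pipes up to some pipe~(a) have been placed, I would locate the rightmost point~(c) at which a searchlight from the guard of~(a) meets the lower boundary of the target, and then set the height of the next pipe~(b) so that guard~(d) caps it precisely when aiming just to the right of~(c). Because pipes can be made arbitrarily thin, this placement is always feasible and introduces no new intersections that cannot be resolved by the usual shrinking argument.

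The heart of the verification is to confirm that the staircase arrangement guarantees the second qualitative property above: if any pipe is simultaneously uncapped by guard~(d) and contaminated, then regardless of where \emph{every} pipe guard points, some sliver of the rectangular target is contaminated. The point is that the increasing-height layout ensures the \qq{shadow} that guard~(d) can protect, when it aims to cap one particular pipe, never fully covers the boundary slice associated with a different uncapped pipe; the rightmost-intersection construction is exactly what forces this non-overlap. Once this is in place, the negative-instance argument of Lemma~\ref{lemma2} applies: rectangle~(e) must be cleared last by guard~(f), while guard~(d) can cap only one pipe at a time, so clearing the target demands a configuration $A$ with all pipes capped and $e_a$ in its target orientation, a legality-preserving sequence of moves, and a final configuration $B$ with $e_b$ in its target orientation, which is impossible by hypothesis on $G$.

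I expect the main obstacle to be the geometric bookkeeping of the staircase placement, specifically proving rigorously that the heights can be chosen so that \emph{no} joint assignment of pipe-guard orientations leaves the target fully protected while some pipe is uncapped and contaminated. The subtlety is that each pipe guard has its own degrees of freedom, and one must rule out an adversarial combination of orientations that conspires to shield the entire target boundary; the rightmost-intersection rule is designed to defeat exactly this, but making the quantifier-over-all-pipe-guards argument airtight, rather than checking one pipe at a time, is where the care is needed. After that, the positive-instance schedule is a routine adaptation of the one in Lemma~\ref{lemma2}, and the \qq{strongly} qualifier follows as before from the polynomial bit-length of all generated coordinates.
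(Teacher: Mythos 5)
Your proposal is correct and follows essentially the same route as the paper: keep the Lemma~\ref{lemma2} construction, replace the hexagonal target by a rectangle of the same width and the height of region~(e), and stagger the pipe heights via the rightmost-intersection rule of Figure~\ref{fig6} so that every uncapped contaminated pipe still reaches the target, after which the proof of Lemma~\ref{lemma2} transfers verbatim. Your explicit flagging of the quantifier over all pipe-guard orientations is a fair point of care, but it is exactly the property the staircase placement is designed to enforce, as in the paper.
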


Investigating the complexity of \PSSP or \TPSSP on simply connected environments is left as an open problem.

Finally, we conjecture that all the other variants of the \SSPext are \PSPACE-complete, as well.

\begin{conjecture}
\SSP, \TSSP and \TPSSP are \PSPACE-complete.
\end{conjecture}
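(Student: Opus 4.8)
The plan is to establish \PSPACE-completeness for each of the three problems by combining the membership and hardness results already available and then closing the remaining gaps. Membership in \PSPACE is settled in two dimensions: Lemma~\ref{lemma1} gives $\PSSP\in\PSPACE$, and $\SSP\in\PSPACE$ follows as a corollary. Hardness is settled for the partial variants: Theorem~\ref{maintheorem} shows $\PSSP$ is \PSPACE-hard and Theorem~\ref{trssp} shows $\TRSSP$ (which equals $\TPSSP$) is \PSPACE-hard. Thus the three genuinely open pieces are: (i) membership $\TSSP,\TPSSP\in\PSPACE$; (ii) \PSPACE-hardness of the full two-dimensional problem $\SSP$; and (iii) \PSPACE-hardness of the full three-dimensional problem $\TSSP$.

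For (i), I would generalise the exact cell decomposition of~\cite{bullo} to our polyhedral model. Each guard rotates a half-plane with a single degree of freedom, so its state is an angle in $S^1$; the combinatorial type of its searchplane changes only when the generating half-plane passes through a vertex of $\mathcal P$ or becomes tangent to an edge or face, giving each guard a polynomial number of \emph{critical angles}. The arrangement of all guards' searchplanes cuts $\mathcal P$ into polynomially many cells, and a configuration of the search is the tuple of guard angles together with the contaminated/clear status of every cell, which fits in polynomial space. The subtlety, already flagged by Figure~\ref{fig:4}, is that in three dimensions search schedules need not be sequentialisable, so I cannot reduce transitions to single-guard moves as in the planar argument. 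Instead I would record the state only at \emph{event times}, where some guard crosses a critical angle, and let a nondeterministic machine guess the successive events, checking that contamination evolves consistently across each interval; since $\PSPACE=\NPSPACE$ by Savitch's theorem (see~\cite{papadimitriou}), this yields membership.

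For (ii) and (iii), the approach is a \emph{partial-to-full} reduction. The hardness constructions of Lemma~\ref{lemma2} and Theorem~\ref{trssp} deliberately render everything outside the target region unsearchable, using nooks, grooves, pits, and widenings as perpetual recontamination sources. To reduce from the partial to the full problem I would replace each such source with a feature that can be permanently sealed by a dedicated \emph{cleanup} guard, and chain the activation of cleanup to the clearing of the target region through one further AND-like gate, so that (a) on a positive instance the entire environment can be cleared by first running the simulated \EEANCL\ sequence, clearing the target, and then performing a single final sweep, while (b) on a negative instance the bulk remains a recontamination reservoir and full searchability fails for the same reason partial searchability fails. Care is needed to ensure that the cleanup mechanism offers no shortcut that would make unsatisfiable instances searchable.

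The hardest part will be (ii), the \PSPACE-hardness of $\SSP$ itself, which is exactly the long-standing open problem the thesis highlights. Every construction here exploits the fact that most of the environment is a contamination reservoir; converting this into a \emph{fully} searchable instance is fragile, because full searchability is a global monotone-progress requirement far more delicate than clearing a single target region, and the sealing gadgets must be embeddable in the plane without introducing unintended lines of sight. I expect the three-dimensional case (iii) to succumb first, since the extra spatial freedom makes the sealing gadgets and their non-interference much easier to realise, whereas the planar case would require a substantially more intricate refinement, in the spirit of the passage already used to convert Theorem~\ref{trssp} into Theorem~\ref{maintheorem}.
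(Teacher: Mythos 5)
This statement is a \emph{conjecture} that the thesis explicitly leaves open; the paper contains no proof of it, and your proposal does not close the gap either --- it is a research plan whose two essential steps remain unresolved. You correctly inventory what is already known (\PSSP is \PSPACE-complete, \TPSSP is \PSPACE-hard, \SSP is in \PSPACE, \TSSP is \NP-hard) and correctly isolate the missing pieces, but neither of the two new arguments you sketch actually goes through as described.

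For membership of \TSSP and \TPSSP in \PSPACE, your event-based discretization glosses over exactly the obstacle that Figure~\ref{fig:4} exhibits: in three dimensions, search schedules are not sequentializable, so between two consecutive critical events several guards may have to rotate \emph{simultaneously and in a coordinated way}, and the contamination dynamics over an interval depend on the continuous joint trajectories (e.g., on whether two non-coplanar searchplanes maintain a separating surface throughout the motion), not merely on the configurations at the interval's endpoints. You give no argument that finitely many interval behaviors suffice, nor that ``checking that contamination evolves consistently'' is decidable in polynomial space; this is precisely why the thesis states only \NP-hardness and \PSPACE-hardness for the 3-dimensional problems with no matching upper bound. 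For the hardness of the full problems, your partial-to-full reduction via sealable recontamination sources runs into the difficulty the thesis itself flags in its open problems: the entire \EEANCL machinery relies on grooves, nooks, pits and widenings being \emph{perpetually} unclearable, which forces the instance to be non-guard-visible and hence trivially unsearchable as a full instance. Making these sources sealable by dedicated cleanup guards either lets a schedule seal them prematurely and bypass the simulation, or requires a gating gadget whose correctness is the very thing to be proved; no concrete gadget is given, and for planar \SSP in particular this is the long-standing open problem that you yourself acknowledge. The proposal is a reasonable roadmap, but it does not constitute a proof of the conjecture.
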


\begin{conclusions}\markthischapter{CONCLUSIONS}

\section*{Summary and research scenario}\markthissection{SUMMARY AND RESEARCH SCENARIO}

In this thesis we consider two guarding and searching problems in 3-dimensional polyhedra: the \ART and the \SSPext. Both are well-studied and understood in planar environments, but very little research had been done, prior to this thesis, in higher dimensional settings.

One of the reasons why research has eluded these problems so far is arguably the lack of effective ways to generalize most of the techniques and tools, like triangulations, that are often employed in solving 2-dimensional Computational Geometric problems.

Our goal is to obtain general results that extend well-known classic planar theorems as naturally as possible. Ideally, non-trivial 2-dimensional facts should come as special cases of our theorems, and increasing the dimension of our problems should allow us to look at them under a more insightful perspective.

The first thing to establish in the process of attempting such a generalization is how to meaningfully model the concept of guard, and in which fashion a guard should be able to see its surroundings. To this end, in Chapter~\ref{chapter3} we elucidate the main features of point guards, edge guards, and face guards, in relation to their ability to guard 3-dimensional polyhedra. As it turns out, point guards appear too ``weak'' to be effectively employed in polyhedra, in that a superlinear amount of them may be required to guard even orthogonal polyhedra. On the other hand, face guards fail to model any common type of agent, and appear too ``powerful'' to be realistic. Indeed, a ``unit weight'' face guard may represent a quadratic number of patroling point guards, even in orthogonal polyhedra.

In contrast, from preliminary observations, edge guards seem to act as the natural counterparts of vertex guards in polygons. Both the analogies with fluorescent lights and with patroling guards are sound, and a linear amount of edge guards is sufficient to guard any polyhedron. As edge guards are an excellent candidate to model 3-dimensional guards, the rest of the thesis is devoted to studying their capabilities.

We further introduce several different guarding modes: orthogonal guarding occurs when visibility rays are orthogonal to guards, while an edge guard can be closed or open, depending on whether it contains its endpoints. We argue that the open guard model is more realistic in some scenarios. Surprisingly, a small change such as depriving edge guards of their endpoints may increase the number of required guards by a factor of 3.

\paragraph{\ART.} For this problem, we mainly aim at finding relatively simple techniques that make up for the lack of generalized triangulations and convex quadrilateralizations, in order to obtain strong upper bounds on edge guard numbers in different settings.

We optimistically conjecture that the tight upper bound of $\lfloor r/2\rfloor+1$ vertex guards for orthogonal polygons with $r$ reflex vertices holds also for orthogonal polyhedra and edge guards. We are unable to prove this for all orthogonal polyhedra, but in Chapter~\ref{chapter4} we do it for the 2-reflex ones (i.e, those with reflex edges oriented in at most two distinct directions, as opposed to three). Further, we show that our edge guards can be open, and can even be placed on reflex edges.

This result not only incorporates the 2-dimensional one as a special case (i.e., when the polyhedron is a prism), but our algorithm partitions each 2-reflex orthogonal polyhedron into pieces that look like the polyhedron in Figure~\ref{fig4:monoguard}, and does so in $O(n \log n)$ time. When applied to orthogonal prisms, it yields the same partition into L-shaped pieces that O'Rourke's algorithm produces in the 2-dimensional case (see Chapter~\ref{chapter1} and\cite{art,urrutia2000}).

On the other hand, the class of 2-reflex orthogonal polyhedra, despite not including every orthogonal polyhedron, is already quite rich, and can model a wide range of 3-dimensional shapes. Moreover, we believe our partial result to be a necessary step toward a proof of the same upper bound for general orthogonal polyhedra.

An alternative technique that we explore is cutting a polyhedron with parallel planes, in order to reduce a 3-dimensional problem to a set of 2-dimensional ones. The main challenge here is to make sure that sub-problems belonging to ``neighboring'' planes are solved in a somewhat similar fashion, in such a way that all the 2-dimensional solutions can be subsequently merged into an efficient 3-dimensional one.

In Chapter~\ref{chapter5}, we successfully apply this concept to the problem of edge-guarding orthogonal polyhedra. As a by-product, our linear-time algorithm yields guards that are all parallel and, when applied to orthogonal polygons, it gives the same solution as the one given by Abello et al.\ in~\cite{aesu-iopof-98} for a type of guard called ``flashlight''. Applications of mutually parallel edge guards include point location, tracking, and navigation.

One of the highlights of Chapter~\ref{chapter5} are two tight inequalities relating the number of edges $e$ in an orthogonal polyhedron and the number of reflex edges $r$. We apply these inequalities to our previous algorithm to slightly lower the state-of-the-art upper bound for edge guards in terms of $e$ and to produce a new upper bound in terms of $r$.

Cutting with parallel planes proves to be a fruitful technique also for 4-oriented polyhedra, i.e., polyhedra whose faces are oriented in at most four different directions. This class greatly extends orthogonal polyhedra, and our technique yields an upper bound of $\lfloor (e+r)/6 \rfloor$ open edge guards, which can be found in linear time, as we show in Chapter~\ref{chapter6}. However, further ``pushing'' this strategy by applying it to broader classes of polyhedra produces results that are less attractive, and becomes completely pointless for general polyhedra.

Table~\ref{tc:1} summarizes some asymptotic upper and lower bounds on the number of \emph{open} edge guards required to solve the \ART in several types of genus-zero polyhedra, both in terms of the total number of edges $e$ and the number of reflex edges $r$.

\begin{table}[h]
\centering
\caption{Bounds on open edge guard numbers in several classes of polyhedra.}
\hfill \\
\begin{tabular}{c||c|c||c|c}
& \textbf{$e$-lower b.} & \textbf{$e$-upper b.} & \textbf{$r$-lower b.} & \textbf{$r$-upper b.}\\
\hline\hline
\textbf{1-reflex} & $e/12$ & $e/12$ & $r/2$ & $r/2$\\
\hline
\textbf{2-reflex} & $e/12$ & $e/8$ & $r/2$ & $r/2$\\
\hline
\textbf{3-reflex} & $e/12$ & $11e/72$ & $r/2$ & $7r/12$\\
\hline\hline
\textbf{4-oriented} & $e/6$ & $e/3$ & $r/2$ & $r$\\
\hline\hline
\textbf{General} & $e/6$ & $e$ & $r$ & $r$
\end{tabular}
\label{tc:1}
\end{table} 

The first three rows refer to orthogonal polyhedra. All lower bounds come from Chapter~\ref{chapter3}. The upper bounds in the first row follow directly from the classic theorems on planar orthogonal polygons, surveyed in Chapter~\ref{chapter1}. The second, third and fourth rows summarize some of the results of Chapters~\ref{chapter4},~
\ref{chapter5} and~\ref{chapter6}, respectively. Finally, the last row on general polyhedra follows from Chapter~\ref{chapter3} (recall that the $27e/32$ upper bound stated in Theorem~\ref{thm:cano} was established only for closed edge guards, hence it does not affect Table~\ref{tc:1}).

\paragraph{\SSPext.} Concerning this problem, our goal is once again to model searchlights as naturally as possible, yet in such a way that as many classic 2-dimensional results carry over to our 3-dimensional model. In addition, we would like our generalization to offer insights that could inspire new approaches to the 2-dimensional problem, as well.

After considering several plausible variations on the searchlight model, in Chapter~\ref{chapter7} we opt for half-planes of light rotating about segment guards. Hence, a guard can be thought of as an orientable array of sensors, each of which rapidly and continuously scans a planar region. This model is also quite scalable, in that, depending on the application, a guard can be approximated by a sparser array of sensors.

Having searchlights that can turn with just one degree of freedom allows for a trivial reduction from the 2-dimensional \SSPext to the 3-dimensional one. Moreover, we identify a broad class of guards, called filling guards, that act as a natural generalization of planar guards, and can be identified in polynomial time. Indeed, as shown in Chapter~\ref{chapter8}, the one-way sweep strategy detailed in~\cite{search1} extends to polyhedral environments with filling guards, along with all its consequences. Similarly, the sequentiability of search schedules, proved in~\cite{bullo} for polygons, extends to polyhedra with filling guards, too. Filling guards play a role also in the characterization of the searchable polyhedra that have only one guard.

Next we pose the problem of efficiently placing boundary guards in a given polyhedron in order to make it searchable. The corresponding question for polygons has never been asked in the literature, but it follows from the one-way sweep strategy that placing a point guard on each reflex vertex of a simple polygon is sufficient. We conjecture the same to be true for polyhedra (i.e., that placing a guard on each reflex edge makes any polyhedron searchable), and we manage to prove it in Chapter~\ref{chapter9} for orthogonal polyhedra. For general polyhedra, we only provide a quadratic upper bound on the number of required boundary guards.

Finally, we move on to computational complexity issues. Determining the complexity of the \SSPext (i.e., deciding if a given guard set can search a given polygon) has been an open problem for decades, and we show that our 3-dimensional extension is computationally hard. Specifically, in Chapter~\ref{chapter10} we prove that deciding if a 3-dimensional instance is searchable is strongly \NP-hard. In addition, even under the promise that the given instance is indeed searchable, the problem of minimizing search time has no \PTAS (unless $\P=\NP$).

In Chapter~\ref{chapter11}, we further generalize our problem, by asking if only a given subregion of the environment can be searched, regardless of the rest. We prove that such a \PSSPext is \PSPACE-hard, by a reduction from Nondeterministic Constraint Logic machines.

Our technique has some impact also on 2-dimensional problems: by adapting our reduction to polygons, we manage to prove that the 2-dimensional version of the \PSSPext is \PSPACE-complete, thus giving the first characterization of the computational complexity of a variation of the \SSPext.

The main computational complexity results of Chapters~\ref{chapter10} and~\ref{chapter11} are summarized in Table~\ref{tc:2}.

\begin{table}[h]
\centering
\caption{Computational complexity of several variations of \SSP.}
\hfill \\
\begin{tabular}{c|c|c}
& \textbf{Full} & \textbf{Partial}\\
\hline
\textbf{2D} & \PSPACE & \PSPACE-complete\\
\hline
\textbf{3D} & \NP-hard & \PSPACE-hard
\end{tabular}
\label{tc:2}
\end{table} 

In conclusion, we succeeded, to some extent, in elucidating the role of edge guards in polyhedra, both in relation to the \ART and the \SSPext. We showed that these guards naturally generalize point guards in polygons, and we illustrated how several planar theorems carry over to 3-dimensional settings, despite the lack of powerful general theoretical tools such as triangulations.

We also argued that reasoning about polyhedra could provide a more insightful view of planar problems, by revealing a ``bigger picture'' that is often hidden or unclear when confined to the planar case. We demonstrated that tackling 3-dimensional problems may even suggest novel approaches to their 2-dimensional versions, as in the case of the \PSSPext.

\section*{Open problems}\markthissection{OPEN PROBLEMS}

Several open problems are left for further research. We name a few:

\begin{itemize}
\item Some of the lower and upper bounds in Table~\ref{tc:1} do not match. We believe all the lower bounds to be tight, as stated in several conjectures in Chapters~\ref{chapter3},~\ref{chapter4},~\ref{chapter5}, and~\ref{chapter6}.

\item The same construction used in Theorem~\ref{th:3} could
be analyzed more closely to achieve a better upper bound. In contrast with the fact that the polyhedra with highest $r$ to $e$ ratio are responsible for the worst cases in
our analysis, such polyhedra are nonetheless
intuitively easy to guard by selecting a small fraction of
their reflex edges. Isolating these cases and analyzing them
separately may yield an improved overall bound.

\item The technique used in Chapter~\ref{chapter5} for orthogonal polyhedra and extended in Chapter~\ref{chapter6} to 4-oriented polyhedra seems to perform poorly on $c$-oriented polyhedra, with $c\geqslant 5$. Indeed, the number of configurations quickly blows up and our method becomes less effective. A better technique should be designed for these more complex polyhedra.

\item As proved in Theorem~\ref{orth}, placing a guard on each reflex edge of an orthogonal polyhedron makes it searchable. We conjecture this holds also for general polyhedra, but it turns out to be a surprisingly deep problem. One way to lower our quadratic bound on the number of guards would be to slightly modify the partition used in Theorem~\ref{speed1}. Instead of aiming the guards at the angle bisectors of the reflex edges, we could aim them in any direction, cut the polyhedron with the \textquotedblleft extended\textquotedblright\ searchplanes, and still eliminate all the reflex edges. The advantage would be that, by carefully choosing a plane for each reflex edge that minimizes the intersections with other reflex edges, the overall number of intersections could be significantly less than quadratic. Even if it is still quadratic in the worst case, it is likely much lower on average, with respect to any reasonable probability measure over polyhedra.

\item In the case of orthogonal polyhedra, the upper bound of $r$ guards given by Theorem~\ref{orth} is also not known to be optimal. In fact, we believe that it can still be lowered by a constant factor.

\item The search time of the schedules given in Theorems~\ref{heur} and~\ref{orth} could be dramatically reduced by clearing several regions in parallel.  For instance, in Theorem~\ref{orth} we could turn in concert two guards whose fences do not bound a same cuboid. Generalizing, we could construct a graph $G$ on the set of reflex edges, with an edge for every pair of reflex edges whose fences bound a same cuboid. Then the search time would be proportional to the chromatic number $\chi(G)$, which is likely sublinear, at least on average.

\item The complexity of \TSSP could be considered also for a restricted set of instances, such as genus-zero polyhedra or orthogonal polyhedra. Moreover, the technique used for Theorem~\ref{hard} could perhaps suggest a way to prove the \NP-hardness of \SSP. Adding new elements to \SSP in order to increase its expressiveness, while preserving its 2-dimensional nature, could be an intermediate step. For example, we could introduce \emph{curtains}, i.e., line segments that block visibility but do not block movement, or \emph{windows}, i.e., line segments that block movement but not visibility. Then we could attempt to replicate the construction of Theorem~\ref{hard} for this extension of \SSP.

\item We may try to modify the constructions used in Chapter~\ref{chapter11} to prove the \PSPACE-hardness of the full \TSSP, as well. We observe that such constructions are not guard-visible, as per Definition~\ref{viable}. As a matter of fact, incorporating regions that can never be cleared is an effective way to force the recontamination of other critical regions when certain conditions are met. However, if we want our constructions to be guard-visible instances of \TSSP, no such expedient is of any use, and cleverer tools have to be devised.

\item Observe that, throughout Part~\ref{part3}, we focused primarily on boundary guards (with the exception of Theorem~\ref{one}, whose statement encompasses also general guards). This is certainly a bonus for Theorems~\ref{speed1},~\ref{heur},~\ref{orth},~\ref{hard},~\ref{hardopt} and~\ref{trssp}, in which restricting to boundary guards yields stronger statements. On the other hand, extending the concept of filling guard (Definition~\ref{def:exhaustive}) to guards not lying on the boundary may lead to interesting generalizations of Theorems~\ref{exhaustive} and~\ref{sequentiality}.
\end{itemize}

\end{conclusions}

\end{document}